\documentclass[conference]{IEEEtran}

\usepackage{ifthen}
\usepackage{xspace}


\newboolean{talk}
\setboolean{talk}{false}
\newboolean{paper}
\setboolean{paper}{false}


\newboolean{IEEEstyle}
\setboolean{IEEEstyle}{false}
\newboolean{lipicsstyle}
\setboolean{lipicsstyle}{false}
\newboolean{eptcsstyle}
\setboolean{eptcsstyle}{false}
\newboolean{entcsstyle}
\setboolean{entcsstyle}{false}
\newboolean{sigplanstyle}
\setboolean{sigplanstyle}{false}
\newboolean{easychairstyle}
\setboolean{easychairstyle}{false}
\newboolean{scrartclstyle}
\setboolean{scrartclstyle}{false}
\newboolean{lncsstyle}
\setboolean{lncsstyle}{false}

\newboolean{acmmart}
\setboolean{acmmart}{false}


\newboolean{needstheorems}
\setboolean{needstheorems}{false}
\newboolean{withimages}
\setboolean{withimages}{false}
\newboolean{withproofs}
\setboolean{withproofs}{true}


\newboolean{french}
\setboolean{french}{false}

\setboolean{IEEEstyle}{true}
\usepackage{booktabs}   
\usepackage{subcaption} 
\usepackage[all]{xy}

\usepackage{ebproof}
\ebproofset{label separation=0.3em}

\usepackage{booktabs}   
\usepackage{subcaption} 
\usepackage{graphicx}
\usepackage[normalem]{ulem}
\usepackage{multirow}

\usepackage[utf8]{inputenc}

\usepackage{multirow}
\usepackage{cmll}
\usepackage{listings}

\usepackage{amsmath, amsfonts, 
	amsthm, mathtools}
\usepackage{stmaryrd}
\usepackage{mathtools}
\usepackage{enumerate}
\usepackage{listings}

\usepackage{fancybox}
\usepackage{hhline}
\usepackage{marginnote}
\usepackage{soul}
\usepackage{xcolor}

\usepackage[hidelinks]{hyperref}
\usepackage{cleveref}
\usepackage{complexity}

\newcommand{\sem}[1]{\llbracket#1\rrbracket}

\newcommand{\semfull}[1]{\sem{#1}^\fullsym}

\newcommand{\ignore}[1]{}

\newcommand{\colspace}{@{\hspace{.5cm}}}
\newcommand{\myinput}[1]{\ifthenelse{\boolean{withimages}}{\input{#1}}{}}

\newcommand{\reflemma}[1]{Lemma~\ref{l:#1}}
\newcommand{\reflemmap}[2]{Lemma~\ref{l:#1}.\ref{p:#1-#2}}


\newcommand{\refpoint}[1]{Point~\ref{p:#1}}


\newcommand{\refthm}[1]{Thm.~\ref{thm:#1}}

\newcommand{\refprop}[1]{Prop.~\ref{prop:#1}}

\newcommand{\refpropp}[2]{Prop.~\ref{prop:#1}.\ref{p:#1-#2}} 

\newcommand{\reffig}[1]{Fig.~\ref{fig:#1}}


\newcommand{\refrmk}[1]{Remark~\ref{rmk:#1}} 


\newcommand{\ie}{\textit{i.e.}\xspace}
\newcommand{\eg}{\textit{e.g.}\xspace}
\newcommand{\ih}{\textit{i.h.}\xspace}
\newcommand{\resp}{\textnormal{resp.}\xspace}

\newcommand{\ES}{\text{ES}\xspace}
\newcommand{\leftsh}{\text{left}\xspace}
\newcommand{\rightsh}{\text{right}\xspace}
\newcommand{\full}{\text{full}\xspace}
\newcommand{\Full}{\text{Full}\xspace}
\renewcommand{\full}{\text{strong}\xspace}
\renewcommand{\Full}{\text{Strong}\xspace}







\newcommand{\defeq}{\coloneqq} 
\newcommand{\eqdef}{\eqqcolon} 
\newcommand{\grameq}{\Coloneqq} 
\newcommand{\set}[1]{\{#1\}}

\newcommand{\card}[1]{\# #1}

\newcommand{\nat}{\mathbb{N}}
\newcommand{\size}[1]{|#1|}
\newcommand{\sizectx}[1]{\size{\typelist{#1}}}
\newcommand{\sizectxm}[1]{\sizem{\typelist{#1}}}
\newcommand{\sizetyp}[1]{\size{#1}}
\newcommand{\sizetypm}[1]{\sizem{#1}}





\renewcommand{\l}{\lambda}
\newcommand{\isub}[2]{\{#1/#2\}}
\newcommand{\replace}[2]{#1{\shortleftarrow}#2}
\renewcommand{\isub}[2]{\{\replace{#1}{#2}\}}
\newcommand{\esub}[2]{[\replace{#1}{#2}]}
\renewcommand{\esub}[2]{[#1{\shortleftarrow}#2]}

\newcommand{\fv}[1]{{\sf fv}(#1)}
\newcommand{\gt}[1]{{\sf gt}(#1)}


\newcommand{\nf}{\mathsf{nf}} 

\newcommand{\rootRew}[1]{\mapsto_{#1}}
\newcommand{\rootlRew}[1]{\; \mbox{}_{#1}{\mapsfrom}\ }
\newcommand{\Rew}[1]{\rightarrow_{#1}}

\newcommand{\lRew}[1]{\; \mbox{}_{#1}{\leftarrow}\ }

\newcommand{\lto}{\lRew{}}



\newcommand{\rtom}{\rootRew{\msym}}
\newcommand{\rtoe}{\rootRew{\esym}}
\newcommand{\reversertoe}{\rootlRew{\esym}} 


 
\newcommand{\rtoevar}{\rootRew{\expovar}} 









\newcommand{\betaplot}{\beta_v}
 

\newcommand{\tobvplot}{\Rew{\betaplot}} 
\newcommand{\rtobvplot}{\rootRew{\betaplot}} 











\newcommand{\esym}{{\mathsf e}}

\newcommand{\msym}{\mathsf{m}}

\newcommand{\ssym}{{\mathsf s}}
\newcommand{\wsym}{{\mathsf{o}}} 
\newcommand{\wmsym}{{\wsym\msym}} 
\newcommand{\wesym}{{\wsym\esym}} 
\newcommand{\omsym}{{\wmsym}} 
\newcommand{\oesym}{{\wesym}} 
\newcommand{\smsym}{\esssym{\msym}} 
\newcommand{\sesym}{\esssym{\esym}} 
\newcommand{\vmsym}{\mathsf{shuf}} 
\newcommand{\shuf}{\vmsym} 

\newcommand{\shufeqext}{\shufeqext} 
\newcommand{\tsym}{{\mathsf t}} 

 \newcommand{\tom}{\Rew{\msym}}
 \newcommand{\toe}{\Rew{\esym}}

\newcommand{\toevar}{\Rew{\expovar}}


\newcommand{\vsubval}{\vsub_\val}
\renewcommand{\vsubval}{\vsub_\l}

\newcommand{\tovsubval}{\Rew{\vsubval}}

\newcommand{\tovsubo}{\Rew{\wsym}}

\newcommand{\tomo}{\Rew{\omsym}}

\newcommand{\toeo}{\Rew{\wsym{\esym}}}
\newcommand{\reversetoeo}{\lRew{\oesym}}

\newcommand{\tovsubs}{\Rew{\esssym}}

\newcommand{\toms}{\Rew{\esssym{\msym}}}
\newcommand{\reversetoms}{\lRew{\smsym}}
\newcommand{\toes}{\Rew{\esssym{\esym}}}
\newcommand{\reversetoes}{\lRew{\sesym}}



\newcommand{\tm}{t}
\newcommand{\tmtwo}{u}
\newcommand{\tmthree}{r}
\newcommand{\tmfour}{q}
\newcommand{\tmfive}{p}
\newcommand{\tmsix}{s}




\newcommand{\tmp}{\tm'}
\newcommand{\tmtwop}{\tmtwo'}
\newcommand{\tmthreep}{\tmthree'}
\newcommand{\tmfourp}{\tmfour'}
\newcommand{\tmfivep}{\tmfive'}


\newcommand{\tmfourpp}{\tmfour''}
\newcommand{\tmfivepp}{\tmfive''}

\newcommand{\var}{x}
\newcommand{\vartwo}{y}
\newcommand{\varthree}{z}



\newcommand{\val}{v}

\newcommand{\tval}{\val_\tsym}






\newcommand{\ctxholep}[1]{\langle #1\rangle}
\newcommand{\ctxhole}{\ctxholep{\cdot}}

\newcommand{\ctx}{C}
\newcommand{\ctxtwo}{\ctx'}

\newcommand{\ctxp}[1]{\ctx\ctxholep{#1}}
\newcommand{\ctxtwop}[1]{\ctxtwo\ctxholep{#1}}

\newcommand{\sctx}{L}
\newcommand{\sctxtwo}{\sctx'}

\newcommand{\sctxp}[1]{\sctx\ctxholep{#1}}
\newcommand{\sctxtwop}[1]{\sctxtwo\!\ctxholep{#1}}






\newcommand{\arbctxp}[1]{\arbctxp{#1}}
\newcommand{\arbctxtwop}[1]{\arbctxtwop{#1}}





\newcommand{\evsctx}{S}
\newcommand{\evsctxtwo}{\evsctx'}

\newcommand{\evsctxp}[1]{\evsctx\ctxholep{#1}}
\newcommand{\evsctxtwop}[1]{\evsctxtwo\ctxholep{#1}}






\newcommand{\fctx}{F}

\newcommand{\fctxp}[1]{\fctx\ctxholep{#1}}





%





%
























\ifthenelse{\boolean{acmmart}
}{
  
  }{
  
  }
















\newcommand{\sizefu}[1]{\size{#1}_{\fullsym}}





\newcommand{\deriv}{d}
\newcommand{\derivp}{d'} 

\newcommand{\sizehole}[2]{|#2|_{#1}}

\newcommand{\sizem}[1]{\sizehole{\msym}{#1}} 
\newcommand{\sizes}[1]{\sizehole{\ssym}{#1}} 
\newcommand{\sizeo}[1]{\sizehole{\osym}{#1}} 






\ifthenelse{\boolean{IEEEstyle}}{
\usepackage{amssymb}
\usepackage{amsthm}
    \newtheorem{theorem}{Theorem}[section]
    \newtheorem{lemma}[theorem]{Lemma}
    \newtheorem{corollary}[theorem]{Corollary}
    \newtheorem{proposition}[theorem]{Proposition}
    \newtheorem{definition}[theorem]{Definition}
}{}

\ifthenelse{\boolean{lncsstyle}}{}{\newtheorem{remark}{Remark}[section]}




\newcommand{\itm}{i}
\newcommand{\itmtwo}{\itm'} 
\newcommand{\itmthree}{\itm''}




\newcommand{\fire}{f}
\newcommand{\firetwo}{\fire'}

\newcommand{\sfire}{\fire_\fullsym}
\newcommand{\sfiretwo}{\firetwo_\fullsym}




\newcommand{\vsub}{\mathsf{vsc}} 
\newcommand{\VSC}{\textnormal{VSC}\xspace}

\newcommand{\tovsub}{\Rew{\vsub}}

\newcommand{\tow}{\Rew{\wsym}} 
\newcommand{\towm}{\Rew{\wmsym}} 
\newcommand{\towe}{\Rew{\wsym{\esym}}} 



%

\newcommand{\osym}{{\mathsf o}}

\newcommand{\la}[1]{\lambda #1.}

\newcommand{\ictx}{R}

\newcommand{\ictxp}[1]{\ictx\ctxholep{#1}}

\newcommand{\rctx}{R}
\newcommand{\rctxtwo}{\rctx'}
\newcommand{\rctxthree}{\rctx''}
\newcommand{\rctxp}[1]{\rctx\ctxholep{#1}}
\newcommand{\rctxtwop}[1]{\rctxtwo\ctxholep{#1}}
\newcommand{\rctxthreep}[1]{\rctxthree\ctxholep{#1}}







\newcommand{\myproof}[1]{
\ifthenelse{\boolean{omitproofs}}{\begin{IEEEproof} Proof available but omitted for readability. \end{IEEEproof}}{#1}}


\newcommand{\valES}{\text{answer}\xspace}
\newcommand{\valESs}{\text{answers}\xspace}


\newcommand{\gregoire}{Gr{\'{e}}goire\xspace}




\newcommand{\withproofs}[1]{\ifthenelse{\boolean{withproofs}}{#1}{}}

\newcommand{\withoutproofs}[1]{\ifthenelse{\boolean{withproofs}}{}{#1}}





\newcommand{\NoteProof}[1]{
	\marginnote{{\normalfont\scriptsize{Proof\,p.\,{\pageref{#1}}\,}}}}
\newcommand{\NoteState}[1]{
	\marginnote{{\normalfont\scriptsize{See p.\,{\pageref{#1}}}}}}
\renewcommand{\NoteProof}[1]{\marginnote{{Proof\,p.\,{\pageref{#1}}}}}
\renewcommand{\NoteState}[1]{\marginnote{{See\,p.\,{\pageref{#1}}\\\cref{#1}}}}

\crefname{proposition}{Prop.}{Props.}
\crefname{theorem}{Thm.}{Thms.}
\crefname{lemma}{Lemma}{Lemmas}
\crefname{corollary}{Cor.}{Cors.}
\crefname{section}{Sect.}{Sects.}
\Crefname{section}{Section}{Sections}

%



\newcommand{\vsubcalc}{\lambda_\vsub}

\newcommand{\shufcalc}{\lambda_\shuf}

\newcommand{\plotsym}{\mathsf{Plot}}
\newcommand{\plotcalc}{\lambda_{\plotsym}}

\newcommand{\doubt}[1]{}

\newcommand{\letexp}{\mathsf{let}}


\newcommand{\lambdamucalc}{\overline\lambda\mu\tilde{\mu}}




\newcommand{\Rule}{\mathsf{r}}

\newcounter{numberone}
\newcounter{numberoneroman}
\newcounter{numberonealph}

\newcommand{\cbn}{CbN\xspace}

\newcommand{\cbv}{CbV\xspace}

\newcommand{\ocbv}{Open \cbv}

\newcommand{\ccbv}{Closed \cbv}
\newcommand{\scbv}{Strong \cbv}


\newcommand{\expr}{e}
\newcommand{\mset}[1]{[#1]}
\newcommand{\emptymset}{\mset{\,}}
\renewcommand{\emptymset}{\zero}
\newcommand{\zero}{\mathbf{0}}


\newcommand{\ltype}{\typefont{A}}
\newcommand{\ltypetwo}{\typefont{B}}


\newcommand{\typctx}{\Gamma}
\newcommand{\typctxtwo}{\Delta}
\newcommand{\typctxthree}{\Sigma}

\newcommand{\tder}{\pi}

\newcommand{\tderthree}{\rho}

\newcommand{\hastype}{\!:\!}

\newcommand{\domain}[1]{\mathsf{dom}(#1)}
\newcommand{\typelist}[1]{\hat{#1}}

\newcommand{\mytr}[1]{\underline{#1}}
\newcommand{\auxtr}[1]{\overline{#1}}

\makeatletter
\newcommand\Copy[2]{
        \marginpar{\scriptsize \ \ \hyperlink{hl-appendix-#1}{Proof p.\,{\pageref*{appendix-#1}}}}
	\immediate\write\@auxout{\unexpanded{\global\long\@namedef{mytext@#1}{#2}
  }}%
	#2%
}

\newcommand\Paste[1]{%
        \hypertarget{hl-appendix-#1}{}\label{appendix-#1}
	\renewcommand{\inappendix}[1]{}
	\ifcsname mytext@#1\endcsname
	\@nameuse{mytext@#1}%
	\else
	``??''
	\fi
	\renewcommand{\inappendix}[1]{#1}
}
\makeatother

\newcommand{\inappendix}[1]{#1}






\newcommand{\weakctx}{O}
\newcommand{\weakctxtwo}{\weakctx'}
\newcommand{\weakctxp}[1]{\weakctx\ctxholep{#1}}
\newcommand{\weakctxtwop}[1]{\weakctxtwo\ctxholep{#1}}

\newcommand{\openctx}{O}

\newcommand{\openctxp}[1]{\openctx\ctxholep{#1}}

\newcommand{\subctx}{\sctx}

\newcommand{\strongctx}{\extctx}
\newcommand{\strongctxtwo}{\strongctx'}

\newcommand{\strongctxp}[1]{\strongctx\ctxholep{#1}}
\newcommand{\strongctxtwop}[1]{\strongctxtwo\!\ctxholep{#1}}

\newcommand{\extctx}{S}

\newcommand{\extctxp}[1]{\extctx\ctxholep{#1}}

\newcommand{\larrow}[2]{#1 \multimap #2}
\newcommand{\ground}{\typefont{G}}

\newcommand{\Ax}{\mathsf{ax}}
\newcommand{\Es}{\mathsf{es}}

\newcommand{\derive}[2]{#1 \vartriangleright #2}
\newcommand{\concl}[4]{\derive{#1}{#2 \vdash #3 \hastype #4}}
\newcommand{\conclin}[4]{#1 \vartriangleright #2 \vdash^\infty #3 \hastype #4}

\newcommand{\subctxp}[1]{\subctx\ctxholep{#1}}

\newcommand{\Pointed}{Rigid\xspace}
\newcommand{\pointed}{rigid\xspace}
\newcommand{\ptm}{r}
\newcommand{\ptmtwo}{\ptm'}
\newcommand{\ptmthree}{\ptm''}

\newcommand{\rtm}{r}

\newcommand{\esssym}{\mathsf{x}}
\newcommand{\essesym}{\esssym\esym}
\newcommand{\essmsym}{\esssym\msym}

\newcommand{\toesse}{\Rew{\essesym}}
\newcommand{\toessm}{\Rew{\essmsym}}
\newcommand{\toess}{\Rew{\esssym}}

\newcommand{\fullsym}{{\mathsf{f}}}
\renewcommand{\fullsym}{{\mathsf{s}}}
\newcommand{\sitm}{\itm_\fullsym}
\newcommand{\sitmtwo}{\itmtwo_\fullsym}
\newcommand{\sitmthree}{\itmthree_\fullsym}

\newcommand{\sval}{\val_\fullsym}

\newcommand\Crumb\mytr
\newcommand\CrumbAux\auxtr

\newcommand{\evarsym}{\esym_{\mathsf{var}}}
\renewcommand{\rtoevar}{\rootRew\evarsym}
\renewcommand{\toevar}{\Rew\evarsym}




\makeatletter
\newcommand{\exder}{%
  \def\exderW[##1]{\triangleright_{##1}\ }%
  \def\exderWO{\triangleright\ }%
  \@ifnextchar[\exderW\exderWO%
  }
\makeatother

\newcommand{\tderiv}{\Phi}
\newcommand{\tderivtwo}{\Psi}
\newcommand{\tderivtwop}{\tderivtwo'} 
\newcommand{\tderivthree}{\Theta}


\newcommand{\typefont}[1]{{\mathsf{#1}}}
\newcommand{\mtype}{\typefont{M}}
\newcommand{\mtypetwo}{\typefont{N}}
\newcommand{\mtypethree}{\typefont{O}}

\newcommand\emptytype{\mathbf{0}}
\newcommand{\type}{\typefont{T}}
\newcommand{\imtype}{\inertop{\mtype}}
\newcommand{\imtypetwo}{\inertop{\mtypetwo}}
\newcommand{\inltype}{\inertop{\ltype}}
\newcommand\mplus{\uplus}

\newcommand{\tyjp}[4]{{#3} \vdash^{#1} #2 \hastype #4}
\newcommand{\namedtyjp}[5]{#1 \vartriangleright \tyjp{#2}{#3}{#4}{#5}}




\newcommand{\dom}[1]{\mathsf{dom}(#1)}


\newcommand{\ruleApp}{@}
\newcommand{\ruleFun}{\lambda}
\newcommand{\ruleES}{\mathsf{es}}
\newcommand{\ruleMany}{\mathsf{many}}
\newcommand{\ruleManyVar}{\ruleMany}
\newcommand{\ruleManyVal}{\ruleMany}
\newcommand{\ruleAp}{@}
\newcommand{\ruleAx}{\mathsf{ax}}

\newcommand{\I}{I}
\newcommand{\J}{J}
\renewcommand{\K}{K}
\newcommand{\iI}{{i \in \I}}
\newcommand{\jJ}{{j \in \J}}
\newcommand{\kK}{{k \in \K}}
\newcommand{\iN}{{i \in \set{1,\mydots,n}}}

\newcommand{\tarrow}[2]{#1 \multimap #2}
\newcommand{\ty}[2]{\tarrow{#1}{#2}}

\newcommand{\mult}[1]{[ #1 ] }

\newcommand{\bigmplus}{\biguplus}




\newcommand{\Id}{{\mathsf{I}}}

\newcommand{\unitarysym}{{\textsc{u}}}

\newcommand{\inertop}[1]{#1^{\mathsf{i}}}

\newcommand{\rightsym}{{\textsc{r}}}
\newcommand{\rltype}{\ltype^{\rightsym}}

\newcommand{\rmtype}{\mtype^{\rightsym}}

\newcommand{\leftsym}{{\textsc{l}}}
\newcommand{\lltype}{\ltype^\leftsym}

\newcommand{\lmtype}{\mtype^\leftsym}

\newcommand{\urltype}{\ltype^{\unitarysym\rightsym}}

\newcommand{\urmtype}{\mtype^{\unitarysym\rightsym}}

\newcommand{\ulltype}{\ltype^{\unitarysym\leftsym}}

\newcommand{\ulmtype}{\mtype^{\unitarysym\leftsym}}

\newcommand{\myparagraph}[1]{\emph{#1.}}
\newcommand{\myparagraphsp}[1]{\medskip

\myparagraph{#1}}

\newcommand\mydots{\hbox to .6em{.\hss.}}
\newcommand\comp{\circ}

\newcommand{\tderiveq}{\sim}
\newcommand{\groundset}{S}
\newcommand{\groundsettwo}{\groundset'}
\newcommand{\groundsetthree}{\groundset''}

\renewcommand{\tmthree}{s}
\renewcommand{\extctx}{X}

\begin{document}
%
\title{Semantic Bounds and\\ Strong Call-by-Value Normalization}

\author{\IEEEauthorblockN{Beniamino Accattoli}
\IEEEauthorblockA{Inria \& \'Ecole Polytechnique}
\and
\IEEEauthorblockN{Giulio Guerrieri}
\IEEEauthorblockA{University of Bath}
\and
\IEEEauthorblockN{Maico Leberle}
\IEEEauthorblockA{Inria \& \'Ecole Polytechnique}}

\onecolumn
\maketitle

\begin{abstract}
This paper explores two topics at once: 
the use of denotational semantics to bound the evaluation length of functional programs, and the semantics of strong (that is, possibly under abstractions) 
call-by-value evaluation.

About the first, we analyze de Carvalho's seminal use of relational semantics for bounding the evaluation length of $\l$-terms, starting from the presentation of the semantics as an intersection types system. We focus on the part of his work which is usually neglected in its many recent adaptations, despite being probably the conceptually deeper one: how to transfer the bounding power from the type system to the relational semantics itself. We dissect this result and re-understand it via the isolation of a simpler \emph{size representation property}. 

About the second, we use relational semantics to develop a semantical study of strong call-by-value evaluation, which is both a delicate and neglected topic. We give a semantic characterization of terms normalizable with respect to strong evaluation, providing in particular the first result of adequacy with respect to strong call-by-value. Moreover, we extract bounds about strong evaluation from both the type systems and the relational semantics.

Essentially, we use strong call-by-value to revisit de Carvalho's semantic bounds, and de Carvalho's technique to provide semantical foundations for strong call-by-value.

%
\end{abstract}


\section{Introduction}
\label{sect:intro}
In this paper we are concerned with two topics of the theory of $\l$-calculus, in which linear logic plays a prominent role. The first one is the use of denotational semantics to obtain bounds on operational aspects of $\l$-terms. The second one is the semantics of call-by-value (shortened to \cbv) evaluation.

Both these topics have a literature of their own. For different reasons, and in different ways, however, neither of them has found clean and solid grounds. 
We aim to use the two topics, together with linear logic technology, to clarify each other.

\subsection{De Carvalho's Semantical Bounds} 
Denotational semantics studies invariants of evaluation. The typical way in which it is connected to the operational semantics of $\l$-calculi is at the \emph{qualitative} level, via \emph{adequacy}: the denotational interpretation $\sem\tm$ of a $\l$-term $\tm$ is non-trivial (typically non-empty) if and only if the evaluation~of~$\tm$~terminates.

At first sight, denotational semantics cannot provide \emph{quantitative} operational insights such as evaluation lengths, because of its invariance by evaluation. The question is in fact subtler. Being invariant by evaluation, denotational semantics models normal forms, 
and in a \emph{compositional} way: 
by composing the interpretations of two terms one can obtain the interpretation of the result of their application---therefore, denotational semantics does reflect the evaluation process~\emph{somehow}.

This intuition guided de Carvalho in its use of linear logic relational semantics---probably the simplest denotational model of linear logic---to not only characterize semantically termination, but also obtain exact bounds on the evaluation lengths of $\l$-terms \cite{Carvalho07,deCarvalho18}. 
A key aspect of his work is that he uses a type system to give a syntactic presentation of relational semantics. The types are the \emph{non-idempotent} variant of intersection types, also known as \emph{multi types}, because non-idempotent intersections are multi sets. The interpretation $\sem\tm$ of a $\l$-term $\tm$ in relational semantics is given by the types (including the typing context $\typctx$) in all the judgments $\typctx \vdash \tm \hastype \ltype$ that can be derived for $\tm$. It is important to stress that relational semantics can also be presented independently of multi types. 

\paragraph*{Three Kinds of Bounds} In \cite{Carvalho07,deCarvalho18}, de Carvalho measures two forms of strong evaluations realized by abstract machines implementing the head and leftmost call-by-name strategies, organizing his results in two parts. In this overview, we prefer to slightly depart from the details of his work, forgetting about abstract machines, focussing on leftmost evaluation, and isolating three (rather than two) kinds of bounds:
\begin{enumerate}
\item \emph{Bounds from type derivations}: given a type derivation $\concl{\tderiv}{\typctx}{\tm}{\ltype}$, he shows that the size $\size\tderiv$ of the derivation bounds the number of leftmost steps from $\tm$ to its normal form $\nf{(\tm)}$ \emph{plus} the size $\size{\nf{(\tm)}}$ of the normal form. Moreover, minimal derivations provide exact bounds. 

\item \emph{Size bounds from types}: the types in the final judgment---
a point of the relational interpretation $\sem\tm$---also provide a bound, independently of the derivation $\tderiv$. Being invariant by evaluation, they cannot bound evaluation lengths. They do however bound the size $\size{\nf{(\tm)}}$ of the normal form, and bounds are exact when types are minimal.

\item \emph{Bounds from composable types}: de Carvalho shows that types can be used to bound evaluation lengths \emph{compositionally}, \ie, from the types in $\sem\tm$ and $\sem\tmtwo$ it is possible to extract bounds about the leftmost evaluation and the normal form of $\tm\tmtwo$, with \emph{no reference to type derivations}.
\end{enumerate}
The third kind is where the bounding power of the type system 
(just one possible way of defining relational semantics) is fully \emph{lifted} to relational semantics. Therefore, the lifting guarantees that the bounding power is an inherent feature of the relational model---multi types are just a handy tool to show it.

Of the three results, the third one is also the more technical one. In particular, it requires to enrich the type system with an infinity of ground types and work up to type substitutions. The somewhat puzzling fact is that  these extra features play no role in the  two previous points.

\paragraph*{De Carvalho's Legacy} De Carvalho developed his results in his PhD defended in 2007 \cite{Carvalho07}, known by the community thanks to a technical report that was eventually published much later \cite{deCarvalho18}.
Soon after his PhD, he adapted his work to linear logic, 
with Pagani and Tortora de Falco \cite{DBLP:journals/tcs/CarvalhoPF11,DBLP:journals/iandc/CarvalhoF16}. 
A few years later, Bernadet and Graham-Lengrand adapted his work to measure the longest evaluation in the $\l$-calculus \cite{DBLP:journals/corr/BernadetL13}, but they only provided bounds of kind 1 (\emph{from type~derivations}).

At the time, it was not known whether it would make sense to count the number of $\beta$-steps (or linear logic cut-elimination steps) as a reasonable measure of complexity. After this was clarified (in the positive) by Accattoli and Dal Lago \cite{DBLP:journals/corr/AccattoliL16}, de Carvalho's work has been revisited by Accattoli, Graham-Lengrand, and Kesner in 2018 \cite{DBLP:journals/pacmpl/AccattoliGK18}. The revisitation started a new wave of works adapting de Carvalho's study to many evaluation strategies and extensions of the $\l$-calculus, including call-by-value \cite{DBLP:conf/aplas/AccattoliG18}, call-by-need \cite{DBLP:conf/esop/AccattoliGL19}, a linear logic presentation of call-by-push-value \cite{DBLP:conf/flops/BucciarelliKRV20}, the $\lambda\mu$-calculus \cite{DBLP:conf/lics/KesnerV20}, the $\l$-calculus with pattern matching \cite{DBLP:conf/types/AlvesKV19}, the probabilistic $\l$-calculus \cite{DBLP:journals/pacmpl/LagoFR21}, and the abstract machine underlying the geometry of interaction \cite{DBLP:journals/pacmpl/AccattoliLV21}. All these works provide bounds of kind 1,  and some of them also of kind 2, but none of them deals with those of kind 3 (\emph{bounds from composable types}).

\paragraph*{Contributions of This Paper for Semantic Bounds} We adapt de Carvalho's study to 
another setting, Strong \cbv (surveyed below). 
Contrary to the new wave of similar works, we provide all three kinds of bounds, including \emph{bounds from composable types}, 
appeared only in \cite{deCarvalho18,DBLP:journals/tcs/CarvalhoPF11}. Beyond proving the results, we have a very close look at this part, isolating the subtleties and decomposing 
it in smaller steps. In particular: 
\begin{itemize}
	\item \emph{Subtlety 1, minimality does not work}: when bounding a single term, both derivations and types provide exact bounds when they are minimal. When dealing with the application of $\tm$ to $\tmtwo$, every pair of composable types for them provides bounds. The minimal composable pair, however, does not provide exact bounds.
	\item \emph{Subtlety 2, the gap between derivations and types}: the previous subtlety stems from the fact that, for a normal term $\tm$, both the derivation $\concl{\tderiv}{\typctx}{\tm}{\ltype}$ and the types in $\typctx$ and $\ltype$ provide bounds for $\size\tm$, but they may not coincide. In general, the bound from types is laxer. The bounds gap hinders the possibility of lifting the bounding power from derivations to types, if bounds from some derivations are not reflected by any type in the interpretation $\sem\tm$. 
	\item \emph{Size representation property}: luckily, given $\concl{\tderiv}{\typctx}{\tm}{\ltype}$ for which there is a gap between the bound from $\tderiv$ and the bound from $\typctx$ and $\ltype$, there always is a second derivation $\concl{\tderivtwo}{\typctxtwo}{\tm}{\ltypetwo}$, whose types $\typctxtwo$ and $\ltypetwo$ give the same bound as the first derivation $\tderiv$. Then, all bounds coming from derivations can also be seen as coming from types, potentially from the types of other derivations.
\end{itemize}
Size representation is the key property where the lifting of the bounding power from type derivations to types---and thus to relational semantics---takes place. On the good side, it does not need an infinity of ground types nor type substitutions, so it is simpler than what done in \cite{deCarvalho18,DBLP:journals/tcs/CarvalhoPF11}. On the bad side, it only implies a weaker form of de Carvalho's final results. 

We then show that, by enriching the type system as de Carvalho does, one can reinforce the size representation property into a \emph{dissection property} describing how to obtain $\tderiv$ from $\tderivtwo$ via a type substitution. Finally, the dissection property implies de Carvalho's \emph{bounds from composable types}. 

The refinement, however, is of a technical nature, and does not rest nor introduce any deep concept. Therefore, we claim that the key point in lifting the bounds from the type system to the relational model is the size representation property.

\subsection{The Semantics of Strong Call-by-Value}
\label{subsect:intro-cbv}

Plotkin's call-by-value $\l$-calculus \cite{DBLP:journals/tcs/Plotkin75} is at 
the heart  of 
programming languages such as OCaml and proof assistants such as Coq. In the study of programming 
languages, call-by-value (\cbv) evaluation is usually \emph{weak}, that is, it does not reduce under 
abstractions, and terms are assumed to be \emph{closed}, \ie , without free variables.
These constraints give rise to an elegant framework---we call it \emph{Closed \cbv},  following 
\cite{DBLP:conf/aplas/AccattoliG16}.

It often happens, however, that one needs to go beyond Closed \cbv  by 
considering \emph{Strong \cbv}, which is the extended setting where reduction under abstractions is allowed 
and terms may be open, or the intermediate framework of \emph{Open \cbv}, where evaluation is weak 
but terms are not necessarily closed. 
The need arises, most notably, when describing the implementation model of Coq, as done by \gregoire and Leroy
\cite{DBLP:conf/icfp/GregoireL02}, to realize the essential conversion test for dependent types. Other 
motivations lie in the study of bisimulations by Lassen
\cite{DBLP:conf/lics/Lassen05}, partial evaluation \cite{Jones:1993:PEA:153676}, or various topics 
of a semantical or logical nature, recalled below.

\myparagraphsp{Na\"ive Extension of \cbv} In call-by-name (\cbn) turning to open terms  or strong 
evaluation is harmless because \cbn does not impose any special form to the arguments of 
$\beta$-redexes. 
On the contrary, turning to Open  or Strong \cbv is delicate. While some fundamental properties such 
as confluence and standardization hold also in such cases, as showed by Plotkin's himself \cite{DBLP:journals/tcs/Plotkin75}, 
others---typically of a semantical nature---break as soon as one considers open terms. 

The problems of \scbv can be traced back to Plotkin's seminal paper,  
where he points out the 
incompleteness of \cbv with respect to CPS translations, an issue later solved with categorical 
tools by Moggi \cite{DBLP:conf/lics/Moggi89}. This 
led to a number of studies, among others
\cite{DBLP:journals/lisp/SabryF93,DBLP:journals/toplas/SabryW97,DBLP:journals/tcs/MaraistOTW99,
DBLP:conf/icfp/CurienH00,DBLP:journals/logcom/DyckhoffL07,DBLP:conf/tlca/HerbelinZ09}, that 
introduced many proposals of 
improved calculi~for~\cbv.

The relationship with denotational semantics is also problematic, as  first shown by Paolini and Ronchi della Rocca
\cite{DBLP:journals/ita/PaoliniR99,DBLP:conf/ictcs/Paolini01,parametricBook}. There are two subtle 
points: 
\begin{enumerate}
\item \emph{Solvability}: the adaptation of the notion of solvability---roughly, 
a form of 
meaningfulness for terms---to \cbv;
\item \emph{Adequacy}: denotational semantics that are 
\emph{adequate} for Closed \cbv \cite{DBLP:conf/csl/AbramskyM97,DBLP:journals/fuin/EgidiHR92,DBLP:journals/tcs/HondaY99,DBLP:journals/mscs/PravatoRR99} are no longer adequate for the extended settings. Roughly, 
there are terms that are semantically divergent, that is, with 
trivial semantics (or unsolvable), 
while they are normal forms with respect to Plotkin's rules, and 
so are expected to have non-trivial 
\end{enumerate}
Let us stress that, for a \cbv model, adequacy\footnote{In the literature, \eg~\cite{DBLP:journals/fuin/ManzonettoPR19}, the word \emph{adequacy} can have different meanings.} is somewhat mandatory, because any model of \cbn provides a non-adequate model of \cbv that does not model the \cbv behavior.

The problem with adequacy can be seen also from a logical point of view. The terms pointed out by Paolini and Ronchi Della Rocca diverge also if seen as 
 linear logic proof nets, as pointed out by Accattoli
\cite{DBLP:journals/tcs/Accattoli15}, or as terms in the computational interpretation of sequent 
calculus by Curien and Herbelin \cite{DBLP:conf/icfp/CurienH00}. 

\paragraph*{Linear Logic and the Semantical Issues of Open/Strong \cbv} Both semantical issues of Open/Strong \cbv have been addressed in the literature, relying of linear logic tools. 

About solvability, Accattoli and Paolini \cite{AccattoliPaolini12} 
characterize operationally solvable terms using a calculus isomorphic to the proof-net \cbv representation of $\l$-calculus, the \emph{value substitution calculus} (shortened to \VSC).
Namely, they introduce a \emph{solvable evaluation strategy} (called stratified-weak in \cite{AccattoliPaolini12}) in the \VSC that terminates if and only if the term is solvable. 
This is akin to what happens in \cbn, where solvable term are 
those for which head evaluation terminates; 
such a characterization is impossible in Plotkin's original formulation of \cbv. 

About adequacy, Accattoli and Guerrieri \cite{DBLP:conf/aplas/AccattoliG18} showed that (a calculus equivalent to) the \emph{open} sub-calculus of the \VSC is adequate with respect to Ehrhard's \cbv relational semantics \cite{DBLP:conf/csl/Ehrhard12}, the \cbv analogous of the system used by de Carvalho. Moreover, they adapted de Carvalho's quantitative bounds (namely, of kind 1 and 2) to \ocbv.

\myparagraphsp{Asymmetry Between \cbn and \cbv} The described results---the solvable strategy characterizing \cbv solvability, and adequacy of relational semantics for \ocbv---show that a semantical theory of \cbv beyond the closed case is possible. Much still needs to be done, though, 
to have a solid theory and to close the gap with the very well studied \cbn case.


In particular, concerning adequacy one would expect a similar result also for evaluation in \scbv, as it is the case for strong \cbn---there are however no such results in the literature. Even worse about the quantitative refinement: while there are strategies in the literature that are fully normalizing for \scbv---essentially the leftmost-outermost strategy---they are formulated in calculi 
for \scbv (\eg~\cite{Guerrieri15,DBLP:journals/lmcs/GuerrieriPR17,DBLP:conf/csl/Santo20}) 
whose number of steps cannot be measured via multi types nor can be taken as a reasonable time cost model (see below).

\paragraph*{Subtleties of Strong \cbv}
Let us clarify a basic point of Strong \cbv. 
	A \emph{value} is a
variable or an abstraction. In \cbv, $\beta$-redexes can be fired only when the argument is a value (and so only values can be duplicated or erased).
In \ccbv, values are normal forms---it can be thought as a \emph{call-by-normal-form} calculus. 
A
similar property can be recovered also in \ocbv (by slightly extending the notion of value), see \cite{DBLP:conf/aplas/AccattoliG16}. 
In Strong \cbv such an essence is lost. To give an idea, if $\Omega$ denotes
the usual diverging term, then in Strong \cbv $\la{\var}.\Omega$ diverges (evaluation under abstraction) while
$(\la{\vartwo}\varthree) (\la{\var}.\Omega)$ may both diverge and reduce to $\varthree$---the argument need not be fully normalized, but
 only reduced to a (weak) value. 
This is mandatory, to be conservative over Closed and Open CbV.
Thus, there is an inherent underlying tension in Strong CbV: on the one hand, values keep
playing an essential role to identify which $\beta$-redexes can be fired, on the other hand they no longer
suffice to describe evaluation and normal forms, that become more sophisticated notions.

\smallskip
\paragraph*{Contributions of This Paper for the Semantics of Strong \cbv} We provide a quantitative study of \scbv via multi types. Our contributions are:
\begin{itemize}
\item \emph{Normalizing strong strategy}: we define a strong \VSC strategy generalizing leftmost-outermost evaluation, named \emph{external strategy}, and prove it normalizing, that is, it reaches a \VSC normal form whenever there is one. 
\item \emph{Normalization, multi types, and adequacy}: we give a characterization of normalizable terms via multi types, obtaining the adequacy theorem.

\item \emph{Semantical bounds}:  we show the three kinds of de Carvalho's bounds with respect to the external strategy. In particular, we dissect the \emph{bounds from composable types} kind, as explained in the first part of the introduction.
\end{itemize}


\paragraph*{Reasonable Cost Models} In a companion paper by Accattoli, Condoluci, and Sacerdoti Coen \cite{LICS2021machine}, it is proved that the number of steps of the external strategy is a reasonable time cost model for \scbv (where \emph{reasonable} means polynomially related to the time cost model of Turing machines). The proof is based on a novel abstract machine that implements the external strategy within a overhead which is bilinear (that is, linear in the number of $\beta$/multiplicative steps and in the size of the initial term). 
When combined with the study of this paper, we obtain that multi types provide quantitative bounds that are meaningful from a computational complexity point of view.

\smallskip
\paragraph*{Syntactic Variants} We study \scbv via the VSC, to stress the linear logic background and foundation of our study. Everything could equivalently be  easily  reformulated inside (the intuitionistic and \cbv fragment of) Curien and Herbelin's
$\lambdamucalc$-calculus, following the isomorphism with the VSC developed in \cite{DBLP:conf/aplas/AccattoliG16} 
and preserving the number of steps/cost model. The external strategy can also be reformulated using the \emph{shuffling calculus} $\shufcalc$ by Guerrieri and Carraro
\cite{DBLP:conf/fossacs/CarraroG14}, a calculus with commuting conversions used recently also by Manzonetto, Pagani, and Ronchi della Rocca
\cite{DBLP:journals/fuin/ManzonettoPR19}. The cost model of $\shufcalc$, however, is unclear (see 
\cite{DBLP:conf/aplas/AccattoliG16}) and commuting conversion steps cannot be counted via multi 
types (but they should)---$\shufcalc$ provides a qualitative foundation for \scbv, but it cannot be used for a 
quantitative one. Calculi with $\letexp$-commutation rules such as the one by Herbelin and Zimmerman \cite{DBLP:conf/tlca/HerbelinZ09} simply 
can be seen as subcalculi of the VSC (up to structural equivalence, see \cite{AccattoliPaolini12}). Similar remarks apply to many other \cbv calculi
\cite{DBLP:conf/lics/Moggi89,DBLP:journals/lisp/SabryF93,DBLP:journals/toplas/SabryW97,DBLP:journals/tcs/MaraistOTW99, 
DBLP:journals/logcom/DyckhoffL07,DBLP:conf/csl/Santo20}. The key point of the VSC (valid also 
in $\lambdamucalc$) is that it does not need any commuting rule. 

\smallskip
\paragraph*{Summing Up} We study \scbv using a calculus, the VSC, rooted in linear logic. We provide a solid foundation of \scbv with respect to multi types and relational semantics, analogous to the one for \cbn, by characterizing qualitatively and quantitatively \scbv normalization. Moreover, we fully express the quantitative aspect of relational semantics by adapting de Carvalho's semantic bounds and re-understanding them via the isolation of the key \emph{size representation} property. 

\smallskip
\paragraph*{Proofs} Proofs are in the Appendix. 

\section{Value Substitution Calculus}
\label{sect:vsc}

Here we present the \emph{value substitution calculus} (\VSC for short) introduced by Accattoli and Paolini \cite{AccattoliPaolini12}, and we recall some properties.
The \VSC is a $\lambda$-calculus with let-expressions whose reduction rules mimic cut-elimination on proof-nets, via Girard's \cbv translation $(A \Rightarrow B)^v = \oc (A^v \multimap B^v)$ of intuitionistic logic into linear logic, as explained in \cite{DBLP:journals/tcs/Accattoli15}.  

In \VSC, $\beta$-redexes are decomposed via let-expressions, and the
\emph{by-value} restriction on evaluation is on the let-substitution rule, not on $\beta$-redexes, because only values
can be substituted.
A let-expression is formulated as an \emph{explicit substitution} or \emph{sharing} (\ES for short) $\tm\esub{\var}{\tmtwo}$ which binds $\var$ in $\tm$.
All along the
paper we use (many notions of) \emph{contexts}, \ie terms with a hole, noted $\ctxhole$. For now, we need \emph{substitution contexts} $\sctx$, which are simply lists of \ES. The grammars are:
\begin{center}
$\arraycolsep=3pt\begin{array}{rrl}
\textsc{Values } & \val & \grameq \la\var\tm 
\\
\textsc{Terms } & \tm,\tmtwo, \tmthree & \grameq \var \mid \val \mid \tm\tmtwo 
\mid \tm \esub\var\tmtwo 
\\
\textsc{Substitution Ctxs } &\subctx & \grameq \ctxhole \mid \subctx \esub\var\tm
\end{array}$
\end{center}

The set of free variables of term $\tm$ is denoted by $\fv{\tm}$. 
Plugging a term $\tm$ in a context $\ctx$ is noted $\ctxp{\tm}$, possibly~capturing variables.
An \emph{answer} is a term of the shape $\sctxp\val$, where $\val$ is a value (\ie an abstraction) and $\sctx$ is a substitution context.
We use $\tm \isub{\var}{\tmtwo}$ for the capture-avoiding substitution of
$\tm$ for each free occurrence of $\var$ in $\tm$.
There are two kinds of rewrite rules, both work \emph{at a distance}, that is, up to a substitution context. 
\begin{center}
$\begin{array}{rr@{\ }l@{\ }l}
    \textsc{Multiplicative rule} & \subctxp{\la\var\tm}\tmtwo &  \rtom  & \subctxp{\tm\esub{\var}{\tmtwo}} \\
    \textsc{Exponential rule}  & \tm\esub\var{\subctxp{\val}} &  \rtoe  & \subctxp{\tm\isub{\var}{\val}} 
\end{array}$    
\end{center}

We shall consider three fragments of the  \VSC. 
They all contain the terms of \VSC, they differ only in the choice of evaluation contexts for the rewrite rules.

\smallskip
\paragraph*{The Open VSC} We first focus on the open fragment of the \VSC, where rewriting is forbidden under abstraction and terms are possibly open (but not necessarily). This fragment has a nice inductive description of its normal forms, called fireballs, that is the starting point for many other definitions in the paper. Open contexts and rules are defined as follows.
\begin{center}
		$\begin{aligned}
		\textsc{Open ctxs} && \weakctx & \grameq \ctxhole \mid \weakctx \tm \mid \tm \weakctx \mid \weakctx \esub\var\tm \mid 
\tm\esub\var \weakctx 
		\end{aligned}$
		\smallskip
		
		\begin{tabular}{ccc}
\textsc{Open rewrite rules}:
			&
			\multirow{2}{*}{\begin{prooftree}
					\hypo{\tm \rootRew{a} \tm'}		
					\infer1{\weakctxp{\tm} \Rew{\wsym a} \weakctxp{\tm'}}
			\end{prooftree}}
		\\
		($a \in \set{\msym,\esym}$)
	\end{tabular}\medskip

		$\begin{array}{cccc}
\textsc{Open reduction}:& \tovsubo  \, \defeq \, \tomo \cup \toeo
	\end{array}$
\end{center}

\begin{proposition}[Properties of the open reduction]
	\label{prop:ovsc-diamond}\label{prop:properties-open-reduction}
	\NoteProof{propappendix:properties-open-reduction}
	\begin{enumerate}
		\item \label{p:properties-open-reduction-diamond} $\tovsubo$ is diamond; $\tomo$ and $\toeo$ strongly commute.
		\item \label{p:properties-open-redction-harmony} A term is $\osym$-normal if and only if it is a fireball, where \emph{fireballs} (and \emph{inert terms}) are defined by:
	\end{enumerate}
\begin{center}
	$\begin{array}{r r@{\ } l}
	\textsc{Inert term} & \itm &\grameq \var \mid \itm \fire \mid \itm \esub{\var}{\itmtwo}
	\\
	\textsc{Fireball} & \fire &\grameq \val \mid \itm \mid \fire \esub{\var}{\itm}
	\end{array}$
\end{center}
\end{proposition}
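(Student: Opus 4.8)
The plan is to handle the two items of the proposition separately, but both by direct inspection of the rewriting rules.

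For item~\ref{p:properties-open-reduction-diamond} I would prove three facts and then combine them by the standard abstract lemma that the union of two relations that are each diamond and that strongly commute is again diamond. The three facts are: $\tomo$ is diamond, $\toeo$ is diamond, and $\tomo$ and $\toeo$ strongly commute. Each is established by taking a term with two distinct redexes $u_1 \leftarrow \tm \to u_2$ and exhibiting a common reduct reached by exactly one step from each of $u_1$ and $u_2$, arguing by cases on the relative positions of the two contracted redexes: disjoint, or nested inside an argument or inside a substitution context of the other (the $\lambda$-body case is excluded, since open contexts do not enter abstractions). The structural observations that make every case close are: a multiplicative step neither duplicates nor erases any subterm, it only relocates a substitution context, so the residuals of other redexes are unique; and an exponential step $\tm\esub\var{\subctxp\val} \rtoe \subctxp{\tm\isub\var\val}$ pulls the substitution context $\subctx$ \emph{out} rather than copying it, so it duplicates only the value $\val$, whose body lies under an abstraction and hence contains no redex of the open system. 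Thus no open step ever duplicates or erases an open redex, which is precisely what one-step-each diagrams demand. I expect the bookkeeping of the sub-cases in which one redex lies inside the substitution context or inside the value of the other to be the main obstacle; they are discharged using the substitution lemma for $\isub{}{}$, available after the usual $\alpha$-renaming.

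For item~\ref{p:properties-open-redction-harmony} I would first record a handful of easy syntactic lemmas about answers: an answer is exactly a term of the form $\subctxp{\la\var\tm}$; hence $\tm\tmtwo$ is a multiplicative redex iff $\tm$ is an answer, and $\tm\esub\var\tmtwo$ is an exponential redex iff $\tmtwo$ is an answer; and, by induction on the grammar of inert terms, an inert term is never an answer. These combine into the key dichotomy, proved by induction on the fireball grammar: a fireball is either an answer or an inert term, and never both. Then for ``fireball $\Rightarrow$ $\osym$-normal'' I would argue by mutual induction on the grammars of inert terms and fireballs: an abstraction is $\osym$-normal because open contexts never reach $\lambda$-bodies; in every other production the immediate subterms are $\osym$-normal by the induction hypothesis, and no root redex is possible because the relevant immediate subterm (the left one of an application, or the content of an explicit substitution) is an inert term, which is not an answer.

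For the converse, ``$\osym$-normal $\Rightarrow$ fireball'', I would use structural induction on $\tm$: a variable is inert and an abstraction is a value, so both are fireballs; in the two remaining cases the immediate subterms are $\osym$-normal (the open contexts $\weakctx\tm$, $\tm\weakctx$, $\weakctx\esub\var\tm$, $\tm\esub\var\weakctx$ witness this), hence fireballs by the induction hypothesis, while the absence of a root redex forces the relevant subterm not to be an answer, hence by the dichotomy to be inert, so that $\tm$ matches the inert production $\itm\fire$ in the application case and the fireball production $\fire\esub\var\itm$ in the explicit-substitution case. The only delicate point throughout is the answer/inert/fireball bookkeeping isolated by the preliminary lemmas; once that is in place, both inclusions are routine.
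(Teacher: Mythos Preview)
Your proposal is correct and follows the same overall structure as the paper. For item~\ref{p:properties-open-reduction-diamond} the paper does exactly what you outline: it proves separately that $\tomo$ is diamond, that $\toeo$ is diamond, and that the two strongly commute (each by induction on the reduction, with the same case analysis on the relative positions of the redexes that you sketch), and then concludes via the abstract lemma that the union of two diamond, strongly commuting relations is diamond. Your high-level explanation of \emph{why} every local diagram closes---open contexts do not enter abstractions, so the value substituted by an exponential step carries no open redex and hence no open redex is ever duplicated or erased---is precisely the invariant that makes all the sub-cases go through.

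For item~\ref{p:properties-open-redction-harmony} the paper simply cites \cite[Lemma~5]{AccattoliPaolini12} and gives no proof. Your direct argument is sound, and in fact mirrors the paper's own proof of the analogous statement for the \emph{strong} fragment (\Cref{prop:properties-full-reduction}): there too the key step is a shape lemma (\Cref{l:shape-of-strong-fireballs}) stating that a (strong) fireball is either a (strong) inert term or a (strong) value, exclusively---your ``answer/inert dichotomy'' is the open-fragment version of that lemma, and the two structural inductions you describe correspond exactly to the two directions proved there.
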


Diamond of $\tovsubo$ and strong commutation of $\tomo$ and $\toeo$ are technical facts (their definitions are in \Cref{sect:preliminaries} in the Appendix) with relevant consequences:
$\tovsubo$ is confluent and its non-determinism is only apparent, because if an $\osym$-evaluation from $\tm$ reaches
a $\osym$-normal form $\tmtwo$, then \emph{every} $\osym$-evaluation from $\tm$ eventually ends in $\tmtwo$. And all these $\osym$-evaluations have the same length and same number of $\msym$-steps and $\esym$-steps.  
Same properties hold for other strategies we shall consider. 


%

For our quantitative study we need a notion of size specific for every fragment of the VSC. The \emph{open size} $\sizeo{\tm}$ of a term $\tm$ is its number of applications out of abstractions, \ie 
	\begin{align*}
	\sizeo{\var} &\defeq 0 
	& 
	\sizeo{\tm\tmtwo} & \defeq  \sizeo{\tm} + \sizeo{\tmtwo} + 1 
	\\
	\sizeo{\la{\var}{\tm}} & \defeq  0
	&
	\sizeo{\tm \esub{\var}{\tmtwo}} & \defeq  \sizeo{\tm} + \sizeo{\tmtwo}.
	\end{align*}

The \emph{Closed \VSC} is the restriction of the Open \VSC to closed terms. Its normal forms are the closed values.

	\smallskip
\paragraph*{The Strong VSC} The \Full \VSC is obtained by allowing rewriting rules  everywhere. 
\begin{center}
		$\begin{array}{l r rclccc}
		\textsc{\Full ctxs} & \fctx  \grameq \ctxhole \mid \fctx \tm \mid \tm \fctx \mid \la{\var}{\fctx} \mid 
\fctx \esub\var\tm \mid \tm \esub \var \fctx 
		\end{array}$\medskip

		\begin{tabular}{ccc}
\textsc{\Full rewrite rules:}
			&
			\multirow{2}{*}
			{\begin{prooftree}
    \hypo{\tm \rootRew{a} \tm'}		
    \infer1{\fctxp\tm \Rew{a} \fctxp{\tm'}}
			\end{prooftree}}
		\\
		($a \in \set{\msym,\esym}$)
	\end{tabular}\medskip

		$\begin{array}{r l@{\ } l@{\ } lllllll}
  \textsc{\Full reduction:} & \tovsub  & \defeq  &\tom \cup \toe
\end{array}$
\end{center}
Unlike the previous cases, $\tovsub$ is not diamond: consider all the $\vsub$-evaluations of $(\var\var) \esub{\var}{\la{\vartwo} \Id \Id}$, with $\Id \defeq \la{\varthree}\varthree$.  

\begin{proposition}[Properties of the \full reduction] 
\label{thm:vsc-confluence}\label{prop:properties-full-reduction}
	\NoteProof{propappendix:properties-full-reduction}
\hfill
\begin{enumerate}
	\item \label{p:properties-full-reduction-confluence} The reduction $\tovsub$ is confluent.
	
	\item \label{p:properties-full-reduction-harmony} A term is $\vsub$-normal if and only if it is a \full fireball, where \emph{\full fireballs} (and \emph{\full inert terms}, \emph{\full values}) are:
\end{enumerate}
\begin{center}
	$\begin{array}{rl}
	\textsc{\Full inert terms } 
	&
	\sitm \grameq \var \mid \sitm \sfire \mid \sitm \esub\var{\sitmtwo}
	\\
	\textsc{\Full values } 
	&
	\sval  \grameq  \la\var\sfire
	\\
	\textsc{\Full fireballs } 
	&
	\sfire \grameq \sitm \mid \sval \mid \sfire \esub\var\sitm
	\end{array}$
\end{center}
\end{proposition}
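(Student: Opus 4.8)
\emph{Plan.} The two items are independent. For item~1, the key observation is that the \Full \VSC reduces under arbitrary contexts, hence is exactly the \VSC of Accattoli and Paolini, whose confluence is already \cite{AccattoliPaolini12}; I would nonetheless include a proof for completeness, following the standard route. The plan is the Tait--Martin-L\"of/Takahashi technique: introduce a reflexive \emph{parallel reduction} $\Rightarrow$ firing, in one step, an arbitrary set of root $\msym$- and $\esym$-redexes sitting inside arbitrary \Full contexts, so that $\tovsub$ refines $\Rightarrow$ and $\Rightarrow$ refines the reflexive--transitive closure of $\tovsub$; then it suffices to prove $\Rightarrow$ confluent, and confluence of $\tovsub$ follows from the standard fact that a relation $R$ with $R \subseteq S \subseteq R^{*}$ and $S$ confluent is itself confluent. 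Confluence of $\Rightarrow$ I would obtain from its \emph{diamond property}, via the ``complete development'' $\tm^{\circ}$ of all the marked redexes of $\tm$ together with the property that $\tm \Rightarrow \tmtwo$ implies $\tmtwo \Rightarrow \tm^{\circ}$ (the Z-property). The two ingredients of this induction are the substitution lemma ``$\tm \Rightarrow \tm'$ and $\val \Rightarrow \val'$ imply $\tm\isub{\var}{\val} \Rightarrow \tm'\isub{\var}{\val'}$'' and a careful treatment of the two rules \emph{at a distance}: a root $\msym$-redex $\sctxp{\la\var\tm}\tmtwo$ and a root $\esym$-redex $\tm'\esub{\var}{\sctxp\val}$ are visible only up to a substitution context $\sctx$, so one must check that $\Rightarrow$ preserves being an answer and correctly tracks how $\sctx$ is copied and merged. \emph{This bookkeeping of the action at a distance is the main obstacle}; everything else in item~1 is routine.

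For item~2 I would first isolate two lemmas. \textbf{(a)}~No \full inert term is an answer, i.e.\ $\sitm$ is never of the form $\sctxp\val$: by induction on the structure of $\sitm$, inspecting the outermost constructor. If $\sitm = \var$ or $\sitm = \sitm\sfire$, then $\sctxp\val$ is either an abstraction or an \ES, never a variable nor an application; the only non-immediate case is $\sitm = \sitm\esub{\var}{\sitmtwo}$, where matching the outermost \ES against $\sctxp\val$ would force the left component (a \full inert term) to be an answer, against the induction hypothesis. \textbf{(b)}~A \full fireball that is not an answer is a \full inert term: by induction on the \full fireball; the $\sval$ case is vacuous (a \full value is an answer), the $\sitm$ case is lemma~(a), and for $\sfire\esub{\var}{\sitm}$ one uses that it is an answer if and only if $\sfire$ is, so a non-answer $\sfire\esub{\var}{\sitm}$ has $\sfire$ non-answer, hence $\sfire$ is a \full inert term by the induction hypothesis, hence so is $\sfire\esub{\var}{\sitm}$. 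I would also record the evident root-redex analysis induced by the \full rules: $\var$ and $\la\var\tm$ have no root redex; $\tm\tmtwo$ is a root $\msym$-redex exactly when $\tm$ is an answer; $\tm\esub{\var}{\tmtwo}$ is a root $\esym$-redex exactly when $\tmtwo$ is an answer; and every immediate subterm of a $\vsub$-normal term is $\vsub$-normal, since each term constructor, with the hole at an immediate subterm, is a \full context.

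With these in hand, the $(\Leftarrow)$ direction---every \full fireball is $\vsub$-normal---is a mutual induction on the grammars of \full inert terms, \full values and \full fireballs: in each of the three productions that could create a root redex ($\sitm\sfire$, $\sitm\esub{\var}{\sitmtwo}$ and $\sfire\esub{\var}{\sitm}$) a \full inert term stands where an answer would be needed, which is excluded by lemma~(a), while redexes inside subterms are excluded by the induction hypothesis; abstractions have no root redex either. The $(\Rightarrow)$ direction---every $\vsub$-normal $\tm$ is a \full fireball---is a structural induction on $\tm$ using the root-redex facts: for $\tm = \var$ it is immediate; for $\tm = \la\var\tmtwo$ the body $\tmtwo$ is $\vsub$-normal, hence a \full fireball by the induction hypothesis, hence $\tm$ is a \full value; for $\tm = \tmtwo\tmthree$ both components are $\vsub$-normal, hence \full fireballs, and $\tmtwo$ is not an answer (no root $\msym$-redex), so by lemma~(b) $\tmtwo$ is a \full inert term and thus $\tm$ is a \full inert term; for $\tm = \tmtwo\esub{\var}{\tmthree}$ both components are \full fireballs, and $\tmthree$ is not an answer (no root $\esym$-redex), so by lemma~(b) $\tmthree$ is a \full inert term and thus $\tm$ is a \full fireball via the production $\sfire\esub{\var}{\sitm}$. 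Summing up, the only genuinely hard part is confluence, because---unlike for $\tovsubo$---$\tovsub$ is not diamond, so one must descend to parallel reduction; once lemmas~(a) and~(b) are isolated, the harmony statement is pure bookkeeping.
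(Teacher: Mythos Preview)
Your proposal is correct and, for item~2, follows essentially the same route as the paper: your lemmas~(a) and~(b) are the two halves of the paper's ``Shape of strong fireballs'' lemma (a \full fireball is \emph{exclusively} either a \full inert term or an answer $\sctxp{\la\var\sfire}$ with \full-inert substitutions), and the two inductions for $(\Leftarrow)$ and $(\Rightarrow)$ are organized identically. One cosmetic remark: in your lemma~(b), the $\sitm$ case is just ``$\sitm$ is trivially \full inert''; lemma~(a) is not needed there.

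For item~1 the paper does not prove confluence at all---it simply cites \cite[Corollary~1]{AccattoliPaolini12}. Your Tait--Martin-L\"of/Takahashi sketch is a standard and valid way to obtain it from scratch, and your identification of the at-a-distance bookkeeping (tracking how the substitution context $\sctx$ in $\rtom$ and $\rtoe$ interacts with parallel reduction) as the only non-routine point is accurate; but for the purposes of this paper the citation suffices.
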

The notions of \full inert terms and \full fireballs are a generalization of inert terms and fireballs, respectively, by simply iterating the construction under all abstractions.
%
%
Note that they are similar to normal forms of the (\cbn) $\l$-calculus, but they can have \ES containing \full inert terms. 


The \emph{\full size} $\sizefu{\tm}$---or simply size---of a term $\tm$ is its number of applications and abstractions, \ie 
	\begin{align*}
		\sizefu{\var} &\defeq 0 
		& 
		\sizefu{\tm\tmtwo} & \defeq  \sizefu{\tm} + \sizefu{\tmtwo} + 1 
		\\
		\sizefu{\la{\var}{\tm}} & \defeq  \sizefu{\tm} + 1 
		&
		\sizefu{\tm \esub{\var}{\tmtwo}} & \defeq  \sizefu{\tm} + \sizefu{\tmtwo}.
	\end{align*}

\paragraph*{On the Value of Variables}
In Plotkin's original work \cite{DBLP:journals/tcs/Plotkin75}, and in works of a theoretical nature, values are abstractions or variables.
In implementative works, variables are excluded from values. 
The difference is not only terminological, but also semantic because it affects whether exponential step can fire $\tm\esub{\var}{\vartwo}$ or not.
Accattoli and Sacerdoti Coen \cite{DBLP:journals/iandc/AccattoliC17} showed that excluding variables from values brings a speed up in implementations, which explains the discrepancy. 
Actually, from a theoretical viewpoint, variables have a double nature as both values and (\full) inert terms, in that all the statements concerning types for values hold with or without variables and the same for statements concerning types for inert terms.
We chose to define values as only abstractions and to include variables in the (\full) inert terms essentially for two reasons.

First, for (\full) inert terms we shall prove stronger claims about typability to have the right inductive hypothesis.
So, seeing variables as (\full) inert terms is somehow \emph{more general}. 

Second, considering variables as values so that exponential steps can also fire \ES with variables is \emph{irrelevant} with respect to evaluation.
Let us see that more in detail.
We set:
\begin{center}
	$\begin{array}{l r rclccc}
	& 
	\!\!\tm\esub\var{\subctxp{\vartwo}}  \rtoevar\!  \subctxp{\tm\isub{\var}{\vartwo}} 
	& \ 
	{\begin{prooftree}
		\hypo{\tm \rootRew{\evarsym} \tm'}		
		\infer1{\fctxp\tm \Rew{\evarsym}\! \fctxp{\tm'}}
		\end{prooftree}}
\end{array}$
	
%
\end{center}

\begin{proposition}[Irrelevance of $\toevar$]
	\label{prop:irrelevance}
	\NoteProof{propappendix:irrelevance}
	\hfill
	\begin{enumerate}
		\item\label{p:irrelevance-postponement} \emph{Postponement}: whenever $\deriv \colon \tm \, (\tovsub \cup \toevar)^*\, \tmtwo$ then there is $\derivp \colon \tm \tovsub^* \!\cdot \toevar^* \, \tmtwo$ with $\sizem\derivp = \sizem \deriv$;
		\item\label{p:irrelevance-termination} \emph{Termination}: a term is weakly (\resp strongly) normalizing for $\tovsub$ if and only if so is it for $\tovsub \!\cup \toevar$.
	\end{enumerate}
\end{proposition}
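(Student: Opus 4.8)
The whole argument reduces to one technical ingredient, a \emph{local swap} of $\toevar$ over $\tovsub$, which I expect to be the only delicate point: if $\tm \toevar \cdot \Rew{a} \tmtwo$ with $a \in \set{\msym,\esym}$, then $\tm \Rew{a} \cdot \toevar^*\, \tmtwo$. I would prove it by case analysis on the relative positions, inside $\tm$, of the $\toevar$-redex $\tm'\esub\var{\sctxp\vartwo}$ and of the $\tovsub$-redex fired afterwards. When the two redexes are disjoint, or one lies strictly inside a part of $\tm$ untouched by the other, the diagram closes immediately. The delicate cases are the overlapping ones, in which one of the two redexes acts \emph{at a distance} through a substitution context that the other one rewrites or displaces. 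The key observation is that $\toevar$ is just a renaming followed by the extrusion of a substitution context: it creates and erases no abstraction and no application, and the extruded $\sctx$ recomposes coherently with the substitution contexts occurring in the $\tovsub$-redex. Hence the $\tovsub$-redex fired after the $\toevar$-step corresponds to a $\tovsub$-redex of $\tm$ of the same kind $a$, and firing the latter in $\tm$ and then performing the residual $\toevar$-steps yields $\tmtwo$. The one thing to be careful about is $\alpha$-renaming, choosing the variables bound by $\sctx$ fresh for $\tm'$ and $\vartwo$.

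Given the local swap, \emph{Postponement} follows by a standard (Hindley-style) postponement argument: from $\toevar \cdot \Rew a\ \subseteq\ \Rew a \cdot \toevar^*$ one derives in turn $\toevar^* \cdot \tovsub\ \subseteq\ \tovsub \cdot \toevar^*$, then $\toevar^* \cdot \tovsub^*\ \subseteq\ \tovsub^* \cdot \toevar^*$, and finally $(\tovsub \cup \toevar)^*\ \subseteq\ \tovsub^* \cdot \toevar^*$, each inclusion by a routine induction on a suitable reduction length. Since every application of the local swap replaces a step of kind $a \in \set{\msym,\esym}$ by a step of the \emph{same} kind $a$ and only duplicates $\toevar$-steps, the number of multiplicative steps is unchanged along the whole rearrangement, which gives $\sizem\derivp = \sizem\deriv$.

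For \emph{Termination}, the implication from $\tovsub \cup \toevar$ to $\tovsub$ is easy: strong normalization is inherited because $\tovsub \subseteq \tovsub \cup \toevar$; for weak normalization, if $\tm\,(\tovsub\cup\toevar)^*\,\tmtwo$ with $\tmtwo$ normal for $\tovsub \cup \toevar$, then by \emph{Postponement} $\tm\,\tovsub^*\,\tm''\,\toevar^*\,\tmtwo$, and $\tm''$ is already $\tovsub$-normal, because $\tmtwo$ is $\tovsub$-normal and no $\toevar$-step erases a $\tovsub$-redex---again a short inspection of the rule, relying on the fact that $\toevar$ only renames and shuffles substitution contexts. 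For the converse implication I would use two facts: (i)~$\toevar$ alone is strongly normalizing, since each $\toevar$-step strictly decreases the number of explicit substitutions of the term (firing $\tm'\esub\var{\sctxp\vartwo}$ removes the substitution $\esub\var{\sctxp\vartwo}$, and the renaming $\isub\var\vartwo$ introduces none); and (ii)~symmetrically to the above, no $\toevar$-step creates a $\tovsub$-redex. Then: for weak normalization, reduce $\tm$ to a $\tovsub$-normal form and normalize its remaining $\toevar$-redexes---this terminates by~(i) and stays $\tovsub$-normal by~(ii)---reaching a $(\tovsub\cup\toevar)$-normal form. For strong normalization, assume for contradiction an infinite $(\tovsub\cup\toevar)$-reduction from a $\tovsub$-strongly-normalizing $\tm$; by~(i) it contains infinitely many $\tovsub$-steps, so by repeatedly using the local swap to bring the first $\tovsub$-step to the front one extracts an infinite $\tovsub$-reduction from $\tm$, a contradiction. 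The bulk of the work, and the only real obstacle, is the local swap lemma: not conceptually hard, but requiring a patient enumeration of the overlap cases and careful bookkeeping of how substitution contexts move.
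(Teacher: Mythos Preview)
Your local swap lemma is false for the exponential case. Take $\tm = s\esub{\var}{\vartwo}\esub{\vartwo}{\val}$ with $\val$ an abstraction: then
\[
\tm \ \toevar\ (s\isub{\var}{\vartwo})\esub{\vartwo}{\val}\ \toe\ s\isub{\var}{\vartwo}\isub{\vartwo}{\val},
\]
but the only $\toe$-step from $\tm$ yields $s\isub{\vartwo}{\val}\esub{\var}{\val}$, and from there the remaining step to the target is again $\toe$ (since $\val$ is an abstraction), not $\toevar$. So the correct swap is $\toevar\!\cdot\!\toe \ \subseteq\ \toe\!\cdot\!\toevar^* \ \cup\ \toe\!\cdot\!\toe$, exactly as the paper states. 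The phenomenon is that the $\toevar$-step renames the content of an outer explicit substitution from a variable to an abstraction, turning a would-be residual $\toevar$-redex into a genuine $\toe$-redex.

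This extra $\toe\!\cdot\!\toe$ branch breaks your Hindley-style induction on reduction length: the number of $\toe$-steps may grow when you push a $\toevar$ past them. The paper circumvents this by proving the composite swap $\toevar^*\!\cdot\!\toe^+ \subseteq \toe^+\!\cdot\!\toevar^*$ via a lexicographic induction whose primary component is the length of the \emph{longest} $\toe\cup\toevar$ sequence from the current term (which exists because $\toe\cup\toevar$ is strongly normalizing). Your claim that $\sizem$ is preserved survives, since the duplicated steps are exponential, but the termination of the rearrangement itself needs this stronger well-founded measure. The rest of your outline---strong normalization of $\toevar$, preservation of $\vsub$-normal forms under $\toevar$, and the termination arguments built on postponement---matches the paper.
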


\paragraph*{Plotkin \textit{vs} \VSC}

Plotkin's original \cbv $\lambda$-calculus $\plotcalc$ \cite{DBLP:journals/tcs/Plotkin75} can be easily simulated in the \VSC.
The syntax of $\plotcalc$ is simply the same as in the \VSC but without \ES.
The reduction in $\plotcalc$ is the closure under \full contexts (without \ES) of the rule $(\la{\var}\tm)\tval \rtobvplot \tm\isub{\var}{\tval}$,
where $\tval$ is a \emph{theoretical value}, \ie a variable or an abstraction.
Since $\tobvplot$ fires also variables, the simulation requires also $\evarsym$-steps, but we have seen that this extension is harmless in \VSC (\Cref{prop:irrelevance}).
\begin{proposition}[Simulation]
	\label{prop:plotkin-vsc}
	\NoteProof{propappendix:plotkin-vsc}
	Let $\tm$ be a term without \ES. 
	\begin{enumerate}
		\item\label{p:plotkin-vsc-step}  	If $\tm \tobvplot \tm'$ then $\tm \tom \!\cdot\, (\toe \cup \toevar) \ \tm'$.
		\item\label{p:plotkin-vsc-steps}  If $\tm \tobvplot^* \tm'$ then $\tm \tovsub^* \tmtwo  \toevar^* \tm'$ for some term $\tmtwo$.
	\end{enumerate}
\end{proposition}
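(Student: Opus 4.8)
The plan is to establish point~\ref{p:plotkin-vsc-step} by a direct simulation of Plotkin's root rule, and then to derive point~\ref{p:plotkin-vsc-steps} from it by iteration together with the postponement of $\toevar$ (\Cref{prop:irrelevance}.\ref{p:irrelevance-postponement}).

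For point~\ref{p:plotkin-vsc-step}, unfolding the definition of $\tobvplot$, the hypothesis $\tm \tobvplot \tm'$ means that $\tm = \fctxp{(\la\var\tmthree)\tval}$ and $\tm' = \fctxp{\tmthree\isub\var\tval}$ for some \full context $\fctx$ without \ES (the contexts of $\plotcalc$), term $\tmthree$, variable $\var$, and theoretical value $\tval$; note that such an $\fctx$ is in particular a \full \VSC context, and that, since $\tm$ has no \ES, neither do $\tmthree$ and $\tval$. I would first simulate the root step $(\la\var\tmthree)\tval \rtobvplot \tmthree\isub\var\tval$. The multiplicative rule instantiated with empty substitution context gives $(\la\var\tmthree)\tval \rtom \tmthree\esub\var\tval$. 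Then I would case split on the shape of $\tval$: if $\tval$ is an abstraction, hence a value $\val$, the exponential rule with empty substitution context gives $\tmthree\esub\var\val \rtoe \tmthree\isub\var\val$; if $\tval$ is a variable $\vartwo$, the $\evarsym$-rule with empty substitution context gives $\tmthree\esub\var\vartwo \rtoevar \tmthree\isub\var\vartwo$. Closing these root reductions under the \full context $\fctx$ then yields $\tm = \fctxp{(\la\var\tmthree)\tval} \tom \fctxp{\tmthree\esub\var\tval} \;(\toe \cup \toevar)\; \fctxp{\tmthree\isub\var\tval} = \tm'$, which is the claim.

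For point~\ref{p:plotkin-vsc-steps}, consider a Plotkin reduction $\tm = \tm_0 \tobvplot \tm_1 \tobvplot \cdots \tobvplot \tm_n = \tm'$. Each $\tm_i$ is \ES-free, since $\tobvplot$ preserves \ES-freeness, so point~\ref{p:plotkin-vsc-step} applies at each step and gives $\tm_i \tom \cdot (\toe \cup \toevar)\; \tm_{i+1}$; concatenating these and using $\tom, \toe \subseteq \tovsub$, I obtain a reduction $\tm \;(\tovsub \cup \toevar)^*\; \tm'$. Applying the postponement property \Cref{prop:irrelevance}.\ref{p:irrelevance-postponement} to it produces a reduction of the shape $\tm \tovsub^* \cdot \toevar^*\; \tm'$, that is, $\tm \tovsub^* \tmtwo \toevar^* \tm'$ for some term $\tmtwo$, as required.

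I expect no real obstacle here: the argument is routine bookkeeping with contexts plus one case split. The only subtlety worth flagging is the variable case of the root step---Plotkin's rule also fires $\beta_v$-redexes whose argument is a variable, whereas the \VSC exponential rule substitutes only genuine values (abstractions), so these steps must be mimicked by $\toevar$-steps. This is exactly why point~\ref{p:plotkin-vsc-steps} is not a plain iteration of point~\ref{p:plotkin-vsc-step}, where the $\toevar$-steps would come interleaved, and why it invokes \Cref{prop:irrelevance}.\ref{p:irrelevance-postponement} to gather them at the end and expose a clean $\tovsub^*$ prefix.
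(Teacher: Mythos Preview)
Your proposal is correct and follows essentially the same approach as the paper: handle the root step by decomposing $\rtobvplot$ into $\rtom$ followed by an exponential or variable-exponential step, close under the context, and for point~\ref{p:plotkin-vsc-steps} iterate point~\ref{p:plotkin-vsc-step} along the reduction and then invoke the postponement of $\toevar$ (\Cref{prop:irrelevance}.\ref{p:irrelevance-postponement}). Your explicit case split on whether $\tval$ is an abstraction or a variable is in fact more careful than the paper's appendix proof, which glosses over this distinction.
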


There is no sensible way to simulate \VSC into $\plotcalc$.
Indeed \VSC is a proper extension of $\plotcalc$: \VSC makes divergent terms such as $(\la{\var}\delta)(\vartwo\vartwo)\delta$ and $\delta ((\la{\var}\delta) (\vartwo\vartwo))$ that are $\betaplot$-normal.


\section{The External Strategy}
\label{sect:external}

Here we define the \full strategy, called \emph{external}, that we shall show to be $\vsub$-normalizing, as a corollary of our type theoretical study. 
Its role is analogous to the leftmost-outermost 
strategy of the $\l$-calculus. A notable difference, however, is that the external strategy is itself non-deterministic, but in a 
harmless way, as it is diamond. 

We need a few notions. First, \emph{rigid terms}, 
\ie the variation over \full inert terms where the arguments of the head variable can be whatever term (note that \valESs are not rigid terms). 
\begin{center}
	$\textsc{\Pointed terms} \quad \ptm, \ptmtwo  \grameq  \var \mid \ptm\tm \mid \ptm\esub\var \ptmtwo$
\end{center}
Every (\full) inert term is a \pointed term, but the converse does not hold---consider $\vartwo({\delta\Id})$ where $\delta \defeq \la{\var}{\var\var}$ and $\Id = \la{\varthree}{\varthree}$.

Then we need evaluation contexts for the external strategy $\toess$. The base case is given by the open rewriting rules (themselves defined via a closure by open contexts, see \Cref{sect:vsc}), which are then closed by \emph{external (evaluation) contexts}, 
defined by mutual induction with \emph{rigid contexts}. 
\begin{center}
		$\begin{array}{r@{\quad} r@{\ } llcc}
		\textsc{External ctxs} & \extctx & \grameq \ctxhole \mid \la\var \extctx \mid \tm\esub\var \ictx \mid \extctx \esub\var \ptm \mid \ictx 
		\\
		\textsc{Rigid ctxs} & \ictx & \grameq \ptm \extctx \mid \ictx \tm \mid \ictx\esub\var \ptm \mid \ptm \esub\var\ictx
		\end{array}$
\end{center}
\begin{center}{
		\begin{tabular}{ccc}
\textsc{External rewrite rules:}
			&
			\multirow{2}{*}
			{\begin{prooftree}
					\hypo{\tm \Rew{\wsym a} \tm'}		
					\infer1{\extctxp\tm \Rew{\esssym a} \extctxp{\tm'}}
			\end{prooftree}}
		\\
		($a \in \set{\msym,\esym}$)
	\end{tabular}}
\end{center}

\begin{center}{
		$\begin{array}{cccccc}
\textsc{External reduction:} &
		\tovsubs \, \defeq \, \toms \cup \toes
	\end{array}$}
\end{center}
The strategy 
diverges on $\vartwo (\la\varthree\delta\delta)$ and normalizes the potentially diverging term $(\la\var\vartwo) (\la\varthree\delta\delta) \tovsubs^* \vartwo$. 
The grammars of external and rigid contexts allow evaluation
to enter only inside non-applied abstractions, \eg $(\la\var(\Id\Id)) \val \not\tovsubs 
(\la\var(\vartwo\esub\vartwo\Id)) \val$. 
This is a sort of outside-in order\footnote{More precisely, the strategy 
respects the box order of \cite{DBLP:conf/popl/AccattoliBKL14}, but according to how boxes are used by the \cbv 
translation into linear logic, which is different from the one underlying \cite{DBLP:conf/popl/AccattoliBKL14}. Also, 
the box order is incomparable with respect to the \emph{least level} order of the linear logic literature.}
which 
is neither left-to-right nor right-to-left---we have both $(\Id\Id) (\Id\Id)\toms (\vartwo\esub\vartwo\Id) (\Id\Id)$ 
and $(\Id\Id) (\Id\Id)\toms (\Id\Id)(\vartwo\esub\vartwo\Id)$---since open contexts do not impose an order on applications. As just showed the strategy is non-deterministic: another example is given by $\tm = \var 
(\la\vartwo(\Id \Id)) \esub\var{\var 
(\Id\Id)} \toms \var (\la\vartwo \varthree\esub\varthree \Id) \esub\var{\var (\Id\Id)}$, and $\tm \toms \var 
(\la\vartwo(\Id \Id)) \esub\var{\var(\varthree\esub\varthree{\Id})}$. 
Such a behavior however is only a relaxed form of determinism, as it satisfies the \emph{diamond 
property} (see also the comment after \Cref{prop:properties-open-reduction}).

\begin{proposition}[Properties of $\tovsubs$]
	\label{prop:external-properties}
	\label{prop:vsc-diamond}
	\NoteProof{propappendix:vsc-diamond}
	\label{l:fullness}
	Let $\tm$ be a VSC term.
	\begin{enumerate}
		\item\label{p:external-properties-diamond} 
		$\tovsubs$ is diamond; $\toms$ and $\toes$ strongly commute.	
		\item\label{p:external-properties-fullness} \emph{Fullness}: 
		$\tm$ is $\esssym$-normal if and only if $\tm$ is $\vsub$-normal.
	\end{enumerate}
\end{proposition}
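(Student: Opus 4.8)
The plan is to establish the two points separately, each time leaning on the corresponding fact for the open reduction.

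\emph{Point~\ref{p:external-properties-diamond} (diamond of $\tovsubs$, strong commutation of $\toms$ and $\toes$).} By definition, a $\toms$ step (\resp $\toes$ step) is a $\tomo$ step (\resp $\toeo$ step) fired inside an external context, so the idea is to lift the diamond and strong commutation of the open reduction (\Cref{prop:properties-open-reduction}: $\tovsubo$ is diamond and $\tomo$, $\toeo$ strongly commute) through external contexts. First I would record a few stability facts, proved by straightforward inductions: a \pointed term is never an answer and has no redex at its root (its head is a variable), hence it stays \pointed under any $\tovsub$ step; consequently rigid and external contexts stay rigid, \resp external, when their plugged term is reduced; and two disjoint external redexes of a term can be put into a common two-hole context, both of whose holes remain external after the other one is filled. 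Then, given two coinitial external steps $\tm \Rew{\esssym a_1} \tm_1$ and $\tm \Rew{\esssym a_2} \tm_2$ with $\tm_1 \neq \tm_2$, I would split on the relative positions of the two fired open redexes. If they sit at the same position, the two steps are coinitial open steps under a common external context, and the open diamond (\resp the strong commutation of $\tomo$ and $\toeo$, keeping track of the labels $a_1$, $a_2$) closes the peak, which is then re-plugged into the external context. If they are disjoint, firing one leaves the other as a genuine external step by the stability facts, and the peak closes. No genuinely nested configuration arises: external and rigid contexts descend into an \ES body only when its content is a \pointed term (clause $\extctx\esub\var\ptm$), which therefore cannot be an answer, so an external redex inside such a body never overlaps the root exponential redex of that \ES; and they never descend into an applied abstraction. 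The strong commutation of $\toms$ and $\toes$ falls out of the same analysis, since residuals preserve the $\msym$/$\esym$ label.

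\emph{Point~\ref{p:external-properties-fullness} (fullness).} The direction ``$\vsub$-normal implies $\esssym$-normal'' is immediate: external and open contexts are particular \full contexts and the external rules use the same root rules, so $\tovsubs \subseteq \tovsub$. For the converse I would prove that every $\esssym$-normal term is a \full fireball; since \full fireballs are exactly the $\vsub$-normal terms (\Cref{prop:properties-full-reduction}), this concludes. I would argue by induction on the term, proving simultaneously the auxiliary statement ``a \pointed $\esssym$-normal term is a \full inert term''. The two ingredients are: an $\esssym$-normal term is in particular $\osym$-normal -- taking the empty external context, any open redex of $\tm$ would already be an external redex -- hence an open fireball (\Cref{prop:properties-open-reduction}); and ``inverting'' the external and rigid context grammars shows that the extra positions they can reach (under abstractions via $\la\var\extctx$, in the argument of a \pointed head via $\ptm\extctx$, in an \ES body with \pointed content via $\extctx\esub\var\ptm$, in an \ES content via $\tm\esub\var\ictx$) force the sub-fireballs occurring in the positions that \emph{cannot} be reached to be themselves $\esssym$-normal, hence \full fireballs by the induction hypothesis, while any other shape of those subterms would expose an external redex. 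Putting the cases together, the grammar of $\esssym$-normal terms coincides with that of \full fireballs.

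\emph{Expected main obstacle.} In Point~\ref{p:external-properties-diamond} the delicate part is combinatorial: checking that the precise shapes of the external and rigid grammars leave no overlap of coinitial external redexes outside the ``same position'' and ``disjoint'' cases. In Point~\ref{p:external-properties-fullness} the subtlety is the mismatch between \emph{open} fireballs, where abstractions in argument position may still hide redexes, and \emph{\full} fireballs, where they may not: bridging it needs a strengthened induction hypothesis phrased in terms of terms that are normal for the reductions reachable through rigid contexts, so that the constraints coming from the clauses $\ptm\extctx$ and $\extctx\esub\var\ptm$ can upgrade open fireballs and inert terms to \full ones exactly along the positions where external evaluation is allowed to descend.
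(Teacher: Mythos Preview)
Your Point~\ref{p:external-properties-fullness} is essentially the paper's proof. The paper also does a structural induction with a strengthened simultaneous statement; its auxiliary is ``$\esssym$-normal and not an answer $\Rightarrow$ \full inert'' rather than your ``\pointed and $\esssym$-normal $\Rightarrow$ \full inert'', but for $\osym$-normal terms these coincide (the paper has the lemma ``$\osym$-normal and not an answer $\Rightarrow$ \pointed''). Your preliminary observation that $\esssym$-normal implies $\osym$-normal is implicit in the paper's case analysis.

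For Point~\ref{p:external-properties-diamond} you take a genuinely different route. The paper does a direct structural induction on $\tm$: for each constructor (abstraction, application, \ES) it enumerates the pairs of positions the two steps can take (left/left, left/right, right/right, etc.), and in each sub-case either invokes the \ih, or invokes the open diamond/commutation of \Cref{prop:properties-open-reduction}, or swaps the two steps using the preservation of \pointed terms under $\tovsubo$ and $\tovsubs$ (proved as a separate lemma). Your residual-style argument instead classifies the two root redexes globally as ``same position'' (then open diamond applies under a common external context) or ``disjoint'' (then swap), together with the key observation that no node on an external path is itself a root redex, so no genuine nesting occurs. That observation is correct and is really the conceptual content of the paper's case analysis; what you gain is a cleaner high-level picture. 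What the paper's brute-force induction buys is that it sidesteps the bookkeeping you leave implicit: the external/open decomposition of a step is not unique, so ``same external context versus disjoint'' is not literally a dichotomy on decompositions, and one must argue that every pair of distinct external redexes can be \emph{presented} in one of your two forms. This is true (essentially because the external grammar never crosses a root redex, as you note), but spelling it out is roughly the same amount of work as the paper's case analysis.
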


%
%

\section{Multi Types by Value}
\label{sect:types}

We present a (linear logic based)  \emph{multi type system} for \cbv, inducing the relational model for \cbv.
For Plotkin's \cbv $\l$-calculus, it has been introduced by Ehrhard \cite{DBLP:conf/csl/Ehrhard12}, as the \cbv 
version of de Carvalho's System $\mathsf{R}$ for \cbn \cite{Carvalho07,deCarvalho18}. 

The study of multi types in this section takes the shrinking technique as presented for \cbn in 
\cite{DBLP:journals/pacmpl/AccattoliGK18}, and uses it to lift the semantical study of \ocbv by 
\cite{Guerrieri18,DBLP:conf/aplas/AccattoliG18} to \scbv\footnote{A notable difference: in 
\cite{Guerrieri18,DBLP:conf/aplas/AccattoliG18} there are no base types and termination is characterized using $\emptymset$; here this is not possible because 1) it is incompatible with the shrinking technique, and 2) in \ocbv terminating and erasable terms coincide, while in \scbv they do not (see \Cref{subsect:intro-cbv}.)}.

\smallskip
\paragraph*{Multi Types} There are two layers of types, \emph{linear} and \emph{multi types}, mutually defined by: 
\begin{center}
$\begin{array}{r\colspace r@{\ } ll}
\textsc{Linear types} & \ltype, \ltypetwo &\grameq \ground \mid \larrow{\mtype}{\mtypetwo}
\\
\textsc{Multi types} & \mtype, \mtypetwo &\grameq \mset{\ltype_1, \dots, \ltype_n} \quad n \in \nat
\end{array}$
\end{center}
where $\ground$ is an unspecified ground type and $\mset{\ltype_1, \dots, \ltype_n}$ is our notation for finite 
multisets.
The \emph{empty} multi type $\mset{\,}$ (obtained taking $n = 0$) is also denoted by $\emptymset$. 
A generic (multi or linear) type is denoted by $\type$.
A multi type $\mset{\ltype_1, \dots, \ltype_n}$ has to be intended as a conjunction $\ltype_1 \land \dots \land 
\ltype_n$ of linear types $\ltype_1, \dots, \ltype_n$, for a commutative, associative, non-idempotent conjunction 
$\land$ (morally a tensor $\otimes$), whose neutral element~is~$\emptymset$.

The intuition is that a linear type corresponds to a single use of a term $\tm$, and that $\tm$ is typed with a 
multiset 
$\mtype$ of $n$ linear types if it is going to be used (at most) $n$ times. The  meaning of \emph{using a term} is not 
easy to define precisely. Roughly, it means that if $\tm$ is part of a larger term $\tmtwo$, then (at most) $n$ copies 
of $\tm$ shall end up in evaluation position during the evaluation of $\tmtwo$. More precisely, the $n$ copies shall 
end 
up in evaluation positions where \mbox{they are applied to some terms.}

\begin{figure*}[t!]
	\begin{center}
	\scalebox{0.95}{
	$
			{\begin{prooftree}[label separation = .2em]
					\infer0
					[\scriptsize$\ruleAx$]
					{\var \hastype [\ltype] \vdash \var \hastype \ltype}
			\end{prooftree}}
			\quad \
			{\begin{prooftree}[separation=1em, label separation = .2em]
					\hypo{\typctx \vdash \tm \hastype \mset{ \larrow{\mtype\!}{\!\mtypetwo} }}
					\hypo{\typctxtwo \vdash \tmtwo \hastype \mtype}
					\infer2[\scriptsize$\ruleAp$]
					{\typctx \uplus \typctxtwo \vdash \tm\tmtwo \hastype \mtypetwo}
			\end{prooftree}}
			\quad	\
			{\begin{prooftree}[label separation = .2em]
					\hypo{\tyjp{}{\tm}{\typctx, \var \hastype \mtype}{\mtypetwo}}
					\infer1[\scriptsize$\ruleFun$]
					{\tyjp{}{\la{\var}{\tm}}{\typctx}{\ty{\mtype}{\mtypetwo}}}
			\end{prooftree}}
			\quad \
			{\begin{prooftree}[separation=1em, label separation = .2em]
					\hypo{\typctx, \var \hastype \mtype \vdash \tm \hastype \mtypetwo}
					\hypo{\typctxtwo \vdash \tmtwo \hastype \mtype}
					\infer2
					[\scriptsize$\ruleES$]
					{\typctx \uplus \typctxtwo \vdash \tm \esub\var\tmtwo \hastype \mtypetwo}
			\end{prooftree}}
			\quad \
			{\begin{prooftree}[separation=1em, label separation = .2em]
				\hypo{\left[\tyjp{}{\tval}{\typctx_{i}}{\ltype_{i}}\right]_{\iI}}
				\infer1
				[\scriptsize$\ruleMany$]
				{\tyjp{}{\tval}{\biguplus_{\iI}\typctx_{i} }{\mult{\ltype_{i}}_{\iI}}}
				\end{prooftree}}
$
	}
\end{center}
	\caption{Call-by-Value Multi Type System. In the rule $\ruleMany$, $\tval$ is a \emph{theoretical value}, \ie a variable or an abstraction.}
	\label{fig:cbvtypes}
\end{figure*}		
The derivation rules for the multi types system are in \Cref{fig:cbvtypes} (explanation follows).  The rules are the same as in Ehrhard \cite{DBLP:conf/csl/Ehrhard12}, up to the fact that they are extended to \ES. 

A \emph{multi} (\resp \emph{linear}) \emph{judgment} has the shape $\typctx \vdash \tm \hastype \type$ where $\tm$ is a term, $\type$ is a 
multi (\resp linear) type and $\typctx$ is a \emph{type context}, that is, a total function from variables to multi types 
such that  the set $\domain{\typctx} \defeq \{\var \mid \typctx(\var) \neq \emptymset\}$ is finite.

\paragraph*{Technicalities about Types} The type context $\typctx$ is \emph{empty} if $\dom{\typctx} = \emptyset$.  
\emph{Multi-set sum} $\mplus$ is extended to type contexts point-wise,
\ie\  $(\typctx \mplus \typctxtwo)(\var) \defeq \typctx(\var) \mplus \typctxtwo(\var)$ for each variable $\var$.
This notion is extended to a finite family of type contexts as expected, 
in particular $\bigmplus_{i \in J\!} \typctx_i$ is the empty context  when $J = \emptyset$.
A type context $\typctx$ is denoted by $\var_1 \hastype \mtype_1, \dots, \var_n \hastype \mtype_n$ (for some $n \in 
\nat$) if $\dom{\typctx} \subseteq \{\var_1, \dots, \var_n\}$ and $\typctx(\var_i) = \mtype_{i}$ for all $1 \leq i \leq 
n$.
Given two type contexts $\typctx$ and $\typctxtwo$ such that $\dom{\typctx} \cap \dom{\typctxtwo} = \emptyset$, the 
type 
context $\typctx, \typctxtwo$ is defined by $(\typctx, \typctxtwo)(\var) \defeq \typctx(\var)$ if $\var \in 
\dom{\typctx}$, $(\typctx, \typctxtwo)(\var) \defeq \typctxtwo(\var)$ if $\var \in \dom{\typctxtwo}$, and $(\typctx, 
\typctxtwo)(\var) \defeq \emptymset$ otherwise.
Note that $\typctx, \var \hastype \emptymset = \typctx$, where we implicitly assume $\var \notin \dom{\typctx}$. 

We write $\concl{\tderiv}{\typctx}{\tm}{\mtype}$ if $\tderiv$ is a (\emph{type}) \emph{derivation} (\ie a tree constructed using the rules in \Cref{fig:cbvtypes}) with conclusion the multi judgment $\typctx \vdash \tm \hastype \mtype$.
In particular,  we write $\concl{\tderiv}{\,}{\tm}{\mtype}$ when $\typctx$ is empty.
We write $\derive{\tderiv}{\tm}$ if $\concl{\tderiv}{\typctx}{\tm}{\mtype}$ for some type context $\typctx$ and some multi type $\mtype$. 
Our study being quantitative, we use two notions of size of type derivations.

\begin{definition}[Derivation size(s)]
\label{def:two-sizes}
	Let $\tderiv$ be a derivation. 
	The \emph{(general) size} $\size{\tderiv}$ of $\tderiv$ is the number of 
	rule occurrences in $\tderiv$ except for the rule $\ruleMany$.
	The \emph{multiplicative size} $\sizem{\tderiv}$ of $\tderiv$ is the number of occurrences of the rules $\lambda$ and 
$\ruleAp$ in $\tderiv$.
\end{definition}
The two sizes for a derivation play different qualitative and quantitative roles. \emph{Qualitative}: to have a 
combinatorial proof of the characterization of normalizable terms, we need a measure that 
decreases for some kinds of $\tovsub$ steps; 
this role is played by the general size $\size{\!\cdot\!}$.
	\emph{Quantitative}: to count the number of $\tom$ step in 
	some kinds of $\vsub$-normalizing evaluation, \ie our cost model;
	his role is played by the multiplicative size $\sizem{\!\cdot\!}$.

\paragraph*{Multisets and the $\ruleMany$ rule}
The rule $\ruleMany$ plays a crucial role.
It can be applied to theoretical values only (\ie variables and abstractions) 
and it says how many ``copies'' (possibly none) of a theoretical value occurring in a term $\tm$ are needed to evaluate $\tm$.
It corresponds to the promotion rule of linear logic,
which, in the \cbv representation of the $\lambda$-calculus, is indeed used for typing
abstractions and variables.

\paragraph*{Some preliminary results}
The two next lemmas establish a key feature of this type system: in a typed term $\tm$,
substituting a value for
a variable (as in the exponential step) preserves the type of $\tm$ and ``consumes'' the multi type of the variable.
Besides, it also provides  quantitative information about the type derivation for $\tm$ before and after the substitution.

\begin{lemma}[Substitution]
	\label{l:substitution}	
	\NoteProof{lappendix:substitution}
	Let $\tm$ be a term, $\val$ be a value and $\namedtyjp{\tderiv}{}{\tm}{\typctx, \var \hastype 
		\mtypetwo}{\mtype}$ and $\namedtyjp{\tderivtwo}{}{\val}{\typctxtwo}{\mtypetwo}$ be derivations.
	Then there is a derivation $\namedtyjp{\tderivthree}{}{\tm \isub{\var}{\val}}{\typctx \mplus \typctxtwo}{\mtype}$ 
	with $\sizem{\tderivthree} = \sizem{\tderiv} + \sizem{\tderivtwo}$ and $\size{\tderivthree} \leq \size{\tderiv} + 
	\size{\tderivtwo}$. 
\end{lemma}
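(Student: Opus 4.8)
The plan is to prove the Substitution Lemma by induction on the structure of the derivation $\tderiv$ of $\typctx, \var \hastype \mtypetwo \vdash \tm \hastype \mtype$, following the usual pattern for substitution lemmas in non-idempotent intersection type systems. The crucial bookkeeping is that the type context $\typctxtwo$ for the value $\val$ is determined by a $\ruleMany$ rule collecting together one sub-derivation of $\val$ for each linear type in $\mtypetwo = \mset{\ltype_1,\dots,\ltype_n}$, and each occurrence of $\var$ in $\tm$ appears in an axiom $\var \hastype [\ltype_i] \vdash \var \hastype \ltype_i$; the substitution replaces each such axiom by the corresponding sub-derivation of $\val$ of type $\ltype_i$. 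Before the induction I would record an \emph{axiom-counting observation}: the number of axiom leaves in $\tderiv$ typing the variable $\var$ is exactly $n = |\mtypetwo|$, because multiset sums in rules $\ruleAp$, $\ruleES$, $\ruleMany$ partition the context contributions additively, so these axioms are in bijection with the components of $\mtypetwo$. This is what makes $\typctx\mplus\typctxtwo$ come out right: the $n$ axiom contributions $[\ltype_i]$ to $\var$ disappear, and in their place the $n$ contexts $\typctx^{(i)}_2$ (with $\biguplus_i \typctx^{(i)}_2 = \typctxtwo$) are spliced in.

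The induction proceeds by cases on the last rule of $\tderiv$. For $\ruleAx$ there are two sub-cases: if the term is $\var$ itself, then $\mtype = [\ltype_i]$ for a single $\ltype_i$, $\typctx$ is empty on the relevant variable, and $\tderivthree$ is just the sub-derivation of $\val$ of type $\ltype_i$ given (via $\ruleMany$) inside $\tderivtwo$; the size identities hold because that sub-derivation has multiplicative size equal to its portion and $\sizem\tderiv = 0$ on an axiom. If the term is a variable $y \neq \var$, substitution is vacuous and the statement is immediate. For $\ruleAp$, $\ruleES$, and $\ruleFun$ I would split $\mtypetwo$ and $\typctxtwo$ according to how the $\var$-axioms distribute between the premises (using the axiom-counting observation applied to each premise), apply the induction hypothesis to each premise with the matching piece of $\mtypetwo$ and $\typctxtwo$, and reassemble with the same rule; the multiplicative size adds up exactly because $\ruleAp$ and $\ruleFun$ each contribute $1$ to both sides and $\ruleES$ contributes $0$, while the general-size inequality $\le$ is preserved under sums (with possible slack because $\ruleMany$ rules in $\tderivtwo$ are not counted but may be duplicated). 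For $\ruleMany$, where $\tm$ is a theoretical value $\tval$, I would apply the induction hypothesis to each of the premises $\typctx_j, \var\hastype\mtypetwo_j \vdash \tval \hastype \ltype_j$ (after distributing $\mtypetwo$ and $\typctxtwo$ over the $j$'s) and recombine with $\ruleMany$; since $\ruleMany$ is invisible to $\size{\cdot}$ and $\sizem{\cdot}$, the size relations pass through unchanged.

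The main obstacle I anticipate is not any single case but the precise handling of the \emph{distribution step}: when $\mtypetwo = \mset{\ltype_1,\dots,\ltype_n}$ is split across the premises of a branching rule, one must simultaneously split $\typctxtwo$ consistently, and this requires knowing that $\tderivtwo$'s final rule is $\ruleMany$ (or, in the degenerate case $\mtypetwo = [\ltype]$, possibly directly $\ruleAx$ or $\ruleFun$ followed by the implicit singleton $\ruleMany$) so that $\typctxtwo = \biguplus_{i=1}^n \typctx^{(i)}_2$ with $\typctx^{(i)}_2$ the context of the $i$-th copy of $\val$. I would therefore first normalize the shape of $\tderivtwo$: since $\val$ is a value and its judgment has a \emph{multi} type $\mtypetwo$, its last rule must be $\ruleMany$, yielding exactly this decomposition. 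Once that structural fact is in hand, the induction is routine, and the inequality $\size{\tderivthree} \le \size\tderiv + \size\tderivtwo$ rather than an equality is precisely the slack absorbing the fact that copies of $\val$ may share or duplicate $\ruleMany$-internal structure that $\size{\cdot}$ ignores while the two outer sizes are additive.
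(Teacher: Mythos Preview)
Your strategy is essentially the paper's: case-analyse the structure of $\tderiv$, and at each branching rule split $\tderivtwo$ along the decomposition $\mtypetwo = \mtypetwo_1 \mplus \mtypetwo_2$ using the fact that the last rule of any multi judgment for a value must be $\ruleMany$. The paper packages this last observation as an explicit \emph{splitting lemma} for value derivations and inducts on the \emph{term} $\tm$ rather than on $\tderiv$.

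Two small points where your write-up drifts. First, by inducting on the derivation you treat $\ruleAx$ and $\ruleFun$ (which conclude \emph{linear} judgments) as separate cases and in the $\ruleMany$ case you apply the induction hypothesis to linear premises; but the lemma is stated for multi judgments, so this silently requires generalising the statement to linear judgments. The paper avoids this by inducting on $\tm$ and handling $\ruleMany$ together with the $\ruleAx$ or $\ruleFun$ sitting just above it (the variable and abstraction cases, respectively). Second, your diagnosis of the slack in $\size{\tderivthree} \le \size\tderiv + \size\tderivtwo$ is off: it does not come from $\ruleMany$-structure but from the $n = \card{\mtypetwo}$ axiom leaves for $\var$ in $\tderiv$, each of which contributes $1$ to $\size\tderiv$ and simply vanishes in $\tderivthree$ (being replaced by a piece of $\tderivtwo$ whose size is already accounted for in $\size{\tderivtwo}$). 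Neither point breaks the proof, but both are worth tightening.
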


\begin{lemma}[Anti-substitution]
	\label{l:anti-substitution}
	\NoteProof{lappendix:anti-substitution}
	Let $\tm$ be a term, $\val$ be a value, and 
	$\namedtyjp{\tderiv}{}{\tm\isub{\var}{\val}}{\typctx}{\mtype}$
	be a derivation. 
	Then there are two  derivations $\namedtyjp{\tderivtwo}{}{\tm}{\typctxtwo, \var \hastype 
		\mtypetwo}{\mtype}$ 
	and $\namedtyjp{\tderivthree}{}{\val}{\typctxthree}{\mtypetwo}$ such that $\typctx = \typctxtwo \mplus \typctxthree$ 
	with $\sizem{\tderiv} = \sizem{\tderivtwo} + \sizem{\tderivthree}$ and $\size{\tderiv} \leq \size{\tderivtwo} + 
	\size{\tderivthree}$.
\end{lemma}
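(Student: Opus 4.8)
The plan is to prove the statement by structural induction on the term $\tm$, each time inverting the last rule(s) of $\tderiv$ according to the top constructor of $\tm$. Up to $\alpha$-equivalence I would assume that every variable bound in $\tm$ is distinct from $\var$ and does not occur in $\fv{\val}$, so that $\tm\isub\var\val$ has the same top constructor as $\tm$, the sole exception being $\tm = \var$. I would also dispatch once the easy situation $\var \notin \fv{\tm}$: then $\tm\isub\var\val = \tm$, so one takes $\mtypetwo \defeq \emptymset$, $\tderivtwo \defeq \tderiv$ (legitimate since $\var \notin \dom{\typctx}$, whence $\typctx, \var \hastype \emptymset = \typctx$), and $\tderivthree$ the derivation of $\vdash \val \hastype \emptymset$ given by $\ruleMany$ with no premises; the context identity is trivial and both size relations hold because $\tderivthree$ has null general and multiplicative size.

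For the base case $\tm = \var$ we have $\tm\isub\var\val = \val$, so $\tderiv$ already derives $\typctx \vdash \val \hastype \mtype$. Then I take $\typctxtwo$ empty, $\mtypetwo \defeq \mtype$, $\tderivthree \defeq \tderiv$, and let $\tderivtwo$ be the derivation of $\var \hastype \mtype \vdash \var \hastype \mtype$ obtained by applying $\ruleMany$ to $n$ axioms, one per element of $\mtype$ (with $n = 0$ allowed). The identity $\typctx = \typctxtwo \mplus \typctxthree$ holds, and $\sizem{\tderivtwo} = 0$ gives $\sizem{\tderiv} = \sizem{\tderivtwo} + \sizem{\tderivthree}$; finally $\size{\tderivtwo} = n \ge 0$, which is exactly why only the inequality $\size{\tderiv} \le \size{\tderivtwo} + \size{\tderivthree}$ can be stated in general.

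The inductive cases concern the compound constructors and use inversion of the rules of \Cref{fig:cbvtypes}, together with a \emph{value merge} fact (below). If $\tm = \tmtwo\tmthree$, then $\tderiv$ ends with $\ruleAp$ and its two premises type $\tmtwo\isub\var\val$ and $\tmthree\isub\var\val$; applying the i.h. to $\tmtwo$ and to $\tmthree$ yields derivations of $\tmtwo$ and $\tmthree$ (with $\var$ assigned some multi types $\mtype_1$ and $\mtype_2$) together with two derivations of $\val$ (with types $\mtype_1$ and $\mtype_2$), which I recombine by a single $\ruleAp$ on the term side and by a value merge of the two value-derivations, taking $\mtypetwo \defeq \mtype_1 \mplus \mtype_2$. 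The case $\tm = \tmtwo\esub\vartwo\tmthree$ is identical with $\ruleES$ in place of $\ruleAp$. The case $\tm = \la\vartwo\tmtwo$ also needs to first decompose the final $\ruleMany$ of $\tderiv$ into a family of linear-type derivations, each ending with $\ruleFun$; I then apply the i.h. to the respective $\ruleFun$-premises, rebuild each $\ruleFun$, and re-collect by one $\ruleMany$ on the term side and a value merge on the value side. In all these cases the context identity $\typctx = \typctxtwo \mplus \typctxthree$ propagates by associativity and commutativity of $\mplus$, and the two sizes propagate additively, using that $\ruleMany$ and $\ruleAx$ weigh $0$ for $\sizem{\cdot}$ while $\ruleMany$ weighs $0$ for $\size{\cdot}$, and that the lone occurrence of the rule removed by inversion is reintroduced at the end, so its contribution cancels.

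The only auxiliary fact needed is the \emph{value merge}: for a value $\val$, from derivations of $\typctx_1 \vdash \val \hastype \mtype_1$ and $\typctx_2 \vdash \val \hastype \mtype_2$ one obtains a derivation of $\typctx_1 \mplus \typctx_2 \vdash \val \hastype \mtype_1 \mplus \mtype_2$ whose general size is the sum of the two general sizes and whose multiplicative size is the sum of the two multiplicative sizes. This holds because any derivation typing a value with a multi type ends in $\ruleMany$, so the two given derivations are $\ruleMany$ applied to two finite families of linear-type derivations of $\val$; concatenating the families and re-applying $\ruleMany$ gives the merged derivation, with additive sizes since $\ruleMany$ is weightless for both measures. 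I expect the main friction to be purely bookkeeping — threading the context sums and the two size relations through each recombination, and making sure $\var$ is never left in the non-$\var$ part of a reconstructed type context (it is always absorbed into $\mtypetwo$) — rather than any conceptual difficulty: the statement is the exact inverse of \Cref{l:substitution}, and the inequality on general size is forced precisely by the extra axioms introduced when undoing a substitution at an occurrence of $\var$.
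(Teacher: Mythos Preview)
Your proposal is correct and follows essentially the same approach as the paper: structural induction on $\tm$, inversion of the last rule(s) of $\tderiv$, application of the inductive hypothesis to the sub-derivations, and recombination via a value-merge fact (which is exactly the paper's \Cref{l:typing-value-complete}.\ref{p:typing-value-complete-merge}). The only organizational difference is that you dispatch the case $\var\notin\fv\tm$ once up front, whereas the paper handles it only for the variable sub-case $\tm=\varthree\neq\var$ and lets the inductive hypothesis take care of it elsewhere; this is cosmetic.
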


\cref{l:substitution,l:anti-substitution} are needed to prove subject reduction and expansion, respectively, which mean that the type can be preserved after 
any reduction and expansion step. 

\begin{proposition}[Qualitative subjects]
	\label{prop:qual-subject}
	\NoteProof{propappendix:qual-subject}
	Let $\tm \tovsub \tm'$.
	\begin{enumerate}
		\item \label{p:qual-subject-reduction}
		\emph{Reduction}: 	if $\namedtyjp{\tderiv}{}{\tm}{\typctx}{\mtype}$ then there is a derivation $\namedtyjp{\tderiv'}{}{\tm'}{\typctx}{\mtype}$ such that
		$\sizem{\tderiv'} \leq \sizem{\tderiv}$ and $\size{\tderiv'} \leq \size{\tderiv} $.
		
		\item \label{p:qual-subject-expansion}
		\emph{Expansion}: if $\namedtyjp{\tderiv'}{}{\tm'}{\typctx}{\mtype}$ then there is a derivation $\namedtyjp{\tderiv}{}{\tm}{\typctx}{\mtype}$ such that
		$\sizem{\tderiv'} \leq \sizem{\tderiv}$ and $\size{\tderiv'} \leq \size{\tderiv} $.
	\end{enumerate}
\end{proposition}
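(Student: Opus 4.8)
The plan is to prove the two statements in the order they are given, using subject reduction and subject expansion as the core of the argument, both of which will be derived from the Substitution and Anti-substitution lemmas plus the contextual closure of $\tovsub$. Since $\tovsub$ is the closure under full contexts $\fctx$ of the root rules $\rootRew{\msym}$ and $\rootRew{\esym}$, the natural structure is: first a \emph{root lemma} for each rule (no context), then a straightforward induction on $\fctx$ to lift each root lemma to arbitrary steps.

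First I would handle \textbf{subject reduction}. For the multiplicative root step $\sctxp{\la\var\tm}\tmtwo \rtom \sctxp{\tm\esub\var\tmtwo}$, I inspect the shape of a derivation $\tderiv$ typing the redex: it ends with a $\ruleAp$ rule whose left premise types $\sctxp{\la\var\tm}$ with $\mset{\larrow\mtype\mtypetwo}$ and whose right premise types $\tmtwo$ with $\mtype$. Pushing the substitution context $\sctx$ out and unfolding the $\ruleFun$ rule inside the left premise (possibly after traversing some $\ruleES$ rules of $\sctx$ and one $\ruleMany$ with a singleton index set), I can rearrange the same sub-derivations into a derivation of $\sctxp{\tm\esub\var\tmtwo}$ ending with the $\ruleES$ rules of $\sctx$ and one $\ruleES$ for $\esub\var\tmtwo$, reusing the premise typing $\tm$ with the context extended by $\var\hastype\mtype$ and the premise typing $\tmtwo$ with $\mtype$. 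This is a pure rearrangement that turns a $\ruleFun$/$\ruleAp$ pair into an $\ruleES$, so $\sizem{\!\cdot\!}$ stays the same and $\size{\!\cdot\!}$ too (a $\ruleFun$ and a $\ruleAp$ are replaced by a single $\ruleES$ and a $\ruleFun$ that was underneath the $\ruleMany$ disappears — I will need to double-check the exact count, but it does not increase). For the exponential root step $\tm\esub\var{\sctxp\val} \rtoe \sctxp{\tm\isub\var\val}$, the derivation ends with an $\ruleES$ rule; its right premise types $\sctxp\val$ with $\mtypetwo$, which by pushing out $\sctx$ gives a derivation of $\val$ with $\mtypetwo$, and the left premise types $\tm$ with $\typctx,\var\hastype\mtypetwo$. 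Then \Cref{l:substitution} yields a derivation of $\tm\isub\var\val$ with the right type context and exactly $\sizem{\tderiv'} = \sizem{\tderiv_{\mathrm{left}}} + \sizem{\tderiv_{\mathrm{right}}} \le \sizem\tderiv$ and $\size{\tderiv'} \le \size{\tderiv}$, reattaching the $\ruleES$ rules of $\sctx$ around it. The contextual closure is then an easy induction on $\fctx$: each rule ($\ruleAp$, $\ruleFun$, $\ruleES$) has the inductive hypothesis applied to the premise typing the subterm containing the redex, and the other premises and the rule itself are left untouched, so both size measures are monotone.

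\textbf{Subject expansion} is symmetric. For the multiplicative root step, a derivation of $\sctxp{\tm\esub\var\tmtwo}$ ending with $\ruleES$ rules is rearranged back into one of $\sctxp{\la\var\tm}\tmtwo$ ending with $\ruleAp$ over a $\ruleFun$ wrapped in $\ruleMany$ with a singleton family; again a rearrangement with controlled sizes. For the exponential root step, a derivation of $\sctxp{\tm\isub\var\val}$ is fed to \Cref{l:anti-substitution}, producing derivations of $\tm$ with context $\typctxtwo,\var\hastype\mtypetwo$ and of $\val$ with $\mtypetwo$, which are combined with $\ruleES$ and the $\ruleES$ rules of $\sctx$ into a derivation of $\tm\esub\var{\sctxp\val}$; the lemma gives $\sizem{\tderiv} = \sizem{\tderiv'_{\mathrm{left}}} + \sizem{\tderiv'_{\mathrm{right}}}$ and $\size{\tderiv} \le \sizem{\text{(sum)}}$, whence the stated inequalities. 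The contextual closure is the same induction as before.

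The main obstacle I anticipate is the bookkeeping in the multiplicative case: correctly threading the substitution context $\sctx$ through the derivation, and, above all, tracking where the $\ruleMany$ rule sits. Since values (abstractions) are always typed via $\ruleMany$ over a family of $\ruleFun$-derivations, the $\larrow\mtype\mtypetwo$ consumed by the $\ruleAp$ must come from a \emph{single} index of a $\ruleMany$, so I must argue that the relevant $\ruleMany$ has a singleton index set in that position (or split it off), and check that the resulting derivation stays well-formed with respect to the $\mplus$ of the split contexts. The exponential cases, by contrast, are almost entirely delegated to \Cref{l:substitution,l:anti-substitution}, so the only care there is reattaching $\sctx$ and observing that $\size{\!\cdot\!}$ and $\sizem{\!\cdot\!}$ of the outer $\ruleES$ layers are unaffected. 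Everything else is routine induction with monotone size measures.
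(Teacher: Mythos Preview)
Your approach is essentially the paper's: root lemmas for $\rtom$ and $\rtoe$ (the latter via \Cref{l:substitution}/\Cref{l:anti-substitution}), then induction on the full context $\fctx$; indeed the paper literally says the proof is the open one (\Cref{prop:weak-subject-reduction}/\Cref{prop:weak-subject-expansion}) plus one extra context case.

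One point you gloss over, and it is precisely the extra case the paper isolates: when $\fctx = \la\var\fctxtwo$, the derivation for $\la\var\fctxtwop\tmtwo$ does \emph{not} end with a single $\ruleFun$ but with $\ruleManyVal$ over $n$ premises, each a $\ruleFun$ over a derivation $\tderivtwo_i$ of the body. You must apply the \ih to \emph{each} $\tderivtwo_i$ and sum; this is fine for the inequalities (and is the reason only $\leq$ holds here, since $n$ may be $0$). Your phrasing ``each rule ($\ruleAp$, $\ruleFun$, $\ruleES$) has the inductive hypothesis applied to the premise'' hides this. Also, your size bookkeeping in the multiplicative root case is off: one $\ruleFun$ and one $\ruleAp$ disappear and one $\ruleES$ appears, so $\sizem{\cdot}$ drops by $2$ and $\size{\cdot}$ by $1$ (the paper's open lemma records exactly this); your hedged conclusion ``does not increase'' is what the qualitative statement needs, but the actual count is a strict decrease.
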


\Cref{prop:qual-subject} says also that the derivation sizes cannot increase (\resp decrease) after a reduction (\resp expansion) step.
However, they are not precise enough to be used for a combinatorial proof of normalization, or to extract quantitative information about the evaluation length.
There are two ways to make them precise, 
using always the \emph{same} type system: we restrict either the reductions, or the kind of types taken into account.
We shall explore both directions in the next sections.

\paragraph*{Denotational Semantics} 
The multi type system induces a denotational model for \VSC, called \emph{relational semantics}.
The semantics of a term is simply the set of its derivable type judgments. 
More precisely,
let $\tm$ be a term and $\var_1, \dots , \var_n$ (with $n \geq 0$) be pairwise distinct variables. 
If
$\fv{\tm} \subseteq \{\var_1, \dots, \var_n\}$, the list $\vec{\var} = (\var_1 , \dots , \var_n)$ is \emph{suitable} for $\tm$. 
If
$\vec{\var} = (\var_1,\dots, \var_n)$ is suitable for $\tm$, the (\emph{relational}) \emph{semantics} of $\tm$ for $\vec{\var}$ is:
\begin{align*}
\sem{\tm}_{\vec{\var}} = \{((\mtypetwo_1, \dots, \mtypetwo_n), \mtype) \mid \exists \, \concl{\tderiv}{\var_1 \hastype \mtypetwo_1, \dots, \var_n \hastype \mtypetwo_n}{\tm}{\mtype} \}.
\end{align*}

Subject reduction and expansion (\Cref{prop:qual-subject}) guarantee that the semantics of a term is invariant under evaluation, \ie relational semantics is a denotational model for \VSC.

\begin{proposition}[Invariance under evaluation]
	\label{prop:invariance}
	\NoteProof{propappendix:invariance}
	Let $\vec{\var}$ be suitable for $\tm$. 
	If $\tm \tovsub^* \tm'$ then $\sem{\tm}_{\vec{\var}} = \sem{\tm'}_{\vec{\var}}$.
\end{proposition}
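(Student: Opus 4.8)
The plan is to reduce the statement to a single-step version and then invoke subject reduction and expansion (\Cref{prop:qual-subject}). First I would observe that it suffices to prove: if $\tm \tovsub \tm'$ and $\vec\var$ is suitable for $\tm$, then $\sem{\tm}_{\vec\var} = \sem{\tm'}_{\vec\var}$. The full statement then follows by a straightforward induction on the length of the reduction sequence $\tm \tovsub^* \tm'$: the base case ($\tm = \tm'$) is trivial, and in the inductive step $\tm \tovsub \tmtwo \tovsub^* \tm'$ one uses the single-step equality $\sem{\tm}_{\vec\var} = \sem{\tmtwo}_{\vec\var}$ and then the induction hypothesis on $\tmtwo \tovsub^* \tm'$, for which $\vec\var$ is still suitable since reduction does not create free variables (one should note, or prove by a trivial induction on reduction, that $\fv{\tmtwo} \subseteq \fv{\tm}$, so suitability is preserved).

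For the single-step case, I would prove the two inclusions separately. For $\sem{\tm}_{\vec\var} \subseteq \sem{\tm'}_{\vec\var}$: take $((\mtypetwo_1,\dots,\mtypetwo_n),\mtype) \in \sem{\tm}_{\vec\var}$, so there is a derivation $\concl{\tderiv}{\var_1\hastype\mtypetwo_1,\dots,\var_n\hastype\mtypetwo_n}{\tm}{\mtype}$. By subject reduction (\Cref{prop:qual-subject}\eqref{p:qual-subject-reduction}) applied to $\tm \tovsub \tm'$, there is a derivation $\concl{\tderiv'}{\var_1\hastype\mtypetwo_1,\dots,\var_n\hastype\mtypetwo_n}{\tm'}{\mtype}$ with the same conclusion (the type context and type are literally unchanged in the statement of subject reduction), hence $((\mtypetwo_1,\dots,\mtypetwo_n),\mtype) \in \sem{\tm'}_{\vec\var}$. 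The reverse inclusion $\sem{\tm'}_{\vec\var} \subseteq \sem{\tm}_{\vec\var}$ is symmetric, using subject expansion (\Cref{prop:qual-subject}\eqref{p:qual-subject-expansion}) instead.

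There is essentially no hard part here: the proposition is a direct corollary of \Cref{prop:qual-subject}, and the only points requiring any care are the bookkeeping ones, namely that $\vec\var$ remains suitable along the reduction (free variables do not grow) and that the relational semantics $\sem\cdot_{\vec\var}$ is defined precisely as the set of type-context/type pairs appearing in conclusions of derivations, so that the preservation of the conclusion judgment given by \Cref{prop:qual-subject} translates verbatim into equality of these sets. One subtlety worth flagging explicitly is the role of the list $\vec\var$ itself: since $\sem\tm_{\vec\var}$ records the multi types assigned to the fixed variables $\var_1,\dots,\var_n$ in that order, and \Cref{prop:qual-subject} preserves the \emph{entire} type context $\typctx$ and not just its restriction to $\fv\tm$, the components $\mtypetwo_i$ (including the empty ones $\emptymset$ for variables in $\vec\var$ but not in $\fv\tm$) are matched exactly on both sides — so no reindexing or padding argument is needed. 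I would keep the write-up to a few lines, stating the single-step lemma, the induction on length, and the two symmetric inclusions from subject reduction and expansion.
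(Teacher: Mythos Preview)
Your proposal is correct and matches the paper's own proof essentially line for line: reduce to the single-step case, then obtain the two inclusions from subject reduction and subject expansion (\Cref{prop:qual-subject}). The only difference is that you spell out the induction on the length of the reduction and the preservation of suitability of $\vec\var$, while the paper leaves these implicit with a bare ``it is enough to show''.
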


\paragraph*{A Uniform Approach}
In \Cref{sect:open,sect:shrinking}, we use the \emph{same} given multi type system to characterize normalization of two different evaluation strategies in \VSC, namely open  $\tovsubo$ (defined in \Cref{sect:vsc}) and external $\tovsubs$ (defined in \Cref{sect:external}), verifying $\tovsubo \,\subsetneq\,  \tovsubs$. In both case proving that, given an evaluation $\to_\alpha$, a term is $\alpha$-normalizing if and only if is typable, and extracting bounds on the evaluation length from type derivations (kind 1). 

The two cases follow the same pattern, whose blueprint is given by the simpler open case. 
Each case shall use some predicates (inert, tight, unitary, shrinking)  to isolate the derivations characterizing termination or providing the right bounds for that case, often needing more than one predicate, for different things. 
The predicates shall work by restricting the types appearing in the final judgment of the derivation---this way isolating fragments of the denotational semantics. 
Sometimes, often for the induction to go through, the predicate concerns only the typing context, or only the right-hand term, and not the whole final judgment.

\paragraph*{The Special Role of Inert Terms} Another recurrent ingredient are (\full) inert terms. In statements about normal forms, they usually satisfy stronger properties, essential for the induction to go through. Predicates shall also \emph{spread} on inert terms: if assumed on the type context, they transfer to the right-hand type, which is in turn the key step to propagate the predicate on sub-derivations. 
For bounds of kind 1, we give the full 
statements. 
For bounds of kind 2 and 3, 
more technical, the full statements are only in the Appendix, for~readability.

\section{Multi Types for Open \cbv}
\label{sect:open}

Here we recall the relationship between \cbv multi types and \ocbv developed by Accattoli and Guerrieri in 
\cite{DBLP:conf/aplas/AccattoliG18}. The reason is threefold. First, the  external case relies on the open one. 
Second, it provides the blueprint for the development of the strong case, allowing us to stress similarities and differences. 
Last, the development in \cite{DBLP:conf/aplas/AccattoliG18} needs to be slightly adapted to our present framework. 
Namely, here we use the Open VSC 
instead of the split fireball calculus used in \cite{DBLP:conf/aplas/AccattoliG18} (another formalism for \ocbv),
and we include a ground type $\ground$, necessary to later deal with the external strategy (for \ocbv the empty multi type $\zero$ would suffice).

Qualitatively, the open evaluation of $\tm$ terminates if and only if $\tm$ is typable.
Since $\tovsubo$ does not reduce under abstractions, every abstraction is $\osym$-normal and hence must be typable: for this reason, $\la{\var}\delta\delta$ is typable with $\emptytype$ (take the derivation only made of one rule $\ruleManyVal$ with $0$ premises), though $\delta\delta$ is not.

Quantitatively, the multiplicative size 
$\sizem{\tderiv}$ of the derivation $\tderiv$ bounds the sum of the length of the open evaluation plus 
the open size of its open normal form. The bound is exact when $\tderiv$ satisfies the \emph{tight} predicate, defined below.

\paragraph*{Inert and Tight Derivations} 
A multi type is \emph{ground} if it is $n\mset{\ground} \defeq \mset{ \ground, \dots, \ground}$  ($n$  times $\ground$) for any $n \geq 0$ (so, $0\mset{\ground} = \emptytype$).
\begin{center}
	$\begin{array}{r r@{\ } ll}
\textsc{Inert multi type}& \imtype &\grameq 	\mset{ \inltype_1, \dots, \inltype_n} & n \geq 0
\\
\textsc{Inert linear type}& \inltype &\grameq \ground \mid \larrow{n\mset{\ground}}{\imtype} & n \geq 0 
\end{array}$
\end{center}
A type context $\var_1 \hastype \mtype_1, \dots, \var_n \hastype \mtype_n$ is \emph{inert} if $\mtype_1, \dots, 
\mtype_n$ are inert multi types. 
Note that 
every ground multi type is inert.

A derivation $\namedtyjp{\tderiv}{}{\tm}{\typctx}{\mtype}$ is \emph{inert} if $\typctx$ is an inert type context and $\mtype$ is an inert multi type. If, moreover, $\mtype$ is a ground multi type, then $\tderiv$ is \emph{tight}.
Note that the definitions of inert and tight derivation depend only on its final judgment.

Tight
derivations are those we are actually interested in, but often, for the induction to go through, we have to consider the wider class of inert derivations.
For instance, tight derivations may have a complex
structure, having sub-derivations for inert terms that might
not be tight, but only inert.
Also, inert types spread on inert terms---the first key property of inert terms.

\begin{lemma}
	[Spreading of inertness on judgments]
	\label{l:spread-inert}
	\NoteProof{lappendix:spread-inert}
	Let $\concl{\tderiv}{\typctx}{\itm}{\mtype}$ be a derivation and $\itm$ be an inert term. 
	If $\typctx$ is a inert type context, then $\mtype$ is a inert multi type (and so $\tderiv$ is inert).
\end{lemma}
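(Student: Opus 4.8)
The plan is to prove the statement by structural induction on the inert term $\itm$, exploiting that the shape of $\itm$ completely determines the last rule of $\tderiv$: since the conclusion is a \emph{multi} judgment, a variable must be typed by $\ruleMany$ (the rule $\ruleAx$ — the only other one applicable to a variable — concludes a \emph{linear} judgment), an application by $\ruleAp$, and an explicit substitution by $\ruleES$. Throughout I would use the elementary observation that inertness of type contexts is preserved by the pointwise splitting induced by $\uplus$: if $\typctx = \typctxtwo \uplus \typctxthree$ is inert then so are $\typctxtwo$ and $\typctxthree$, because each $\typctxtwo(\var)$ and $\typctxthree(\var)$ is a submultiset of the inert multi type $\typctx(\var)$, and any submultiset of an inert multi type is again inert.

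For the base case $\itm = \var$, the derivation $\tderiv$ ends with $\ruleMany$ applied to $n \geq 0$ instances of $\ruleAx$, so $\typctx = \var \hastype \mset{\ltype_1,\dots,\ltype_n}$ and $\mtype = \mset{\ltype_1,\dots,\ltype_n} = \typctx(\var)$; inertness of $\typctx$ immediately gives inertness of $\mtype$. For $\itm = \itmtwo\,\fire$, the derivation ends with $\ruleAp$, with premises $\concl{\tderivtwo}{\typctxtwo}{\itmtwo}{\mset{\larrow{\mtypetwo}{\mtype}}}$ and $\concl{\tderivthree}{\typctxthree}{\fire}{\mtypetwo}$, and $\typctx = \typctxtwo \uplus \typctxthree$. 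Since $\typctxtwo$ is inert and $\itmtwo$ is inert, the induction hypothesis yields that $\mset{\larrow{\mtypetwo}{\mtype}}$ is an inert multi type, hence $\larrow{\mtypetwo}{\mtype}$ is an inert \emph{linear} type; being an arrow, it has the form $\larrow{n\mset{\ground}}{\imtype}$ with $\imtype$ an inert multi type, so $\mtype = \imtype$ is inert. For $\itm = \itmtwo\esub{\var}{\itm'}$ with $\itmtwo,\itm'$ inert, the derivation ends with $\ruleES$, with premises $\concl{\tderivtwo}{\typctxtwo,\,\var\hastype\mtypetwo}{\itmtwo}{\mtype}$ and $\concl{\tderivthree}{\typctxthree}{\itm'}{\mtypetwo}$ and $\typctx = \typctxtwo \uplus \typctxthree$ (with $\var$ fresh, hence $\var \notin \dom{\typctxtwo},\dom{\typctxthree}$ by relevance of the type system); inertness of $\typctxthree$ plus the induction hypothesis on $\itm'$ gives $\mtypetwo$ inert, so $\typctxtwo,\,\var\hastype\mtypetwo$ is an inert type context, and then the induction hypothesis on $\itmtwo$ gives $\mtype$ inert.

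I do not expect a genuine obstacle here: the only step requiring a moment of care is the application case, where one must unfold the definition of \emph{inert linear type} to see that the codomain $\imtype$ of an inert arrow type is itself an inert multi type (and, incidentally, that its domain is forced to be ground) — this asymmetric shape of inert linear types is precisely what makes the spreading work. Everything else is routine bookkeeping of multiset sums. Note also that the argument $\fire$ of an inert application is only a fireball, not necessarily inert, which is why no induction hypothesis is — nor needs to be — invoked on it.
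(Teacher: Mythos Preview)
Your proposal is correct and follows essentially the same approach as the paper's own proof: structural induction on the inert term $\itm$, with the same three cases handled in the same way, using the splitting of inert contexts under $\uplus$ (the paper isolates this as a separate remark) and, in the application case, unfolding the definition of inert linear type to conclude that the codomain $\mtype$ is inert. Your additional remarks about the domain being forced to ground and about not needing the induction hypothesis on the fireball argument are accurate side observations that the paper leaves implicit.
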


\paragraph*{Correctness}
Open correctness establishes that all typable terms $\osym$-normalize and the multiplicative size of the derivation bounds the number of $\tomo$ steps plus the open size of the $\osym$-normal form; this bound is exact if the derivation is tight. 
Open correctness is proved following a standard scheme in two stages: $(a)$ \emph{quantitative} subject reduction states that every $\tovsubo$ step preserves types and decreases the general size of a derivation, and that any $\tomo$ step decreases by an exact quantity the multiplicative size of a derivation;
$(b)$ a lemma states that the multiplicative size of any derivation typing a $\osym$-normal form $\tm$ provides an \emph{upper bound} to the open size of $\tm$, and if moreover the derivation is tight then the bound~is~\emph{exact}. The lemma follows---its unusual statement puts forward 
\newcounter{l:size-fireballs}
\addtocounter{l:size-fireballs}{\value{theorem}}
\begin{lemma}[Size of fireballs]
	\label{l:size-fireballs}
	\NoteProof{lappendix:size-fireballs}
	Let $\fire$ be a fireball. 
	If $\namedtyjp{\tderiv}{}{\fire}{\typctx}{\mtype}$
	then $\sizem{\tderiv} \geq \sizeo{\fire}$.
	If, moreover, $\typctx$ is inert and ($\mtype$ is ground inert or $\fire$ is inert), then 
$\sizem{\tderiv} = \sizeo{\fire}$.
\end{lemma}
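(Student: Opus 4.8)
The statement is about fireballs $\fire$, which are defined by mutual induction with inert terms $\itm$. Accordingly, the plan is to prove this lemma \emph{simultaneously} with a companion statement about inert terms, by mutual induction on the structure of $\fire$ and $\itm$. The companion statement for an inert term $\itm$ with $\concl{\tderiv}{\typctx}{\itm}{\mtype}$ should say: $\sizem{\tderiv} \geq \sizeo{\itm}$ always, and if $\typctx$ is inert then $\mtype$ is inert (this part is exactly \Cref{l:spread-inert}, which we may invoke) \emph{and} $\sizem{\tderiv} = \sizeo{\itm}$, with additionally $\sizem{\tderiv} = 0$ precisely when $\typctx$ assigns $\itm$'s head variable a ground-or-empty multi type as dictated by the inert-linear-type grammar. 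The reason inert terms need the sharper, unconditional "$=$" (given an inert context) is that they appear as arguments inside fireballs and inside each other, and the inductive hypothesis must already deliver equality there; asking only for "$\geq$" would not let the outer case close.

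\medskip
\textbf{Key steps.} First, for the lower bound $\sizem{\tderiv} \geq \sizeo{\fire}$: proceed by induction following the fireball grammar $\fire \grameq \val \mid \itm \mid \fire\esub\var\itm$ and the inert grammar $\itm \grameq \var \mid \itm\sfire \mid \itm\esub\var\itmtwo$ (here using the Open-VSC fireball shape from \Cref{prop:properties-open-reduction}, i.e.\ $\fire \esub\var\itm$). The variable case is trivial ($\sizeo\var = 0 \leq \sizem\tderiv$). For a value $\val = \la\var{\tm'}$: $\sizeo\val = 0$, done. For an application $\itm\fire'$ inside an inert term, the last rule is $\ruleAp$ applied to premises for $\itm$ and for $\fire'$ (the latter possibly via $\ruleMany$, which is \emph{not} counted in $\sizem{\cdot}$ but whose sub-derivations are); we have $\sizem{\tderiv} = 1 + \sizem{\tderiv_{\itm}} + \sum_i \sizem{\tderiv_i}$ where the $\tderiv_i$ type $\fire'$, and $\sizeo{\itm\fire'} = \sizeo\itm + \sizeo{\fire'} + 1$; apply the IH to $\itm$ and — crucially — to at least one premise typing $\fire'$ (if the multiset is empty, $\sizeo{\fire'}$ must still be bounded, which is why we need the companion inert claim to force the right thing, or we simply observe $1 + \sizem{\tderiv_\itm} \geq 1 + \sizeo\itm$ and handle $\fire'$ separately). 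The ES case $\tm\esub\var\tmtwo$ uses $\ruleES$, $\sizeo{\tm\esub\var\tmtwo} = \sizeo\tm + \sizeo\tmtwo$, $\sizem{\tderiv} = \sizem{\tderiv_1} + \sizem{\tderiv_2}$, and the IH applies to both premises directly.

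Second, for the equality under the hypotheses ($\typctx$ inert and [$\mtype$ ground inert or $\fire$ inert]): the point is that inertness of $\typctx$ forces, via \Cref{l:spread-inert}, the right-hand types of inert sub-terms to be inert, and the ground/inert constraint propagates downward so that the "slack" sources in the lower-bound argument — namely $\ruleAx$ rules with non-ground types, or argument premises that are over-typed — are ruled out. Concretely, in the $\ruleAp$ case, an inert term $\itm\fire'$ of inert type has head type forced to be of the form $\larrow{n\mset\ground}{\imtype'}$, so the argument $\fire'$ is typed with $n\mset\ground$; each of those $n$ premises is a derivation of a fireball with ground type and inert (empty, hence inert) context, so by IH it has multiplicative size equal to $\sizeo{\cdot}$, and moreover ground type plus the structure forces that size to be $0$ only when $\fire'$ is itself "ground-shaped" — here one must check that $\sizeo{\fire'}$ is correctly accounted, possibly observing $\sizeo{\fire'}=0$ is not generally true, so the count $n$ of copies must match. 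This is the delicate bookkeeping: ensuring the multiset $n\mset\ground$ supplies \emph{exactly} the copies whose sizes sum to $\sizeo{\fire'}$, which works because with ground argument type each premise independently equals $\sizeo{\fire'}$ and the $\ruleAp$ adds $1$, matching $\sizeo{\itm\fire'} = \sizeo\itm + \sizeo{\fire'} + 1$ — wait, this over-counts if $n \geq 2$. So in fact the equality case must use that $\sizeo{\fire'} = 0$ whenever $\fire'$ is typed by a \emph{ground} linear type, i.e.\ a ground-typed fireball has no applications outside abstractions; this sub-fact is itself a small induction and is the real content.

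\medskip
\textbf{Main obstacle.} The hard part is exactly that last point: reconciling the $\ruleMany$-rule's arbitrary multiplicities with the additive definition of $\sizeo{\cdot}$ in the equality case. The clean way is to first establish, as a separate auxiliary claim proved by induction, that \emph{if $\fire$ is a fireball typable with an inert context and a ground linear type, then $\sizeo\fire = 0$ and every derivation of it has $\sizem{\cdot} = 0$} (intuitively: ground type means "used as a value/inert atom", no outer applications). With that in hand, the $\ruleAp$ argument premises contribute $0$ and the bookkeeping collapses to $\sizeo{\itm\fire'} = \sizeo{\itm} + 0 + 1 = \sizem{\tderiv_\itm} + 1 = \sizem{\tderiv}$, and symmetrically the whole induction goes through. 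I expect the value case of that auxiliary claim (an abstraction $\la\var{\tm'}$ has $\sizeo{\cdot} = 0$ by definition, trivially) and the inert cases (head variable of ground type cannot be applied without changing its type away from ground) to be the crux.
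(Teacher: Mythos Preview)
Your overall skeleton---mutual induction on the fireball/inert grammars, with a stronger companion statement for inert terms (equality as soon as the context is inert, via \Cref{l:spread-inert})---is exactly what the paper does. The gap is in your handling of the application case $\itm\fire'$, where your ``main obstacle'' is a phantom created by misreading the typing rule.

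In rule $\ruleAp$ there are precisely \emph{two} premises: one multi-type judgment $\concl{\tderivtwo}{\typctxtwo}{\itm}{\mset{\larrow{\mtypetwo}{\mtype}}}$ and one multi-type judgment $\concl{\tderivthree}{\typctxthree}{\fire'}{\mtypetwo}$. You instead unfold $\tderivthree$ into $n$ linear-type sub-premises (as if $\ruleMany$ were the last rule for $\fire'$), and then worry that each of those $n$ copies contributes $\sizeo{\fire'}$, over-counting when $n \geq 2$. But the induction hypothesis is stated for multi-type judgments, and it applies directly to $\tderivthree$ as a whole: spreading of inertness on $\tderivtwo$ gives that $\mset{\larrow{\mtypetwo}{\mtype}}$ is inert, so $\mtypetwo$ is a ground multi type (some $n\mset\ground$); since $\typctxthree$ is inert and $\mtypetwo$ is ground, the IH for fireballs yields $\sizem{\tderivthree} = \sizeo{\fire'}$ in one shot. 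Then $\sizem{\tderiv} = 1 + \sizem{\tderivtwo} + \sizem{\tderivthree} = 1 + \sizeo{\itm} + \sizeo{\fire'} = \sizeo{\itm\fire'}$ and the case is finished. No bookkeeping about multiplicities is needed.

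Your proposed auxiliary claim is accordingly unnecessary, and in fact wrong under the natural reading: take the inert fireball $\var\vartwo$ typed as $\var{:}\mset{\larrow{\mset\ground}{\mset\ground}},\,\vartwo{:}\mset\ground \vdash \var\vartwo : \mset\ground$; the context is inert and the type is ground, yet $\sizeo{\var\vartwo} = 1$. (If you meant the \emph{linear} type $\ground$, then only variables qualify, so the claim is trivially true but useless.) Drop the auxiliary claim and the decomposition into linear-type premises; apply the IH to the argument's multi-type derivation directly, and the proof goes through exactly as in the paper.
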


\emph{Example}. The non-inert fireball $\delta \defeq \la{\var}{\var\var}$ is typable and the last rule of any derivation $\concl{\tderiv}{\typctx}{\delta}{\mtype}$ is 
$\ruleManyVal$.
If $\mtype$ is not ground then $\ruleManyVal$ has at least one premise that types the subterm $\var\var$, so $\sizem{\tderiv} > \sizeo{\delta}$.
If $\mtype$ is ground, then $\mtype = \emptytype$ and $\ruleManyVal$ has no premises, with $\sizem{\tderiv} = 0 = \sizeo{\delta}$.
For inert terms, \Cref{l:size-fireballs} says that their open size is the multiplicative size of their \emph{inert} derivations; 
by spreading of inertness (\Cref{l:spread-inert}), it amounts to say that~the~type~context~is~inert.

\begin{proposition}[Open quantitative subject reduction]
	\label{prop:weak-subject-reduction}
	\NoteProof{propappendix:weak-subject-reduction}
	Let $\namedtyjp{\tderiv}{}{\tm}{\typctx}{\mtype}$ be a derivation.
	\begin{enumerate}
		\item\emph{Multiplicative step:} if $\tm \towm \tm'$ then there is a derivation 
$\namedtyjp{\tderiv'}{}{\tm'}{\typctx}{\mtype}$ with
		$\sizem{\tderiv'} = \sizem{\tderiv} - 2$ and $\size{\tderiv'} = \size{\tderiv} - 1$; 
		\item\emph{Exponential step:} if $\tm \towe \tm'$ then there is a derivation 
$\namedtyjp{\tderiv'}{}{\tm'}{\typctx}{\mtype}$ such that
		$\sizem{\tderiv'} = \sizem{\tderiv}$ and $\size{\tderiv'} < \size{\tderiv}$.
	\end{enumerate}
\end{proposition}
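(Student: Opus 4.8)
The plan is to prove both statements by induction on the open evaluation context $\weakctx$ witnessing the step $\tm \towm \tm'$ (resp.\ $\tm \towe \tm'$), with the base case given by the root rewrite rules, handled by dedicated lemmas.

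For the \textbf{multiplicative base case}, we have $\subctxp{\la\var\tm_1}\tmtwo \rtom \subctxp{\tm_1\esub\var\tmtwo}$, and we must analyze a derivation $\tderiv$ for the redex. First I would establish a small auxiliary fact: typing is stable under the commutation of a substitution context $\sctx$ with an application, i.e.\ a derivation of $\subctxp{\la\var\tm_1}\tmtwo$ can be reorganized into one of the same final judgment, the same $\sizem{\cdot}$ and the same $\size{\cdot}$, in which the $\ruleES$ rules of $\sctx$ sit at the bottom and the $\ruleAp$ typing the application sits just above the abstraction. (This is the standard "$\ruleES/\ruleAp$ commutation" bookkeeping.) Then, modulo this reorganization, the redex is typed by: an $\ruleAp$ rule whose left premise types $\la\var\tm_1$ with $\mset{\larrow\mtype\mtypetwo}$ and whose right premise types $\tmtwo$ with $\mtype$; the left premise, in turn, ends with a $\ruleMany$ rule with exactly one premise (because the type is a singleton multiset), above which sits a $\ruleFun$ rule typing $\tm_1$ with $\var\hastype\mtype$ added to the context. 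Rebuilding, I drop the $\ruleAp$, the $\ruleMany$ and the $\ruleFun$ (three rule occurrences, of which $\ruleMany$ is not counted in $\size{\cdot}$), and insert an $\ruleES$ rule with premises "$\tm_1$ typed with $\var\hastype\mtype$" and "$\tmtwo$ typed with $\mtype$"; this yields a derivation $\tderiv'$ of $\subctxp{\tm_1\esub\var\tmtwo}$ with the same final judgment. Counting: we removed $\ruleFun$ and $\ruleAp$ and added $\ruleES$, so $\size{\tderiv'} = \size{\tderiv}-1$; and we removed $\ruleFun$ and $\ruleAp$ (both counted by $\sizem{\cdot}$) and added $\ruleES$ (not counted), so $\sizem{\tderiv'} = \sizem{\tderiv}-2$.

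For the \textbf{exponential base case}, $\tm_1\esub\var{\subctxp\val} \rtoe \subctxp{\tm_1\isub\var\val}$, I again first push $\sctx$ out (same $\ruleES/\ruleES$-commutation bookkeeping, neutral on both sizes), so that $\tderiv$ ends with an $\ruleES$ rule with left premise $\concl{\tderiv_1}{\typctx,\var\hastype\mtypetwo}{\tm_1}{\mtype}$ and right premise a derivation of $\val$ with type $\mtypetwo$; that right derivation ends in $\ruleMany$, whose premises each type the value $\val$ with a linear type, and $\mtypetwo$ is the multiset of these. Applying the Substitution lemma (\Cref{l:substitution}) packaged over this $\ruleMany$ — i.e.\ using it once per premise and summing — gives a derivation $\tderiv'$ of $\tm_1\isub\var\val$ with final context $\typctx \mplus \typctxtwo$ and type $\mtype$, with $\sizem{\tderiv'} = \sizem{\tderiv_1} + \sizem{\tderivtwo_\val}$ (where $\tderivtwo_\val$ is the $\ruleMany$-derivation of $\val$) and $\size{\tderiv'} \le \size{\tderiv_1} + \size{\tderivtwo_\val}$. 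Since the original $\tderiv$ spent one extra $\ruleES$ rule (counted by $\size{\cdot}$ but not by $\sizem{\cdot}$) on top of these two subderivations, we get $\sizem{\tderiv'} = \sizem{\tderiv}$ and $\size{\tderiv'} = \size{\tderiv_1}+\size{\tderivtwo_\val} + (\size{\tderiv'} - \size{\tderiv_1} - \size{\tderivtwo_\val})$; more carefully, $\size{\tderiv} = \size{\tderiv_1}+\size{\tderivtwo_\val}+1 > \size{\tderiv_1}+\size{\tderivtwo_\val} \ge \size{\tderiv'}$, giving the strict inequality $\size{\tderiv'} < \size{\tderiv}$. Finally, re-plug $\sctx$ around $\tm_1\isub\var\val$: each $\ruleES$ of $\sctx$ is restored, contributing equally to $\tderiv$ and $\tderiv'$, so the size relations are preserved.

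For the \textbf{inductive step}, $\weakctx = \weakctx'\,\tmtwo$, $\tmtwo\,\weakctx'$, $\weakctx'\esub\var\tmtwo$, or $\tm\esub\var\weakctx'$. In each case $\tderiv$ ends with the corresponding rule ($\ruleAp$ or $\ruleES$) having a subderivation $\tderiv_0$ for the term filling $\weakctx'$; the inductive hypothesis gives $\tderiv_0'$ for the contractum with the matching size relations, and we rebuild the same final rule with $\tderiv_0$ replaced by $\tderiv_0'$. Since this bottom rule contributes the same fixed amount to both $\size{\cdot}$ and $\sizem{\cdot}$, the differences $\size{\tderiv}-\size{\tderiv'}$ and $\sizem{\tderiv}-\sizem{\tderiv'}$ are inherited unchanged from the premise. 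The main obstacle is the base-case bookkeeping: getting the $\ruleES$/$\ruleAp$ and $\ruleES$/$\ruleMany$ commutations exactly right so the size counts come out to the stated exact constants ($-2$ and $-1$ for the multiplicative step), and correctly bundling the Substitution lemma across the $\ruleMany$ rule in the exponential case. Everything else is routine structural induction.
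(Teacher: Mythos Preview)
Your overall plan---induction on the open context $\weakctx$, base cases handled by direct surgery on the derivation using the Substitution lemma for the exponential root step, and inductive cases by rebuilding the bottom rule---is exactly the paper's approach.

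One presentational wrinkle worth flagging: your auxiliary ``$\ruleES/\ruleAp$ commutation'' fact is not quite well-posed as stated. The type system is syntax-directed, so a derivation of $\subctxp{\la\var\tm_1}\tmtwo$ \emph{necessarily} ends in $\ruleAp$, with the $\ruleES$ rules of $\sctx$ sitting above it on the left premise; you cannot reorganize it into a derivation with the same subject where the $\ruleES$ rules are at the bottom. What you really want (and what the paper does) is simpler: just observe that the fixed shape of $\tderiv$ exposes sub-derivations $\tderivtwo$ for $\tm_1$ (with $\var\!:\!\mtypetwo$ in context), $\tderivthree$ for $\tmtwo$, and $\tderiv_1,\dots,\tderiv_n$ for the ES contents of $\sctx$, and then directly assemble $\tderiv'$ for the reduct $\subctxp{\tm_1\esub\var\tmtwo}$---which, by \emph{its} syntax, naturally has the $\ruleES$ rules of $\sctx$ at the bottom and the new $\ruleES$ just above. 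No commutation lemma is needed; the reassembly is immediate. The same remark applies to your exponential base case. Once phrased this way, your size accounting is correct and coincides with the paper's.
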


As the general size of derivations decreases after any $\tovsubo$ step, 
we have a combinatorial proof of open correctness.

\begin{theorem}[Open correctness]
	\label{thm:open-correctness}
	\NoteProof{thmappendix:open-correctness}
	Let 
	$\derive{\tderiv}{\tm}$ be a derivation.
	Then there is a $\osym$-normalizing evaluation $\deriv \colon \tm \tovsubo^* \tmtwo$ with $2\sizem{\deriv} + 
	\sizeo{\tmtwo} \leq \sizem{\tderiv}$.
	And if $\tderiv$ is tight, then $2\sizem{\deriv} + \sizeo{\tmtwo} = \sizem{\tderiv}$.
\end{theorem}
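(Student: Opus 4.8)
The plan is to prove Open correctness by combining the two ingredients already set up in the excerpt: quantitative subject reduction (\Cref{prop:weak-subject-reduction}) and the size-of-fireballs lemma (\Cref{l:size-fireballs}). First I would argue that the evaluation terminates. Since $\tovsubo$ is diamond and strongly commuting (\Cref{prop:ovsc-diamond}), it suffices to exhibit \emph{one} terminating $\osym$-evaluation from $\tm$; normalization of that one implies every $\osym$-evaluation from $\tm$ reaches the same normal form with the same step counts. To build such an evaluation, I would use the general size $\size{\tderiv}$ as a termination measure: by \Cref{prop:weak-subject-reduction}, both a multiplicative step ($\size$ drops by $1$) and an exponential step ($\size$ strictly decreases) strictly decrease $\size{\tderiv}$ of the associated derivation, so no infinite $\osym$-evaluation from $\tm$ can exist. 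Hence $\tm$ is $\osym$-normalizing; fix a $\osym$-normalizing evaluation $\deriv \colon \tm \tovsubo^* \tmtwo$, and by \Cref{prop:properties-open-reduction} the normal form $\tmtwo$ is a fireball.

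Next I would prove the bound $2\sizem{\deriv} + \sizeo{\tmtwo} \leq \sizem{\tderiv}$ by induction on the length of $\deriv$. In the base case $\deriv$ is empty, $\tm = \tmtwo$ is a fireball, $\sizem{\deriv} = 0$, and \Cref{l:size-fireballs} gives $\sizem{\tderiv} \geq \sizeo{\tmtwo}$, as required. For the inductive step, consider the first step $\tm \tovsubo \tm_1$ of $\deriv$, with the rest $\deriv_1 \colon \tm_1 \tovsubo^* \tmtwo$. By subject reduction (\Cref{prop:weak-subject-reduction}) there is a derivation $\tderiv_1$ of $\tm_1$ with $\sizem{\tderiv_1} = \sizem{\tderiv} - 2$ if the step is multiplicative, and $\sizem{\tderiv_1} = \sizem{\tderiv}$ if it is exponential. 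Applying the induction hypothesis to $\tderiv_1$ and $\deriv_1$ yields $2\sizem{\deriv_1} + \sizeo{\tmtwo} \leq \sizem{\tderiv_1}$. If the first step is multiplicative, $\sizem{\deriv} = \sizem{\deriv_1} + 1$, so $2\sizem{\deriv} + \sizeo{\tmtwo} = 2\sizem{\deriv_1} + 2 + \sizeo{\tmtwo} \leq \sizem{\tderiv_1} + 2 = \sizem{\tderiv}$; if it is exponential, $\sizem{\deriv} = \sizem{\deriv_1}$ and $2\sizem{\deriv} + \sizeo{\tmtwo} \leq \sizem{\tderiv_1} = \sizem{\tderiv}$.

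For the tight case, I would refine the same induction to an equality. The key point is that tightness is preserved along $\tovsubo$ steps: subject reduction must be checked to produce a derivation $\tderiv_1$ with the \emph{same} final judgment (the statement of \Cref{prop:weak-subject-reduction} indeed keeps $\typctx$ and $\mtype$ fixed), so if $\tderiv$ is tight then so is $\tderiv_1$. Then in the base case $\tmtwo$ is a fireball typed by a tight---hence inert, with ground right-hand type---derivation, and the second clause of \Cref{l:size-fireballs} gives the exact equality $\sizem{\tderiv} = \sizeo{\tmtwo}$; the inductive step goes through with all inequalities above replaced by equalities, using the exact quantities in subject reduction.

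The main obstacle is the first part, termination: one must be careful that the general size strictly decreases on \emph{both} kinds of steps (it does, by \Cref{prop:weak-subject-reduction}: $-1$ for multiplicative, strict decrease for exponential), and that this gives termination of a \emph{chosen} evaluation, which then lifts to all $\osym$-evaluations precisely because $\tovsubo$ is diamond---this is exactly the consequence of \Cref{prop:properties-open-reduction} highlighted in the comment after it. The quantitative part, by contrast, is a routine induction once subject reduction and the size-of-fireballs lemma are in hand; the only subtlety worth flagging is that exponential steps contribute $0$ to $\sizem{\deriv}$ and leave $\sizem{\tderiv}$ unchanged, so they are transparent to the $\sizem$-bound, while only multiplicative steps are counted---consistent with $\sizem$ being the intended cost model.
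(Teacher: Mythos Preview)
Your proposal is correct and uses the same ingredients as the paper's proof: open quantitative subject reduction (\Cref{prop:weak-subject-reduction}) for the inductive step and the size-of-fireballs lemma (\Cref{l:size-fireballs}) for the base case, with the general size $\size{\tderiv}$ as the termination measure and tightness preserved because the final judgment is unchanged.

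The only difference is organizational. You split the argument into two phases---first termination (via $\size{\tderiv}$ strictly decreasing), then the quantitative bound by induction on the length of the already-built evaluation $\deriv$---whereas the paper performs a single induction on $\size{\tderiv}$ that simultaneously constructs the normalizing evaluation and establishes the bound. Both are valid; the paper's single induction is slightly more economical (one pass instead of two), while your decomposition makes the two roles of the general size and the multiplicative size more explicit. Neither approach needs the diamond property for correctness of the stated theorem (the statement only asks for \emph{some} normalizing evaluation), though invoking it as you do is harmless and clarifies that the bound then holds for \emph{every} $\osym$-evaluation.
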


\paragraph*{Completeness}
Open completeness establishes that every $\osym$-normalizing term is typable, and with a tight derivation $\tderiv$ such that $\sizem\tderiv$ is exactly the number of $\tomo$ steps plus the open size of the $\osym$-normal form. 
Again, the proof technique is standard: $(a)$ \emph{quantitative} subject expansion states that typability can be pulled back along $\tovsubo$ steps, increasing $\size\tderiv$;
$(b)$ a lemma states that every $\osym$-normal form is typable with a \emph{tight} derivation---inert terms verify a stronger statement. 

\newcounter{prop:precise-open-typability-nf}
\addtocounter{prop:precise-open-typability-nf}{\value{theorem}}
\begin{lemma}[Tight typability of open normal forms]
	\label{prop:precise-open-typability-nf} 
	\NoteProof{propappendix:precise-open-typability-nf}
	\begin{enumerate}
		\item \emph{Inert:}\label{p:precise-open-typability-nf-inert} if $\tm$ is an inert term then, for any inert multi 
type $\mtype$, there is an inert derivation $\concl{\tderiv}{\typctx}{\tm}{\mtype}$.
		\item \emph{Fireball:}\label{p:precise-open-typability-nf-fireball} if $\tm$ is a fireball then there is a tight derivation $\concl{\tderiv}{\typctx}{\tm}{\emptytype}$.		
	\end{enumerate}
\end{lemma}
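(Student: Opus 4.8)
The statement is a typability result for open normal forms (fireballs), proved by mutual induction on the structure of inert terms $\itm$ and fireballs $\fire$ as given by the grammar in \Cref{prop:properties-open-reduction}. The two items are proved together because the grammars of inert terms and fireballs are mutually recursive (a fireball can be an inert term, and an inert term can contain a fireball as the argument of its head variable). I would strengthen the statement of item~(1) slightly for the induction: not only does an inert derivation exist for every inert multi type $\mtype$, but it can be taken with an inert typing context — this is exactly what spreading of inertness (\Cref{l:spread-inert}) would otherwise hand us a posteriori, but stating it explicitly makes the induction self-contained.

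\textbf{Proof of (1), inert terms.} Proceed by induction on $\itm \grameq \var \mid \itm'\fire \mid \itm'\esub{\var}{\itmtwo}$, with $\mtype = \mset{\inltype_1,\dots,\inltype_n}$ an arbitrary inert multi type.
\emph{Case $\itm = \var$.} For each $i$, the inert linear type $\inltype_i$ is either $\ground$ or $\larrow{k_i\mset{\ground}}{\imtype_i}$; in both cases it is an inert linear type, so by rule $\ruleAx$ we get $\var\hastype\mset{\inltype_i}\vdash\var\hastype\inltype_i$, and combining the $n$ of them with $\ruleMany$ gives $\var\hastype\mtype\vdash\var\hastype\mtype$, whose context $\var\hastype\mtype$ is inert.
\emph{Case $\itm = \itm'\fire$.} Here the head variable applies to a fireball $\fire$. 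I would pick a fresh inert linear type arrow $\larrow{\emptytype}{\mtype}$ — note this is inert because $\emptytype = 0\mset{\ground}$ is ground and $\mtype$ is the given inert multi type — and type $\itm'$ at $\mset{\larrow{\emptytype}{\mtype}}$, which is an inert multi type, by the inner induction hypothesis, yielding an inert context; then type $\fire$ at $\emptytype$ using the fireball clause (item~(2) of the lemma, which by the mutual induction is available at structurally smaller subterms), whose context is inert (in fact, tight); rule $\ruleAp$ then composes these, summing the two inert contexts into an inert context and concluding $\itm'\fire\hastype\mtype$.
\emph{Case $\itm = \itm'\esub{\var}{\itmtwo}$.} By the inner IH, type $\itm'$ at $\mtype$ with an inert context $\typctx',\var\hastype\mtypetwo$ where $\mtypetwo$ is inert; by the inner IH again, type $\itmtwo$ at $\mtypetwo$ with an inert context $\typctxtwo$; rule $\ruleES$ gives $\itm'\esub{\var}{\itmtwo}\hastype\mtype$ under $\typctx'\mplus\typctxtwo$, still inert.

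\textbf{Proof of (2), fireballs.} Proceed by induction on $\fire \grameq \val \mid \itm \mid \fire'\esub{\var}{\itm}$, targeting the type $\emptytype$.
\emph{Case $\fire = \val = \la{\var}{\tm}$ — an abstraction.} Apply rule $\ruleMany$ with zero premises to get $\vdash\la{\var}{\tm}\hastype\emptytype$ with empty context; empty is inert and $\emptytype$ is ground, so this derivation is tight. (This is the reason the statement can afford $\emptytype$ on the right even when $\val$'s body is not normal.)
\emph{Case $\fire = \itm$ — an inert term.} Apply item~(1) with the inert multi type $\mtype \defeq \emptytype$ (which is inert, being ground): we get an inert derivation $\concl{\tderiv}{\typctx}{\itm}{\emptytype}$ with $\typctx$ inert, and since the right type is $\emptytype$ which is ground, $\tderiv$ is tight by definition.
\emph{Case $\fire = \fire'\esub{\var}{\itm}$.} By the outer IH, type $\fire'$ at $\emptytype$ with a tight derivation $\concl{\tderiv'}{\typctx'}{\fire'}{\emptytype}$, so $\typctx'$ is inert and — crucially — since $\typctx'$ assigns $\var$ some \emph{inert} multi type $\mtypetwo$, apply item~(1) to the inert term $\itm$ at $\mtypetwo$ to get an inert derivation for $\itm$ with inert context $\typctxtwo$; rule $\ruleES$ then gives $\concl{}{\typctx'\mplus\typctxtwo}{\fire'\esub{\var}{\itm}}{\emptytype}$, whose context is inert (sum of inert contexts) and whose right type $\emptytype$ is ground, hence tight.

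\textbf{Main obstacle.} The delicate point is the bookkeeping of which predicate is available where: item~(2) needs the full \emph{tight} predicate, but its proof leans on item~(1) where only \emph{inert} is available — so one must check that every time the fireball clause invokes the inert-term clause, the right type happens to be $\emptytype$ (ground), upgrading inert to tight. Symmetrically, in the $\itm'\fire$ case of item~(1) the subterm $\fire$ is typed at $\emptytype$, which is exactly what item~(2) delivers, so the mutual recursion closes. A secondary subtlety is the strengthening to ``inert context'' in item~(1): one must verify each rule ($\ruleAx$, $\ruleAp$, $\ruleES$, $\ruleMany$) preserves inertness of contexts under $\mplus$ and under singletons — routine, but it is what makes the $\ruleES$ cases go through without circularly invoking \Cref{l:spread-inert}. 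Everything else is a direct structural induction following the two grammars.
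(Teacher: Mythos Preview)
Your proof is correct and follows essentially the same mutual induction as the paper's own proof, with the same case analysis and the same choice of $\mset{\larrow{\emptytype}{\mtype}}$ in the application case and $\emptytype$ as the target type for fireball subterms. One small remark: your ``strengthening'' of item~(1) to require an inert typing context is not actually a strengthening---by the paper's definition, a derivation is \emph{inert} precisely when both the context is inert and the right-hand type is inert, so the context being inert is already part of the statement.
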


\begin{proposition}[Open quantitative subject expansion]
	\label{prop:weak-subject-expansion}
	\NoteProof{propappendix:weak-subject-expansion}
	Let $\namedtyjp{\tderiv'}{}{\tm'}{\typctx}{\mtype}$ be a derivation.
	\begin{enumerate}
		\item\emph{Multiplicative step:} if $\tm \towm \tm'$ then there is a derivation 
$\namedtyjp{\tderiv}{}{\tm}{\typctx}{\mtype}$ with
		$\sizem{\tderiv'} = \sizem{\tderiv} - 2$ and $\size{\tderiv'} = \size{\tderiv} - 1$; 
		\item\emph{Exponential step:} if $\tm \towe \tm'$ then there is a derivation 
$\namedtyjp{\tderiv}{}{\tm}{\typctx}{\mtype}$ such that
		$\sizem{\tderiv'} = \sizem{\tderiv}$ and $\size{\tderiv'} < \size{\tderiv}$.
	\end{enumerate}
\end{proposition}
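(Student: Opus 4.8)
The plan is to prove both clauses together, adapting the proof of open quantitative subject reduction (\Cref{prop:weak-subject-reduction}) to run backwards, by induction on the open context $\weakctx$ witnessing the step: write $\tm = \weakctxp{r}$ and $\tm' = \weakctxp{r'}$ with $r \rootRew{a} r'$ and $a \in \set{\msym,\esym}$. When $\weakctx$ is not the hole it has a topmost constructor among $\weakctx_1\tmtwo$, $\tmtwo\weakctx_1$, $\weakctx_1\esub{\var}{\tmtwo}$, $\tmtwo\esub{\var}{\weakctx_1}$; the last rule of $\tderiv'$ is then forced ($\ruleAp$ in the first two cases, $\ruleES$ in the last two), and the premise typing the hole side is a derivation of $\weakctx_1$ filled with $r'$, to which the i.h.\ applies since $\weakctx_1$ is strictly smaller. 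Recomposing the derivation returned by the i.h.\ with the same last rule and the other, untouched premise yields the desired $\tderiv$; since the affected top rule ($\ruleAp$ or $\ruleES$) contributes the same amount to both $\size{\cdot}$ and $\sizem{\cdot}$ in $\tderiv$ and $\tderiv'$, the size shift provided by the i.h.\ is preserved verbatim.

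The base case $\weakctx = \ctxhole$ is where the content lies. If $a = \msym$, then $r = \subctxp{\la{\var}{\tm}}\tmtwo$ and $r' = \subctxp{\tm\esub{\var}{\tmtwo}}$; I proceed by a secondary induction on the substitution context $\subctx$. For $\subctx = \ctxhole$, the derivation $\tderiv'$ of $\tm\esub{\var}{\tmtwo}$ ends with $\ruleES$ over a derivation of $\tm$ (typing $\var$ with some $\mtypethree$) and a derivation of $\tmtwo$ with $\mtypethree$; replacing this $\ruleES$ by $\ruleFun$, then a one-premise $\ruleMany$, then $\ruleAp$, gives a derivation $\tderiv$ of $(\la{\var}{\tm})\tmtwo$ with the same final judgment, trading one counted rule ($\ruleES$) for two ($\ruleFun$ and $\ruleAp$; $\ruleMany$ does not count), so $\size{\tderiv} = \size{\tderiv'} + 1$, and trading zero multiplicative rules for two, so $\sizem{\tderiv} = \sizem{\tderiv'} + 2$. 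If $a = \esym$, then $r = \tm\esub{\var}{\subctxp\val}$ and $r' = \subctxp{\tm\isub{\var}{\val}}$; again a secondary induction on $\subctx$, and for $\subctx = \ctxhole$ I feed $\tderiv'$ (a derivation of $\tm\isub{\var}{\val}$) to anti-substitution (\Cref{l:anti-substitution}), obtaining a derivation of $\tm$ (typing $\var$ with some $\mtypetwo$) and a derivation of $\val$ with $\mtypetwo$ whose multiplicative sizes sum exactly to $\sizem{\tderiv'}$ and whose general sizes sum to at least $\size{\tderiv'}$; joining them with $\ruleES$ produces $\tderiv$ for $\tm\esub{\var}{\val}$ with $\sizem{\tderiv} = \sizem{\tderiv'}$ and $\size{\tderiv} \geq \size{\tderiv'} + 1 > \size{\tderiv'}$, as wanted.

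For the inductive step of the secondary induction, $\subctx = \subctx_1\esub{\vartwo}{\tmthree}$: the derivation $\tderiv'$ ends with the $\ruleES$ introducing $\esub{\vartwo}{\tmthree}$, whose hole-side premise types the $\subctx_1$-version of $r'$, which is reached from the $\subctx_1$-version of $r$ by the same root rule; the i.h.\ on $\subctx_1$ then gives a derivation of the latter with the correct shift, and it remains to move the $\ruleES$ for $\esub{\vartwo}{\tmthree}$ back through the $\ruleAp$ (multiplicative case) or $\ruleES$ (exponential case) at the top of this derivation. I expect this last rearrangement to be the only genuinely delicate point: it goes through because the $\alpha$-convention on substitution contexts keeps the bound variable $\vartwo$ free only on one side of the constructor being crossed ($\vartwo \notin \fv{\tmtwo}$ in the multiplicative rule, $\vartwo \notin \fv\tm$ in the exponential one), so the whole derivation of $\tmthree$ can be re-attached on that side and no splitting of multi types is needed; the two size counts are then preserved exactly by the same local bookkeeping as in the $\subctx = \ctxhole$ subcase. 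As in subject reduction, the only real type-theoretic input is the anti-substitution lemma, which is already available, so the rest is a matter of carefully threading contexts and counting rules.
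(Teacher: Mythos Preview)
Your proposal is correct and follows the same overall induction on the open context $\weakctx$ as the paper, with the same anti-substitution lemma doing the real work in the exponential root case. The one organizational difference is in how you treat the substitution context $\subctx$ at the root: the paper simply unfolds $\subctx$ into its $n$ explicit substitutions, writes out the shape of $\tderiv'$ with those $n$ stacked $\ruleES$ rules, and rebuilds $\tderiv$ in one shot by relocating the whole block of $\ruleES$ rules; you instead run a secondary induction on $\subctx$, whose inductive step requires commuting a single $\ruleES$ past the topmost $\ruleAp$ (multiplicative) or $\ruleES$ (exponential) of the derivation returned by the inner i.h. Both are standard and both yield the exact size shifts; the paper's direct unfolding makes the bookkeeping slightly more transparent (all $n$ rules counted at once), while your inductive route is more modular but pays for it with the commutation step you flagged as delicate. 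Your justification for that step---$\vartwo\notin\fv{\tmtwo}$ in the multiplicative case and $\vartwo\notin\fv{\tm}$ in the exponential case, via the $\alpha$-convention and \refrmk{free-variables}---is exactly what is needed and is sound.
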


As the general size of derivations increases 
after a backward $\osym$-step,
a combinatorial proof of open completeness~follows.

\begin{theorem}[Open completeness]
	\label{thm:open-completeness}
	\NoteProof{thmappendix:open-completeness}
	Let $\deriv \colon \tm \tovsubo^* \tmtwo$ be an $\osym$-normalizing evaluation. 
	Then there is a tight derivation $\concl{\tderiv}{\typctx}{\tm}{\emptytype}$ such that $2\sizem{\deriv} + \sizes{\fire} = \sizem{\tderiv}$.
\end{theorem}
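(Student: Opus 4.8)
The plan is to follow the standard two-stage scheme announced in the text: combine quantitative subject expansion (Proposition~\ref{prop:weak-subject-expansion}) with tight typability of open normal forms (Lemma~\ref{prop:precise-open-typability-nf}), arguing by induction on the length of the evaluation $\deriv \colon \tm \tovsubo^* \tmtwo$. Since the evaluation is $\osym$-normalizing, the final term $\tmtwo$ is an $\osym$-normal form, hence a fireball by harmony (Proposition~\ref{prop:properties-open-reduction}\eqref{p:properties-open-redction-harmony}).

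First I would treat the base case $\tm = \tmtwo$, where $\deriv$ is empty and $\sizem{\deriv} = 0$. Here $\tm$ is a fireball, so by Lemma~\ref{prop:precise-open-typability-nf}\eqref{p:precise-open-typability-nf-fireball} there is a tight derivation $\concl{\tderiv}{\typctx}{\tm}{\emptytype}$. It remains to check $\sizem{\tderiv} = \sizeo{\tm}$: this is exactly the equality case of Lemma~\ref{l:size-fireballs} (Size of fireballs), which applies because $\tderiv$ is tight, so $\typctx$ is inert and $\mtype = \emptytype$ is ground inert. Thus $2\cdot 0 + \sizeo{\tm} = \sizem{\tderiv}$, as required. (Here I am reading the $\sizes{\fire}$ in the statement as the open size $\sizeo{\tmtwo}$ of the normal form $\tmtwo = \fire$.)

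For the inductive step, write $\deriv \colon \tm \tovsubo \tm' \tovsubo^* \tmtwo$, so that $\deriv' \colon \tm' \tovsubo^* \tmtwo$ is shorter and $\osym$-normalizing. By the induction hypothesis there is a tight derivation $\concl{\tderivp}{\typctx}{\tm'}{\emptytype}$ with $2\sizem{\deriv'} + \sizeo{\tmtwo} = \sizem{\tderivp}$. Now apply quantitative subject expansion (Proposition~\ref{prop:weak-subject-expansion}) to the single step $\tm \tovsubo \tm'$: if the step is multiplicative, we obtain a derivation $\concl{\tderiv}{\typctx}{\tm}{\emptytype}$ with $\sizem{\tderiv} = \sizem{\tderivp} + 2$, and since $\sizem{\deriv} = \sizem{\deriv'} + 1$ we get $2\sizem{\deriv} + \sizeo{\tmtwo} = 2\sizem{\deriv'} + 2 + \sizeo{\tmtwo} = \sizem{\tderivp} + 2 = \sizem{\tderiv}$; if the step is exponential, $\sizem{\tderiv} = \sizem{\tderivp}$ and $\sizem{\deriv} = \sizem{\deriv'}$, so the equation is preserved verbatim. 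The only remaining point is that $\tderiv$ is still \emph{tight}: subject expansion preserves the final judgment $\typctx \vdash \tm \hastype \emptytype$, and tightness depends only on the final judgment, so $\tderiv$ is tight because $\tderivp$ is.

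I do not expect a genuine obstacle here, as all the ingredients are already in place; the one point requiring care is bookkeeping the two size measures simultaneously and making sure the multiplicative-size equality is threaded correctly through both kinds of steps (the $\pm 2$ for $\msym$-steps versus invariance for $\esym$-steps), together with checking that subject expansion indeed keeps the typing context and right-hand type unchanged so that tightness is not lost along the way.
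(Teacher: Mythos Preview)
Your proposal is correct and follows essentially the same approach as the paper's proof: induction on the length of the evaluation, using tight typability of fireballs together with the size-of-fireballs lemma for the base case, and open quantitative subject expansion for the inductive step. Your explicit remark that tightness is preserved because subject expansion leaves the final judgment unchanged is a point the paper leaves implicit, but otherwise the two arguments coincide.
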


\paragraph*{Adequate denotational semantics for $\tovsubo$}
Relational semantics is a denotational model for $\tovsub$ (\Cref{prop:invariance}).
Moreover, it is \emph{adequate} for $\tovsubo$: it provides a semantic characterization of all and only the terminating terms for $\tovsubo$, thanks to correctness (\Cref{thm:correctness}) and completeness~(\Cref{thm:open-completeness}).

\begin{theorem}[Open adequacy]
	\label{thm:open-adequacy}
	\NoteProof{thmappendix:open-adequacy}
	Let $\tm$ be a term, with $\vec{\var}$ suitable for $\tm$.
	Then, $\tm$ is $\osym$-normalizing if and only if $\sem{\tm}_{\vec{\var}} \neq \emptyset$. 
\end{theorem}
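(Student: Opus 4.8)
The plan is to obtain open adequacy as an immediate corollary of open correctness (\Cref{thm:open-correctness}) and open completeness (\Cref{thm:open-completeness}), which are the two results doing all the real work. Since the statement is an equivalence, I prove the two implications separately, and in each direction the argument is essentially bookkeeping to match an arbitrary type derivation against the fixed suitable list $\vec{\var}$.

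For the left-to-right direction, assume $\tm$ is $\osym$-normalizing, so there is a normalizing evaluation $\deriv \colon \tm \tovsubo^* \tmtwo$ with $\tmtwo$ an $\osym$-normal form. By open completeness there is a (tight) derivation $\concl{\tderiv}{\typctx}{\tm}{\emptytype}$. Now I use the standard fact that a type context only assigns non-empty multi types to free variables of the typed term (a routine induction on $\tderiv$): together with suitability of $\vec{\var}$ this gives $\dom{\typctx} \subseteq \fv{\tm} \subseteq \{\var_1, \dots, \var_n\}$, so $\typctx$ can be written as $\var_1 \hastype \mtypetwo_1, \dots, \var_n \hastype \mtypetwo_n$ (padding with $\emptymset$ on the missing variables, which by the remark $\typctx, \var \hastype \emptymset = \typctx$ leaves $\typctx$ unchanged). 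Hence $((\mtypetwo_1, \dots, \mtypetwo_n), \emptymset) \in \sem{\tm}_{\vec{\var}}$, and in particular $\sem{\tm}_{\vec{\var}} \neq \emptyset$.

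For the right-to-left direction, assume $\sem{\tm}_{\vec{\var}} \neq \emptyset$. Then there is some $((\mtypetwo_1, \dots, \mtypetwo_n), \mtype) \in \sem{\tm}_{\vec{\var}}$, witnessed by a derivation $\concl{\tderiv}{\var_1 \hastype \mtypetwo_1, \dots, \var_n \hastype \mtypetwo_n}{\tm}{\mtype}$; in particular $\derive{\tderiv}{\tm}$. By open correctness there is a $\osym$-normalizing evaluation $\deriv \colon \tm \tovsubo^* \tmtwo$, so $\tm$ is $\osym$-normalizing.

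I do not expect any genuine obstacle: the substantive content—quantitative subject reduction and expansion, and the (tight) typability of open normal forms—has already been established, and what remains is only the mild administrative point of reconciling the typing context of an arbitrary derivation with the prescribed suitable list $\vec{\var}$, for which the free-variables property of type contexts suffices.
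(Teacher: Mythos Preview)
Your proposal is correct and follows essentially the same approach as the paper: both directions are immediate consequences of open completeness and open correctness, respectively. You are simply more explicit than the paper about the bookkeeping step that aligns the type context of the derivation with the fixed suitable list $\vec{\var}$, invoking what the paper records as \refrmk{free-variables}; the paper's own proof glosses over this and just says ``thus $\sem{\tm}_{\vec{\var}} \neq \emptyset$''.
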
 

\section{Shrinking Multi Types for the External Strategy}
\label{sect:shrinking}

\begin{figure*}
	\begin{center}
			\scalebox{.95}{
		$\begin{array}{rrcl \colspace rrcl}
		\text{Right  multi type } & \rmtype &\grameq &	\mset{\rltype_1, \dots, \rltype_n} \ \ n \geq 1
		&
		\text{Right  linear type } & \rltype &\grameq & \ground \mid  \larrow{\lmtype}{\rltype}
		\\
		\text{Left  multi type } & \lmtype &\grameq &	\mset{\lltype_1, \dots, \lltype_n} \ \ n \geq 0
		&
		\text{Left  linear type } & \lltype &\grameq & \ground \mid  \larrow{\rmtype}{\lmtype}
		\\[5pt]
		\text{Unitary right multi type } & \urmtype &\grameq &	\mset{\urltype} 
		&
		\text{Unitary right linear type } & \urltype &\grameq & \ground \mid  \larrow{\ulmtype}{\urmtype}
		\\
		\text{Unitary left multi type } & \ulmtype &\grameq &	\mset{\ulltype_1, \dots, \ulltype_n} \ \ n \geq 0
		&
		\text{Unitary left linear type } & \ulltype &\grameq & \ground \mid  \larrow{\urmtype}{\ulmtype}
		\end{array}$
	}
	\end{center}
	\caption{Kinds of right and left (shrinking) types.}
	\label{fig:shrinking}
\end{figure*}

In \scbv, diverging terms may be typable. 
For instance, $\var(\la\vartwo\delta\delta)$ (with $\delta \defeq \la{\varthree}\varthree\varthree$) can be typed by assigning $\mset{\larrow\emptymset{\mset\ground}}$ to $\var$, and typing the diverging argument $(\la\vartwo\delta\delta)$ with $\emptymset$ (via a $\ruleManyVal$ rule with 0 premises). 
The point is that, since $\emptymset$ is on the left of $\multimap$, the argument is meant to be erased, but $\var$ cannot actually 
erase it. This is a problem typical of strong settings, as it occurs also in Strong \cbn. The solution is to restrict 
to type derivations satisfying a predicate that forbids types where $\emptymset$ plays these dangerous tricks. This is 
unavoidable and standard, see 
\cite{DBLP:books/daglib/0071545,DBLP:journals/pacmpl/AccattoliGK18}. Following \cite{DBLP:journals/pacmpl/AccattoliGK18}, the predicate is here called \emph{shrinkingness} 
because it ensures that the size of type derivations shrinks at each $\tovsubs$ step (see \Cref{prop:shrinking-subject-reduction,prop:shrinking-subject-expansion}).

Since the typing rule $\ruleFun$ shifts a multi type from the left-hand side of a judgment to the left of $\multimap$ on the right-hand side of a judgment, we actually need two notions of shrinking types, \emph{left} and \emph{right}. 
%
%
Indeed, a judgment is shrinking (and so it ensures that the size of the derivation shrinks at each $\tovsubs$ step) when the typing context is made of left shrinking multi types, and the type on the right-hand side is right shrinking.
Formally, \emph{left} and \emph{right shrinking} (multi or linear) \emph{types} are defined in \Cref{fig:shrinking} (we often omit ``shrinking'' when referring to left or right types, for brevity), by mutual induction: they are somehow dual notions.
Additionally, we need \emph{unitary (left/right) shrinkingness} (see \Cref{fig:shrinking}) to 
characterize those minimal derivations that measure 
\emph{exactly} the number of $\toms$ steps to normal form and the size of the normal form:
indeed, $\esssym$-evaluation reduces under abstractions and unitality requires to take just one ``copy'' of the body of the abstraction. 
The notions extend to type contexts and to derivations as follows:
	\begin{itemize}

	\item A type context $\var_1 \hastype \mtype_1, \dots, \var_n \hastype \mtype_n$ is \emph{\leftsh} (\resp \emph{unitary \leftsh}) if each $\mtype_i$ is \leftsh (\resp unitary \leftsh).
	\item A derivation $\concl{\tderiv}{\typctx}{\tm}{\mtype}$ is \emph{shrinking} (\resp~\emph{unitary shrinking}) if $\typctx$ is \leftsh (\resp~unitary \leftsh) and $\mtype$ is \rightsh \mbox{(\resp unitary \rightsh).}
\end{itemize}
Of course, (right/left) unitary shrinkingness implies shrinkingness, but the converse is false---just consider $\mset{\ground,\ground}$.
Note that $\mset{\ground}$ is both unitary \leftsh and unitary \rightsh, while $\zero$ is unitary \leftsh but not \rightsh, and $\mset{\larrow{\emptymset}{\mset{\ground}}}$ is unitary \rightsh but not \leftsh.

The important notion is the one of (unitary) shrinking derivation, but often, for the induction to go through, statements have a less intuitive form, typically concerning inertness or type contexts alone.
Shrinkingness is a predicate of derivations 
defined by looking only at the final judgment. For proving  
properties of shrinking derivations, we have to analyze how shrinking propagates to sub-derivations. 
The following lemma is specific to \emph{\leftsh} shrinkingness, on which the propagation of shrinkingness 
then build. 
%
It says that for a specific kind of terms---typable \emph{rigid} terms---\leftsh shrinkingess spreads from the type context to the right-hand multi type~in~a~judgment. 

\begin{lemma}
	[Spreading of \leftsh shrinkingness on judgments]
	\label{l:spread-shrinking}
	\NoteProof{lappendix:spread-shrinking}
	Let $\concl{\tderiv}{\typctx}{\ptm}{\mtype}$ be a derivation and $\ptm$ a \pointed term. 
	If $\typctx$ is a \leftsh (\resp~unitary \leftsh) type context, then $\mtype$ is a \leftsh (\resp~unitary \leftsh) multi type.
\end{lemma}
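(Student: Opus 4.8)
The plan is to prove \Cref{l:spread-shrinking} by induction on the structure of the \pointed term $\ptm$, following the grammar $\ptm \grameq \var \mid \ptm\tm \mid \ptm\esub\var\ptmtwo$. The statement is an implication from a property of the typing context $\typctx$ to a property of the right-hand multi type $\mtype$, so for each case I would analyze the shape of a derivation $\concl{\tderiv}{\typctx}{\ptm}{\mtype}$, looking at which rule of \Cref{fig:cbvtypes} can conclude it, and then apply the inductive hypothesis to the \pointed sub-term(s). I would handle the plain \leftsh case and the unitary \leftsh case uniformly, since the two arguments are structurally identical (they differ only in which clause of \Cref{fig:shrinking} is invoked).

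First, the base case $\ptm = \var$: the only rule concluding a judgment for a variable is $\ruleAx$ (possibly one might worry about $\ruleMany$, but $\ruleAx$ is already the only way to type a bare variable and $\ruleMany$ builds a multiset over such axioms — so either the derivation is a single $\ruleAx$ giving a linear type, or I should recall that a multi judgment for $\var$ is obtained by $\ruleMany$ over copies of $\ruleAx$). In that situation $\typctx = \var \hastype \mtype$ with $\mtype$ being exactly the assigned multi type, so assuming $\typctx$ \leftsh forces $\mtype$ \leftsh, and similarly in the unitary case — the claim is immediate. For the inductive step $\ptm = \ptmtwo\,\tm$, the last rule must be $\ruleAp$, so $\mtype = \mtypetwo$ comes from a premise $\typctx' \vdash \ptmtwo \hastype \mset{\larrow{\mtype'}{\mtypetwo}}$ with $\typctx'$ a sub-context of $\typctx$ (hence still \leftsh, resp. unitary \leftsh, since sub-contexts of \leftsh contexts are \leftsh — multiset sum of \leftsh multi types being \leftsh, which I'd note or cite as routine). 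By the induction hypothesis applied to the \pointed term $\ptmtwo$, the multi type $\mset{\larrow{\mtype'}{\mtypetwo}}$ is \leftsh (resp. unitary \leftsh); unfolding the definition in \Cref{fig:shrinking}, a \leftsh multi type all of whose linear components are of the arrow shape $\larrow{\mtype'}{\mtypetwo}$ forces each such component to be a left linear type, i.e. $\mtype' $ is \emph{right} shrinking and $\mtypetwo$ is \emph{left} shrinking — so $\mtype = \mtypetwo$ is \leftsh, as required (and the unitary variant reads off the same way). The \ES case $\ptm = \ptmtwo\esub\var\ptmthree$ is concluded by $\ruleES$ from premises $\typctx_1, \var\hastype\mtype'' \vdash \ptmtwo \hastype \mtype$ and $\typctx_2 \vdash \ptmthree \hastype \mtype''$ with $\typctx = \typctx_1 \uplus \typctx_2$; here I'd first apply the induction hypothesis to $\ptmthree$ to learn $\mtype''$ is \leftsh (the context $\typctx_2$ being a sub-context of $\typctx$, hence \leftsh), so the extended context $\typctx_1, \var \hastype \mtype''$ is still \leftsh, and then a second application of the induction hypothesis to $\ptmtwo$ gives that $\mtype$ is \leftsh.

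The main obstacle I expect is the arrow-unfolding step in the application case: one must be careful that \leftsh-ness of a \emph{multi} type $\mset{\larrow{\mtype'}{\mtypetwo}}$ really does propagate correctly to $\mtypetwo$ — this works precisely because in \Cref{fig:shrinking} a left linear type of arrow shape $\larrow{\rmtype}{\lltype}$ has a \emph{right} multi type on the left and a \emph{left} linear type on the right, which is the dual bookkeeping that makes the statement about \leftsh-ness (and not \rightsh-ness) go through. The reason the lemma is restricted to \pointed terms rather than arbitrary terms is exactly visible here: for a general application $\tm\,\tmtwo$ with $\tm$ an abstraction the head need not carry an arrow-typed multi type in a way dictated by an axiom at the spine, whereas for \pointed terms the head is always a variable, so the typing context determines the head's type and the spread can be chained along the applicative spine. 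I would also flag that the unitary sub-case needs the observation that unitary right multi types are singletons $\mset{\urltype}$, which is used implicitly when we say "each linear component of the head's multi type is an arrow": in the unitary case the induction hypothesis hands us a \emph{singleton} and we only need to reason about that one component. All the remaining manipulations — sub-contexts of \leftsh contexts are \leftsh, $\ruleMany$ on variables — are routine and I would dispatch them in a line each.
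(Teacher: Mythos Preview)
Your proof is correct and follows exactly the same approach as the paper: induction on the structure of the rigid term, with the same case analysis and the same two-step use of the inductive hypothesis in the explicit-substitution case (first on $\ptmthree$ to learn the bound variable's type is \leftsh, then on $\ptmtwo$), together with the routine fact that summands of a \leftsh context are \leftsh. One small inaccuracy in your side remark: unitary \emph{left} multi types need not be singletons (only unitary \emph{right} ones are), so in the application case the singleton $\mset{\larrow{\mtype'}{\mtypetwo}}$ comes from the shape of the $\ruleAp$ rule itself, not from unitariness---but this does not affect the argument.
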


To prove correctness (\Cref{thm:correctness}) and completeness (\Cref{thm:completeness}) of shrinking derivation with respect to the external strategy, we follow the same pattern as for the open case. 

\paragraph*{Correctness}
Shrinking correctness establishes that all typable terms with a shrinking derivation $\esssym$-normalize and the multiplicative size of the derivation bounds the number of $\toessm$ steps plus the \full size of the $\vsub$-normal form; this bound is exact if the derivation is unitary shrinking. 

\newcounter{l:size-strong-fireballs}
\addtocounter{l:size-strong-fireballs}{\value{theorem}}
\begin{lemma}[Size of \full fireballs]
	\label{l:size-strong-fireballs}
	\NoteProof{lappendix:size-strong-fireballs}
	Let $\typctx$ be a \leftsh (\resp unitary \leftsh) type context, $\sfire$ be a \full fireball and $\namedtyjp{\tderiv}{}{\sfire}{\typctx}{\mtype}$ be a derivation such that if $\sfire$ is an \valES then $\mtype$ is \rightsh (\resp~unitary \rightsh).
	Then $\sizefu{\sfire} \leq \sizem{\tderiv}$ (resp.~$\sizefu{\sfire} = \sizem{\tderiv}$).
\end{lemma}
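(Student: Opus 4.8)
The plan is to prove the statement by mutual induction with an auxiliary claim about \full inert terms, since the inert shape needs a different -- and in fact stronger -- inductive hypothesis than the general one. The auxiliary claim is: \emph{if $\concl{\tderiv}{\typctx}{\sitm}{\mtype}$ with $\typctx$ \leftsh (\resp unitary \leftsh), then $\sizefu{\sitm} \leq \sizem{\tderiv}$ (\resp $\sizefu{\sitm} = \sizem{\tderiv}$)} -- crucially, here there is \emph{no} hypothesis on the right-hand type $\mtype$, nor could there be one: since every \full inert term is a \pointed term, \Cref{l:spread-shrinking} forces $\mtype$ to be \leftsh, and \leftsh types are not \rightsh in general. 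The two claims invoke each other (a \full fireball may contain inert sub-terms, and an inert term may have \full fireballs as arguments), whence the mutual induction.

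Both claims are proved by induction on the term, inspecting the last rule of the derivation, which is determined by the shape of the term: $\ruleMany$ (over instances of $\ruleAx$ or of $\ruleFun$) for a theoretical value, $\ruleAp$ for an application, $\ruleES$ for an explicit substitution. The size bookkeeping is uniform: $\ruleAp$ and $\ruleFun$ each add $1$ to $\sizem{\cdot}$ and match the $+1$ charged by $\sizefu{\cdot}$ to applications and abstractions respectively, while $\ruleAx$, $\ruleES$ and $\ruleMany$ add nothing, matching the fact that $\sizefu{\cdot}$ ignores variables and explicit substitutions. So, in each case one breaks the derivation into sub-derivations, applies the induction hypotheses, and sums; the only delicate point is verifying that the sub-derivations meet the shrinkingness requirements, which is where the duality of left/right types (\Cref{fig:shrinking}) and \Cref{l:spread-shrinking} enter.

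For the application case $\sitm = \sitm'\sfire'$ (an instance of the auxiliary claim), the function $\sitm'$ is typed with $\mset{\larrow{\mtype'}{\mtype}}$; its sub-context is \leftsh and $\sitm'$ is \pointed, so \Cref{l:spread-shrinking} makes $\mset{\larrow{\mtype'}{\mtype}}$ \leftsh, hence by duality the argument type $\mtype'$ is \rightsh, which is exactly the hypothesis needed to apply the \full-fireball claim to $\sfire'$. The $\ruleES$ cases are similar: the type of the inert content of the substitution is \leftsh by \Cref{l:spread-shrinking}, so the context typing the body stays \leftsh, and when the body is handled by the \full-fireball claim its right-hand type coincides with the one of the whole term, so the \valES-hypothesis, if active, transfers verbatim. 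The abstraction case $\sfire = \la\var\sfire'$ is the crux: here $\sfire$ is a \valES, so by hypothesis $\mtype$ is \rightsh, hence non-empty, so the $\ruleMany$ rule has at least one premise; each premise is a $\ruleFun$ whose body sub-derivation types $\sfire'$ with the codomain of a right linear type (again \rightsh by duality) under a \leftsh context, so the \full-fireball claim applies and yields $\sizem{\tderiv} = \sum_i (1 + \sizem{\tderiv_i}) \geq 1 + \sizefu{\sfire'} = \sizefu{\sfire}$. In the unitary case a unitary \rightsh multi type is a singleton, so $\ruleMany$ has \emph{exactly} one premise and the inequality becomes an equality.

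The main obstacle is getting the inductive set-up exactly right, in two respects. First, one must see that the \valES-hypothesis is there precisely to exclude the degenerate derivation of an abstraction via a zero-premise $\ruleMany$, which has multiplicative size $0$ while the term has positive \full size; without it the lemma is false. Second, one must resist imposing any analogous hypothesis in the \full-inert-term claim and instead extract \leftsh shrinkingness of every intermediate right-hand type from \Cref{l:spread-shrinking}; otherwise the mutual induction does not close, since an inert function type produced inside $\ruleAp$ is \leftsh, never \rightsh. Once these two points and the left/right duality are in place, all remaining cases are routine arithmetic on $\sizem{\cdot}$ and $\sizefu{\cdot}$.
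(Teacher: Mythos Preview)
Your proposal is correct and follows essentially the same approach as the paper. The only difference is organizational: you explicitly separate out an auxiliary claim for \full inert terms (with no hypothesis on the right-hand type), whereas the paper runs a single structural induction on the \full fireball and, in the inert cases, simply observes that an inert term is never an \valES so the conditional hypothesis on $\mtype$ is vacuously satisfied and the \ih applies directly. Both organizations rely on the same ingredients---\Cref{l:spread-shrinking} to propagate \leftsh shrinkingness through inert/rigid subterms, the left/right duality to extract \rightsh types for fireball arguments and abstraction bodies, and the non-emptiness (resp.\ singleton-ness) of \rightsh (resp.\ unitary \rightsh) multi types in the abstraction case.
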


Note that \Cref{l:size-strong-fireballs} applies in particular to any shrinking and unitary shrinking derivation of a \full fireball.
The more involved statement is required because when $\sfire = \sitm \sfiretwo$ (which is not an \valES) its multi type $\mtype$ need not be \rightsh. Indeed, the subterm $\sitm$ (which is inert and so rigid) has type $\mset{\larrow{\mtypetwo}{\mtype}}$ and the spreading of \leftsh shrinkingness (\Cref{l:spread-shrinking}) ensures that $\mset{\larrow{\mtypetwo}{\mtype}}$ is \leftsh, but it does not imply that $\mtype$ is \rightsh.

\newcounter{prop:shrinking-subject-reduction}
\addtocounter{prop:shrinking-subject-reduction}{\value{theorem}}
\begin{proposition}[Shrinking quantitative subject reduction]
	\label{prop:shrinking-subject-reduction}
	\NoteProof{propappendix:shrinking-subject-reduction}
	Let $\concl{\tderiv}{\typctx}{\tm}{\mtype}$ with $\typctx$  \leftsh (\resp unitary \leftsh).
	Assume that if $\tm$ is an \valES, then $\mtype$ is \rightsh (\resp unitary \rightsh).
	\begin{enumerate}
		\item \emph{Multiplicative step:} if $\tm \toessm \tm'$ then there is a derivation $\concl{\tderiv'}{\typctx}{\tm'}{\mtype}$ such 
that 
$\sizem{\tderiv'} \leq \sizem{\tderiv}-2$ and $\size{\tderiv'} < \size{\tderiv}$
		(\resp $\sizem{\tderiv'} = \sizem{\tderiv}-2$ and $\size{\tderiv'} = \size{\tderiv}-1$).
		\item \emph{Exponential step:} if $\tm \toesse \tm'$ then there is a derivation $\concl{\tderiv'}{\typctx}{\tm'}{\mtype}$ such that
		$\sizem{\tderiv'} = \sizem{\tderiv}$ and $\size{\tderiv'} < \size{\tderiv}$.
	\end{enumerate}
\end{proposition}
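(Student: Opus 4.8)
The plan is to prove the statement by induction on the external evaluation context $\extctx$ underlying the step (mutually with the corresponding statement for rigid contexts $\ictx$, since the two context grammars are defined by mutual induction): $\tm = \extctx[\tm_0]$, $\tm' = \extctx[\tm_0']$ with $\tm_0 \Rew{\wsym a} \tm_0'$ an open step and $a \in \set{\msym,\esym}$. The base case $\extctx = \ctxhole$ needs no shrinking hypothesis at all: there the step $\tm \Rew{\esssym a} \tm'$ is literally an open step $\tm \Rew{\wsym a}\tm'$, so open quantitative subject reduction (\Cref{prop:weak-subject-reduction}) immediately yields $\concl{\tderiv'}{\typctx}{\tm'}{\mtype}$ with the \emph{same} final judgment — hence with the same (unitary) shrinking status — and with $\sizem{\tderiv'} = \sizem{\tderiv}-2$, $\size{\tderiv'} = \size{\tderiv}-1$ when $a = \msym$, and $\sizem{\tderiv'} = \sizem{\tderiv}$, $\size{\tderiv'} < \size{\tderiv}$ when $a = \esym$. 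These equalities simultaneously meet the plain and the unitary (``resp.'') clauses.

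In the inductive step $\extctx$ is one of $\la\var\extctx'$, $\tmtwo\esub\var\ictx'$, $\extctx'\esub\var\ptm$, or a rigid context $\ictx$, itself one of $\ptm\extctx'$, $\ictx'\tmtwo$, $\ictx'\esub\var\ptm$, $\ptm\esub\var\ictx'$. In each case I would read off the last rule(s) of $\tderiv$ sitting above the occurrence of $\extctx'$ (resp.~$\ictx'$) — an $\ruleAp$, an $\ruleES$, or, for $\la\var\extctx'$, a family of $\ruleFun$'s collected by one $\ruleMany$ — thereby splitting $\typctx$ as $\typctx_1 \uplus \typctx_2$ (or $\biguplus_i\typctx_i$) and exhibiting a sub-derivation whose subject is the smaller context filled with the same redex. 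One applies the induction hypothesis to that sub-derivation and rebuilds the conclusion with the very same rule(s). Since $\ruleAp$ and $\ruleFun$ each contribute $1$ to both $\size$ and $\sizem$, $\ruleES$ contributes $1$ to $\size$ and $0$ to $\sizem$, and $\ruleMany$ contributes $0$ to both, and since none of these counts changes along the step, the size bounds of the induction hypothesis propagate additively to $\tderiv'$; the only case in which ``resp.'' exactness is not just inherited is $\la\var\extctx'$.

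The real content is verifying the hypotheses of the induction hypothesis for the chosen sub-derivation. That the relevant typing context is \leftsh (resp.~unitary \leftsh) is clear, since a submultiset of a multiset of (unitary) left linear types is again one; and when the smaller context sits in \emph{argument} position of the $\ruleAp$/$\ruleES$ above it, I also need the argument's multi type to be \leftsh, which is exactly \Cref{l:spread-shrinking} applied to the surrounding \pointed term (the argument of such an $\ruleAp$/$\ruleES$ is always a \pointed term: a rigid context plugs \pointed terms, and the right component $\ptm$ of the ES in $\extctx'\esub\var\ptm$, $\ictx'\esub\var\ptm$, $\ptm\esub\var\ictx'$ is a \pointed term). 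For the conditional ``if the subject is an \valES then its right-hand type is \rightsh'': in $\tmtwo\esub\var\ictx'$, $\ptm\extctx'$, $\ictx'\tmtwo$, $\ictx'\esub\var\ptm$, $\ptm\esub\var\ictx'$ the subject of the sub-derivation we recurse into is plugged by a rigid context, hence is a \pointed term and never an \valES, so the condition is vacuous — and in $\ptm\extctx'$ the domain of the arrow typing $\ptm$, i.e.\ the argument's type, is moreover \rightsh unconditionally by \Cref{l:spread-shrinking}, which is what that case needs; in $\extctx'\esub\var\ptm$ and $\la\var\extctx'$, if the subject of the sub-derivation is an \valES then so is $\tm$ (appending an ES to an answer, resp.~placing it under a $\la$, yields an answer), hence $\mtype$ is \rightsh (resp.~unitary \rightsh) by assumption, and this forces the relevant component types to be \rightsh. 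The unitary variants run identically, using the unitary clause of \Cref{l:spread-shrinking} and that a unitary \rightsh multi type is a singleton.

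Finally, the delicate case $\la\var\extctx'$. Since $\tm = \la\var(\extctx'[\tm_0])$ is an \valES, $\mtype = \mset{\larrow{\mtype_1}{\mtypetwo_1}, \dots, \larrow{\mtype_k}{\mtypetwo_k}}$ with $k \geq 1$, every $\mtype_i$ \leftsh and every $\mtypetwo_i$ \rightsh (resp.~$k = 1$ and $\mtype_1,\mtypetwo_1$ unitary), and $\tderiv$ ends in a $\ruleMany$ with $k$ premises, the $i$-th an $\ruleFun$ above a derivation $\tderiv_i$ of $\typctx_i, \var\hastype\mtype_i \vdash \extctx'[\tm_0] : \mtypetwo_i$, with $\typctx = \biguplus_i\typctx_i$, $\sizem{\tderiv} = \sum_i\sizem{\tderiv_i} + k$ and $\size{\tderiv} = \sum_i\size{\tderiv_i} + k$. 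Each $\typctx_i,\var\hastype\mtype_i$ is \leftsh and each $\mtypetwo_i$ is \rightsh, so the induction hypothesis applies to every $\tderiv_i$ for the step $\extctx'[\tm_0] \Rew{\esssym a}\extctx'[\tm_0']$; recombining gives $\sizem{\tderiv'} = \sum_i\sizem{\tderiv_i'} + k$ and $\size{\tderiv'} = \sum_i\size{\tderiv_i'} + k$. For a multiplicative step, $\sizem{\tderiv_i'} \leq \sizem{\tderiv_i} - 2$ and $\size{\tderiv_i'} < \size{\tderiv_i}$ for each $i$, so $\sizem{\tderiv'} \leq \sizem{\tderiv} - 2k \leq \sizem{\tderiv}-2$ and $\size{\tderiv'} < \size{\tderiv}$; in the unitary regime $k = 1$ and the unitary induction hypothesis gives the exact $\sizem{\tderiv_1'} = \sizem{\tderiv_1}-2$, $\size{\tderiv_1'} = \size{\tderiv_1}-1$, whence $\sizem{\tderiv'} = \sizem{\tderiv}-2$ and $\size{\tderiv'} = \size{\tderiv}-1$. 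The exponential step is analogous, with $\sizem$ left unchanged throughout. The main obstacle is exactly this $\ruleMany$-bookkeeping under $\la$: it is the sole source of non-exactness in the plain regime (one pays $2$ per copy of the body) and the only place where the ``\valES $\Rightarrow$ \rightsh'' assumption is genuinely used, as it is what lets the shrinking data descend into all the copies; everything else reduces to routine rule-inversion once \Cref{l:spread-shrinking} and the observation that a term plugged by a rigid context is never an \valES take care of the side conditions.
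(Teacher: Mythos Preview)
Your proposal is correct and follows essentially the same approach as the paper: induction on the external/rigid evaluation context, base case via open quantitative subject reduction, inductive cases by rule inversion plus \Cref{l:spread-shrinking} to propagate (unitary) \leftsh{}ness to sub-derivations, with the observation that rigid-context-plugged subjects are never answers rendering the conditional hypothesis vacuous there. The paper organizes the argument slightly differently---proving the unitary version first and then noting that the non-unitary version is identical except for the abstraction case, which it redoes---whereas you interleave the two; but the case analysis, the use of spreading, and the identification of $\la\var\extctx'$ as the sole source of non-exactness (via the $k$ copies under $\ruleMany$) are the same.
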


\begin{theorem}[Shrinking correctness]
	\label{thm:correctness}
	\NoteProof{thmappendix:correctness}
	Let $\concl{\tderiv}{\typctx}{\tm}{\mtype}$ be a shrinking (\resp unitary shrinking) derivation.
	Then there are a $\vsub$-normal form $\tm'$ and an evaluation $\deriv \colon \tm \tovsubs^* \tm'$ with $2\sizem{\deriv} + 
\sizefu{\tm'} \leq \sizem{\tderiv}$ (\resp $2\sizem{\deriv} + \sizefu{\tm'} = \sizem{\tderiv}$).
\end{theorem}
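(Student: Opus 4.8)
The plan is to replay, for shrinking derivations and the external strategy $\tovsubs$, the two-stage scheme already used for the open case (\Cref{thm:open-correctness}): quantitative subject reduction, plus a bound on the size of normal forms. Both ingredients are in place. Quantitative subject reduction for shrinking derivations (\Cref{prop:shrinking-subject-reduction}) says that along any $\tovsubs$ step the general size $\size{\tderiv}$ \emph{strictly} decreases, while the multiplicative size $\sizem{\tderiv}$ drops by exactly $2$ on a $\toessm$ step and is unchanged on a $\toesse$ step; crucially it also fixes the typing context $\typctx$ and the right-hand type $\mtype$, so shrinkingness (\resp unitary shrinkingness) is preserved and the induction never leaves the shrinking class. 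The size-of-normal-forms bound (\Cref{l:size-strong-fireballs}) says that for a \full fireball $\sfire$ and a shrinking (\resp unitary shrinking) derivation $\concl{\tderiv}{\typctx}{\sfire}{\mtype}$ one has $\sizefu{\sfire} \leq \sizem{\tderiv}$ (\resp $=$); note that a shrinking derivation does meet the hypotheses of that lemma, since its right-hand type being \rightsh covers in particular the case in which $\sfire$ is an \valES.

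The proof then goes by induction on $\size{\tderiv}$. If $\tm$ is $\esssym$-normal, it is $\vsub$-normal by Fullness (\Cref{prop:external-properties}), hence a \full fireball (\Cref{prop:properties-full-reduction}); take $\tm' \defeq \tm$ and the empty evaluation $\deriv$, so $\sizem{\deriv} = 0$, and conclude by \Cref{l:size-strong-fireballs}. Otherwise, fire one step $\tm \tovsubs \tmtwo$ (by diamondness of $\tovsubs$, \Cref{prop:external-properties}, the choice is immaterial for the final statement), apply \Cref{prop:shrinking-subject-reduction} --- whose hypotheses hold because $\tderiv$ is shrinking --- to obtain a shrinking (\resp unitary shrinking) derivation $\concl{\tderivtwo}{\typctx}{\tmtwo}{\mtype}$ with $\size{\tderivtwo} < \size{\tderiv}$, and invoke the induction hypothesis on $\tmtwo$, getting $\derivp \colon \tmtwo \tovsubs^* \tm'$ with $\tm'$ $\vsub$-normal and $2\sizem{\derivp} + \sizefu{\tm'} \leq \sizem{\tderivtwo}$ (\resp $=$). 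Prepending the fired step: if it is a $\msym$-step then $\sizem{\deriv} = \sizem{\derivp} + 1$ and $\sizem{\tderivtwo} = \sizem{\tderiv} - 2$, so $2\sizem{\deriv} + \sizefu{\tm'} = 2\sizem{\derivp} + \sizefu{\tm'} + 2 \leq \sizem{\tderivtwo} + 2 = \sizem{\tderiv}$; if it is an $\esym$-step then $\sizem{\deriv} = \sizem{\derivp}$ and $\sizem{\tderivtwo} = \sizem{\tderiv}$, and the conclusion is immediate. In the unitary case every inequality above is an equality, giving the exact count $2\sizem{\deriv} + \sizefu{\tm'} = \sizem{\tderiv}$.

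The main obstacle is not located here: all the delicate reasoning --- the interplay of left/right (unitary) shrinkingness, the spreading lemma (\Cref{l:spread-shrinking}), and the \valES corner cases, notably the fact that the multi type of an inert application need not be \rightsh --- has already been absorbed into \Cref{prop:shrinking-subject-reduction} and \Cref{l:size-strong-fireballs}. What remains at this level is only bookkeeping: that (full) shrinkingness of a derivation implies the weaker hypothesis those two statements actually require (which constrains the right-hand type only when the term is an \valES), that shrinkingness is stable along $\tovsubs$ because subject reduction fixes the final judgment, and that Fullness lets us upgrade the $\esssym$-normal form reached when $\size{\tderiv}$ bottoms out into the $\vsub$-normal form demanded by the statement.
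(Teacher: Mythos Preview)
Your proof is correct and follows essentially the same approach as the paper: induction on $\size{\tderiv}$, with the base case handled by Fullness plus \Cref{l:size-strong-fireballs}, and the inductive case handled by one step of $\tovsubs$ plus \Cref{prop:shrinking-subject-reduction}. One minor imprecision: for the non-unitary shrinking case, \Cref{prop:shrinking-subject-reduction} only gives $\sizem{\tderivtwo} \leq \sizem{\tderiv} - 2$ on a $\msym$-step (not equality), so your chain should end with $\sizem{\tderivtwo} + 2 \leq \sizem{\tderiv}$ rather than $=$; the overall inequality still goes through in the right direction, and your final remark about the unitary case correctly recovers the exact count.
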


\myparagraph{Completeness} 
Shrinking completeness establishes that every $\esssym$-normalizing term is typable with a unitary shrinking derivation whose multiplicative size is exactly the number of $\toms$ steps plus the \full size of the $\vsub$-normal form. 

\begin{lemma}[Shrinking typability of normal forms]
	\label{prop:typability-normal}
	\NoteProof{propappendix:typability-normal}
	\begin{enumerate}
		\item\label{p:typability-normal-inert}
		\emph{Inert:}
		for any \full inert term $\sitm$ and \leftsh multi type $\mtype$,
		there is a \leftsh type context $\typctx$ 
		and a derivation $\concl{\tderiv}{\typctx}{\sitm}{\mtype}$.
		If moreover $\mtype$ is unitary, then so is $\typctx$.
		\item\label{p:typability-normal-fireball}
		\emph{Fireball:}
		for every \full fireball $\sfire$ 
		there exists a unitary shrinking derivation $\concl{\tderiv}{\typctx}{\sfire}{\mtype}$.
	\end{enumerate}
\end{lemma}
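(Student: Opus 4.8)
The plan is to prove the two items by a single simultaneous induction on the structure of the $\vsub$-normal form, exploiting the mutual definition of \full inert terms, \full values, and \full fireballs. The inert case (item 1) will carry the stronger inductive hypothesis — it holds for \emph{every} \leftsh multi type $\mtype$ supplied as a target, producing a \leftsh typing context (unitary whenever $\mtype$ is) — while the fireball case (item 2) only asserts existence of \emph{some} unitary shrinking derivation. I would therefore first dispatch item 1 in full generality, then use it to feed the subcases of item 2.

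For item 1, I proceed by induction on the \full inert term $\sitm$, following its grammar $\sitm \grameq \var \mid \sitm' \sfire \mid \sitm' \esub\var{\sitmtwo}$. If $\sitm = \var$: given a \leftsh $\mtype = \mset{\lltype_1,\dots,\lltype_n}$, build $n$ copies of the axiom rule $\ruleAx$ assigning $\lltype_i$ to $\var$, combine them with $\ruleMany$ (variables are theoretical values), getting the context $\var\hastype\mtype$, which is \leftsh by hypothesis and unitary if $\mtype$ is. If $\sitm = \sitm' \sfire$: since the final type $\mtype$ must come from an $\ruleAp$ rule with left premise typing $\sitm'$ with $\mset{\larrow{\mtypetwo}{\lltype}_j}_{j}$ reassembled appropriately, I choose for $\sitm'$ the \leftsh target $\mset{\larrow{\mtype_{\sfire}}{\rltype_j}}_j$ where each $\rltype_j$ is a right-linear type from the decomposition of $\mtype$ and $\mtype_{\sfire}$ is a \leftsh multi type — the subtlety is that $\mset{\larrow{\mtype_{\sfire}}{\rltype_j}}$ is \leftsh exactly when $\mtype_{\sfire}$ is \emph{right} shrinking and $\rltype_j$ is left shrinking, so I must pick the pieces so that this matches; here the spreading lemma (Lemma~\ref{l:spread-shrinking}, noting every \full inert term is a \pointed term) tells me that the type actually derivable for $\sitm'$ is \leftsh, and conversely I can start from a \leftsh target for $\sitm'$ of the required arrow shape. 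By IH on $\sitm'$ I get a \leftsh context $\typctx'$; for each premise I type $\sfire$ with the \leftsh (hence possibly empty) $\mtype_{\sfire}$ — if $\mtype_{\sfire} = \emptymset$ I invoke $\ruleMany$ with zero premises when $\sfire$ is a value, and otherwise I need item 2 (or rather its \full-fireball version applied with the appropriate target) to type $\sfire$. Then $\ruleAp$ recombines, and $\typctx = \typctx' \uplus (\bigmplus \typctx_{\sfire})$ is \leftsh, unitary if everything upstream was. If $\sitm = \sitm' \esub\var{\sitmtwo}$: apply IH to $\sitm'$ with target $\mtype$, read off the multi type $\mtypetwo$ assigned to $\var$ in the resulting \leftsh context (it is \leftsh, being a component of a \leftsh context), apply IH to the \full inert term $\sitmtwo$ with target $\mtypetwo$, and glue with $\ruleES$.

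For item 2, induct on the \full fireball $\sfire \grameq \sitm \mid \sval \mid \sfire' \esub\var\sitm$. If $\sfire = \sitm$ is \full inert: apply item 1 with the target $\mset{\ground}$, which is both unitary \leftsh and unitary \rightsh; the resulting context is unitary \leftsh, so the derivation is unitary shrinking. If $\sfire = \sval = \la\var{\sfire'}$: by IH on $\sfire'$ there is a unitary shrinking $\concl{\tderiv'}{\typctx',\var\hastype\mtype}{\sfire'}{\mset\rltype}$ (reading $\mtype$ off the context, unitary \leftsh); apply $\ruleFun$ to get $\la\var{\sfire'}$ with type $\mset{\larrow{\mtype}{\rltype}}$ wrapped by $\ruleMany$ with a single premise — the target multi type is $\mset{\larrow{\mtype}{\rltype}}$, which is unitary \rightsh because $\rltype$ is \rightsh and $\mtype$ is \leftsh, and the context $\typctx'$ is unitary \leftsh. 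If $\sfire = \sfire' \esub\var\sitm$: by IH on $\sfire'$ get a unitary shrinking derivation, extract $\var$'s type $\mtype$ (unitary \leftsh), apply item 1 to the \full inert term $\sitm$ with target $\mtype$ (legal since $\mtype$ is \leftsh, and it yields a unitary \leftsh context because $\mtype$ is unitary), and combine via $\ruleES$; the merged context is unitary \leftsh and the right-hand type is unchanged, hence still unitary \rightsh.

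**Main obstacle.** The delicate point — exactly the one flagged in the discussion preceding Lemma~\ref{l:size-strong-fireballs} — is the application case $\sitm = \sitm' \sfire$ in item 1: when I hand $\sitm'$ a \leftsh target of arrow shape $\mset{\larrow{\mtype_{\sfire}}{\rltype}}$, I must check that \emph{the argument multi type $\mtype_{\sfire}$ lands on the correct polarity}, namely it must be \emph{right} shrinking for the arrow to be \emph{left} shrinking, so that when I then need to type $\sfire$ with $\mtype_{\sfire}$ I am invoking item 2 with a \emph{right}-shrinking target rather than a left-shrinking one. Reconciling this polarity flip cleanly is why the inert statement must quantify over all \leftsh targets (giving the flexibility to choose the arrow decomposition) and why the fireball statement and the inert statement must be proved together rather than sequentially; I expect this interplay, plus the bookkeeping that $\uplus$ of \leftsh (\resp unitary \leftsh) contexts stays \leftsh (\resp unitary \leftsh), to be where the real care is needed, the rest being routine rule-pushing.
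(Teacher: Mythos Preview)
Your overall architecture---a simultaneous induction with item~1 carrying the universally quantified target---is exactly the paper's. But the application case $\sitm = \sitm' \sfire$ in item~1 is where your proposal goes off track, and the ``main obstacle'' you flag dissolves once you reverse the order in which you build the two premises.

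You propose to first fix a \leftsh arrow target for $\sitm'$ and then try to type the argument $\sfire$ with whatever domain $\mtype_{\sfire}$ that target imposes. This cannot work, because item~2 as stated gives you no control over the right-hand type of a fireball: it only promises \emph{some} unitary shrinking derivation, not one for a prescribed type. The paper simply swaps the order. First invoke item~2 on the fireball $\sfire$, obtaining $\concl{\tderivtwo}{\typctxtwo}{\sfire}{\mtypetwo}$ with $\typctxtwo$ unitary \leftsh and $\mtypetwo$ unitary \rightsh. Then, since $\mtypetwo$ is \rightsh and the given target $\mtype$ is \leftsh (\resp unitary \leftsh), the singleton $\mset{\larrow{\mtypetwo}{\mtype}}$ is \leftsh (\resp unitary \leftsh), and item~1 applied to $\sitm'$ with \emph{this} target produces the left premise. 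One $\ruleAp$ rule finishes the case; no polarity obstacle arises.

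A related confusion: your talk of decomposing $\mtype$ into linear components $\rltype_j$ and typing ``each premise'' is a misreading of the $\ruleAp$ rule---its conclusion carries a multi type directly, so there is a single application rule and no per-component decomposition. The remaining cases in your sketch (variable, explicit substitutions, abstraction) are correct and coincide with the paper's proof.
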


The proof of \Cref{prop:typability-normal} relies on 
the ground type $\ground$, which is both unitary \leftsh and unitary \rightsh. 
In this way, \refpoint{typability-normal-inert} (needed to have a stronger \ih) can be used to~prove~\refpoint{typability-normal-fireball}.

\begin{proposition}[Shrinking quantitative subject expansion]
	\label{prop:shrinking-subject-expansion}
	\NoteProof{propappendix:shrinking-subject-expansion}
	Let $\concl{\tderiv'}{\typctx}{\tm'}{\mtype}$ with $\typctx$ unitary \leftsh.
	Assume that if $\tm'$ is an \valES then $\mtype$ is unitary \rightsh.
	\begin{enumerate}
		\item \emph{Multiplicative step:} if $\tm \toessm \tm'$ then there is a derivation $\concl{\tderiv}{\typctx}{\tm}{\mtype}$ such that
		$\sizem{\tderiv'} \leq \sizem{\tderiv}-2$ and $\size{\tderiv'} < \size{\tderiv}$
		(\resp $\sizem{\tderiv'} = \sizem{\tderiv}-2$ and $\size{\tderiv'} = \size{\tderiv} - 1$);
		\item \emph{Exponential step:} if $\tm \toesse \tm'$ then there is a derivation $\concl{\tderiv}{\typctx}{\tm}{\mtype}$ such that
		$\sizem{\tderiv'} = \sizem{\tderiv}$ and $\size{\tderiv'} < \size{\tderiv}$.
	\end{enumerate}
\end{proposition}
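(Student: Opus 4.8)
The plan is to mirror the proof of shrinking quantitative subject reduction (\Cref{prop:shrinking-subject-reduction}), run backwards. By definition of $\toessm$ (\resp $\toesse$), a step $\tm \toessm \tm'$ (\resp $\tm \toesse \tm'$) decomposes as $\tm = \extctxp{\tm_0}$ and $\tm' = \extctxp{\tm_0'}$ for some external context $\extctx$ and an open step $\tm_0 \towm \tm_0'$ (\resp $\tm_0 \towe \tm_0'$), so I would argue by induction on $\extctx$, mutually with rigid contexts: when $\extctx$ is a rigid context $\ictx$, I would case-split on the shape of $\ictx$ and recurse on the strictly smaller external/rigid context in its production. The key structural observation is that the construction below never changes the type context $\typctx$ nor the type $\mtype$ of the final judgment; hence the shrinking hypotheses ($\typctx$ unitary \leftsh, and $\mtype$ unitary \rightsh whenever the typed term is an \valES) are carried along the induction essentially for free, and the only work is to check that each recursive call still meets them and to recombine sub-derivations with the right size bookkeeping. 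Both the exact ($=$) and the inequality ($\leq$) versions follow from the same case analysis, using the appropriate unitary or non-unitary type classes; I describe the exact one.

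\emph{Base case} $\extctx = \ctxhole$. Then $\tm \towm \tm'$ or $\tm \towe \tm'$ is already an open step, and I would directly invoke open quantitative subject expansion (\Cref{prop:weak-subject-expansion}), which returns $\concl{\tderiv}{\typctx}{\tm}{\mtype}$ with exactly $\sizem{\tderiv} = \sizem{\tderiv'} + 2$ and $\size{\tderiv} = \size{\tderiv'} + 1$ in the multiplicative case, and $\sizem{\tderiv} = \sizem{\tderiv'}$, $\size{\tderiv} > \size{\tderiv'}$ in the exponential case --- which is exactly what is claimed. In particular, the delicate root exponential step (which needs anti-substitution, \Cref{l:anti-substitution}, together with a small lemma on (un)merging substitution contexts around a derivation) is already packaged inside \Cref{prop:weak-subject-expansion}. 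One only notes that an open step maps an \valES to an \valES, so the \valES side-condition is not lost when the hypothesis is passed down the induction.

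\emph{Inductive cases.} For each context production I would peel off the matching last rule(s) of $\tderiv'$, apply the induction hypothesis to the unique sub-derivation typing the subterm containing $\tm_0'$, and reassemble; since the surrounding rules contribute equally to $\sizem$ and to $\size$ of $\tderiv'$ and of $\tderiv$, the increments of the sub-derivation become the increments of $\tderiv$ over $\tderiv'$. The productions $\tm\esub\var\ictx$, $\extctx\esub\var\ptm$, $\ptm\extctx$, $\ictx\tm$, $\ictx\esub\var\ptm$, $\ptm\esub\var\ictx$ all decompose via $\ruleAp$ or $\ruleES$; here spreading of \leftsh shrinkingness (\Cref{l:spread-shrinking}) does the job of validating the recursive call's hypotheses --- it turns a \leftsh typing context into a \leftsh type for a pointed subterm $\ptm$, hence into a \rightsh type for the argument of $\ptm$, which is precisely the \valES side-condition the recursion may need; subterms that are merely carried along (non-reduced arguments, ES-contents that are rigid and so never an \valES) need no check at all. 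The one genuinely new case is $\extctx = \la\var\extctx'$: here $\tm'$ is an abstraction, hence an \valES, so the hypothesis forces $\mtype$ to be unitary \rightsh, i.e.\ a singleton, so the $\ruleMany$ closing $\tderiv'$ has exactly one premise --- a $\ruleFun$ whose premise types the body with $\typctx$ extended by $\var$ bound to the (unitary \leftsh) domain of the arrow and with the (unitary \rightsh) codomain as type. I would apply the induction hypothesis there and rebuild with one $\ruleFun$ and a one-premise $\ruleMany$; as $\ruleMany$ counts in neither size and $\ruleFun$ in both, the $+2/+1$ increments pass through unchanged.

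\emph{Main obstacle.} The size arithmetic is routine; as in \Cref{prop:shrinking-subject-reduction}, the delicate point is keeping the unitary shrinking invariants alive through the mutual induction: the alternation of left/right polarity between the typing context and the typed type, the fact that unitary \rightsh-ness of the conclusion forces the $\ruleMany$ under an abstraction to be single-premise (so that exactly one copy of the body is measured, which is what makes the bound exact rather than merely an inequality), and the verification --- especially under $\la\var\extctx'$ and inside arguments of pointed terms --- that every sub-judgment fed to the induction hypothesis still satisfies the \valES side-condition. Once the root steps are delegated to open subject expansion, no idea beyond this bookkeeping is required.
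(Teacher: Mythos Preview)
Your proposal is correct and follows essentially the same approach as the paper: induction on the external (and rigid) evaluation context, with the hole case delegated to open quantitative subject expansion (\Cref{prop:weak-subject-expansion}), the abstraction case forcing a single-premise $\ruleMany$ via unitary \rightsh-ness of $\mtype$, and all remaining cases handled by peeling off one $\ruleAp$/$\ruleES$, invoking spreading of \leftsh shrinkingness (\Cref{l:spread-shrinking}) on the rigid subterm to validate the recursive call, and reassembling. One minor remark: in the hole case the \valES side-condition is simply not needed (open subject expansion has no such hypothesis), so your comment there about it being ``not lost'' is superfluous rather than wrong.
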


\begin{theorem}[Shrinking completeness]
	\label{thm:completeness}
	\NoteProof{thmappendix:completeness}
	Let $\deriv \colon \tm \tovsubs^* \tm'$ be an evaluation with $\tm'$ $\vsub$-normal. 
	Then there is a unitary shrinking derivation $\concl{\tderiv}{\typctx}{\tm}{\mtype}$ 
	with \mbox{$2\sizem{\deriv} + \sizefu{\tm'} = \sizem{\tderiv}$}.
\end{theorem}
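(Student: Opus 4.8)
The plan is to prove the statement by induction on the length $k$ of the evaluation $\deriv \colon \tm \tovsubs^* \tm'$, following the same scheme as open completeness (\Cref{thm:open-completeness}): the base case rests on typability of normal forms plus the size-of-fireballs lemma, and the inductive step on quantitative subject expansion.

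For the base case $k = 0$ I would argue that $\tm = \tm'$ is $\vsub$-normal, hence a \full fireball by \Cref{prop:properties-full-reduction}; then \Cref{prop:typability-normal} (fireball case) supplies a unitary shrinking derivation $\concl{\tderiv}{\typctx}{\tm}{\mtype}$. Since $\tderiv$ is unitary shrinking, $\typctx$ is unitary \leftsh and $\mtype$ is unitary \rightsh, so the side condition on \valESs in \Cref{l:size-strong-fireballs} is satisfied and that lemma gives $\sizefu{\tm} = \sizem{\tderiv}$; as $\sizem{\deriv} = 0$, this is exactly $2\sizem{\deriv} + \sizefu{\tm'} = \sizem{\tderiv}$.

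For the inductive step $k \geq 1$ I would split $\deriv$ into a first step $\tm \tovsubs \tmtwo$ — which is either a $\toessm$ step or a $\toesse$ step — followed by an evaluation $\derivp \colon \tmtwo \tovsubs^* \tm'$ of length $k-1$ to the same $\vsub$-normal form. The induction hypothesis yields a unitary shrinking $\concl{\tderivp}{\typctx}{\tmtwo}{\mtype}$ with $2\sizem{\derivp} + \sizefu{\tm'} = \sizem{\tderivp}$. Being unitary shrinking, $\tderivp$ meets the hypotheses of \Cref{prop:shrinking-subject-expansion} (its conclusion being unitary \rightsh discharges the \valES proviso there), so that proposition pulls the typing back along the first step to a derivation $\concl{\tderiv}{\typctx}{\tm}{\mtype}$, which has the same final judgment and is therefore again unitary shrinking. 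It then remains to check the arithmetic: for a multiplicative step $\sizem{\tderiv} = \sizem{\tderivp} + 2$ and $\sizem{\deriv} = \sizem{\derivp} + 1$, while for an exponential step $\sizem{\tderiv} = \sizem{\tderivp}$ and $\sizem{\deriv} = \sizem{\derivp}$; in both cases substituting the induction hypothesis gives $\sizem{\tderiv} = 2\sizem{\deriv} + \sizefu{\tm'}$.

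Once the preparatory results are available, this is essentially bookkeeping, so I do not expect a genuine obstacle at this level. The one point that must be handled with care is why \emph{unitary} shrinkingness — not plain shrinkingness — is the invariant threaded through the induction: it is precisely the unitary conclusion that turns the inequalities of the general subject-expansion statement (\Cref{prop:qual-subject}) into the exact equalities of \Cref{prop:shrinking-subject-expansion}, and that automatically discharges the ``\valES $\Rightarrow$ \rightsh'' side conditions occurring both there and in \Cref{l:size-strong-fireballs}. One must also keep the two overloaded uses of $\sizem{\cdot}$ — the number of $\msym$-steps of an evaluation and the multiplicative size of a derivation — in lockstep, which is exactly what the $+2$ (multiplicative) and $+0$ (exponential) pattern of \Cref{prop:shrinking-subject-expansion} guarantees.
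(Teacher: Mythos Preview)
Your proposal is correct and matches the paper's own proof essentially line for line: induction on the length of $\deriv$, base case via \Cref{prop:typability-normal} and \Cref{l:size-strong-fireballs}, inductive step via \Cref{prop:shrinking-subject-expansion} with the same multiplicative/exponential case split and arithmetic. The only cosmetic difference is that the paper cites \Cref{prop:external-properties} (fullness) rather than \Cref{prop:properties-full-reduction} to conclude that the normal form is a \full fireball, which is immaterial since $\tm'$ is $\vsub$-normal by hypothesis.
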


\section{Normalization and Adequacy}
\label{sect:normalization}

In our multi type system, subject reduction and expansion hold for the whole reduction $\tovsub$, without quantitative information (\Cref{prop:qual-subject}). 
Combining this with shrinking correctness and completeness---which hold only for the external strategy---we obtain two notable results: 1) a \emph{normalization} theorem for \VSC, by exploiting an elegant proof technique used in \cite{DBLP:journals/tcs/CarvalhoPF11,DBLP:journals/pacmpl/MazzaPV18}; 2) we provide an \emph{adequate} model for the whole \VSC. 

\begin{theorem}[Operational and semantic characterizations of normalization]
	\label{thm:normalization}
	\NoteProof{thmappendix:normalization}
	Let $\tm$ be a term. The following are equivalent:
	\begin{enumerate}
		\item there is a $\vsub$-normal form $\tmtwo$ such that $\tm \tovsub^* \tmtwo$;
		\item $\tm$ is $\esssym$-normalizing;
		\item $\semfull{\tm}_{\vec{\var}} \neq \emptyset$, where $\vec{\var} = (\var_1, \dots, \var_n)$ is suitable for $\tm$ and 
	\end{enumerate}
\begin{equation*}
\begin{split}
\semfull{\tm}_{\vec{\var}} \defeq \{((\mtypetwo_1&,\dots, \mtypetwo_n),\mtype) \mid \exists 
\, \concl{\tderiv}{\var_1 \hastype \mtypetwo_1, \dots, \var_n \hastype \mtypetwo_n}{\tm}{\mtype} 
\mbox{ such that $\mtypetwo_1, \dots, \mtypetwo_n$ are \leftsh and $\mtype$ is \rightsh} \} .
\end{split}
\end{equation*}
\end{theorem}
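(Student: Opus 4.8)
The plan is to establish the three equivalences through the cycle $(2)\Rightarrow(1)\Rightarrow(3)\Rightarrow(2)$, using the external strategy as the bridge between the full reduction $\tovsub$ and the shrinking fragment of the relational model. The conceptual point is that qualitative subject reduction/expansion (\Cref{prop:qual-subject}) is available for \emph{all} of $\tovsub$, whereas the quantitative correctness theorem is available only for $\tovsubs$; combining the two is exactly the technique of \cite{DBLP:journals/tcs/CarvalhoPF11,DBLP:journals/pacmpl/MazzaPV18} recalled before the statement.

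The implication $(2)\Rightarrow(1)$ is immediate. A routine inspection of the grammars shows that every external or rigid context is a \full context, so $\tovsubs\,\subseteq\,\tovsub$; hence from an $\esssym$-normalizing evaluation $\tm\tovsubs^*\tm'$ we get a $\vsub$-evaluation, and by fullness (\Cref{prop:external-properties}) the $\esssym$-normal form $\tm'$ is $\vsub$-normal. So $\tmtwo\defeq\tm'$ witnesses $(1)$.

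For $(1)\Rightarrow(3)$ I would start from a $\vsub$-evaluation $\tm\tovsub^*\tmtwo$ with $\tmtwo$ $\vsub$-normal. By \Cref{prop:properties-full-reduction}, $\tmtwo$ is a \full fireball, so the fireball case of \Cref{prop:typability-normal} gives a unitary shrinking---hence shrinking---derivation $\concl{\tderiv}{\typctx}{\tmtwo}{\mtype}$; that is, $\typctx$ is \leftsh and $\mtype$ is \rightsh. The key observation is that shrinkingness of a derivation depends only on its \emph{final judgment}, which is left unchanged by the qualitative subject expansion of \Cref{prop:qual-subject}; iterating it along the finitely many steps of $\tm\tovsub^*\tmtwo$ yields a shrinking derivation $\concl{\tderivtwo}{\typctx}{\tm}{\mtype}$ with the very same final judgment. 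Fixing $\vec{\var}=(\var_1,\dots,\var_n)$ suitable for $\tm$ and using $\dom{\typctx}\subseteq\fv{\tm}\subseteq\{\var_1,\dots,\var_n\}$, we may write $\typctx=\var_1\hastype\mtypetwo_1,\dots,\var_n\hastype\mtypetwo_n$ with each $\mtypetwo_i=\typctx(\var_i)$ \leftsh (those with $\var_i\notin\dom{\typctx}$ equal $\zero$, which is \leftsh), so $((\mtypetwo_1,\dots,\mtypetwo_n),\mtype)\in\semfull{\tm}_{\vec{\var}}$, which is therefore non-empty.

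For $(3)\Rightarrow(2)$, unfolding $\semfull{\tm}_{\vec{\var}}\neq\emptyset$ produces a derivation $\concl{\tderiv}{\var_1\hastype\mtypetwo_1,\dots,\var_n\hastype\mtypetwo_n}{\tm}{\mtype}$ with all $\mtypetwo_i$ \leftsh and $\mtype$ \rightsh, \ie a shrinking derivation; shrinking correctness (\Cref{thm:correctness}) then gives an evaluation $\tm\tovsubs^*\tm'$ with $\tm'$ $\vsub$-normal, and fullness (\Cref{prop:external-properties}) upgrades ``$\vsub$-normal'' to ``$\esssym$-normal'', so $\tm$ is $\esssym$-normalizing. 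I do not expect a real obstacle; the single point requiring care is exactly the asymmetry exploited above---typability propagates backwards along arbitrary $\vsub$-steps only via the qualitative \Cref{prop:qual-subject}, while the existence of a \emph{normalizing} evaluation is recovered only for the external strategy, through \Cref{thm:correctness}.
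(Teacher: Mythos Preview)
Your proof is correct and follows essentially the same route as the paper: the cycle $(2)\Rightarrow(1)$ via $\tovsubs\subseteq\tovsub$ and fullness, $(1)\Rightarrow(3)$ via shrinking typability of \full fireballs (\Cref{prop:typability-normal}) plus qualitative subject expansion (\Cref{prop:qual-subject}) along the $\vsub$-evaluation, and $(3)\Rightarrow(2)$ via shrinking correctness (\Cref{thm:correctness}). The only (harmless) redundancy is your appeal to fullness in $(3)\Rightarrow(2)$: once \Cref{thm:correctness} gives a $\vsub$-normal $\tm'$, it is automatically $\esssym$-normal because $\tovsubs\subseteq\tovsub$.
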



This is our main qualitative result for \Full \cbv. 
Quantitatively, we provided bounds from type derivation (kind 1). 

%

\begin{figure*}[t!]
\begin{equation*}
	\resizebox{0.85\hsize}{!}{
			\begin{prooftree}[separation=1em]
			\hypo{}
			\infer1[\footnotesize$\ruleAx$]{\var\hastype \mset{\larrow{\mset{\ground^2}}{\mset{\ground^2}}} \vdash \var\hastype \larrow{\mset{\ground^2}}{\mset{\ground^2}} }
			\infer1[\footnotesize$\ruleManyVar$]{\var\hastype \mset{\larrow{\mset{\ground^2}}{\mset{\ground^2}}} \vdash \var\hastype \mset{\larrow{\mset{\ground^2}}{\mset{\ground^2}}} }	
			\hypo{}
			\infer1[\footnotesize$\ruleAx$]{\var\hastype \mset{\ground^2} \vdash \var\hastype {\ground^2} }
			\infer1[\footnotesize$\ruleManyVar$]{\var\hastype \mset{\ground^2} \vdash \var\hastype \mset{\ground^2} }	
			\infer2[\footnotesize$\ruleAp$]{\var\hastype \mset{\mset{\larrow{\mset{\ground^2}}{\mset{\ground^2}}, {\ground^2}}} \vdash \var\var \hastype \mset{\ground^2}}
			\infer1[\footnotesize$\ruleFun$]{\vdash \la{\var}\var\var \hastype \larrow{\mset{\mset{\larrow{\mset{\ground^2}}{\mset{\ground^2}}, {\ground^2}}}}{\mset{\ground^2}}}
			\infer1[\footnotesize$\ruleManyVal$]{\vdash \la{\var}\var\var \hastype \mset{\larrow{\mset{\mset{\larrow{\mset{\ground^2}}{\mset{\ground^2}}, {\ground^2}}}}{\mset{\ground^2}}}}
			
						\hypo{}
			\infer1[\footnotesize$\ruleAx$]{\vartwo\hastype \mset{\ground^2} \vdash \vartwo\hastype {\ground^2} }
			\infer1[\footnotesize$\ruleManyVar$]{\vartwo\hastype \mset{\ground^2} \vdash \vartwo\hastype \mset{\ground^2} }	
			\infer1[\footnotesize$\ruleFun$]{\vdash \la{\vartwo}\vartwo \hastype \larrow{\mset{\ground^2}}{\mset{\ground^2}}}			
			
			\hypo{}
			\infer1[\footnotesize$\ruleAx$]{\vartwo\hastype \mset\ground \vdash \vartwo\hastype \ground }
			\infer1[\footnotesize$\ruleManyVar$]{\vartwo\hastype \mset\ground \vdash \vartwo\hastype \mset\ground }	
			\infer1[\footnotesize$\ruleFun$]{\vdash \la{\vartwo}\vartwo \hastype \larrow{\mset\ground}{\mset\ground}}
			\infer2[\footnotesize$\ruleManyVal$]{\vdash \la{\vartwo}\vartwo \hastype \mset{\larrow{\mset{\ground^2}}{\mset{\ground^2}}, \ground^2}}
			\infer2[\footnotesize$\ruleAp$]{\vdash \delta\Id \hastype \mset{\ground^2}}
			\end{prooftree}
		}
			\end{equation*}
\caption{Unitary shrinking derivation $\tderivtwo_{\delta\Id}$ of minimal type for $\delta\Id$, where $\ground^2 \defeq \larrow{\mset\ground}{\mset\ground}$.}
\label{fig:composed-deriv}
\end{figure*}

\section{Bounds of Kind 2 and 3, Revisited}
\label{sect:semantic-bounds}

Up to now, we studied bounds of kind 1 (bounds from type derivations), now we turn to bounds of kind 2 and 3 for the external strategy. 
We start with kind 2 (bounds from types), use them to understand kind 3 (bounds from composable types), and then use kind 3 to obtain exact bounds of kind 2.

\paragraph*{Types Size and Bounds from Types} With multi types, not only type derivations but also multi types provide quantitative information, in this case on the size of normal forms. 
To see it, we have to define a size for types and type contexts, which is simply the number of occurrences of $\larrow{}{}$. 
Formally, the \emph{size} of linear and multi types is defined by mutual induction by:
\begin{center}
$\begin{array}{r@{\,}l@{\,} l@{\quad} r@{\,}l@{\,} l@{\quad} c@{\,}c@{\,}c}
\size{\ground} &\defeq &0 
& 
\size{\larrow{\mtype}{\mtypetwo}} &\defeq &1 \!+\! \size\mtype \!+\! \size\mtypetwo
&
\size{\mset{\ltype_1, \dots, \ltype_n}} &\defeq &{\textstyle{\sum_{i=1}^n}} \size{\ltype_i}
\end{array}$
\end{center}
Clearly, $\size{\type} \geq 0$ and $\size{\mtype} = 0$ if and only if $\mtype = n\mset{\ground}$ ($n \geq 0)$.

Given a type context $\typctx = \var_1 \hastype \mtype_1, \dots, \var_n \hastype \mtype_n$ we often consider the list of 
its types, 
noted  $\typelist\typctx \defeq (\mtype_1, \dots, \mtype_n)$.  Since any list of multi types $(\mtype_1, \dots, 
\mtype_n)$ can be seen as extracted from a type context $\typctx$, we 
use the notation $\typelist\typctx$ for lists of multi types.
The \emph{size} of a list of multi types is 
$\size{(\mtype_1, \dots, \mtype_n)} \allowbreak\defeq \allowbreak\sum_{i=1}^n \size{\mtype_i}$, and of the conclusion 
of a derivation $\concl{\tder}{\typctx}{\expr}{\mtype}$ is $\size{(\typelist\typctx, \mtype)} \defeq 
\size{\typelist\typctx} + \size\mtype$.
Clearly, $\domain{\typctx} = \emptyset$ implies $\size{\typelist\typctx} = 0$.

\smallskip
\paragraph*{Bounds of Kind 2} Types bound the size of normal forms---this is kind 2. It is however instructive to decompose this fact, by showing that more generally, types bound the size of type derivations for normal terms (which in turn bound the size of normal forms, as we already know).
\begin{proposition}[Types bound the size of derivations for normal forms]
	\label{prop:types-bound-normal-derivations}
	\NoteProof{propappendix:types-bound-normal-derivations}
	Let $\tm$ be a \full fireball and $\concl{\tderiv}{\typctx}{\tm}{\mtype}$ be a derivation. Then 
$\sizem{\tderiv} \leq \sizectx{\typctx} + \sizetyp{\mtype}$.
\end{proposition}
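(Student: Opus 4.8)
The plan is to prove this by induction on the structure of the \full fireball $\tm$, following the grammar $\sfire \grameq \sitm \mid \sval \mid \sfire \esub\var\sitm$ with \full inert terms $\sitm \grameq \var \mid \sitm \sfire \mid \sitm \esub\var{\sitmtwo}$. The key point is that for the inert sub-case I will need a sharpened statement that also transfers size information from the right-hand multi type into the typing context, so I would first strengthen the inductive hypothesis: for a \full inert term $\sitm$ with $\concl{\tderiv}{\typctx}{\sitm}{\mtype}$, I claim not only $\sizem{\tderiv} \leq \sizectx{\typctx} + \sizetyp{\mtype}$ but also something like $\sizem{\tderiv} + \sizetyp{\mtype} \leq \sizectx{\typctx}$, i.e.\ the size of an inert term's derivation is actually \emph{bounded by the type context alone}, with room to spare equal to the size of the output type. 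This mirrors the familiar ``inert terms satisfy stronger properties'' pattern already used throughout the paper (cf.\ \Cref{l:spread-inert}, \Cref{l:size-strong-fireballs}, \Cref{prop:typability-normal}). The fireball statement of the proposition then follows from this inert strengthening together with the abstraction and \ES cases.

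First I would handle the \textbf{variable} case $\sitm = \var$: the derivation is an $\ruleAx$ possibly followed by $\ruleManyVar$, so $\sizem{\tderiv} = 0$, $\typctx = \var \hastype \mtype$, and $\sizectx{\typctx} = \sizetyp{\mtype}$, giving $\sizem{\tderiv} + \sizetyp\mtype = \sizetyp\mtype = \sizectx\typctx$, which even meets the strengthened bound. For the \textbf{application} case $\sitm = \sitmtwo\sfire$, the last rule is $\ruleAp$ with premises $\concl{\tderivtwo}{\typctxtwo}{\sitmtwo}{\mset{\larrow{\mtypethree}{\ltype}}_{\dots}}$ — actually a premise typing $\sitmtwo$ with some $\mset{\larrow{\mtype'}{\mtype''}}$ where $\mtype = \mtype''$ — and $\concl{\tderivthree}{\typctxthree}{\sfire}{\mtype'}$, with $\typctx = \typctxtwo \uplus \typctxthree$ and $\sizem\tderiv = \sizem\tderivtwo + \sizem\tderivthree + 1$. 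Here I apply the strengthened \ih to $\sitmtwo$ (inert) to get $\sizem\tderivtwo + \sizetyp{\mset{\larrow{\mtype'}{\mtype}}} \leq \sizectx\typctxtwo$, i.e.\ $\sizem\tderivtwo + 1 + \sizetyp{\mtype'} + \sizetyp\mtype \leq \sizectx\typctxtwo$, and the plain \ih to $\sfire$ to get $\sizem\tderivthree \leq \sizectx\typctxthree + \sizetyp{\mtype'}$. Adding, the $+1$ is absorbed, the $\sizetyp{\mtype'}$ terms cancel against each other, and $\sizectx\typctx = \sizectx\typctxtwo + \sizectx\typctxthree$, yielding $\sizem\tderiv + \sizetyp\mtype \leq \sizectx\typctx$ as needed. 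The \textbf{\ES into an inert body} case $\sitm = \sitmtwo\esub\var\sitmthree$ is handled by the same arithmetic using $\ruleES$, applying the strengthened \ih to both $\sitmtwo$ and $\sitmthree$ (both inert) and summing over the premises produced by $\ruleMany$ on $\sitmthree$.

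For the \textbf{abstraction} case $\sfire = \sval = \la\var\sfiretwo$, the last rule is $\ruleManyVal$ over a family of $\ruleFun$ premises $\concl{\tderiv_i}{\typctx_i, \var\hastype\mtype_i}{\sfiretwo}{\mtype'_i}$ for $i \in I$, with $\mtype = \mset{\larrow{\mtype_i}{\mtype'_i}}_{i\in I}$, $\typctx = \biguplus_i \typctx_i$, and $\sizem\tderiv = \sum_i (\sizem\tderiv_i + 1)$ since $\ruleMany$ does not count but each $\ruleFun$ does. Applying the plain \ih to each $\concl{\tderiv_i}{\typctx_i, \var\hastype\mtype_i}{\sfiretwo}{\mtype'_i}$ gives $\sizem\tderiv_i \leq \sizectx{\typctx_i} + \sizetyp{\mtype_i} + \sizetyp{\mtype'_i}$, and now the $+1$ from each $\ruleFun$ is exactly matched by the $+1$ contributed to $\sizetyp{\larrow{\mtype_i}{\mtype'_i}} = 1 + \sizetyp{\mtype_i} + \sizetyp{\mtype'_i}$ inside $\sizetyp\mtype$; summing over $I$ and using $\sizectx\typctx = \sum_i \sizectx{\typctx_i}$ (plus $\var$ not in the domain of the conclusion's context) gives $\sizem\tderiv \leq \sizectx\typctx + \sizetyp\mtype$. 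The remaining \textbf{\ES with fireball body} case $\sfire = \sfiretwo\esub\var\sitm$ is an easy combination via $\ruleES$ of the plain \ih on $\sfiretwo$ and the strengthened inert \ih on $\sitm$.

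\textbf{Main obstacle.} The only real subtlety — the reason the naive statement does not induct — is precisely the need for the strengthened inert bound: in the application case, recursing into the head $\sitmtwo$ with only the weak inequality $\sizem\tderivtwo \leq \sizectx\typctxtwo + \sizetyp{\mset{\larrow{\mtype'}{\mtype}}}$ would not control the extra $\sizetyp{\mtype'}$ coming from the argument's typing context, so I would be left with a $+\sizetyp{\mtype'}$ slack that has nowhere to go. Getting the bookkeeping of the $\ruleFun$/$\ruleMany$ boundary right in the abstraction case — matching each counted $\ruleFun$ against the $1$ in each $\larrow{}{}$ of the output multi type, while the uncounted $\ruleMany$ sums the premises — is the other place that needs care, but it is purely arithmetic once the strengthening is in place.
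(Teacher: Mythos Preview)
Your proposal is correct and matches the paper's approach exactly: the appendix statement (\Cref{propappendix:types-bound-normal-derivations}) strengthens the inert case to $\sizem{\tderiv} \leq \sizectx{\typctx} - \sizetyp{\mtype}$, which is precisely your $\sizem{\tderiv} + \sizetyp{\mtype} \leq \sizectx{\typctx}$, and the mutual induction on the inert/fireball grammars with the arithmetic you describe is the intended proof. The only minor slip is in the \ES-on-inert case, where you mention ``summing over the premises produced by $\ruleMany$ on $\sitmthree$''---there is no $\ruleMany$ here, just a single $\ruleES$ with two premises, but your arithmetic for that case is otherwise fine.
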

Note that we are not restricting to shrinking derivations.

Perhaps surprisingly, the bound may not be exact, as a forthcoming example shows---that is, there can be a  gap. As expected, we shall see that minimal types give exact bounds.

Composing with the bound from shrinking derivations (\reflemma{size-strong-fireballs}), we obtain bounds of kind 2.

\begin{corollary}[Types bound the size of \full fireballs]
	\label{cor:type-bound-size-normal}
	\NoteProof{corappendix:type-bound-size-normal}
Let $\tm$ be a \full fireball and $\concl{\tderiv}{\typctx}{\tm}{\mtype}$ be a shrinking derivation. Then $\sizefu{\tm} 
\leq \sizetyp{\mtype} + \sizectx{\typctx}$.
\end{corollary}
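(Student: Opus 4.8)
The plan is to obtain the bound by chaining two results that are already available: the size-of-\full-fireballs lemma (\reflemma{size-strong-fireballs}) and the proposition that types bound the size of derivations for normal forms (\Cref{prop:types-bound-normal-derivations}).

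First I would unpack the hypothesis that $\tderiv$ is a shrinking derivation: by definition this means that $\typctx$ is a \leftsh type context and $\mtype$ is a \rightsh multi type. In particular, the side condition appearing in \reflemma{size-strong-fireballs}---``if $\sfire$ is an \valES then $\mtype$ is \rightsh''---is discharged unconditionally, since here $\mtype$ is \rightsh regardless of the shape of $\tm$. Hence \reflemma{size-strong-fireballs} applies to $\tderiv$ (in its non-unitary, i.e.\ $\leq$, form) and yields $\sizefu{\tm} \leq \sizem{\tderiv}$.

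Second, since $\tm$ is a \full fireball, \Cref{prop:types-bound-normal-derivations} applies to the very same derivation $\tderiv$---note that shrinkingness is not needed there---giving $\sizem{\tderiv} \leq \sizectx{\typctx} + \sizetyp{\mtype}$. Composing the two inequalities yields $\sizefu{\tm} \leq \sizectx{\typctx} + \sizetyp{\mtype}$, which is the claim.

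There is essentially no obstacle in this derivation proper: it is a one-line composition of two lemmas. The only point requiring a little care is verifying that the hypotheses of \reflemma{size-strong-fireballs} are met by a merely shrinking (as opposed to unitary shrinking) derivation, and in particular that the asymmetric side condition on \valESs is automatic here; the genuine mathematical content lies in the two results being composed---especially \Cref{prop:types-bound-normal-derivations}, whose proof is by induction on the \full fireball and is the non-trivial ingredient.
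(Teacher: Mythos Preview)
Your proposal is correct and follows exactly the same approach as the paper's proof: chain \reflemma{size-strong-fireballs} (to get $\sizefu{\tm} \leq \sizem{\tderiv}$) with \Cref{prop:types-bound-normal-derivations} (to get $\sizem{\tderiv} \leq \sizectx{\typctx} + \sizetyp{\mtype}$). Your extra care in verifying that the \valES side condition of \reflemma{size-strong-fireballs} is automatically discharged by shrinkingness is a welcome clarification that the paper's terse proof leaves implicit.
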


\paragraph*{Lax Bounds of Kind 3} De Carvalho's idea is that, given two normal forms $\tm$ and $\tmtwo$, one can extract bounds for the evaluation of $\tm\tmtwo$ by truly \emph{semantic means}, by looking only at the types of $\tm$ and $\tmtwo$---that is, at $\sem\tm$ and $\sem\tmtwo$---because a derivation for $\tm\tmtwo$ is just the application of a derivation for $\tm$ and one for $\tmtwo$.  
\emph{Remark}: in formulating bounds of kind 3 we restrict to closed terms for the sake of simplicity (it allows to ignore the type context)---de Carvalho does the same. There are however no issues in dealing with open terms. 

\begin{definition}[Composable pairs]
Let $\tm$ and $\tmtwo$ be closed normal terms. The set 
of composable pairs of $\tm$ and $\tmtwo$ is:
\begin{center}$ U(\tm,\tmtwo) \defeq \set{(\larrow{\mtype}{\mtypetwo}, \mtype)\ |\ \larrow{\mtype}{\mtypetwo} \in 
\sem\tm\mbox{ and } \mtype\in \sem\tmtwo\mbox{ s.t. }\mtypetwo\mbox{ is right}}  $.
\end{center}
\end{definition}

\begin{theorem}[Lax bounds of kind 3]
	\label{thm:lax-bounds-3}
	\NoteProof{thmappendix:lax-bounds-3}
Let $\tm$ and $\tmtwo$ be closed normalizing terms. Then $\tm\tmtwo$ normalizes if and only if $U(\tm,\tmtwo) \neq \emptyset$. Moreover, if $\tm$ and $\tmtwo$ are normal, $\deriv:\tm\tmtwo \tovsubs^* \tmthree$, and $\tmthree$ is normal then $2\sizem{\deriv} + \sizefu{\tmthree} \leq \size\mtype+\size\mtypetwo+1$ for every composable pair $(\mtype, \mtypetwo)\in U(\tm,\tmtwo)$.
\end{theorem}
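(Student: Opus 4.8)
The plan is to read a composable pair essentially as the two premises of rule $\ruleAp$, reconstruct from it a derivation of $\tm\tmtwo$, and then feed that derivation to the results already proved for the external strategy. Recall that an element of $U(\tm,\tmtwo)$ has the form $(\larrow{\mtype}{\mtypetwo},\mtype)$ with $\mset{\larrow{\mtype}{\mtypetwo}}\in\sem{\tm}$, $\mtype\in\sem{\tmtwo}$, and $\mtypetwo$ \rightsh; keeping the names of the statement in mind, the bound ``$\size{\mtype}+\size{\mtypetwo}+1$'' it must satisfy is, spelled out, $\size{\larrow{\mtype}{\mtypetwo}}+\size{\mtype}+1$.

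For the qualitative part: since $\tm$ and $\tmtwo$ are closed, any $\mset{\larrow{\mtype}{\mtypetwo}}\in\sem{\tm}$ and $\mtype\in\sem{\tmtwo}$ come from derivations $\concl{\tderivtwo}{\,}{\tm}{\mset{\larrow{\mtype}{\mtypetwo}}}$ and $\concl{\tderivthree}{\,}{\tmtwo}{\mtype}$ with empty context. If $(\larrow{\mtype}{\mtypetwo},\mtype)\in U(\tm,\tmtwo)$, I combine $\tderivtwo$ and $\tderivthree$ by $\ruleAp$ into $\concl{\tderiv}{\,}{\tm\tmtwo}{\mtypetwo}$: its context is empty (hence trivially \leftsh) and $\mtypetwo$ is \rightsh, so $\tderiv$ is shrinking, item~3 of \Cref{thm:normalization} applies, and $\tm\tmtwo$ normalizes. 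Conversely, if $\tm\tmtwo$ normalizes, \Cref{thm:normalization} gives a shrinking derivation $\concl{\tderiv}{\,}{\tm\tmtwo}{\mtypethree}$; since $\tm\tmtwo$ is an application, its last rule can only be $\ruleAp$, whose premises type $\tm$ with $\mset{\larrow{\mtype}{\mtypethree}}$ and $\tmtwo$ with $\mtype$ for some $\mtype$ (the two premise contexts are empty because their $\uplus$ is), and $\mtypethree$ is \rightsh; hence $(\larrow{\mtype}{\mtypethree},\mtype)\in U(\tm,\tmtwo)\neq\emptyset$.

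For the quantitative part: I fix a composable pair $(\larrow{\mtype}{\mtypetwo},\mtype)\in U(\tm,\tmtwo)$ and build $\tderiv,\tderivtwo,\tderivthree$ as above, so $\tderiv$ is a shrinking derivation of $\tm\tmtwo$ with type $\mtypetwo$ and $\sizem{\tderiv}=\sizem{\tderivtwo}+\sizem{\tderivthree}+1$, the $+1$ being the new $\ruleAp$. Shrinking correctness (\Cref{thm:correctness}) applied to $\tderiv$ yields a $\vsub$-normal form $\tm''$ and an evaluation $\deriv'\colon\tm\tmtwo\tovsubs^*\tm''$ with $2\sizem{\deriv'}+\sizefu{\tm''}\leq\sizem{\tderiv}$. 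Confluence of $\tovsub$ (\Cref{prop:properties-full-reduction}) makes the $\vsub$-normal form of $\tm\tmtwo$ unique, and fullness (\Cref{prop:external-properties}) identifies $\vsub$- and $\esssym$-normality, so $\tm''=\tmthree$; the diamond property of $\tovsubs$ (\Cref{prop:external-properties}) makes all $\esssym$-evaluations of $\tm\tmtwo$ to normal form share the same number of multiplicative steps, so $\sizem{\deriv'}=\sizem{\deriv}$. Hence $2\sizem{\deriv}+\sizefu{\tmthree}\leq\sizem{\tderiv}=\sizem{\tderivtwo}+\sizem{\tderivthree}+1$. Finally $\tm$ and $\tmtwo$ are closed \full fireballs (\Cref{prop:properties-full-reduction}), so \Cref{prop:types-bound-normal-derivations}, with the empty context contributing nothing, gives $\sizem{\tderivtwo}\leq\size{\mset{\larrow{\mtype}{\mtypetwo}}}=\size{\larrow{\mtype}{\mtypetwo}}$ and $\sizem{\tderivthree}\leq\size{\mtype}$, whence $2\sizem{\deriv}+\sizefu{\tmthree}\leq\size{\larrow{\mtype}{\mtypetwo}}+\size{\mtype}+1$, which is the claimed bound.

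\emph{Main obstacle.} No step is deep. The plumbing that needs care is the passage from the evaluation $\deriv'$ returned by correctness to the given $\deriv$: it requires confluence (to identify the two normal forms), the diamond property of the external strategy (to identify the two multiplicative step counts), and the coincidence of $\vsub$- and $\esssym$-normal forms. The other point to track is that all type contexts are empty, which is exactly what lets a composable pair be read off the premises of an $\ruleAp$ step and what makes the context term vanish in \Cref{prop:types-bound-normal-derivations}. The one genuine observation is that the target bound is nothing more than the size of the derivation obtained by naively composing $\tderivtwo$ and $\tderivthree$, so for this \emph{lax} statement no finer argument than the above is required (in contrast with the exact bounds of kind~3, which will need the dissection property).
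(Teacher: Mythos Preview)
Your proof is correct and follows essentially the same route as the paper: build a shrinking derivation of $\tm\tmtwo$ from a composable pair via $\ruleAp$, invoke shrinking correctness for the bound $2\sizem{\deriv}+\sizefu{\tmthree}\leq\sizem{\tderiv}$, then bound $\sizem{\tderivtwo}$ and $\sizem{\tderivthree}$ by the type sizes using \Cref{prop:types-bound-normal-derivations}. You are in fact more explicit than the paper about why the evaluation $\deriv'$ produced by correctness can be replaced by the given $\deriv$ (via confluence, fullness, and the diamond property), a step the paper leaves implicit.
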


Now, the tricky point is how to obtain \emph{exact} bounds. For a term in isolation, exact bounds of kind 2 are given by minimal types, as we shall prove. The problem is that, instead, for the application of two terms (kind 3) the minimal composable pair does not provide exact bounds. This happens because minimal types do not compose---said differently, the minimal composable pair is not made out of two minimal types for $\tm$ and $\tmtwo$. The following example pinpoints the subtleties.

\smallskip
\paragraph*{Key Example} 
Consider the unitary shrinking derivations $\tderiv_\delta$ and $\tderiv_\Id$ of minimal types for  $\delta= \la\var\var\var$ and for $\Id = \la\vartwo\vartwo$:
\begin{center}
	\small
			$\tderiv_\delta =
			\begin{prooftree}[separation=.7em]
			\hypo{}
			\infer1[\footnotesize$\ruleAx$]{\var\hastype \mset{\larrow{\mset\ground}{\mset\ground}} \vdash \var\hastype \larrow{\mset\ground}{\mset\ground} }
			\infer1[\footnotesize$\ruleManyVar$]{\var\hastype \mset{\larrow{\mset\ground}{\mset\ground}} \vdash \var\hastype \mset{\larrow{\mset\ground}{\mset\ground}} }	
			\hypo{}
			\infer1[\footnotesize$\ruleAx$]{\var\hastype \mset\ground \vdash \var\hastype \ground }
			\infer1[\footnotesize$\ruleManyVar$]{\var\hastype \mset\ground \vdash \var\hastype \mset\ground }	
			\infer2[\footnotesize$\ruleAp$]{\var\hastype \mset{\mset{\larrow{\mset\ground}{\mset\ground}, \ground}} \vdash \var\var \hastype \mset\ground}
			\infer1[\footnotesize$\ruleFun$]{\vdash \la{\var}\var\var \hastype \larrow{\mset{\mset{\larrow{\mset\ground}{\mset\ground}, \ground}}}{\mset\ground}}
			\infer1[\footnotesize$\ruleManyVal$]{\vdash \la{\var}\var\var \hastype \mset{\larrow{\mset{\mset{\larrow{\mset\ground}{\mset\ground}, \ground}}}{\mset\ground}}}
			\end{prooftree}$
\end{center}
and
\begin{center}
	\small
			$\tderiv_\Id = 
			\begin{prooftree}[separation=1em]
			\hypo{}
			\infer1[\footnotesize$\ruleAx$]{\vartwo\hastype \mset\ground \vdash \vartwo\hastype \ground }
			\infer1[\footnotesize$\ruleManyVar$]{\vartwo\hastype \mset\ground \vdash \vartwo\hastype \mset\ground }	
			\infer1[\footnotesize$\ruleFun$]{\vdash \la{\vartwo}\vartwo \hastype \larrow{\mset\ground}{\mset\ground}}
			\infer1[\footnotesize$\ruleManyVal$]{\vdash \la{\vartwo}\vartwo \hastype \mset{\larrow{\mset\ground}{\mset\ground}}}
			\end{prooftree}$
\end{center}
			Note that, pleasantly, $\sizem{\tderiv_\delta} = \sizefu{\delta} = \sizetyp{\mset{\larrow{\mset{\mset{\larrow{\mset\ground}{\mset\ground}, \ground}}}{\mset\ground}}} = 2$, and $\sizem{\tderiv_\Id} = \sizefu{\Id} = \sizetyp{\mset{\larrow{\mset\ground}{\mset\ground}}} = 1$.
Let's now consider the application $\delta\Id$. Unfortunately, the two obtained minimal types do not compose. Now, consider the unitary shrinking derivation $\tderivtwo_{\delta\Id}$ for $\delta\Id$ with minimal types (which provides exact information for $\delta\Id$) in \reffig{composed-deriv}. Note that its sub-derivations for $\tderivtwo_\delta$ and $\tderivtwo_\Id$ for $\delta$ and $\Id$ do not derive minimal types. The derivation  $\tderivtwo_{\delta\Id}$ indeed  is obtained by composing 
\begin{enumerate}
\item the variant $\tderivtwo_\delta$ of $\tderiv_\delta$ which has the same exact structure of $\tderiv_\delta$ and where every occurrence of $\ground$ has been replaced with $\ground^2 \defeq \larrow{\mset\ground}{\mset\ground}$, obtaining the type $\mset{\larrow{\mset{\mset{\larrow{\mset{\ground^2}}{\mset{\ground^2}}, {\ground^2}}}}{\mset{\ground^2}}}$.
\item with two derivations for $\Id$, one being $\tderiv_\Id$ (which has type $\mset{\ground^2}$), and one being the variant $\tderiv_\Id'$ of $\tderiv_\Id$ where $\ground$ has been replaced with $\larrow{\mset\ground}{\mset\ground}$ (of type $\mset{\larrow{\mset{\ground^2}}{\mset{\ground^2}}}$).
\end{enumerate}
Note that there is a mismatch: since the types derived by $\tderivtwo_\delta$ and $\tderivtwo_\Id$ are not minimal, their sizes are bigger than $\sizem{\tderivtwo_\delta}$ and $\sizem{\tderivtwo_\Id}$, and so they do not provide exact bounds for $\delta\Id$. Namely, the size of $\mset{\larrow{\mset{\mset{\larrow{\mset{\ground^2}}{\mset{\ground^2}}, {\ground^2}}}}{\mset{\ground^2}}}$, which is the type of $\tderivtwo_\delta$, is 6, while $\sizem{\tderivtwo_\delta}=2$---this is an instance of the mentioned gap. 
Summing up, \emph{minimal types do not compose}, and \emph{composable types do not give exact bounds}. 

\smallskip
\paragraph*{Out of the Impasse}De Carvalho solves this apparent \emph{cul-de-sac} via a technical study relating composable pairs of types, on one side, and minimal types for the derivations of composable pairs, on the other side, via type substitutions. An especially puzzling point is that his approach requires the type system to have an infinity of ground types, while the quantitative properties of multi types never do. 

The \emph{composable \textit{vs} minimal} tension stems from a fact about derivations in isolation, namely the mentioned \emph{gap}:  given a derivation $\concl{\tderivtwo}{\typctx}{\tm}{\mtype}$ for a normal term $\tm$, in general $\sizem\tderivtwo$ is bound by $\size{(\typelist\typctx, \mtype)}$, not equal to it. 
The key point behind de Carvalho's solution is that there always exists a re-typing $\tderiv$ of $\tderivtwo$ (more precisely, a derivation with the same structure and different types) whose types have the same size of $\tderivtwo$ (and $\tderiv$)---the derivation $\tderiv$ is obtained from $\tderivtwo$ by minimizing the types introduced by axioms---in the example, this is the case for $\tderivtwo_\delta$ and $\tderiv_\delta$. We deem this fact the \emph{size representation property}. It holds independently of the number of ground types, and we claim it being the key property at work in semantical bounds---its isolation is a contribution of this work. 

\smallskip
\paragraph*{Skeletons} For the property, we need to formalize the notion of derivations having the same structure. The 
equivalence $\tderiveq$
relates derivations having the same rules arranged in the same way, but not necessarily having the same types.

\begin{definition}[Skeleton equivalence]
Let $\tm$ be a term. 
Two derivations $\derive{\tderiv}{\tm}$ and $\derive{\tderivtwo}{\tm}$ 
are \emph{skeleton equivalent}, noted $\tderiv \tderiveq \tderivtwo$, if they end with the same kind of rule and the derivations on the 
premises are $\tderiveq$-equivalent, namely:
\begin{itemize}
	\item Both $\tderiv$ and $\tderivtwo$ are axioms.
	\item Both $\tderiv$ and $\tderivtwo$ end with rule $\ruleAp$, their two left premises $\tderiv_l$ and $\tderivtwo_l$ 
satisfy $\tderiv_l \tderiveq\tderivtwo_l$, and their right premises $\tderiv_r$ and $\tderivtwo_r$ satisfy $\tderiv_r 
\tderiveq\tderivtwo_r$---similarly for rules $\ruleFun$ and $\ruleES$.
	\item Both $\tderiv$ and $\tderivtwo$ end with a rule $\ruleManyVal$ with $n$ premises and there is a permutation 
$\sigma$ of $\set{1,\dots,n}$ such that the $i$-th premise $\tderiv_i$ of $\tderiv$ and the $\sigma(i)$-th premise 
$\tderivtwo_{\sigma(i)}$ of $\tderivtwo$ satisfy $\tderiv_i \tderiveq\tderivtwo_{\sigma(i)}$ for 
$i\in\set{1,\dots,n}$. 
\end{itemize} 
\end{definition}

The next lemma shows that skeleton equivalence preserves more or less everything you can imagine, but types. 
It is used to prove the size representation property (\Cref{prop:size-representation}).
We denote by $\card{m}$ the cardinality of a multiset~$m$.

\begin{lemma}[$\tderiveq$-Invariants]
\label{l:skel-equiv-properties}
\NoteProof{lappendix:skel-equiv-properties}
Let $\concl{\tderiv}{\typctx}{\tm}{\mtype}$ and $\concl{\tderivtwo}{\typctxtwo}{\tm}{\mtypetwo}$ be two derivations such 
that $\tderiv \tderiveq \tderivtwo$. Then $\size\tderiv = \size\tderivtwo$, $\sizem\tderiv = \sizem\tderivtwo$, 
$\card\mtype = \card\mtypetwo$, $\dom\typctx = \dom\typctxtwo$, and $\card{(\typctx(\var))} = \card{(\typctxtwo(\var))}$ 
for every variable $\var$. Moreover, $\tderiv$ is shrinking (resp. unitary shrinking) if and only if $\tderivtwo$ is.
\end{lemma}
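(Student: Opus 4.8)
The plan is to prove all the claims at once, by induction on the definition of $\tderiv \tderiveq \tderivtwo$, which is the same as a structural induction on $\tderiv$ (equivalently on $\tderivtwo$). By definition of $\tderiveq$, $\tderiv$ and $\tderivtwo$ end with the same rule and their premise sub-derivations are paired up into $\tderiveq$-equivalent pairs --- for $\ruleManyVal$ only up to a permutation $\sigma$ of the premises, which is immaterial since every quantity in play is symmetric in the premises of $\ruleManyVal$. Hence the induction hypothesis applies to each matched pair.

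For the numeric and domain identities the argument is essentially bookkeeping: none of $\size$, $\sizem$, $\dom{\typctx}$, $\card\mtype$, $\card{(\typctx(\var))}$ inspects the identity of the types occurring in a rule, only the shape and arity of the rule together with the corresponding data of its premises. Concretely, $\ruleAx$ contributes $1$ to $\size$, $0$ to $\sizem$, domain $\set\var$, a singleton context entry, and a linear right-hand side on both sides (so the multiset-cardinality clause is vacuous there and only becomes active at $\ruleManyVal$); $\ruleFun$ and $\ruleAp$ add $1$ to both $\size$ and $\sizem$, read $\card\mtype$ off the premises, and recombine $\dom$ and the $\card{(\typctx(\var))}$'s via $\uplus$ (removing the abstracted variable for $\ruleFun$); $\ruleES$ adds $1$ to $\size$ and $0$ to $\sizem$, recombining contexts via $\uplus$; and $\ruleManyVal$ adds $0$ to $\size$ (it is excluded by definition) and $0$ to $\sizem$, merges contexts via $\uplus$, and yields a conclusion multiset whose cardinality is the number $n$ of premises, which $\tderiveq$ forces to agree. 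Plugging the induction hypothesis on the premises into these observations closes every equality.

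The last claim --- $\tderiv$ is (unitary) shrinking iff $\tderivtwo$ is --- is the delicate point, and where I expect the actual effort to lie. Shrinkingness is read off the final judgment alone, so a naive induction does not close on its own: one has to pin down exactly which (unitary) left/right-shrinking data is genuinely $\tderiveq$-invariant and strengthen the inductive statement so as to carry it along every judgment (in the style of the other statements in this section, phrased in terms of type contexts or of pointed/inert sub-terms for the induction to go through), and then check the propagation rule by rule: $\ruleFun$ moves the status of the abstracted variable into the domain of the arrow and the status of the body into its codomain; $\ruleManyVal$ builds a (unitary) right-shrinking multiset out of $n$ (unitary) right-shrinking linear types precisely when $n > 0$, again controlled by $\tderiveq$ preserving $n$; and $\ruleAp$, $\ruleES$ split contexts through $\uplus$, which respects (unitary) left-shrinkingness componentwise, with the spreading of left-shrinkingness on pointed terms (\Cref{l:spread-shrinking}) handling the head-variable position. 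The numeric and domain part, by contrast, goes through with no surprises.
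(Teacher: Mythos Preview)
For the size, domain, and cardinality claims your argument is exactly what the paper intends: its appendix proof is the single sentence ``By straightforward induction on the derivation $\tderiv$'', and your rule-by-rule bookkeeping is a correct unfolding of that.

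The ``Moreover'' clause, however, is not merely delicate --- it is false as stated, so no strengthening of the induction and no appeal to \Cref{l:spread-shrinking} can close it. Skeleton equivalence constrains only the rule shape, leaving the linear types introduced at axioms completely free, whereas shrinkingness is a property of those types. Concretely, take $\tm=\la\var\var$ and let $\tderiv$ use the axiom $\var\hastype\mset\ground\vdash\var\hastype\ground$, concluding $\vdash\tm\hastype\mset{\larrow{\mset\ground}{\mset\ground}}$, which is unitary shrinking. Let $\tderivtwo$ have the same skeleton but axiom type $\ltype\defeq\larrow{\emptytype}{\mset\ground}$, concluding $\vdash\tm\hastype\mset{\larrow{\mset\ltype}{\mset\ltype}}$. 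Then $\tderiv\tderiveq\tderivtwo$, yet $\tderivtwo$ is not shrinking: for its final type to be right the domain $\mset\ltype$ must be left, hence $\ltype$ left linear, hence the domain $\emptytype$ of $\ltype$ right --- impossible since right multi types are non-empty. So both your sketch and the paper's one-line proof fail on this clause. The only place the paper actually uses it, \Cref{cor:minimal-type-normal}, can be repaired by arguing directly that the construction of \Cref{prop:size-representation} preserves shrinkingness, but that is a property of that specific construction, not of $\tderiveq$ in general.
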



\begin{proposition}[Size representation]
	\label{prop:size-representation} 
	\NoteProof{propappendix:size-representation}
	Let $\tm$ be a \full fireball and $\concl{\tderiv}{\typctx}{\tm}{\mtype}$ be a derivation. Then there is a derivation $\concl{\tderivtwo}{\typctxtwo}{\tm}{\mtypetwo}$ such that $\tderivtwo \tderiveq \tderiv$ and 
		$\sizem{\tderiv} = \sizectx{\typctxtwo} + \sizetyp{\mtypetwo}$.
\end{proposition}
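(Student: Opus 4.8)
The plan is to prove the statement by induction on the \full fireball $\tm$, but to strengthen it on \full \emph{inert} terms, where the output type of a sub-derivation feeding an explicit substitution is forced by the rest of the derivation. Together with the statement to prove --- call it $(a)$: there is $\tderivtwo \tderiveq \tderiv$ with $\typctxtwo \vdash \tm \hastype \mtypetwo$ and $\sizem{\tderiv} = \sizectx{\typctxtwo} + \sizetyp{\mtypetwo}$ --- I would prove, when $\tm$ is moreover a \full inert term, the reinforcement $(b)$: for \emph{every} multi type $\mtypethree$ with $\card{\mtypethree} = \card{\mtype}$ there is $\tderivtwo \tderiveq \tderiv$ with $\typctxtwo \vdash \tm \hastype \mtypethree$ and $\sizectx{\typctxtwo} = \sizem{\tderiv} + \sizetyp{\mtypethree}$. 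Statement $(b)$ both \emph{prescribes} the output type and isolates its cost $\sizetyp{\mtypethree}$ on the context side: this is the key twist, needed because through an explicit substitution the shared type occurs once in a type context and once as an output, so its cost must be counted only once. Note that $(b)$ gives $(a)$ for inert $\tm$: instantiate $\mtypethree$ with a multiset of $\card{\mtype}$ occurrences of $\ground$, so that $\sizetyp{\mtypethree} = 0$ and $\sizectx{\typctxtwo} + \sizetyp{\mtypethree} = \sizem{\tderiv}$. Thus, in the induction step for $\tm$, I would first prove $(b)$ if $\tm$ is inert (it only invokes the \ih on strictly smaller terms), and then $(a)$.

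For $(b)$ I case split on $\tm$. If $\tm = \var$, the derivation ends with $\ruleManyVar$ over $\card{\mtype}$ axioms; re-type the axioms to carry the linear types of $\mtypethree$ (sound, as $\ruleAx$ accepts any linear type), yielding the context $\var \hastype \mtypethree$ and $\sizectx{\typctxtwo} = \sizetyp{\mtypethree} = \sizem{\tderiv} + \sizetyp{\mtypethree}$ since $\sizem{\tderiv} = 0$. If $\tm = \sitm \sfire$, the derivation ends with $\ruleAp$, left premise $\tderiv_l$ typing $\sitm$ by a singleton $\mset{\larrow{\rho}{\mtype}}$, right premise $\tderiv_r$ typing $\sfire$ by $\rho$; apply $(a)$ to the strictly smaller fireball $\sfire$ and $\tderiv_r$, obtaining a re-typing with output $\rho'$; then apply $(b)$ to the strictly smaller inert term $\sitm$ and $\tderiv_l$ with prescribed output $\mset{\larrow{\rho'}{\mtypethree}}$ (of cardinality $1$, as required); recombine with $\ruleAp$. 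A short computation from $\sizem{\tderiv} = \sizem{\tderiv_l} + \sizem{\tderiv_r} + 1$ and $\sizetyp{\mset{\larrow{\rho'}{\mtypethree}}} = 1 + \sizetyp{\rho'} + \sizetyp{\mtypethree}$ gives $\sizectx{\typctxtwo} = \sizem{\tderiv} + \sizetyp{\mtypethree}$. If $\tm = \sitm \esub{\var}{\sitmtwo}$, the derivation ends with $\ruleES$, its premises typing $\sitm$ (output $\mtype$, binding $\var$ to some $\rho$) and $\sitmtwo$ (output $\rho$); apply $(b)$ to $\sitm$ with prescribed output $\mtypethree$, which re-binds $\var$ to some $\rho'$ with $\card{\rho'} = \card{\rho}$; apply $(b)$ to $\sitmtwo$ with prescribed output $\rho'$; recombine with $\ruleES$. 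Since $\sizem{\tderiv} = \sizem{\tderiv_l} + \sizem{\tderiv_r}$, the two copies of $\sizetyp{\rho'}$ cancel and $\sizectx{\typctxtwo} = \sizem{\tderiv} + \sizetyp{\mtypethree}$.

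For $(a)$ when $\tm$ is not inert: if $\tm = \la{\var}{\sfire}$, the derivation ends with $\ruleManyVal$ over premises each ending with $\ruleFun$ on a derivation of $\sfire$; apply $(a)$ to each (strictly smaller), and push the results back through $\ruleFun$ and $\ruleManyVal$, noting that each $\ruleFun$ introduces a $\multimap$ whose unit of size exactly absorbs that $\ruleFun$'s unit contribution to the multiplicative size. If $\tm = \sfire \esub{\var}{\sitm}$, the derivation ends with $\ruleES$; apply $(a)$ to the strictly smaller fireball $\sfire$, read off the multi type $\sigma$ it now assigns to $\var$, apply $(b)$ to the strictly smaller inert term $\sitm$ with prescribed output $\sigma$ (cardinalities matching by \Cref{l:skel-equiv-properties}), and recombine with $\ruleES$, so that the two occurrences of $\sizetyp{\sigma}$ cancel as in the inert case. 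In all cases skeleton equivalence of the recombined derivation with $\tderiv$ is immediate from the \ih, and $\mtypetwo$ is simply whatever output the re-typed sub-derivations produce.

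The main obstacle is discovering the reinforcement $(b)$: with only $(a)$ available as inductive hypothesis, both explicit-substitution cases break down, because re-typing the left component fixes a new multi type for the shared variable, over which one has no control unless the statement for the right component can \emph{prescribe} its output type; and doing this while keeping the count exact is precisely what forces the passage from ``$\sizem{\tderiv} = \sizectx{\typctxtwo} + \sizetyp{\mtypetwo}$'' to ``$\sizectx{\typctxtwo} = \sizem{\tderiv} + \sizetyp{\mtypethree}$'' in $(b)$. Everything else is a careful but routine tally of the occurrences of $\multimap$ against the occurrences of $\ruleFun$ and $\ruleAp$, relying on \Cref{l:skel-equiv-properties} throughout to keep cardinalities aligned so that the prescribed types are admissible.
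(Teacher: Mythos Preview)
Your proposal is correct and matches the paper's approach essentially verbatim: the paper strengthens the statement on \full inert terms exactly as your $(b)$ (written there as $\sizem{\tderiv} = \sizectx{\typctxtwo} - \sizetyp{\mtypetwo}$, which is your equation rearranged), with the same cardinality constraint $\card{\mtypetwo} = \card{\mtype}$ on the prescribed output, and proves both parts by mutual induction on the structure of \full fireballs and \full inert terms with the same case analysis and the same cancellations you describe.
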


Note that, since in general types bound the size of the derivation (\Cref{prop:types-bound-normal-derivations}), the size representation property is stating that the types of $\tderivtwo$ are minimal.

\smallskip
\paragraph*{Weak Exact Bounds of Kind 3} The size representation property induces bounds of kind 3, but only of a weak form.
\begin{theorem}[Weak exact bounds of kind 3]
	\label{thm:weak-exact-bounds-3}
	\NoteProof{thmappendix:weak-exact-bounds-3}
 Let $\tm$ and $\tmtwo$ be normal. 
If $\deriv:\tm\tmtwo \tovsubs^* \tmthree$ and $\tmthree$ is normal then there 
exist $\mtype\in \sem{\tm}$ and $\mtypetwo\in\sem{\tmtwo}$ such that $2\sizem{\deriv} + \sizefu\tmthree = 
\size\mtype+\size\mtypetwo+1$. 
\end{theorem}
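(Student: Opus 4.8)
The plan is to read the two semantic witnesses off a type derivation for $\tm\tmtwo$ supplied by shrinking completeness, after minimizing its two immediate sub-derivations via the size representation property. The three ingredients are: \Cref{thm:completeness}, which converts the evaluation $\deriv$ into a derivation of $\tm\tmtwo$ whose multiplicative size is exactly $2\sizem{\deriv}+\sizefu{\tmthree}$; the elementary remark that a type derivation of an application must end with the rule $\ruleAp$, hence decomposes into a derivation of $\tm$, a derivation of $\tmtwo$, and one unit of multiplicative size for that $\ruleAp$; and \Cref{prop:size-representation}, which applies because $\tm$ and $\tmtwo$, being normal, are \full fireballs by \Cref{prop:properties-full-reduction}, and which replaces each of those two sub-derivations by a skeleton-equivalent one whose type has size equal to the original multiplicative size.

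Concretely, I would first apply \Cref{thm:completeness} to $\deriv \colon \tm\tmtwo \tovsubs^* \tmthree$ (by fullness, ``$\tmthree$ normal'' is the same as ``$\tmthree$ is $\vsub$-normal'') to get a derivation $\concl{\tderiv}{\,}{\tm\tmtwo}{\mtype_0}$ --- its type context is empty since $\tm\tmtwo$ is closed --- with $\sizem{\tderiv} = 2\sizem{\deriv}+\sizefu{\tmthree}$. As $\tm\tmtwo$ is an application, $\tderiv$ ends with $\ruleAp$ and so has premises $\concl{\tderiv_1}{\,}{\tm}{\mset{\larrow{\mtype_1}{\mtype_0}}}$ and $\concl{\tderiv_2}{\,}{\tmtwo}{\mtype_1}$, both with empty type context (their $\uplus$ being empty), and $\sizem{\tderiv} = \sizem{\tderiv_1}+\sizem{\tderiv_2}+1$ because $\sizem{\cdot}$ counts the single $\ruleAp$ occurrence once (and ignores $\ruleMany$ rules). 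Then I apply \Cref{prop:size-representation} to $\tderiv_1$ and to $\tderiv_2$, obtaining $\tderivtwo_1 \tderiveq \tderiv_1$ and $\tderivtwo_2 \tderiveq \tderiv_2$ with $\concl{\tderivtwo_1}{\typctxtwo_1}{\tm}{\mtype}$ and $\concl{\tderivtwo_2}{\typctxtwo_2}{\tmtwo}{\mtypetwo}$ such that $\sizem{\tderiv_1} = \sizectx{\typctxtwo_1} + \size{\mtype}$ and $\sizem{\tderiv_2} = \sizectx{\typctxtwo_2} + \size{\mtypetwo}$. By the $\tderiveq$-invariants (\Cref{l:skel-equiv-properties}), $\typctxtwo_1$ and $\typctxtwo_2$ have the same (empty) domains as the type contexts of $\tderiv_1$ and $\tderiv_2$, hence are empty, so $\sizectx{\typctxtwo_1}=\sizectx{\typctxtwo_2}=0$; therefore $\mtype \in \sem{\tm}$, $\mtypetwo \in \sem{\tmtwo}$, $\size{\mtype} = \sizem{\tderiv_1}$ and $\size{\mtypetwo} = \sizem{\tderiv_2}$. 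Putting everything together, $2\sizem{\deriv}+\sizefu{\tmthree} = \sizem{\tderiv} = \sizem{\tderiv_1}+\sizem{\tderiv_2}+1 = \size{\mtype}+\size{\mtypetwo}+1$.

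The computations are routine; the point worth stressing is why this argument yields only the \emph{weak} form of bounds of kind 3. Size representation minimizes $\tderiv_1$ and $\tderiv_2$ independently, so $\mtype$ and $\mtypetwo$ are in general not composable: $\mtype$ need not be a singleton $\mset{\larrow{\mtype_1'}{\mtype_0'}}$ with $\mtype_1' = \mtypetwo$, and one cannot reassemble from $\tderivtwo_1$ and $\tderivtwo_2$ a derivation of $\tm\tmtwo$ realizing the same bound --- this is exactly the ``minimal types do not compose'' phenomenon of the $\delta\Id$ example. For the present statement this is harmless, since we only need the \emph{sum} $\size{\mtype}+\size{\mtypetwo}+1$, where the $+1$ is precisely the contribution of the outermost $\ruleAp$ to $\sizem{\tderiv}$: no recomposition is required, the equality being purely arithmetic. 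The delicate points are just making sure the $\ruleAp$-decomposition really produces empty type contexts --- so that the witnesses lie in the semantics and the $\sizectx{\cdot}$ terms vanish --- and keeping track that $\ruleMany$ rules do not count towards $\sizem{\cdot}$.
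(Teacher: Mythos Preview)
Your proof is correct and follows essentially the same approach as the paper's: apply shrinking completeness to $\tm\tmtwo$, split the resulting derivation at its final $\ruleAp$, then apply size representation (\Cref{prop:size-representation}) and the $\tderiveq$-invariants (\Cref{l:skel-equiv-properties}) to each sub-derivation to read off $\mtype$ and $\mtypetwo$. Your write-up is in fact more detailed than the paper's, explicitly justifying why $\tm$ and $\tmtwo$ are \full fireballs and why the resulting type contexts are empty.
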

We obtained exact bounds, but we have lost the fact that the involved types form a composable pair, because in general the types providing the bound do not compose, as the example about $\delta\Id$ shows. This is the weakness of \Cref{thm:weak-exact-bounds-3}.

We can now prove de Carvalho's original statement, connecting exactness with composability. The refinement however amounts to quite mechanical technicalities, 
which is why we claim that the essence is in the size representation property.

\paragraph*{Dissecting Size Representations} We now provide a refinement of size representation. The refinement 
requires the variant of the type system where the grammar of types has an infinity of ground types 
$\{\ground_i\}_{i\in\nat}$ (instead of having just $\ground$), whose judgments are noted $\vdash^\infty$. The 
improvement is that given $\conclin{\tderiv}{\typctx}{\tm}{\mtype}$ we can find a derivation 
$\conclin{\tderivtwo}{\typctxtwo}{\tm}{\mtypetwo}$ and a substitution $\sigma$ such that $\sigma(\mtypetwo) = \mtype$ 
and $\sigma(\typctxtwo) = \typctx$---we call the pair $(\sigma,\tderivtwo)$ a \emph{dissection}. 

\paragraph*{Technicalities about Substitutions} A substitution $\sigma$ is a function from ground types to 
linear types that is the identity but a for finite number of ground types. It is extended to act on types, multi types, 
type contexts, and derivations as expected. We use  $\gt\tderiv$ for 
the set of ground types occurring in $\tderiv$.

\begin{definition}[Dissections]
Let $\conclin{\tderiv}{\typctx}{\tm}{\mtypetwo}$ be a derivation. A \emph{dissection} of $\tderiv$ is a pair $(\tderivtwo, 
\sigma)$ where $\conclin{\tderivtwo}{\typctx'}{\tm}{\mtypetwo'}$ is a derivation and $\sigma$ is a substitution such that:
\begin{enumerate}
	\item \emph{Skeletons}: $\tderivtwo \tderiveq \tderiv$;
	\item \emph{Representation}: $\sigma(\tderivtwo) = \tderiv$;
	\item \emph{Disjoint names}: $\gt{\tderivtwo}\cap \gt{\tderiv} = \emptyset$.
\end{enumerate}
And if $\sizem{\tderiv} = \sizectx{\typctx'} + \sizetyp{\mtypetwo'}$ then $(\tderivtwo, \sigma)$ is a \emph{size 
dissection} of $\tderiv$.
\end{definition}
Any derivation for a \full fireball admits a size dissection.

\begin{lemma}[Size dissection]
	\label{l:size-dissection} 
	\NoteProof{lappendix:size-dissection}
	Let $\tm$ be a \full fireball. 	For any $\conclin{\tderiv}{\typctx}{\tm}{\mtype}$
	there are a 
substitution $\sigma$ and a derivation $\conclin{\tderivtwo}{\typctx'}{\tm}{\mtype'}$ forming a size dissection of 
$\tderiv$.
\end{lemma}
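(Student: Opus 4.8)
The plan is to refine the proof of the size representation property (\Cref{prop:size-representation}) so that it produces, alongside the re-typed derivation, the witnessing substitution, by carrying out the re-typing in the infinitary system $\vdash^\infty$ using only \emph{fresh} ground types. Three of the four clauses of a size dissection then come almost for free. \emph{Skeletons} is exactly $\tderivtwo \tderiveq \tderiv$, and by \Cref{l:skel-equiv-properties} this already gives $\sizem\tderivtwo = \sizem\tderiv$, so the size clause $\sizem\tderiv = \sizectx{\typctx'} + \sizetyp{\mtype'}$ is equivalent to asking the types of $\tderivtwo$ to be minimal in the sense of \Cref{prop:types-bound-normal-derivations}. \emph{Disjoint names} is arranged by $\alpha$-renaming the fresh ground types, which is harmless because ground types are atoms and such renamings preserve derivations, skeletons, both sizes and shrinkingness. \emph{Representation}, i.e.\ $\sigma(\tderivtwo) = \tderiv$, will hold by construction: each fresh ground type introduced in $\tderivtwo$ sits exactly in the positions where $\tderiv$ shows a fixed type, and $\sigma$ is defined to send it to that type.

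I would argue by induction on the \full fireball $\tm$, with the case split $\tm = \sitm$ (full inert), $\tm = \la\var\sfire$ (full value) and $\tm = \sfire\esub\var\sitm$, and, within the inert case, distinguishing $\var$, $\sitm\,\sfire$ and $\sitm\esub\var{\sitmtwo}$ --- exactly as for \Cref{prop:size-representation}, and with the same strengthening of the inert statement needed for the \ih (morally a quantitative, $\vdash^\infty$-level counterpart of how inertness spreads, cf.\ \Cref{l:spread-inert}). At the only place where new ground types are born --- the axiom on a head variable --- one replaces each atomic position of the type forced by the skeleton below with a fresh ground type, and extends $\sigma$ accordingly. In the composite cases one applies the \ih to the immediate sub-derivations, first $\alpha$-renaming the fresh ground types of the different sub-dissections so that they are pairwise disjoint and disjoint from $\gt\tderiv$, so that the union of the sub-substitutions is a well-defined substitution $\sigma$; the size equation then follows by summing the sub-equations with the single $\multimap$ contributed either by the $\ruleAp$ rule (which lands in $\typctx'$, since the argument type of a rigid head variable lives in the type context) or by the $\ruleFun$ rule (which lands in $\mtype'$), mirroring the arithmetic in the proof of \Cref{prop:types-bound-normal-derivations} but with equalities everywhere --- the $\ruleAx$ and $\ruleMany$ rules, which do not count in $\sizem$, being paid for by ground (hence size-$0$) types only.

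The main obstacle is the rigid/inert case $\var\,\sfire_1 \cdots \sfire_k$: the axiom on the head variable must carry an arrow chain whose argument slots are forced to be precisely the minimal types returned by the sub-dissections of $\sfire_1, \dots, \sfire_k$ (each of which may occur in several copies, i.e.\ under a $\ruleMany$ rule whose premises must keep mutually disjoint fresh names), while its tail must be built from fresh ground types mapped by $\sigma$ to the actual output of the term in $\tderiv$. Making the freshening commute with all these $\ruleMany$-unions, and checking that $(\typelist{\typctx'}, \mtype')$ ends up with exactly one $\multimap$ per $\ruleAp$/$\ruleFun$ occurrence of $\tderiv$ and none to spare, is where the (entirely mechanical) work concentrates; the remaining cases --- $\la\var\sfire$, where the only new arrow sits on the right, $\sfire\esub\var\sitm$, obtained directly from the two \ih's, and the stripping of the outermost $\ruleMany$ when $\tm$ is a value --- are routine. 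As anticipated, this construction introduces no new idea beyond size representation: it is exactly the substitution book-keeping of \cite{deCarvalho18}, now isolated as a refinement step.
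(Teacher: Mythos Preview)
Your proposal is correct and follows essentially the same route as the paper: a refinement of the size-representation induction on the fireball structure, with a strengthened inert clause and fresh-ground-type bookkeeping to build the substitution. The paper organizes the bookkeeping slightly differently---it threads an explicit forbidden set $\groundset \supseteq \gt\tderiv$ of ground types through the induction (rather than $\alpha$-renaming post hoc), and parameterizes the inert \ih\ not merely by a target output type $\mtype'$ of matching cardinality but by a pre-given partial substitution $\tau$ with $\dom\tau = \gt{\mtype'}$ and $\tau(\mtype') = \mtype$ that the produced $\sigma$ must extend---which is precisely the mechanism you gesture at when you speak of the argument slots being ``forced'' and of ``extending $\sigma$ accordingly'' at the head-variable axiom.
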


Finally, we use size dissections to retrieve the \cbv analogous de Carvalho's result, stating that the types providing the bound are the smallest ones which can be turned into a composable pair via a type substitution. 

\begin{theorem}[Exact bounds of kind 3]
	\label{thm:exact-bounds-3}
	\NoteProof{thmappendix:exact-bounds-3}
Let $\tm$ and $\tmtwo$ be closed \full fireballs. If $\deriv \colon \tm\tmtwo \tovsubs^* \tmthree$ and $\tmthree$ is normal 
then $2\sizem{\deriv} + \sizefu{\tmthree} =  \inf\set{\size{\larrow{\mtypetwo}{\mtype}}+\size\mtypethree+1\ |\ \exists 
\sigma\mbox{ such that }(\larrow{\sigma(\mtypetwo)}{\sigma(\mtype)}, \sigma(\mtypethree))\in U(\tm,\tmtwo)}$.
\end{theorem}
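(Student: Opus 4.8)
The plan is to prove the two inequalities between $L := 2\sizem{\deriv} + \sizefu{\tmthree}$ and the infimum separately, obtaining moreover that the infimum is attained (so a minimum). The tools are shrinking completeness and correctness (\Cref{thm:completeness,thm:correctness}), the size dissection lemma (\Cref{l:size-dissection}), the bound of types on derivation size (\Cref{prop:types-bound-normal-derivations}, whose proof is insensitive to the number of ground types and so transfers verbatim to $\vdash^\infty$), and the diamond property of $\tovsubs$ (\Cref{prop:external-properties}): since $\tm\tmtwo$ is $\esssym$-normalizing, its $\vsub$-normal form is $\tmthree$ and every $\esssym$-evaluation reaching it performs exactly $\sizem\deriv$ multiplicative steps — this lets one compare the evaluation $\deriv$ with \emph{any} shrinking derivation of $\tm\tmtwo$ via \Cref{thm:correctness}.

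For $L \geq \inf$, which I will in fact upgrade to ``$L$ belongs to the set'': by shrinking completeness there is a unitary shrinking derivation $\tderiv$ of $\tm\tmtwo$ with $\sizem\tderiv = L$. Being the typing of an application, $\tderiv$ ends with rule $\ruleAp$, so it comes from $\conclin{\tderiv_l}{\,}{\tm}{\mset{\larrow{\mtypetwo}{\mtype}}}$ and $\conclin{\tderiv_r}{\,}{\tmtwo}{\mtypetwo}$ (plain, hence a fortiori $\vdash^\infty$) with $\sizem\tderiv = \sizem{\tderiv_l} + \sizem{\tderiv_r} + 1$, and $\mtype$ is right since $\tderiv$ is unitary shrinking. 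Applying size dissection to $\tderiv_l$ and $\tderiv_r$ (legitimate as $\tm,\tmtwo$ are \full fireballs) yields $\vdash^\infty$-derivations $\conclin{\tderiv_l'}{\,}{\tm}{\mset{\larrow{\mtypetwo'}{\mtype'}}}$ and $\conclin{\tderiv_r'}{\,}{\tmtwo}{\mtypethree'}$ and substitutions $\sigma_l,\sigma_r$ with $\sigma_l(\tderiv_l') = \tderiv_l$, $\sigma_r(\tderiv_r') = \tderiv_r$, $\sizem{\tderiv_l} = \size{\larrow{\mtypetwo'}{\mtype'}}$ and $\sizem{\tderiv_r} = \size{\mtypethree'}$; choosing the two dissections so that $\gt{\tderiv_l'} \cap \gt{\tderiv_r'} = \emptyset$, the substitutions glue into a single $\sigma$ with $\sigma(\larrow{\mtypetwo'}{\mtype'}) = \larrow{\mtypetwo}{\mtype}$ and $\sigma(\mtypethree') = \mtypetwo$. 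Then $(\larrow{\sigma(\mtypetwo')}{\sigma(\mtype')}, \sigma(\mtypethree')) = (\larrow{\mtypetwo}{\mtype}, \mtypetwo) \in U(\tm,\tmtwo)$, the three membership conditions being witnessed by $\tderiv_l$ (for $\larrow{\mtypetwo}{\mtype}\in\sem\tm$), $\tderiv_r$ (for $\mtypetwo\in\sem\tmtwo$), and $\mtype$ right. Hence $\size{\larrow{\mtypetwo'}{\mtype'}} + \size{\mtypethree'} + 1 = \sizem{\tderiv_l} + \sizem{\tderiv_r} + 1 = \sizem\tderiv = L$ is an element of the set. The ground-type bookkeeping here is the only real friction: \Cref{l:size-dissection} makes the fresh types avoid only those of the dissected derivation, so one has to observe that, with infinitely many ground types available, they can be taken disjoint from any fixed finite set — e.g. by post-composing a dissection with a bijective renaming of ground types, which preserves $\tderiveq$, representation, and the size equality.

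For $L \leq \inf$, i.e. $L$ is a lower bound of the set: take any $\larrow{\mtypetwo}{\mtype}$, $\mtypethree$, $\sigma$ with $(\larrow{\sigma(\mtypetwo)}{\sigma(\mtype)}, \sigma(\mtypethree)) \in U(\tm,\tmtwo)$, realized by $\vdash^\infty$-derivations $\conclin{\tderiv_1}{\,}{\tm}{\mset{\larrow{\mtypetwo}{\mtype}}}$ and $\conclin{\tderiv_2}{\,}{\tmtwo}{\mtypethree}$. Applying $\sigma$, the derivations $\sigma(\tderiv_1)$ and $\sigma(\tderiv_2)$ type $\tm : \mset{\larrow{\sigma(\mtypetwo)}{\sigma(\mtype)}}$ and $\tmtwo : \sigma(\mtypethree)$ with unchanged multiplicative sizes (substitution preserves $\sizem{\cdot}$). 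Since membership in $U(\tm,\tmtwo)$ forces $\sigma(\mtypethree) = \sigma(\mtypetwo)$, rule $\ruleAp$ composes them into a derivation $\tderivthree$ of $\tm\tmtwo : \sigma(\mtype)$ with $\sizem{\tderivthree} = \sizem{\tderiv_1} + \sizem{\tderiv_2} + 1$, and $\tderivthree$ is shrinking because $\sigma(\mtype)$ is right and the context is empty. By \Cref{prop:types-bound-normal-derivations}, applied to the \full fireballs $\tm$ and $\tmtwo$ \emph{before} substituting, $\sizem{\tderiv_1} \leq \size{\larrow{\mtypetwo}{\mtype}}$ and $\sizem{\tderiv_2} \leq \size{\mtypethree}$. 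Then shrinking correctness on $\tderivthree$ plus the diamond property give $L \leq \sizem{\tderivthree} \leq \size{\larrow{\mtypetwo}{\mtype}} + \size{\mtypethree} + 1$. The point easy to get backwards here is exactly this last step: the type-size bound must be read off the \emph{un-substituted} types, exploiting that $\sigma$ leaves multiplicative size fixed but can only enlarge type size — applying \Cref{prop:types-bound-normal-derivations} to the $\sigma$-images would yield the wrong direction. Combining the two inequalities gives $L = \inf$, which by the first part is attained.
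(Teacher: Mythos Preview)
Your approach for the direction showing that $L := 2\sizem{\deriv} + \sizefu{\tmthree}$ belongs to the set is the paper's: take a unitary shrinking derivation of $\tm\tmtwo$, split at the final $\ruleAp$, apply size dissection to each premise, and glue the two substitutions. The only cosmetic difference is the ground-type disjointness: you obtain it by post-composing one dissection with a bijective renaming, whereas the paper uses the strengthened appendix form of the dissection lemma, which already lets one avoid any prescribed finite set $\groundset \supseteq \gt{\tderiv}$. Either works.

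Where you go beyond the paper is the lower-bound direction: that every element of the set is at least $L$. The paper's proof simply asserts that the infimum is attained at the element produced from a unitary shrinking derivation, without checking that no smaller element exists. Your argument for this---bound $\sizem{\tderiv_1}$ and $\sizem{\tderiv_2}$ by the \emph{unsubstituted} type sizes via \Cref{prop:types-bound-normal-derivations} (which indeed transfers to $\vdash^\infty$), then apply $\sigma$, compose via $\ruleAp$ into a shrinking derivation of $\tm\tmtwo$, and invoke shrinking correctness together with the diamond property to compare with the given $\deriv$---is correct and is precisely the missing step. You also rightly make explicit that the set must be read as ranging over $\larrow{\mtypetwo}{\mtype}\in\sem^\infty\tm$ and $\mtypethree\in\sem^\infty\tmtwo$; the paper intends this reading too (its proof only produces elements of this form), but the statement as written is ambiguous, and under the looser reading the infimum could fall strictly below $L$.
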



\paragraph*{Exact Bounds of Kind 2} As for kind 2, we obtain exact bounds as a corollary of size representation (\Cref{prop:size-representation}), by composing it with the exact bounds from derivations. Here, exactness comes from plain minimality, as usual.

\begin{corollary}[Minimal types catch the size of normal forms]
	\label{cor:minimal-type-normal}
	\NoteProof{corappendix:minimal-type-normal}
Let $\tm$ be a \full fireball. There exists a unitary shrinking derivation $\concl{\tderiv}{\typctx}{\tm}{\mtype}$ such 
that $\sizefu{\tm} = \sizetyp{\mtype} + \sizectx{\typctx}$.
\end{corollary}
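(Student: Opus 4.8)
The plan is to chain three results that have already been established, with no new argument needed---this is precisely why the statement is phrased as a corollary. First, I would apply the \emph{Fireball} case of \Cref{prop:typability-normal}: since $\tm$ is a \full fireball, it admits a unitary shrinking derivation $\concl{\tderiv}{\typctx}{\tm}{\mtype}$. Because $\tderiv$ is unitary shrinking, $\typctx$ is unitary \leftsh and $\mtype$ is unitary \rightsh; in particular the side condition of \Cref{l:size-strong-fireballs} (if $\tm$ is an \valES then $\mtype$ is unitary \rightsh) is satisfied, so that lemma gives $\sizefu{\tm} = \sizem{\tderiv}$.

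Next, I would invoke the size representation property (\Cref{prop:size-representation}) on this derivation: applied to the \full fireball $\tm$ and the derivation $\tderiv$, it produces a derivation $\concl{\tderivtwo}{\typctxtwo}{\tm}{\mtypetwo}$ with $\tderivtwo \tderiveq \tderiv$ and $\sizem{\tderiv} = \sizectx{\typctxtwo} + \sizetyp{\mtypetwo}$. Then, by the $\tderiveq$-invariants lemma (\Cref{l:skel-equiv-properties}), skeleton equivalence preserves unitary shrinkingness, so $\tderivtwo$ is itself unitary shrinking (in particular it is a genuine derivation of $\tm$ of the required kind).

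Putting the pieces together, $\sizefu{\tm} = \sizem{\tderiv} = \sizectx{\typctxtwo} + \sizetyp{\mtypetwo}$, so $\tderivtwo$ is the desired unitary shrinking derivation of $\tm$ whose conclusion has type size exactly $\sizefu{\tm}$. The only points that demand any care---and they are minor---are checking that the \valES side condition of \Cref{l:size-strong-fireballs} is met by a unitary shrinking derivation, and that \Cref{l:skel-equiv-properties} indeed transports the (unitary) shrinking predicate along $\tderiveq$; everything else is immediate. Thus there is no real obstacle here: the conceptual content already resides in the size representation property, and this corollary merely packages it together with tight typability of normal forms and the size-of-\full-fireballs lemma.
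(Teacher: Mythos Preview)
Your proposal is correct and follows essentially the same approach as the paper's proof: obtain a unitary shrinking derivation via \Cref{prop:typability-normal}, use \Cref{l:size-strong-fireballs} to equate $\sizefu{\tm}$ with its multiplicative size, apply size representation (\Cref{prop:size-representation}) to get a skeleton-equivalent derivation whose conclusion has matching type size, and transport unitary shrinkingness along $\tderiveq$ via \Cref{l:skel-equiv-properties}. The only difference is notational (the paper swaps the roles of the names $\tderiv$ and $\tderivtwo$).
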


\IEEEpeerreviewmaketitle

\section*{Acknowledgments}
To Andrea Condoluci and Claudio Sacerdoti Coen for discussions about \scbv.

\bibliographystyle{IEEEtranS}
\bibliography{main.bbl}

\clearpage
\appendices

\section{Preliminaries and Notations in Rewriting}
\label{sect:preliminaries}
For a relation $R$ on a set of terms, $R^*$ is its reflexive-transitive closure. 
Given a relation $\Rew{\Rule}$, an $\Rule$-\emph{evaluation}
(or simply evaluation if unambiguous) $\deriv$ is a finite sequence of terms $(\tm_i)_{0 \leq i \leq n}$ (for some $n \geq 0$) such that $\tm_i \Rew{\Rule} \tm_{i+1}$ for all $1 \leq i < n$, and we write $\deriv \colon \tm \Rew{\Rule}^* \tmtwo$ if $\tm_0 = \tm$ and $\tm_n = \tmtwo$. The
\emph{length} $n$ of $\deriv$ is denoted by $\size{\deriv}$, and $\size{\deriv}_a$ is the number of $a$-\emph{steps} (\ie the number of $\tm_i \Rew{a} \tm_{i+1}$ for some $1 \leq i \leq n$) in $\deriv$, for a given subrelation $\Rew{a}$ of $\Rew{\Rule}$.

A term $\tm$ is $\Rule$-\emph{normal} if there is no $\tmtwo$ such that $\tm \Rew{\Rule} \tmtwo$.
An evaluation $\deriv \colon \tm \Rew{\Rule}^* \tmtwo$ is \emph{$\Rule$-normalizing} if $\tmtwo$ is $\Rule$-normal.
A term $\tm$ is \emph{weakly $\Rule$-normalizing} if there is a $\Rule$-normalizing evaluation $\deriv \colon \tm \Rew{\Rule}^* \tmtwo$; and $\tm$ is \emph{strongly $\Rule$-normalizing} if there no infinite sequence $(\tm_i)_{i \in \nat}$  such that $\tm_0  = \tm$ and $\tm_i \Rew{\Rule} \tm_{i+1}$ for all $i \in \nat$.
Clearly, strong $\Rule$-normalization implies weak $\Rule$-normalization.

A relation $\Rew{\Rule}$ is \emph{diamond} if $\tmtwo_1 \,{}_\Rule\!\!\lto \tm \Rew{\Rule} \tmtwo_2$ and $\tmtwo_1 \neq \tmtwo_2$ imply $\tmtwo_1 \Rew{\Rule} \tmthree \, {}_\Rule\!\!\lto \tmtwo_2$ for some $\tmthree$. 
As a consequence:
\begin{enumerate}
	\item $\Rew{\Rule}$ is confluent (\ie $\tmtwo_1 \,{}_\Rule^*\!\!\lto \tm \Rew{\Rule}^* \tmtwo_2$  implies $\tmtwo_1 \Rew{\Rule}^* \tmthree \, {}_\Rule^*\!\!\lto \tmtwo_2$ for some $\tmthree$); \item any term $\tm$ has at most one normal form (\ie if $\tm \Rew{\Rule}^* \tmtwo$  and $\tm \Rew{\Rule}^* \tmthree$ with $\tmtwo$ and $\tmthree$ $\Rule$-normal, then $\tmtwo = \tmthree$);
	\item all $\Rule$-evaluations with the same start and end terms have the same length (\ie if $\deriv \colon \tm \Rew{\Rule}^* \tmtwo$  and $\deriv' \colon \tm \Rew{\Rule}^* \tmtwo$ then $\size{\deriv} = \size{\deriv'}$);
	\item $\tm$ is weakly $\Rule$-normalizing iff it is strongly $\Rule$-normalizing.
\end{enumerate}

Two relations $\Rew{\Rule_1}$ and $\Rew{\Rule_2}$ \emph{strongly commute} if $\tmtwo_1 \,{}_{\Rule_1}\!\!\!\lto \tm \Rew{\Rule_2} \tmtwo_2$ implies $\tmtwo_1 \Rew{\Rule_2} \tmthree \, {}_{\Rule_2}\!\!\!\lto \tmtwo_2$ for some $\tmthree$. 
If $\Rew{\Rule_1}$ and $\Rew{\Rule_2}$ strongly commute and are diamond, then 
\begin{enumerate}
	\item $\Rew{\Rule} \, = \, \Rew{\Rule_1} \!\cup \Rew{\Rule_2}$ is diamond,
	\item all $\Rule$-evaluations with the same start and end terms have the same number of any kind of steps (\ie if $\deriv \colon \tm \Rew{\Rule}^* \tmtwo$  and $\deriv' \colon \tm \Rew{\Rule}^* \tmtwo$ then $\size{\deriv}_{\Rule_1} = \size{\deriv'}_{\Rule_1}$ and $\size{\deriv}_{\Rule_2} = \size{\deriv'}_{\Rule_2}$).
\end{enumerate}

It is a strong form of confluence and implies \emph{uniform normalization} (if there is a normalizing sequence from $\tm$ then there are no diverging sequences from $\tm$) and the 
\emph{random descent property} (all normalizing sequences from $\tm$ have the same length)

%
\section{Proofs of \Cref{sect:vsc}}

\begin{lemma}[Basic Properties of $\vsubcalc$]
	\label{l:basic-value-substitution}
	\hfill
	\begin{enumerate}
		\item\label{p:basic-value-substitution-tom-toe-terminates} $\tom$ and $\toe$ are strongly normalizing (separately).
		\item\label{p:basic-value-substitution-tom-toe-diamond-open} $\tomo$ and $\toeo$ are diamond  (separately).
		
		\item\label{p:basic-value-substitution-tom-toe-commute-open}  $\tomo$ and $\toeo$ strongly commute.
	\end{enumerate}
\end{lemma}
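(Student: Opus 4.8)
The plan is to establish the three items separately; they are essentially independent, and items~\ref{p:basic-value-substitution-tom-toe-diamond-open} and~\ref{p:basic-value-substitution-tom-toe-commute-open} share a common residual-analysis argument.

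For item~\ref{p:basic-value-substitution-tom-toe-terminates}, the $\tom$ case is immediate: the root rule $\sctxp{\la\var\tm}\tmtwo \rtom \sctxp{\tm\esub\var\tmtwo}$ removes exactly one application (the one of the redex pattern) and creates none, leaving $\tm$, $\tmtwo$ and $\sctx$ untouched; since $\tom$ is the contextual closure of this rule, every $\tom$-step strictly decreases the total number of applications in the term, a terminating measure. The $\toe$ case is the delicate point of the lemma and is where I expect the main obstacle, because the root rule $\tm\esub\var{\sctxp\val} \rtoe \sctxp{\tm\isub{\var}{\val}}$ \emph{duplicates} the value $\val$ (once per free occurrence of $\var$ in $\tm$): neither the size of the term nor the number of explicit substitutions is monotone, and in fact both can strictly increase --- e.g.\ when $\var$ occurs at least twice in $\tm$ and $\val$ carries explicit substitutions, or when $\var$ is precisely the content of some explicit substitution of $\tm$, which then turns into a fresh exponential redex. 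The way out is to design a measure over-approximating the ``substitution work still to be done'': to each explicit substitution one assigns a weight depending on the weight of its content and on the number of occurrences of its bound variable counted \emph{with the multiplicities induced by the explicit substitutions already present}, arranged so that firing an exponential redex strictly decreases the associated (multiset/ordinal) quantity; I would then verify the decrease on the root rule and lift it through arbitrary contexts. An alternative is to project $\toe$ into the substitution-plus-garbage-collection fragment of the linear substitution calculus, whose termination is known.

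For items~\ref{p:basic-value-substitution-tom-toe-diamond-open} and~\ref{p:basic-value-substitution-tom-toe-commute-open}, the common tool is a residual analysis of two coinitial $\osym$-steps, crucially exploiting that open contexts never enter abstractions, so \emph{no} $\osym$-redex (neither $\msym$ nor $\esym$) lies under a $\lambda$. Two consequences. First, the $\tom$-rule is linear --- it neither erases nor duplicates --- and each application node hosts at most one $\msym$-redex (the substitution context of the pattern is forced down to the abstraction), so distinct $\msym$-redexes use disjoint application/abstraction nodes and each survives, with a unique residual, the firing of the other. Second, the $\toe$-rule duplicates only the value $\val$, whose proper subterms sit under its leading $\lambda$ and hence contain no $\osym$-redex, while the guarding substitution context of the content is merely extruded intact; so firing an $\esym$-redex also gives every other $\osym$-redex exactly one residual. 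In every case the residual is still a redex of the right kind (substituting a value into an answer yields an answer; substituting a value, or commuting out a substitution context, preserves the shape of an $\msym$-pattern), and the two-step diagram is closed by the substitution lemma $\tm\isub{\var}{\val_1}\isub{\vartwo}{\val_2} = \tm\isub{\vartwo}{\val_2}\isub{\var}{\val_1\isub{\vartwo}{\val_2}}$ (for $\var \neq \vartwo$ and $\var \notin \fv{\val_2}$, its side conditions holding by the binding structure), also checking that the substitution contexts extruded by two $\esym$-steps end up consistently nested. This yields the diamond of $\tomo$ and of $\toeo$ (item~\ref{p:basic-value-substitution-tom-toe-diamond-open}); for strong commutation (item~\ref{p:basic-value-substitution-tom-toe-commute-open}) the same analysis applies --- firing the $\msym$-step does not duplicate the $\esym$-redex (linearity of $\tom$) and firing the $\esym$-step does not duplicate the $\msym$-redex (it is not inside the copied value) --- so each side of the fork is closed by one step of the other reduction, which in turn, together with the diamond properties and the rewriting preliminaries, gives Proposition~\ref{prop:properties-open-reduction}.
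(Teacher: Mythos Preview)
Your proposal is correct and reaches the same conclusions as the paper, but packaged differently. For item~(1) the paper simply cites \cite[Lemma~3]{AccattoliPaolini12}, whereas you spell out the application-count measure for $\tom$ and sketch the standard weighted/multiplicity measure for $\toe$; both are acceptable, and your $\tom$ argument is in fact more informative than a bare citation. For items~(2) and~(3) the paper proceeds by an exhaustive inductive case analysis on the open evaluation context, enumerating all relative positions of the two redexes and closing each square explicitly. You give the same mathematical content as a one-shot residual argument, isolating the two facts that make every case close in exactly one step: $\rtom$ is linear (so an $\msym$-step leaves any other $\osym$-redex with a unique residual), and in the open fragment the duplicated value $\val$ carries no $\osym$-redex because its body sits under a $\lambda$ (so an $\esym$-step also leaves any other $\osym$-redex with a unique residual). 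Your route is more conceptual and explains \emph{why} the diamond and strong commutation hold; the paper's route is more mechanical but makes it immediate that no relative position has been overlooked.
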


\begin{proof}
	The statements of \reflemma{basic-value-substitution} are a refinement of some results proved in \cite{AccattoliPaolini12}, where $\tovsubo$ is denoted by $\to_\mathsf{w}$.
	\begin{enumerate}
		\item See \cite[Lemma~3]{AccattoliPaolini12}.
		
		\item We prove that $\tomo$ is diamond, \ie if $\tmtwo \lRew{\wmsym} \tm \tomo \tmthree$ with $\tmtwo \neq \tmthree$ then there exists $\tmp \in \Lambda_\vsub$ such that $\tmtwo \tomo \tmp \lRew{\wmsym} \tmthree$.
		The proof is by induction on the definition of $\tomo$. 
		Since there $\tm \tomo \tmthree \neq \tmtwo$ and the reduction $\tomo$ is weak, there are only eight cases:
		\begin{itemize}
			\item \emph{Step at the Root for $\tm \!\tomo\! \tmtwo$ and Application Right for $\tm \!\tomo\! \tmthree$}, \ie $\tm \defeq \sctxp{\la\var\tmfive}\tmfour \rtom \sctxp{\tmfive\esub{\var}{\tmfour}} \eqdef \tmtwo$ and $\tm \!\rtom\! \sctxp{\la\var\tmfive}\tmfourp\! \eqdef \tmthree$ with $\tmfour \!\tomo\! \tmfourp$: then, $\tmtwo \!\tomo\! \sctxp{\tmfive\esub{\var}{\tmfourp}} \!\lRew{\wmsym}\! \tmthree$;
			
			\item \emph{Step at the Root for $\tm \tomo \tmtwo$ and Application Left for $\tm \tomo \tmthree$}, \ie, for some $n > 0$, 
			$$
			\tm \defeq (\la\var\tmfive)\esub{\var_1}{\tm_1}\dots\esub{\var_n}{\tm_n}\tmfour \allowbreak\rtom \tmfive\esub{\var}{\tmfour}\esub{\var_1}{\tm_1}\dots\esub{\var_n}{\tm_n} \eqdef \tmtwo
			$$
			whereas $\tm \tomo \allowbreak (\la\var\tmfive)\esub{\var_1}{\tm_1}\dots\esub{\var_j}{\tmp_j}\dots\esub{\var_n}{\tm_n}\tmfour \eqdef \tmthree$ with $\tm_j \tomo \tmp_j$ for some $1 \leq j \leq n$: then, 
			\begin{align*}
			\tmtwo \tomo \allowbreak \tmfive\esub{\var}{\tmfour}\esub{\var_1}{\tm_1}\dots\esub{\var_j}{\tmp_j}\dots\esub{\var_n}{\tm_n} \lRew{\wmsym} \tmthree;
			\end{align*}
			\item \emph{Application Left for $\tm \tomo \tmtwo$ and Application Right for $\tm \tomo \tmthree$}, \ie $\tm \defeq \tmfour\tmfive \tomo \tmfourp\tmfive \eqdef \tmtwo$ and $\tm \tomo \tmfour\tmfivep \eqdef \tmthree$ with $\tmfour \tomo \tmfourp$ and $\tmfive \tomo \tmfivep$: then, $\tmtwo \tomo \tmfourp\tmfivep\! \lRew{\wmsym} \tmthree$;
			\item \emph{Application Left for both $\tm \tomo \tmtwo$ and $\tm \tomo \tmthree$}, \ie $\tm \defeq \tmfour\tmfive \tomo \tmfourp\tmfive \eqdef \tmtwo$ and $\tm \tomo \tmfour''\tmfive \eqdef \tmthree$ with $\tmfourp \lRew{\wmsym} \tmfour \tomo \tmfour''$: by \ih, there exists $\tmfour_0 \in \Lambda_\vsub$ such that $\tmfourp \tomo \tmfour_0 \lRew{\msym} \tmfour''$, hence $\tmtwo \tomo \tmfour_0\tmfive \lRew{\msym} \tmthree$;
			\item \emph{Application Right for both $\tm \tomo \tmtwo$ and $\tm \tomo \tmthree$}, \ie $\tm \defeq \tmfive\tmfour \tomo \tmfive\tmfourp \eqdef \tmtwo$ and $\tm \tomo \tmfive\tmfour'' \eqdef \tmthree$ with $\tmfourp \lRew{\wmsym} \tmfour \tomo \tmfour''$: by \ih, there exists $\tmfour_0 \in \Lambda_\vsub$ such that $\tmfourp \tomo \tmfour_0 \lRew{\wmsym} \tmfour''$, hence $\tmtwo \tomo \tmfive\tmfour_0 \lRew{\wmsym} \tmthree$;
			\item \emph{$\mathsf{ES}$ left for $\tm \tomo \tmtwo$ and $\mathsf{ES}$ right for $\tm \tomo \tmthree$}, \ie $\tm \defeq \tmfour\esub\var\tmfive \tomo \tmfourp\esub\var\tmfive \eqdef \tmtwo$ and $\tm \tomo \tmfour\esub\var\tmfivep \eqdef \tmthree$ with $\tmfour \tomo \tmfourp$ and $\tmfive \tomo \tmfivep$: then, 
			$$
			\tmtwo \tomo \tmfourp\esub\var\tmfivep\! \lRew{\wmsym} \tmthree
			$$
			\item \emph{$\mathsf{ES}$ left for both $\tm \tomo \tmtwo$ and $\tm \tomo \tmthree$}, \ie $\tm \defeq \tmfour\esub\var\tmfive \tomo \tmfourp\esub\var\tmfive \eqdef \tmtwo$ and $\tm \tomo \tmfour''\esub\var\tmfive \eqdef \tmthree$ with $\tmfourp \lRew{\wmsym} \tmfour \tomo \tmfour''$: by \ih, there exists $\tmfour_0 \in \Lambda_\vsub$ such that $\tmfourp \tomo \tmfour_0 \lRew{\wmsym} \tmfour''$, hence $\tmtwo \tom \tmfour_0\esub\var\tmfive \lRew{\wmsym} \tmthree$;
			\item \emph{$\mathsf{ES}$ right for both $\tm \tomo \tmtwo$ and $\tm \tomo \tmthree$}, \ie $\tm \defeq \tmfive\esub\var\tmfour \tomo \tmfive\esub\var\tmfourp \eqdef \tmtwo$ and $\tm \tomo \tmfive\esub\var{\tmfour''} \eqdef \tmthree$ with $\tmfourp \lRew{\wmsym} \tmfour \tom \tmfour''$: by \ih, there exists $\tmfour_0 \in \Lambda_\vsub$ such that $\tmfourp \tomo \tmfour_0 \lRew{\wmsym} \tmfour''$, hence $\tmtwo \tom \tmfive\esub\var{\tmfour_0} \lRew{\wmsym} \tmthree$.
		\end{itemize}
		
		\smallskip
		We prove that $\toeo$ is diamond, \ie if $\tmtwo \lRew{\wesym} \tm \toeo \tmthree$ with $\tmtwo \neq \tmthree$ then there exists $\tmfour \in \Lambda_\vsub$ such that $\tmtwo \toeo \tmp \lRew{\esym} \tmthree$.
		The proof is by induction on the definition of $\toeo$. 
		Since there $\tm \toeo \tmthree \neq \tmtwo$ and the reduction $\toeo$ is weak, there are only eight cases:
		\begin{itemize}
			\item \emph{Step at the Root for $\tm \!\toeo\! \tmtwo$} and \emph{$\mathsf{ES}$ left for $\tm \!\toeo\! \tmthree$}, \ie $\tm \defeq \tmfour\esub\var{\sctxp{\val}} \rtoe \sctxp{\tmfour\isub{\var}{\val}} \eqdef \tmtwo$ and $\tm \!\rtoe\! \tmfourp\esub\var{\sctxp{\val}}\! \eqdef \tmthree$ with $\tmfour \!\toeo\! \tmfourp$: then, 
			$$
			\tmtwo \!\toeo\! \sctxp{\tmfourp\esub{\var}{\val}} \!\lRew{\wesym}\! \tmthree
			$$
			
			\item \emph{Step at the Root for $\tm \toeo \tmtwo$} and \emph{$\mathsf{ES}$ right for $\tm \toeo \tmthree$}, \ie, for some $n > 0$, $\tm \defeq \tmfour\esub\var{\val\esub{\var_1}{\tm_1}\dots\esub{\var_n}{\tm_n}} \allowbreak\rtoe \tmfour\isub{\var}{\val}\esub{\var_1}{\tm_1}\dots\esub{\var_n}{\tm_n} \eqdef \tmtwo$ whereas $\tm \toeo \allowbreak \tmfour\esub{\var}{\val\esub{\var_1}{\tm_1}\dots\esub{\var_j}{\tmp_j}\dots\esub{\var_n}{\tm_n}} \eqdef \tmthree$ with $\tm_j \toeo \tmp_j$ for some $1 \leq j \leq n$: then, 
			\begin{align*}
			\tmtwo \toeo \allowbreak \tmfour\isub{\var}{\val}\esub{\var_1}{\tm_1}\dots\esub{\var_j}{\tmp_j}\dots\esub{\var_n}{\tm_n} \lRew{\wesym} \tmthree;
			\end{align*}
			\item \emph{Application Left for $\tm \toeo \tmtwo$} and \emph{Application Right for $\tm \toeo \tmthree$}, \ie $\tm \defeq \tmfour\tmfive \toeo \tmfourp\tmfive \eqdef \tmtwo$ and $\tm \toeo \tmfour\tmfivep \eqdef \tmthree$ with $\tmfour \toeo \tmfourp$ and $\tmfive \toeo \tmfivep$: then, $\tmtwo \toeo \tmfourp\tmfivep\! \lRew{\wesym} \tmthree$;
			\item \emph{Application Left for both $\tm \toeo \tmtwo$ and $\tm \toeo \tmthree$}, \ie $\tm \defeq \tmfour\tmfive \toeo \tmfourp\tmfive \eqdef \tmtwo$ and $\tm \toeo \tmfour''\tmfive \eqdef \tmthree$ with $\tmfourp \lRew{\wesym} \tmfour \toeo \tmfour''$: by \ih, there exists $\tmfour_0 \in \Lambda_\vsub$ such that $\tmfourp \toeo \tmfour_0 \lRew{\wesym} \tmfour''$, hence $\tmtwo \toeo \tmfour_0\tmfive \lRew{\wesym} \tmthree$;
			\item \emph{Application Right for both $\tm \toeo \tmtwo$ and $\tm \toeo \tmthree$}, \ie $\tm \defeq \tmfive\tmfour \toeo \tmfive\tmfourp \eqdef \tmtwo$ and $\tm \toeo \tmfive\tmfour'' \eqdef \tmthree$ with $\tmfourp \lRew{\wesym} \tmfour \toeo \tmfour''$: by \ih, there exists $\tmfour_0 \in \Lambda_\vsub$ such that $\tmfourp \toeo \tmfour_0 \lRew{\wesym} \tmfour''$, hence $\tmtwo \toeo \tmfive\tmfour_0 \lRew{\wesym} \tmthree$;
			\item \emph{$\mathsf{ES}$ left for $\tm \toeo \tmtwo$} and \emph{$\mathsf{ES}$ right for $\tm \toeo \tmthree$}, \ie $\tm \defeq \tmfour\esub\var\tmfive \toeo \tmfourp\esub\var\tmfive \eqdef \tmtwo$ and $\tm \toeo \tmfour\esub\var\tmfivep \eqdef \tmthree$ with $\tmfour \toeo \tmfourp$ and $\tmfive \toeo \tmfivep$: then, $\tmtwo \toeo \tmfourp\esub\var\tmfivep\! \lRew{\wesym} \tmthree$;
			\item \emph{$\mathsf{ES}$ left for both $\tm \toe \tmtwo$ and $\tm \toe \tmthree$}, \ie $\tm \defeq \tmfour\esub\var\tmfive \toe \tmfourp\esub\var\tmfive \eqdef \tmtwo$ and $\tm \toe \tmfour''\esub\var\tmfive \eqdef \tmthree$ with $\tmfourp \lRew{\esym} \tmfour \toe \tmfour''$: by \ih, there exists $\tmfour_0 \in \Lambda_\vsub$ such that $\tmfourp \toe \tmfour_0 \lRew{\esym} \tmfour''$, hence $\tmtwo \toe \tmfour_0\esub\var\tmfive \lRew{\esym} \tmthree$;
			\item \emph{$\mathsf{ES}$ right for both $\tm \toe \tmtwo$ and $\tm \toe \tmthree$}, \ie $\tm \defeq \tmfive\esub\var\tmfour \toe \tmfive\esub\var\tmfourp \eqdef \tmtwo$ and $\tm \toe \tmfive\esub\var{\tmfour''} \eqdef \tmthree$ with $\tmfourp \lRew{\esym} \tmfour \toe \tmfour''$: by \ih, there exists $\tmfour_0 \in \Lambda_\vsub$ such that $\tmfourp \toe \tmfour_0 \lRew{\esym} \tmfour''$, hence $\tmtwo \toe \tmfive\esub\var{\tmfour_0} \lRew{\esym} \tmthree$.
		\end{itemize}
		
		\smallskip
		
		Note that in \cite[Lemma~11]{AccattoliPaolini12} it has just been proved the strong confluence of $\tovsub$, not of $\tom$ or $\toe$.
		
		\item We show that $\toeo$ and $\tomo$ strongly commute, \ie if $\tmtwo \lRew{\wesym} \tm \tomo \tmthree$, then $\tmtwo \neq \tmthree$ and there is $\tmp \in \Lambda_\vsub$ such that $\tmtwo \tomo \tmp \lRew{\wesym} \tmthree$. 
		The proof is by induction on the definition of $\tm \toeo \tmtwo$. 
		The proof that $\tmtwo \neq \tmthree$ is left to the reader.
		Since the $\toe$ and $\tom$ cannot reduce under $\l$'s, all values are $\omsym$-normal and $\oesym$-normal. So, there are the following cases.
		\begin{itemize}
			\item \emph{Step at the Root for $\tm \toeo \tmtwo$} and \emph{$\mathsf{ES}$ left for $\tm \tomo \tmthree$}, \ie $\tm \defeq \tmfour\esub\varthree{\sctxp\val} \toeo \sctxp{\tmfour\isub\varthree{\val}} \eqdef \tmtwo$ and $\tm \tomo \tmfourp\esub\varthree{\sctxp{\val}} \eqdef \tmthree$ with $\tmfour \tomo \tmfourp$: then 
			$$
			\tmtwo \tomo \sctxp{\tmfourp\isub\varthree{\val}} \lRew{\wesym} \tmtwo
			$$
			\item \emph{Step at the Root for $\tm \toeo \tmtwo$} and \emph{$\mathsf{ES}$ right for $\tm \tomo \tmthree$}, \ie 
			\begin{align*}
			\tm &\defeq \tmfour\esub\varthree{\val\esub{\var_1}{\tm_1}\dots\esub{\var_n}{\tm_n}} \\
			&\toeo \tmfour\isub\varthree{\val}\esub{\var_1}{\tm_1}\dots\esub{\var_n}{\tm_n} \eqdef \tmtwo
			\end{align*}
			and $\tm \tomo \tmfour\esub\varthree{\val\esub{\var_1}{\tm_1}\dots\esub{\var_j}{\tmp_j}\dots\esub{\var_n}{\tm_n}} \eqdef \tmthree$
			for some $n > 0$, and $\tm_j \tomo \tmp_j$ for some $1 \leq j \leq n$: then, $\tmtwo \tomo \tmfour\isub\varthree{\val}\esub{\var_1}{\tm_1}\dots\esub{\var_j}{\tmp_j}\dots\esub{\var_n}{\tm_n} \lRew{\wesym} \tmthree$;
			
			\item \emph{Application Left for $\tm \toe \tmtwo$} and \emph{Application Right for $\tm \tom \tmthree$}, \ie $\tm \defeq \tmfour\tmfive \toeo \tmfourp\tmfive \eqdef \tmtwo$ and $\tm \tomo \tmfour\tmfivep \eqdef \tmthree$ with $\tmfour \toe \tmfourp$ and $\tmfive \tomo \tmfivep$: then, $\tm \tomo \tmfourp\tmfivep \lRew{\wesym} \tmtwo$;
			\item \emph{Application Left for both $\tm \toeo \tmtwo$ and $\tm \tomo \tmthree$}, \ie $\tm \defeq \tmfour\tmfive \toeo \tmfourp\tmfive \eqdef \tmtwo$ and $\tm \tomo \tmfour''\tmfive \eqdef \tmthree$ with $\tmfourp \lRew{\wesym} \tmfour \tomo \tmfour''$: by \ih, there exists $\tmsix \in \Lambda_\vsub$ such that $\tmfourp \tomo \tmsix \lRew{\wesym} \tmfour''$, hence $\tmtwo \tomo \tmsix\tmfive \lRew{\wesym} \tmthree$;
			\item \emph{Application Left for $\tm \toeo \tmtwo$} and \emph{Step at the Root for $\tm \tomo \tmthree$}, \ie $\tm \defeq (\la\var\tmfive){\esub{\var_1}{\tm_1}\dots\esub{\var_n}{\tm_n}}\tmfour \toeo (\la\var\tmfive){\esub{\var_1}{\tm_1}\dots\esub{\var_j}{\tmp_j}\dots\esub{\var_n}{\tm_n}}\tmfour \eqdef \tmtwo$ with $n > 0$ and $\tm_j \toeo \tmp_j$ for some $1 \leq j \leq n$, and 
			$$
			\tm \tomo \allowbreak \tmfive\esub\var\tmfour{\esub{\var_1}{\tm_1}\dots\esub{\var_n}{\tm_n}} \eqdef \tmthree
			$$ 
			Then,
			\begin{equation*}
			\tmtwo \tomo \tmfive\esub\var\tmfour{\esub{\var_1}{\tm_1}\dots\esub{\var_j}{\tmp_j}\dots\esub{\var_n}{\tm_n}} \lRew{\wesym} \tmthree;
			\end{equation*}
			\item \emph{Application Right for $\tm \toeo \tmtwo$} and \emph{Application Left for $\tm \tom \tmthree$}, \ie $\tm \defeq \tmfive\tmfour \toeo \tmfive\tmfourp \eqdef \tmtwo$ and $\tm \tomo \tmfivep\tmfour \eqdef \tmthree$ with $\tmfour \toeo \tmfourp$ and $\tmfive \tomo \tmfivep$: then, $\tmtwo \tomo \tmfivep\tmfourp \lRew{\wesym} \tmthree$;
			\item \emph{Application Right for both $\tm \toeo \tmtwo$ and $\tm \tomo \tmthree$}, \ie $\tm \defeq \tmfive\tmfour \toeo \tmfive\tmfourp \eqdef \tmtwo$ and $\tm \tomo \tmfive\tmfour'' \eqdef \tmthree$ with $\tmfourp \lRew{\wesym} \tmfour \tomo \tmfour''$: by \ih, there exists $\tmsix \in \Lambda_\vsub$ such that $\tmfourp \tomo \tmsix \lRew{\wesym} \tmfour''$, hence $\tmtwo \tomo \tmfive\tmsix \lRew{\wesym} \tmthree$;
			\item \emph{Application Right for $\tm \toeo \tmtwo$} and \emph{Step at the Root for $\tm \tomo \tmthree$}, \ie $\tm \defeq \sctxp{\la\var\tmfive}\tmfour \toeo \sctxp{\la\var\tmfive}\tmfourp \eqdef \tmtwo$ with $\tmfour \toeo \tmfourp$\!, and $\tm \tomo \allowbreak \sctxp{\tmfive\esub\var\tmfour} \eqdef \tmthree$: then, $\tmtwo \tomo \sctxp{\tmfive\esub\var\tmfourp} \lRew{\wesym} \tmthree$;
			\item \emph{$\mathsf{ES}$ left for $\tm \toeo \tmtwo$} and \emph{$\mathsf{ES}$ right for $\tm \tomo \tmthree$}, \ie $\tm \defeq \tmfour\esub\var\tmfive \toeo \tmfourp\esub\var\tmfive \eqdef \tmtwo$ and $\tm \tomo \tmfour\esub\var\tmfivep \eqdef \tmthree$ with $\tmfour \toe \tmfourp$ and $\tmfive \tomo \tmfivep$: then, $\tmtwo \tomo \tmfourp\esub\var\tmfivep \lRew{\wesym} \tmthree$;
			\item \emph{$\mathsf{ES}$ left for both $\tm \toeo \tmtwo$ and $\tm \tomo \tmthree$}, \ie $\tm \defeq \tmfour\esub\var\tmfive \toe \tmfourp\esub\var\tmfive \eqdef \tmtwo$ and $\tm \tomo \tmfour''\esub\var\tmfive \eqdef \tmthree$ with $\tmfourp \lRew{\wesym} \tmfour \tomo \tmfour''$: by \ih,there is $\tmsix \in \Lambda_\vsub$ such that $\tmfourp \tomo \tmsix \lRew{\wesym} \tmfour''$, hence $\tmtwo \tomo \tmsix\esub\var\tmfive \lRew{\wesym} \tmthree$;
			\item \emph{$\mathsf{ES}$ right for $\tm \toeo \tmtwo$} and \emph{$\mathsf{ES}$ left for $\tm \tomo \tmthree$}, \ie $\tm \defeq \tmfive\esub\var\tmfour \toeo \tmfive\esub\var\tmfourp \eqdef \tmtwo$ and $\tm \tomo \tmfivep\esub\var\tmfour \eqdef \tmthree$ with $\tmfour \toeo \tmfourp$ and $\tmfive \tomo \tmfivep$: then, $\tmtwo \tomo \tmfivep\esub\var\tmfourp \lRew{\wesym} \tmthree$;
			\item \emph{$\mathsf{ES}$ right for both $\tm \toeo \tmtwo$ and $\tm \tomo \tmthree$}, \ie $\tm \defeq \tmfive\esub\var\tmfour \toeo \tmfive\esub\var{\tmfourp} \eqdef \tmtwo$ and $\tm \tomo \tmfive\esub\var{\tmfour''} \eqdef \tmthree$ with $\tmfour \lRew{\esym} \tmfourp \tomo \tmfour''$: by \ih, there is $\tmsix \in \Lambda_\vsub$ such that $\tmfour \tomo \tmsix \lRew{\wesym} \tmfour''$, so $\tmtwo \tomo \tmfive\esub\var\tmsix \lRew{\wesym} \tmthree$.
			\qedhere
		\end{itemize}
	\end{enumerate}
\end{proof}

\begin{proposition}[Properties of the open reduction]
	\label{propappendix:properties-open-reduction}
	\NoteState{prop:properties-open-reduction}
	\begin{enumerate}
		\item \label{pappendix:properties-open-reduction-diamond} The reduction $\tovsubo$ is diamond.
		\item \label{pappendix:properties-open-redction-harmony} A term is $\osym$-normal if and only if it is a fireball, where \emph{fireballs} (and \emph{inert terms}) are defined by:
	\end{enumerate}
	\begin{alignat*}{5}
	\textsc{Inert term} \ & \itm &\grameq \var \mid \itm \fire \mid \itm \esub{\var}{\itmtwo}
	& \quad
	\textsc{Fireball} \ & \fire &\grameq \val \mid \itm \mid \fire \esub{\var}{\itm}
	\end{alignat*}
\end{proposition}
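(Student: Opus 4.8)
The proposition splits into two independent parts; the first is a short corollary and the second is where the work lies.

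\emph{Part~1 (diamond).} Since $\tovsubo = \tomo \cup \toeo$, it suffices to recall from \reflemma{basic-value-substitution} that $\tomo$ and $\toeo$ are each diamond and strongly commute, and then invoke the abstract rewriting fact recalled in \Cref{sect:preliminaries} that the union of two diamond, strongly commuting relations is itself diamond. The strong-commutation clause appearing in the main-text version of the statement is literally item~3 of \reflemma{basic-value-substitution}. (As noted after the proposition, diamond gives confluence, uniqueness of the normal form, and the fact that all $\osym$-evaluations between two fixed terms have the same length and the same number of $\msym$- and $\esym$-steps.)

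\emph{Part~2 (harmony).} I would first record two structural facts about substitution contexts, both by easy inductions on the grammars of \Cref{sect:vsc}: (i) \emph{every fireball is either an inert term or an answer $\sctxp\val$} --- in the value case take $\sctx = \ctxhole$; the inert case is immediate; and if $\fire = \fire'\esub\var\itm$, the \ih gives that either $\fire'$ is inert, whence $\fire'\esub\var\itm$ is inert, or $\fire'$ is an answer, whence $\fire'\esub\var\itm$ is an answer (plug the value into $\sctx\esub\var\itm$); and (ii) \emph{no inert term is an answer} --- by induction on $\itm$: a variable is neither an abstraction nor has a top-level ES; an application $\itm'\fire$ likewise; and $\itm'\esub\var\itmtwo$ equals some $\sctxp\val$ only if $\itm'$ is itself an answer, against the \ih. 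In particular an inert term is never of the form $\sctxp\val$, so it can start neither a $\msym$- nor an $\esym$-redex at the root.

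With (i)--(ii) in hand both implications follow. For ``fireball $\Rightarrow$ $\osym$-normal'' I would use a mutual induction on the definitions of $\itm$ and $\fire$: an abstraction is $\osym$-normal because open contexts never enter abstractions; in every other case the term is not a root redex --- an application $\itm'\fire$ is not a $\msym$-redex since by (ii) $\itm'$ is not an answer, and an ES whose content is an inert term (the only ES forms in the two grammars) is not an $\esym$-redex since by (ii) that content is not an answer --- while its immediate subterms are fireballs or inert terms, hence $\osym$-normal by the \ih; so no open step applies. For ``$\osym$-normal $\Rightarrow$ fireball'' I would induct on the term: variables are inert and abstractions are values, hence fireballs; if $\tm\tmtwo$ is $\osym$-normal then $\tm,\tmtwo$ are $\osym$-normal (open contexts permit reduction in either component), hence fireballs by the \ih, and $\tm$ is not an answer $\sctxp\val$ (else $\sctxp\val\,\tmtwo$ is a root $\msym$-redex), so by (i) $\tm$ is inert and $\tm\tmtwo = \itm\fire$ is a fireball; if $\tm\esub\var\tmtwo$ is $\osym$-normal the same reasoning gives $\tm,\tmtwo$ fireballs and $\tmtwo$ not an answer (else $\tm\esub\var{\sctxp\val}$ is a root $\esym$-redex), so by (i) $\tmtwo$ is inert and $\tm\esub\var\tmtwo = \fire\esub\var\itm$ is a fireball.

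The only point requiring care is the bookkeeping in (i) and (ii): because the rewrite rules act \emph{at a distance}, excluding a root redex is not the same as checking that a subterm fails to be a value or an abstraction, but that it fails to be \emph{of the form} $\sctxp\val$ over all substitution contexts $\sctx$. Once (i)--(ii) are set up, the two mutual inductions are routine and no step presents a genuine obstacle.
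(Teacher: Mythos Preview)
Your proof is correct. Part~1 is essentially identical to the paper's argument: both combine the separate diamond properties of $\tomo$ and $\toeo$ with their strong commutation (all from \reflemma{basic-value-substitution}) and an abstract closure fact; the paper phrases the latter as the Hindley--Rosen lemma, while you invoke the diamond-level version stated in \Cref{sect:preliminaries}, which is actually the more precise reference for the conclusion needed.

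For Part~2 the paper does not give a proof at all---it simply defers to \cite[Lemma~5]{AccattoliPaolini12}. Your self-contained argument is therefore a genuine addition rather than a different route. The two auxiliary facts you isolate (every fireball is either inert or an answer, and no inert term is an answer) are exactly the structural observations needed to handle the at-a-distance rules, and the mutual inductions for both directions go through as you describe. The proof is correct; the only cosmetic remark is that in the ``fireball $\Rightarrow$ $\osym$-normal'' direction you might explicitly list the variable base case, but this is trivially covered.
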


\begin{proof}
	\begin{enumerate}
		\item Strong commutation of $\tomo$ and $\toeo$ is proven in \reflemmap{basic-value-substitution}{tom-toe-commute-open}.
		Diamond of $\tovsubo$ follows from that, from diamond for $\tomo$ and $\toeo$ (\reflemmap{basic-value-substitution}{tom-toe-diamond-open}), 
		and from Hindley-Rosen lemma (\cite[Prop. 3.3.5]{Barendregt84}).
		
		\item See \cite[Lemma~5]{AccattoliPaolini12}, where $\tovsubo$ is denoted by $\to_\mathsf{w}$.
		\qedhere
	\end{enumerate}
	
\end{proof}

\begin{lemma}[Sizes for inert terms]
	\label{l:sizes-inert}
	For every inert term $\itm$, $\sizeo{\itm} = \sizes{\itm}$.
\end{lemma}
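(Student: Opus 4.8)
The plan is to strengthen the statement slightly before inducting: I would prove $\sizeo{\tm} = \sizes{\tm}$ for every term $\tm$ that is an inert term \emph{or} a fireball, by structural induction on $\tm$. This strengthening is what makes the induction go through, since in the grammar recalled in \Cref{propappendix:properties-open-reduction} inert terms and fireballs are defined by mutual recursion --- the argument of an application inside an inert term is a fireball, a fireball may in turn nest inert terms and further fireballs --- so an induction restricted to inert terms alone would, in argument position, run into a fireball that need not be inert.

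The cases are then immediate. If $\tm = \var$, both sizes are $0$. If $\tm$ is an abstraction $\la\var\tmtwo$ --- which, within this class, can happen only because $\tm$ is a value, hence a fireball --- then both $\sizeo{\cdot}$ and $\sizes{\cdot}$ return $0$, as neither measure descends under an abstraction. If $\tm$ is an application, inspecting the two grammars shows it must have the shape $\itm\fire$ with $\itm$ inert and $\fire$ a fireball, both strict subterms of $\tm$; since $\sizeo{\itm\fire} = \sizeo{\itm} + \sizeo{\fire} + 1$ and the clause for $\sizes{\cdot}$ on applications is the same, the induction hypothesis applied to $\itm$ and to $\fire$ closes the case. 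Finally, if $\tm = \tmfour\esub\var\tmfive$, then in either grammar $\tmfour$ is inert or a fireball and $\tmfive$ is inert, both strict subterms; the explicit substitution contributes additively and with no extra $+1$ to both measures, so once more the induction hypotheses on the two subterms suffice.

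I do not foresee a genuine obstacle. The only clause in which $\sizeo{\cdot}$ and $\sizes{\cdot}$ can differ at all is the one for abstractions, but within the class of inert terms and fireballs an abstraction occurs only as a whole value, on which both measures agree; every other constructor is governed by the same recursive equation. The single point of care is the decision to run the induction over the combined class ``inert term or fireball'' rather than over inert terms in isolation.
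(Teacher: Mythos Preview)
The paper takes a simpler route: it inducts directly on the grammar of inert terms, with no strengthening to fireballs. In the application case $\itm = \itmtwo\fire$, the induction hypothesis is applied only to the inert head $\itmtwo$; the fireball argument never needs it, because the chain
\[
\sizeo{\itm} \;=\; 1 + \sizeo{\itmtwo} + \sizeo{\fire} \;=\; 1 + \sizes{\itmtwo} + \sizeo{\fire} \;=\; \sizes{\itm}
\]
is closed directly from the way $\sizes$ unfolds on an inert application --- the argument already contributes $\sizeo{\fire}$ on both sides, so nothing about $\sizes{\fire}$ is required. The explicit-substitution case $\itm = \itmtwo\esub\var\itmthree$ has both subterms inert, so the hypothesis applies to each. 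Thus the mutual recursion in the \emph{grammar} of inert terms and fireballs never forces a mutual induction on the \emph{sizes}.

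Your strengthening to ``inert term or fireball'' is a legitimate alternative and goes through under your assumptions, but it proves more than is asked and at a cost: you must cover the abstraction case, relying on your claim that $\sizes$ returns $0$ on values, and you must assume the recursive clause of $\sizes$ on applications is the symmetric one $\sizes{\tm} + \sizes{\tmtwo} + 1$. The paper's argument never leaves the inert world and so never touches either of those points.
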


\begin{proof}
	By induction of the inert term $\itm$.
	Cases:
	\begin{itemize}
		\item \emph{Variable}, \ie $\itm = \var$. 
		Then, $\sizeo{\itm} = 0 = \sizes{\itm}$.
		
		\item \emph{Application}, \ie $\itm = \itmtwo \fire$. 
		By \ih, $\sizeo{\itmtwo} = \sizes{\itmtwo}$.
		Hence, $\sizeo{\itm} = 1 + \sizeo{\itmtwo} + \sizeo{\fire} = 1 + \sizes{\itmtwo} + \sizeo{\fire} = \sizes{\itm}$.
		
		\item \emph{Explicit substitution}, \ie $\itm = \itmtwo \esub{\var}{\itmthree}$. 
		By \ih, $\sizeo{\itmtwo} = \sizes{\itmtwo}$ and $\sizeo{\itmthree} = \sizes{\itmthree}$.
		Thus, $\sizeo{\itm} = \sizeo{\itmtwo} + \sizeo{\itmthree} = \sizes{\itmtwo} + \sizeo{\itmthree} = \sizes{\itm}$.
		\qedhere
	\end{itemize}
\end{proof}

\begin{lemma}[Shape of strong fireballs]
	\label{l:shape-of-strong-fireballs}
	Let $\tm$ be a \full fireball. Then exactly one of the following holds:
	\begin{itemize}
		\item either $\tm$ is a \full inert term,
		
		\item or $\tm$ is a \full value.
		
	\end{itemize}
\end{lemma}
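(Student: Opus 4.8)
The plan is a structural induction on the \full fireball $\tm$. The claim packages two things: \emph{exhaustiveness}---every \full fireball is a \full inert term or a \full value---and \emph{disjointness}---no term is both. I would dispose of disjointness first, since it is immediate: every \full value has the form $\la\var\sfire$, hence is an abstraction, whereas by inspection of the grammar no \full inert term is an abstraction (a \full inert term is a variable, an application $\sitm\sfire$, or an explicit substitution, so its outermost constructor is never $\lambda$). Therefore the two alternatives are mutually exclusive, and at most one of them holds.

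For exhaustiveness I would argue by induction on the derivation of $\tm$ as a \full fireball. If $\tm$ is derived from the $\sitm$ clause it is a \full inert term, and if from the $\sval$ clause it is a \full value; both are done immediately. The remaining clause is an explicit substitution $\tm = \sfire'\esub\var\sitm$ with $\sfire'$ a \full fireball and $\sitm$ a \full inert term: applying the induction hypothesis to $\sfire'$, it is a \full inert term or a \full value, and in either case one checks that $\tm$ is again a \full inert term, using that \full inert terms are closed under postfixing an explicit substitution whose content is a \full inert term. Equivalently, one may induct on the plain term structure $\var \mid \val \mid \tm_1\tm_2 \mid \tm_1\esub\var{\tm_2}$ and invoke that \full fireballs are exactly the $\vsub$-normal forms (\Cref{prop:properties-full-reduction}): e.g.\ for $\tm = \tm_1\tm_2$, $\vsub$-normality makes $\tm_1$ a \full fireball that is not a \full value---otherwise $\tm$ would contain a multiplicative redex---hence a \full inert term, so $\tm$ is one too; the explicit-substitution case $\tm = \tm_1\esub\var{\tm_2}$ is handled the same way, using that $\vsub$-normality prevents the content $\tm_2$ from being an answer.

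The main obstacle is precisely this explicit-substitution case: one must verify that $\sfire'\esub\var\sitm$ remains in the inert class even when $\sfire'$ is a value, which is where the exact formulation of the grammar of \full inert terms (its closure under an explicit substitution with inert content) carries the argument; all the other cases are mere grammar inspections, and disjointness is a one-line observation.
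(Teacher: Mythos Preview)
There is a genuine gap in your exhaustiveness argument, in the explicit-substitution case. You write that for $\tm = \sfire'\esub\var\sitm$, by the induction hypothesis $\sfire'$ is a \full inert term or a \full value, and ``in either case one checks that $\tm$ is again a \full inert term, using that \full inert terms are closed under postfixing an explicit substitution whose content is a \full inert term''. This is false when $\sfire'$ is a \full value: the grammar of \full inert terms only admits $\sitm'\esub\var{\sitmtwo}$ with the \emph{left} component $\sitm'$ already inert, so $(\la\vartwo\sfire'')\esub\var\sitm$ is \emph{not} a \full inert term. Under your literal reading of the grammar ($\sval = \la\var\sfire$, nothing more), such a term is neither a \full inert term nor a \full value, and the lemma simply fails---so your argument cannot be repaired without adjusting the notion of \full value.

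The paper's own proof makes the intended reading explicit: the ``\full value'' alternative covers terms of shape $\sctxp{\la\var\sfire}$ with $\sctx = \esub{\var_1}{\sitm_1}\cdots\esub{\var_n}{\sitm_n}$, i.e.\ an abstraction wrapped in explicit substitutions with \full inert content. With that reading, your explicit-substitution case must split: if $\sfire'$ is inert then $\tm$ is inert (your argument), but if $\sfire'$ is a (possibly wrapped) value then $\tm$ is again a value, not an inert term. Correspondingly, your disjointness argument (``\full values are abstractions, \full inert terms never are'') is too quick: a \full value in the intended sense can itself be an explicit substitution, so one needs the short induction the paper gives---on the structure of the inert term in one direction, and on the length $n$ of the wrapping $\sctx$ in the other---to see that the two classes really are disjoint.
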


\begin{proof}
	Proving that at least one of the two holds is left to the reader. 
	We now prove that only one of them holds:
	
	\begin{itemize}
		\item Let $\tm$ be a \full inert term. We prove that $\tm$ is not a \full value by structural induction on $\tm$:
		\begin{itemize}
			\item \emph{Variable}: Trivial.
			
			\item \emph{Application}: Trivial.
			
			\item \emph{$\mathsf{ES}$}; \ie, $\tm = \sitm \esub{\var}{\sitmtwo}$: Then $\sitm$ is not a \full value---by \ih---, and so neither is $\tm$.
			
		\end{itemize}
		
		\item Let $\tm = \sctxp{\la{\var}{\sfire}}$, with $\sctx = \esub{\var_{1}}{{\sitm}_{1}} \dots \esub{\var_{n}}{{\sitm}_{n}}$, with $n \geq 0$. 
		We prove that $\tm$ is not a \full inert term by induction on $n$:
		\begin{itemize}
			\item \emph{Empty substitution context}; \ie, $\sctx = \ctxhole$: Trivial.
			
			\item \emph{Non-empty substitution context}; \ie, $\sctx = \sctxtwo \esub{\var}{{\sitm}_{n+1}}$: As $\sctxtwop{\la{\var}{\fire}}$ is not a \full inert term by \ih, then neither is $\sctxtwop{\la{\var}{\fire}} \esub{\var}{{\sitm}_{n+1}} = \tm$.
			\qedhere
		\end{itemize}
		
	\end{itemize}
\end{proof}

\begin{proposition}[Properties of the \full reduction] 
	\label{propappendix:properties-full-reduction}
	\NoteState{prop:properties-full-reduction}
	\hfill
	\begin{enumerate}
		\item \label{pappendix:properties-full-reduction-confluence} The reduction $\tovsub$ is confluent.
		
		\item \label{pappendix:properties-full-reduction-harmony} A term is $\vsub$-normal if and only if it is a \full fireball, where \emph{\full fireballs} (and \emph{\full inert terms}, \emph{\full values}) are:
	\end{enumerate}
\end{proposition}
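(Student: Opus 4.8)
The plan is to handle the two items by quite different means: the first is essentially bookkeeping along the lines of the literature, while the second is an induction that needs some care.

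\emph{Item 1 (confluence).} Since $\tovsub$ is not diamond, the natural route is the Hindley--Rosen lemma (\cite[Prop.~3.3.5]{Barendregt84}): writing $\tovsub = \tom \cup \toe$, it suffices to show that $\tom$ and $\toe$ are each confluent and that they commute. Both are strongly normalizing separately by \reflemmap{basic-value-substitution}{tom-toe-terminates}, so by Newman's lemma their confluence reduces to \emph{local} confluence, which follows from a routine inspection of critical pairs: the multiplicative rule is non-duplicating and both rules act at a distance, so no redex pattern genuinely overlaps with itself and every local peak closes (for $\toe$ the substituted value may be duplicated, so one side of a diagram may need several $\toe$-steps, which is harmless for local confluence). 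Commutation follows by redoing in \full contexts the diagram chase already carried out for the open fragment in \reflemmap{basic-value-substitution}{tom-toe-commute-open}, which gives strong commutation of $\tom$ and $\toe$, hence commutation of the reflexive--transitive closures. Hindley--Rosen then yields confluence of $\tovsub$. Equivalently, this is in essence the confluence already established for the \VSC in \cite[Lemma~11]{AccattoliPaolini12}. I do not expect any obstacle here.

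\emph{Item 2 (harmony), auxiliary lemmas and the ``$\Leftarrow$'' direction.} I would first prove two facts by straightforward induction. (a) \emph{A \full inert term is never an answer}: a variable and an application are not of the form $\subctxp\val$, and if $\sitm\esub\var\sitmtwo$ were of that form then $\sitm$ would be an answer, against the induction hypothesis. (b) \emph{Every \full fireball that is not an answer is a \full inert term}: the only nontrivial case is $\sfire\esub\var\sitm$, which is an answer if and only if $\sfire$ is, so when it is not an answer the induction hypothesis makes $\sfire$ inert and hence $\sfire\esub\var\sitm$ inert; fact (b) is the form of the classification \reflemma{shape-of-strong-fireballs} that the induction actually uses. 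With (a) in hand, ``$\Leftarrow$'' (every \full fireball is $\vsub$-normal) is a mutual structural induction on \full fireballs, \full inert terms and \full values: the induction hypotheses forbid redexes inside proper subterms, while the root cases are excluded by (a), since a root $\tom$-step needs the left subterm of an application to be an answer and a root $\toe$-step needs the content of an explicit substitution to be an answer.

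\emph{Item 2 (harmony), the ``$\Rightarrow$'' direction.} I would induct on $\tm$, strengthening the statement to ``$\tm$ is a \full fireball, and moreover a \full inert term whenever $\tm$ is not an answer''. Variables and abstractions are immediate (an abstraction is an answer, so the second clause is vacuous and it is a \full value). For $\tm = \tmtwo\tmthree$, both $\tmtwo$ and $\tmthree$ are $\vsub$-normal, hence \full fireballs by the induction hypothesis; $\tmtwo$ is not an answer, else $\tm$ would have a $\tom$-redex at a distance, so $\tmtwo$ is a \full inert term by (b) and thus $\tm = \tmtwo\tmthree$ is a \full inert term. For $\tm = \tmtwo\esub\var\tmthree$, both are $\vsub$-normal; $\tmthree$ is not an answer, else $\tm$ would have a $\toe$-redex, so $\tmthree$ is a \full inert term by (b), and $\tmtwo$ is a \full fireball by the induction hypothesis, whence $\tm = \tmtwo\esub\var\tmthree$ is a \full fireball, and it is a \full inert term exactly when $\tmtwo$ is, which by (b) is exactly when $\tm$ is not an answer.

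\emph{Expected main obstacle.} All the difficulty is in Item 2 and is organizational: \full fireballs, \full inert terms and \full values are mutually inductive, and the explicit-substitution constructor behaves differently according to whether its carrier is a value, an inert term, or a generic fireball. The delicate point is therefore to set up the strengthened induction hypothesis for ``$\Rightarrow$'' together with the ``not an answer'' side conditions in (a)--(b) so that they thread cleanly through the application and explicit-substitution cases; once the invariants are fixed, each case is a one-liner. Item 1 is comparatively mechanical.
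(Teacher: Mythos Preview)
Your treatment of Item~2 is correct and essentially the paper's own argument: your facts (a) and (b) together are precisely the content of \reflemma{shape-of-strong-fireballs}, and the two inductions (one on the fireball grammar for ``$\Leftarrow$'', one on the term for ``$\Rightarrow$'') match the paper's. Your strengthened induction hypothesis for ``$\Rightarrow$'' is slightly redundant once (b) is in hand, but harmless.

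For Item~1, however, there is a concrete error. You claim that ``redoing in \full contexts the diagram chase'' of \reflemmap{basic-value-substitution}{tom-toe-commute-open} yields \emph{strong} commutation of $\tom$ and $\toe$. It does not: in \full contexts a value may contain $\msym$-redexes under its abstraction, and an exponential step can duplicate them. The paper's own counterexample to diamond is exactly this phenomenon: from $(\var\var)\esub\var{\la\vartwo\,\Id\Id}$ one has
\[
(\la\vartwo\,\Id\Id)(\la\vartwo\,\Id\Id) \ \lRew{\esym}\ (\var\var)\esub\var{\la\vartwo\,\Id\Id} \ \tom\ (\var\var)\esub\var{\la\vartwo(\varthree\esub\varthree\Id)},
\]
and closing this peak from the left-hand side requires \emph{two} $\tom$-steps, not one, so strong commutation fails. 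The open diagram chase goes through precisely because open reduction does not enter abstractions, so substituted values are already $\osym$-normal; that invariant is lost in \full contexts. Your separate confluence of $\tom$ and $\toe$ via strong normalization and Newman is fine, and a Hindley--Rosen argument can still be completed, but you need a weaker commutation lemma (e.g.\ one $\toe$-step against one $\tom$-step closes by one $\toe$-step and \emph{several} $\tom$-steps), or a parallel-reduction detour. The paper sidesteps all of this and simply cites \cite[Corollary~1]{AccattoliPaolini12}; your fallback citation of \cite[Lemma~11]{AccattoliPaolini12} is close but not the same result.
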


\begin{proof}
	\begin{enumerate}
		\item See \cite[Corollary 1]{AccattoliPaolini12}.
		
		\item We prove the two directions of the equivalence separately.
		\begin{enumerate}
		\item \label{p:asfa} Let $\tm$ be $\vsub$-normal. We shall prove that $\tm$ is a \full fireball by induction on $\tm$:
		\begin{itemize}
			\item \emph{Variable}. Trivial.
			
			\item \emph{Abstraction}; \ie, $\tm = \la{\var}{\tmtwo}$. 
			As $\tm$ is $\tovsubval$-normal, so is $\tmtwo$.
			By \ih, $\tmtwo$ is a \full fireball, and then so~is~$\tm$.
			
			\item \emph{Application}; \ie, $\tm = \tm_{1} \tm_{2}$. 
			Since $\tm$ is $\tovsubval$-normal,  so are $\tm_1$ and $\tm_2$.
			By \ih, $\tm_{1}$ and $\tm_{2}$ are \full fireballs. 
			Note that  $\tm_{1} \neq \sctxp{\la{\var}{\tmtwo}}$, otherwise $\tm \rtom \sctxp{\tmtwo \esub{\var}{\tm_{2}}}$ which contradicts $\vsub$-normality of $\tm$. 
			So, $\tm_{1}$ is a \full inert term (by \Cref{l:shape-of-strong-fireballs}), thus $\tm$ is a \full fireball.
			
			\item \emph{Explicit substitution}; \ie, $\tm = \tm_{1} \esub{\var}{\tm_{2}}$. 
			Since $\tm$ is $\vsub$-normal, then so are $\tm_1$ and $\tm_2$.
			By \ih, $\tm_{1}$ and $\tm_{2}$ are \full fireballs. 
			Note that $\tm_{2} \neq \sctxp{\val}$, otherwise $\tm \rtoe \sctxp{\tm_{1} \isub{\var}{\val}}$ which contradicts the $\vsub$-normality of $\tm$. 
			Thus $\tm_{2}$ is a \full inert term (by \Cref{l:shape-of-strong-fireballs}), and so $\tm$ is a \full fireball.
			
		\end{itemize}
		
		\item\label{p:fireball-to-normal} Let $\tm$ be a \full fireball. 
		We prove that $\tm$ is $\vsub$-normal by induction on the definition of \full fireball.
		\begin{itemize}
			\item \emph{Variable}. Trivial.
			
			\item \emph{Abstraction}; \ie, $\tm = \la{\var}{\sfire}$. 
			By \ih, $\sfire$ is $\vsub$-normal, and hence so is $\tm$.
			
			\item \emph{Application}; \ie, $\tm = \sitm \sfire$.
			By \ih, $\sitm $ and $\sfire$  are $\vsub$-normal. 
			Since $\sitm$ is not of the of the form $\sctxp{\la{\var}{\tmtwo}}$ (by \Cref{l:shape-of-strong-fireballs}), then $\tm$ is also $\vsub$-normal.
			
			\item \emph{Explicit substitution}; \ie, $\tm = \sfire \esub{\var}{\sitm}$ (it includes the case when $\sfire$ is a \full inert term). 
			By \ih, both $\sfire$ and $\sitm$ are $\vsub$-normal. 
			Since $\sitm$ is not of the form $\sctxp{\val}$ (by \reflemma{shape-of-strong-fireballs}), then $\tm$ is also $\vsub$-normal.
			\qedhere
		\end{itemize}
	\end{enumerate}
	\end{enumerate}
\end{proof}

\begin{lemma}[Strong normalization of $\toevar$]
	\label{l:strongly-normalizing}
	The reduction $\toe \cup \toevar$ is strongly normalizing.
\end{lemma}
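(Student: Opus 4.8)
The plan is to obtain strong normalization of $\toe \cup \toevar$ by combining strong normalization of $\toe$ (\Cref{l:basic-value-substitution}) with strong normalization of $\toevar$, via a quasi-commutation argument. First, $\toevar$ alone is strongly normalizing: a $\toevar$-step $\tm\esub\var{\subctxp{\vartwo}} \rtoevar \subctxp{\tm\isub{\var}{\vartwo}}$ removes exactly one explicit substitution (the fired one) and creates none — replacing a variable by a variable introduces no $\mathsf{ES}$, and the list $\subctx$ is merely relocated, not duplicated — so the number of explicit substitutions occurring in a term strictly decreases along $\toevar$, which is a measure valued in $\nat$. Since $\toevar$ is closed under all contexts and this count is additive over contexts, $\toevar$ is strongly normalizing.

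The crux is the following quasi-commutation: $\toevar \cdot \toe \subseteq \toe \cdot \toevar^*$, i.e.\ a $\toevar$-step followed by a $\toe$-step can always be re-played as exactly one $\toe$-step followed by a (possibly empty) sequence of $\toevar$-steps. The reason is that a $\toevar$-step neither destroys nor creates $\toe$-redexes. A $\toe$-redex is an explicit substitution whose argument is an answer $\subctxp{\val}$ with $\val$ an abstraction; substituting a variable for a variable cannot turn a non-abstraction into an abstraction nor spoil the shape of a substitution context, and the only $\mathsf{ES}$ deleted by a $\toevar$-step is the fired $\toevar$-redex itself, whose argument is variable-headed and hence not an answer. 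Consequently any $\toe$-redex available after a $\toevar$-step is already available before it; firing that $\toe$-step first leaves all occurrences of the original $\toevar$-redex intact — it may be duplicated (if it sits inside the abstraction being substituted and that abstraction is copied) or discarded (if that abstraction is erased), never duplicated into several distinct $\toe$-redexes, since $\toevar$ copies only variables, never explicit substitutions — so finitely many further $\toevar$-steps recover the original reduct. This is established by a routine case analysis on the relative positions of the two redexes (disjoint; one nested in the other; inside the relocated $\subctx$), with the usual $\alpha$-renaming bookkeeping.

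We then conclude with the standard fact that if $\to_1$ and $\to_2$ are strongly normalizing and $\to_2 \cdot \to_1 \subseteq \to_1 \cdot \to_2^*$, then $\to_1 \cup \to_2$ is strongly normalizing; we instantiate $\to_1 \defeq \toe$ (SN by \Cref{l:basic-value-substitution}) and $\to_2 \defeq \toevar$ (SN by the first step), and the second step supplies the hypothesis. Its short proof: since $\toe$ is strongly normalizing and finitely branching, the length $n(\tm) \in \nat$ of a longest $\toe$-reduction from $\tm$ is well-defined; iterating the quasi-commutation gives $\toevar^* \cdot \toe^k \subseteq \toe^k \cdot \toevar^*$ for all $k$, whence $n$ does not increase along $\toevar$ while it strictly decreases along $\toe$. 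An infinite $(\toe \cup \toevar)$-reduction would therefore contain only finitely many $\toe$-steps, hence an infinite $\toevar$-suffix, contradicting strong normalization of $\toevar$.

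The main obstacle is the quasi-commutation step: although conceptually clear, it demands a careful inspection of overlapping redexes and of variable capture when pushing the $\toe$-step before the $\toevar$-step, in particular the subcase where the $\toe$-step copies or erases the region hosting the $\toevar$-redex (which is exactly why the right-hand side needs $\toevar^*$ rather than a single $\toevar$). Everything else is measure-counting plus the off-the-shelf commutation lemma.
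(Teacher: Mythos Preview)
Your quasi-commutation claim $\toevar \cdot \toe \subseteq \toe \cdot \toevar^*$ is false. Take $\tm = \var\esub{\var}{\vartwo}\esub{\vartwo}{\val}$ with $\val$ an abstraction. Then
\[
\tm \ \toevar\ \vartwo\esub{\vartwo}{\val} \ \toe\ \val,
\]
and the only $\toe$-step available from $\tm$ fires the outer $\esub{\vartwo}{\val}$, giving
\[
\tm \ \toe\ (\var\esub{\var}{\vartwo})\isub{\vartwo}{\val} = \var\esub{\var}{\val},
\]
from which the unique step to $\val$ is again $\toe$, not $\toevar$. The case you miss is precisely the one where the $\toe$-step substitutes an abstraction for the head variable of the $\toevar$-redex's argument, \emph{converting} that $\toevar$-redex into a $\toe$-redex. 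Your sentence ``firing that $\toe$-step first leaves all occurrences of the original $\toevar$-redex intact'' is therefore wrong: the redex survives as an explicit substitution, but its nature changes. The correct local swap---which the paper records as \Cref{l:swaps}(2)---is
\[
\toevar\cdot\toe \ \subseteq\ \toe\cdot\toevar^* \ \cup\ \toe\cdot\toe .
\]

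Your overall strategy can be repaired with this corrected swap: from it one still gets $\toevar^k\cdot\toe \subseteq \toe\cdot(\toe\cup\toevar)^*$ for every $k$ (easy induction on $k$), and then well-founded induction on $\toe$ (SN by \Cref{l:basic-value-substitution}) together with SN of $\toevar$ (your ES-counting argument, which is fine) yields SN of the union. Note, however, that this is \emph{not} the paper's route: the paper simply observes that $\toevar$ decreases the number of explicit substitutions and otherwise defers the full statement to \cite[Lemma~3]{AccattoliPaolini12}; moreover, in the paper \Cref{l:swaps} is placed \emph{after} \Cref{l:strongly-normalizing} and its point~(3) actually \emph{uses} strong normalization of $\toe\cup\toevar$. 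So your argument, once fixed, provides an independent self-contained proof that the paper does not give, but the ``standard lemma'' you invoke must be the stronger form allowing $\to_1$-steps on the right (a Bachmair--Dershowitz style commutation), not the plain $\to_1\cdot\to_2^*$ version you state.
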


\begin{proof}
	Each $\toevar$ step strictly decreases the number of \ES. This shows that $\toevar$ is strongly normalizing.
	A proof that $\toe \cup \toevar$ is strongly normalizing is in \cite[Lemma~3]{AccattoliPaolini12}, where it is denoted $\to_\mathsf{vs}$.
\end{proof}

\begin{lemma}[Preservation of normal forms]
	\label{l:preservation-normal}
	Let $\tm \toevar \tmtwo$. Then, $\tm$ is $\vsub$-normal if and only if $\tmtwo$ is $\vsub$-normal.
\end{lemma}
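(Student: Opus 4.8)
The plan is to reduce the statement to the characterization of $\vsub$-normal forms as \full fireballs (\Cref{prop:properties-full-reduction}) and then argue by induction on the \full context $\fctx$ in which the $\toevar$ step is fired. The first ingredient is a \emph{renaming invariance} fact: for every term $\tmfour$ and all variables $\var,\vartwo$, the term $\tmfour$ is a \full inert term (\resp a \full fireball) if and only if $\tmfour\isub\var\vartwo$ is. This is a routine simultaneous structural induction on $\tmfour$: capture-avoiding substitution of a variable for a variable preserves the top constructor (variables stay variables, abstractions stay abstractions, applications stay applications, \ES stay \ES), so membership in the two grammars of \Cref{prop:properties-full-reduction} for $\tmfour$ reduces, case by case, to the same membership for its immediate subterms, which is the inductive hypothesis. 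One must track the inert and the fireball predicates together, since the fireball grammar mentions inert subterms; the value predicate is not needed, because an abstraction is a \full fireball exactly when its body is.

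The second ingredient is the base case, i.e. a root step $\tmfive\esub\var{\subctxp\vartwo} \rtoevar \subctxp{\tmfive\isub\var\vartwo}$: I claim that the left-hand side is a \full fireball (\resp a \full inert term) if and only if the right-hand side is. Unfolding $\subctx$ as a list of \ES, the grammar of \full fireballs gives that $\tmfive\esub\var{\subctxp\vartwo}$ is a \full fireball (\resp \full inert term) iff $\tmfive$ is a \full fireball (\resp \full inert term) and all terms appearing in the \ES of $\subctx$ are \full inert terms --- here one uses that the head of $\subctxp\vartwo$ is the variable $\vartwo$, so that $\subctxp\vartwo$ is itself \full inert iff those terms are. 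The same unfolding applied to $\subctxp{\tmfive\isub\var\vartwo}$, together with renaming invariance applied to $\tmfive$, produces literally the same condition; the degenerate case $\subctx = \ctxhole$, i.e. the step $\tmfive\esub\var\vartwo \rtoevar \tmfive\isub\var\vartwo$, is included.

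Finally, the inductive step: writing $\tm = \fctxp{\tmfive\esub\var{\subctxp\vartwo}}$ and $\tmtwo = \fctxp{\subctxp{\tmfive\isub\var\vartwo}}$, I would prove by induction on $\fctx$ the strengthened statement that $\fctxp{\tmfive\esub\var{\subctxp\vartwo}}$ is a \full inert term iff $\fctxp{\subctxp{\tmfive\isub\var\vartwo}}$ is, and likewise a \full fireball iff the other is, and then read off the lemma from the fireball half. In each of the cases $\fctx = \fctx'\tmfour$, $\fctx = \tmfour\fctx'$, $\fctx = \la\var\fctx'$, $\fctx = \fctx'\esub\var\tmfour$, $\fctx = \tmfour\esub\var\fctx'$, membership of $\fctxp{\cdot}$ in each grammar decomposes, by the grammars, into one of the two predicates for $\fctx'p{\cdot}$ plus a condition on the untouched subterm $\tmfour$, so the inductive hypothesis applies directly. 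I expect the only real obstacle to be choosing this invariant correctly rather than any calculation: one must carry the inert predicate as well as the fireball one, because e.g. an application $\fctx'p{\cdot}\,\tmfour$ is a \full fireball precisely when $\fctx'p{\cdot}$ is \full inert (not merely a \full fireball); and, conversely, one must \emph{not} carry the value predicate, since the step does not preserve it --- for instance $(\la\var\var)\esub\vartwo\varthree \toevar \la\var\var$ turns a non-value into a value while keeping $\vsub$-normality. Everything else is bookkeeping.
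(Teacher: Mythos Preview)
Your proposal is correct. The paper's own proof is a one-liner --- ``By induction on the definition of $\tm \toevar \tmtwo$'' --- so what you have written is one clean way to actually carry out that induction: you factor the base case through the characterization of $\vsub$-normal forms as \full fireballs (\Cref{prop:properties-full-reduction}) rather than doing a direct redex-by-redex analysis, and you correctly strengthen the invariant to carry both the inert and the fireball predicates (while wisely dropping the value predicate, which as your example shows is not preserved). The inductive structure is the same as the paper's; you have simply made explicit what the paper leaves to the reader.
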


\begin{proof}
	By induction on the definition of $\tm \toevar \tmtwo$. 
\end{proof}

\begin{lemma}[Swaps]
	\label{l:swaps}
	\hfill
	\begin{enumerate}
		\item $\toevar\tom \,\subseteq\, \tom\toevar$;
		\item $\toevar\toe \,\subseteq\, \toe\toevar^* \!\cup \toe\toe$;
		\item $\toevar^*\toe^+ \,\subseteq\, \toe^+\toevar^*$.
	\end{enumerate}
\end{lemma}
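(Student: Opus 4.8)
The plan is to establish the three inclusions in order, each by a local commutation argument carried out by induction on the rewrite steps, and to reduce the last (non-local) one to the previous one via a termination argument.

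\emph{Item 1.} I would prove $\toevar\tom \subseteq \tom\toevar$ by induction on the derivation of the $\Rew{\evarsym}$-step, inspecting the relative positions of the fired $\evarsym$-redex (a subterm $\tm_1\esub\var{\sctxp\vartwo}$) and of the $\msym$-redex. The key point is that an $\evarsym$-step only renames a variable and slides a substitution context out of an \ES, so it can neither create, nor destroy, nor duplicate an $\msym$-redex: whenever the two redexes overlap, the $\msym$-redex — possibly reached at a distance, through one more $\sctx$ — survives the $\evarsym$-step, so one may fire it first, and afterwards the very same $\evarsym$-redex (with its $\var$-occurrences now reading $\vartwo$, and with the relevant $\sctx$ reshuffled) is still available. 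Hence a single $\evarsym$-step suffices on the right. The only bookkeeping is the usual freshness side-conditions, which hold by the conventions on bound variables.

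\emph{Item 2.} Again by induction on the $\Rew{\evarsym}$-step, now comparing the $\evarsym$-redex with an $\esym$-redex. Disjoint and non-interacting cases commute cleanly, yielding $\toe\toevar$ (one step each), which lies in $\toe\toevar^*$. Two interactions are delicate. First, the $\esym$-step may substitute its value into a region containing the $\evarsym$-redex, duplicating it; firing the $\esym$-step first then leaves several copies of the $\evarsym$-redex, so the swap produces $\toe$ followed by a (possibly long) run of $\toevar$-steps — this is exactly why the statement needs $\toevar^*$ rather than $\toevar$. Second, and this is the conceptual crux, the replacement variable $\vartwo$ of the $\evarsym$-step may coincide with the variable bound by the \ES of the $\esym$-redex (so the $\evarsym$-redex sits inside that \ES's body): after the $\evarsym$-step one has an $\esym$-redex to match, but firing the $\esym$-step first in the source replaces $\vartwo$ by the substituted value inside the former $\evarsym$-redex, turning $\tm_1\esub\var{\sctxp\vartwo}$ into an honest $\esym$-redex whose firing is the second $\esym$-step; hence the swap produces $\toe\toe$. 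A short computation with iterated meta-substitutions checks that the two resulting terms coincide.

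\emph{Item 3.} This follows by iterating item 2 to push every $\evarsym$-step to the right of all $\esym$-steps. The difficulty is that item 2's right-hand side may both lengthen the $\toevar$-tail (duplication case) and increase the number of $\esym$-steps (the $\toe\toe$ case), so a plain induction on the number of steps is circular. I would close it with a well-founded measure built from the strong normalization results already available: $\tom$ and $\toe$ are separately strongly normalizing (\Cref{l:basic-value-substitution}) and $\toe \cup \toevar$ is strongly normalizing (\Cref{l:strongly-normalizing}; indeed each $\toevar$-step strictly decreases the number of \ES). Concretely, I would first prove $\toevar\toe^+ \subseteq \toe^+\toevar^*$ by induction on the $\esym$-prefix ordered lexicographically with the $\toe \cup \toevar$-normalization measure, and then promote the leading single $\toevar$ to $\toevar^*$ by a further induction on its length. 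I expect this termination bookkeeping for item 3 to be the main obstacle; items 1 and 2 are essentially routine case analyses, with the matching-variable case of item 2 the one spot requiring genuine thought.
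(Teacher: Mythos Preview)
Your handling of items 1 and 2 is sound and more detailed than the paper, which simply declares them trivial; the duplication case (yielding $\toevar^*$) and the variable-capture case (yielding $\toe\toe$) you isolate for item 2 are exactly the right ones.

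For item 3 you have the correct key ingredient---strong normalization of $\toe\cup\toevar$---but your two-step decomposition hides a circularity that you do not address. In the first step (proving $\toevar\toe^+\subseteq\toe^+\toevar^*$), one application of item 2 may yield $\toe\toevar^j$ with $j>1$, leaving a residual $\toevar^j\toe^{h-1}$ configuration in the middle: this is formally the general-$k$ statement you postponed to the second step, so the first step as stated appeals to the second. It \emph{can} be repaired---every intermediate term after the leading $\toe$ has strictly smaller normalization measure, so one may iterate the $k{=}1$ hypothesis $j$ times, peeling off one $\toevar$ per round---but that inner loop is where the actual work lives and it is absent from your sketch. The paper avoids this by proving the general inclusion in one pass, via a single lexicographic induction on $(n,k)$ where $n$ is the length of the longest $\toe\cup\toevar$-reduction from the source and $k$ is the length of the $\toevar$-prefix: apply item 2 to the central $\toevar\toe$ pair; in the $\toe\toe$ outcome the \ih on $k$ closes the whole sequence; in the $\toe\toevar^*$ outcome one first uses the \ih on $k$ to push the shortened $\toevar$-prefix past the single new $\toe$, and then the \ih on $n$ to handle the remaining $\toevar^*\toe^h$ tail, whose source has strictly smaller $n$. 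This packaging folds your hidden inner loop directly into the lexicographic order.
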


\begin{proof}
	The only non trivial point is the third one, that we now prove. The hypothesis is $\tm \toevar^k\toe^h\tmtwo$ for some $k,h$. 
	The proof is by lexicographic induction on $(n,k)$, where $n$ is the length of the longest $\toe \cup\toevar$ sequence from $\tm$ (it exists because $\toe \cup \toevar$ is strongly normalizing, \Cref{l:strongly-normalizing}). Now the case $k=h=1$ is given by the second point. Then consider $\tm\toevar^k\toevar\toe\toe^h\tmtwo$, and apply the swap of the second point to the central pair. There are two cases:
	\begin{enumerate}
		\item $\toevar\toe \subseteq \toe\toevar^*$, and so  $\tm\toevar^k\toe\toevar^*\toe^h\tmtwo$. 
		Now by \ih (on $k$) applied to the prefix $\toevar^k\toe$ we obtain $\tm\toe^+\toevar^*\toevar^*\toe^h\tmtwo$. 
		By \ih (on the longest sequence), the suffix $\toevar^*\toevar^*\toe^h$ turns into $\toe^+ \toevar^*$, giving $\tm\toe^+\toe^+ \toevar^*\tmtwo$, that is, the statement holds.
		\item $\toevar\toe \subseteq \toe\toe$, and so $\tm\toevar^k\toe\toe\toe^h\tmtwo$. 
		By \ih (on $k$) applied to the whole sequence, $\tm\toe^+ \toevar^*\tmtwo$.
		\qedhere
	\end{enumerate}
\end{proof}

\begin{proposition}[Irrelevance of $\toevar$]
	\label{propappendix:irrelevance}
	\NoteState{prop:irrelevance}
	\hfill
	\begin{enumerate}
		\item\label{pappendix:irrelevance-postponement} \emph{Postponement}: whenever $\deriv \colon \tm \, (\tovsub \cup \toevar)^*\, \tmtwo$ then there is $\derivp \colon \tm \tovsub^* \!\cdot \toevar^* \, \tmtwo$ with $\sizem\derivp = \sizem \deriv$;
		\item\label{pappendix:irrelevance-termination} \emph{Termination}: a term is weakly (\resp strongly) normalizing for $\tovsub$ if and only if so is it for $\tovsub \!\cup \toevar$.
	\end{enumerate}
\end{proposition}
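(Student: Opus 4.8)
The argument rests on three ingredients already established: the local swaps of \Cref{l:swaps}, strong normalization of $\toe \cup \toevar$ (\Cref{l:strongly-normalizing}), and preservation of $\vsub$-normal forms along $\toevar$ (\Cref{l:preservation-normal}). The heart of the proposition is the postponement statement (1); the termination statement (2) will then follow by standard abstract reasoning.

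\emph{Plan for postponement.} I would first prove the auxiliary commutation $\toevar^* \cdot \tovsub^* \subseteq \tovsub^* \cdot \toevar^*$, together with the fact that the resulting $\tovsub^*$-prefix can be taken with the same number of $\msym$-steps as the starting $\tovsub^*$-part. The proof decomposes the $\tovsub^*$-part into maximal alternating blocks of $\msym$-steps and of $\esym$-steps and proceeds by induction on the number of blocks: a $\toevar$-prefix is pushed past a leading block of $\msym$-steps by iterating the first swap of \Cref{l:swaps} (a one-to-one swap, so $\toevar^k \cdot \tom^a \subseteq \tom^a \cdot \toevar^k$ by a trivial diagram chase), and past a leading block of $\esym$-steps by the third swap of \Cref{l:swaps}; in both cases the number of $\msym$-steps is left unchanged. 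Then, given $\deriv \colon \tm \,(\tovsub \cup \toevar)^*\, \tmtwo$, I would decompose $\deriv$ into maximal $\tovsub$-blocks and $\toevar$-blocks and induct on the number of $\toevar$-blocks: take the \emph{first} $\toevar$-block; if it is the last block of $\deriv$ then $\deriv$ is already of the shape $\tovsub^* \cdot \toevar^*$; otherwise, commuting it past the immediately following $\tovsub$-block via the auxiliary statement lets the moved $\toevar$-steps merge with the next $\toevar$-block (and the produced $\tovsub$-steps merge with the preceding $\tovsub$-block), strictly decreasing the number of $\toevar$-blocks, so the induction applies. Since each step of this process preserves the number of $\msym$-steps, we get $\sizem{\derivp} = \sizem{\deriv}$.

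\emph{Plan for termination.} For weak normalization, the $(\Leftarrow)$ direction uses postponement: a $(\tovsub \cup \toevar)$-normal form is in particular $\vsub$-normal, so from a normalizing $(\tovsub \cup \toevar)$-evaluation we get $\tm \,\tovsub^*\, \tmthree \,\toevar^*\, \tmtwo$ with $\tmtwo$ $\vsub$-normal, and \Cref{l:preservation-normal} (read backwards along the $\toevar^*$-suffix) makes $\tmthree$ $\vsub$-normal, witnessing $\tovsub$-weak-normalization of $\tm$. The $(\Rightarrow)$ direction: from $\tm \,\tovsub^*\, \tmtwo$ with $\tmtwo$ $\vsub$-normal, fire $\toevar$-steps out of $\tmtwo$ as long as possible; this terminates (each $\toevar$-step strictly decreases the number of \ES, see \Cref{l:strongly-normalizing}) while keeping the term $\vsub$-normal (\Cref{l:preservation-normal}), so it reaches a $(\tovsub \cup \toevar)$-normal form. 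For strong normalization, $(\Leftarrow)$ is trivial since $\tovsub \subseteq \tovsub \cup \toevar$; for $(\Rightarrow)$, assuming $\tm$ is $\tovsub$-strongly-normalizing and noting that $\tovsub$ is finitely branching, K\"onig's lemma yields a bound $N$ on the length of all $\tovsub$-evaluations from $\tm$. An infinite $(\tovsub \cup \toevar)$-evaluation from $\tm$ would necessarily contain infinitely many $\msym$-steps (otherwise a suffix would be an infinite $\toe \cup \toevar$-evaluation, against \Cref{l:strongly-normalizing}); applying postponement to longer and longer prefixes would then produce $\tovsub$-evaluations from $\tm$ with unboundedly many $\msym$-steps, hence of length greater than $N$ --- a contradiction.

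\emph{Main obstacle.} The one genuinely delicate point is the auxiliary commutation $\toevar^* \cdot \tovsub^* \subseteq \tovsub^* \cdot \toevar^*$: by the second swap of \Cref{l:swaps}, a single $\toevar$-step commuted past an $\esym$-step may either multiply into several $\toevar$-steps or collapse into an extra $\esym$-step, so no naive measure counting $\toevar$-steps decreases along the transformation. This is exactly why the proof must route through the block-level third swap of \Cref{l:swaps} rather than iterating the one-step swap; that block swap, in turn, already absorbs the difficulty through a lexicographic induction exploiting strong normalization of $\toe \cup \toevar$. Everything else is routine bookkeeping on block decompositions.
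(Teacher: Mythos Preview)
Your proposal is correct and follows essentially the same approach as the paper: postponement via the swaps of \Cref{l:swaps}, and termination via postponement together with strong normalization of $\toe \cup \toevar$ and K\"onig's lemma. Your treatment is more detailed than the paper's terse one-liner for postponement, and in particular you explicitly justify (via \Cref{l:strongly-normalizing}) why an infinite $(\tovsub \cup \toevar)$-evaluation must contain infinitely many $\msym$-steps --- a point the paper's strong-normalization argument uses without spelling out.
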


\begin{proof}\hfill
	\begin{itemize}
		\item \emph{Postponement}: By induction on the length $\size{\deriv}$ of the evaluation $\deriv:\tm\tovsub^*\tmtwo$, using \reflemma{swaps}.
		
		\item \emph{Termination:} Clearly, if $\tovsub \cup \toevar$ is strongly normalizing on $\tm$ then $\tovsub$ is strongly normalizing on $\tm$, since $\tovsub \,\subseteq\, \tovsub \cup \toevar$. 
		
		Conversely, suppose that $\tovsub \cup \toevar$ is not strongly normalizing on $\tm$. 
		Then, there is an infinite sequence of terms $(\tm_i)_{i \in \nat}$ such that $\tm_i \,(\tovsub \cup \toevar)\, \tm_{i+1}$ and $\tm = \tm_0$. 
		Hence, for every $n \in \nat$ there is an evaluation $\deriv_n \colon \tm \,(\tovsub \cup \toevar)^*\, \tm_{n}$ with $\sizem{\deriv_n} = n$ and, by postponement irrelevance (\Cref{p:irrelevance-postponement}), there is an evaluation $\deriv'_n \colon \tm \tovsub^* \tmtwo_{n}$ with $\sizem{\deriv'_n} = n$.
		Thus $\tovsub$ is not strongly normalizing on $\tm$ (by Konig's lemma). 
		
		If $\tovsub \cup \toevar$ is weakly normalizing on $\tm$ then $\tm \,(\tovsub \cup \toevar)^*\, \tmtwo$ for some $\tmtwo$ normal for $\tovsub \cup \toevar$.
		By postponement irrelevance (\Cref{p:irrelevance-postponement}), $\tm \tovsub^* \tmthree \toevar^* \tmtwo$ where $\tmthree$ is $\vsub$-normal by \cref{l:preservation-normal}.
		Therefore, $\tovsub$ is weakly normalizing on $\tm$.
		
		If $\tovsub$ is weakly normalizing on $\tm$ then $\tm \tovsub^* \tmthree$ for some $\tmthree$ that is $\vsub$-normal.
		Since $\toevar$ is strongly normalizing (\cref{l:strongly-normalizing}), $\tmthree \toevar^* \tmtwo$ for some $\tmtwo$ that is $\toevar$-normal.
		By \cref{l:preservation-normal}, $\tmtwo$ is also $\vsub$-normal. 
		Therefore, $\tovsub \cup \toevar$ is weakly normalizing on $\tm$.
		\qedhere 	
	\end{itemize}	
\end{proof}

\begin{proposition}[Simulation]
	\label{propappendix:plotkin-vsc}
	\NoteState{prop:plotkin-vsc}
	Let $\tm$ be a term without \ES. 
	\begin{enumerate}
		\item\label{pappendix:plotkin-vsc-step} 	If $\tm \tobvplot \tm'$ then $\tm \tom \!\cdot\, (\toe \cup \toevar) \ \tm'$.
		\item\label{pappendix:plotkin-vsc-steps} 	If $\tm \tobvplot^* \tm'$ then $\tm \tovsub^* \tmtwo  \toevar^* \tm'$ for some term $\tmtwo$.
	\end{enumerate}
\end{proposition}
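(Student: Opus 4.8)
The plan is to prove item~1 by a purely local decomposition of one $\betaplot$-step into a multiplicative \VSC-step followed by an exponential or $\toevar$-step, and then to obtain item~2 from item~1 together with the postponement of $\toevar$ already established in \Cref{prop:irrelevance}.

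For item~1: by definition $\tm \tobvplot \tm'$ means that $\tm = \fctxp{(\la\var\tmthree)\tval}$ and $\tm' = \fctxp{\tmthree\isub\var\tval}$ for some full context $\fctx$ without \ES, and some theoretical value $\tval$ (a variable or an abstraction). The key observation is that such a context $\fctx$ is, in particular, a full \VSC-context, so both \VSC-rules and the $\toevar$-rule can be fired through it. First, the multiplicative rule with empty substitution context gives $\tm = \fctxp{(\la\var\tmthree)\tval} \tom \fctxp{\tmthree\esub\var\tval}$. Then I fire the freshly created \ES, case-splitting on $\tval$: if $\tval$ is an abstraction, it is a \VSC-value and the exponential rule (again with empty substitution context) gives $\fctxp{\tmthree\esub\var\tval} \toe \fctxp{\tmthree\isub\var\tval} = \tm'$; if $\tval$ is a variable, the very same computation is a $\toevar$-step, $\fctxp{\tmthree\esub\var\tval} \toevar \fctxp{\tmthree\isub\var\tval} = \tm'$. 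In either case $\tm \tom \cdot (\toe \cup \toevar)\, \tm'$, which is item~1. (Any capture of $\var$ by $\fctx$ when plugging $\tmthree\esub\var\tval$ is excluded by the usual variable convention, exactly as already assumed for $\betaplot$.)

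For item~2: every term along a $\betaplot$-reduction contains no \ES, so item~1 applies to each step; since $\tom, \toe \subseteq \tovsub$, this yields $\tom \cdot (\toe \cup \toevar) \subseteq (\tovsub \cup \toevar)^*$, and hence $\tm \tobvplot^* \tm'$ implies $\tm \, (\tovsub \cup \toevar)^* \, \tm'$. Now the postponement part of \Cref{prop:irrelevance} turns any such evaluation into one of the form $\tm \tovsub^* \!\cdot \toevar^* \tm'$, i.e.\ there is $\tmtwo$ with $\tm \tovsub^* \tmtwo \toevar^* \tm'$, which is precisely the claim.

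I do not expect a real obstacle here: the argument reduces to the routine one-step decomposition of item~1 plus a single invocation of postponement. The only delicate point --- and exactly the reason \Cref{prop:irrelevance} is needed --- is that a chain of $\tom \cdot (\toe \cup \toevar)$ blocks may interleave $\toevar$-steps with genuine \VSC-steps, which must then be pushed past those \VSC-steps to the end of the evaluation; carrying this out by hand, through the swap lemmas underlying \Cref{prop:irrelevance}, would be distinctly more laborious than appealing to the proposition.
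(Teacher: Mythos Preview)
Your proof is correct and follows essentially the same approach as the paper: for item~1 the paper proceeds by induction on the context (which amounts to your direct observation that Plotkin's contexts are full \VSC-contexts, so both steps fire through them), and for item~2 the paper likewise iterates item~1 to obtain a $(\tovsub \cup \toevar)^*$ evaluation and then invokes the postponement part of \Cref{prop:irrelevance}. Your presentation is in fact slightly cleaner, applying postponement once at the end rather than at each inductive step.
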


\begin{proof}
	\begin{enumerate}
		\item By induction on the context $\ctx$ such that $\tm = \ctxp{\tmthree} \tobvplot \ctxp{\tmthree'} = \tm'$ with $\tmthree \rtobvplot \tmthree'$.
		
		If $\ctx = \ctxhole$, then we have the root-step $\tm = (\la{\var}\tmthree)\tval \rtobvplot \tmthree \isub{\var}{\tval} = \tm'$.  
		Then, $\tm \rtom \tmthree \esub{\var}{\tval} \rtoe \tmthree \isub{\var}{\tval} = \tm'$.
		
		The cases where $\ctx \neq \ctxhole$ follow  from the \ih easily.
		
		\item Let $\deriv \colon \tm \tobvplot^* \tm'$. We prove that $\tm \tovsub^* \tmtwo  \toevar^* \tm'$ for some term $\tmtwo$, by induction on $\size{\deriv}_{\betaplot} \!\in \nat$.
		
		If $n = 0$ then $\tm = \tm'$ and we are done taking $\tmtwo \defeq \tm$.
		
		Otherwise, $n > 0$ and $\tm \tobvplot \tmthree \tobvplot^* \tm'$ for some term $\tmthree$ without \ES. 
		By \ih, $\tmthree \tovsub^* \cdot \toevar^* \tm'$. 
		According to \refpoint{plotkin-vsc-step},  $\tm \tom \!\cdot\, (\toe \cup \toevar) \ \tmthree$.
		Therefore, $\tm \ (\tovsub \cup \toevar)^* \ \tm'$ and so, by \Cref{prop:irrelevance}.\ref{p:irrelevance-postponement}, $\tm \tovsub^* \tmtwo  \toevar^* \tm'$ for some term $\tmtwo$.
		\qedhere
	\end{enumerate}
	
\end{proof}

\section{Proofs of \Cref{sect:external}}

Note that \pointed terms are not abstractions.

\begin{remark}
	\label{rmk:rigid} Every \full inert term is a \pointed term, but the converse does not hold (take $\vartwo\,\la{\var}{\delta\delta}$).
\end{remark}

\begin{lemma}[Properties of \pointed terms]
\label{l:properties-of-rigid-terms}
\hfill
\begin{enumerate}
	\item \label{p:properties-of-rigid-terms-plugging-is-rigid} 
	For any term $\tm$ and any rigid context $\rctx$, $\rctxp{\tm}$ is a \pointed term.
	
	\item \label{p:properties-of-rigid-terms-open-strategy-preserves-rigid} Let $\rtm$ be a \pointed term and 
	$\tm$ be a term such that $\rtm \tovsubo \tm$. Then $\tm$ is \pointed.
	
	\item \label{p:properties-of-rigid-terms-strong-strategy-preserves-rigid} Let $\rtm$ be a \pointed term and 
	$\tm$ be a term such that $\rtm \tovsubs \tm$. Then $\tm$ is \pointed.

	\item \label{p:properties-of-rigid-terms-from-non-abstractions-in-open-nf} Let 
	$\tm$ be a term that is $\osym$-normal and is not an \valES. Then $\tm$ is \pointed.
	
	\item \label{p:properties-of-rigid-terms-rigid-not-in-strong-nf-has-plugging} 
	Let $\rtm$ be a \pointed term that is not $\esssym$-normal. 
	Then there exist a rigid context $\rctx$ and terms $\tmtwo, \tmtwop$ such that $\rtm = \rctxp{\tmtwo} \tovsubs \rctxp{\tmtwop}$ with $\tmtwo \tovsubo \tmtwop$.
\end{enumerate}
\end{lemma}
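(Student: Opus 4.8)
The plan is to prove the five items roughly in order, each by a dedicated induction, after isolating two elementary facts. First, a \pointed term is never an abstraction, and more strongly is never an answer $\sctxp\val$ nor of the shape $\sctxp{\la\var\tm}$ --- hence never the left-hand side of a root multiplicative or exponential rule: a one-line induction on $\sctx$ (for $\sctx=\ctxhole$ it is an abstraction, and $\sctx=\sctx'\esub\vartwo\tmtwo$ would force its left component to be \pointed, against the \ih). Second, every inert term (open or \full) is \pointed, by a trivial induction. Given these, item~(1) is an induction on the rigid context $\rctx$: in the base case ``\pointed term applied to an external context'', $\rctxp\tm$ is a \pointed term beside an arbitrary term, hence \pointed, and the three recursive constructors follow from the \ih and the grammar of \pointed terms. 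Item~(4) uses that $\tm$ is a fireball (\Cref{prop:properties-open-reduction}) and inducts on it: a value fireball is an \valES (excluded by hypothesis); an inert fireball is \pointed by the second fact; and if $\tm=\fire\esub\var\itm$ is not an \valES then $\fire$ is not an \valES either, so $\fire$ is \pointed by the \ih, and since $\itm$ is inert hence \pointed, so is $\tm$.

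Items~(2) and~(3) are preservation statements, both by induction on the \pointed term $\rtm$. For~(2): a variable has no redex; for an application or an \ES, the contracted $\tovsubo$-redex is, by the first fact, neither a root redex at $\rtm$ nor at a \pointed immediate subterm, so either it lies strictly inside a \pointed immediate subterm --- where the \ih applies --- or inside a non-\pointed component, whose reduct is still an arbitrary term; in both cases recombination gives a \pointed term. For~(3), take a step $\rtm = \extctxp{\tm} \tovsubs \extctxp{\tmtwo}$ arising from an open step $\tm \tovsubo \tmtwo$, and case on the external context $\extctx$: it cannot be $\la\var\extctx'$ (else $\rtm$ is an abstraction); if $\extctx=\ctxhole$ apply~(2); if the hole sits inside a rigid context plugged into an \ES argument, item~(1) keeps that subterm \pointed; and if $\extctx = \extctx'\esub\var\ptm$ the subterm on the left of the \ES is \pointed and $\esssym$-reduces, so the structural \ih applies and recombination concludes.

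Item~(5) is the crux; I would first prove an auxiliary claim~($\dagger$): \emph{if $\rtm$ is \pointed and $\rtm \tovsubo \tmtwo$, then $\rtm = \rctxp{\tmthree} \tovsubs \rctxp{\tmfour}$ for some rigid context $\rctx$ and open step $\tmthree \tovsubo \tmfour$}. I prove~($\dagger$) by induction on the open context $\weakctx$ witnessing the step $\rtm = \weakctxp{\tm_0} \tovsubo \weakctxp{\tm_1}$ with $\tm_0$ a root redex. The case $\weakctx = \ctxhole$ is impossible, since a \pointed term is not a root redex (first fact). If the redex is ``head-ward'', that is, inside the left immediate subterm $\ptm$ of an application or \ES (which is then itself \pointed and open-reducing), the \ih supplies a rigid context for $\ptm$ and we prepend the matching constructor. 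If the redex sits in an argument position, say $\rtm = \ptm\tmtwo$ with the redex inside $\tmtwo$, we take the rigid context $\ptm\ctxhole$ (a rigid context is in particular external) with the \emph{whole} of $\tmtwo$ as hole-filler: this is a legitimate $\esssym$-step because $\tmtwo \tovsubo \tmtwop$ is already an open step. If the redex sits in an \ES argument $\ptmtwo$ (again \pointed and open-reducing), the \ih gives a rigid context $\rctx'$ for $\ptmtwo$ and $\ptm\esub\var\rctx'$ works. Then~(5) follows: from ``$\rtm$ not $\esssym$-normal'' pick a step $\rtm = \extctxp{\tm} \tovsubs \extctxp{\tmtwo}$ from an open step $\tm \tovsubo \tmtwo$, and case on $\extctx$ exactly as in~(3) --- $\la\var\extctx'$ impossible, $\ctxhole$ handled by~($\dagger$), and in the \ES/rigid cases the rigid context is assembled from the external-context constructor together with a rigid sub-context obtained either directly or from the structural \ih on a \pointed subterm.

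The main obstacle is exactly this bookkeeping of rigid contexts in~(5): the external strategy descends freely into argument positions (like the open strategy) but only through \emph{rigid} contexts into \ES arguments, so the decomposition must bottom out at a rigid context whose hole is an \emph{entire} open-reducible subterm --- which is why in the argument case one takes $\ptm\ctxhole$, not $\ptm\weakctx'$. A further subtlety to respect is that this lemma precedes, and is used to prove, fullness (\Cref{prop:external-properties}), so no argument here may invoke ``$\esssym$-normal iff $\vsub$-normal''; only $\tovsubo \subseteq \tovsubs$ and the context grammars are used.
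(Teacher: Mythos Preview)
Your approach is correct, but organized differently from the paper, most notably for item~(5).

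For items~(1)--(4) the differences are cosmetic: the paper proves~(2) by induction on the open context $\weakctx$ rather than on $\rtm$; proves~(3) by induction on the external context $\evsctx$ alone; and proves~(4) by direct structural induction on $\tm$ (showing in the application and \ES cases that the relevant subterms are $\osym$-normal and not answers, then applying the \ih) rather than via the fireball characterization. All variants go through.

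For item~(5) the contrast is more substantive. You introduce the auxiliary~($\dagger$), then case on the external context of a chosen $\esssym$-step, invoking a structural \ih where needed. The paper instead inducts directly on $\rtm$ and, in the application case $\rtm = \tm_1\tm_2$, asks only whether $\tm_1$ is $\esssym$-normal: if not, the \ih on $\tm_1$ yields a rigid $\rctx'$ and one takes $\rctx \defeq \rctx'\tm_2$; if yes, then $\tm_2$ is not $\esssym$-normal, hence has an $\esssym$-step via some external context $\evsctx$, and $\rctx \defeq \tm_1\,\evsctx$ is already rigid by the grammar clause $\ptm\,\extctx$---no recursion into $\tm_2$ is needed. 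The \ES case is symmetric, using the \ih on whichever of the two rigid subterms fails to be $\esssym$-normal. This organization dispenses with~($\dagger$) entirely: the ``open step at the top'' scenario never needs separate treatment because it is absorbed into the normality case-split on subterms. Your route is sound but carries an extra lemma and an extra layer of case analysis; the paper's buys simplicity by exploiting the clause $\ptm\,\extctx$ to shortcut the argument-side case directly.
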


\begin{proof}
\begin{enumerate}
	\item By induction on the definition of $\rctx$.
	
	\item Let $\weakctx$ be an open evaluation context such that $\rtm = \weakctxp{\tmtwo} \tovsubo \weakctxp{\tmtwop} = \tm$, with $\tmtwo \rtom \tmtwop$ or $\tmtwo \rtoe \tmtwop$. 
	We proceed by induction on the open context $\weakctx$.
	\begin{itemize}
		\item \emph{Empty context}; \ie, $\weakctx = \ctxhole$. 
		This case is not possible, because it would imply that 
		\begin{itemize}
			\item either $\rtm = \tmtwo = \sctxp{\la{\var}{\tmthree}} \tmfour$ (if $\tmtwo \rtom \tmtwop$),
			\item or $\rtm = \tmtwo = {\tmthree}\esub{\var}{\sctxp{\val}}$ (if $\tmtwo \rtoe \tmtwop$),
		\end{itemize}
		and in both cases $\rtm$ is not a \pointed term.
		
		\item \emph{Application right}; \ie, $\weakctx = \tmfour \weakctxtwo$. 
		As $\rtm = \tmfour \weakctxtwop{\tmtwo}$, then $\tmfour$ is a \pointed term, and so $\tmfour \weakctxtwop{\tmtwop} = \tm$ is \pointed.
		
		\item \emph{Application left}; \ie, $\weakctx = \weakctxtwo \tmfour$. 
		As $\rtm = \weakctxtwop{\tmtwo} \tmfour$ and $\weakctxtwop{\tmtwop}$ is \pointed by \ih, then $\weakctxtwop{\tmtwop} \tmfour = \tm$ is \pointed too.
		
		\item \emph{$\mathsf{ES}$ left}; \ie, $\weakctx = \weakctxtwo \esub{\var}{\tmfour}$. Since $\rtm = \weakctxtwop{\tmtwo} \esub{\var}{\tmfour}$, then both $\weakctxtwop{\tmtwo}$ and $\tmfour$ are \pointed. Moreover, $\weakctxtwop{\tmtwop}$ is \pointed by \ih, and so $\weakctxtwop{\tmtwop} \esub{\var}{\tmfour} = \tm$ is \pointed too.
		
		\item \emph{$\mathsf{ES}$ right}; \ie, $\weakctx = \tmfour \esub{\var}{\weakctxtwo}$. Since $\rtm = \tmfour \esub{\var}{\weakctxtwop{\tmtwo}}$, then both $\tmfour$ and $\weakctxtwop{\tmtwo}$ are \pointed. Moreover,  $\weakctxtwop{\tmtwop}$ is \pointed by \ih, and so $\tmfour \esub{\var}{\weakctxtwop{\tmtwop}} = \tm$ is \pointed too.
		
	\end{itemize}
	
	\item Let $\evsctx$ be a strong evaluation context such that $\rtm = \evsctxp{\tmtwo} \tovsubs \evsctxp{\tmtwop} = \tm$, with $\tmtwo \tomo \tmtwop$ or $\tmtwo \toeo \tmtwop$. 
	We proceed by induction on the external context $\evsctx$.
	\begin{itemize}
		\item \emph{Empty context}; \ie, $\evsctx = \ctxhole$. Then $\rtm = \tmtwo \tomo \tmtwop = \tm$ or $\rtm = \tmtwo \toeo \tmtwop = \tm$, and the statement holds by \reflemmap{properties-of-rigid-terms}{open-strategy-preserves-rigid}
		
		\item \emph{Under $\lambda$-abstraction right}; \ie, $\evsctx = \la{\var}{\evsctxtwo}$. This case is not possible, because it would imply that $\rtm = \la{\var}{\evsctxtwop{\tmtwo}}$, which is not a \pointed term.
		
		\item \emph{Strong context, $\mathsf{ES}$ right}; \ie, $\evsctx = \tmfour \esub{\var}{\rctx}$. Since $\rtm = \tmfour \esub{\var}{\rctxp{\tmtwo}}$, then both $\tmfour$ and $\rctxp{\tmtwo}$ are \pointed terms. Moreover, $\rctxp{\tmtwop}$ is \pointed by \ih, and so $\tmfour \esub{\var}{\rctxp{\tmtwop}} = \tmthree$ is \pointed too.
		
		\item \emph{Strong context, $\mathsf{ES}$ left}; \ie, $\evsctx = \evsctxtwo \esub{\var}{\tmfour}$, with $\tmfour$ a \pointed term. Since $\rtm = \evsctxtwop{\tmtwo} \esub{\var}{\tmfour}$, then $\evsctxtwop{\tmtwo}$ is \pointed. Moreover, $\evsctxtwop{\tmtwop}$ is \pointed by \ih, and so $\evsctxtwop{\tmtwop} \esub{\var}{\tmfour} = \tmthree$ is \pointed too.
		
		\item \emph{Rigid context, application right}; \ie, $\evsctx = \tmfour \evsctxtwo$, with $\tmfour$ a \pointed term. Then $\tmfour \evsctxtwop{\tmtwop} = \tmthree$ is \pointed too.
		
		\item \emph{Rigid context, application left}; \ie, $\evsctx = \rctx \tmfour$. Since $\rtm = \rctxp{\tmtwo} \tmfour$, then $\rctxp{\tmtwo}$ is \pointed. Moreover, $\rctxp{\tmtwop}$ is \pointed by \ih, and so $\rctxp{\tmtwop} \tmfour$ is \pointed too.
		
		\item \emph{Rigid context, $\mathsf{ES}$ left}; \ie, $\evsctx = \rctx \esub{\var}{\tmfour}$, with $\tmfour$ a \pointed term. Since $\rtm = \rctxp{\tmtwo} \esub{\var}{\tmfour}$, then $\rctxp{\tmtwo}$ is \pointed. Moreover, $\rctxp{\tmtwop}$ is \pointed by \ih, and so $\rctxp{\tmtwop} \esub{\var}{\tmfour} = \tmthree$ is \pointed too.
		
		\item \emph{Rigid context, $\mathsf{ES}$ right}; \ie, $\evsctx = \tmfour \esub{\var}{\rctx}$, with $\tmfour$ a \pointed term. Since $\rtm = \tmfour \esub{\var}{\rctxp{\tmtwo}}$, then $\rctxp{\tmtwo}$ is \pointed. Moreover,  $\rctxp{\tmtwop}$ is \pointed by \ih, and so $\tmfour \esub{\var}{\rctxp{\tmtwop}} = \tmthree$ is \pointed too.
	\end{itemize}
	
	\item By structural induction on the term $\tm$. Cases:
		\begin{itemize}
			\item \emph{Variable}. Trivial.
			
			\item \emph{Abstraction}. The statement is vacuously true, because if $\tm$ is an abstraction then it is an \valES.
			
			\item \emph{Application}; \ie, $\tm = \tm_{1} \tm_{2}$. If $\tm_{1}$ is not in $\tovsubo$-normal form, then neither is $\tm$. Moreover, if $\tm_{1}$ is a $\lambda$-abstraction in a substitution context, then $\tm$ is not in $\rtom$-normal form, which is absurd. Thus, $\tm_{1}$ is \pointed---by \ih---and then so is $\tm$.
			
			\item \emph{$\mathsf{ES}$}; \ie, $\tm = \tm_{1} \esub{\var}{\tm_{2}}$. First of all, if $\tm_{1}$ is a $\lambda$-abstraction in a substitution context then so is $\tm$ ---which is absurd---, and if $\tm_{1}$ is not in $\tovsubo$-normal form then neither is $\tm$. Therefore, $\tm_{1}$ is a \pointed term ---by \ih. Second, if $\tm_{2}$ is an abstraction in a substitution context then $\tm$ is not in $\rtoe$-normal form ---which is absurd---, and if $\tm_{2}$ is not in $\tovsubo$-normal form then neither is $\tm$. Therefore, $\tm_{2}$ is a \pointed term ---by \ih. Thus, $\tm$ is a \pointed term as well.
			
		\end{itemize}
	
	\item By induction on the rigid term $\rtm$. Cases:
		\begin{itemize}
			\item \emph{Variable}. The statement is vacuously true, because if $\rtm$ is a variable then it is $\esssym$-normal.
			
			\item \emph{Application}; \ie, $\tm = \tm_{1} \tm_{2}$, with $\tm_{1}$ a \pointed term. Note that $\tm_{1}$ is not an \valES. 
			
			Let $\tm_{1}$ be not in $\esssym$-normal form. Then by \ih there exist a rigid context $\rctxtwo$ and terms $\tmthree, \tmthreep$ such that $\tm_{1} = \rctxthreep{\tmtwo} \tovsubs \rctxtwop{\tmthreep}$. Thus, the statement holds by taking $\rctx \defeq \rctxtwo \tm_{2}$, $\tmtwo \defeq \tmthree$ and $\tmtwop \defeq \tmthreep$.
			
			Let $\tm_{1}$ be in $\tovsubs$-normal form. Then $\tm_{2}$ is not $\esssym$-normal, which implies the existence of a external context $\evsctx$ and terms $\tmthree, \tmthreep$ such that $\tm_{2} = \evsctxp{\tmthree} \tovsubs \evsctxp{\tmthreep}$ and $\tmthree \tovsubo \tmthreep$, and so the statement holds by taking $\rctx \defeq \tm_{1} \evsctx$.
			
			\item \emph{$\mathsf{ES}$}; \ie, $\tm = \tm_{1} \esub{\var}{\tm_{2}}$, with both $\tm_{1}$ and $\tm_{2}$ \pointed terms. If $\tm_{1}$ is not $\esssym$-normal, then there exist---by \ih---a rigid context $\rctxtwo$ and terms $\tmthree, \tmthreep $ such that $\tm_{1} = \rctxtwop{\tmthree} \tovsubs \rctxtwop{\tmthreep}$. Thus, the statement holds by taking $\rctx \defeq \rctxtwo \esub{\var}{\tm_{2}}$, $\tmtwo \defeq \tmthree$ and $\tmtwop \defeq \tmthreep$.
			
			If $\tm_{1}$ is in $\esssym$-normal form, then $\tm_{2}$ is not in $\esssym$-normal form and so there exist rigid context $\rctxtwo$ and terms $\tmthree, \tmthreep$ such that $\tm_{2} = \rctxtwop{\tmthree} \tovsubs \rctxtwop{\tmthreep}$. Thus, the statement holds by taking $\rctx \defeq \tm_{1} \esub{\var}{\rctxtwo}$, $\tmtwo \defeq \tmthree$ and $\tmtwop \defeq \tmthreep$.
			\qedhere
		\end{itemize}
	
\end{enumerate}
\end{proof}

\begin{lemma}[Basic Properties of $\tovsubs$]
	\label{l:basic-value-substitution-external}
	\hfill
	\begin{enumerate}
%
		\item\label{p:basic-value-substitution-external-tom-toe-diamond-strong} $\toms$ and $\toes$ are diamond  (separately).
		
		\item\label{p:basic-value-substitution-external-tom-toe-commute-strong}  $\toms$ and $\toes$ strongly commute.
	\end{enumerate}
\end{lemma}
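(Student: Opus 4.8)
The plan is to replay the proof of \Cref{l:basic-value-substitution} (which establishes the same two facts for the open reduction $\tomo$ and $\toeo$) one layer up, through external and rigid contexts. The starting observation is that every $\toms$ step is an open $\tomo$ step plugged into an external context, and likewise every $\toes$ step is an open $\toeo$ step plugged into an external context. So, given two reductions of the same kind out of a term $\tm$, I would fix the external contexts realizing them, write $\tm$ accordingly in two ways as an external context with an open-reducible subterm in its hole, and proceed by induction on the pair of external contexts, mutually with the analogous induction on rigid contexts, mirroring the case analysis of \Cref{l:basic-value-substitution}. The base case is when both external contexts are $\ctxhole$: then the two competing redexes sit at plain open positions, and the whole case — including the overlapping subcases such as a root step against a step inside an argument — is dispatched exactly by the diamond of $\tomo$, $\toeo$ and by their strong commutation (\Cref{l:basic-value-substitution}).

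For the inductive step I would peel the outermost constructor of the two external (or rigid) contexts. If they agree, the two residuals live in the same immediate subterm and the induction hypothesis closes the diagram after re-applying the outer constructor; here the grammar is helpful, since, e.g., on an abstraction $\la\var{s}$ the only non-trivial external context is $\la\var{\extctx'}$, so the competing steps must both occur under the $\lambda$. If they disagree, the two redexes lie in disjoint subterms (function versus argument of an application, the two sides of an \ES, and so on) and the steps commute by firing them independently; the only delicate point is that the external/rigid grammars carry side conditions — certain argument subterms must be \pointed terms — that must be shown stable under the competing step. This is precisely what \Cref{l:properties-of-rigid-terms} supplies: $\tovsubo$ and $\tovsubs$ preserve \pointed terms (points~2 and~3) and $\rctxp{\tmtwo}$ is always a \pointed term (point~1), so the decompositions witnessing the two steps remain valid after reducing the other redex. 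Compared with the open proof, the genuinely new ingredients are thus only the constructor $\la\var{\extctx}$ (entering a non-applied abstraction) and the \pointedness restrictions on the remaining constructors; moreover, a useful simplification is that an external context can never reach inside the redex pattern of a root $\msym$- or $\esym$-step — it would have to enter an applied abstraction or the value-answer in the argument of an \ES, neither of which is a \pointed term — so no overlap beyond those already handled in the open proof can occur. The treatment of $\toes$ is uniform with that of $\toms$, since an exponential step duplicates only values, which are $\osym$-normal and hence create or destroy no residual redexes.

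The main obstacle I anticipate is not any individual case but the combinatorics: the mutual recursion between external and rigid contexts, together with the fact that the external/open layering of a $\toms$ step is not unique (the same redex can be presented with different external-versus-open splittings, as already visible for $(\Id\Id)(\Id\Id)$), multiplies the cases and forces one to be pedantic about which decomposition is being peeled and about re-establishing every \pointedness side condition in the "disjoint" cases. Once both parts are in place, the diamond of $\tovsubs = \toms \cup \toes$ (\Cref{prop:external-properties}) follows from them by the Hindley--Rosen lemma, exactly as for the open reduction.
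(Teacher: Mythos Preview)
Your proposal is correct and matches the paper's approach: the paper proves the diamond of $\toms$ (and analogously $\toes$) by structural induction on $\tm$ with case analysis on the two competing steps, reducing to the open diamond/commutation (\Cref{l:basic-value-substitution}) at the leaves and invoking \Cref{l:properties-of-rigid-terms} exactly as you describe to re-establish the \pointedness side conditions in the disjoint cases. The only cosmetic difference is that the paper inducts on $\tm$ rather than on the pair of external contexts, but the resulting case analysis and the key lemmas are the same.
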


\begin{proof} 		
	\begin{enumerate}

		\item We prove that $\toms$ is diamond, \ie if $\tmtwo \reversetoms \tm \toms \tmthree$ with $\tmtwo \neq \tmthree$ then there exists a term $\tmp$ such that $\tmtwo \toms \tmp \lRew{\smsym} \tmthree$ (the proof that $\toes$ is diamond is analogue). 
		The proof is by structural induction on $\tm$, doing case analysis on $\tm \toms \tmtwo$ and $\tm \toms \tmthree$: 
		\begin{itemize}
			\item \emph{Under $\lambda$-abstraction for both $\tm \toms \tmtwo$ and $\tm \toms \tmthree$}; \ie, $\tm = \la{\var}{\tmfour} \toms \la{\var}{\tmfive} = \tmtwo$ and $\tm = \la{\var}{\tmfour} \toms \la{\var}{\tmsix} = \tmthree$, 
			with $\tmfour \toms \tmfive$ and $\tmfour \toms \tmsix$. 
			By \ih there exists $\tmfourp$ such that $\tmfive \toms \tmfourp \reversetoms \tmsix$ and so $\tmtwo = \la{\var}{\tmfive} \toms \la{\var}{\tmfourp} \reversetoms \la{\var}{\tmsix} = \tmthree$.
			
			\item \emph{Application right for $\tm \toms \tmtwo$ and application left for $\tm \toms \tmthree$}; \ie, $\tm = \tmfour \tmfive \toms \tmfour \tmfivep = \tmtwo$ and $\tm = \tmfour \tmfive \toms \tmfourp \tmfive = \tmthree$. There are several sub-cases to this:
			\begin{itemize}
				\item Let $\tm = \tmfour \tmfive = \tmfour \openctxp{\tilde{\tmfive}} \tomo \tmfour \openctxp{\tilde{\tmfivep}} = \tmtwo$, with $\tilde{\tmfive} \rtom \tilde{\tmfivep}$, and $\tm = \rctxp{\tilde{\tmfour}} \tmfive \toms \rctxp{\tilde{\tmfourp}} \tmfive$, with $\tilde{\tmfour} \tomo \tilde{\tmfourp}$. Let $\tmp \defeq \rctxp{\tilde{\tmfourp}} \openctxp{\tilde{\tmfivep}}$, having that 
				$$
				\tmtwo = \rctxp{\tilde{\tmfour}} \openctxp{\tilde{\tmfivep}} \toms \tmp \reversetoms \rctxp{\tilde{\tmfourp}} \openctxp{\tilde{\tmfive}} = \tmthree
				$$
				Note that $\tmtwo \toms \tmp$ holds because every rigid context is an open context.

				\item Let $\tm = \tmfour \tmfive = \tmfour \openctx_{1} \ctxholep{\tilde{\tmfive}} \tomo \tmfour \openctx_{1} \ctxholep{\tilde{\tmfivep}} = \tmtwo$, with $\tilde{\tmfive} \rtom \tilde{\tmfivep}$, and $\tm = \openctx_{2} \ctxholep{\tilde{\tmfour}} \tmfive \tomo \openctx_{2} \ctxholep{\tilde{\tmfourp}} \tmfive$, with $\tilde{\tmfour} \rtom \tilde{\tmfourp}$. 
				Then the statement holds by \reflemmap{basic-value-substitution}{tom-toe-diamond-open}
				
				\item Let $\tm = \tmfour \evsctxp{\tilde{\tmfive}} \toms \tmfour \evsctxp{\tilde{\tmfivep}} = \tmtwo$, with $\tmfour$ a rigid term and $\tilde{\tmfive} \tomo \tilde{\tmfivep}$, and $\tm = \rctxp{\tilde{\tmfour}} \tmfive \toms \rctxp{\tilde{\tmfourp}} \tmfive = \tmthree$, with $\tilde{\tmfour} \tomo \tilde{\tmfourp}$. Let $\tmp \defeq \rctxp{\tilde{\tmfourp}} \evsctxp{\tilde{\tmfivep}}$, having that 
				$$
				\tmtwo = \rctxp{\tilde{\tmfour}} \evsctxp{\tilde{\tmfivep}} \toms\tmp \reversetoms \rctxp{\tilde{\tmfourp}} \evsctxp{\tilde{\tmfive}} = \tmthree
				$$
				
				Note that $\tmp \reversetoms \tmthree$ holds because $\rctxp{\tilde{\tmfourp}}$ is a \pointed term ---by \reflemmap{properties-of-rigid-terms}{plugging-is-rigid}.
				
				\item Let $\tm = \tmfour \evsctxp{\tilde{\tmfive}} \toms \tmfour \evsctxp{\tilde{\tmfivep}} = \tmtwo$, with $\tmfour$ a rigid term and $\tilde{\tmfive} \tomo \tilde{\tmfivep}$, and $\tm = \openctxp{\tilde{\tmfour}} \tmfive \toms \openctxp{\tilde{\tmfourp}} \tmfive$, with $\tilde{\tmfour} \rtom \tilde{\tmfourp}$. Let $\tmp \defeq \openctxp{\tilde{\tmfourp}} \evsctxp{\tilde{\tmfivep}}$, having that
				$$
				\tmtwo = \openctxp{\tilde{\tmfour}} \evsctxp{\tilde{\tmfivep}} \toms \tmp \reversetoms \openctxp{\tilde{\tmfourp}} \evsctxp{\tilde{\tmfive}} = \tmthree
				$$
				
				Note that $\tmp \reversetoms \tmthree$ holds because the fact that $\tmfour$ is a \pointed term and that $\tmfour = \openctxp{\tilde{\tmfour}} \tomo \openctxp{\tilde{\tmfourp}}$ imply that $\openctxp{\tilde{\tmfourp}}$ is a \pointed term ---by \reflemmap{properties-of-rigid-terms}{open-strategy-preserves-rigid}.
				
			\end{itemize}
			
			\item \emph{Application right for both $\tm \toms \tmtwo$ and $\tm \toms \tmthree$}; \ie, $\tm = \tmfour \tmfive \toms \tmfour \tmfivep = \tmtwo$ and $\tm = \tmfour \tmfive \toms \tmfour \tmfivepp = \tmthree$. By \ih there exists a term $\tmsix$ such that $\tmfivep \toms \tmsix \reversetoms \tmfivepp$. The analysis of the sub-cases, depending on the open/strong/rigid type contexts involved in $\tm \toms \tmtwo$ and $\tm \toms \tmthree$, follows the same schema as for the previous item, all showing that
			$$
			\tmtwo = \tmfour \tmfivep \toms \tmfour \tmsix \reversetoms \tmfour \tmfivepp = \tmthree
			$$

			\item \emph{Application left for both $\tm \toms \tmtwo$ and $\tm \toms \tmthree$}; \ie, $\tm = \tmfour \tmfive \toms \tmfourp \tmfive = \tmtwo$ and $\tm = \tmfour \tmfive \toms \tmfourpp \tmfive = \tmthree$. By \ih there exists a term $\tmsix$ such that $\tmfourp \toms \tmsix \reversetoms \tmfourpp$. The analysis of the sub-cases, depending on the open/strong/rigid type contexts involved in $\tm \toms \tmtwo$ and $\tm \toms \tmthree$, follows the same schema as for the previous item, all showing that
			$$
			\tmtwo = \tmfourp \tmfive \toms \tmsix \tmfive \reversetoms \tmfourpp \tmfive = \tmthree
			$$
			
			\item \emph{$\mathsf{ES}$ right for $\tm \toms \tmtwo$ and $\mathsf{ES}$ left for $\tm \toms \tmthree$}; \ie, $\tm = \tmfour \esub{\var}{\tmfive} \toms \tmfour \esub{\var}{\tmfivep} = \tmtwo$ and $\tm = \tmfour \esub{\var}{\tmfive} \toms \tmfourp \esub{\var}{\tmfive}$. There are several sub-cases to this:
			\begin{itemize}
				\item Let $\tm = \tmfour \esub{\var}{\openctxp{\tilde{\tmfive}}} \toms \tmfour \esub{\var}{\openctxp{\tilde{\tmfivep}}} = \tmtwo$, with $\tilde{\tmfive} \rtom \tilde{\tmfivep}$, and $\tm = \openctxp{\tilde{\tmfour}} \esub{\var}{\tmfive} \toms \openctxp{\tilde{\tmfourp}} \esub{\var}{\tmfive} = \tmthree$, with $\tilde{\tmfour} \rtom \tilde{\tmfourp}$. Then the statement holds by \reflemmap{basic-value-substitution}{tom-toe-diamond-open}.
				
				\item Let $\tm = \tmfour \esub{\var}{\openctxp{\tilde{\tmfive}}} \toms \tmfour \esub{\var}{\openctxp{\tilde{\tmfivep}}} = \tmtwo$, with $\tilde{\tmfive} \rtom \tilde{\tmfivep}$, and $\tm = \evsctxp{\tilde{\tmfour}} \esub{\var}{\tmfive} \toms \evsctxp{\tilde{\tmfourp}} \esub{\var}{\tmfive} = \tmthree$, with $\tilde{\tmfour} \tomo \tilde{\tmfourp}$ and $\tmfive$ is a \pointed term. Let $\tmp \defeq \evsctxp{\tilde{\tmfourp}} \esub{\var}{\openctxp{\tilde{\tmfivep}}}$, having that 
				$$
				\tmtwo = \evsctxp{\tilde{\tmfour}} \esub{\var}{\openctxp{\tilde{\tmfivep}}} \toms \tmp \reversetoms \evsctxp{\tilde{\tmfourp}} \esub{\var}{\openctxp{\tilde{\tmfive}}} = \tmthree
				$$
				
				Note that $\tmtwo \toms \tmp$ holds because the fact that$\tmfive$ is a \pointed term and that $\tmfive = \openctxp{\tilde{\tmfive}} \toms \openctxp{\tilde{\tmfivep}}$ imply that  $\openctxp{\tilde{\tmfivep}}$ is a \pointed term ---by \reflemmap{properties-of-rigid-terms}{strong-strategy-preserves-rigid}.
				
				\item Let $\tm = \tmfour \esub{\var}{\openctxp{\tilde{\tmfive}}} \toms \tmfour \esub{\var}{\openctxp{\tilde{\tmfivep}}} = \tmtwo$, with $\tilde{\tmfive} \rtom \tilde{\tmfivep}$, and $\tm = \rctxp{\tilde{\tmfour}} \esub{\var}{\tmfive} \toms \rctxp{\tilde{\tmfourp}} \esub{\var}{\tmfive} = \tmthree$, with $\tilde{\tmfour} \tomo \tilde{\tmfourp}$ and $\tmfive$ is a \pointed term. Let $\tmp \defeq \rctxp{\tilde{\tmfourp}} \esub{\var}{\openctxp{\tilde{\tmfivep}}}$, having that 
				$$
				\tmtwo = \rctxp{\tilde{\tmfour}} \esub{\var}{\openctxp{\tilde{\tmfivep}}} \toms \tmp \reversetoms \rctxp{\tilde{\tmfourp}} \esub{\var}{\openctxp{\tilde{\tmfive}}} = \tmthree
				$$
				
				Note that $\tmtwo \toms \tmp$ holds because the fact that $\tmfive$ is a \pointed term and that $\tmfive = \openctxp{\tilde{\tmfive}} \toms \openctxp{\tilde{\tmfivep}}$ imply $\openctxp{\tilde{\tmfivep}}$ is a \pointed term ---by \reflemmap{properties-of-rigid-terms}{open-strategy-preserves-rigid}.
				
				\item Let $\tm = \tmfour \esub{\var}{\rctxp{\tilde{\tmfive}}} \toms \tmfour \esub{\var}{\rctxp{\tilde{\tmfivep}}} = \tmtwo$, with $\tilde{\tmfive} \tomo \tilde{\tmfivep}$, and $\tm = \openctxp{\tilde{\tmfour}} \esub{\var}{\tmfive} \toms \openctxp{\tilde{\tmfourp}} \esub{\var}{\tmfive} = \tmthree$, with $\tilde{\tmfour} \rtom \tilde{\tmfourp}$. Let $\tmp \defeq \openctxp{\tilde{\tmfourp}} \esub{\var}{\rctxp{\tilde{\tmfivep}}}$, having that 
				$$
				\tmtwo = \openctxp{\tilde{\tmfour}} \esub{\var}{\rctxp{\tilde{\tmfivep}}} \toms \tmp \reversetoms \openctxp{\tilde{\tmfourp}} \esub{\var}{\rctxp{\tilde{\tmfive}}} = \tmthree
				$$
				
				\item Let $\tm = \tmfour \esub{\var}{\rctxp{\tilde{\tmfive}}} \toms \tmfour \esub{\var}{\rctxp{\tilde{\tmfivep}}} = \tmtwo$, with $\tilde{\tmfive} \tomo \tilde{\tmfivep}$, and $\tm = \evsctxp{\tilde{\tmfour}} \esub{\var}{\tmfive} \toms \evsctxp{\tilde{\tmfourp}} \esub{\var}{\tmfive} = \tmthree$, with $\tilde{\tmfour} \tomo \tilde{\tmfourp}$ and $\tmfive$ is a \pointed term. Let $\tmp \defeq \evsctxp{\tilde{\tmfourp}} \esub{\var}{\rctxp{\tilde{\tmfivep}}}$, having that 
				$$
				\tmtwo = \evsctxp{\tilde{\tmfour}} \esub{\var}{\rctxp{\tilde{\tmfivep}}} \toms \tmp \reversetoms \evsctxp{\tilde{\tmfourp}} \esub{\var}{\rctxp{\tilde{\tmfive}}} = \tmthree
				$$
				Note that $\tmtwo \toms \tmp$ holds because the fact that $\tmfive$ is a \pointed term and that $\tmfive = \rctxp{\tilde{\tmfive}} \toms \rctxp{\tilde{\tmfivep}}$ imply that $\rctxp{\tilde{\tmfivep}}$ is a \pointed term ---by \reflemmap{properties-of-rigid-terms}{strong-strategy-preserves-rigid}.
				
				\item Let $\tm = \tmfour \esub{\var}{\rctx_{1} \ctxholep{\tilde{\tmfive}}} \toms \tmfour \esub{\var}{\rctx_{1} \ctxholep{\tilde{\tmfivep}}} = \tmtwo$, with $\tilde{\tmfive} \tomo \tilde{\tmfivep}$, and $\tm = \rctx_{2} \ctxholep{\tilde{\tmfour}} \esub{\var}{\tmfive} \toms \rctx_{2} \ctxholep{\tilde{\tmfourp}} \esub{\var}{\tmfive} = \tmthree$, with $\tilde{\tmfour} \tomo \tilde{\tmfourp}$ and $\tmfive$ is a \pointed term. Let $\tmp \defeq \rctx_{2} \ctxholep{\tilde{\tmfourp}} \esub{\var}{\rctx_{1} \ctxholep{\tilde{\tmfivep}}}$, having that 
				$$
				\tmtwo = \rctx_{2} \ctxholep{\tilde{\tmfour}} \esub{\var}{\rctx_{1} \ctxholep{\tilde{\tmfivep}}} \toms \tmp \reversetoms \rctx_{2} \ctxholep{\tilde{\tmfourp}} \esub{\var}{\rctx_{1} \ctxholep{\tilde{\tmfive}}} = \tmthree
				$$
				Note that $\tmtwo \toms \tmp$ holds because the fact that $\tmfive$ is a \pointed term and that $\tmfive = \rctx_{1} \ctxholep{\tilde{\tmfive}} \toms \rctx_{1} \ctxholep{\tilde{\tmfivep}}$ imply that $\rctx_{1} \ctxholep{\tilde{\tmfivep}}$ is a \pointed term ---by \reflemmap{properties-of-rigid-terms}{strong-strategy-preserves-rigid}.
				
				\item Let $\tm = \tmfour \esub{\var}{\rctxp{\tilde{\tmfive}}} \toms \tmfour \esub{\var}{\rctxp{\tilde{\tmfivep}}} = \tmtwo$, with $\tilde{\tmfive} \tomo \tilde{\tmfivep}$ and $\tmfour$ is a \pointed term, and $\tm = \openctxp{\tilde{\tmfour}} \esub{\var}{\tmfive} \toms \openctxp{\tilde{\tmfourp}} \esub{\var}{\tmfive} = \tmthree$, with $\tilde{\tmfour} \rtom \tilde{\tmfourp}$. Let $\tmp \defeq \openctxp{\tilde{\tmfourp}} \esub{\var}{\rctxp{\tilde{\tmfivep}}}$, having that
				$$
				\tmtwo = \openctxp{\tilde{\tmfour}} \esub{\var}{\rctxp{\tilde{\tmfivep}}} \toms \tmp \reversetoms \openctxp{\tilde{\tmfourp}} \esub{\var}{\rctxp{\tilde{\tmfive}}} = \tmthree
				$$
				Note that $\tmthree \reversetoms \tmp$ holds because the fact that $\tmfive$ is a \pointed term and that $\tmfive = \rctxp{\tilde{\tmfive}} \toms \rctxp{\tilde{\tmfivep}}$ imply that $\rctxp{\tilde{\tmfivep}}$ is a \pointed term ---by \reflemmap{properties-of-rigid-terms}{strong-strategy-preserves-rigid}.
				
				\item Let $\tm = \tmfour \esub{\var}{\rctxp{\tilde{\tmfive}}} \toms \tmfour \esub{\var}{\rctxp{\tilde{\tmfivep}}} = \tmtwo$, with $\tilde{\tmfive} \tomo \tilde{\tmfivep}$ and $\tmfour$ is a \pointed term, and $\tm = \evsctxp{\tilde{\tmfour}} \esub{\var}{\tmfive} \toms \evsctxp{\tilde{\tmfourp}} \esub{\var}{\tmfive} = \tmthree$, with $\tilde{\tmfour} \tomo \tilde{\tmfourp}$ and $\tmfive$ is a \pointed term. Let $\tmp \defeq \evsctxp{\tilde{\tmfourp}} \esub{\var}{\rctxp{\tilde{\tmfivep}}}$, having that 
				$$
				\tmtwo = \evsctxp{\tilde{\tmfour}} \esub{\var}{\rctxp{\tilde{\tmfivep}}} \toms \tmp \reversetoms \evsctxp{\tilde{\tmfourp}} \esub{\var}{\rctxp{\tilde{\tmfive}}} = \tmthree
				$$ 
				Note that $\tmtwo \toms \tmp$ holds because the fact that $\tmfive$ is a \pointed term and that $\tmfive = \rctxp{\tilde{\tmfive}} \toms \rctxp{\tilde{\tmfivep}}$ imply that $\rctxp{\tilde{\tmfivep}}$ is a \pointed term ---by \reflemmap{properties-of-rigid-terms}{strong-strategy-preserves-rigid}. Moreover, note that $\tmp \reversetoms \tmthree$ holds because the fact that $\tmfour$ is a \pointed term and that $\tmfour = \evsctxp{\tilde{\tmfour}} \toms \evsctxp{\tilde{\tmfourp}}$ imply that $\evsctxp{\tilde{\tmfourp}}$ is a \pointed term ---by \reflemmap{properties-of-rigid-terms}{strong-strategy-preserves-rigid}.
				
				\item Let $\tm = \tmfour \esub{\var}{\rctxp{\tilde{\tmfive}}} \toms \tmfour \esub{\var}{\rctxp{\tilde{\tmfivep}}} = \tmtwo$, with $\tilde{\tmfive} \tomo \tilde{\tmfivep}$ and $\tmfour$ is a \pointed term, and $\tm = \rctxp{\tilde{\tmfour}} \esub{\var}{\tmfive} \toms \rctxp{\tilde{\tmfourp}} \esub{\var}{\tmfive} = \tmthree$, with $\tilde{\tmfour} \tomo \tilde{\tmfourp}$ and $\tmfive$ is a \pointed term. Let $\tmp \defeq \rctxp{\tilde{\tmfourp}} \esub{\var}{\rctxp{\tilde{\tmfivep}}}$, having that 
				$$
				\tmtwo = \rctxp{\tilde{\tmfour}} \esub{\var}{\rctxp{\tilde{\tmfivep}}} \toms \tmp \reversetoms \rctxp{\tilde{\tmfourp}} \esub{\var}{\rctxp{\tilde{\tmfive}}} = \tmthree
				$$
				Note that $\tmtwo \toms \tmp$ holds because the fact that $\tmfive$ is a \pointed term and that $\tmfive = \rctxp{\tilde{\tmfive}} \toms \rctxp{\tilde{\tmfivep}}$ imply that $\rctxp{\tilde{\tmfivep}}$ is a \pointed term ---by \reflemmap{properties-of-rigid-terms}{strong-strategy-preserves-rigid}. Moreover, note that $\tmp \reversetoms \tmthree$ holds because the fact that $\tmfour$ is a \pointed term and that $\tmfour = \rctxp{\tilde{\tmfour}} \toms \rctxp{\tilde{\tmfourp}}$ imply that $\rctxp{\tilde{\tmfourp}}$ is a \pointed term ---by \reflemmap{properties-of-rigid-terms}{strong-strategy-preserves-rigid}.
				
			\end{itemize}
			
			\item \emph{$\mathsf{ES}$ right for both $\tm \toms \tmtwo$ and $\tm \toms \tmthree$}; \ie, $\tm = \tmfour \esub{\var}{\tmfive} \toms \tmfour \esub{\var}{\tmfivep} = \tmtwo$ and $\tm = \tmfour \esub{\var}{\tmfive} \toms \tmfour \esub{\var}{\tmfivepp} = \tmthree$. By \ih there exists a term $\tmsix$ such that $\tmfivep \toms \tmsix \reversetoms \tmfivepp$. The analysis of the sub-cases, depending on the open/strong/rigid type contexts involved in $\tm \toms \tmtwo$ and $\tm \toms \tmthree$, follows the same schema as for the previous item, all showing that
			$$
			\tmtwo = \tmfour \esub{\var}{\tmfivep} \toms \tmfour \esub{\var}{\tmsix} \reversetoms \tmfour \esub{\var}{\tmfivepp} = \tmthree
			$$
			
			\item \emph{$\mathsf{ES}$ left for both $\tm \toms \tmtwo$ and $\tm \toms \tmthree$}; \ie, $\tm = \tmfour \esub{\var}{\tmfive} \toms \tmfourp \esub{\var}{\tmfive} = \tmtwo$ and $\tm = \tmfourpp \esub{\var}{\tmfive}$. By \ih there exists a term $\tmsix$ such that $\tmfourp \toms \tmsix \reversetoms \tmfourpp$. The analysis of the sub-cases, depending on the open/strong/rigid type contexts involved in $\tm \toms \tmtwo$ and $\tm \toms \tmthree$, follows the same schema as for the previous item, all showing that
			$$
			\tmtwo = \tmfourp \esub{\var}{\tmfive} \toms \tmsix \esub{\var}{\tmfive} \reversetoms \tmfourpp \esub{\var}{\tmfive} = \tmthree
			$$
			
		\end{itemize}

		The proof that $\toes$ is diamond (\ie, if $\tmtwo \reversetoes \tm \toes \tmthree$ with $\tmtwo \neq \tmthree$ then there exists $\tmp \in \Lambda_\vsub$ such that $\tmtwo \toes \tmp \reversetoes \tmthree$) follows the same schema as for $\toms$.
		
		\item We show that $\toes$ and $\toms$ strongly commute; \ie, if $\tmtwo \reversetoes \tm \toms \tmthree$, then $\tmtwo \neq \tmthree$ and there is $\tmp \in \Lambda_\vsub$ such that $\tmtwo \toms \tmp \reversetoes \tmthree$. The proof is by structural induction on $\tm$, doing case analysis on $\tm \reversetoes \tmtwo$ and $\tm \toms \tmthree$:
		\begin{itemize}
			\item \emph{Under $\lambda$-abstraction for both $\tm \reversetoes \tmtwo$ and $\tm \toms \tmthree$}; \ie, $\tm = \la{\var}{\tmfive} \reversetoes \la{\var}{\tmfour} = \tmtwo$ and $\tm = \la{\var}{\tmfour} \toms \la{\var}{\tmsix} = \tmthree$, with $\tmfive \reversetoeo \tmfour \tomo \tmsix$. By \reflemmap{basic-value-substitution}{tom-toe-commute-open}, there exists $\tmfourp$ such that $\tmfive \toms \tmfourp \reversetoes \tmsix$, and so $\tmtwo = \la{\var}{\tmfive} \tomo \la{\var}{\tmfourp} \reversetoeo \la{\var}{\tmsix}$.
			
			\item \emph{Application right for $\tmtwo \reversetoes \tm$ and application left for $\tm \toms \tmthree$}; \ie, $\tmtwo = \tmfour \tmfivep \reversetoes \tmfour \tmfive = \tm$ and $\tm = \tmfour \tmfive \toms \tmfourp \tmfive = \tmthree$. There are several sub-cases to this:
			\begin{itemize}
				\item Let $\tmtwo = \tmfour \openctxp{\tilde{\tmfivep}} \reversetoes \tmfour \openctxp{\tilde{\tmfive}} = \tm$, with $\tilde{\tmfivep} \reversetoeo \tilde{\tmfive}$, and $\tm = \rctxp{\tilde{\tmfour}}  \tmfive \toms \rctxp{\tilde{\tmfourp}} \tmfive = \tmthree$, with $\tilde{\tmfour} \tomo \tilde{\tmfourp}$. Let $\tmp = \rctxp{\tilde{\tmfourp}} \openctxp{\tilde{\tmfivep}}$, having that 
				$$
				\tmtwo = \rctxp{\tilde{\tmfour}} \openctxp{\tilde{\tmfivep}} \toms \tmp \reversetoes \rctxp{\tilde{\tmfourp}} \openctxp{\tilde{\tmfive}} = \tmthree
				$$
				
				\item Let $\tmtwo = \tmfour \openctx_{1} \ctxholep{\tilde{\tmfivep}} \reversetoes \tmfour \openctx_{1} \ctxholep{\tilde{\tmfive}} = \tm$, and $\tm = \openctx_{2} \ctxholep{\tilde{\tmfour}} \tmfive \toms \openctx_{2} \ctxholep{\tilde{\tmfourp}} \tmfive = \tmthree$, with $\tilde{\tmfour} \toms \tilde{\tmfourp}$. Then the statement holds by \reflemmap{basic-value-substitution}{tom-toe-commute-open}
				
				\item Let $\tmtwo = \tmfour \evsctxp{\tilde{\tmfivep}} \reversetoes \tmfour \evsctxp{\tilde{\tmfive}} = \tm$, with $\tmfour$ a \pointed term and $\tilde{\tmfivep} \reversetoeo \tilde{\tmfive}$, and $\tm = \rctxp{\tilde{\tmfour}} \tmfive \toms \rctxp{\tilde{\tmfourp}} \tmfive = \tmthree$, with $\tilde{\tmfour} \tomo \tilde{\tmfourp}$. Let $\tmp = \rctxp{\tilde{\tmfourp}} \evsctxp{\tilde{\tmfivep}}$, having that
				$$
				\tmtwo = \rctxp{\tilde{\tmfour}} \evsctxp{\tilde{\tmfivep}} \toms \tmp \reversetoes \rctxp{\tilde{\tmfourp}} \evsctxp{\tilde{\tmfive}} = \tmthree
				$$
				Note that $\tmp \reversetoes \tmthree$ holds because $\rctxp{\tilde{\tmfourp}}$ is a \pointed term ---by \reflemmap{properties-of-rigid-terms}{strong-strategy-preserves-rigid}.
				
				\item Let $\tmtwo = \tmfour \evsctxp{\tilde{\tmfivep}} \reversetoes \tmfour \evsctxp{\tilde{\tmfive}} = \tm$, with $\tmfour$ a \pointed term and $\tilde{\tmfivep} \reversetoeo \tilde{\tmfive}$, and $\tm = \openctxp{\tilde{\tmfour}} \tmfive \toms \openctxp{\tilde{\tmfourp}} \tmfive = \tmthree$, with $\tilde{\tmfour} \rtom \tilde{\tmfourp}$. Let $\tmp = \openctxp{\tilde{\tmfourp}} \evsctxp{\tilde{\tmfivep}}$, having that
				$$
				\tmtwo = \openctxp{\tilde{\tmfour}} \evsctxp{\tilde{\tmfivep}} \toms \tmp \reversetoes \openctxp{\tilde{\tmfourp}} \evsctxp{\tilde{\tmfive}} = \tmthree
				$$
				Note that $\tmp \reversetoes \tmthree$ holds because $\openctxp{\tilde{\tmfourp}}$ is \pointed ---by \reflemmap{properties-of-rigid-terms}{open-strategy-preserves-rigid}.
				
			\end{itemize}
			
			\item \emph{Application left for $\tmtwo \reversetoes \tm$ and application right for $\tm \toms \tmthree$}; \ie, $\tmtwo = \tmfourp \tmfive \reversetoes \tmfour \tmfive = \tm$ and $\tm = \tmfour \tmfive \toms \tmfour \tmfivep = \tmthree$. There are several sub-cases to this:
			\begin{itemize}
				\item Let $\tmtwo = \rctxp{\tilde{\tmfourp}} \tmfive \reversetoes \rctxp{\tilde{\tmfour}} \tmfive = \tm$, with $\tilde{\tmfourp} \reversetoeo \tilde{\tmfour}$, and $\tm = \tmfour \openctxp{\tilde{\tmfive}} \toms \tmfour \openctxp{\tilde{\tmfivep}} = \tmthree$, with $\tilde{\tmfive} \rtom \tilde{\tmfivep}$. Let $\tmp = \rctxp{\tilde{\tmfourp}} \openctxp{\tilde{\tmfivep}}$, having that 
				$$
				\tmtwo = \rctxp{\tilde{\tmfourp}} \openctxp{\tilde{\tmfive}} \toms \tmp \reversetoes \rctxp{\tilde{\tmfour}} \openctxp{\tilde{\tmfivep}} = \tmthree
				$$
				
				\item Let $\tmtwo = \openctx_{1} \ctxholep{\tilde{\tmfourp}} \tmfive \reversetoes \openctx_{1} \ctxholep{\tilde{\tmfour}} \tmfive = \tm$, with $\tilde{\tmfourp} \reversetoeo \tilde{\tmfour}$, and $\tm = \tmfour \openctx_{2} \ctxholep{\tilde{\tmfive}} \toms \tmfour \openctx_{2} \ctxholep{\tilde{\tmfivep}} = \tmthree$, with $\tilde{\tmfive} \rtom \tilde{\tmfivep}$. Then the statement holds by \reflemmap{basic-value-substitution}{tom-toe-commute-open}
				
				\item Let $\tmtwo = \rctxp{\tilde{\tmfourp}} \tmfive \reversetoes \rctxp{\tilde{\tmfour}} \tmfive = \tm$, with $\tilde{\tmfourp} \reversetoeo \tilde{\tmfour}$, and $\tm = \tmfour \evsctxp{\tilde{\tmfive}} \toms \tmfour \evsctxp{\tilde{\tmfivep}} = \tmthree$, with $\tilde{\tmfive} \tomo \tilde{\tmfivep}$ and $\tmfour$ a \pointed term. Let $\tmp = \rctxp{\tilde{\tmfourp}} \evsctxp{\tilde{\tmfivep}}$, having that
				$$
				\tmtwo = \rctxp{\tilde{\tmfourp}} \evsctxp{\tilde{\tmfive}} \toms \tmp \reversetoes \rctxp{\tilde{\tmfour}} \evsctxp{\tilde{\tmfivep}} = \tmthree
				$$
				Note that $\tmtwo \toms \tmp$ holds because $\rctxp{\tilde{\tmfourp}}$ is a \pointed term ---by \reflemmap{properties-of-rigid-terms}{strong-strategy-preserves-rigid}.
				
				\item Let $\tmtwo = \openctxp{\tilde{\tmfourp}} \tmfive \reversetoes \openctxp{\tilde{\tmfour}} \tmfive = \tm$, with $\tilde{\tmfourp} \reversetoeo \tilde{\tmfour}$, and $\tm = \tmfour \evsctxp{\tilde{\tmfive}} \toms \tmfour \evsctxp{\tilde{\tmfivep}} = \tmthree$, with $\tmfour$ a \pointed term and $\tilde{\tmfive} \tomo \tilde{\tmfivep}$. Let $\tmp = \openctxp{\tilde{\tmfourp}} \evsctxp{\tilde{\tmfivep}}$, having that
				$$
				\tmtwo = \openctxp{\tilde{\tmfourp}} \evsctxp{\tilde{\tmfive}} \toms \tmp \reversetoes \openctxp{\tilde{\tmfour}} \evsctxp{\tilde{\tmfivep}} = \tmthree
				$$
				Note that $\tmtwo \toms \tmp$ holds because $\openctxp{\tilde{\tmfourp}}$ is \pointed ---by \reflemmap{properties-of-rigid-terms}{open-strategy-preserves-rigid}.
				
			\end{itemize}
			
			\item \emph{Application right for both $\tmtwo \reversetoes \tm$ and $\tm \toms \tmthree$}; \ie, $\tmtwo = \tmfour \tmfivep \reversetoes \tmfour \tmfive = \tm$ and $\tm = \tmfour \tmfive \toms \tmfour \tmfivepp = \tmthree$. By \ih, there exists a term $\tmsix$ such that $\tmfivep \toms \tmsix \reversetoes \tmfivepp$. The analysis of the sub-cases, depending on the open/strong/rigid type contexts involved in $\tmtwo \reversetoes \tm$ and $\tm \toms \tmthree$ follows the same schema as for the previous item, all showing that
			$$
			\tmtwo = \tmfour \tmfivep \toms \tmfour \tmsix \reversetoes \tmfour \tmfivepp = \tmthree
			$$
			
			\item \emph{Application left for both $\tmtwo \reversetoes \tm$ and $\tm \toms \tmthree$}; \ie, $\tmtwo = \tmfourp \tmfive \reversetoes \tmfour \tmfive = \tm$ and $\tm = \tmfour \tmfive \toms \tmfourpp \tmfive = \tmthree$. By \ih, there exists a term $\tmsix$ such that $\tmfourp \toms \tmsix \reversetoes \tmfourpp$. The analysis of the sub-cases, depending on the open/strong/rigid type contexts involved in $\tmtwo \reversetoes \tm$ and $\tm \toms \tmthree$ follows the same schema as for the previous item, all showing that
			$$
			\tmtwo = \tmfourp \tmfive \toms \tmsix \tmfive \reversetoes \tmfourpp \tmfive = \tmthree
			$$

\item \emph{$\mathsf{ES}$ right for $\tmtwo \reversetoes \tm$ and $\mathsf{ES}$ left for $\tm \toms \tmthree$}; \ie, $\tmtwo = \tmfour \esub{\var}{\tmfivep} \reversetoes \tmfour \esub{\var}{\tmfive} = \tm$ and $\tm = \tmfour \esub{\var}{\tmfive} \toms \tmfourp \esub{\var}{\tmfive} = \tmthree$. There are several sub-cases to this:
\begin{itemize}
	\item Let $\tmtwo = \tmfour \esub{\var}{\openctxp{\tilde{\tmfivep}}} \reversetoes \tmfour \esub{\var}{\openctxp{\tilde{\tmfive}}} = \tm$, with $\tilde{\tmfivep} \reversertoe \tilde{\tmfive}$, and $\tm = \openctxp{\tilde{\tmfour}} \esub{\var}{\tmfive} \toms \openctxp{\tilde{\tmfourp}} \esub{\var}{\tmfive} = \tmthree$, with $\tilde{\tmfour} \rtom \tilde{\tmfourp}$. Then the statement holds by \reflemmap{basic-value-substitution}{tom-toe-commute-open}.
	
	\item Let $\tmtwo = \tmfour \esub{\var}{\openctxp{\tilde{\tmfivep}}} \reversetoes \tmfour \esub{\var}{\openctxp{\tilde{\tmfive}}} = \tm$, with $\tilde{\tmfivep} \reversertoe \tilde{\tmfive}$, and $\tm = \evsctxp{\tilde{\tmfour}} \esub{\var}{\tmfive} \toms \evsctxp{\tilde{\tmfourp}} \esub{\var}{\tmfive} = \tmthree$, with $\tilde{\tmfour} \tomo \tilde{\tmfourp}$ and $\tmfive$ is a \pointed term. Let $\tmp \defeq \evsctxp{\tilde{\tmfourp}} \esub{\var}{\openctxp{\tilde{\tmfivep}}}$, having that 
	$$
	\tmtwo = \evsctxp{\tilde{\tmfour}} \esub{\var}{\openctxp{\tilde{\tmfivep}}} \toms \tmp \toms \evsctxp{\tilde{\tmfourp}} \esub{\var}{\openctxp{\tilde{\tmfive}}} = \tmthree
	$$
	Note that $\tmtwo \toms \tmp$ holds because $\openctxp{\tilde{\tmfivep}}$ is a \pointed term ---by \reflemmap{properties-of-rigid-terms}{open-strategy-preserves-rigid}
	
	\item Let $\tmtwo = \tmfour \esub{\var}{\openctxp{\tilde{\tmfivep}}} \reversetoes \tmfour \esub{\var}{\openctxp{\tilde{\tmfive}}} = \tm$, with $\tilde{\tmfivep} \reversertoe \tilde{\tmfive}$, and $\tm = \rctxp{\tilde{\tmfour}} \esub{\var}{\tmfive} \toms \rctxp{\tilde{\tmfourp}} \esub{\var}{\tmfive} = \tmthree$, with $\tilde{\tmfour} \tomo \tilde{\tmfourp}$ and $\tmfive$ is a \pointed term. Let $\tmp \defeq \rctxp{\tilde{\tmfourp}} \esub{\var}{\openctxp{\tilde{\tmfivep}}}$, having that 
	$$
	\tmtwo = \rctxp{\tilde{\tmfour}} \esub{\var}{\openctxp{\tilde{\tmfivep}}} \toms \tmp \toms \rctxp{\tilde{\tmfourp}} \esub{\var}{\openctxp{\tilde{\tmfive}}} = \tmthree
	$$
	Note that $\tmtwo \toms \tmp$ holds because $\openctxp{\tilde{\tmfivep}}$ is a \pointed term ---by \reflemmap{properties-of-rigid-terms}{open-strategy-preserves-rigid}
	
	\item Let $\tmtwo = \tmfour \esub{\var}{\rctxp{\tilde{\tmfivep}}} \reversetoes \tmfour \esub{\var}{\rctxp{\tilde{\tmfive}}} = \tm$, with $\tmfivep \reversetoeo \tmfive$, and $\tm = \openctxp{\tilde{\tmfour}} \esub{\var}{\tmfive} \toms \openctxp{\tilde{\tmfourp}} \esub{\var}{\tmfive} = \tmthree$, with $\tilde{\tmfive} \toms \tilde{\tmfivep}$. Let $\tmp \defeq \openctxp{\tilde{\tmfourp}} \esub{\var}{\rctxp{\tmfivep}}$, having that
	$$
	\tmtwo = \openctxp{\tilde{\tmfour}} \esub{\var}{\rctxp{\tilde{\tmfivep}}} \toms \tmp \reversetoes \openctxp{\tilde{\tmfourp}} \esub{\var}{\rctxp{\tilde{\tmfive}}} = \tmthree
	$$
	
	\item Let $\tmtwo = \tmfour \esub{\var}{\rctxp{\tilde{\tmfivep}}} \reversetoes \tmfour \esub{\var}{\rctxp{\tilde{\tmfive}}} = \tm$, with $\tilde{\tmfivep} \reversetoeo \tilde{\tmfive}$, and $\tm = \evsctxp{\tilde{\tmfour}} \esub{\var}{\tmfive} \toms \evsctxp{\tilde{\tmfourp}} \esub{\var}{\tmfive} = \tmthree$, with $\tilde{\tmfour} \tomo \tilde{\tmfourp}$ and $\tmfive$ is a \pointed term. Let $\tmp \defeq \evsctxp{\tilde{\tmfourp}} \esub{\var}{\rctxp{\tilde{\tmfivep}}}$, having that 
	$$
	\tmtwo = \evsctxp{\tilde{\tmfour}} \esub{\var}{\rctxp{\tilde{\tmfivep}}} \toms \tmp \reversetoes \evsctxp{\tilde{\tmfourp}} \esub{\var}{\rctxp{\tilde{\tmfive}}} = \tmthree
	$$
	Note that $\tmtwo \toms \tmp$ holds because $\rctxp{\tilde{\tmfivep}}$ is a \pointed term ---by \reflemmap{properties-of-rigid-terms}{strong-strategy-preserves-rigid}.
	
	\item Let $\tmtwo = \tmfour \esub{\var}{\rctx_{1} \ctxholep{\tilde{\tmfivep}}} \reversetoes \tmfour \esub{\var}{\rctx_{1} \ctxholep{\tilde{\tmfive}}} = \tm$, with $\tilde{\tmfivep} \reversetoeo \tilde{\tmfive}$, and $\tm = \rctx_{2} \ctxholep{\tilde{\tmfour}} \esub{\var}{\tmfive} \toms \rctx_{2} \ctxholep{\tilde{\tmfourp}} \esub{\var}{\tmfive}$, with $\tilde{\tmfour} \tomo \tilde{\tmfourp}$ and $\tmfive$ is a \pointed term. Let 
	
		$$
			\tmp \defeq \rctx_{2} \ctxholep{\tilde{\tmfourp}} \esub{\var}{\rctx_{1} \ctxholep{\tilde{\tmfivep}}}
		$$
	
	having that
	
	$
	\begin{array}{rcl}
		\tmtwo 
		& = & \rctx_{2} \ctxholep{\tilde{\tmfour}} \esub{\var}{\rctx_{1} \ctxholep{\tilde{\tmfivep}}} \\
		& \toms & \tmp \reversetoes \rctx_{2} \ctxholep{\tilde{\tmfourp}} \esub{\var}{\rctx_{1} \ctxholep{\tilde{\tmfive}}} \\
	\end{array}
	$
	
	Note that $\tmtwo \toms \tmp$ holds because $\rctx_{1} \ctxholep{\tilde{\tmfivep}}$ is a \pointed term ---by \reflemmap{properties-of-rigid-terms}{strong-strategy-preserves-rigid}.
	
	\item Let $\tmtwo = \tmfour \esub{\var}{\rctxp{\tilde{\tmfivep}}} \reversetoes \tmfour \esub{\var}{\rctxp{\tilde{\tmfive}}} = \tm$, with $\tilde{\tmfivep} \reversetoeo \tilde{\tmfive}$ and $\tmfour$ is a \pointed term, and $\tm = \openctxp{\tilde{\tmfour}} \esub{\var}{\tmfive} \toms \openctxp{\tilde{\tmfourp}} \esub{\var}{\tmfive} = \tmthree$, with $\tilde{\tmfour} \rtom \tilde{\tmfourp}$. Let $\tmp \defeq \openctxp{\tilde{\tmfourp}} \esub{\var}{\rctxp{\tilde{\tmfivep}}}$, having that 
	$$
	\tmtwo = \openctxp{\tilde{\tmfour}} \esub{\var}{\rctxp{\tilde{\tmfivep}}} \toms \tmp \reversetoes \openctxp{\tilde{\tmfourp}} \esub{\var}{\rctxp{\tilde{\tmfive}}} = \tmthree
	$$
	Note that $\tmp \reversetoes \tmthree$ holds because $\openctxp{\tilde{\tmfourp}}$ is a \pointed term ---by \reflemmap{properties-of-rigid-terms}{open-strategy-preserves-rigid}.
	
	\item Let $\tm = \tmfour \esub{\var}{\rctxp{\tilde{\tmfivep}}} \reversetoes \tmfour \esub{\var}{\rctxp{\tilde{\tmfive}}} = \tm$, with $\tilde{\tmfivep} \reversetoeo \tilde{\tmfive}$ and $\tmfour$ is a \pointed term, and $\tm = \evsctxp{\tilde{\tmfour}} \esub{\var}{\tmfive} \toms \evsctxp{\tilde{\tmfourp}} \esub{\var}{\tmfive}$, with $\tilde{\tmfour} \tomo \tilde{\tmfourp}$ and $\tmfive$ is a \pointed term. Let 
		$$
			\tmp \defeq \evsctxp{\tilde{\tmfourp}} \esub{\var}{\rctxp{\tilde{\tmfivep}}}
		$$ 
	having that 
		$$
			\tmtwo = \evsctxp{\tilde{\tmfour}} \esub{\var}{\rctxp{\tilde{\tmfivep}}} \toms \tmp \reversetoes \evsctxp{\tilde{\tmfourp}} \esub{\var}{\rctxp{\tilde{\tmfive}}} = \tmthree
		$$
	Note that $\tmtwo \toms \tmp$ holds because $\rctxp{\tilde{\tmfivep}}$ is a \pointed term ---by \reflemmap{properties-of-rigid-terms}{strong-strategy-preserves-rigid}---, and that $\tm \reversetoes \tmthree$ holds because $\evsctxp{\tilde{\tmfourp}}$ is a \pointed term ---by \reflemmap{properties-of-rigid-terms}{strong-strategy-preserves-rigid}.			
	
	\item Let $\tmtwo = \tmfour \esub{\var}{\rctx_{1} \ctxholep{\tilde{\tmfivep}}} \reversetoes \tmfour \esub{\var}{\rctx_{1} \ctxholep{\tilde{\tmfive}}} = \tm$, with $\tilde{\tmfivep} \reversetoeo \tilde{\tmfive}$ and $\tmfour$ is a \pointed term, and $\tm = \rctx_{2} \ctxholep{\tilde{\tmfour}} \esub{\var}{\tmfive} \toms \rctx_{2} \ctxholep{\tilde{\tmfourp}} \esub{\var}{\tmfive} = \tmthree$, with $\tilde{\tmfour} \tomo \tilde{\tmfourp}$ and $\tmfive$ is a \pointed term. Let $\tmp \defeq \rctx_{2} \ctxholep{\tilde{\tmfourp}} \esub{\var}{\rctx_{1} \ctxholep{\tilde{\tmfivep}}}$, having that 
	$$
	\tmtwo \rctx_{2} \ctxholep{\tilde{\tmfour}} \esub{\var}{\rctx_{1} \ctxholep{\tilde{\tmfivep}}} \toms \tmp \reversetoes \rctx_{2} \ctxholep{\tilde{\tmfourp}} \esub{\var}{\rctx_{1} \ctxholep{\tilde{\tmfive}}} = \tmthree
	$$
	Note that $\tmtwo \toms \tmp$ holds because $\rctx_{1} \ctxholep{\tilde{\tmfivep}}$ is a \pointed term ---by \reflemmap{properties-of-rigid-terms}{strong-strategy-preserves-rigid}---, and that $\tmp \reversetoes \tmthree$ because $\rctx_{2} \ctxholep{\tilde{\tmfourp}}$ is a \pointed term ---by \reflemmap{properties-of-rigid-terms}{strong-strategy-preserves-rigid}.
	
\end{itemize}

\item \emph{$\mathsf{ES}$ left for $\tmtwo \reversetoes \tm$ and $\mathsf{ES}$ right for $\tm \toms \tmthree$}; \ie, $\tmtwo = \tmfourp \esub{\var}{\tmfive} \reversetoes \tmfour \esub{\var}{\tmfive} = \tm$ and $\tm = \tmfour \esub{\var}{\tmfive} \toms \tmfour \esub{\var}{\tmfivep} = \tmthree$. There are several sub-cases to this, all of which follow the same kind of reasoning as for the case \emph{$\mathsf{ES}$ right for $\tmtwo \reversetoes \tm$ and $\mathsf{ES}$ left for $\tm \toms \tmthree$}. Therefore, we shall leave this case for the reader.

\item \emph{$\mathsf{ES}$ right for both $\tmtwo \reversetoes \tm$ and $\tm \toms \tmthree$}; \ie, 
		$$
			\tmtwo = \tmfour \esub{\var}{\tmfivep} \reversetoes \tmfour \esub{\var}{\tmfive} = \tm
		$$
	and 
		$$
			\tm = \tmfour \esub{\var}{\tmfive} \toms \tmfour \esub{\var}{\tmfivepp} = \tmthree
		$$
	
	By \ih there exists a term $\tmsix$ such that $\tmfivep \toms \tmfive \reversetoes \tmfivepp$. The analysis of the sub-cases, depending on the open/strong/rigid type contexts involved in $\tm \toms \tmtwo$ and $\tm \toms \tmthree$, follows the same schema as for the previous item, all showing that
$$
\tmtwo = \tmfour \esub{\var}{\tmfivep} \toms \tmfour \esub{\var}{\tmsix} \reversetoes \tmfour \esub{\var}{\tmfivepp} = \tmthree
$$

\item \emph{$\mathsf{ES}$ left for both $\tmtwo \reversetoes \tm$ and $\tm \toms \tmthree$}; \ie, 
		$$
			\tmtwo = \tmfourp \esub{\var}{\tmfive} \reversetoes \tmfour \esub{\var}{\tmfive} = \tm
		$$
	and 
		$$
			\tm = \tmfour \esub{\var}{\tmfive} \toms \tmfourpp \esub{\var}{\tmfive} = \tmthree
		$$
	By \ih there exists a term $\tmsix$ such that $\tmfivep \toms \tmfive \reversetoes \tmfivepp$. The analysis of the sub-cases, depending on the open / strong / rigid type contexts involved in $\tm \toms \tmtwo$ and $\tm \toms \tmthree$, follows the same schema as for the previous item, all showing that
$$
\tmtwo = \tmfourp \esub{\var}{\tmfive} \toms \tmsix \esub{\var}{\tmfive} \reversetoes \tmfourpp \esub{\var}{\tmfive} = \tmthree.
\qedhere
$$
		\end{itemize}
	\end{enumerate}
\end{proof}

\begin{proposition}[Properties of $\tovsubs$]
	\label{propappendix:vsc-diamond}
	\NoteState{prop:vsc-diamond}
	\label{lappendix:fullness}
	Let $\tm$ be a VSC term.
	\begin{enumerate}
		\item 
		$\tovsubs$ is diamond; $\toms$ and $\toes$ strongly commute.
		\item \emph{Fullness}: 
		$\tm$ is $\esssym$-normal if and only if $\tm$ is $\vsub$-normal.
	\end{enumerate}
\end{proposition}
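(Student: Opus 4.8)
For part~1 there is essentially nothing new to do beyond what is already available: \Cref{l:basic-value-substitution-external} establishes that $\toms$ and $\toes$ are each diamond and that they strongly commute, so the strong commutation claim is immediate, and diamond of $\tovsubs = {\toms}\cup{\toes}$ follows from the general rewriting fact recalled in \Cref{sect:preliminaries} (strong commutation together with diamond of the two components implies diamond of their union). Thus part~1 reduces to the diagram chase carried out in \Cref{l:basic-value-substitution-external}.

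For part~2, the right-to-left implication is immediate: an external $\esssym$-step is an external context around a $\wsym$-step, hence a \full context around a root redex, i.e.\ a $\vsub$-step, so $\tovsubs\subseteq\tovsub$ and every $\vsub$-normal term is $\esssym$-normal. The left-to-right implication---every $\esssym$-normal term is $\vsub$-normal---I would prove contrapositively: if $\tm$ is not $\vsub$-normal then $\tm$ is not $\esssym$-normal, by structural induction on $\tm$. Two facts drive the argument: first, $\tovsubo\subseteq\tovsubs$ (since the empty context is external), so any $\osym$-redex of $\tm$ is already an external redex; second, the properties of \pointed terms (\Cref{l:properties-of-rigid-terms}), namely that an $\osym$-normal non-answer is \pointed (point~4), and that a \pointed term that is not $\esssym$-normal can be written as a \emph{rigid} context around a subterm performing an open step (point~5).

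The variable case is vacuous. If $\tm=\la\var{\tmtwo}$ is not $\vsub$-normal then $\tmtwo$ is not $\vsub$-normal, so the IH gives an external redex in $\tmtwo$, which extends to $\tm$ via the external context $\la\var{\extctx}$. If $\tm=\tm_1\tm_2$: when $\tm_1$ is an answer, $\tm$ is a root multiplicative redex, which is open hence external; otherwise, if $\tm$ is not $\osym$-normal we conclude via $\tovsubo\subseteq\tovsubs$, and if $\tm$ is $\osym$-normal then $\tm_1$ and $\tm_2$ are both $\osym$-normal, $\tm_1$ is \pointed, and---since $\tm$ is not $\vsub$-normal and its root is not a redex---at least one of $\tm_1,\tm_2$ is not $\vsub$-normal; if $\tm_1$ is not $\vsub$-normal, the IH plus point~5 write $\tm_1$ as a rigid context around an open step, and prefixing that rigid context with $\tm_2$ keeps it rigid (hence external); if $\tm_2$ is not $\vsub$-normal, the IH gives an external context $\extctx$ for $\tm_2$, and $\tm_1\,\extctx$ is rigid because $\tm_1$ is \pointed. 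The explicit-substitution case $\tm=\tm_1\esub\var{\tm_2}$ is symmetric, with $\tm_2$ now playing the role of $\tm_1$: if $\tm_2$ is an answer we get a root exponential redex; otherwise if $\tm$ is not $\osym$-normal use $\tovsubo\subseteq\tovsubs$; and if $\tm$ is $\osym$-normal then $\tm_1,\tm_2$ are $\osym$-normal, $\tm_2$ is \pointed, and the redex sits in $\tm_1$ or $\tm_2$---if in $\tm_1$, the IH gives an external context $\extctx$ for $\tm_1$ and $\extctx\esub\var{\tm_2}$ is external since $\tm_2$ is \pointed; if in $\tm_2$, point~5 gives a rigid context $\ictx$ for $\tm_2$ and $\tm_1\esub\var{\ictx}$ is external.

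The main obstacle is threading the context grammars through the application and explicit-substitution cases. The external-context grammar descends to the left of an application, and into the content of an explicit substitution, only through a context that is already \emph{rigid}, not merely external; this is exactly why the plain induction hypothesis does not suffice there and one must invoke point~5 of \Cref{l:properties-of-rigid-terms} to replace a "\pointed but $\esssym$-reducible" subterm by a genuine rigid context holding an \emph{open} step. Dually, one must check that the relevant subterm (the argument, or the content of the explicit substitution) is actually \pointed---this is where pointedness of non-answer open normal forms (point~4) enters---and one must keep the clean escape hatch $\tovsubo\subseteq\tovsubs$ available for the subcase in which the compound term is not even $\osym$-normal, where the redex may lie anywhere and no structural analysis of its position is needed.
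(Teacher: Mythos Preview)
Your proof of part~1 is exactly the paper's: invoke \Cref{l:basic-value-substitution-external} and close by the general fact about diamond and strong commutation.

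For part~2 your argument is correct, but it takes a genuinely different route from the paper's. The paper proves the left-to-right implication \emph{directly}: assuming $\tm$ is $\esssym$-normal, it shows $\tm$ is a \full fireball (and hence $\vsub$-normal by \Cref{prop:properties-full-reduction}). To make the induction go through it strengthens the hypothesis into two mutually recursive claims: (a) every $\esssym$-normal term is a \full fireball, and (b) every $\esssym$-normal term that is not an answer is a \full inert term. Clause~(b) is exactly what is needed in the application and ES cases to know that the relevant subterm is \full inert, so that the compound term fits the grammar of \full fireballs/inerts. No appeal to \Cref{l:properties-of-rigid-terms} is made.

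You instead argue the contrapositive as a progress lemma, and you outsource the delicate part---pushing an external redex through the left of an application or into an ES---to points~4 and~5 of \Cref{l:properties-of-rigid-terms}. This is perfectly sound and arguably more modular, since point~5 already packages the observation that a reducible rigid term decomposes as a \emph{rigid} (not merely external) context around an \emph{open} step. The paper's approach, by contrast, is self-contained and yields as a bonus the explicit syntactic shape of $\esssym$-normal forms, which is reused later (e.g.\ in the typability lemmas for normal forms). Your approach reuses existing infrastructure; theirs builds the characterization it will need anyway.
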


\begin{proof}
	\begin{enumerate}
		\item Strong commutation of $\toms$ and $\toes$ is proven in \reflemmap{basic-value-substitution-external}{tom-toe-commute-strong}.
		Diamond of $\tovsubs$ follows immediately from that, from diamond for $\toms$ and $\toes$ (\reflemmap{basic-value-substitution-external}{tom-toe-diamond-strong}), 
		and from Hindley-Rosen lemma (\cite[Prop. 3.3.5]{Barendregt84}).

	\item If $\tm$ is $\vsub$-normal then it is $\esssym$-normal because $\tovsubs \, \subseteq \, \tovsub$.
	
	Conversely, suppose that $\tm$ is $\esssym$-normal. By \Cref{prop:properties-full-reduction}.\ref{p:properties-full-reduction-harmony}, it is enough to show that $\tm$ is a \full fireball.
	To have the right \ih, we prove simultaneously, by induction on $\tm$, the following stronger statements (we recall that all \full inert terms are \full fireballs): 
\begin{enumerate}
	\item \emph{Fireball property}: If $\tm$ is $\esssym$-normal, then $\tm$ is a \full fireball.
	
	\item \emph{Non-value property}: If $\tm$ is $\esssym$-normal and not an \valES, then $\tm$ is a \full inert term.
\end{enumerate}

Cases:

\begin{itemize}
	\item \emph{Variable}, \ie, $\tm = \var$: both properties trivially hold, since $\tm$ is a \full inert term and so a \full fireball.
	
	\item \emph{Abstraction}, \ie, $\tm = \la{\var}{\tmtwo}$:
	\begin{enumerate}
		\item \emph{Non-value property}: vacuously true, as $\tm$ is an abstraction and hence an \valES.			
		\item \emph{Fireball property}: Since $\tm$ is $\esssym$-normal, so is $\tmtwo$.
		By \ih applied to $\tmtwo$ (fireball property), $\tmtwo$ is a \full fireball and hence so is $\tm$ (as a \full value).
	\end{enumerate}
	
	\item \emph{Application}; \ie, $\tm = \tm_{1} \tm_{2}$ (which is not an \valES): 
	\begin{enumerate}
		\item \emph{Non-value property}: Since $\tm$ is $\esssym$-normal, so are $\tm_{1}$ and $\tm_{2}$.  
		Moreover, $\tm_{1}$ is not an \valES (otherwise $\tm$ would be a $\toms$-redex).
		By \ih applied to $\tm_{1}$ (non-value property) and to $\tm_{2}$ (fireball property),
		$\tm_{1}$ is a \full inert term and $\tm_{2}$ is a \full fireball.
		Thus, $\tm$ is a \full inert term.
		
		\item \emph{Fireball property}: We have just proved that $\tm$ is a \full inert term, and hence it is a \full fireball.
		
	\end{enumerate}
	
	\item \emph{Explicit substitutions}, \ie, $\tm = \tm_{1} \esub{\var}{\tm_{2}}$:
	\begin{enumerate}
		\item \emph{Fireball property}: Since $\tm$ is $\esssym$-normal, so are $\tm_{1}$ and $\tm_{2}$.  
		Moreover, $\tm_{2}$ is not an \valES (otherwise $\tm$ would be a $\toes$-redex).
		By \ih applied to $\tm_{1}$ (fireball property) and to $\tm_{2}$ (non-value property),
		$\tm_{1}$ is a \full fireball and $\tm_{2}$ is a \full inert term.
		Thus, $\tm$ is a \full fireball.
		
		\item \emph{Non-value property}: We have just proved that $\tm$ is a \full fireball.
		If moreover $\tm$ is not a \valES, then $\tm_{1}$ is not an \valES and hence, by \ih applied to $\tm_{1}$ (non-value property), $\tm_{1}$ is a \full inert term.
		So, $\tm$ is a \full inert term.
		\qedhere
	\end{enumerate}
	
\end{itemize}

\end{enumerate}
\end{proof}

\section{Proofs of \Cref{sect:types}}

First, we observe the following property: given a derivation for a term $\tm$, all variables associated with a non-empty multi type in the type context are free variables~of~ $\tm$.
\begin{remark}
	\label{rmk:free-variables}
	If $\namedtyjp{\tderiv}{}{\tm}{\typctx}{\mtype}$ then $\dom{\typctx} \subseteq \fv{\tm}$.
\end{remark}

\begin{lemma}[Typing of values: splitting]
	\label{l:typing-value-splitting}
	Let $\namedtyjp{\tderiv}{}{\tval}{\typctx}{\mtype}$ (for $\tval$ theoretical value).
	\begin{enumerate}
		\item \label{p:typing-value-splitting-one} If $\mtype = \emptytype$, then $\dom{\typctx} = \emptyset$ and 
		$\sizem{\tderiv} = 0 = \size{\tderiv}$. 
		
		\item \label{p:typing-value-splitting-two} For every splitting $\mtype = \mtype_{1} \mplus \mtype_{2}$, there exist 
		type derivations $\namedtyjp{\tderiv_{1}}{}{\tval}{\typctx_{1}}{\mtype_{1}}$ and 
		$\namedtyjp{\tderiv_{2}}{}{\tval}{\typctx_{2}}{\mtype_{2}}$ such that $\typctx = \typctx_{1} \mplus \typctx_{2}$, 
		$\sizem{\tderiv} = \sizem{\tderiv_{1}} + \sizem{\tderiv_{2}}$ and $\size{\tderiv} = \size{\tderiv_{1}} + 
		\size{\tderiv_{2}}$.
		
	\end{enumerate}
\end{lemma}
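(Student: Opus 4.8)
The plan is to prove the lemma by a single generation (inversion) step on the last rule of $\tderiv$, with no induction needed. The crucial observation is that, since $\tval$ is a theoretical value and the conclusion of $\tderiv$ is a \emph{multi} judgment $\typctx \vdash \tval \hastype \mtype$, an inspection of \Cref{fig:cbvtypes} shows that the only rule that can conclude $\tderiv$ is $\ruleMany$: the rules $\ruleAx$ and $\ruleFun$ conclude \emph{linear} judgments, while $\ruleAp$ and $\ruleES$ conclude judgments whose subject is an application or an \ES, hence not a value. Therefore $\tderiv$ is necessarily obtained by an application of $\ruleMany$ to a finite family of linear derivations $(\namedtyjp{\tderiv_i}{}{\tval}{\typctx_i}{\ltype_i})_{i \in I}$, with $\typctx = \biguplus_{i \in I}\typctx_i$ and $\mtype = \mset{\ltype_i}_{i \in I}$.

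From this shape both statements follow directly. For point 1, if $\mtype = \emptytype$ then $I = \emptyset$, so $\typctx$ is the empty sum of type contexts, which is the empty context (by the conventions of \Cref{sect:types}); hence $\dom\typctx = \emptyset$. Moreover $\tderiv$ then consists of a single occurrence of $\ruleMany$ with no premises, and since $\ruleMany$ is counted by neither $\size{\cdot}$ nor $\sizem{\cdot}$, we get $\size\tderiv = 0 = \sizem\tderiv$. For point 2, a splitting $\mtype = \mtype_1 \mplus \mtype_2$ amounts to a partition $I = I_1 \uplus I_2$ with $\mtype_j = \mset{\ltype_i}_{i \in I_j}$ for $j \in \{1,2\}$; I would then define $\tderiv_j$ as the derivation obtained by applying $\ruleMany$ to the subfamily $(\tderiv_i)_{i \in I_j}$, so that $\namedtyjp{\tderiv_j}{}{\tval}{\typctx_j}{\mtype_j}$ with $\typctx_j \defeq \biguplus_{i \in I_j}\typctx_i$. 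Then $\typctx_1 \mplus \typctx_2 = \biguplus_{i \in I}\typctx_i = \typctx$ by associativity and commutativity of $\mplus$, and since the $\ruleMany$ rules added on top contribute nothing to either size, $\sizem\tderiv = \sum_{i \in I}\sizem{\tderiv_i} = \sizem{\tderiv_1} + \sizem{\tderiv_2}$, and likewise $\size\tderiv = \size{\tderiv_1} + \size{\tderiv_2}$.

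I do not expect a genuine obstacle here: the lemma is essentially the remark that $\ruleMany$ is the multiset-forming rule and behaves additively in all the parameters of interest (typing context, general size, multiplicative size). The only points requiring a little care are checking exhaustively that no rule other than $\ruleMany$ can conclude a multi judgment whose subject is a value, and handling the degenerate case $I = \emptyset$ via the empty-sum conventions for type contexts; both are routine.
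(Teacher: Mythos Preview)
Your proposal is correct and follows essentially the same approach as the paper: both proceed by inverting the last rule of $\tderiv$ (necessarily $\ruleMany$, since the subject is a theoretical value and the conclusion is a multi judgment), then handle part~1 via the empty family and part~2 by splitting the index set $I$ into $I_1 \uplus I_2$ and reapplying $\ruleMany$ to each subfamily. Your justification of why only $\ruleMany$ can apply, and your accounting of sizes via the fact that $\ruleMany$ contributes to neither $\size{\cdot}$ nor $\sizem{\cdot}$, are in fact spelled out more explicitly than in the paper.
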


\begin{proof}\hfill
	\begin{enumerate}
		\item By a simple inspection of the typing rules, $\mtype = \emptytype$ and the fact that $\tval$ is a value imply 
		that 
		$\tderiv$ is of the form
		\begin{prooftree}
			\hypo{}
			\infer1[\footnotesize$\ruleMany$]{\tyjp{}{\tval}{}{\emptytype}}
		\end{prooftree}
		where $\dom{\typctx} = \emptyset$ and $\sizem{\tderiv} = 0 = \size{\tderiv}$.
		
		\item Let
		\begin{equation*}
		\tderiv = 
		\begin{prooftree}
		\hypo{}
		\ellipsis{$\tderiv_{i}$}{\tyjp{}{\tval}{\typctx_{i}}{\ltype_{i}}}
		\delims{\left(}{\right)_{\iI}}
		\infer1[\footnotesize$\ruleMany$]{\tyjp{}{\tval}{\bigmplus_{\iI} \typctx_{i}}{\mult{\ltype}_{\iI}}}
		\end{prooftree}
		\end{equation*}
		with $\bigmplus_{\iI} \typctx_{i} = \typctx$ and $\mult{\ltype}_{\iI} = \mtype = \mtype_{1} \mplus \mtype_{2}$. Let 
		$I_{1}$ and $I_{2}$ be sets of indices such that $I = I_{1} \cup I_{2}$, $\mtype_{1} = \mult{\ltype_{i}}_{i \in I_{1}}$ 
		and $\mtype_{2} = \mult{\ltype_{i}}_{i \in I_{2}}$. 
		As $\tval$ is a value, 	we can then derive
		\begin{equation*}
		\tderiv_{1} = 
		\begin{prooftree}
		\hypo{}
		\ellipsis{$\tderiv_{i}$}{\tyjp{}{\tval}{\typctx_{i}}{\ltype_{i}}}
		\delims{\left(}{\right)_{i \in I_1}}
		\infer1[\footnotesize$\ruleMany$]{\tyjp{}{\tval}{\bigmplus_{i \in I_{1}} \typctx_{i}}{\mult{\ltype}_{i \in I_{1}}}}
		\end{prooftree}
		\end{equation*}
		and 
		\begin{equation*}
		\tderiv_{2} = 
		\begin{prooftree}
		\hypo{}
		\ellipsis{$\tderiv_{i}$}{\tyjp{}{\tval}{\typctx_{i}}{\ltype_{i}}}
		\delims{\left(}{\right)_{i \in I_2}}
		\infer1[\footnotesize$\ruleMany$]{\tyjp{}{\tval}{\bigmplus_{i \in I_{2}} \typctx_{i}}{\mult{\ltype}_{i \in I_{2}}}}
		\end{prooftree}
		\end{equation*}
		noting that 
		$$
		\typctx = \bigmplus_{\iI} \typctx_{i} = \left( \bigmplus_{i \in I_{1}} \typctx_{i} \right) \mplus 
		\left(\bigmplus_{i \in I_{2}} \typctx_{i} \right)
		$$
		with 
		$$
		\sizem{\tderiv} = \sum_{\iI} \sizem{\tderiv_{i}} = \left( \sum_{i \in I_{1}} \sizem{\tderiv_{i}} \right) + \left( 
		\sum_{i \in I_{2}} \sizem{\tderiv_{i}} \right) = \sizem{\tderiv_{1}} + \sizem{\tderiv_{2}}
		$$
		and 
		$$
		\size{\tderiv} = \sum_{\iI} \size{\tderiv_{i}} = \left( \sum_{i \in I_{1}} \size{\tderiv_{i}} \right) + \left( 
		\sum_{i \in I_{2}} \size{\tderiv_{i}} \right) = \size{\tderiv_{1}} + \size{\tderiv_{2}}
		$$
		
	\end{enumerate}
\end{proof}

\begin{lemma}[Substitution]
	\label{lappendix:substitution}	
	\NoteProof{l:substitution}
	Let $\tm$ be a term, $\tval$ be a theoretical value and $\namedtyjp{\tderiv}{}{\tm}{\typctx, \var \hastype 
		\mtypetwo}{\mtype}$ and $\namedtyjp{\tderivtwo}{}{\tval}{\typctxtwo}{\mtypetwo}$ be derivations.
	Then there is a derivation $\namedtyjp{\tderivthree}{}{\tm \isub{\var}{\tval}}{\typctx \mplus \typctxtwo}{\mtype}$ 
	with $\sizem{\tderivthree} = \sizem{\tderiv} + \sizem{\tderivtwo}$ and $\size{\tderivthree} \leq \size{\tderiv} + 
	\size{\tderivtwo}$. 
\end{lemma}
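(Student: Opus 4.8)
The plan is to prove the statement by induction on the type derivation $\tderiv$ of $\tm$, after slightly generalizing it so that the right-hand type of $\tderiv$ (and hence of $\tderivthree$) may be a \emph{linear} type as well as a multi type. This generalization is forced: the premises of the rule $\ruleMany$ carry linear types, and a linear judgment already shows up in the base case of the induction (the axiom on $\var$). The two technical ingredients are \Cref{rmk:free-variables}, which guarantees that $\var$ is assigned the empty multi type whenever $\var \notin \fv\tm$, and the splitting lemma for value derivations (\Cref{l:typing-value-splitting}), which lets one carve the derivation $\tderivtwo$ of $\tval$ into sub-derivations to be distributed among the sub-derivations of $\tm$ that mention $\var$, with additive type contexts and additive sizes.

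The case analysis is on the last rule of $\tderiv$. For the \emph{axiom} there are two sub-cases. If $\tm = \vartwo \neq \var$, then $\var \notin \fv\tm$, so $\mtypetwo = \emptytype$ by \Cref{rmk:free-variables}; by \Cref{l:typing-value-splitting}(1) $\typctxtwo$ is empty and $\sizem\tderivtwo = 0 = \size\tderivtwo$, while $\tm\isub\var\tval = \tm$, so $\tderivthree \defeq \tderiv$ works. If $\tm = \var$, then $\mtype$ is a linear type $\ltype$, $\mtypetwo = \mset\ltype$ and $\typctx$ is empty; since $\tval$ is a value, $\tderivtwo$ must end with $\ruleMany$ on a single premise $\tderivtwo'$ which is a linear derivation of $\typctxtwo \vdash \tval \hastype \ltype$, and $\tderivthree \defeq \tderivtwo'$ does the job. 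Here $\sizem\tderivthree = \sizem{\tderivtwo'} = \sizem\tderivtwo = \sizem\tderiv + \sizem\tderivtwo$ (the axiom and $\ruleMany$ are invisible to $\sizem$), whereas $\size\tderivthree = \size\tderivtwo < 1 + \size\tderivtwo = \size\tderiv + \size\tderivtwo$: this is the one place where the bound on $\size$ is not tight, which explains why the statement asks only for $\leq$. For the \emph{$\ruleFun$} rule the occurrences of $\var$ are not split, so the induction hypothesis applies directly to the unique premise and one re-applies $\ruleFun$ (up to $\alpha$-renaming the bound variable away from $\tval$); both sizes grow by exactly one on each side. For the \emph{$\ruleAp$} and \emph{$\ruleES$} rules the occurrences of $\var$ are split across the two premises, with multi types summing to $\mtypetwo$; one splits $\tderivtwo$ accordingly via \Cref{l:typing-value-splitting}(2), applies the induction hypothesis to both premises, and re-assembles with the same rule — the size (in)equalities add up, $\ruleAp$ contributing one to $\sizem$ on both sides and $\ruleES$ contributing nothing. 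Finally, for the \emph{$\ruleMany$} rule $\tm$ is itself a theoretical value, its premises are linear derivations with multi types summing to $\mtypetwo$; one splits $\tderivtwo$ into a matching family by \Cref{l:typing-value-splitting}(2), applies the generalized induction hypothesis to each (linear) premise, and recomposes with $\ruleMany$.

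The main obstacle is not conceptual but bookkeeping: the induction must be run on the generalized (linear-or-multi) statement so that the $\ruleMany$ case and the $\ruleAx$-on-$\var$ case both type-check, and one must then verify that the multiplicative size is preserved \emph{exactly} — which works because axioms, $\ruleES$ and $\ruleMany$ are invisible to $\sizem$ while the $\ruleFun$/$\ruleAp$ contributions match up on the two sides — whereas the general size can only decrease, and does so precisely when an axiom on $\var$ is replaced by a value sub-derivation extracted from $\tderivtwo$. Aligning the splitting of $\tderivtwo$ with the context splittings of $\ruleAp$, $\ruleES$ and $\ruleMany$, including the degenerate $n = 0$ case where $\mtypetwo = \emptytype$ and $\tderivtwo$ contributes nothing, is the only spot requiring care.
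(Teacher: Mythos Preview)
Your proposal is correct, and the argument goes through as you describe. The route, however, differs from the paper's in one structural choice worth recording.

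The paper proves the lemma by induction on the \emph{term} $\tm$, not on the derivation. For the abstraction case it observes that any multi-typed derivation of $\la\vartwo\tmtwo$ necessarily has the shape $\ruleManyVal$ over a family of $\ruleFun$ premises, and it handles that whole two-layer block at once: it splits $\tderivtwo$ across the family, applies the \ih to each body $\tmtwo$, and rebuilds. This lets the paper stay entirely within multi-typed judgments and avoid your generalisation to linear types. Your induction on the derivation instead peels the rules one at a time, which forces the linear-typed intermediate case (the single $\ruleFun$ premise, and the axiom on $\var$) and hence the generalised statement.

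Both approaches rest on exactly the same lemma (\Cref{l:typing-value-splitting}) and the same accounting for sizes; the only place the inequality on $\size{\cdot}$ is strict is identified correctly in your write-up. What your decomposition buys is a cleaner rule-by-rule structure (one case per typing rule, no need to inspect the shape of the derivation inside a term case); what the paper's buys is not having to state or carry the linear-typed generalisation. Either is perfectly acceptable here.
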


\begin{proof}
	By induction on the term $\tm$.
	Cases:
	\begin{itemize}
		\item \emph{Variable}, then are two sub-cases:
		\begin{enumerate}
			\item $\tm = \var$, then $\tm \isub{\var}{\tval} = \var \isub{\var}{\tval} = \tval$ and 			
			$\sizem{\tderiv} = 0$ and $\size{\tderiv} = 1$.
			
			the derivation $\tderiv$ has necessarily the form (for some $n \in \nat$)
			\begin{equation*}
			\tderiv = 
			\begin{prooftree}
			\infer0[\footnotesize$\Ax$]{\tyjp{}{\var}{\var \hastype \mset{\ltype_1}}{\ltype_1}}
			\hypo{\overset{n \in \nat}{\ldots}}
			\infer0[\footnotesize$\Ax$]{\tyjp{}{\var}{\var \hastype \mset{\ltype_n}}{\ltype_n}}
			\infer3[\footnotesize$\ruleManyVar$]{\tyjp{}{\var}{\var \hastype 
					\mset{\ltype_1,\dots,\ltype_n}}{\mset{\ltype_1,\dots,\ltype_n}}}
			\end{prooftree}
			\end{equation*}
			with $\mtype = \mset{\ltype_1, \dots, \ltype_n} = \mtypetwo$ and $\dom{\typctx} = \emptyset$.
			Thus, $\sizem{\tderiv} = 0$ and $\size{\tderiv} = n$.
			Let $\tderivthree = \tderivtwo$: so, $\namedtyjp{\tderivthree}{}{\tm \isub{\var}{\tval}}{\typctx \mplus 
				\typctxtwo}{\mtype}$ (since $\typctx \mplus \typctxtwo = \typctxtwo$) with $\sizem{\tderivthree} = \sizem{\tderivtwo} = 
			\sizem{\tderivtwo} + \sizem{\tderiv}$ and $\size{\tderivthree} = \size{\tderivtwo} \leq \size{\tderivtwo} + 
			\size{\tderiv}$ (note that $\size{\tderivthree} = \size{\tderiv} + \size{\tderivtwo}$ if and only if $n=0$).
			
			\item $\tm = \varthree \neq \var$, then $\tm \isub{\var}{\tval} = \varthree$ and 
			$\sizem{\tderiv} = 0$, $\size{\tderiv} = 1$, 
			the derivation $\tderiv$ has necessarily the form (for some $n~\in~\nat$)
			\begin{equation*}
			\tderiv = 
			\begin{prooftree}
			\infer0[\footnotesize$\Ax$]{\tyjp{}{\varthree}{\varthree \hastype \mset{\ltype_1}}{\ltype_1}}
			\hypo{\overset{n \in \nat}{\ldots}}
			\infer0[\footnotesize$\Ax$]{\tyjp{}{\varthree}{\varthree \hastype \mset{\ltype_n}}{\ltype_n}}
			\infer3[\footnotesize$\ruleManyVar$]{\tyjp{}{\varthree}{\varthree \hastype 
					\mset{\ltype_1,\dots,\ltype_n}}{\mset{\ltype_1,\dots,\ltype_n}}}
			\end{prooftree}
			\end{equation*}
			where $\mtype = \mset{\ltype_1, \dots, \ltype_n}$ and  $\mtypetwo = \emptymset$ and $\typctx = \varthree \hastype 
			\mtype$ (while $\typctx(\var) = \emptymset$).
			Thus, $\sizem{\tderiv} = 0$ and $\size{\tderiv} = n$.
			By \reflemmap{typing-value-splitting}{one}, from $\namedtyjp{\tderivtwo}{}{\tval}{\typctxtwo}{\emptytype}$ it 
			follows that $\sizem{\tderivtwo} = 0 = \size{\tderivtwo}$ and $\dom{\typctxtwo} = \emptyset$.  
			Therefore, $\typctx \mplus \typctxtwo = \typctx$.
			Let $\tderivthree = \tderiv$: so, $\namedtyjp{\tderivthree}{}{\tm \isub{\var}{\tval}}{\typctx \mplus 
				\typctxtwo}{\mtype}$  with $\sizem{\tderivthree} = \sizem{\tderiv} = \sizem{\tderiv} + \sizem{\tderivtwo}$ and 
			$\size{\tderivthree} = \size{\tderiv} = \size{\tderiv} + \size{\tderivtwo}$.
		\end{enumerate}
		
		\item \emph{Application}, \ie $\tm = \tmtwo\tmthree$. 
		Then $\tm \isub{\var}{\tval} = \tmtwo \isub{\var}{\tval} \tmthree \isub{\var}{\tval}$ and necessarily
		\begin{equation*}
		\tderiv = 
		\begin{prooftree}
		\hypo{}
		\ellipsis{$\tderiv_{1}$}{\typctx_1, \var \hastype \mtypetwo_1 \vdash \tmtwo \hastype 
			\mset{\larrow{\mtypethree}{\mtype}}}
		\hypo{}
		\ellipsis{$\tderiv_{2}$}{\typctx_2, \var \hastype \mtypetwo_2 \vdash \tmthree \hastype \mtypethree}
		\infer2[\footnotesize$\ruleAp$]{\typctx, \var \hastype \mtypetwo \vdash \tmtwo \tmthree \hastype \mtype}
		\end{prooftree}
		\end{equation*}
		with $\sizem{\tderiv} = \sizem{\tderiv_{1}} + \sizem{\tderiv_{2}} + 1$, $\size{\tderiv} = \size{\tderiv_{1}} + 
		\size{\tderiv_{2}} + 1$, $\typctx = \typctx_1 \uplus \typctx_2$ and $\mtypetwo = \mtypetwo_2 \uplus \mtypetwo_2$. 
		According to \reflemmap{typing-value-splitting}{two} applied to $\tderivtwo$ and to the decomposition $\mtypetwo = 
		\mtypetwo_1 \uplus \mtypetwo_2$, there are contexts $\typctxtwo_1, \typctxtwo_2$ and derivations 
		$\namedtyjp{\tderivtwo_{1}}{}{\tval}{\typctxtwo_1}{\mtypetwo_1}$ and 
		$\namedtyjp{\tderivtwo_{2}}{}{\tval}{\typctxtwo_2}{\mtypetwo_2}$ such that $\typctxtwo = \typctxtwo_{1} \mplus 
		\typctxtwo_2$, $\sizem{\tderivtwo} = \sizem{\tderivtwo_1} + \sizem{\tderivtwo_2}$ and $\size{\tderivtwo} = 
		\size{\tderivtwo_{1}} + \size{\tderivtwo_{2}}$.
		
		By \ih, there are derivations $\namedtyjp{\tderivthree_1}{}{\tmtwo \isub{\var}{\tval}}{\typctx_1 \uplus 
			\typctxtwo_1}{\mult{\larrow{\mtypethree}{\mtype}}}$ and $\namedtyjp{\tderivthree_2}{}{\tmthree 
			\isub{\var}{\tval}}{\typctx_2 \uplus \typctxtwo_2}{\mtypethree}$ such that $\sizem{\tderivthree_{i}} = 
		\sizem{\tderiv_{i}} + \sizem{\tderivtwo_{i}}$ and $\size{\tderivthree_{i}} \leq \size{\tderiv_{i}} + 
		\size{\tderivtwo_{i}}$ for all $i \in \{1,2\}$.
		Since $\typctx \uplus \typctxtwo = \typctx_1 \uplus \typctxtwo_1 \uplus \typctx_2 \uplus \typctxtwo_2$, we can 
		build the derivation
		\begin{equation*}
		\tderivthree = 
		\begin{prooftree}
		\hypo{}
		\ellipsis{$\tderivthree_1$}{\typctx_1 \uplus \typctxtwo_1 \vdash \tmtwo\isub{\var}{\tval} \hastype 
			\mset{\larrow{\mtypethree}{\mtype}}}
		\hypo{}
		\ellipsis{$\tderivthree_2$}{\typctx_2 \uplus \typctxtwo_2 \vdash \tmthree\isub{\var}{\tval} \hastype \mtypethree}
		\infer2[\footnotesize$\ruleAp$]{\typctx \uplus \typctxtwo \vdash \tmtwo \isub{\var}{\tval} \tmthree 
			\isub{\var}{\tval} \hastype \mtype}
		\end{prooftree}
		\end{equation*}
		where $\sizem{\tderivthree} = \sizem{\tderivthree_{1}} + \size{\tderivthree_{2}} + 1 = \sizem{\tderiv_{1}} + 
		\sizem{\tderivtwo_{1}} + \sizem{\tderiv_{2}} + \sizem{\tderivtwo_{2}} + 1 = \sizem{\tderiv} + \sizem{\tderivtwo}$ and 
		$\size{\tderivthree} = \size{\tderivthree_{1}} + \size{\tderivthree_{2}} + 1 \leq \sizem{\tderiv_{1}} + 
		\sizem{\tderivtwo_{1}} + \sizem{\tderiv_{2}} + \size{\tderivtwo_{2}} + 1 = \size{\tderiv} + \size{\tderivtwo}$.
		
		\item \emph{Abstraction}, \ie $\tm = \la{\vartwo}{\tmtwo}$.
		We can suppose without loss of generality that $\vartwo \notin \fv{\tval} \cup \{\var \}$, hence $\tm 
		\isub{\var}{\tval} = \la{\vartwo}{\tmtwo\isub{\var}\tval}$ and  $\tderiv$ is necessarily of the form (for some $n \in 
		\nat$) 
		\begin{equation*}
		\begin{prooftree}[separation=1em]
		\hypo{}
		\ellipsis{$\tderiv_{i}$}{\typctx_{i}, \vartwo \hastype \mtypethree_{i}, \var \hastype \mtypetwo_{i} \vdash \tmtwo 
			\hastype \mtype_{i}}
		\infer1[\footnotesize$\ruleFun$]{\tyjp{}{\la{\vartwo}{\tmtwo}}{\typctx_{i}, \var \hastype 
				\mtypetwo_{i}}{\ty{\mtypethree_{i}\!}{\!\mtype_{i}}}}
		\delims{\left(}{\right)_{1\leq i \leq n}}
		\infer1[\footnotesize$\ruleManyVal$]{\tyjp{}{\la{\vartwo}{\tmtwo}}{\bigmplus_{i=1}^{n} \typctx_{i} ; \var \hastype 
				\mplus_{i=1}^{n} \mtypetwo_i}{\mplus_{i=1}^{n} \mset{\larrow{\mtypethree_i}{\mtype_i}}}}
		\end{prooftree}
		\end{equation*}
		with $\sizem{\tderiv} = \sum_{i=1}^{n} (\sizem{\tderiv_{i}} + 1)$ and $\size{\tderiv} = \sum_{i=1}^n 
		(\size{\tderiv_i} + 1)$.
		Since $\vartwo \notin \fv{\tval}$, then $\vartwo \notin \domain{\typctxtwo}$ (\refrmk{free-variables}), and so 
		$\namedtyjp{\tderivtwo}{}{\tval}{\typctxtwo, \vartwo \hastype \emptymset}{\mtypetwo}$. 
		Now, there are two subcases:
		\begin{itemize}
			\item \emph{Empty multi type}: If $n = 0$,  then $\mtypetwo = \emptymset = \mtype$ and $\dom{\typctx} = 
			\emptyset$, with $\sizem{\tderiv} = 0 = \size{\tderiv}$. 
			According to \reflemmap{typing-value-splitting}{one} applied to $\tderivtwo$, $\dom{\typctxtwo} = \emptyset$ with 
			$\sizem{\tderivtwo} = 0 = \size{\tderivtwo}$.
			We can then build the derivation 
			\begin{equation*}
			\tderivthree = 
			\begin{prooftree}
			\infer0[\footnotesize$\ruleManyVal$]{\tyjp{}{\la{\vartwo}(\tmtwo\isub{\var}{\tval})}{}{\emptymset}}
			\end{prooftree}
			\end{equation*}
			where $\sizem{\tderivthree} = 0 = \sizem{\tderiv} + \sizem{\tderivtwo}$ and $\size{\tderivthree} = 0 = 
			\size{\tderiv} + \size{\tderivtwo}$, and $\concl{\tderivthree}{\typctx \mplus 
				\typctxtwo}{\tm\isub{\var}{\tval}}{\mtype}$ since $\dom{\typctx \mplus \typctxtwo} = \emptyset$.
			
			\item\emph{Non-empty multi type}: If $n > 0$, then we can decompose $\tderivtwo$ according to the partitioning 
			$\mtypetwo = \biguplus_{i=1}^n \mtypetwo_i$ by repeatedly applying \reflemmap{typing-value-splitting}{two}, and hence 
			for all $1 \leq i \leq n$ there are context $\typctxtwo_{i}$ and a derivation 
			$\namedtyjp{\tderivtwo_i}{}{\tval}{\typctxtwo_{i} ; \vartwo \hastype \emptytype}{\mtypetwo_i}$ such that 
			$\sizem{\tderivtwo} = \sum_{i=1}^n \sizem{\tderivtwo_{i}}$ and $\size{\tderivtwo} = \sum_{i=1}^{n} 
			\size{\tderivtwo_{i}}$.
			By \ih, for all $1 \leq i \leq n$, there is a derivation 
			$\namedtyjp{\tderivthree_{i}}{}{\tmtwo\isub{\var}{\tval}}{\typctx_i \uplus \typctxtwo_i, \vartwo \hastype 
				\mtypethree_i}{\mtype_i}$ such that $\sizem{\tderivthree_{i}} = \sizem{\tderiv_{i}} + \sizem{\tderivtwo_{i}}$ and 
			$\size{\tderivthree_{i}} \leq \size{\tderiv_{i}} + \size{\tderivtwo_{i}}$.
			Since $\typctx \uplus \typctxtwo = \bigmplus_{i=1}^n (\typctx_i \uplus \typctxtwo_i)$, we can build 
			$\tderivthree$ as
			\begin{equation*}
			\begin{prooftree}[separation=1em]
			\hypo{}
			\ellipsis{$\tderivthree_i$}{\typctx_i \uplus \typctxtwo_i, \vartwo \hastype \mtypethree_i \vdash \tmtwo 
				\isub{\var}{\tval} \hastype \mtype_i}
			\infer1[\footnotesize$\ruleFun$]{\tyjp{}{\la{\vartwo}{(\tmtwo \isub{\var}{\tval})}}{\typctx_{i} \mplus 
					\typctxtwo_{i}}{\ty{\mtypethree_{i}\!}{\!\mtype_{i}}}}
			\delims{\left(}{\right)_{1\leq i \leq n}}
			\hypo{}
			\infer2[\footnotesize$\ruleManyVal$]{\typctx \uplus \typctxtwo \vdash \la{\vartwo}{(\tmtwo \isub{\var}{\tval})} 
				\hastype \mtype}
			\end{prooftree}
			\end{equation*}
			noting that $\sizem{\tderivthree} = \sum_{i=1}^n (\sizem{\tderivthree_{i}} + 1) = \sum_{i=1}^n 
			(\sizem{\tderiv_{i}} + \sizem{\tderivtwo_{i}} + 1) = \sum_{i=1}^{n} (\sizem{\tderiv_{i}} + 1) + \sum_{i=1}^{n} 
			\sizem{\tderivtwo_{i}} = \sizem{\tderiv} + \sizem{\tderivtwo} $ and that $\size{\tderivthree} = \sum_{i=1}^{n} 
			(\size{\tderivthree_{i}} + 1) \leq \sum_{i=1}^{n} (\size{\tderiv_{i}} + \size{\tderivtwo_{i}} + 1) = \sum_{i=1}^{n} 
			(\size{\tderiv_{i}} + 1) + \sum_{i=1}^{n} \size{\tderivtwo_{i}} = \size{\tderiv} + \size{\tderivtwo}$.
		\end{itemize}
		
		\item \emph{Explicit substitution}, \ie $\tm = \tmtwo \esub{\vartwo}{\tmthree}$. 
		We can suppose without loss of generality that $\vartwo \notin \fv{\tval} \cup \{\var \}$, hence $\tm 
		\isub{\var}{\tval} = \tmtwo\isub{\var}{\tval} \esub{\vartwo}{\tmthree\isub{\var}\tval}$ and necessarily
		\begin{equation*}
		\tderiv = 
		\begin{prooftree}
		\hypo{}
		\ellipsis{$\tderiv_{1}$}{\typctx_1 , \var \hastype \mtypetwo_1 , \vartwo \hastype \mtypethree \vdash \tmtwo 
			\hastype \mtype}
		\hypo{}
		\ellipsis{$\tderiv_{2}$}{\typctx_2, \var \hastype \mtypetwo_2 \vdash \tmthree \hastype \mtypethree}
		\infer2[\footnotesize$\Es$]{\typctx, \var \hastype \mtypetwo \vdash \tmtwo \esub{\vartwo}{\tmthree} \hastype \mtype}
		\end{prooftree}
		\end{equation*}
		with $\sizem{\tderiv} = \sizem{\tderiv_{1}} + \sizem{\tderiv_{2}}$, $\size{\tderiv} = \size{\tderiv_{1}} + 
		\size{\tderiv_{2}} + 1$, $\typctx = \typctx_1 \uplus \typctx_2$ and $\mtypetwo = \mtypetwo_2 \uplus \mtypetwo_2$. 
		According to \reflemmap{typing-value-splitting}{two} applied to $\tderivtwo$ and to the decomposition $\mtypetwo = 
		\mtypetwo_1 \uplus \mtypetwo_2$, there are contexts $\typctxtwo_1, \typctxtwo_2$ and derivations 
		$\namedtyjp{\tderivtwo_{1}}{}{\tval}{\typctxtwo_{1}}{\mtypetwo_{1}}$ and 
		$\namedtyjp{\tderivtwo_{2}}{}{\tval}{\typctxtwo_{2}}{\mtypetwo_{2}}$ such that $\typctxtwo = \typctxtwo_{1} \mplus 
		\typctxtwo_{2}$, $\sizem{\tderivtwo} = \sizem{\tderivtwo_{1}} + \sizem{\tderivtwo_{2}}$ and $\size{\tderivtwo} = 
		\size{\tderivtwo_{1}} + \size{\tderivtwo_{2}}$.
		
		By \ih, there are derivations $\namedtyjp{\tderivthree_{1}}{}{\tmtwo\isub{\var}{\tval}}{\typctx_{1} \mplus 
			\typctxtwo_{1}, \vartwo \hastype \mtypethree}{\mtype}$ and $\namedtyjp{\tderivthree_{2}}{}{\tmthree 
			\isub{\var}{\tval}}{\typctx_{2} \mplus \typctxtwo_{2}}{\mtypethree}$ such that $\sizem{\tderivthree_{i}} = 
		\sizem{\tderiv_{i}} + \sizem{\tderivtwo_{i}}$ and $\size{\tderivthree_{i}} \leq \size{\tderiv_{i}} + 
		\size{\tderivtwo_{i}}$ for all $i \in \{1,2\}$.
		Since $\typctx \uplus \typctxtwo = \typctx_1 \uplus \typctxtwo_1 \uplus \typctx_2 \uplus \typctxtwo_2$, we can 
		build the derivation
		\begin{equation*}
		\tderivthree = 
		\begin{prooftree}
		\hypo{}
		\ellipsis{$\tderivthree_1$}{\tyjp{}{\tmtwo\isub{\var}{\tval}}{\typctx_{1} \mplus \typctxtwo_{1}, \vartwo \hastype 
				\mtypethree}{\mtype}}
		\hypo{}
		\ellipsis{$\tderivthree_2$}{\typctx_2 \mplus \typctxtwo_2 \vdash \tmthree\isub{\var}{\tval} \hastype \mtypethree}
		\infer2[\footnotesize$\Es$]{\typctx \uplus \typctxtwo \vdash \tmtwo \isub{\var}{\tval} \esub{\vartwo} {\tmthree 
				\isub{\var}{\tval}} \hastype \mtype}
		\end{prooftree}
		\end{equation*}
		verifying that $\sizem{\tderivthree} = \sizem{\tderivthree_{1}} + \sizem{\tderivthree_{2}} = \sizem{\tderiv_{1}} + 
		\sizem{\tderivtwo_{1}} + \sizem{\tderiv_{2}} + \sizem{\tderivtwo_{2}} = \sizem{\tderiv} + \sizem{\tderivtwo}$ and 
		$\size{\tderivthree} = 1 + \size{\tderivthree_{1}} + \size{\tderivthree_{2}} \leq 1 + (\size{\tderiv_{1}} + 
		\size{\tderivtwo_{1}}) + (\size{\tderiv_{2}} + \size{\tderivtwo_{2}}) = \size{\tderiv} + \size{\tderivtwo}$.
		\qedhere
	\end{itemize}	
\end{proof}

\begin{lemma}[Typing of values: merging]
	\label{l:typing-value-complete} 
	Let $\tval$ be a theoretical value.
	\begin{enumerate}
		\item \label{p:typing-value-complete-empty} There is a derivation $\namedtyjp{\tderiv}{}{\tval}{}{\zero}$ with 
		$\sizem{\tderiv} = 0 = \size{\tderiv}$.
		
		\item \label{p:typing-value-complete-merge} For every derivations 
		$\namedtyjp{\tderiv_{1}}{}{\tval}{\typctx_{1}}{\mtype_{1}}$ and 
		$\namedtyjp{\tderiv_{2}}{}{\tval}{\typctx_{2}}{\mtype_{2}}$, there exists a  derivation 
		$\namedtyjp{\tderiv}{}{\tval}{\typctx_{1} \mplus \typctx_2}{\mtype_{1} \mplus \mtype_{2}}$ such that $\sizem{\tderiv} = 
		\sizem{\tderiv_{1}} + \sizem{\tderiv_{2}}$ and $\size{\tderiv} = \size{\tderiv_{1}} + \size{\tderiv_{2}}$.
		
	\end{enumerate}
\end{lemma}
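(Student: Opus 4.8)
The plan is to exploit the simple inversion fact that, for a theoretical value $\tval$ (a variable or an abstraction), a \emph{multi} judgment can only be obtained by a last rule $\ruleMany$ (namely $\ruleManyVar$ if $\tval$ is a variable, $\ruleManyVal$ if $\tval$ is an abstraction), possibly with zero premises: the rules $\ruleAx$, $\ruleAp$, $\ruleFun$ and $\ruleES$ either produce a judgment whose right-hand side is a \emph{linear} type ($\ruleAx$, $\ruleFun$) or have a subject that is not a value ($\ruleAp$, $\ruleES$). This is exactly the inversion already used in \reflemma{typing-value-splitting}, of which the present statement is essentially the dual (merging instead of splitting). The key quantitative remark is that the rule $\ruleMany$ contributes $0$ both to $\sizem{\cdot}$ and to $\size{\cdot}$ (by \Cref{def:two-sizes}), so regrouping its premises never changes either size; this is what will make the sizes add up exactly rather than only up to $\leq$.

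For \refpoint{typing-value-complete-empty}, I would simply take the derivation consisting of a single instance of $\ruleMany$ with no premises. Its conclusion is $\concl{\tderiv}{}{\tval}{\zero}$, with empty type context, and since $\ruleMany$ is transparent to both measures, $\sizem{\tderiv} = 0 = \size{\tderiv}$.

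For \refpoint{typing-value-complete-merge}, given $\concl{\tderiv_1}{\typctx_1}{\tval}{\mtype_1}$ and $\concl{\tderiv_2}{\typctx_2}{\tval}{\mtype_2}$, I would first invert the last ($\ruleMany$) rule of each, obtaining families $(\tderiv_{i,j})_{j \in J_i}$ of derivations $\concl{\tderiv_{i,j}}{\typctx_{i,j}}{\tval}{\ltype_{i,j}}$ for $i \in \{1,2\}$, with $\mtype_i = \mset{\ltype_{i,j}}_{j \in J_i}$, $\typctx_i = \bigmplus_{j \in J_i}\typctx_{i,j}$, and $\sizem{\tderiv_i} = \sum_{j \in J_i}\sizem{\tderiv_{i,j}}$, $\size{\tderiv_i} = \sum_{j \in J_i}\size{\tderiv_{i,j}}$. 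Choosing $J_1$ and $J_2$ disjoint, I then apply a single instance of $\ruleMany$ to the whole family $(\tderiv_{i,j})_{i\in\{1,2\},\,j\in J_i}$, which yields a derivation $\tderiv$ of $\typctx_1 \mplus \typctx_2 \vdash \tval \hastype \mtype_1 \mplus \mtype_2$; again, since the added $\ruleMany$ counts for $0$, we get $\sizem{\tderiv} = \sizem{\tderiv_1} + \sizem{\tderiv_2}$ and $\size{\tderiv} = \size{\tderiv_1} + \size{\tderiv_2}$.

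There is no genuinely hard step: the only points requiring care are stating the inversion uniformly over the two shapes of theoretical value (which is why the rule is referred to generically as $\ruleMany$ in \Cref{fig:cbvtypes}) and the observation that $\ruleMany$ is transparent to both notions of derivation size, which is precisely what upgrades the bookkeeping from the $\leq$ of \reflemma{substitution} to the exact equalities claimed here. If one prefers, \refpoint{typing-value-complete-empty} can also be read off directly from \reflemmap{typing-value-splitting}{one}.
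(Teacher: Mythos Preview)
Your proposal is correct and matches the paper's proof essentially step for step: both parts use a single $\ruleMany$ with zero premises for \refpoint{typing-value-complete-empty}, and for \refpoint{typing-value-complete-merge} both invert the last $\ruleMany$ of each derivation, take the disjoint union of the premise families, and reapply $\ruleMany$, with the size equalities following from $\ruleMany$ being transparent to both $\size{\cdot}$ and $\sizem{\cdot}$.
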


\begin{proof}
	\begin{enumerate}
		\item 
		Let $\tderiv$ be the following type derivation (applying the rule $\ruleMany$ with $n = 0$) such that 
		$\sizem{\tderiv} = 0 = \size{\tderiv}$:
		\begin{equation*}
		\tderiv =
		\begin{prooftree}
		\hypo{}
		\infer1[\footnotesize$\ruleMany$]{\tyjp{}{\tval}{}{\emptytype}}
		\end{prooftree}\ .
		\end{equation*}
		
		\item Let 
		\begin{equation*}
		\tderiv_{1} =
		\begin{prooftree}
		\hypo{}
		\ellipsis{$\tderiv_{i}$}{\tyjp{}{\tval}{\typctx_{i}}{\ltype_{i}}}
		\delims{ \left( }{ \right)_{\iI} }
		\infer1[\footnotesize$\ruleMany$]{\tyjp{}{\tval}{\bigmplus_{\iI} \typctx_{i}}{\bigmplus_{\iI}\mult{\ltype_{i}}}}
		\end{prooftree}
		\end{equation*}
		with $\typctx_{1} = \bigmplus_{\iI} \typctx_{i}$ and $\mtype_{1} = \bigmplus_{\iI}\mult{\ltype_{i}}$, and let
		\begin{equation*}
		\tderiv_{2} =
		\begin{prooftree}
		\hypo{}
		\ellipsis{$\tderiv_{j}$}{\tyjp{}{\tval}{\typctx_{j}}{\ltype_{j}}}
		\delims{ \left( }{ \right)_{\jJ} }
		\infer1[\footnotesize$\ruleMany$]{\tyjp{}{\tval}{\bigmplus_{\jJ} \typctx_{j}}{\bigmplus_{\jJ}\mult{\ltype_{j}}}}
		\end{prooftree}
		\end{equation*}
		with $\typctx_{2} = \bigmplus_{\jJ} \typctx_{j}$ and $\mtype_{2} = \bigmplus_{\jJ}\mult{\ltype_{j}}$.
		
		We can derive $\tderiv$ by setting $K = I \cup J$ and then
		\begin{equation*}
		\tderiv =
		\begin{prooftree}
		\hypo{}
		\ellipsis{$\tderiv_{k}$}{\tyjp{}{\tval}{\typctx_{k}}{\ltype_{k}}}
		\delims{ \left( }{ \right)_{\kK} }
		\infer1[\footnotesize$\ruleMany$]{\tyjp{}{\tval}{\bigmplus_{\kK} \typctx_{k}}{\bigmplus_{\kK}\mult{\ltype_{k}}}}
		\end{prooftree}
		\end{equation*}
		trivially verifying the statement.
		\qedhere
	\end{enumerate}
\end{proof}

\begin{lemma}[Anti-substitution]
	\label{lappendix:anti-substitution}
	\NoteProof{l:anti-substitution}
	Let $\tm$ be a term, $\tval$ be a theoretical value, and 
	$\namedtyjp{\tderiv}{}{\tm\isub{\var}{\tval}}{\typctx}{\mtype}$
	be a type derivation. Then there are two  derivations $\namedtyjp{\tderivtwo}{}{\tm}{\typctxtwo, \var \hastype 
		\mtypetwo}{\mtype}$ 
	and $\namedtyjp{\tderivthree}{}{\tval}{\typctxthree}{\mtypetwo}$ such that $\typctx = \typctxtwo \mplus \typctxthree$ 
	with $\sizem{\tderiv} = \sizem{\tderivtwo} + \sizem{\tderivthree}$ and $\size{\tderiv} \leq \size{\tderivtwo} + 
	\size{\tderivthree}$.
\end{lemma}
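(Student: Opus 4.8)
The plan is to proceed by structural induction on $\tm$, mirroring the argument for the Substitution lemma (\reflemma{substitution}) but run \emph{backwards}: where that proof uses the splitting half of the typing-of-values lemma (\reflemma{typing-value-splitting}), here I use the merging half (\reflemmap{typing-value-complete}{merge}) together with the empty value derivation (\reflemmap{typing-value-complete}{empty}).

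For the base cases: if $\tm = \var$ then $\tm\isub{\var}{\tval} = \tval$, so $\tderiv$ is already a derivation $\concl{\tderiv}{\typctx}{\tval}{\mtype}$; I set $\tderivthree \defeq \tderiv$ (so $\typctxthree \defeq \typctx$ and $\mtypetwo \defeq \mtype$) and let $\tderivtwo$ be the derivation of $\var \hastype \mtype \vdash \var \hastype \mtype$ obtained by applying $\ruleManyVar$ to $\card\mtype$ axioms, which has empty $\typctxtwo$ and $\sizem{\tderivtwo} = 0$; then $\typctx = \typctxtwo \mplus \typctxthree$ and $\sizem{\tderiv} = \sizem{\tderivtwo} + \sizem{\tderivthree}$ are immediate, while $\size{\tderiv} = \size{\tderivthree} \le \card\mtype + \size{\tderivthree} = \size{\tderivtwo} + \size{\tderivthree}$ --- this is the one spot where the inequality on $\size{\cdot}$ can be strict. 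If $\tm = \vartwo \neq \var$ then $\tm\isub{\var}{\tval} = \vartwo$, so I take $\tderivtwo \defeq \tderiv$, $\mtypetwo \defeq \emptymset$, and $\tderivthree$ the empty derivation $\concl{\tderivthree}{}{\tval}{\emptymset}$ of \reflemmap{typing-value-complete}{empty}, with $\sizem{\tderivthree} = 0 = \size{\tderivthree}$.

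For the inductive cases: when $\tm = \tmtwo\tmthree$ (and symmetrically when $\tm = \tmtwo\esub{\vartwo}{\tmthree}$, choosing $\vartwo \notin \fv{\tval} \cup \{\var\}$), inverting the last rule of $\tderiv$ --- namely $\ruleAp$, resp.\ $\ruleES$ --- splits it into derivations of $\tmtwo\isub{\var}{\tval}$ and $\tmthree\isub{\var}{\tval}$; applying the \ih{} to both yields derivations of $\tmtwo$ and $\tmthree$ carrying extra declarations $\var \hastype \mtypetwo_1$ and $\var \hastype \mtypetwo_2$, together with derivations of $\tval$ of types $\mtypetwo_1$ and $\mtypetwo_2$; merging the latter two through \reflemmap{typing-value-complete}{merge} gives $\concl{\tderivthree}{\typctxthree}{\tval}{\mtypetwo}$ for $\mtypetwo \defeq \mtypetwo_1 \mplus \mtypetwo_2$, and re-applying $\ruleAp$ (resp.\ $\ruleES$) reconstitutes $\tderivtwo$ on $\tm$, with \refrmk{free-variables} used to place $\vartwo$ in the right component of the split contexts in the $\ruleES$ case; the context identity and the $\sizem{\cdot}$ equality and the $\size{\cdot}$ inequality then follow by additivity. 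When $\tm = \la{\vartwo}{\tmtwo}$ with $\vartwo \notin \fv{\tval} \cup \{\var\}$, we have $\tm\isub{\var}{\tval} = \la{\vartwo}{(\tmtwo\isub{\var}{\tval})}$ and $\tderiv$ ends with a $\ruleManyVal$ over $n$ premises, each a $\ruleFun$ applied to a derivation of $\tmtwo\isub{\var}{\tval}$; if $n = 0$ then $\mtype = \emptymset$ and the argument is as in the $\tm = \vartwo$ base case; if $n > 0$ I apply the \ih{} to each of the $n$ sub-derivations (using \refrmk{free-variables} to see $\vartwo$ stays on the $\tmtwo$-side, hence on the left of $\multimap$ after $\ruleFun$), re-apply $\ruleFun$ and then $\ruleManyVal$ to rebuild $\tderivtwo$ on $\tm$ keeping $\var$ in its typing context, and iterate \reflemmap{typing-value-complete}{merge} $n{-}1$ times to assemble a single $\concl{\tderivthree}{\typctxthree}{\tval}{\mtypetwo}$ with $\mtypetwo \defeq \bigmplus_{i=1}^{n}\mtypetwo_i$.

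The main obstacle, exactly as for the Substitution lemma, is the multiset bookkeeping in the cases where $\mtype$ is a genuine multiset: one must match the partition of $\mtype$ imposed by the several premises of $\ruleManyVal$ (and by the repeated context-splittings of $\ruleAp$/$\ruleES$) with the corresponding partition of the value's type $\mtypetwo$, and recombine the pieces coherently via the merging lemma while keeping both size measures balanced simultaneously. The only subtle point in the (in)equalities is that $\size{\tderiv} \le \size{\tderivtwo} + \size{\tderivthree}$ genuinely need not be an equality: the slack is produced precisely by the axiom copies introduced for $\var$ in the $\tm = \var$ base case, which inflate $\size{\tderivtwo}$ with no counterpart in $\tderiv$.
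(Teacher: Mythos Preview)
Your proposal is correct and follows essentially the same approach as the paper's proof: structural induction on $\tm$, with the same case split (two variable base cases, application, abstraction with the $n=0$/$n>0$ split, and explicit substitution), the same use of \reflemmap{typing-value-complete}{empty} and \reflemmap{typing-value-complete}{merge} at the corresponding points, and the same identification of the $\tm=\var$ base case as the sole source of slack in the $\size{\cdot}$ inequality.
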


\begin{proof}
	By induction on the term $\tm$.
	Cases:
	\begin{itemize}
		\item \emph{Variable}, then are two sub-cases (let $\mtype = \mset{\ltype_1, \dots, \ltype_n}$ for some $n \in 
		\nat$):
		\begin{enumerate}
			\item $\tm = \var$, then $\tm \isub{\var}{\tval} = \tval$. 
			Let $\typctxthree = \typctx$, let $\typctxtwo$ be the empty context (\ie $\dom{\typctxtwo} = \emptyset$), let 
			$\mtypetwo = \mtype$ and let $\tderivtwo$ be the derivation 
			\begin{equation*}
			\tderivtwo = 
			\begin{prooftree}
			\infer0[\footnotesize$\Ax$]{\tyjp{}{\var}{\var \hastype \mset{\ltype_1}}{\ltype_1}}
			\hypo{\overset{n \in \nat}{\ldots}}
			\infer0[\footnotesize$\Ax$]{\tyjp{}{\var}{\var \hastype \mset{\ltype_n}}{\ltype_n}}
			\infer3[\footnotesize$\ruleManyVar$]{\tyjp{}{\var}{\var \hastype 
					\mset{\ltype_1,\dots,\ltype_n}}{\mset{\ltype_1,\dots,\ltype_n}}}
			\end{prooftree}
			\end{equation*}
			Thus, $\concl{\tderivtwo}{\typctxtwo, \var \hastype \mtypetwo}{\tm}{\mtype}$ with $\sizem{\tderivtwo} = 0$ and 
			$\size{\tderivtwo} = n$.
			Let $\tderivthree = \tderiv$: so, $\namedtyjp{\tderivthree}{}{\tval}{\typctxthree}{\mtypetwo}$ and $\typctxthree 
			\mplus \typctxtwo = \typctx$ with $\sizem{\tderiv} = \sizem{\tderivthree} = \sizem{\tderivtwo} + \sizem{\tderivthree}$ 
			and $\size{\tderiv} = \size{\tderivthree} \leq \size{\tderivtwo} + \size{\tderivthree}$.
			
			\item $\tm = \varthree \neq \var$, then $\tm \isub{\var}{\tval} = \varthree$ and 
			the derivation $\tderiv$ has necessarily the form (for some $n~\in~\nat$)
			\begin{equation*}
			\tderiv = 
			\begin{prooftree}
			\infer0[\footnotesize$\Ax$]{\tyjp{}{\varthree}{\varthree \hastype \mset{\ltype_1}}{\ltype_1}}
			\hypo{\overset{n \in \nat}{\ldots}}
			\infer0[\footnotesize$\Ax$]{\tyjp{}{\varthree}{\varthree \hastype \mset{\ltype_n}}{\ltype_n}}
			\infer3[\footnotesize$\ruleManyVar$]{\tyjp{}{\varthree}{\varthree \hastype 
					\mset{\ltype_1,\dots,\ltype_n}}{\mset{\ltype_1,\dots,\ltype_n}}}
			\end{prooftree}
			\end{equation*}
			where $\mtype = \mset{\ltype_1, \dots, \ltype_n}$ and $\typctx = \varthree \hastype \mtype$ (while $\typctx(\var) 
			= \emptymset$).
			Thus, $\sizem{\tderiv} = 0$ and $\size{\tderiv} = n$.
			Let $\typctxthree$ be the empty context (\ie $\dom{\typctxthree} = 0$) and  $\mtypetwo = \emptymset$.
			By \reflemmap{typing-value-complete}{empty}, there is a derivation 
			$\namedtyjp{\tderivthree}{}{\tval}{}{\emptytype}$ (and hence 
			$\namedtyjp{\tderivthree}{}{\tval}{\typctxthree}{\mtypetwo}$) such that $\sizem{\tderivthree} = 0 = 
			\size{\tderivthree}$.  
			Let $\typctxtwo = \typctx$  and $\tderivtwo = \tderiv$:
			therefore, $\typctxthree \mplus \typctxtwo = \typctx$
			and $\namedtyjp{\tderivtwo}{}{\tm }{\typctxtwo, \var \hastype \mtypetwo}{\mtype}$  with $\sizem{\tderiv} = 
			\sizem{\tderivtwo} = \sizem{\tderivtwo} + \sizem{\tderivthree}$ and $\size{\tderiv} = \size{\tderivtwo} \leq 
			\size{\tderivtwo} + \size{\tderivthree}$.
		\end{enumerate}
		
		\item \emph{Application}, \ie $\tm = \tm_1\tm_2$. 
		Then $\tm \isub{\var}{\tval} = \tm_1 \isub{\var}{\tval} \tm_2 \isub{\var}{\tval}$ and necessarily
		\begin{equation*}
		\tderiv = 
		\begin{prooftree}
		\hypo{}
		\ellipsis{$\tderiv_{1}$}{\typctx_1 \vdash \tm_1\isub{\var}{\tval} \hastype \mset{\larrow{\mtypethree}{\mtype}}}
		\hypo{}
		\ellipsis{$\tderiv_{2}$}{\typctx_2 \vdash \tm_2\isub{\var}{\tval} \hastype \mtypethree}
		\infer2[\footnotesize$\ruleAp$]{\typctx \vdash \tm_1\isub{\var}{\tval} \tm_2\isub{\var}{\tval} \hastype \mtype}
		\end{prooftree}
		\end{equation*}
		with $\sizem{\tderiv} = \sizem{\tderiv_{1}} + \sizem{\tderiv_{2}} + 1$, $\size{\tderiv} = \size{\tderiv_{1}} + 
		\size{\tderiv_{2}} + 1$ and $\typctx = \typctx_1 \mplus \typctx_2$. 		
		By \ih, for all $i \in \{1,2\}$, there are derivations $\namedtyjp{\tderivtwo_i}{}{\tm_i}{\typctxtwo_i, \var 
			\hastype \mtypetwo_i}{\mult{\larrow{\mtypethree}{\mtype}}}$ and $\namedtyjp{\tderivthree_i}{}{\tval 
		}{\typctxthree_i}{\mtypetwo_i}$ such that $\typctx_i = \typctxtwo_i \mplus \typctxthree_i$ with $\sizem{\tderiv_{i}} = 
		\sizem{\tderivtwo_{i}} + \sizem{\tderivthree_{i}}$ and $\size{\tderiv_{i}} \leq \size{\tderivtwo_{i}} + 
		\size{\tderivthree_{i}}$.
		According to \reflemmap{typing-value-complete}{merge} applied to $\tderivthree_1$ and $\tderivthree_2$, there is a 
		derivation $\namedtyjp{\tderivthree}{}{\tval}{\typctxthree}{\mtypetwo}$ where $\mtypetwo = \mtypetwo_1 \mplus 
		\mtypetwo_2$ and $\typctxthree = \typctxthree_1 \mplus \typctxthree_2$, such that  $\sizem{\tderivtwo} = 
		\sizem{\tderivtwo_1} + \sizem{\tderivtwo_2}$ and $\size{\tderivtwo} = \size{\tderivtwo_{1}} + \size{\tderivtwo_{2}}$.
		We can build the derivation (where $\typctxtwo = \typctxtwo_1 \mplus \typctxtwo_2$)
		\begin{equation*}
		\tderivtwo = 
		\begin{prooftree}
		\hypo{}
		\ellipsis{$\tderivtwo_1$}{\typctxtwo_1, \var \hastype \mtypetwo_1 \vdash \tm_1 \hastype 
			\mset{\larrow{\mtypethree}{\mtype}}}
		\hypo{}
		\ellipsis{$\tderivtwo_2$}{\typctxtwo_2, \var \hastype \mtypetwo_2 \vdash \tm_2 \hastype \mtypethree}
		\infer2[\footnotesize$\ruleAp$]{\typctxtwo, \var \hastype \mtypetwo \vdash \tm \hastype \mtype}
		\end{prooftree}
		\end{equation*}
		with $\sizem{\tderiv} = \sizem{\tderiv_{1}} + \size{\tderiv_{2}} + 1 = \sizem{\tderivtwo_{1}} + 
		\sizem{\tderivthree_{1}} + \sizem{\tderivtwo_{2}} + \sizem{\tderivthree_{2}} + 1 = \sizem{\tderivtwo} + 
		\sizem{\tderivthree}$ 
		and
		$\size{\tderiv} = \size{\tderiv_{1}} + \size{\tderiv_{2}} + 1 \leq \sizem{\tderivtwo_{1}} + 
		\sizem{\tderivthree_{1}} + \sizem{\tderivtwo_{2}} + \size{\tderivthree_{2}} + 1 = \size{\tderivtwo} + 
		\size{\tderivthree}$.
		
		\item \emph{Abstraction}, \ie $\tm = \la{\vartwo}{\tmtwo}$.
		We can suppose without loss of generality that $\vartwo \notin \fv{\tval} \cup \{\var \}$, hence $\tm 
		\isub{\var}{\tval} = \la{\vartwo}{\tmtwo\isub{\var}\tval}$ and necessarily, for some $n \in \nat$,
		\begin{equation*}
		\tderiv =
		\begin{prooftree}[separation=1em]
		\hypo{}
		\ellipsis{$\tderiv_i$}{\typctx_i, \vartwo \hastype \mtypethree_i \vdash \tmtwo\isub{\var}{\tval} \hastype \mtype_i}
		
		\infer1[\footnotesize$\ruleFun$]{\tyjp{}{\la{\vartwo}{\tmtwo\isub{\var}{\tval}}}{\typctx_{i}}{\ty{\mtypethree_{i}\!}{
					\!\mtype_{1}}}}
		\delims{\left(}{\right)_{1 \leq i \leq n}}
		\hypo{}
		\infer2[\footnotesize$\ruleManyVal$]{\tyjp{}{\la{\vartwo}{\tmtwo\isub{\var}{\tval}}}{\typctx}{\mtype}}
		\end{prooftree}
		\end{equation*}
		with $\typctx = \bigmplus_{i=1}^{n} \typctx_i$, $\mtype = \bigmplus_{i=1}^{n} 
		\mset{\larrow{\mtypethree_i\!}{\!\mtype_i}}$, $\sizem{\tderiv} = n + \sum_{i=1}^{n} \sizem{\tderiv_{i}}$ and 
		$\size{\tderiv} = n + \sum_{i=1}^n \size{\tderiv_i} $.
		%
		$\namedtyjp{\tderivtwo}{}{\tval}{\typctxtwo, \vartwo \hastype \emptymset}{\mtypetwo}$. 
		There are two subcases:
		\begin{itemize}
			\item \emph{Empty multi type}: If $n = 0$,  then $\mtype = \emptymset$ and $\dom{\typctx} = \emptyset$, with 
			$\sizem{\tderiv} = 0 = \size{\tderiv}$. 
			We can  build the derivation 
			\begin{equation*}
			\tderivtwo = 
			\begin{prooftree}
			\infer0[\footnotesize$\ruleManyVal$]{\tyjp{}{\la{\vartwo}\tmtwo}{}{\emptymset}}
			\end{prooftree}
			\end{equation*}
			where $\sizem{\tderivtwo} = 0 = \size{\tderivtwo}$.
			Let $\mtypetwo = \emptymset$ and $\typctxtwo$ be the empty context (\ie $\dom{\typctxtwo} = \emptyset$): then 
			$\concl{\tderivtwo}{\typctxtwo, \var \hastype \mtypetwo}{\tm}{\mtype}$.			
			According to \reflemmap{typing-value-complete}{empty}, there is a derivation 
			$\concl{\tderivthree}{}{\tval}{\emptymset}$ with $\sizem{\tderivthree} = 0 = \size{\tderivthree}$. 
			Let $\typctxthree$ be the empty context (\ie $\dom{\typctxthree} = \emptyset$): so, 
			$\concl{\tderivthree}{\typctxthree}{\tval}{\mtypetwo}$ with $\typctx = \typctxtwo \mplus \typctxthree$ and 
			$\sizem{\tderiv} = 0 = \sizem{\tderivtwo} + \sizem{\tderivthree}$ and $\size{\tderiv} = 0 \leq \size{\tderivtwo} + 
			\size{\tderivthree}$.
			
			\item\emph{Non-empty multi type}: If $n > 0$ then by \ih, for all $1 \leq i \leq n$, there are derivations 
			$\concl{\tderivtwo_i}{\typctxtwo_i, \vartwo \hastype \mtypethree_i, \var \hastype \mtypetwo_i}{\tmtwo}{\mtype_i}$ and 
			$\concl{\tderivthree_i}{\typctxthree_i}{\tval}{\mtypetwo_i}$ such that $\typctx_i = \typctxtwo_i \mplus \typctxthree_i$ 
			with $\sizem{\tderiv_i} = \sizem{\tderivtwo_i} + \sizem{\tderivthree_i}$ and $\size{\tderiv_i} \leq \size{\tderivtwo_i} 
			+ \size{\tderivthree_i}$.
			We can build the derivation
			\begin{equation*}
			\tderivtwo = 
			\begin{prooftree}[separation=1em]
			\hypo{}
			\ellipsis{$\tderivtwo_i$}{\typctxtwo_i ; \vartwo \hastype \mtypethree_i ; \var \hastype \mtypetwo_i \vdash \tmtwo 
				\hastype \mtype_i}
			\infer1[\footnotesize$\ruleFun$]{\tyjp{}{\la{\vartwo}{\tmtwo}}{\typctxtwo_{i} ; \var \hastype 
					\mtypetwo_{i}}{\ty{\mtypethree_{i}\!}{\!\mtype_{i}}}}
			\delims{ \left( }{ \right)_{1 \leq i \leq n} }
			\infer1[\footnotesize$\ruleManyVal$]{\tyjp{}{\la{\vartwo}{\tmtwo}}{\bigmplus_{i=1}^n \typctxtwo_i ; \var \hastype 
					\mplus_{i=1}^n \mtypetwo_i}{\bigmplus_{i=1}^{n} \mult{\ty{\mtypethree_{i}\!}{\!\mtype_{i}}}}}
			\end{prooftree}
			\end{equation*}
			Thus, $\sizem{\tderivtwo} = n + \sum_{i=1}^{n} \sizem{\tderivtwo_{i}}$ and $\size{\tderivtwo} = n + \sum_{i=1}^n 
			\size{\tderivtwo_i} $.
			By repeatedly applying \reflemmap{typing-value-complete}{merge}, there is a derivation 
			$\concl{\tderivthree}{\typctxthree}{\tval}{\mtypetwo}$ with $\typctxthree = \bigmplus_{i=1}^n \typctxthree_i$ such that 
			$\sizem{\tderivthree} = \sum_{i=1}^n\sizem{\tderivthree_i}$ and $\size{\tderivthree} = 
			\sum_{i=1}^n\size{\tderivthree_i}$.
			So, $\typctx = \bigmplus_{i=1}^n \typctx_i = \bigmplus_{i=1}^n (\typctxtwo_i \mplus \typctxthree_i) = \typctxtwo 
			\mplus \typctxthree$ with $\sizem{\tderiv} = n + \sum_{i=1}^n \sizem{\tderiv_i} = n + \sum_{i=1}^n 
			(\sizem{\tderivtwo_i} + \sizem{\tderivthree_i}) = \sizem{\tderivtwo} + \sizem{\tderivthree}$ 
			and $\size{\tderiv} = n + \sum_{i=1}^n \size{\tderiv_i} \leq n + \sum_{i=1}^n (\size{\tderivtwo_i} + 
			\size{\tderivthree_i}) = \size{\tderivtwo} + \size{\tderivthree}$.
		\end{itemize}
		
		\item \emph{Explicit substitution}, \ie $\tm = \tmtwo \esub{\vartwo}{\tmthree}$. 
		We can suppose without loss of generality that $\vartwo \notin \fv{\tval} \cup \{\var \}$, hence $\tm 
		\isub{\var}{\tval} = \tmtwo\isub{\var}{\tval} \esub{\vartwo}{\tmthree\isub{\var}\tval}$ and necessarily
		\begin{equation*}
		\tderiv = 
		\begin{prooftree}
		\hypo{}
		\ellipsis{$\tderiv_1$}{\tyjp{}{\tmtwo\isub{\var}{\tval}}{\typctx_{1}, \vartwo \hastype \mtypethree}{\mtype}}
		\hypo{}
		\ellipsis{$\tderiv_2$}{\typctx_2 \vdash \tmthree\isub{\var}{\tval} \hastype \mtypethree}
		\infer2[\footnotesize$\Es$]{\typctx \vdash \tmtwo \isub{\var}{\tval} \esub{\vartwo} {\tmthree \isub{\var}{\tval}} 
			\hastype \mtype}
		\end{prooftree}
		\end{equation*}
		with $\sizem{\tderiv} = \sizem{\tderiv_{1}} + \sizem{\tderiv_{2}}$, $\size{\tderiv} = \size{\tderiv_{1}} + 
		\size{\tderiv_{2}} + 1$ and $\typctx = \typctx_1 \uplus \typctx_2$. 
		By \ih applied to $\tderiv_1$ and \refrmk{free-variables}, there are derivations 
		$\concl{\tderivtwo_1}{\typctxtwo_1, \vartwo \hastype \mtypethree, \var \hastype \mtypetwo_1}{\tmtwo}{\mtype}$ and
		$\concl{\tderivthree_1}{\typctxthree_1}{\tval}{\mtypetwo_1}$ with $\typctx_1 = \typctxtwo_1 \mplus \typctxthree_1$ 
		such that $\sizem{\tderiv_1} = \sizem{\tderivtwo_1} + \sizem{\tderivthree_1}$ and $\size{\tderiv_1} \leq 
		\size{\tderivtwo_1} + \size{\tderivthree_1}$.
		By \ih applied to $\tderiv_2$ , there are derivations $\concl{\tderivtwo_2}{\typctxtwo_2, \var \hastype 
			\mtypetwo_2}{\tmthree}{\mtype}$ and
		$\concl{\tderivthree_2}{\typctxthree_2}{\tval}{\mtypetwo_2}$ with $\typctx_2 = \typctxtwo_2 \mplus \typctxthree_2$ 
		such that $\sizem{\tderiv_2} = \sizem{\tderivtwo_2} + \sizem{\tderivthree_2}$ and $\size{\tderiv_2} \leq 
		\size{\tderivtwo_2} + \size{\tderivthree_2}$.
		According to \reflemmap{typing-value-complete}{merge}, there is a derivation 
		$\namedtyjp{\tderivthree}{}{\tval}{\typctxthree}{\mtypetwo}$ with $\typctxthree = \typctxthree_{1} \mplus 
		\typctxthree_{2}$ and $\mtypetwo = \mtypetwo_1 \mplus \mtypetwo_2$ such that $\sizem{\tderivthree} = 
		\sizem{\tderivthree_{1}} + \sizem{\tderivthree_{2}}$ and $\size{\tderivthree} = \size{\tderivthree_{1}} + 
		\size{\tderivthree_{2}}$.
		We can build the derivation (where $\typctxtwo = \typctxtwo_1 \mplus \typctxtwo_2$)
		\begin{equation*}
		\tderivtwo = 
		\begin{prooftree}
		\hypo{}
		\ellipsis{$\tderivtwo_{1}$}{\typctxtwo_1 , \var \hastype \mtypetwo_1 , \vartwo \hastype \mtypethree \vdash \tmtwo 
			\hastype \mtype}
		\hypo{}
		\ellipsis{$\tderivtwo_{2}$}{\typctxtwo_2, \var \hastype \mtypetwo_2 \vdash \tmthree \hastype \mtypethree}
		\infer2[\footnotesize$\Es$]{\typctxtwo, \var \hastype \mtypetwo \vdash \tmtwo \esub{\vartwo}{\tmthree} \hastype 
			\mtype}
		\end{prooftree}
		\end{equation*}
		verifying that $\typctx = \typctx_1 \mplus \typctx_2 = \typctxtwo_1 \mplus \typctxthree_1 \mplus \typctxtwo_2 
		\mplus \typctxthree_2 = \typctxtwo \mplus \typctxthree$ and $\sizem{\tderiv} = \sizem{\tderiv_{1}} + \sizem{\tderiv_{2}} 
		= \sizem{\tderivtwo_{1}} + \sizem{\tderivthree_{1}} + \sizem{\tderivtwo_{2}} + \sizem{\tderivthree_{2}} = 
		\sizem{\tderivtwo} + \sizem{\tderivthree}$ and $\size{\tderiv} = 1 + \size{\tderiv_{1}} + \size{\tderiv_{2}} \leq 1 + 
		(\size{\tderivtwo_{1}} + \size{\tderivthree_{1}}) + (\size{\tderivthree_{2}} + \size{\tderivthree_{2}}) = 
		\size{\tderivtwo} + \size{\tderivthree}$.
		\qedhere
	\end{itemize}	
\end{proof}

\begin{proposition}[Qualitative subjects]
	\label{propappendix:qual-subject}
	\NoteState{prop:qual-subject}
	Let $\tm \tovsub \tm'$.
	\begin{enumerate}
		\item \label{pappendix:qual-subject-reduction}
		\emph{Reduction}: 	if $\namedtyjp{\tderiv}{}{\tm}{\typctx}{\mtype}$ then there is a derivation $\namedtyjp{\tderiv'}{}{\tm'}{\typctx}{\mtype}$ such that
		$\sizem{\tderiv'} \leq \sizem{\tderiv}$ and $\size{\tderiv'} \leq \size{\tderiv} $.
		
		\item \label{pappendix:qual-subject-expansion}
		\emph{Expansion}: if $\namedtyjp{\tderiv'}{}{\tm'}{\typctx}{\mtype}$ then there is a derivation $\namedtyjp{\tderiv}{}{\tm}{\typctx}{\mtype}$ such that
		$\sizem{\tderiv'} \leq \sizem{\tderiv}$ and $\size{\tderiv'} \leq \size{\tderiv} $.
	\end{enumerate}
\end{proposition}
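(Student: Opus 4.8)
The plan is to prove both items by a single induction on the \full evaluation context $\fctx$ witnessing the step, run in opposite directions: subject reduction uses \Cref{l:substitution} at the root, subject expansion uses \Cref{l:anti-substitution}, and all other cases are literally mirror images. So I describe item~1, the reduction case. First I would peel off the context: if $\tm = \fctxp{\tmfour} \tovsub \fctxp{\tmfour'} = \tm'$ with $\tmfour\rtom\tmfour'$ or $\tmfour\rtoe\tmfour'$, I induct on $\fctx$. The inductive steps are routine because every typing rule is context-splitting and transparent to the step: for $\fctx = \fctx'\tmfive$, $\fctx = \tmfive\fctx'$, or an $\esub{}{}$-variant, the derivation $\tderiv$ ends with $\ruleAp$ or $\ruleES$, the \ih applies to the (type-preserving) premise that types $\fctx'\langle\cdot\rangle$, and the same rule is re-applied; for $\fctx = \la\var\fctx'$, $\tderiv$ ends with a $\ruleManyVal$ whose $n\geq 0$ premises are $\ruleFun$-rules over derivations of $\fctx'\langle\cdot\rangle$ (with $\var$ in the context), and the \ih applies to each of them. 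In every case the root rule is untouched and the two sizes of the affected sub-derivations only shrink, so $\sizem{\tderiv'}\leq\sizem{\tderiv}$ and $\size{\tderiv'}\leq\size{\tderiv}$.

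\textbf{Root multiplicative step.} For $\sctxp{\la\var\tmfour}\tmtwo \rtom \sctxp{\tmfour\esub{\var}{\tmtwo}}$ I argue by induction on the substitution context $\sctx$. When $\sctx = \ctxhole$, $\tderiv$ ends with $\ruleAp$ whose left premise types the value $\la\var\tmfour$ with a singleton multi type $\mset{\larrow{\mtypethree}{\mtype}}$, hence that sub-derivation is a one-premise $\ruleManyVal$ over a $\ruleFun$, i.e. of the form $\typctx_1, \var\hastype\mtypethree \vdash \tmfour \hastype \mtype$. Replacing the three topmost rules ($\ruleAp$, $\ruleManyVal$, $\ruleFun$) by a single $\ruleES$ over the unchanged sub-derivations for $\tmfour$ and $\tmtwo$ gives $\tderiv'$ with conclusion $\typctx \vdash \tmfour\esub{\var}{\tmtwo} \hastype \mtype$, $\sizem{\tderiv'} = \sizem{\tderiv}-2$ and $\size{\tderiv'} = \size{\tderiv}-1$. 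When $\sctx = \sctx'\esub{\vartwo}{\tmfive}$ (with $\vartwo\notin\fv{\tmtwo}$ up to $\alpha$), the outer $\ruleAp$ commutes past the $\ruleES$ coming from $\esub{\vartwo}{\tmfive}$, after which the \ih applies to the inner $\rtom$-redex with context $\sctx'$; sizes again do not increase.

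\textbf{Root exponential step.} For $\tmfour\esub{\var}{\sctxp{\val}} \rtoe \sctxp{\tmfour\isub{\var}{\val}}$ I again induct on $\sctx$. When $\sctx = \ctxhole$, $\tderiv$ ends with $\ruleES$ over sub-derivations $\typctx_1, \var\hastype\mtypetwo \vdash \tmfour \hastype \mtype$ and $\typctx_2 \vdash \val \hastype \mtypetwo$, and \Cref{l:substitution} directly yields a derivation $\tderiv'$ of $\typctx_1\uplus\typctx_2 \vdash \tmfour\isub{\var}{\val} \hastype \mtype$ with $\sizem{\tderiv'} = \sizem{\tderiv}$ and $\size{\tderiv'} \leq \size{\tderiv}-1$ (the $\ruleES$, and possibly some axioms on $\var$, disappear). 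When $\sctx$ is non-empty, I first peel its explicit substitutions off the sub-derivation typing $\sctxp{\val}$, using \reflemma{typing-value-splitting} and \reflemma{typing-value-complete} to redistribute the typing of the value through them (in particular covering the erasing case $\mtypetwo = \emptymset$), then apply the base case and re-plug the peeled $\ruleES$ rules around $\tderiv'$; no step increases either size.

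\textbf{Main obstacle.} The delicate point is not the local rule rewiring but the ``at a distance'' bookkeeping in the two root cases: showing that a derivation of $\sctxp{\la\var\tmfour}\tmtwo$ (resp. $\tmfour\esub{\var}{\sctxp{\val}}$) can be reorganised, by commuting $\ruleES$ rules up through $\ruleAp$ and $\ruleES$, into one of $\sctxp{\tmfour\esub{\var}{\tmtwo}}$ (resp. $\sctxp{\tmfour\isub{\var}{\val}}$) while controlling $\size$ and $\sizem$ simultaneously. This is also the reason the inequalities are only $\leq$, not $=$: outside the external strategy an exponential step may duplicate or, more tellingly, erase a whole sub-derivation when a multi type is $\emptymset$, strictly shrinking both sizes, and even a non-erasing exponential step can discard the axioms attached to the substituted variable, strictly shrinking $\size$. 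For expansion the picture is mirrored—the reorganised derivation of the redex carries the extra $\ruleAp$/$\ruleFun$ pair (multiplicative) or re-expands the value into several copies (exponential)—so its sizes dominate those of the given derivation of the reduct.
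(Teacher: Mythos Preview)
Your proof is correct and takes the same approach as the paper: induction on the full context $\fctx$, with the root cases handled via \Cref{l:substitution} and \Cref{l:anti-substitution}, and the abstraction case handled by applying the \ih to each of the $n\geq 0$ premises under $\ruleManyVal$. Two minor points of comparison. First, the paper does not re-derive the open-context and root cases but simply invokes the already-established open quantitative subject reduction and expansion (\Cref{prop:weak-subject-reduction}, \Cref{prop:weak-subject-expansion}), adding only the abstraction case $\fctx = \la\var\fctx'$; your direct treatment is fine but duplicates work. Second, your invocation of \Cref{l:typing-value-splitting} and \Cref{l:typing-value-complete} in the exponential root case with non-empty $\sctx$ is unnecessary and somewhat misplaced: the derivation of $\sctxp{\val}$ is just a chain of $\ruleES$ rules over a derivation of $\val$, and after applying the substitution lemma to the derivations of $\tmfour$ and $\val$ one simply re-attaches that same chain of $\ruleES$ rules around the resulting derivation of $\tmfour\isub{\var}{\val}$---no redistribution of the value's type through the $\ruleES$ rules is needed. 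The paper handles both root cases this way (direct unrolling of the $\sctx$-chain) rather than by a sub-induction on $\sctx$.
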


\begin{proof}
	\begin{enumerate}
		\item By induction on	the evaluation context $\ctx$ in the step $\tm = \ctxp{\tmtwo} \tovsub \ctxp{\tmtwo'} = \tm'$ with $\tmtwo \rtom \tmtwo'$ or $\tmtwo' \rtoe \tmtwo'$. 
		The proof is analogous to the one for open quantitative subject reduction (\Cref{prop:weak-subject-reduction}), except that now there is one more case:
		\begin{itemize}
			\item \emph{Abstraction}, \ie $\ctx = \la{\var}{\ctxtwo}$. 
			Then, $\tm = \ctxp{\tmtwo} = \la{\var}{\ctxtwop{\tmtwo}} \allowbreak\rootRew{a} \la{\var}{\ctxtwop{\tmtwo'}} = \ctxp{\tmtwo'} = \tm'$ with $\tmtwo \rootRew{a} \tmtwo'$ and $a \in \{\msym, \esym\}$.
			Then, the derivation $\tderiv$ is necessarily (for some $n \in \nat$)
			\begin{equation*}
			\tderiv = 
			\begin{prooftree}[separation=1em]
			\hypo{}
			\ellipsis{$\tderivtwo_i$}{\typctx_i, \var \hastype \mtypetwo_1 \vdash \ctxtwop{\tmtwo} \hastype \mtypethree_i}
			\infer1[\footnotesize$\lambda$]{\typctx_i \vdash \la{\var}\ctxtwop{\tmtwo} \hastype \larrow{\mtypetwo_i}{\mtypethree_i}}
			\delims{ \left( }{ \right)_{1 \leq i \leq n} }
			\infer1[\footnotesize$\ruleManyVal$]{\bigmplus_{i=1}^n \typctx_i \vdash \la{\var}\ctxtwop{\tmtwo} \hastype \bigmplus_{i=1}^{n} \mset{\larrow{\mtypetwo_i}{\mtypethree_i}} }
			\end{prooftree}
			\end{equation*}
			By \ih, for all $1 \leq i \leq n$, there is a derivation $\concl{\tderivtwo_i'}{\typctx_1, \var \hastype \mtypetwo_i}{\ctxtwop{\tmtwo'}}{\mtypethree_i}$ with: 
			$\size{\tderivtwo_i'} \leq \size{\tderivtwo_i}$ and $\sizem{\tderivtwo_i'} \leq \sizem{\tderivtwo_i}$.
			We can then build the derivation 
			\begin{equation*}
			\tderiv' = 
			\begin{prooftree}[separation=1em]
			\hypo{}
			\ellipsis{$\tderivtwo'_i$}{\typctx_i, \var \hastype \mtypetwo_1 \vdash \ctxtwop{\tmtwop} \hastype \mtypethree_i}
			\infer1[\footnotesize$\lambda$]{\typctx_i \vdash \la{\var}\ctxtwop{\tmtwop} \hastype \larrow{\mtypetwo_i}{\mtypethree_i}}
			\delims{ \left( }{ \right)_{1 \leq i \leq n} }
			\infer1[\footnotesize$\ruleManyVal$]{\bigmplus_{i=1}^n \typctx_i \vdash \la{\var}\ctxtwop{\tmtwop} \hastype \bigmplus_{i=1}^{n} \mset{\larrow{\mtypetwo_i}{\mtypethree_i}} }
			\end{prooftree}
			\end{equation*}
			where
			\begin{enumerate}
				\item $\size{\tderiv'} = 1 + \sum_{i=1}^n \size{\tderivtwo_i'} \leq 1 + \sum_{i=1}^n \size{\tderivtwo_i} = \size{\tderiv}$ (note that $\size{\tderiv'} = \size{\tderiv}$ if $n = 0$); 
				\item $\sizem{\tderiv'} = \sum_{i=1}^n \sizem{\tderivtwo_i'} \leq  \sum_{i=1}^n \sizem{\tderivtwo_i} = \sizem{\tderiv}$ (note that $\sizem{\tderiv'} = \sizem{\tderiv}$ if $n = 0$).
			\end{enumerate}
		\end{itemize}
		
		\item By induction on	the evaluation context $\ctx$ in the step $\tm = \ctxp{\tmtwo} \tovsub \ctxp{\tmtwo'} = \tm'$ with $\tmtwo \rtom \tmtwo'$ or $\tmtwo' \rtoe \tmtwo'$. 
		The proof is analogous to the one for open quantitative subject expansion (\Cref{prop:weak-subject-expansion}), except that now there is one more case:
		\begin{itemize}
			\item \emph{Abstraction}, \ie $\ctx = \la{\var}{\ctxtwo}$. 
			Then, $\tm = \ctxp{\tmtwo} = \la{\var}{\ctxtwop{\tmtwo}} \allowbreak\rootRew{a} \la{\var}{\ctxtwop{\tmtwo'}} = \ctxp{\tmtwo'} = \tm'$ with $\tmtwo \rootRew{a} \tmtwo'$ and $a \in \{\msym, \esym\}$.
			Then, the derivation $\tderiv'$ is necessarily (for some $n \in \nat$)
			\begin{equation*}
			\tderiv' = 
			\begin{prooftree}[separation=1em]
			\hypo{}
			\ellipsis{$\tderivtwo'_i$}{\typctx_i, \var \hastype \mtypetwo_1 \vdash \ctxtwop{\tmtwop} \hastype \mtypethree_i}
			\infer1[\footnotesize$\lambda$]{\typctx_i \vdash \la{\var}\ctxtwop{\tmtwop} \hastype \larrow{\mtypetwo_i}{\mtypethree_i}}
			\delims{ \left( }{ \right)_{1 \leq i \leq n} }
			\infer1[\footnotesize$\ruleManyVal$]{\bigmplus_{i=1}^n \typctx_i \vdash \la{\var}\ctxtwop{\tmtwop} \hastype \bigmplus_{i=1}^{n} \mset{\larrow{\mtypetwo_i}{\mtypethree_i}} }
			\end{prooftree}
			\end{equation*}
			By \ih, for all $1 \leq i \leq n$, there is a derivation $\concl{\tderivtwo_i}{\typctx_1, \var \hastype \mtypetwo_i}{\ctxtwop{\tmtwo}}{\mtypethree_i}$ with
			$\size{\tderivtwo_i'} \leq \size{\tderivtwo_i}$ and
			$\sizem{\tderivtwo_i'} \leq \sizem{\tderivtwo_i}$.
			We can then build the derivation 
			\begin{equation*}
			\tderiv = 
			\begin{prooftree}[separation=1em]
			\hypo{}
			\ellipsis{$\tderivtwo_i$}{\typctx_i, \var \hastype \mtypetwo_1 \vdash \ctxtwop{\tmtwo} \hastype \mtypethree_i}
			\infer1[\footnotesize$\lambda$]{\typctx_i \vdash \la{\var}\ctxtwop{\tmtwo} \hastype \larrow{\mtypetwo_i}{\mtypethree_i}}
			\delims{ \left( }{ \right)_{1 \leq i \leq n} }
			\infer1[\footnotesize$\ruleManyVal$]{\bigmplus_{i=1}^n \typctx_i \vdash \la{\var}\ctxtwop{\tmtwo} \hastype \bigmplus_{i=1}^{n} \mset{\larrow{\mtypetwo_i}{\mtypethree_i}} }
			\end{prooftree}
			\end{equation*}
			where
			\begin{enumerate}
				\item $\size{\tderiv} = 1 + \sum_{i=1}^n \size{\tderivtwo_i} \leq 1 + \sum_{i=1}^n \size{\tderivtwo_i'} = \size{\tderiv'}$ (note that $\size{\tderiv'} = \size{\tderiv}$ if $n = 0$); 
				\item $\sizem{\tderiv} = \sum_{i=1}^n \sizem{\tderivtwo_i} \leq  \sum_{i=1}^n \sizem{\tderivtwo_i'} = \sizem{\tderiv'}$ (note that $\sizem{\tderiv'} = \sizem{\tderiv}$ if $n = 0$).
						\qedhere
			\end{enumerate}
		\end{itemize} 

	\end{enumerate}
\end{proof}

\begin{proposition}[Invariance under evaluation]
	\label{propappendix:invariance}
	\NoteState{prop:invariance}
	Let $\vec{\var}$ be suitable for $\tm$. 
	If $\tm \tovsub^* \tm'$ then $\sem{\tm}_{\vec{\var}} = \sem{\tm'}_{\vec{\var}}$.
\end{proposition}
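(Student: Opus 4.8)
The plan is to reduce the statement to qualitative subject reduction and expansion (\Cref{prop:qual-subject}), proceeding by induction on the length $k$ of the evaluation $\tm \tovsub^* \tm'$. The base case $k = 0$ is trivial, since then $\tm = \tm'$. For the inductive step I would decompose the evaluation as $\tm \tovsub \tmtwo \tovsub^* \tm'$: by the induction hypothesis $\sem{\tmtwo}_{\vec\var} = \sem{\tm'}_{\vec\var}$, so it suffices to treat a single step, \ie to show $\sem{\tm}_{\vec\var} = \sem{\tmtwo}_{\vec\var}$ whenever $\tm \tovsub \tmtwo$.

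Before that, I would record a small bookkeeping fact: $\tovsub$ does not create free variables, so $\fv{\tmtwo} \subseteq \fv{\tm} \subseteq \{\var_1, \dots, \var_n\}$, whence $\vec\var$ is still suitable for $\tmtwo$ and $\sem{\tmtwo}_{\vec\var}$ is well defined; moreover, by \refrmk{free-variables}, every type context $\typctx$ occurring in a derivation of $\tm$ (or of $\tmtwo$) satisfies $\dom{\typctx} \subseteq \fv{\tm} \subseteq \{\var_1, \dots, \var_n\}$, so recording its values along $\vec\var$ loses no information. This is what lets me move freely between the relational semantics, which fixes the variable list $\vec\var$, and the type contexts appearing in derivations, which are total functions on variables.

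For the single-step case, the inclusion $\sem{\tm}_{\vec\var} \subseteq \sem{\tmtwo}_{\vec\var}$ is immediate from \Cref{prop:qual-subject}.\ref{p:qual-subject-reduction}: from a witness $\concl{\tderiv}{\var_1 \hastype \mtypetwo_1, \dots, \var_n \hastype \mtypetwo_n}{\tm}{\mtype}$ of $((\mtypetwo_1, \dots, \mtypetwo_n), \mtype) \in \sem{\tm}_{\vec\var}$, subject reduction produces a derivation $\concl{\tderiv'}{\var_1 \hastype \mtypetwo_1, \dots, \var_n \hastype \mtypetwo_n}{\tmtwo}{\mtype}$ with the very same final judgment, witnessing $((\mtypetwo_1, \dots, \mtypetwo_n), \mtype) \in \sem{\tmtwo}_{\vec\var}$. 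Symmetrically, the reverse inclusion $\sem{\tmtwo}_{\vec\var} \subseteq \sem{\tm}_{\vec\var}$ follows from \Cref{prop:qual-subject}.\ref{p:qual-subject-expansion}. Combining the two inclusions closes the single-step case and hence the induction. There is essentially no obstacle here: all the mathematical content sits in \Cref{prop:qual-subject}, and the only care needed is the routine alignment between $\sem{\cdot}_{\vec\var}$ and derivation contexts discussed above.
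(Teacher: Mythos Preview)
Your proof is correct and follows essentially the same approach as the paper: reduce to a single step and then use qualitative subject reduction and expansion (\Cref{prop:qual-subject}) for the two inclusions. The extra bookkeeping you include about $\vec\var$ remaining suitable for the reduct is a small refinement the paper leaves implicit.
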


\begin{proof}
	It is enough to show that $\tm \tovsub \tm'$ implies $\sem{\tm}_{\vec{\var}} = \sem{\tm'}_{\vec{\var}}$.
	Let $\tm \tovsub \tm'$.
	For any $\concl{\tderiv}{\var_1 \hastype \mtypetwo_1, \dots, \var_n \hastype \mtypetwo_n}{\tm}{\mtype}$, there is $\concl{\tderiv'}{\var_1 \hastype \mtypetwo_1, \dots, \var_n \hastype \mtypetwo_n}{\tm'}{\mtype}$ by subject reduction (\Cref{prop:qual-subject}.\ref{p:qual-subject-reduction}), so $\sem{\tm}_{\vec{\var}} \subseteq \sem{\tm'}_{\vec{\var}}$.
	Conversely, for any $\concl{\tderiv'}{\var_1 \hastype \mtypetwo_1, \dots, \var_n \hastype \mtypetwo_n}{\tm'}{\mtype}$, there is $\concl{\tderiv}{\var_1 \hastype \mtypetwo_1, \dots, \var_n \hastype \mtypetwo_n}{\tm}{\mtype}$ by subject expansion (\Cref{prop:qual-subject}.\ref{p:qual-subject-expansion}), so $\sem{\tm'}_{\vec{\var}} \subseteq \sem{\tm}_{\vec{\var}}$.
\end{proof}

\section{Proofs of \Cref{sect:open}}

\begin{remark}[Merging and splitting inertness]
	\label{rmk:merge-split-inert}
	Let $\mtype, \mtypetwo, \mtypethree$ be multi types with $\mtype = \mtypetwo \mplus \mtypethree$; then, $\mtype$ is inert iff $\mtypetwo$ and $\mtypethree$ are inert.
	Similarly for type contexts.
\end{remark}

\begin{lemma}
	[Spreading of inertness on judgments]
	\label{lappendix:spread-inert}
	\NoteState{l:spread-inert}
	Let $\concl{\tderiv}{\typctx}{\itm}{\mtype}$ be a derivation and $\itm$ be an inert term. 
	If \,$\typctx$ is a inert type context, then $\mtype$ is a inert multi type.
\end{lemma}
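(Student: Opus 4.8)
The plan is to argue by structural induction on the inert term $\itm$, following its grammar $\itm \grameq \var \mid \itm \fire \mid \itm \esub{\var}{\itmtwo}$ and, in each case, inspecting which rule can possibly end the derivation $\tderiv$. The inductive claim is exactly the statement: for an inert $\itm$, an inert typing context forces the subject type to be an inert multi type. The ingredients needed are just the merging/splitting remark (\refrmk{merge-split-inert}) for contexts under $\uplus$ and the observation that extending an inert context by $\var \hastype \mtypethree$ keeps it inert precisely when $\mtypethree$ is inert.

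First I would handle the base case $\itm = \var$. Since the conclusion of $\tderiv$ is a multi judgment and $\var$ is neither an application, nor an explicit substitution, nor an abstraction, the last rule must be $\ruleManyVar$, so $\tderiv$ is a $\ruleMany$ rule over a family of axioms; its conclusion has the shape $\var \hastype \mtype \vdash \var \hastype \mtype$, i.e.\ $\mtype = \typctx(\var)$. As $\typctx$ is inert, $\typctx(\var)$ is an inert multi type, and we are done (this also covers the empty case $\mtype = \emptymset$). Next, the application case $\itm = \itmtwo \fire$ with $\itmtwo$ inert: the last rule is necessarily $\ruleAp$, with left premise $\typctx_1 \vdash \itmtwo \hastype \mset{\larrow{\mtypethree}{\mtype}}$, right premise $\typctx_2 \vdash \fire \hastype \mtypethree$, and $\typctx = \typctx_1 \uplus \typctx_2$. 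By \refrmk{merge-split-inert}, $\typctx_1$ is inert; the \ih applied to the left premise gives that $\mset{\larrow{\mtypethree}{\mtype}}$ is an inert multi type, hence $\larrow{\mtypethree}{\mtype}$ is an inert linear type. Since an arrow type is syntactically distinct from $\ground$, this inert linear type must have the form $\larrow{n\mset{\ground}}{\imtype}$ with $\imtype$ inert, and matching the arrow structure yields $\mtype = \imtype$ inert. Finally, the explicit-substitution case $\itm = \itmtwo \esub{\var}{\itmthree}$ with $\itmtwo,\itmthree$ inert: the last rule is $\ruleES$, with premises $\typctx_1, \var \hastype \mtypethree \vdash \itmtwo \hastype \mtype$ and $\typctx_2 \vdash \itmthree \hastype \mtypethree$, and $\typctx = \typctx_1 \uplus \typctx_2$. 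Again $\typctx_1$ and $\typctx_2$ are inert by \refrmk{merge-split-inert}; the \ih on the right premise (with $\itmthree$ inert) gives that $\mtypethree$ is inert, so $\typctx_1, \var \hastype \mtypethree$ is an inert context, and then the \ih on the left premise (with $\itmtwo$ inert) gives that $\mtype$ is inert.

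The only step that is not pure bookkeeping is the application case, specifically the observation that an arrow-shaped inert linear type is forced to have a ground-only domain $n\mset{\ground}$ and an inert codomain $\imtype$: this is where the exact shape of the grammar of inert linear types is genuinely used, and it is the reason the statement is phrased with an arbitrary inert $\mtype$ rather than, say, a ground one. Everything else reduces to stability of inertness under $\uplus$ and under the extension $\var \hastype \mtypethree$; the \ih is invoked once in the application case and twice (right premise, then left premise) in the substitution case.
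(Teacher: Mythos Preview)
Your proof is correct and follows exactly the same approach as the paper: structural induction on the inert term, with the same case analysis and the same uses of \refrmk{merge-split-inert} and of the \ih (once on the left premise in the application case, and right-then-left in the explicit substitution case). Your application case is slightly more explicit than the paper's in unpacking why an inert singleton $\mset{\larrow{\mtypethree}{\mtype}}$ forces $\mtype$ to be inert, but otherwise the two proofs coincide.
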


\begin{proof}
	By induction on the definition of inert terms $\itm$.
	Cases:
	\begin{itemize}
		\item \emph{Variable}, \ie $\itm = \var$. Then necessarily, for some $n \in \nat$,
		\begin{equation*}
		\tderiv = 
		\begin{prooftree}
		\infer0[\footnotesize$\ruleAx$]{\tyjp{}{\var}{\var \hastype \mset{\ltype_1}}{\ltype_1}}
		\hypo{\overset{n \in \nat}{\ldots}}
		\infer0[\footnotesize$\ruleAx$]{\tyjp{}{\var}{\var \hastype \mset{\ltype_n}}{\ltype_n}}
		\infer3[\footnotesize$\ruleManyVar$]{\tyjp{}{\var}{\typctx}{\mtype}}
		\end{prooftree}
		\end{equation*}
		where $\mtype = \mset{\ltype_1, \dots, \ltype_n}$ and $\typctx = \var \hastype \mtype$. 
		As $\typctx$ is inert (by hypothesis), so is $\mtype$.
		
		\item \emph{Application}, \ie $\fire = \itm \firetwo$. Then necessarily
		\begin{equation*}
		\tderiv = 
		\begin{prooftree}
		\hypo{}
		\ellipsis{$\tderivtwo$}{\typctxtwo \vdash \itm \hastype \mult{\ty{\mtypetwo}{\mtype}}}
		\hypo{}
		\ellipsis{$\tderivthree$}{\typctxthree \vdash \firetwo \hastype\mtypetwo}
		\infer2[\footnotesize$\ruleApp$]{\tyjp{}{\itm \firetwo}{\typctxtwo \mplus \typctxthree}{\mtype}}
		\end{prooftree}
		\end{equation*}
		where $\typctx = \typctxtwo \mplus \typctxthree$.
		As $\typctx$ is inert, so is $\typctxtwo$, according to \Cref{rmk:merge-split-inert}.
		By \ih applied to $\tderivtwo$, $\mult{\ty{\mtypetwo}{\mtype}}$ is inert and hence $\mtype$ is inert.
		
		\item \emph{Explicit substitution on inert}, \ie $\itm = \itmthree \esub{\var}{\itmtwo}$. Then necessarily
		\begin{equation*}
		\tderiv = 
		\begin{prooftree}
		\hypo{}
		\ellipsis{$\tderivtwo$}{\typctxtwo, \var \hastype \mtypetwo \vdash \itmthree \hastype \mtype}
		\hypo{}
		\ellipsis{$\tderivthree$}{\typctxthree \vdash \itmtwo \hastype \mtypetwo}
		\infer2[\footnotesize$\ruleES$]{\tyjp{}{\itmthree \esub{\var}{\itmtwo}}{\typctxtwo \mplus \typctxthree}{\mtype}}
		\end{prooftree}
		\end{equation*}
		where $\typctx = \typctxtwo \mplus \typctxthree$.
		As $\typctx$ is inert, so are $\typctxtwo$ and $\typctxthree$, according to \Cref{rmk:merge-split-inert}.
		By \ih applied to $\tderivthree$, $\mtypetwo$ is inert. 
		Therefore, $\typctxtwo, \var \hastype \mtypetwo$ is a inert type context.
		By \ih applied to $\tderivtwo$, the multi type $\mtype$ is inert.
		\qedhere
	\end{itemize}
\end{proof}

\paragraph{Correctness}

\begin{lemma}[Size of fireballs]
	\label{lappendix:size-fireballs}
	\NoteState{l:size-fireballs}
	Let $\fire$ be a fireball. 
	If $\namedtyjp{\tderiv}{}{\fire}{\typctx}{\mtype}$
	then $\sizem{\tderiv} \geq \sizeo{\fire}$.
	If, moreover, $\typctx$ is inert and ($\mtype$ is ground inert or $\fire$ is inert), then 
$\sizem{\tderiv} = \sizeo{\fire}$.
\end{lemma}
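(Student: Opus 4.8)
The plan is to prove, simultaneously by induction on the mutually inductive definition of fireballs and inert terms (equivalently, by induction on term size), the following three statements, of which the lemma is an immediate consequence: \textbf{(1)} for every fireball $\fire$ and every derivation $\concl{\tderiv}{\typctx}{\fire}{\mtype}$ one has $\sizem{\tderiv} \geq \sizeo{\fire}$; \textbf{(2)} for every inert term $\itm$ and every derivation $\concl{\tderiv}{\typctx}{\itm}{\mtype}$ with $\typctx$ inert one has $\sizem{\tderiv} = \sizeo{\itm}$ (no constraint on $\mtype$); \textbf{(3)} for every fireball $\fire$ and every derivation $\concl{\tderiv}{\typctx}{\fire}{\mtype}$ with $\typctx$ inert and $\mtype$ ground one has $\sizem{\tderiv} = \sizeo{\fire}$. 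Indeed (1) is the first half of the lemma, and when $\typctx$ is inert the hypothesis $\mtype$ ground or $\fire$ inert lets us invoke (3) or (2), respectively. The argument rests on a few facts already available: spreading of inertness (\Cref{l:spread-inert}); merging/splitting of inertness for contexts and multi types (\Cref{rmk:merge-split-inert}); the shape of inert linear types ($\ground$ or $\larrow{n\mset{\ground}}{\imtype}$); and the observation, by inspection of rules $\ruleMany$ and $\ruleFun$, that a value can be typed with a ground multi type only if that type is $\emptytype$, since a linear type introduced by $\ruleFun$ is an arrow, never $\ground$.

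For \textbf{(1)} I would argue by cases on $\fire$. If $\fire = \var$, the derivation uses only $\ruleAx$ and $\ruleManyVar$, so $\sizem{\tderiv} = 0 = \sizeo{\var}$. If $\fire$ is a value $\la{\var}{\tm}$, the derivation ends with $\ruleMany$ on $n \geq 0$ premises each closed by $\ruleFun$, so $\sizem{\tderiv} \geq n \geq 0 = \sizeo{\fire}$. If $\fire = \itm\firetwo$ (an application, hence inert), the last rule is $\ruleAp$ on sub-derivations $\tderivtwo$ of $\itm$ and $\tderivthree$ of $\firetwo$, and $\sizem{\tderiv} = \sizem{\tderivtwo} + \sizem{\tderivthree} + 1 \geq \sizeo{\itm} + \sizeo{\firetwo} + 1 = \sizeo{\fire}$ by \ih on the two fireballs $\itm$ and $\firetwo$. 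If $\fire = \firetwo\esub{\var}{\itm}$, the last rule is $\ruleES$ (not counted by $\sizem{\cdot}$), and $\sizem{\tderiv} = \sizem{\tderivtwo} + \sizem{\tderivthree} \geq \sizeo{\firetwo} + \sizeo{\itm} = \sizeo{\fire}$ by \ih.

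For \textbf{(2)} the cases are $\itm = \var$ (again $\sizem{\tderiv} = 0 = \sizeo{\var}$), $\itm = \itmtwo\firetwo$, and $\itm = \itmtwo\esub{\var}{\itmthree}$. In the application case, split $\typctx = \typctxtwo \uplus \typctxthree$ with both inert (\Cref{rmk:merge-split-inert}); \ih (2) on $\itmtwo$ gives $\sizem{\tderivtwo} = \sizeo{\itmtwo}$; crucially, spreading of inertness (\Cref{l:spread-inert}) on the left premise $\concl{\tderivtwo}{\typctxtwo}{\itmtwo}{\mset{\larrow{\mtype'}{\mtype}}}$ forces $\mset{\larrow{\mtype'}{\mtype}}$ to be inert, hence $\mtype'$ to be ground; so \ih (3) applies to the right premise typing $\firetwo$ (which need not be inert) and gives $\sizem{\tderivthree} = \sizeo{\firetwo}$, whence $\sizem{\tderiv} = \sizeo{\itm}$. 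In the ES case, \ih (2) on $\itmthree$ gives $\sizem{\tderivthree} = \sizeo{\itmthree}$ and, by spreading, that the shared type $\mtypetwo$ is inert, so the left premise types $\itmtwo$ in an inert context and \ih (2) closes the case. For \textbf{(3)}: if $\fire$ is a value then $\mtype$ ground forces $\mtype = \emptytype$, so $\ruleMany$ has no premises and $\sizem{\tderiv} = 0 = \sizeo{\fire}$; if $\fire$ is inert, (3) is a special case of (2); and if $\fire = \firetwo\esub{\var}{\itm}$ with $\firetwo$ not inert, then \ih (2) on $\itm$ gives $\sizem{\tderivthree} = \sizeo{\itm}$ and $\mtypetwo$ inert, so the left premise types $\firetwo$ in an inert context with the ground type $\mtype$, and \ih (3) on $\firetwo$ concludes.

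The main obstacle is the bookkeeping of hypotheses through the mutual induction: equality must be propagated across applications even when the argument of an inert head is a non-inert fireball. This works exactly because spreading of inertness converts \emph{``the head is typed in an inert context with an inert-looking arrow type''} into \emph{``the argument's type is ground''}, which is precisely the hypothesis of clause (3). The only other point needing care is the value case, where one must notice that a ground multi type collapses to $\emptytype$ (so $\sizem{\cdot} = 0$), rather than attempting to recurse into the body of the abstraction.
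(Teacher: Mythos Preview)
Your proposal is correct and follows essentially the same approach as the paper's proof: a mutual induction on the definition of fireballs and inert terms, using spreading of inertness (\Cref{l:spread-inert}) at applications to push a ground constraint onto the argument, and observing that a value typed with a ground multi type must have type $\emptytype$. Your explicit packaging into three clauses (1)--(3) is slightly cleaner than the paper's presentation, but the case analysis and the key steps are identical.
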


\begin{proof}
	By mutual induction on the definition of fireballs $\fire$ and inert terms $\itm$.
	Cases for inert terms:
	\begin{itemize}
		\item \emph{Variable}, \ie $\fire = \var$. Then necessarily, for some $n \in \nat$,
		\begin{equation*}
		\tderiv = 
		\begin{prooftree}
		\infer0[\footnotesize$\ruleAx$]{\tyjp{}{\var}{\var \hastype \mset{\ltype_1}}{\ltype_1}}
		\hypo{\overset{n \in \nat}{\ldots}}
		\infer0[\footnotesize$\ruleAx$]{\tyjp{}{\var}{\var \hastype \mset{\ltype_n}}{\ltype_n}}
		\infer3[\footnotesize$\ruleManyVar$]{\tyjp{}{\var}{\typctx}{\mtype}}
		\end{prooftree}
		\end{equation*}
		where $\mtype = \mset{\ltype_1, \dots, \ltype_n}$ and $\typctx = \var \hastype \mtype$. 
		Therefore, $\sizeo{\fire} = 0 = \sizem{\tderiv}$.
		
		\item \emph{Application}, \ie $\fire = \itm \firetwo$. Then necessarily
		\begin{equation*}
		\tderiv = 
		\begin{prooftree}
		\hypo{}
		\ellipsis{$\tderivtwo$}{\typctxtwo \vdash \itm \hastype \mult{\ty{\mtypetwo}{\mtype}}}
		\hypo{}
		\ellipsis{$\tderivthree$}{\typctxthree \vdash \firetwo \hastype\mtypetwo}
		\infer2[\footnotesize$\ruleApp$]{\tyjp{}{\itm \firetwo}{\typctxtwo \mplus \typctxthree}{\mtype}}
		\end{prooftree}
		\end{equation*}
		where $\typctx = \typctxtwo \mplus \typctxthree$.
		By \ih applied to both premises, $\sizem{\tderivtwo} \geq \size{\itm}$ and $\sizem{\tderivthree} \geq 
\size{\firetwo}$.
		Therefore, $\sizeo{\fire} = \sizeo{\itm} + \sizeo{\firetwo} + 1 \leq \sizem{\tderivtwo} + \sizem{\tderivthree} + 1 = 
\sizem{\tderiv}$.
		
		If, moreover, $\typctx$ is inert then so are $\typctxtwo$ and $\typctxthree$, according to 
\Cref{rmk:merge-split-inert}.
		By \ih for inert terms applied to $\tderivtwo$, $\sizem{\tderivtwo} = \sizeo{\itm}$.
		By spreading of inertness (\Cref{l:spread-inert}), $\mset{\larrow{\mtypetwo}{\mtype}}$ is inert, hence $\mtypetwo = \emptytype$.
		By \ih for fireballs applied to $\tderivthree$, $\sizem{\tderivthree} = \sizeo{\firetwo}$.
		Therefore, $\size{\fire} = \size{\itm} + \size{\firetwo} + 1 = \sizem{\tderivtwo} + \sizem{\tderivthree} + 1 = 
\sizem{\tderiv}$.
		
		\item \emph{Explicit substitution on inert}, \ie $\fire = \itm \esub{\var}{\itmtwo}$. Then necessarily
		\begin{equation*}
		\tderiv = 
		\begin{prooftree}
		\hypo{}
		\ellipsis{$\tderivtwo$}{\typctxtwo, \var \hastype \mtypetwo \vdash \itm \hastype \mtype}
		\hypo{}
		\ellipsis{$\tderivthree$}{\typctxthree \vdash \itmtwo \hastype \mtypetwo}
		\infer2[\footnotesize$\ruleES$]{\tyjp{}{\itm \esub{\var}{\itmtwo}}{\typctxtwo \mplus \typctxthree}{\mtype}}
		\end{prooftree}
		\end{equation*}
		where $\typctx = \typctxtwo \mplus \typctxthree$.
		We can then apply \ih to both premises: $\sizem{\tderivtwo} \geq \sizeo{\itm}$ and $\sizem{\tderivthree} \geq \sizeo{\itmtwo}$. 
		Therefore, $\sizeo{\sfire} = \sizeo{\sitm} + \sizeo{\itmtwo} \leq \sizem{\tderivtwo} + \sizem{\tderivthree} = \sizem{\tderiv}$.	
		
		If, moreover, $\typctx$ is inert then so are $\typctxtwo$ and $\typctxthree$, according to 
\Cref{rmk:merge-split-inert}.
		By spreading of inertness (\Cref{l:spread-inert}), $\mtypetwo$ is inert, hence $\typctxtwo, \var \hastype \mtypetwo$ is a inert type context.
		By \ih for inert terms applied to $\tderivtwo$ and $\tderivthree$, we have $\sizem{\tderivtwo} = \sizeo{\itm}$.
		and $\sizem{\tderivthree} = \sizeo{\itmtwo}$.
	 	So, $\sizeo{\fire} = \sizeo{\itm} + \sizeo{\itmtwo} = \sizem{\tderivtwo} + \sizem{\tderivthree} = \sizem{\tderiv}$.
		
	\end{itemize}

	Cases for fireballs that are not inert terms:
	\begin{itemize}
		\item \emph{Explicit substitution on fireball}, \ie $\fire = \firetwo \esub{\var}{\itm}$. Then necessarily
		\begin{equation*}
		\tderiv = 
		\begin{prooftree}
		\hypo{}
		\ellipsis{$\tderivtwo$}{\typctxtwo, \var \hastype \mtypetwo \vdash \firetwo \hastype \mtype}
		\hypo{}
		\ellipsis{$\tderivthree$}{\typctxthree \vdash \itm \hastype \mtypetwo}
		\infer2[\footnotesize$\ruleES$]{\tyjp{}{\firetwo \esub{\var}{\itm}}{\typctxtwo \mplus \typctxthree}{\mtype}}
		\end{prooftree}
		\end{equation*}
		where $\typctx = \typctxtwo \mplus \typctxthree$.
		We can then apply \ih to both premises: $\sizem{\tderivtwo} \geq \sizeo{\firetwo}$ and $\sizem{\tderivthree} \geq 
\sizeo{\itm}$. 
		Therefore, $\sizeo{\fire} = \sizeo{\firetwo} + \sizeo{\sitm} \leq \sizem{\tderivtwo} + \sizem{\tderivthree} = 
\sizem{\tderiv}$ 
		
		If, moreover, $\typctx$ is inert and $\mtype$ is ground inert, then $\typctxtwo$ and $\typctxthree$ are inert, according to \Cref{rmk:merge-split-inert}.
		By \ih for inert terms applied to $\tderivthree$, $\sizem{\tderivthree} = \sizeo{\itm}$.
		By spreading of inertness (\Cref{l:spread-inert}), $\mtypetwo$ is inert, hence $\typctxtwo, \var \hastype \mtypetwo$ is a inert type context.
		By \ih for fireballs applied to $\tderivtwo$, $\sizem{\tderivtwo} = \sizeo{\firetwo}$.
		Therefore, $\sizeo{\fire} = \sizeo{\firetwo} + \sizeo{\itm} = \sizem{\tderivtwo} + \sizem{\tderivthree} = 
\sizem{\tderiv}$.
		
		\item \emph{Abstraction}, \ie $\fire = \la{\var}{\tm}$. 
		Then necessarily, for some $n \in \nat$,
		\begin{equation*}
		\tderiv = 
		\begin{prooftree}[separation = 1em]
		\hypo{}
		\ellipsis{$\tderivtwo_1$}{\typctx_1, \var \hastype \mtypethree_1 \vdash \tm \hastype \mtypetwo_1}
		\infer1[\footnotesize$\ruleFun$]{\tyjp{}{\la{\var}{\tm}}{\typctx_1}{\ty{\mtypethree_1}{\mtypetwo_1}}}
		\hypo{\overset{n \in \nat}{\ldots}}
		\hypo{}
		\ellipsis{$\tderivtwo_n$}{\typctx_n, \var \hastype \mtypethree_n \vdash \tm \hastype \mtypetwo_n}
		\infer1[\footnotesize$\ruleFun$]{\tyjp{}{\la{\var}{\tm}}{\typctx_n}{\ty{\mtypethree_n}{\mtypetwo_n}}}
		\infer3[\footnotesize$\ruleManyVal$]{\tyjp{}{\la{\var}{\tm}}{\typctx}{\mtype}}
		\end{prooftree}
		\end{equation*}
		where $\mtype = \bigmplus_{i=1}^n\mset{\larrow{\mtypethree_i}{\mtypetwo_i}}$ and $\typctx = 
\bigmplus_{i=1}^n\typctx_i$. 
		Therefore, $\sizeo{\fire} = 0 \leq \sum_{i=1}^n(\sizem{\tderivtwo_i} + 1) = \sizem{\tderiv}$.
		
		If, moreover, $\mtype$ is ground inert, then necessarily $\mtype = \emptytype$ and $n = 0$, hence $\tderiv$ consist of the rule $\ruleManyVal$ with $0$ premises.
		Therefore, $\sizeo{\fire}= 0 = \sizem{\tderiv}$. 
		\qedhere
	\end{itemize}
\end{proof}

\begin{proposition}[Open quantitative subject reduction]
	\label{propappendix:weak-subject-reduction}
	\NoteState{prop:weak-subject-reduction}
	Let $\namedtyjp{\tderiv}{}{\tm}{\typctx}{\mtype}$ be a derivation.
	\begin{enumerate}
		\item If $\tm \towm \tm'$ then there exists a derivation $\namedtyjp{\tderiv'}{}{\tm'}{\typctx}{\mtype}$ such that
		$\sizem{\tderiv'} = \sizem{\tderiv} - 2$ and $\size{\tderiv'} = \size{\tderiv} - 1$; 
		\item If $\tm \towe \tm'$ then there exists a derivation $\namedtyjp{\tderiv'}{}{\tm'}{\typctx}{\mtype}$ such that
		$\sizem{\tderiv'} = \sizem{\tderiv}$ and $\size{\tderiv'} < \size{\tderiv}$.
	\end{enumerate}
\end{proposition}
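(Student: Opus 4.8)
The plan is to proceed by induction on the open evaluation context $\weakctx$ in the reduction step $\tm = \weakctxp{\tmtwo} \tovsubo \weakctxp{\tmtwo'} = \tm'$ with $\tmtwo \rootRew{a} \tmtwo'$ for $a \in \set{\msym,\esym}$: the two root cases ($a = \msym$ and $a = \esym$) are the base cases, and the four open-context formations ($\weakctx = \weakctxtwo\tmthree$, $\weakctx = \tmthree\weakctxtwo$, $\weakctx = \weakctxtwo\esub\var\tmthree$, $\weakctx = \tmthree\esub\var\weakctxtwo$) are the inductive ones. The inductive cases are routine: since the contracted subterm is an application (for $\msym$) or an explicit substitution (for $\esym$), and since a multi judgment for an application (resp.\ an explicit substitution) can only end with rule $\ruleAp$ (resp.\ $\ruleES$)---rule $\ruleMany$ applying only to theoretical values, so the redex is always ``visible'' in $\tderiv$---the derivation $\tderiv$ ends with the rule matching the outermost constructor of $\weakctxp{\tmtwo}$, the reduced subterm lies inside exactly one of its premises, and the $\ih$ rewrites that premise while leaving the other premise, the type context, and the conclusion type unchanged; the size variations are inherited additively.

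For the multiplicative base case, $\tmtwo = \subctxp{\la\var\tmfour}\tmfive \rtom \subctxp{\tmfour\esub\var\tmfive} = \tmtwo'$, the derivation $\tderiv$ ends with an $\ruleAp$ rule whose left premise types $\subctxp{\la\var\tmfour}$ with a singleton multi type $\mset{\larrow{\mtype'}{\mtype}}$ and whose right premise types $\tmfive$ with $\mtype'$. By an inner induction on the substitution context $\subctx$, the left premise decomposes as a stack of $\ruleES$ rules (one per explicit substitution of $\subctx$, with the same sub-derivations typing the $\tm_i$'s as in $\tderiv$) on top of a $\ruleMany$ rule with a single premise, which is a $\ruleFun$ rule deriving $\la\var\tmfour \hastype \larrow{\mtype'}{\mtype}$ from a premise $\typctx_0, \var \hastype \mtype' \vdash \tmfour \hastype \mtype$. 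We build $\tderiv'$ by feeding this premise and the right premise of $\ruleAp$ into a fresh $\ruleES$ rule deriving $\tmfour\esub\var\tmfive \hastype \mtype$, then re-applying the $\ruleES$ rules of $\subctx$. One $\ruleAp$ and one $\ruleFun$ rule disappear, one $\ruleES$ rule appears, one $\ruleMany$ rule is deleted; since $\ruleMany$ weighs $0$ for both sizes, $\ruleES$ weighs in $\size{\cdot}$ but not in $\sizem{\cdot}$, and $\ruleAp,\ruleFun$ weigh in both, we get $\size{\tderiv'} = \size{\tderiv} - 2 + 1 = \size{\tderiv}-1$ and $\sizem{\tderiv'} = \sizem{\tderiv} - 2$.

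For the exponential base case, $\tmtwo = \tmfour\esub\var{\subctxp\val} \rtoe \subctxp{\tmfour\isub\var\val} = \tmtwo'$, the derivation $\tderiv$ ends with an $\ruleES$ rule with left premise $\typctx', \var \hastype \mtypetwo \vdash \tmfour \hastype \mtype$ and right premise typing $\subctxp\val$ with $\mtypetwo$. An inner induction on $\subctx$, together with an analysis of how rule $\ruleMany$ types the value $\val$, peels the $\ruleES$ layers of $\subctx$ off the right premise and exposes a derivation of $\val \hastype \mtypetwo$. Applying the Substitution lemma (\reflemma{substitution}) to the $\tmfour$-derivation and this $\val$-derivation yields a derivation of $\tmfour\isub\var\val \hastype \mtype$ with the same multiplicative size and no larger general size; wrapping it back in the $\ruleES$ rules of $\subctx$ gives $\tderiv'$. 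The outermost $\ruleES$ rule of $\tderiv$ has vanished, so $\size{\tderiv'} \le \size{\tderiv} - 1 < \size{\tderiv}$; that rule weighs $0$ for $\sizem{\cdot}$ and the Substitution lemma preserves $\sizem{\cdot}$ exactly, hence $\sizem{\tderiv'} = \sizem{\tderiv}$.

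The main obstacle is not conceptual but the bookkeeping forced by the ``at a distance'' rewrite rules: one must carefully set up the auxiliary inner inductions on substitution contexts that peel the $\ruleES$ layers, and in the exponential case correctly route the $\ruleMany$ rule typing the value inside the answer through the typing-of-values analysis before invoking \reflemma{substitution}. The delicate accounting point---and the source of the asymmetric bounds $(-2,-1)$ for $\msym$-steps versus $(0,{<})$ for $\esym$-steps---is precisely that $\ruleMany$ is weightless for both sizes, $\ruleES$ is weightless only for $\sizem{\cdot}$, and $\ruleAp,\ruleFun$ are counted everywhere; the strict decrease in the exponential case comes for free from the disappearance of the outermost $\ruleES$ rule.
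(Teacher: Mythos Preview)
Your proposal is correct and follows essentially the same approach as the paper's proof: induction on the open evaluation context $\weakctx$, with the base (hole) case split into the multiplicative and exponential root steps, handling the substitution context $\subctx$ by peeling the stack of $\ruleES$ rules, and invoking the Substitution lemma (\reflemma{substitution}) in the exponential case. Your size accounting---including the observation that the strict decrease of $\size{\cdot}$ in the exponential case comes from the disappearance of the outermost $\ruleES$---matches the paper exactly.
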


\begin{proof}
	By induction on the open evaluation context $\weakctx$ such that $\tm = \weakctxp{\tmtwo} \tow \weakctxp{\tmtwo'} = \tm'$ with $\tmtwo \rtom \tmtwo'$ or $\tmtwo' \rtoe \tmtwo'$. 
	Cases for $\weakctx$:
	\begin{itemize}
		\item \emph{Hole}, \ie $\weakctx = \ctxhole$.
		Then there are two sub-cases:
		\begin{enumerate}
			\item \emph{Multiplicative}, \ie $\tm = \subctxp{\la\var\tmtwo}\tmthree \rtom  \subctxp{\tmtwo \esub{\var}{\tmthree}} = \tm'$.
			Then $\tderiv$ has necessarily the form:
			\begin{equation*}
			\tderiv = 
			\begin{prooftree}[separation = 1em]
			\hypo{}
			\ellipsis{$\tderivtwo$}{\typctx', \var \hastype \mtypetwo \vdash \tmtwo \hastype \mtype}
			\infer1[\footnotesize$\ruleFun$]{\typctx' \vdash \la\var\tmtwo \hastype \larrow{\mtypetwo}{\mtype}}
			\infer1[\footnotesize$\ruleManyVal$]{\typctx' \vdash \la\var\tmtwo \hastype \mset{\larrow{\mtypetwo}{\mtype}}}
			\hypo{}
			\ellipsis{$\tderiv_1$}{\quad}
			\infer2[\footnotesize$\ruleES$]{}
			\ellipsis{}{\quad}
			\hypo{}
			\ellipsis{$\tderiv_n$}{\quad}
			\infer2[\footnotesize$\ruleES$]{\typctx \vdash \subctxp{\la\var\tmtwo} \hastype \mset{\larrow{\mtypetwo}{\mtype}}}
			\hypo{}
			\ellipsis{$\tderivthree$}{\typctxtwo\vdash\tmthree \hastype \mtypetwo}
			\infer2[\footnotesize$\ruleApp$]{\typctx \uplus \typctxtwo \vdash \subctxp{\la\var\tmtwo}\tmthree \hastype \mtype}
			\end{prooftree}
			\end{equation*}
			
			with $\sizem{\tderiv} = 2 + \sizem{\tderivtwo} + \sizem{\tderivthree} + \sum_{i=1}^{n} \sizem{\tderiv_{i}}$ and $\size{\tderiv} = 2 + n + \size{\tderivtwo} + \size{\tderivthree} + \sum_{i=1}^{n} \size{\tderiv_{i}}$.
			We can then build $\tderiv'$ as follows:
			\begin{equation*}
			\tderiv' = 
			\begin{prooftree}
			\hypo{}
			\ellipsis{$\tderivtwo$}{\tyjp{}{\tmtwo}{\typctx' ; \var \hastype \mtypetwo}{\mtype}}
			\hypo{}
			\ellipsis{$\tderivthree$}{\tyjp{}{\tmthree}{\typctxtwo}{\mtypetwo}}
			\infer2[\footnotesize$\ruleES$]{\typctx'\uplus\typctxtwo \vdash\tmtwo\esub\var\tmthree \hastype \mtype}
			\hypo{}
			\ellipsis{$\tderiv_1$}{\quad}
			\infer2[\footnotesize$\ruleES$]{}
			\ellipsis{}{\quad}
			\hypo{}
			\ellipsis{$\tderiv_n$}{\quad}
			\infer2[\footnotesize$\Es$]{\typctx\uplus\typctxtwo \vdash \subctxp{\tmtwo\esub\var\tmthree} \hastype \mtype}
			\end{prooftree}
			\end{equation*}
			where $\sizem{\tderiv'} = \sizem{\tderivtwo} + \sizem{\tderivthree} + \sum_{i=1}^{n} \sizem{\tderiv_{i}} = \sizem{\tderiv} - 2$ and $\size{\tderiv'} = 1+ n + \size{\tderivtwo} + \size{\tderivthree} + \sum_{i=1}^{n} \size{\tderiv_{i}} = \size{\tderiv} - 1$.
			
			\item \emph{Exponential}, \ie $\tm = \tmtwo\esub\var{\subctxp{\tval}} \rtoe \subctxp{\tmtwo \isub{\var}{\tval}} = \tmp$.
			Then the derivation $\tderiv$ has necessarily the form:
			\begin{equation*}
			\tderiv = 
			\begin{prooftree}
			\hypo{}
			\ellipsis{$\tderivtwo$}{\tyjp{}{\tmtwo}{\typctxtwo, \var \hastype \mtypetwo}{\mtype}}
			\hypo{}
			\ellipsis{$\tderivthree$}{\tyjp{}{\tval}{\typctxthree'}{\mtypetwo}}
			\hypo{}
			\ellipsis{$\tderiv_1$}{\quad}
			\infer2[\footnotesize$\Es$]{}
			\ellipsis{}{\quad}
			\hypo{}
			\ellipsis{$\tderiv_n$}{\quad}
			\infer2[\footnotesize$\Es$]{\typctxthree \vdash \subctxp{\tval} \hastype \mtypetwo}
			\infer2[\footnotesize$\Es$]{\typctxtwo\uplus\typctxthree \vdash\tmtwo\esub\var{\subctxp{\tval}}\hastype \mtype}
			\end{prooftree}
			\end{equation*}
			where $\typctx = \typctxtwo \uplus \typctxthree$, $\sizem{\tderiv} = \sizem{\tderivtwo} + \sizem{\tderivthree} + \sum_{i=1}^{n} \sizem{\tderiv_{i}}$ and $\size{\tderiv} = 1 + n + \size{\tderivtwo} + \size{\tderivthree} + \sum_{i=1}^{n} \size{\tderiv_{i}}$.
			By the substitution lemma (\reflemma{substitution}), there is a derivation $\namedtyjp{\tderiv''}{}{\tmtwo\isub{\var}{\tval}}{\typctxtwo \mplus \typctxthree'}{\mtype}$
			such that $\sizem{\tderiv''} = \sizem{\tderivtwo} + \sizem{\tderivthree}$ and $\size{\tderiv''} \leq \size{\tderivtwo} + \size{\tderivthree}$.
			We can then build the following derivation $\tderiv'$:
			\begin{equation*}
			\tderiv' = 
			\begin{prooftree}
			\hypo{}
			\ellipsis{$\tderiv''$}{\typctxtwo \mplus\typctxthree' \vdash \tmtwo\isub\var \tval \hastype \mtype}
			\hypo{}
			\ellipsis{$\tderiv_1$}{\quad}
			\infer2[\footnotesize{$\Es$}]{}
			\ellipsis{}{}
			\hypo{}
			\ellipsis{$\tderiv_n$}{\quad}
			\infer2[\footnotesize$\Es$]{\typctxtwo\mplus\typctxthree \vdash \subctxp{\tmtwo\isub\var \val} \hastype \mtype}
			\end{prooftree}
			\end{equation*}
			where $\sizem{\tderiv'} = \sizem{\tderiv''} + \sum_{i=1}^{n} \sizem{\tderiv_{i}} = \sizem{\tderivtwo} + \sizem{\tderivthree} + \sum_{i=1}^{n} \sizem{\tderiv_{i}} = \sizem{\tderiv}$ and $\size{\tderiv'} = n + \size{\tderiv''} + \sum_{i=1}^{n} \size{\tderiv_{i}} \leq n + \size{\tderivtwo} + \size{\tderivthree} + \sum_{i=1}^{n} \size{\tderiv_{i}} < 1 + n + \size{\tderivtwo} + \size{\tderivthree} + \sum_{i=1}^{n} \size{\tderiv_{i}} = \size{\tderiv}$ ($\tderiv'$ contains at least one rule $\Es$ less than $\tderiv$).
		\end{enumerate}
		
		\item \emph{Application left}, \ie $\weakctx = \weakctxtwo\tmthree$.
		Then, $\tm = \weakctxp{\tmtwo} = \weakctxtwop{\tmtwo} \tmthree \rootRew{a} \weakctxtwop{\tmtwo'} \tmthree = \weakctxp{\tmtwo'} = \tm'$ with $\tmtwo \rootRew{a} \tmtwo'$ and $a \in \{\msym, \esym\}$.
		The derivation $\tderiv$ is necessarily
		\begin{equation*}
		\tderiv = 
		\begin{prooftree}
		\hypo{}
		\ellipsis{$\tderivtwo$}{\tyjp{}{\weakctxtwop{\tmtwo}}{\typctxtwo}{\mult{\ty{\mtypetwo}{\mtype}}}}
		\hypo{}
		\ellipsis{$\tderivthree$}{\tyjp{}{\tmthree}{\typctxthree}{\mtypetwo}}
		\infer2[\footnotesize$\ruleAp$]{\tyjp{}{\weakctxtwop{\tmtwo} \tmthree}{\typctxtwo \mplus \typctxthree}{\mtype}}
		\end{prooftree}
		\end{equation*}
		where $\typctx = \typctxtwo \uplus \typctxthree$, $\sizem{\tderiv} = 1 + \sizem{\tderivtwo} + \sizem{\tderivthree}$ and $\size{\tderiv} = 1 + \size{\tderivtwo} + \size{\tderivthree}$.
		By \ih, there is a derivation $\namedtyjp{\tderivtwo'}{}{\weakctxtwop{\tmtwo'}}{\typctxtwo}{\mult{\ty{\mtypetwo}{\mtype}}}$ with: 
		\begin{enumerate}
			\item $\sizem{\tderivtwo'} = \sizem{\tderivtwo} - 2$ and $\size{\tderivtwo'} = \size{\tderivtwo} - 1$ if $\tmtwo \rtom \tmtwo'$; 
			\item $\sizem{\tderivtwo'} = \sizem{\tderivtwo}$ and $\size{\tderivtwo'} < \size{\tderivtwo}$ if $\tmtwo \rtoe \tmtwo'$.
		\end{enumerate}
		We can then build the derivation 
		\begin{equation*}
		\tderiv' = 
		\begin{prooftree}
		\hypo{}
		\ellipsis{$\tderivtwo'$}{\tyjp{}{\weakctxtwop{\tmtwo'}}{\typctxtwo}{\mult{\ty{\mtypetwo}{\mtype}}}}
		\hypo{}
		\ellipsis{$\tderivthree$}{\tyjp{}{\tmthree}{\typctxthree}{\mtypetwo}}
		\infer2[\footnotesize$\ruleAp$]{\tyjp{}{\weakctxtwop{\tmtwo'} \tmthree}{\typctxtwo \mplus \typctxthree}{\mtype}}
		\end{prooftree}
		\end{equation*}
		noting that
		\begin{enumerate}
			\item If $\tmtwo \rtom \tmtwo'$ then $\sizem{\tderiv'} = 1 + \sizem{\tderivtwo'} + \sizem{\tderivthree} = 1 + (\sizem{\tderivtwo} - 2) + \sizem{\tderivthree} = \sizem{\tderiv} - 2$ and $\size{\tderiv'} = 1 + \size{\tderivtwo'} + \size{\tderivthree} = 1 + (\size{\tderivtwo} - 1) + \size{\tderivthree} = \size{\tderiv} - 1$; 
			\item If $\tmtwo \rtoe \tmtwo'$ then $\sizem{\tderiv'} = 1 + \sizem{\tderivtwo'} + \sizem{\tderivthree} = 1 + \sizem{\tderivtwo} + \sizem{\tderivthree} = \sizem{\tderiv}$ and $\size{\tderiv'} = 1 + \size{\tderivtwo'} + \size{\tderivthree} < 1 + \size{\tderivtwo} + \size{\tderivthree} = \size{\tderiv}$.
		\end{enumerate}
		
		\item \emph{Application right}, \ie $\weakctx = \tmthree \weakctxtwo$.
		Analogous to the previous case.
		
		\item \emph{Explicit substitution left}, \ie $\weakctx = \weakctxtwo\esub{\var}{\tmthree}$. 
		Then, $\tm = \weakctxp{\tmtwo} = \weakctxtwop{\tmtwo} \esub{\var}{\tmthree} \rootRew{a} \weakctxtwop{\tmtwo'}\esub{\var}{\tmthree} = \weakctxp{\tmtwo'} = \tm'$ with $\tmtwo \rootRew{a} \tmtwo'$ and $a \in \{\msym, \esym\}$.
		The derivation $\tderiv$ is necessarily
		\begin{equation*}
		\tderiv = 
		\begin{prooftree}
		\hypo{}
		\ellipsis{$\tderivtwo$}{\tyjp{}{\weakctxtwop{\tmtwo}}{\typctxtwo ; \var \hastype \mtypetwo}{\mtype}}
		\hypo{}
		\ellipsis{$\tderivthree$}{\tyjp{}{\tmthree}{\typctxthree}{\mtypetwo}}
		\infer2[\footnotesize$\Es$]{\tyjp{}{\weakctxtwop{\tmtwo} \esub{\var}{\tmthree}}{\typctxtwo \mplus \typctxthree}{\mtype}}
		\end{prooftree}
		\end{equation*}
		where $\typctx = \typctxtwo \uplus \typctxthree$, $\sizem{\tderiv} = \sizem{\tderivtwo} + \sizem{\tderivthree}$ and $\size{\tderiv} = 1 + \size{\tderivtwo} + \size{\tderivthree}$.
		By \ih, there is a derivation $\namedtyjp{\tderivtwo'}{}{\weakctxtwop{\tmtwo'}}{\typctxtwo, \var \hastype \mtypetwo}{\mtype}$ with: 
		\begin{enumerate}
			\item $\sizem{\tderivtwo'} = \sizem{\tderivtwo} - 2$ and $\size{\tderivtwo'} = \size{\tderivtwo} - 1$ if $\tmtwo \rtom \tmtwo'$; 
			\item $\sizem{\tderivtwo'} = \sizem{\tderivtwo}$ and $\size{\tderivtwo'} < \size{\tderivtwo}$ if $\tmtwo \rtoe \tmtwo'$.
		\end{enumerate}
		We can then build the derivation 
		\begin{equation*}
		\tderiv' = 
		\begin{prooftree}
		\hypo{}
		\ellipsis{$\tderivtwo'$}{\tyjp{}{\weakctxtwop{\tmtwo'}}{\typctxtwo ; \var \hastype \mtypetwo}{\mtype}}
		\hypo{}
		\ellipsis{$\tderivthree$}{\tyjp{}{\tmthree}{\typctxthree}{\mtypetwo}}
		\infer2[\footnotesize$\Es$]{\typctxtwo \uplus \typctxthree \vdash \weakctxtwop{\tmtwo'} \esub{\var}{\tmthree} \hastype \mtype}
		\end{prooftree}
		\end{equation*}
		noting that 
		\begin{enumerate}
			\item If $\tmtwo \rtom \tmtwo'$ then $\sizem{\tderiv'} = \sizem{\tderivtwo'} + \sizem{\tderivthree} = (\sizem{\tderivtwo} - 2) + \sizem{\tderivthree} = \sizem{\tderiv} - 2$ and $\size{\tderiv'} = \size{\tderivtwo'} + \size{\tderivthree} = (\size{\tderivtwo} - 1) + \size{\tderivthree} = \size{\tderiv} - 1$; 
			\item If $\tmtwo \rtoe \tmtwo'$ then $\sizem{\tderiv'} = \sizem{\tderivtwo'} + \sizem{\tderivthree} = \sizem{\tderivtwo} + \sizem{\tderivthree} = \sizem{\tderiv}$ and $\size{\tderiv'} = \size{\tderivtwo'} + \size{\tderivthree} < \size{\tderivtwo} + \size{\tderivthree} = \size{\tderiv}$.
		\end{enumerate}
		
		\item \emph{Explicit substitution right}, \ie $\weakctx = \tmthree \esub{\var}{\weakctxtwo}$. 
		Analogous to the previous case.
		\qedhere
	\end{itemize}
\end{proof}

\begin{theorem}[Open correctness]
	\label{thmappendix:open-correctness}
	\NoteState{thm:open-correctness}
	Let $\derive{\tderiv}{\tm}$ be a derivation.
	Then there is an $\osym$-normalizing evaluation $\deriv \colon \tm \tovsubo^* \tmtwo$ with $2\sizem{\deriv} + \sizeo{\tmtwo} \leq \sizem{\tderiv}$.
And if $\tderiv$ is tight, then $2\sizem{\deriv} + \sizeo{\fire} = \sizem{\tderiv}$.
\end{theorem}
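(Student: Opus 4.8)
The plan is to chain together quantitative subject reduction (\Cref{prop:weak-subject-reduction}) and the size-of-fireballs lemma (\Cref{l:size-fireballs}), using the general size $\size{\tderiv}$ of the derivation as a termination measure. First I would observe that, by \Cref{prop:weak-subject-reduction}, every $\tovsubo$ step strictly decreases $\size{\tderiv}$: by exactly $1$ on an $\msym$-step and by at least $1$ on an $\esym$-step. Since $\size{\tderiv} \in \nat$, no $\osym$-evaluation issued from $\tm$ can be longer than $\size{\tderiv}$, hence $\tm$ is (strongly) $\osym$-normalizing. Fix any normalizing evaluation $\deriv \colon \tm \tovsubo^* \tmtwo$; by harmony (\Cref{prop:properties-open-reduction}) its $\osym$-normal endpoint $\tmtwo$ is a fireball.

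Next I would run an induction along $\deriv$. Starting from $\tderiv$ and applying \Cref{prop:weak-subject-reduction} at each step, one obtains, for every term on the path, a derivation with the \emph{same} final judgment $\typctx \vdash \tmthree \hastype \mtype$, where an $\msym$-step decreases $\sizem{\cdot}$ by exactly $2$ and an $\esym$-step leaves $\sizem{\cdot}$ unchanged (while $\size{\cdot}$ keeps decreasing, which is what made the evaluation terminate). Summing the contributions over the path gives $\sizem{\tderiv} = 2\sizem{\deriv} + \sizem{\tderiv_{\tmtwo}}$, where $\tderiv_{\tmtwo}$ is the derivation obtained for the fireball $\tmtwo$ and $\sizem{\deriv}$ is the number of $\msym$-steps in $\deriv$.

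Finally I would apply \Cref{l:size-fireballs} to $\tderiv_{\tmtwo}$: its first clause yields $\sizem{\tderiv_{\tmtwo}} \geq \sizeo{\tmtwo}$, hence $\sizem{\tderiv} \geq 2\sizem{\deriv} + \sizeo{\tmtwo}$, which is the general bound. For the tight case, note that subject reduction preserved the final judgment, so $\tderiv_{\tmtwo}$ is tight as well, i.e.\ $\typctx$ is inert and $\mtype$ is a ground multi type; since every ground multi type is inert, the equality hypothesis of \Cref{l:size-fireballs} ($\typctx$ inert and $\mtype$ ground inert) is met, giving $\sizem{\tderiv_{\tmtwo}} = \sizeo{\tmtwo}$ and therefore $2\sizem{\deriv} + \sizeo{\tmtwo} = \sizem{\tderiv}$.

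The whole argument is bookkeeping once \Cref{prop:weak-subject-reduction} and \Cref{l:size-fireballs} are available; the only points deserving care are that tightness, being a property of the final judgment alone, travels unchanged along $\deriv$ (immediate, as subject reduction keeps $\typctx$ and $\mtype$ fixed), and that the equality clause of the size lemma actually fires, which reduces to the remark that ground multi types are inert. I do not expect a genuine obstacle here; if desired, one may additionally invoke the diamond property of $\tovsubo$ (\Cref{prop:properties-open-reduction}) to observe that \emph{every} normalizing evaluation from $\tm$ has the same length and the same number of $\msym$- and $\esym$-steps, so the bound is independent of the chosen $\deriv$ --- this is not needed for the existential statement but clarifies its scope.
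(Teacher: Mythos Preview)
Your proposal is correct and follows essentially the same route as the paper: both use \Cref{prop:weak-subject-reduction} to decrease $\size{\tderiv}$ (ensuring termination) while tracking $\sizem{\cdot}$, and then conclude with \Cref{l:size-fireballs}. The only cosmetic difference is that the paper packages everything as a single induction on $\size{\tderiv}$, whereas you first argue termination and then run the bookkeeping along the chosen evaluation; the ingredients and the quantitative accounting are identical.
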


\begin{proof}
	Given the derivation (\resp tight derivation) $\concl{\tderiv}{\typctx}{\tm}{\mtype}$, we proceed by induction on the general size $\size{\tderiv}$ of $\tderiv$.
	
	If $\tm$ is normal for $\tovsubo$, then $\tm = \fire$ is a fireball.
	Let $\deriv$ be the empty evaluation (so $\sizem{\deriv} = 0$), thus $\sizem{\tderiv} \geq \sizeo{\fire} = \sizeo{\fire} + 2\sizem{\deriv}$ 
	(\resp $\sizem{\tderiv} = \sizeo{\fire} = \sizeo{\fire} + 2\sizem{\deriv}$) by \reflemma{size-fireballs}.
	
	Otherwise, $\tm$ is not normal for $\tomo$ and so $\tm \tovsubo \tmtwo$.
	According to open subject reduction (\Cref{prop:weak-subject-reduction}), there is a derivation $\concl{\tderivtwo}{\typctx}{\tmtwo}{\mtype}$ such that $\size{\tderivtwo} < \size{\tderiv}$  and 
	\begin{itemize}
		\item $\sizem{\tderivtwo} \leq \sizem{\tderiv} - 2$ (\resp $\sizem{\tderivtwo} = \sizem{\tderiv} - 2$) if $\tm \tomo \tmtwo$,
		\item $\sizem{\tderivtwo} = \sizem{\tderiv}$ if $\tm \toeo \tmtwo$.
	\end{itemize}
	By \ih, there exists a fireball $\fire$ and a reduction sequence $\deriv' \colon \tmtwo \tovsubo^* \fire$ with 
	$2\sizem{\deriv'} + \sizeo{\fire} \leq \sizem{\tderivtwo}$ (\resp $2\sizem{\deriv'} + \sizeo{\fire} = \sizem{\tderivtwo}$).
	Let $\deriv$ be the $\osym$-evaluation obtained by concatenating the first step $\tm \tovsubo \tmtwo$ and $\deriv'$.
	There are two cases:
	\begin{itemize}
		\item \emph{Multiplicative:} if $\tm \tomo \tmtwo$ then $\sizem{\tderiv} \geq \sizem{\tderivtwo} + 2 \geq \sizeo{\fire} 
		+ 2\sizem{\deriv'} + 2 = \sizeo{\fire} + 2\sizem{\deriv}$ (\resp $\sizem{\tderiv} = \sizem{\tderivtwo} + 2 = \sizeo{\fire} + 
		2\sizem{\deriv'} + 2 = \sizeo{\fire} + 2\size{\deriv}$), since $\sizem{\deriv} = \sizem{\deriv'} + 1$.
		\item \emph{Exponential:} if $\tm \toeo \tmtwo$ then $\sizem{\tderiv} = \sizem{\tderivtwo} \geq \sizeo{\fire} + 2\sizem{\deriv'} = \sizeo{\fire} + 2\sizem{\deriv}$ (\resp $\sizem{\tderiv} = \sizem{\tderivtwo} = \sizeo{\fire} + 2\sizem{\deriv'} = \sizeo{\fire} + 2\sizem{\deriv}$),  since $\sizem{\deriv} = \sizem{\deriv'}$.
		\qedhere
	\end{itemize} 
\end{proof}

\paragraph{Completeness}

\begin{proposition}[Tight typability of open normal forms]
	\label{propappendix:precise-open-typability-nf}
	\NoteState{prop:precise-open-typability-nf}
	\begin{enumerate}
		\item \emph{Inert:}\label{pappendix:precise-open-typability-nf-inert} if $\tm$ is an inert term then, for any inert multi type $\mtype$, there is an inert
		type derivation $\concl{\tderiv}{\typctx}{\tm}{\mtype}$.
		\item \emph{Fireball:}\label{pappendix:precise-open-typability-nf-fireball} if $\tm$ is a fireball then there is a tight derivation $\concl{\tderiv}{\typctx}{\tm}{\emptytype}$.		
	\end{enumerate}
\end{proposition}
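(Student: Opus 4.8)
The plan is to prove the two statements simultaneously by mutual induction on the structure of inert terms and fireballs (as given by the grammars recalled in \Cref{prop:properties-open-reduction}), exploiting two facts: $\emptytype = 0\mset\ground$ is a ground — hence inert — multi type, and merging/splitting of type contexts preserves inertness (\Cref{rmk:merge-split-inert}). For the fireball statement, the case where the fireball happens to be an inert term follows directly from the inert statement instantiated with $\mtype \defeq \emptytype$: the derivation so obtained has an inert type context and the ground type $\emptytype$, hence it is tight. So the only fireballs left to treat in the second statement are abstractions and explicit substitutions over fireballs.

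For the inert statement I would handle the three productions of the inert grammar. If $\itm = \var$ and $\mtype = \mset{\inltype_1,\dots,\inltype_n}$, build the derivation from $n$ occurrences of $\ruleAx$ (one per $\inltype_i$) closed by $\ruleMany$; its type context is $\var\hastype\mtype$, inert by hypothesis. If $\itm = \itmtwo\fire$, observe that $\larrow{\emptytype}{\mtype}$ is an inert linear type (it has the shape $\larrow{n\mset\ground}{\imtype}$ with $n = 0$ and $\imtype = \mtype$, inert by hypothesis), so $\mset{\larrow{\emptytype}{\mtype}}$ is an inert multi type; apply the inert induction hypothesis to $\itmtwo$ with type $\mset{\larrow{\emptytype}{\mtype}}$, apply the fireball induction hypothesis to $\fire$ (yielding a tight, hence inert-context, derivation of type $\emptytype$), and combine the two with $\ruleAp$. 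If $\itm = \itmtwo\esub\var{\itmthree}$, first apply the inert induction hypothesis to $\itmtwo$ with $\mtype$, read off $\mtypetwo \defeq \typctxtwo(\var)$ from the obtained inert context (so $\mtypetwo$ is an inert multi type), apply the inert induction hypothesis to $\itmthree$ with $\mtypetwo$, and close with $\ruleES$. In every case inertness of the final type context follows from \Cref{rmk:merge-split-inert}, and of the right-hand type from the hypothesis (redundantly confirmed by spreading, \Cref{l:spread-inert}).

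For the fireball statement it remains to treat the fireballs that are not inert terms. If $\fire = \la\var\tm$, take the derivation consisting of a single $\ruleMany$ rule with $0$ premises: it types $\fire$ with $\emptytype$ under the empty context, which is tight. If $\fire = \firetwo\esub\var{\itm}$ with $\firetwo$ a fireball and $\itm$ inert, apply the fireball induction hypothesis to $\firetwo$ to get a tight $\concl{}{\typctxtwo}{\firetwo}{\emptytype}$, set $\mtypetwo \defeq \typctxtwo(\var)$ (inert, since $\typctxtwo$ is inert), apply the inert induction hypothesis to $\itm$ with $\mtypetwo$, and close with $\ruleES$; the resulting context is inert and the type is $\emptytype$, so the derivation is tight.

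I do not expect a genuine conceptual obstacle here: the argument is a routine assembly of typing rules. The only point requiring attention is the bookkeeping needed to keep every intermediate judgment inside the inert fragment — consistently choosing $\emptytype$ as the type of the argument in the application case and $\typctxtwo(\var)$ as the type of the substituted inert subterm in the explicit-substitution cases — together with checking that the mutual induction is well founded, i.e.\ that the fireball induction hypothesis invoked in the application and substitution cases is applied to strictly smaller fireballs.
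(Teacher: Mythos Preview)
Your proposal is correct and follows essentially the same approach as the paper's own proof: a mutual induction on the grammars of inert terms and fireballs, using $\emptytype$ as the argument type in the application case, reading off the type of the bound variable from the inert context obtained by the induction hypothesis in the explicit-substitution cases, and typing abstractions with a premise-less $\ruleMany$. The paper's presentation differs only cosmetically (it does not separate out the observation that inert fireballs are covered by Point~1 with $\mtype=\emptytype$, treating the inert cases directly instead), but the underlying argument is the same.
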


\begin{proof}
	We prove simultaneously \Cref{p:precise-open-typability-nf-fireball,p:precise-open-typability-nf-inert} by 
		mutual induction on the definition of fireball and inert term.
		Note that \Cref{p:precise-open-typability-nf-inert} is stronger than \Cref{p:precise-open-typability-nf-fireball}.
		Cases for inert terms:
		\begin{itemize}
			\item \emph{Variable}, \ie $\tm = \var$, which is an inert term. 
			Let $\imtype$ be an inert multi type: hence, $\imtype = \mset{\ltype_1, \dots, \ltype_n}$ for some $n \in \nat$ and some $\ltype_1, \dots, \ltype_n$ inert linear types.
			We can then build  the derivation 
			\begin{equation*}
			\tderiv = 
			\begin{prooftree}[separation = 1em]
			\infer0[\footnotesize$\ruleAx$]{\var \hastype \mset{\ltype_1} \vdash \var \hastype \ltype_1}
			\hypo{\dots}
			\infer0[\footnotesize$\ruleAx$]{\var \hastype \mset{\ltype_n} \vdash \var \hastype \ltype_n}
			\infer3[\footnotesize$\ruleManyVar$]{\var \hastype \mset{\ltype_1, \dots, \ltype_n} \vdash \var \hastype \mset{\ltype_1, \dots, \ltype_n}}
			\end{prooftree}
			\end{equation*}
			where $\Gamma = \var \hastype \mset{\ltype_1, \dots, \ltype_n}$ is an inert type context.

			\item \emph{Inert application}, \ie $\tm = \itm \fire$ for some inert term $\itm$ and fireball $\fire$.
			Let $\imtype$ be a multi type.
			By \ih for fireballs, there is a derivation $\concl{\tderivthree}{\typctxthree}{\fire}{\emptytype}$  for some inert type context $\typctxthree$.
			By \ih for inert terms, since $\mset{\larrow{\emptytype}{\imtype}}$ is an inert multi type, there is a derivation $\concl{\tderivtwo}{\typctxtwo}{\itm}{\mset{\larrow{\emptytype}{\imtype}}}$ for some inert type context $\typctxtwo$.
			We can then build the derivation 
			\begin{equation*}
			\tderiv  =
			\begin{prooftree}
			\hypo{}
			\ellipsis{$\tderivtwo$}{\typctxtwo \vdash \itm \hastype \mset{\larrow{\emptytype}{\imtype}}}
			\hypo{}
			\ellipsis{$\tderivthree$}{\typctxthree \vdash \fire \hastype \emptytype}
			\infer2[\footnotesize$\ruleApp$]{\typctxtwo \mplus \typctxthree \vdash \itm \fire \hastype \imtype}				
			\end{prooftree}
			\end{equation*}
			where $\typctx = \typctxtwo \mplus \typctxthree$ is an inert type context, by \Cref{rmk:merge-split-inert}.
			
			\item \emph{Explicit substitution on inert}, \ie $\tm = \itm \esub{\var}{\itmtwo}$ for some inert terms $\itm$ and $\itmtwo$.
			Let $\imtype$ be an inert multi type.
			By \ih for inert terms applied to $\itm$, there is a derivation $\concl{\tderivtwo}{\typctxtwo, \var \hastype \mtypetwo}{\itm}{\imtype}$ for some inert multi type $\mtypetwo$ and inert type context $\typctxtwo$.
			By \ih for inert terms applied to $\itmtwo$, there is a derivation $\concl{\tderivthree}{\typctxthree}{\itmtwo}{\mtypetwo}$  for some inert type context $\typctxthree$.
			We can then build the derivation 
			\begin{equation*}
				\tderiv =
				\begin{prooftree}
				\hypo{}
				\ellipsis{$\tderivtwo$}{\typctxtwo, \var \hastype \mtypetwo \vdash \itm \hastype \imtype}
				\hypo{}
				\ellipsis{$\tderivthree$}{\typctxthree \vdash \itmtwo \hastype \mtypetwo}
				\infer2[\footnotesize$\ruleES$]{\typctxtwo \mplus \typctxthree \vdash \itm \esub{\var} {\itmtwo} \hastype \imtype}				
				\end{prooftree}
			\end{equation*}
			where $\typctx = \typctxtwo \mplus \typctxthree$ is an inert type context, by \Cref{rmk:merge-split-inert}.
		\end{itemize}
	
		Cases for fireballs that may not be inert terms:
		\begin{itemize}
			\item \emph{Abstraction}, \ie $\tm  = \la{\var}{\tmtwo}$.
			We can then build the derivation
			\begin{equation*}
				\tderiv = 
				\begin{prooftree}
				\infer0[\footnotesize$\ruleManyVal$]{\vdash \la{\var}{\tmtwo} \hastype \emptytype}
			\end{prooftree}
			\end{equation*}
			where the type context $\typctx$ is empty and hence inert, thus $\tderiv$ is tight.
			
			\item \emph{Explicit substitution on fireball}, \ie $\tm = \fire \esub{\var}{\itm}$ for some fireball $\fire$ and inert term $\itm$.
			By \ih for fireballs applied to $\fire$, there is a derivation $\concl{\tderivtwo}{\typctxtwo, \var \hastype \imtypetwo}{\fire}{\emptytype}$ for some inert multi type $\imtypetwo$ and inert type context $\typctxtwo$.
			By \ih for inert terms applied to $\itm$, there is a derivation $\concl{\tderivthree}{\typctxthree}{\itm}{\imtypetwo}$  for some inert type context $\typctxthree$.
			We can then build the derivation 
			\begin{equation*}
			\tderiv = 
			\begin{prooftree}
			\hypo{}
			\ellipsis{$\tderivtwo$}{\typctxtwo, \var \hastype \imtypetwo \vdash \fire \hastype \emptytype}
			\hypo{}
			\ellipsis{$\tderivthree$}{\typctxthree \vdash \itm \hastype \imtypetwo}
			\infer2[\footnotesize$\ruleES$]{\typctxtwo \mplus \typctxthree \vdash \fire \esub{\var} {\itm} \hastype \emptytype}				
			\end{prooftree}
			\end{equation*}
			where $\typctx = \typctxtwo \mplus \typctxthree$ is an inert type context, by \Cref{rmk:merge-split-inert}.
			Therefore, $\tderiv$ is tight.
			\qedhere
		\end{itemize}
\end{proof}

\begin{proposition}[Open quantitative subject expansion]
	\label{propappendix:weak-subject-expansion}
	\NoteState{prop:weak-subject-expansion}
	Let $\namedtyjp{\tderiv'}{}{\tm'}{\typctx}{\mtype}$ be a derivation.
	\begin{enumerate}
		\item \emph{Multiplicative step:} if $\tm \towm \tm'$ then there is a derivation $\namedtyjp{\tderiv}{}{\tm}{\typctx}{\mtype}$ with
		$\sizem{\tderiv'} = \sizem{\tderiv} - 2$ and $\size{\tderiv'} = \size{\tderiv} - 1$; 
		\item \emph{Exponential step:} if $\tm \towe \tm'$ then there is a derivation $\namedtyjp{\tderiv}{}{\tm}{\typctx}{\mtype}$ such that
		$\sizem{\tderiv'} = \sizem{\tderiv}$ and $\size{\tderiv'} < \size{\tderiv}$.
	\end{enumerate}
\end{proposition}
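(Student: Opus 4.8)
The plan is to prove this by mirroring the proof of open quantitative subject reduction (\Cref{prop:weak-subject-reduction}), proceeding by induction on the open evaluation context $\weakctx$ in the step $\tm = \weakctxp{\tmtwo} \tovsubo \weakctxp{\tmtwo'} = \tm'$, where $\tmtwo \rtom \tmtwo'$ or $\tmtwo \rtoe \tmtwo'$. The four inductive cases ($\weakctx$ an application on the left or on the right, or the left or the right side of an explicit substitution) are the exact symmetric of the corresponding cases of \Cref{prop:weak-subject-reduction}: from $\tderiv'$ one extracts the sub-derivation typing the term in evaluation position, applies the induction hypothesis to it, and re-assembles the last rule unchanged; the size equalities (and the strict inequality on $\size{\cdot}$ in the exponential case) propagate verbatim. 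So the only real content is the base case $\weakctx = \ctxhole$, which splits into the multiplicative and the exponential root rule.

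For the multiplicative root case $\tm = \subctxp{\la\var\tmtwo}\tmthree \rtom \subctxp{\tmtwo\esub\var\tmthree} = \tm'$ with $\subctx = \esub{\var_1}{\tm_1}\cdots\esub{\var_n}{\tm_n}$, the derivation $\tderiv'$ of $\tm'$ necessarily ends with $n$ occurrences of $\ruleES$ (typing the substitutions of $\subctx$ via $\tderiv_1,\dots,\tderiv_n$) wrapping an $\ruleES$ whose premises are a derivation $\tderivtwo$ of $\tmtwo$ with $\var \hastype \mtypetwo$ and a derivation $\tderivthree$ of $\tmthree$ of type $\mtypetwo$. I would build $\tderiv$ for $\tm$ by deriving $\la\var\tmtwo$ of type $\larrow{\mtypetwo}{\mtype}$ with a $\ruleFun$ followed by a one-premise $\ruleManyVal$, closing the $\subctx$-layer with the same $n$ $\ruleES$ rules, and applying the result to $\tmthree$ via $\ruleAp$ with premise $\tderivthree$. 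Since $\ruleManyVal$ contributes nothing to $\size{\cdot}$ and $\ruleManyVal$/$\ruleAx$/$\ruleES$ contribute nothing to $\sizem{\cdot}$, the trade (remove one $\ruleES$; add $\ruleFun$, $\ruleManyVal$, $\ruleAp$) gives exactly $\size{\tderiv} = \size{\tderiv'} + 1$ and $\sizem{\tderiv} = \sizem{\tderiv'} + 2$.

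For the exponential root case $\tm = \tmtwo\esub\var{\subctxp\val} \rtoe \subctxp{\tmtwo\isub\var\val} = \tm'$, the derivation $\tderiv'$ ends with $n$ occurrences of $\ruleES$ (with premises $\tderiv_1,\dots,\tderiv_n$ typing the substitutions of $\subctx$) wrapping a sub-derivation $\tderiv''$ of $\tmtwo\isub\var\val$. Applying the anti-substitution lemma (\Cref{l:anti-substitution}) to $\tderiv''$ produces derivations $\tderivtwo$ of $\tmtwo$ with $\var \hastype \mtypetwo$ and $\tderivthree$ of $\val$ such that $\sizem{\tderiv''} = \sizem{\tderivtwo} + \sizem{\tderivthree}$ and $\size{\tderiv''} \leq \size{\tderivtwo} + \size{\tderivthree}$. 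I would re-assemble $\tderiv$ by first closing $\tderivthree$ with the $n$ rules $\ruleES$ of premises $\tderiv_1,\dots,\tderiv_n$, getting a derivation of $\subctxp\val$ of type $\mtypetwo$, and then applying one $\ruleES$ to $\tderivtwo$ and this derivation to type $\tmtwo\esub\var{\subctxp\val}$. The $\sizem{\cdot}$-count is preserved, while the $\size{\cdot}$-count strictly increases, since $\size{\tderiv} = 1 + n + \size{\tderivtwo} + \size{\tderivthree} + \sum_{i=1}^{n}\size{\tderiv_i} \geq 1 + n + \size{\tderiv''} + \sum_{i=1}^{n}\size{\tderiv_i} > n + \size{\tderiv''} + \sum_{i=1}^{n}\size{\tderiv_i} = \size{\tderiv'}$.

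I expect the only delicate point to be the bookkeeping in the exponential root case: checking that the anti-substitution lemma is indeed applicable (it is, since $\val$ is an abstraction, hence a theoretical value) and handling the degenerate case $n = 0$ in both root rules. Everything else is a routine transcription of the subject-reduction argument run backwards.
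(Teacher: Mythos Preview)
Your proposal is correct and follows essentially the same approach as the paper's proof: induction on the open evaluation context, with the two root cases handled exactly as you describe (rearranging $\ruleES$/$\ruleFun$/$\ruleManyVal$/$\ruleAp$ in the multiplicative case, and invoking the anti-substitution lemma in the exponential case). The size bookkeeping you give matches the paper's computations line by line.
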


\begin{proof}
	By induction on 
	the evaluation context $\weakctx$ in the step $\tm = \weakctxp{\tmtwo} \tow \weakctxp{\tmtwo'} = \tm'$ with $\tmtwo \rtom \tmtwo'$ or $\tmtwo \rtoe \tmtwo'$. 
	Cases for $\weakctx$:
	\begin{itemize}
		\item \emph{Hole}, \ie $\weakctx = \ctxhole$.
		Then there are two sub-cases:
		\begin{enumerate}
			\item \emph{Multiplicative}, \ie $\tm = \subctxp{\la\var\tmtwo}\tmthree \rtom  \subctxp{\tmtwo \esub{\var}{\tmthree}} = \tm'$.
			Then $\tderiv'$ has necessarily the form:
			\begin{equation*}
			\tderiv' = 
			\begin{prooftree}
			\hypo{}
			\ellipsis{$\tderivtwo$}{\tyjp{}{\tmtwo}{\typctx' ; \var \hastype \mtypetwo}{\mtype}}
			\hypo{}
			\ellipsis{$\tderivthree$}{\tyjp{}{\tmthree}{\typctxtwo}{\mtypetwo}}
			\infer2[\footnotesize$\ruleES$]{\typctx'\uplus\typctxtwo \vdash\tmtwo\esub\var\tmthree \hastype \mtype}
			\hypo{}
			\ellipsis{$\tderiv_1$}{\quad}
			\infer2[\footnotesize$\ruleES$]{}
			\ellipsis{}{\quad}
			\hypo{}
			\ellipsis{$\tderiv_n$}{\quad}
			\infer2[\footnotesize$\Es$]{\typctx\uplus\typctxtwo \vdash \subctxp{\tmtwo\esub\var\tmthree} \hastype \mtype}
			\end{prooftree}
			\end{equation*}
			with $\sizem{\tderiv'} = \sizem{\tderivtwo} + \sizem{\tderivthree} + \sum_{i=1}^{n} \sizem{\tderiv_{i}}$ and $\size{\tderiv' } = 1 + n + \size{\tderivtwo} + \size{\tderivthree} + \sum_{i=1}^{n} \size{\tderiv_{i}}$.
			
			We can build $\tderiv$ as follows:
			\begin{equation*}
			\tderiv = 
			\begin{prooftree}[separation = 1em]
			\hypo{}
			\ellipsis{$\tderivtwo$}{\typctx', \var \hastype \mtypetwo \vdash \tmtwo \hastype \mtype}
			\infer1[\footnotesize$\ruleFun$]{\typctx' \vdash \la\var\tmtwo \hastype \larrow{\mtypetwo}{\mtype}}
			\infer1[\footnotesize$\ruleManyVal$]{\typctx' \vdash \la\var\tmtwo \hastype \mset{\larrow{\mtypetwo}{\mtype}}}
			\hypo{}
			\ellipsis{$\tderiv_1$}{\quad}
			\infer2[\footnotesize$\ruleES$]{}
			\ellipsis{}{\quad}
			\hypo{}
			\ellipsis{$\tderiv_n$}{\quad}
			\infer2[\footnotesize$\ruleES$]{\typctx \vdash \subctxp{\la\var\tmtwo} \hastype \mset{\larrow{\mtypetwo}{\mtype}}}
			\hypo{}
			\ellipsis{$\tderivthree$}{\typctxtwo\vdash\tmthree \hastype \mtypetwo}
			\infer2[\footnotesize$\ruleApp$]{\typctx \uplus \typctxtwo \vdash \subctxp{\la\var\tmtwo}\tmthree \hastype \mtype}
			\end{prooftree}
			\end{equation*}
			where $\sizem{\tderiv} = 2 + \sizem{\tderivtwo} + \sizem{\tderivthree} + \sum_{i=1}^{n} \sizem{\tderiv_{i}} = \sizem{\tderiv'} + 2$ and $\size{\tderiv} = 2 + n + \size{\tderivtwo} + \size{\tderivthree} + \sum_{i=1}^{n} \size{\tderiv_{i}} = \size{\tderiv'} + 1$.
			
			\item \emph{Exponential}, \ie $\tm = \tmtwo\esub\var{\subctxp{\tval}} \rtoe \subctxp{\tmtwo \isub{\var}{\tval}} = \tmp$.
			Then the derivation $\tderiv$ has necessarily the form:
			\begin{equation*}
			\tderiv' = 
			\begin{prooftree}
			\hypo{}
			\ellipsis{$\tderiv''$}{\typctxtwo \mplus\typctxthree' \vdash \tmtwo\isub\var \tval \hastype \mtype}
			\hypo{}
			\ellipsis{$\tderiv_1$}{\quad}
			\infer2[\footnotesize{$\Es$}]{}
			\ellipsis{}{}
			\hypo{}
			\ellipsis{$\tderiv_n$}{\quad}
			\infer2[\footnotesize$\Es$]{\typctxtwo\uplus\typctxthree \vdash \subctxp{\tmtwo\isub\var \tval} \hastype \mtype}
			\end{prooftree}
			\end{equation*}
			where $\typctx = \typctxtwo \uplus \typctxthree$, $\sizem{\tderiv'} = \sizem{\tderiv''} + \sum_{i=1}^{n} \sizem{\tderiv_{i}}$ and $\size{\tderiv'} =  n + \size{\tderiv''} + \sum_{i=1}^{n} \size{\tderiv_{i}}$.
			By the anti-substitution lemma (\reflemma{anti-substitution}), there are  derivations $\namedtyjp{\tderivtwo}{}{\tmtwo}{\typctxtwo, \var\hastype\mtypetwo}{\mtype}$ and $\namedtyjp{\tderivthree}{}{\tval}{\typctxthree'}{\mtypetwo}$
			such that $\sizem{\tderiv''} = \sizem{\tderivtwo} + \sizem{\tderivthree}$ and $\size{\tderiv''} \leq \size{\tderivtwo} + \size{\tderivthree}$.
			We can then build the following derivation $\tderiv$:
			\begin{equation*}
			\tderiv = 
			\begin{prooftree}
			\hypo{}
			\ellipsis{$\tderivtwo$}{\tyjp{}{\tmtwo}{\typctxtwo, \var \hastype \mtypetwo}{\mtype}}
			\hypo{}
			\ellipsis{$\tderivthree$}{\tyjp{}{\tval}{\typctxthree'}{\mtypetwo}}
			\hypo{}
			\ellipsis{$\tderiv_1$}{\quad}
			\infer2[\footnotesize$\Es$]{}
			\ellipsis{}{\quad}
			\hypo{}
			\ellipsis{$\tderiv_n$}{\quad}
			\infer2[\footnotesize$\Es$]{\typctxthree \vdash \subctxp{\tval} \hastype \mtypetwo}
			\infer2[\footnotesize$\Es$]{\typctxtwo \mplus \typctxthree \vdash\tmtwo\esub\var{\subctxp{\tval}}\hastype \mtype}
			\end{prooftree}
			\end{equation*}
			where $\sizem{\tderiv} = \sizem{\tderivtwo} + \sizem{\tderivthree} + \sum_{i=1}^{n} \sizem{\tderiv_{i}} = \sizem{\tderiv''} + \sum_{i=1}^{n} \sizem{\tderiv_{i}} = \sizem{\tderiv'}$ and $\size{\tderiv} = 1 + n + \size{\tderivtwo} + \size{\tderivthree} + \sum_{i=1}^{n} \size{\tderiv_{i}} > n + \size{\tderiv''} +  \sum_{i=1}^{n} \size{\tderiv_{i}} = \size{\tderiv'}$.
		\end{enumerate}
		
		\item \emph{Application left}, \ie $\weakctx = \weakctxtwo \tmthree$.
		Then, $\tm = \weakctxp{\tmtwo} = \weakctxtwop{\tmtwo} \tmthree \rootRew{a} \weakctxtwop{\tmtwo'} \tmthree = \weakctxp{\tmtwo'} = \tm'$ with $\tmtwo \rootRew{a} \tmtwo'$ and $a \in \{\msym, \esym\}$.
		The derivation $\tderiv'$ is necessarily
		\begin{equation*}
		\tderiv' = 
		\begin{prooftree}
		\hypo{}
		\ellipsis{$\tderivtwo'$}{\tyjp{}{\weakctxtwop{\tmtwo'}}{\typctxtwo}{\mult{\ty{\mtypetwo}{\mtype}}}}
		\hypo{}
		\ellipsis{$\tderivthree$}{\tyjp{}{\tmthree}{\typctxthree}{\mtypetwo}}
		\infer2[\footnotesize$\ruleAp$]{\tyjp{}{\weakctxtwop{\tmtwo'} \tmthree}{\typctxtwo \mplus \typctxthree}{\mtype}}
		\end{prooftree}
		\end{equation*}
		where $\typctx = \typctxtwo \uplus \typctxthree$, $\sizem{\tderiv'} = 1 +  \sizem{\tderivtwo'} + \sizem{\tderivthree}$ and $\size{\tderiv'} = 1 +  \size{\tderivtwo'} + \size{\tderivthree}$.
		By \ih, there is a derivation $\namedtyjp{\tderivtwo}{}{\weakctxtwop{\tmtwo}}{\typctxtwo}{\mult{\ty{\mtypetwo}{\mtype}}}$ with:
		\begin{enumerate}
			\item $\sizem{\tderivtwo} = \sizem{\tderivtwo'} + 2$ and $\size{\tderivtwo} = \size{\tderivtwo'} + 1$ if $\tmtwo \rtom \tmtwo'$; 
			\item $\sizem{\tderivtwo} = \sizem{\tderivtwo'}$ and $\size{\tderivtwo} < \size{\tderivtwo'}$ if $\tmtwo \rtoe \tmtwo'$.
		\end{enumerate}
		We can then build the derivation 
		\begin{equation*}
		\tderiv = 
		\begin{prooftree}
		\hypo{}
		\ellipsis{$\tderivtwo$}{\tyjp{}{\weakctxtwop{\tmtwo}}{\typctxtwo}{\mult{\ty{\mtypetwo}{\mtype}}}}
		\hypo{}
		\ellipsis{$\tderivthree$}{\tyjp{}{\tmthree}{\typctxthree}{\mtypetwo}}
		\infer2[\footnotesize$\ruleAp$]{\tyjp{}{\weakctxtwop{\tmtwo} \tmthree}{\typctxtwo \mplus \typctxthree}{\mtype}}
		\end{prooftree}
		\end{equation*}
		noting that
		\begin{enumerate}
			\item If $\tmtwo \rtom \tmtwo'$ then $\sizem{\tderiv} = 1 + \sizem{\tderivtwo} + \sizem{\tderivthree} = 1 + (\sizem{\tderivtwo'} + 2) + \sizem{\tderivthree} = \sizem{\tderiv'} + 2$ and $\size{\tderiv} = 1 + \size{\tderivtwo} + \size{\tderivthree} = 1 + 1 +\size{\tderivtwo'} + \size{\tderivthree} = \size{\tderiv'} + 1$; 
			\item If $\tmtwo \rtoe \tmtwo'$ then $\sizem{\tderiv} = 1 + \sizem{\tderivtwo} + \sizem{\tderivthree} = 1 + \sizem{\tderivtwo'} + \sizem{\tderivthree} = \sizem{\tderiv'}$ and $\size{\tderiv} = 1 + \size{\tderivtwo} + \size{\tderivthree} > 1 + \size{\tderivtwo'} + \size{\tderivthree} = \size{\tderiv'}$.
		\end{enumerate}
		
		\item \emph{Application right}, \ie $\weakctx = \tmthree \weakctxtwo$.
		Analogous to the previous case.
		
		\item \emph{Explicit substitution left}, \ie $\weakctx = \weakctxtwo\esub{\var}{\tmthree}$. 
		Then, $\tm = \weakctxp{\tmtwo} = \weakctxtwop{\tmtwo} \esub{\var}{\tmthree} \rootRew{a} \weakctxtwop{\tmtwo'}\esub{\var}{\tmthree} = \weakctxp{\tmtwo'} = \tm'$ with $\tmtwo \rootRew{a} \tmtwo'$ and $a \in \{\msym, \esym\}$.
		The derivation $\tderiv'$ is necessarily
		\begin{equation*}
		\tderiv' = 
		\begin{prooftree}
		\hypo{}
		\ellipsis{$\tderivtwo'$}{\tyjp{}{\weakctxtwop{\tmtwo'}}{\typctxtwo ; \var \hastype \mtypetwo}{\mtype}}
		\hypo{}
		\ellipsis{$\tderivthree$}{\tyjp{}{\tmthree}{\typctxthree}{\mtypetwo}}
		\infer2[\footnotesize$\Es$]{\typctxtwo \mplus \typctxthree \vdash \weakctxtwop{\tmtwo'} \esub{\var}{\tmthree} \hastype \mtype}
		\end{prooftree}
		\end{equation*}
		where $\typctx = \typctxtwo \mplus \typctxthree$, $\sizem{\tderiv} = \sizem{\tderivtwo} + \sizem{\tderivthree}$ and $\size{\tderiv} = 1 + \size{\tderivtwo} + \size{\tderivthree}$.
		By \ih, there is a derivation $\namedtyjp{\tderivtwo}{}{\weakctxtwop{\tmtwo}}{\typctxtwo, \var \hastype \mtypetwo}{\mtype}$ with: 
		\begin{enumerate}
			\item $\sizem{\tderivtwo} = \sizem{\tderivtwo'} + 2$ and $\size{\tderivtwo} =  \size{\tderivtwo'} + 1$ if $\tmtwo \rtom \tmtwo'$; 
			\item $\sizem{\tderivtwo} = \sizem{\tderivtwo'}$ and $\size{\tderivtwo} > \size{\tderivtwo'}$ if $\tmtwo \rtoe \tmtwo'$.
		\end{enumerate}
		We can then build the derivation 
		\begin{equation*}
		\tderiv = 
		\begin{prooftree}
		\hypo{}
		\ellipsis{$\tderivtwo$}{\tyjp{}{\weakctxtwop{\tmtwo}}{\typctxtwo ; \var \hastype \mtypetwo}{\mtype}}
		\hypo{}
		\ellipsis{$\tderivthree$}{\tyjp{}{\tmthree}{\typctxthree}{\mtypetwo}}
		\infer2[\footnotesize$\Es$]{\tyjp{}{\weakctxtwop{\tmtwo} \esub{\var}{\tmthree}}{\typctxtwo \mplus \typctxthree}{\mtype}}
		\end{prooftree}
		\end{equation*}
		noting that 
		\begin{enumerate}
			\item If $\tmtwo \rtom \tmtwo'$ then $\sizem{\tderiv} = \sizem{\tderivtwo} + \sizem{\tderivthree} = (\sizem{\tderivtwo'} + 2) + \sizem{\tderivthree} = \sizem{\tderiv'} + 2$ and $\size{\tderiv} = 1 + \size{\tderivtwo} + \size{\tderivthree} = 1 + 1 +  \size{\tderivtwo'} + \size{\tderivthree} = \size{\tderiv'}$; 
			\item If $\tmtwo \rtoe \tmtwo'$ then $\sizem{\tderiv} = \sizem{\tderivtwo} + \sizem{\tderivthree} = \sizem{\tderivtwo'} + \sizem{\tderthree} = \sizem{\tderiv'}$ and $\size{\tderiv} = \size{\tderivtwo} + \size{\tderivthree} > \size{\tderivtwo'} + \size{\tderivthree} = \size{\tderiv'}$.
		\end{enumerate}
		
		\item \emph{Explicit substitution right}, \ie $\weakctx = \tmthree \esub{\var}{\weakctxtwo}$. 
		Analogous to the previous case.
		\qedhere
	\end{itemize}
\end{proof}

\begin{theorem}[Open completeness]
	\label{thmappendix:open-completeness}
	\NoteState{thm:open-completeness}
	Let $\deriv \colon \tm \tovsubo^* \tmtwo$ be an $\osym$-normalizing evaluation.
	Then there is a tight derivation $\concl{\tderiv}{\typctx}{\tm}{\emptytype}$ such that $2\sizem{\deriv} + \sizes{\fire} = \sizem{\tderiv}$.
\end{theorem}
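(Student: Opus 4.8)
The plan is to prove the statement by induction on the number $\sizem\deriv$ of multiplicative steps in the $\osym$-normalizing evaluation $\deriv \colon \tm \tovsubo^* \tmtwo$, following the standard completeness scheme, which is the backward counterpart of the proof of \Cref{thm:open-correctness}. First note that, since $\deriv$ is $\osym$-normalizing, the term $\tmtwo$ is $\osym$-normal, hence a fireball by \Cref{prop:properties-open-reduction}; write $\fire \defeq \tmtwo$.

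For the base case, suppose $\deriv$ is the empty evaluation, so that $\tm = \fire$ is itself a fireball. By the fireball case of tight typability of open normal forms (\Cref{prop:precise-open-typability-nf}) there is a tight derivation $\concl{\tderiv}{\typctx}{\fire}{\emptytype}$; and by the size of fireballs (\Cref{l:size-fireballs}), since $\tderiv$ is tight (its type context $\typctx$ is inert and $\emptytype$ is ground inert), we get $\sizem\tderiv = \sizeo{\fire}$. As $\sizem\deriv = 0$, this gives $2\sizem\deriv + \sizeo{\fire} = \sizeo{\fire} = \sizem\tderiv$, as wanted.

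For the inductive step, decompose $\deriv$ as a first step $\tm \tovsubo \tm'$ followed by an $\osym$-normalizing evaluation $\deriv' \colon \tm' \tovsubo^* \fire$ strictly shorter than $\deriv$. By the induction hypothesis there is a tight derivation $\concl{\tderiv'}{\typctx}{\tm'}{\emptytype}$ with $2\sizem{\deriv'} + \sizeo{\fire} = \sizem{\tderiv'}$. Applying open quantitative subject expansion (\Cref{prop:weak-subject-expansion}) to the step $\tm \tovsubo \tm'$, we obtain a derivation $\concl{\tderiv}{\typctx}{\tm}{\emptytype}$ with the \emph{same} final judgment as $\tderiv'$ — hence still tight, because tightness of a derivation depends only on its final judgment — and with $\sizem\tderiv = \sizem{\tderiv'} + 2$ if the first step is multiplicative, or $\sizem\tderiv = \sizem{\tderiv'}$ if it is exponential. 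In the multiplicative case $\sizem\deriv = \sizem{\deriv'} + 1$, so $\sizem\tderiv = \sizem{\tderiv'} + 2 = 2\sizem{\deriv'} + \sizeo{\fire} + 2 = 2\sizem\deriv + \sizeo{\fire}$; in the exponential case $\sizem\deriv = \sizem{\deriv'}$, so $\sizem\tderiv = \sizem{\tderiv'} = 2\sizem{\deriv'} + \sizeo{\fire} = 2\sizem\deriv + \sizeo{\fire}$. In either case the equality holds, closing the induction.

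I do not expect a real obstacle here: all the substantial work is carried out in the preparatory results. The only two points requiring attention are (i) recognising that the $\osym$-normal form reached by $\deriv$ is a fireball, so that \Cref{prop:precise-open-typability-nf} and \Cref{l:size-fireballs} are applicable; and (ii) observing that subject expansion returns a derivation with an unchanged final judgment, which is precisely what lets the tightness established for $\tm'$ be inherited by $\tm$. The bookkeeping on $\sizem{\cdot}$ across the two kinds of step is then a one-line computation, as above.
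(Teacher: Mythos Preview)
Your argument is essentially the paper's, but there is a slip in the stated induction parameter: you claim to induct on $\sizem\deriv$, the number of multiplicative steps, yet the proof you actually write is an induction on the \emph{total length} $\size\deriv$. Your base case assumes $\deriv$ is empty (not merely $\sizem\deriv=0$, which would still allow purely exponential evaluations), and in the inductive step you peel off one step and invoke the induction hypothesis because $\deriv'$ is ``strictly shorter than $\deriv$''---true for length, but false for $\sizem$ when the first step is exponential. With the induction parameter corrected to $\size\deriv$, your proof coincides with the paper's.
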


\begin{proof}
	By induction on the length $\size{\deriv}$ of the $\osym$-evaluation $\deriv$.
	
	If $\size{\deriv} = 0$ then $\sizem{\deriv} = 0$ and $\tm = \fire$ is $\osym$-normal and hence a fireball.
	According to tight typability of fireballs (\Cref{prop:precise-open-typability-nf}), there is a tight derivation $\concl{\tderiv}{\typctx}{\tm}{\emptytype}$.
	Therefore, $\sizem{\tderiv} = \sizeo{\fire} = \sizeo{\fire} + 2\sizem{\deriv}$ by \Cref{l:size-fireballs}.
	
	Otherwise, $\size{\deriv} > 0$ and $\deriv$ is the concatenation of a first step $\tm \tovsubo \tmtwo$ and an evaluation $\deriv' \colon \tmtwo \tovsubo^* \fire$, with $\size{\deriv} = 1 + \size{\deriv'}$.
	By \ih, there is a tight derivation $\concl{\tderivtwo}{\typctx}{\tmtwo}{\emptytype}$  such that $\sizem{\tderivtwo} = \sizeo{\fire} + 2\sizem{\deriv'}$. 
	According to open subject expansion (\Cref{prop:weak-subject-expansion}), there is a derivation 
	$\concl{\tderiv}{\typctx}{\tm}{\emptytype}$ with 
	\begin{itemize}
		\item $\sizem{\tderiv} = \sizem{\tderivtwo} + 2 = \sizeo{\fire} + 2\sizem{\deriv'} + 2 = \sizeo{\fire} + 2\sizem{\deriv}$ 
		if $\tm \tomo \tmtwo$, since $\sizem{\deriv} = \sizem{\deriv'} + 1$;
		\item $\sizem{\tderiv} = \sizem{\tderivtwo} = \sizeo{\fire} + 2\sizem{\deriv'} = \sizeo{\fire} + 2\sizem{\deriv}$ if $\tm \toeo \tmtwo$, since $\sizem{\deriv} = \sizem{\deriv'}$.
		\qedhere
	\end{itemize} 
\end{proof}

\begin{theorem}[Open adequacy]
	\label{thmappendix:open-adequacy}
	\NoteState{thm:open-adequacy}
	Let $\tm$ be a term, with $\vec{\var}$ suitable for $\tm$.
	Then, $\tm$ is $\osym$-normalizing if and only if $\sem{\tm}_{\vec{\var}} \neq \emptyset$. 
\end{theorem}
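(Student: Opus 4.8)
The plan is to obtain the statement as an immediate corollary of open correctness (\Cref{thm:open-correctness}) and open completeness (\Cref{thm:open-completeness}), the only care needed being the bookkeeping of variables required to match the shape of judgments used in the definition of $\sem{\tm}_{\vec\var}$.

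For the right-to-left direction, I would assume $\sem{\tm}_{\vec\var} \neq \emptyset$. By definition of the relational semantics, this yields a derivation $\concl{\tderiv}{\var_1 \hastype \mtypetwo_1, \dots, \var_n \hastype \mtypetwo_n}{\tm}{\mtype}$ for some multi types $\mtypetwo_1, \dots, \mtypetwo_n, \mtype$, in particular $\derive{\tderiv}{\tm}$. Then \Cref{thm:open-correctness} directly provides an $\osym$-normalizing evaluation $\deriv \colon \tm \tovsubo^* \tmtwo$, so $\tm$ is $\osym$-normalizing. The quantitative bound $2\sizem{\deriv} + \sizeo{\tmtwo} \leq \sizem{\tderiv}$ is not needed here, but it comes for free.

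For the left-to-right direction, I would assume $\tm$ is $\osym$-normalizing, i.e.\ there is an $\osym$-normalizing evaluation $\deriv \colon \tm \tovsubo^* \tmtwo$ (where $\tmtwo$ is a fireball by \Cref{prop:properties-open-reduction}). By \Cref{thm:open-completeness} there is a tight derivation $\concl{\tderiv}{\typctx}{\tm}{\emptytype}$. By \Cref{rmk:free-variables}, $\dom{\typctx} \subseteq \fv{\tm} \subseteq \{\var_1, \dots, \var_n\}$, so setting $\mtypetwo_i \defeq \typctx(\var_i)$ for $1 \leq i \leq n$ we may present the conclusion of $\tderiv$ as $\concl{\tderiv}{\var_1 \hastype \mtypetwo_1, \dots, \var_n \hastype \mtypetwo_n}{\tm}{\emptytype}$ (padding the unused variables with $\emptymset$). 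Hence $((\mtypetwo_1, \dots, \mtypetwo_n), \emptytype) \in \sem{\tm}_{\vec\var}$, and so $\sem{\tm}_{\vec\var} \neq \emptyset$.

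There is essentially no obstacle: both implications are one-line invocations of the correctness and completeness theorems. The only point to check is that the hypothesis that $\vec\var$ is suitable for $\tm$ always allows one to present a derivable judgment as one whose type context ranges exactly over $\vec\var$, which is guaranteed precisely by the fact that non-empty entries of a type context only involve free variables of the typed term.
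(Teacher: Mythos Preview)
Your proposal is correct and follows essentially the same approach as the paper: both directions are immediate consequences of open correctness (\Cref{thm:open-correctness}) and open completeness (\Cref{thm:open-completeness}). You are actually more careful than the paper in making explicit the bookkeeping step that lets you present the type context of the derivation over exactly the variables $\vec\var$ (via \Cref{rmk:free-variables}); the paper's proof leaves this implicit.
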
 

\begin{proof}
	According to open completeness (\Cref{thm:open-completeness}), if $\tm$ is $\osym$-normalizing, then  there is a derivation $\concl{\tderiv}{\typctx}{\tm}{\mtype}$ for some type context $\typctx$ and multi type $\mtype$, thus $\sem{\tm}_{\vec{\var}} \neq \emptyset$.
	
	Conversely, if $\sem{\tm}_{\vec{\var}} \neq \emptyset$, then  there is a derivation $\concl{\tderiv}{\typctx}{\tm}{\mtype}$ for some type context $\typctx$ and multi type $\mtype$, thus by open correctness (\Cref{thm:open-correctness}), $\tm$ is $\osym$-normalizing.
\end{proof}

\section{Proofs of \Cref{sect:shrinking}}

\Cref{rmk:merge-split-coshrinking} and \Cref{l:spread-shrinking} below are  used to prove both correctness (\Cref{thm:correctness}) and completeness (\Cref{thm:completeness}) 

\begin{remark}[Merging and splitting \leftsh shrinkingess]
	\label{rmk:merge-split-coshrinking}
	Let $\mtype, \mtypetwo, \mtypethree$ be multi types with $\mtype = \mtypetwo \mplus \mtypethree$; then $\mtype$ is \leftsh (\resp unitary \leftsh) iff $\mtypetwo$ and $\mtypethree$ are \leftsh (\resp unitary \leftsh).
	Similarly for type contexts.
\end{remark}

The property above fails for unitary \rightsh shrinkingness. 
Indeed, $\mset{\ground}$ is unitary \rightsh but $\mset{\ground} \mplus \mset{\ground} = \mset{\ground, \ground}$ is \rightsh and unitary \leftsh, but not unitary  \rightsh.

%
%
%

\begin{lemma}
	[Spreading of \leftsh shrinkingness]
	\label{lappendix:spread-shrinking}
	\NoteState{l:spread-shrinking}
	Let $\concl{\tderiv}{\typctx}{\ptm}{\mtype}$ be a derivation and $\ptm$ a \pointed term. 
	If $\typctx$ is a \leftsh (\resp~unitary \leftsh) type context, then $\mtype$ is a \leftsh (\resp~unitary \leftsh) multi type.
\end{lemma}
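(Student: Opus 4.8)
The plan is to proceed by structural induction on the \pointed term $\ptm$, proving the \leftsh and the unitary \leftsh statements simultaneously, since the two have parallel proofs. The driving observation is that a \pointed term is never an abstraction, so its syntactic shape forces the last rule of $\tderiv$: a $\ruleManyVar$ when $\ptm$ is a variable, a $\ruleAp$ when $\ptm$ is an application, and a $\ruleES$ when $\ptm$ is an \ES. Hence in each case I can read off the immediate sub-derivations together with their type contexts and feed them to the induction hypothesis. Throughout I will use \Cref{rmk:merge-split-coshrinking} to pass (unitary) \leftsh shrinkingness between a context sum $\typctxtwo \uplus \typctxthree$ and its two summands.

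For the base case $\ptm = \var$, the derivation $\tderiv$ must end with a $\ruleManyVar$ whose conclusion is $\var \hastype \mtype \vdash \var \hastype \mtype$, so $\typctx = \var \hastype \mtype$ and $\mtype = \typctx(\var)$; the hypothesis on $\typctx$ then immediately gives that $\mtype$ is (unitary) \leftsh. For the application case $\ptm = \ptm_1\tm$ with $\ptm_1$ \pointed, $\tderiv$ ends with a $\ruleAp$ over $\concl{\tderiv_1}{\typctxtwo}{\ptm_1}{\mset{\larrow{\mtypetwo}{\mtype}}}$ and $\concl{\tderiv_2}{\typctxthree}{\tm}{\mtypetwo}$, with $\typctx = \typctxtwo \uplus \typctxthree$. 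By \Cref{rmk:merge-split-coshrinking} the context $\typctxtwo$ is (unitary) \leftsh, so by the induction hypothesis on the \pointed term $\ptm_1$ the multi type $\mset{\larrow{\mtypetwo}{\mtype}}$ is (unitary) \leftsh; hence $\larrow{\mtypetwo}{\mtype}$ is a (unitary) \leftsh linear type, and so its codomain $\mtype$ is again (unitary) \leftsh — which is the claim. Note that the second premise (the type of $\tm$) plays no role, consistently with $\tm$ being arbitrary.

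For the \ES case $\ptm = \ptm_1 \esub{\var}{\ptm_2}$ with $\ptm_1, \ptm_2$ \pointed, $\tderiv$ ends with a $\ruleES$ over $\concl{\tderiv_1}{\typctxtwo, \var \hastype \mtypetwo}{\ptm_1}{\mtype}$ and $\concl{\tderiv_2}{\typctxthree}{\ptm_2}{\mtypetwo}$, with $\typctx = \typctxtwo \uplus \typctxthree$. Using \Cref{rmk:merge-split-coshrinking} both $\typctxtwo$ and $\typctxthree$ are (unitary) \leftsh; I would first apply the induction hypothesis to $\ptm_2$ to learn that $\mtypetwo$ is (unitary) \leftsh, whence $\typctxtwo, \var \hastype \mtypetwo$ is a (unitary) \leftsh type context, and then apply the induction hypothesis to $\ptm_1$ to conclude that $\mtype$ is (unitary) \leftsh.

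The only step I expect to need some care is the application case, where one must check — from the mutually inductive definition of left and right (unitary) shrinking types in \Cref{fig:shrinking} — that a (unitary) \leftsh linear type of arrow shape has a (unitary) \leftsh codomain (its domain being \rightsh, a fact the statement does not use). Everything else is a matter of the forced derivation shapes and two appeals to the splitting remark, and the unitary variant runs in lockstep with the plain one.
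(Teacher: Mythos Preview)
Your proposal is correct and matches the paper's proof essentially line for line: the same induction on the structure of the \pointed term, the same forced last-rule analysis in each case, the same use of \Cref{rmk:merge-split-coshrinking} to split the context, and the same two-step application of the induction hypothesis in the \ES case (first on $\ptm_2$ to make the extended context \leftsh, then on $\ptm_1$). Your closing remark about the arrow-codomain check in the application case is exactly the one implicit step the paper leaves unstated.
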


\begin{proof}
	By induction on the definition of the \pointed term $\ptm$. 
	Cases:
	\begin{itemize}
		\item \emph{Variable}, \ie $\ptm = \var$. 
		Then necessarily, for some $n \in \nat$, 
		\begin{equation*}
		\tderiv = 
		\begin{prooftree}
		\infer0[\footnotesize$\Ax$]{\var \hastype \mset{\ltype_1} \vdash \var \hastype \ltype_1}
		\hypo{\overset{n \in \nat}{\ldots}}
		\infer0[\footnotesize{$\Ax$}]{\var \hastype \mset{\ltype_n} \vdash \var \hastype \ltype_n}
		\infer3[\footnotesize$\ruleManyVar$]{\typctx \vdash \var \hastype \mtype}
		\end{prooftree}
		\end{equation*}
		with $\mtype = \mset{\ltype_1, \dots, \ltype_n}$ and $\typctx = \var \hastype \mtype$.
		Since $\typctx$ is a \leftsh (\resp~unitary \leftsh) type context, $\mtype$ is a \leftsh (\resp~unitary \leftsh) multi type.
		
		\item \emph{Application}, \ie $\ptm = \ptmtwo \tm$ where $\ptmtwo$ is a \pointed term.
		The derivation $\tderiv$ is necessarily (with $\typctx = \typctxtwo \mplus \typctxthree$)
		\begin{equation*}
		\tderiv = 
		\begin{prooftree}
		\hypo{}
		\ellipsis{$\tderivtwo$}{\typctxtwo \vdash \ptmtwo \hastype \mset{\larrow{\mtypetwo}{\mtype}}}
		\hypo{}
		\ellipsis{$\tderivthree$}{\typctxthree \vdash \tm \hastype \mtypetwo}
		\infer2[\footnotesize$\ruleAp$]{\typctxtwo \uplus \typctxthree \vdash \ptmtwo \tm \hastype \mtype}
		\end{prooftree}
		\end{equation*}
		where $\typctxtwo$ and $\typctxthree$ are \leftsh (\resp~unitary \leftsh) type contexts by 
		\refrmk{merge-split-coshrinking}. 
		By \ih, $\mset{\larrow{\mtypetwo}{\mtype}}$ is a \leftsh (\resp~unitary \leftsh) multi type, and hence $\mtype$ is a \leftsh (\resp~unitary \leftsh) multi type.
		
		\item \emph{Explicit substitution}, \ie $\ptm = \ptmtwo \esub{\var}{\ptmthree}$ where $\ptmtwo$ and $\ptmthree$ are 
		\pointed terms.
		The derivation $\tderiv$ is necessarily (with $\typctx = \typctxtwo \mplus \typctxthree$)
		\begin{equation*}
		\tderiv = 
		\begin{prooftree}
		\hypo{}
		\ellipsis{$\tderivtwo$}{\typctxtwo, \var \hastype \mtypetwo \vdash \ptmtwo \hastype \mtype}
		\hypo{}
		\ellipsis{$\tderivthree$}{\typctxthree \vdash \ptmthree \hastype \mtypetwo}
		\infer2[\footnotesize$\Es$]{\typctxtwo \uplus \typctxthree \vdash \ptm\esub{\var}{\ptmthree} \hastype \mtype}
		\end{prooftree}
		\end{equation*}
		where $\typctxtwo$ and $\typctxthree$ are \leftsh (\resp~unitary \leftsh) contexts by 
		\refrmk{merge-split-coshrinking}. 
		By \ih applied to $\tderivthree$ (as $\ptmthree$ is a \pointed term), $\mtypetwo$ is a \leftsh (\resp~unitary \leftsh) multi type, and hence $\typctxtwo, \var \hastype \mtypetwo$ is a \leftsh (\resp~unitary \leftsh) type context.
		By \ih applied to $\tderivtwo$ (as $\ptmtwo$ is a \pointed term), $\mtype$ is a \leftsh (\resp~unitary \leftsh) multi type.
		\qedhere
	\end{itemize}
\end{proof}

\subsection{Correctness}

We aim to prove correctness (\Cref{thm:correctness}), refined with quantitative information: 
if a term is shrinking typable then it terminates, and the type derivation provides bounds for both the number of 
multiplicative steps to normal form and the size of the normal form.

The are two crucial ingredients:
\begin{enumerate}
	\item shrinking type derivations bound the size of normal forms (\Cref{l:size-strong-fireballs}), and unitary 
	shrinking type derivations give the exact size of normal forms;
	
	\item quantitative subject reduction (\Cref{prop:shrinking-subject-reduction}) says that not only types are preserved 
	after a $\tovsubs$ step but also that if the type derivation is shrinking then the size of the derivation strictly 
	decreases.
\end{enumerate}

In both proofs \Cref{l:spread-shrinking} is used.

\begin{lemma}[Size of \full fireballs]
	\label{lappendix:size-strong-fireballs}
	\NoteState{l:size-strong-fireballs}
	Let $\typctx$ be a \leftsh (\resp~unitary \leftsh) type context, $\sfire$ be a \full fireball and $\namedtyjp{\tderiv}{}{\sfire}{\typctx}{\mtype}$ be a derivation such that if $\sfire$ is an \valES then $\mtype$ is \rightsh (\resp~unitary \rightsh).
	Then $\sizefu{\sfire} \leq \sizem{\tderiv}$ (\resp~$\sizefu{\sfire} = \sizem{\tderiv}$).
\end{lemma}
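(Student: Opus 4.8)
The plan is to prove \reflemma{size-strong-fireballs} by structural induction on the definition of \full fireballs $\sfire$, following the same scheme as the proof of \reflemma{size-fireballs} for the open case, but being careful about the role of \rightsh shrinkingness on the right-hand type. Recall that $\sizefu{\cdot}$ counts applications \emph{and} abstractions, so the abstraction case is no longer trivial — this is the genuinely new ingredient compared to the open case. By \reflemma{shape-of-strong-fireballs} (shape of strong fireballs) every \full fireball is either a \full inert term or a \full value, and I will handle these via mutual induction, proving the inert case with the stronger statement that the type context being \leftsh suffices (no hypothesis on the right-hand type needed), since inert terms are \pointed and \reflemma{spread-shrinking} lets \leftsh shrinkingness spread to the right-hand type.

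First I would set up the induction. For the \emph{variable} case $\sfire = \var$: $\sizefu{\var} = 0 \le \sizem\tderiv$, with equality when $n = 0$... but actually equality must hold in the unitary case because a unitary \leftsh type context forces $\typctx(\var)$ to be a unitary \leftsh multi type and, if moreover $\sfire=\var$ is seen as needing its type on the right, the derivation's final rule $\ruleManyVar$ may have several premises; here I would rely on the fact that for inert terms the relevant statement only asks $\sizefu\itm \le \sizem\tderiv$ in the general (\leftsh) case and $\sizefu\itm = \sizem\tderiv$ in the unitary case, and in the unitary case $\typctx$ unitary \leftsh means each axiom contributes $0$ to $\sizem\tderiv$, matching $\sizefu\var=0$. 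For the \emph{application} case $\sfire = \sitm\,\sfiretwo$ (a \full inert term, not an \valES): the last rule is $\ruleAp$ with premises $\concl{\tderivtwo}{\typctxtwo}{\sitm}{\mset{\larrow\mtypetwo\mtype}}$ and $\concl{\tderivthree}{\typctxthree}{\sfiretwo}{\mtypetwo}$, with $\typctx = \typctxtwo\uplus\typctxthree$ and $\sizem\tderiv = \sizem\tderivtwo + \sizem\tderivthree + 1$. By \refrmk{merge-split-coshrinking} both $\typctxtwo$ and $\typctxthree$ are \leftsh (\resp unitary \leftsh). Since $\sitm$ is \pointed, \reflemma{spread-shrinking} gives $\mset{\larrow\mtypetwo\mtype}$ \leftsh (\resp unitary \leftsh); in particular $\mtypetwo$ is \rightsh (\resp unitary \rightsh), which supplies the side hypothesis needed to apply the \ih to $\tderivthree$ even when $\sfiretwo$ is an \valES. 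The \ih to $\tderivtwo$ needs no side hypothesis because $\sitm$ is not an \valES. So $\sizefu\sitm \le \sizem\tderivtwo$ and $\sizefu\sfiretwo \le \sizem\tderivthree$, hence $\sizefu\sfire = \sizefu\sitm + \sizefu\sfiretwo + 1 \le \sizem\tderiv$, with equalities in the unitary case.

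Next, the two \emph{explicit substitution} cases. If $\sfire = \sfiretwo\esub\var\sitm$ with $\sfiretwo$ a \full fireball and $\sitm$ a \full inert term: the last rule is $\ruleES$ with premises for $\sfiretwo$ (context $\typctxtwo,\var\hastype\mtypetwo$, type $\mtype$) and for $\sitm$ (context $\typctxthree$, type $\mtypetwo$), $\typctx=\typctxtwo\uplus\typctxthree$, $\sizem\tderiv = \sizem\tderivtwo + \sizem\tderivthree$. By \refrmk{merge-split-coshrinking}, $\typctxtwo$ and $\typctxthree$ are \leftsh (\resp unitary \leftsh); since $\sitm$ is \pointed, \reflemma{spread-shrinking} gives $\mtypetwo$ \leftsh (\resp unitary \leftsh), so $\typctxtwo,\var\hastype\mtypetwo$ is \leftsh (\resp unitary \leftsh) and I can apply the \ih to $\tderivtwo$ — note that if $\sfiretwo$ is an \valES then so is $\sfire$, and the side hypothesis on $\mtype$ transfers verbatim; if $\sfiretwo$ is not an \valES no side hypothesis is needed. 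Applying the \ih to $\tderivthree$ needs no side hypothesis ($\sitm$ not an \valES). Thus $\sizefu\sfire = \sizefu\sfiretwo + \sizefu\sitm \le \sizem\tderiv$, equalities in the unitary case. The subcase where $\sfire$ is itself a \full inert term $\sitm'\esub\var\sitm$ is the same computation. Finally the \emph{abstraction} case $\sfire = \la\var{\sfiretwo}$, which is an \valES: by hypothesis $\mtype$ is \rightsh (\resp unitary \rightsh), so $\mtype = \mset{\larrow{\mtype_1}{\mtype_1'},\dots,\larrow{\mtype_k}{\mtype_k'}}$ with each summand a \rightsh linear type and (in the unitary case) $k = 1$. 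The last rule is $\ruleManyVal$ over $k$ premises, each ending in $\ruleFun$ with a premise $\concl{\tderiv_i}{\typctx_i,\var\hastype\mtype_i}{\sfiretwo}{\mtype_i'}$, and $\sizem\tderiv = \sum_{i=1}^k(\sizem{\tderiv_i}+1)$. Here $\typctx = \biguplus_i\typctx_i$ is \leftsh (\resp unitary \leftsh), and by \refrmk{merge-split-coshrinking} each $\typctx_i$ is; moreover each $\mtype_i$ is \leftsh because $\larrow{\mtype_i}{\mtype_i'}$ is a \rightsh linear type and \rightsh linear types have \leftsh arrow-domains by the mutual definition in \reffig{shrinking} (and unitary \rightsh gives unitary \leftsh $\mtype_i$). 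So $\typctx_i,\var\hastype\mtype_i$ is \leftsh (\resp unitary \leftsh). If $\sfiretwo$ is an \valES I also need $\mtype_i'$ \rightsh (\resp unitary \rightsh), which holds since $\larrow{\mtype_i}{\mtype_i'}$ is a \rightsh (\resp unitary \rightsh) linear type. Hence the \ih applies to each $\tderiv_i$: $\sizefu\sfiretwo \le \sizem{\tderiv_i}$, so $\sizefu\sfire = \sizefu\sfiretwo + 1 \le \sizem{\tderiv_i}+1 \le \sizem\tderiv$ in the \leftsh case. In the unitary case $k=1$, so $\sizefu\sfire = \sizefu\sfiretwo + 1 = \sizem{\tderiv_1}+1 = \sizem\tderiv$.

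I expect the main obstacle to be bookkeeping the interplay between \leftsh and \rightsh shrinkingness across the $\ruleFun$ rule in the abstraction case: one must check that \rightsh-ness of $\larrow{\mtype_i}{\mtype_i'}$ delivers exactly the \leftsh-ness of the new context entry $\mtype_i$ and, simultaneously, the \rightsh-ness of $\mtype_i'$ needed as the side hypothesis when $\sfiretwo$ happens to be an \valES — this is precisely the subtlety the lemma's awkward hypothesis ``if $\sfire$ is an \valES then $\mtype$ is \rightsh'' is designed to propagate, and getting the quantifier over the $k$ premises of $\ruleManyVal$ to collapse to $k=1$ in the unitary case (using that unitary \rightsh multi types are singletons) is where the unitary equality is won or lost. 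Everything else is routine once \reflemma{spread-shrinking} and \refrmk{merge-split-coshrinking} are in hand.
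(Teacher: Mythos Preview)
Your proposal is correct and follows essentially the same approach as the paper's proof: structural induction on \full fireballs, using \refrmk{merge-split-coshrinking} to split the context hypothesis, \reflemma{spread-shrinking} on the \full inert (hence \pointed) subterms to propagate \leftsh shrinkingness and extract the \rightsh hypothesis needed for the argument of an application, and the observation that the ``\valES'' side condition transfers through an \ES because $\sfiretwo\esub\var\sitm$ is an \valES iff $\sfiretwo$ is. Your handling of the abstraction case---using that a \rightsh multi type is non-empty (so $k\geq 1$) and a unitary \rightsh one is a singleton (so $k=1$)---is exactly the paper's argument.

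One small point: your variable case is over-analysed. Since $\sizem{\cdot}$ counts only $\ruleFun$ and $\ruleAp$ rules, any derivation for a variable (axioms plus one $\ruleManyVar$) has $\sizem\tderiv = 0$ regardless of $n$, so $\sizefu\var = 0 = \sizem\tderiv$ holds unconditionally; no appeal to unitarity is needed here.
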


\begin{proof}
	By structural induction on $\sfire$.
	Cases:
	\begin{itemize}
		\item \emph{Variable}, \ie $\sfire = \var$. Then necessarily, for some $n \in \nat$,
		\begin{equation*}
		\tderiv = 
		\begin{prooftree}
		\infer0[\footnotesize$\ruleAx$]{\tyjp{}{\var}{\var \hastype \mset{\ltype_1}}{\ltype_1}}
		\hypo{\overset{n \in \nat}{\ldots}}
		\infer0[\footnotesize$\ruleAx$]{\tyjp{}{\var}{\var \hastype \mset{\ltype_n}}{\ltype_n}}
		\infer3[\footnotesize$\ruleManyVar$]{\tyjp{}{\var}{\typctx}{\mtype}}
		\end{prooftree}
		\end{equation*}
		where $\mtype = \mset{\ltype_1, \dots, \ltype_n}$ and $\typctx = \var \hastype \mtype$. 
		Therefore, $\sizefu{\sfire} = 0 = \sizem{\tderiv}$.
		
		\item \emph{Application}, \ie $\sfire = \sitm \sfiretwo$. Then necessarily
		\begin{equation*}
		\tderiv = 
		\begin{prooftree}
		\hypo{}
		\ellipsis{$\tderivtwo$}{\typctxtwo \vdash \sitm \hastype \mult{\ty{\mtypetwo}{\mtype}}}
		\hypo{}
		\ellipsis{$\tderivthree$}{\typctxthree \vdash \sfiretwo \hastype\mtypetwo}
		\infer2[\footnotesize$\ruleApp$]{\tyjp{}{\sitm \sfiretwo}{\typctxtwo \mplus \typctxthree}{\mtype}}
		\end{prooftree}
		\end{equation*}
		where $\typctx = \typctxtwo \mplus \typctxthree$.
		Since $\typctx$ is \leftsh (\resp~unitary \leftsh), then so are $\typctxtwo$ and $\typctxthree$ by 
		\Cref{rmk:merge-split-coshrinking}.
		As $\sitm$ is a \pointed term (\Cref{rmk:rigid}), we can apply the spreading of left shrinkingness (\Cref{l:spread-shrinking}) to $\tderivtwo$ and so 
		$\mset{\larrow{\mtypetwo}{\mtype}}$ is \leftsh (\resp~unitary \leftsh), which entails that $\mtypetwo$ is \rightsh (\resp~unitary \rightsh).
		As $\sitm$ is not a \valES, we can apply \ih to both premises and hence $\sizem{\tderivtwo} \geq \sizefu{\sitm}$ and 
		$\sizem{\tderivthree} \geq \sizefu{\sfiretwo}$ (\resp~$\sizem{\tderivtwo} = \sizefu{\sitm}$ and $\sizem{\tderivthree} = \sizefu{\sfiretwo}$). 
		So, $\sizefu{\sfire} = \sizefu{\sitm} + \sizefu{\sfiretwo} + 1 \leq \sizem{\tderivtwo} + \sizem{\tderivthree} + 1 = \sizem{\tderiv}$
		(\resp~$\sizefu{\sfire} = \sizefu{\sitm} + \sizefu{\sfiretwo} + 1 = \sizem{\tderivtwo} + \sizem{\tderivthree} + 1 = \sizem{\tderiv}$).

		\item \emph{Explicit substitution on inert}, \ie $\sfire = \sitm \esub{\var}{\sitmtwo}$. Then necessarily
		\begin{equation*}
		\tderiv = 
		\begin{prooftree}
		\hypo{}
		\ellipsis{$\tderivtwo$}{\typctxtwo, \var \hastype \mtypetwo \vdash \sitm \hastype \mtype}
		\hypo{}
		\ellipsis{$\tderivthree$}{\typctxthree \vdash \sitmtwo \hastype \mtypetwo}
		\infer2[\footnotesize$\ruleES$]{\tyjp{}{\sitm \esub{\var}{\sitmtwo}}{\typctxtwo \mplus \typctxthree}{\mtype}}
		\end{prooftree}
		\end{equation*}
		where $\typctx = \typctxtwo \mplus \typctxthree$.
		Since $\typctx$ is \leftsh (\resp unitary \leftsh), then so are $\typctxtwo$ and $\typctxthree$ by 
		\Cref{rmk:merge-split-coshrinking}.
		As $\sitmtwo$ is a \pointed term (\Cref{rmk:rigid}), we can apply the spreading of left shrinkingess (\Cref{l:spread-shrinking}) to $\tderivthree$ and 
		so $\mtypetwo$ is \leftsh (resp.~unitary \leftsh), which entails that $\typctxtwo, \var \hastype \mtypetwo$ 
		is \leftsh (\resp~unitary \leftsh).
		Neither $\sitm$ nor $\sitmtwo$ are \valES.
		We can then apply \ih to both premises: $\sizem{\tderivtwo} \geq \sizefu{\sitm}$ and $\sizem{\tderivthree} \geq 
		\sizefu{\sitmtwo}$ (\resp~$\sizem{\tderivtwo} = \sizefu{\sitm}$ and $\sizem{\tderivthree} = \sizefu{\sitmtwo}$). 
		Therefore, $\sizefu{\sfire} = \sizefu{\sitm} + \sizefu{\sitmtwo} \leq \sizem{\tderivtwo} + \sizem{\tderivthree} = 
		\sizem{\tderiv}$ (\resp~$\sizefu{\sfire} = \sizefu{\sitm} + \size{\sitmtwo} = \sizem{\tderivtwo} + \sizem{\tderivthree} = 
		\sizem{\tderiv}$).

		\item \emph{Explicit substitution on fireball}, \ie $\sfire = \sfiretwo \esub{\var}{\sitm}$. Then necessarily
		\begin{equation*}
		\tderiv = 
		\begin{prooftree}
		\hypo{}
		\ellipsis{$\tderivtwo$}{\typctxtwo, \var \hastype \mtypetwo \vdash \sfiretwo \hastype \mtype}
		\hypo{}
		\ellipsis{$\tderivthree$}{\typctxthree \vdash \sitm \hastype \mtypetwo}
		\infer2[\footnotesize$\ruleES$]{\tyjp{}{\sfiretwo \esub{\var}{\sitm}}{\typctxtwo \mplus \typctxthree}{\mtype}}
		\end{prooftree}
		\end{equation*}
		where $\typctx = \typctxtwo \mplus \typctxthree$.
		Since $\typctx$ is \leftsh (\resp~unitary \leftsh), then so are $\typctxtwo$ and $\typctxthree$ by 
		\Cref{rmk:merge-split-coshrinking}.
		As $\sitm$ is a \pointed term (\Cref{rmk:rigid}), we can apply the spreading of left shrinkingness (\Cref{l:spread-shrinking}) to $\tderivthree$ and so 
		$\mtypetwo$ is \leftsh (\resp~unitary \leftsh), which entails that $\typctxtwo, \var \hastype \mtypetwo$ is \leftsh (\resp~unitary \leftsh).
		Note that $\sitm$ is not a \valES, while $\sfiretwo$ is a \valES if and only if so is $\sfire = \sfiretwo\esub{\var}{\itm}$, hence if $\sfiretwo$ is a \valES then $\mtype$ is \rightsh (\resp~unitary \rightsh).
		We can then apply \ih to both premises: $\sizem{\tderivtwo} \geq \sizefu{\sfiretwo}$ and $\sizem{\tderivthree} \geq \sizefu{\sitm}$ (\resp~$\sizem{\tderivtwo} = \sizefu{\sfiretwo}$ and $\sizem{\tderivthree} = \sizefu{\sitm}$). 
		Therefore, $\sizefu{\sfire} = \sizefu{\sfiretwo} + \sizefu{\sitm} \leq \sizem{\tderivtwo} + \sizem{\tderivthree} = \sizem{\tderiv}$ (\resp~$\sizefu{\sfire} = \sizefu{\sfiretwo} + \sizefu{\sitm} = \sizem{\tderivtwo} + \sizem{\tderivthree} = \sizem{\tderiv}$).

		\item \emph{Abstraction}, \ie $\sfire = \la{\var}{\sfiretwo}$. 
		Then necessarily, for some $n \in \nat$,
		\begin{equation*}
		\tderiv = 
		\begin{prooftree}[separation = 0.8em, label separation = .2em]
		\hypo{}
		\ellipsis{$\tderivtwo_1$}{\typctx_1, \var \hastype \mtypethree_1 \vdash \sfiretwo \hastype \mtypetwo_1}
		\infer1[\footnotesize$\ruleFun$]{\tyjp{}{\la{\var}{\sfiretwo}}{\typctx_1}{\ty{\mtypethree_1}{\mtypetwo_1}}}
		\hypo{\overset{n \in \nat}{\ldots}}
		\hypo{}
		\ellipsis{$\tderivtwo_n$}{\typctx_n, \var \hastype \mtypethree_n \vdash \sfiretwo \hastype \mtypetwo_n}
		\infer1[\footnotesize$\ruleFun$]{\tyjp{}{\la{\var}{\sfiretwo}}{\typctx_n}{\ty{\mtypethree_n}{\mtypetwo_n}}}
		\infer3[\footnotesize$\ruleManyVal$]{\tyjp{}{\la{\var}{\sfiretwo}}{\typctx}{\mtype}}
		\end{prooftree}
		\end{equation*}
		where $\mtype = \bigmplus_{i=1}^n\mset{\larrow{\mtypethree_i}{\mtypetwo_i}}$ and $\typctx = 
		\bigmplus_{i=1}^n\typctx_i$. 
		As $\mtype$ is \rightsh (\resp~unitary \rightsh),  then $n > 0$ (\resp~$n = 1$) and $\mtypethree$ is \leftsh (\resp~unitary \leftsh) and $\mtypetwo$ is \rightsh (\resp~unitary \rightsh), which entails that $\typctx, \var 
		\hastype \mtypethree$ is a \leftsh (\resp~unitary \leftsh) context. 
		We can then apply the \ih to $\tderivtwo_i$ for all $1 \leq i \leq n$, so $\sizefu{\sfiretwo} \leq \sizem{\tderivtwo_i}$ (\resp~$\sizefu{\sfiretwo} = \sizem{\tderivtwo_i}$). 
		So, $\sizefu{\sfire} = \sizefu{\sfiretwo} + 1 \leq n(\sizefu{\sfiretwo} + 1) \leq 
		\sum_{i=1}^n(\sizem{\tderivtwo_i} + 1) = \sizem{\tderiv}$ where the first inequality holds because $n > 0$ 
		(\resp~$\sizefu{\sfire}= \sizefu{\sfiretwo} + 1 = \sizem{\tderivtwo_1} + 1 = \sizem{\tderiv}$ because $n = 1$). 
		\qedhere
	\end{itemize}
\end{proof}

We split the proof of shrinking quantitative subject reduction in two cases:
\begin{enumerate}
	\item \emph{Open:} after a $\tovsubo$ step any type derivation strictly shrinks its size 
	(\Cref{prop:weak-subject-reduction});
	\item \emph{Strong:} after a $\tovsubs$ step the derivation strictly shrinks its size if it is shrinking 
	(\Cref{prop:shrinking-subject-reduction}). 
\end{enumerate}

More precisely,  two sizes measure how much a derivation shrinks. 
They play two different roles:
\begin{enumerate}
	\item \emph{Qualitative:} To have a combinatorial proof of the characterization of normalizable terms, we need a 
	measure
	that strictly decreases for every (suitable) step to the normal form. 
	This role is played by the general size. 
	\item \emph{Quantitative:} To measure the number of steps to reach the normal forms, we need a measure that decreases 
	by a constant number for each step. This is the role
	played by the multiplicative size to count the number of multiplicative steps (it does not count the number of 
	exponential steps and, indeed, it does not change 	after an exponential step).
\end{enumerate}

The open case (\Cref{prop:weak-subject-reduction}) represents the base case for the strong case 
(\Cref{prop:shrinking-subject-reduction}).  
To prove the base case, we need a substitution lemma (\Cref{l:substitution}) to handle the exponential step, which in turn relies on \Cref{l:typing-value-splitting}.



\begin{proposition}[Shrinking quantitative subject reduction]
	\label{propappendix:shrinking-subject-reduction}
	\NoteState{prop:shrinking-subject-reduction}
	Let $\typctx$ be a \leftsh (\resp unitary \leftsh) context.
	Suppose that $\concl{\tderiv}{\typctx}{\tm}{\mtype}$ and if $\tm$ is a \valES, then $\mtype$ is \rightsh (\resp unitary \rightsh).
	\begin{enumerate}
		\item \emph{Multiplicative step:} If $\tm \toessm \tm'$ then there is a derivation 
		$\concl{\tderiv'}{\typctx}{\tm'}{\mtype}$ such that $\sizem{\tderiv'} \leq \sizem{\tderiv}-2$ and $\size{\tderiv'} < 
		\size{\tderiv}$
		(\resp $\sizem{\tderiv'} = \sizem{\tderiv}-2$ and $\size{\tderiv'} = \size{\tderiv}-1$);
		\item \emph{Exponential step:} if $\tm \toesse \tm'$ then there is a derivation 
		$\concl{\tderiv'}{\typctx}{\tm'}{\mtype}$ such that
		$\sizem{\tderiv'} = \sizem{\tderiv}$ and $\size{\tderiv'} < \size{\tderiv}$.
	\end{enumerate}
\end{proposition}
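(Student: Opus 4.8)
The plan is to prove \Cref{prop:shrinking-subject-reduction} by induction on the external evaluation context $\extctx$ witnessing the step $\tm = \extctxp{\tmtwo} \toessm \extctxp{\tmtwop} = \tm'$ (resp. $\toesse$), where at the base the redex $\tmtwo \rootRew{a} \tmtwop$ is fired inside an open context. The structure mirrors open quantitative subject reduction (\Cref{prop:weak-subject-reduction}), with the extra complication that external contexts are defined by mutual induction with rigid contexts and thus carry shrinkingness side-conditions. The $\toessm$ case decreases $\sizem\cdot$ by exactly $2$ (one $\ruleFun$ and one $\ruleAp$ disappear against one $\ruleES$ created, the $\ruleManyVal$ above the abstraction is absorbed and does not count), and $\size\cdot$ by exactly $1$ in the unitary case, while in the non-unitary case $\size\cdot$ may decrease by more than $1$ because the $\ruleManyVal$ above the fired abstraction may have several premises: hence the inequality $\size{\tderiv'} < \size{\tderiv}$. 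The $\toesse$ case keeps $\sizem\cdot$ unchanged and strictly decreases $\size\cdot$: here we invoke the Substitution Lemma (\Cref{l:substitution}), whose quantitative part gives $\sizem{\tderivthree}=\sizem{\tderivtwo}+\sizem{\tderivfour}$ and $\size{\tderivthree}\le\size{\tderivtwo}+\size{\tderivfour}$, and the strict decrease of $\size\cdot$ comes from the $\ruleES$ rule(s) of the substitution context being removed.

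First I would handle the base case, where $\extctx$ reaches the redex via an open context, i.e. $\tmtwo \Rew{\wsym a} \tmtwop$. The crucial point is that the open step is diamond/deterministic up to the usual invariants, so I can appeal directly to \Cref{prop:weak-subject-reduction}: a $\towm$ step gives $\sizem{\tderiv'} = \sizem{\tderiv}-2$ and $\size{\tderiv'} = \size{\tderiv}-1$, and a $\towe$ step gives $\sizem{\tderiv'} = \sizem{\tderiv}$ and $\size{\tderiv'} < \size{\tderiv}$. Note that \Cref{prop:weak-subject-reduction} requires no shrinkingness hypothesis, so the base case is free of side conditions; shrinkingness only needs to be threaded through the inductive cases so that the side-condition on \valES-subterms is available when the induction descends into a premise.

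Then I would walk through the cases for $\extctx$: the hole (already done), $\la\var\extctx$, $\tm\esub\var\ictx$, $\extctx\esub\var\ptm$, and the rigid sub-cases $\ptm\extctx$, $\ictx\tm$, $\ictx\esub\var\ptm$, $\ptm\esub\var\ictx$. In each case I dissect $\tderiv$ according to the last rule ($\ruleFun$ then $\ruleManyVal$ for the abstraction case, $\ruleAp$ for application cases, $\ruleES$ for the explicit-substitution cases), apply the \ih to the sub-derivation typing the active subterm, and reassemble. The recurring mini-lemma I need is: when $\typctx$ is \leftsh, the context splitting in $\ruleAp$/$\ruleES$ keeps both pieces \leftsh (\Cref{rmk:merge-split-coshrinking}); and when the active subterm sits inside a rigid context, the spreading of \leftsh shrinkingness on \pointed terms (\Cref{l:spread-shrinking}) upgrades a \leftsh type context on that subterm's judgement into a \leftsh (resp. unitary \leftsh) type on its right-hand multi type, which is exactly what validates the side-condition "if the active subterm is a \valES then its type is \rightsh" when I recurse. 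For the $\la\var\extctx$ case, $\tderiv$ ends with $\ruleManyVal$ over $n$ copies of $\ruleFun$; the side-condition that $\mtype$ is \rightsh (resp. unitary \rightsh) forces $n>0$ (resp. $n=1$), so the $\sizem\cdot$ bookkeeping ($\sizem{\tderiv} = \sum_{i=1}^n\sizem{\tderivtwo_i}$, no $+1$ since $\ruleFun$ is counted in $\sizem\cdot$ but $\ruleManyVal$ is not) and the $\size\cdot$ bookkeeping go through, and the non-unitary slack in $\size\cdot$ enters precisely here.

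The main obstacle I expect is not any single calculation but the careful propagation of the \leftsh/\rightsh (and unitary) side-conditions through the rigid-context cases: the statement's hypothesis only constrains $\typctx$ (and the right type when $\tm$ is a \valES), yet to apply the \ih to a sub-derivation I must re-establish an analogous hypothesis for the sub-term, and this only works because the sub-term is \pointed and \Cref{l:spread-shrinking} applies. A secondary delicate point is that external evaluation is non-deterministic, so I must check that the step truly factors as "open redex inside an external/rigid context" for some such context — this is guaranteed by the definition of $\toessm$/$\toesse$ via closure of $\Rew{\wsym a}$ under external contexts, so the induction on $\extctx$ is legitimate and no case is missed. Everything else (the arithmetic on $\sizem\cdot$ and $\size\cdot$ in the $\ruleAp$, $\ruleES$ cases) is the same routine bookkeeping as in \Cref{prop:weak-subject-reduction}, merely replicated for the additional context productions.
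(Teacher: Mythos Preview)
Your proposal is correct and follows exactly the paper's approach: induction on the external/rigid evaluation context, with the hole case delegated to open quantitative subject reduction (\Cref{prop:weak-subject-reduction}) and the inductive cases threading (unitary) \leftsh shrinkingness via \Cref{rmk:merge-split-coshrinking} and \Cref{l:spread-shrinking}, the abstraction case being the only place where unitarity versus non-unitarity matters. One small slip: in the abstraction case the correct count is $\sizem{\tderiv} = \sum_{i=1}^n(\sizem{\tderivtwo_i}+1)$ (each $\ruleFun$ contributes $+1$ to $\sizem{\cdot}$), and the non-unitary slack appears in $\sizem{\cdot}$ as well (you get $\sizem{\tderiv'}\le\sizem{\tderiv}-2n\le\sizem{\tderiv}-2$), not only in $\size{\cdot}$.
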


\begin{proof}
	First, we prove the unitary version of the statement (\ie under the hypothesis that $\typctx$ is unitary \leftsh and if $\tm$ is a \valES then $\mtype$ is unitary \rightsh).
	The proof is by induction on the evaluation strong context $\strongctx$ such that $\tm = \strongctxp{\tmtwo} \toess 
	\strongctxp{\tmtwo'} = \tm'$ with $\tmtwo \tomo \tmtwo'$ or $\tmtwo' \toeo \tmtwo'$. 
	Cases for $\strongctx$:
	\begin{itemize}
		\item \emph{Hole}, \ie{} $\strongctx = \ctxhole$ and $\tm \Rew{\wsym a}  \tm'$ with $a \in \{\msym, \esym\}$.
		According to open subject reduction (\Cref{prop:weak-subject-reduction}),
		\begin{itemize}
			\item if $\tm \towm \tm'$ then there exists a derivation $\concl{\tderiv'}{\typctx}{\tm'}{\mtype}$ such that
			$\sizem{\tderiv'} = \sizem{\tderiv}-2$ and $\size{\tderiv'} = \size{\tderiv}-1$;
			\item if $\tm \towe \tm'$ then there is a derivation $\concl{\tderiv'}{\typctx}{\tm'}{\mtype}$ such that 
			$\sizem{\tderiv'} = \sizem{\tderiv}$ and
			$\size{\tderiv'} < \size{\tderiv}$.
		\end{itemize}
		Note that in this case the hypothesis that $\typctx$ is a (unitary) \leftsh context and that $\mtype$ is a (unitary) \rightsh multi type if $\tm$ is an \valES are not used.
		
		\item \emph{Abstraction}, \ie $\strongctx = \la{\var}{\strongctxtwo}$. 
		So, $\tm = \strongctxp{\tmtwo} = \la{\var}{\strongctxtwop{\tmtwo}} \Rew{\esssym a} 
		\la{\var}{\strongctxtwop{\tmtwo'}} = \strongctxp{\tmtwo'} = \tm'$ with $\tmtwo \Rew{\wsym a} \tmtwo'$ and $a \in \{\msym, \esym\}$.
		Since $\tm$ is an \valES, $\mtype$ is a unitary \rightsh multi type by hypothesis and hence it has the form 
		$\mtype = \mset{\larrow{\mtypethree}{\mtypetwo}}$ where $\mtypethree$ is unitary \leftsh and $\mtypetwo$ is unitary \rightsh.
		Thus, the derivation $\tderiv$ is necessarily
		\begin{equation*}
		\tderiv = 
		\begin{prooftree}
		\hypo{}
		\ellipsis{$\tderivtwo$}{\typctx, \var \hastype \mtypethree \vdash \strongctxtwop{\tmtwo} \hastype \mtypetwo}
		\infer1[\footnotesize$\lambda$]{\typctx \vdash \la{\var}\strongctxtwop{\tmtwo} \hastype 
			\larrow{\mtypethree}{\mtypetwo}}
		\infer1[\footnotesize$\ruleManyVal$]{\typctx \vdash \la{\var}\strongctxtwop{\tmtwo} \hastype 
			\mset{\larrow{\mtypethree}{\mtypetwo}}}
		\end{prooftree}
		\end{equation*}
		By \ih (as $\typctx, \var \hastype \mtypethree$ is a unitary \leftsh type context and $\mtypetwo$ is a unitary \rightsh multi type), there is a derivation $\concl{\tderivtwo'}{\typctx, \var \hastype 
			\mtypethree}{\strongctxtwop{\tmtwo'}}{\mtypetwo}$ with: 
		\begin{enumerate}
			\item $\sizem{\tderivtwo'} = \sizem{\tderivtwo} - 2$ and $\size{\tderivtwo'} = \size{\tderivtwo}-1$ if $\tmtwo 
			\tomo \tmtwo'$ ; 
			\item $\sizem{\tderivtwo'} = \sizem{\tderivtwo}$ and $\size{\tderivtwo'} < \size{\tderivtwo}$ if $\tmtwo \toeo 
			\tmtwo'$.
		\end{enumerate}
		We can then build the derivation 
		\begin{equation*}
		\tderiv' = 
		\begin{prooftree}
		\hypo{}
		\ellipsis{$\tderivtwo'$}{\typctx, \var \hastype \mtypethree \vdash \strongctxtwop{\tmtwo'} \hastype \mtypetwo}
		\infer1[\footnotesize$\lambda$]{\typctx \vdash \la{\var}\strongctxtwop{\tmtwo'} \hastype 
			\larrow{\mtypethree}{\mtypetwo}}
		\infer1[\footnotesize$\ruleManyVal$]{\typctx \vdash \la{\var}\strongctxtwop{\tmtwo'} \hastype 
			\mset{\larrow{\mtypethree}{\mtypetwo}}}
		\end{prooftree}
		\end{equation*}
		where
		\begin{enumerate}
			\item $\sizem{\tderiv'} = \sizem{\tderivtwo'} +1 = \sizem{\tderivtwo} + 1 - 2 = \sizem{\tderiv} - 2$ and 
			$\size{\tderiv'} = \size{\tderivtwo'} +1 = \size{\tderivtwo} + 1 - 1 = \size{\tderiv} - 1$ if $\tmtwo \tomo \tmtwo'$ (\ie if $\tm \toms \tm'$); 
			\item $\sizem{\tderiv'} = \sizem{\tderivtwo'} +1 = \sizem{\tderivtwo} + 1 = \sizem{\tderiv}$ and $\size{\tderiv'} 
			= \size{\tderivtwo'} +1 < \size{\tderivtwo} + 1 = \size{\tderiv}$ if $\tmtwo \toeo \tmtwo'$ (\ie if $\tm \toes \tm'$).
		\end{enumerate}

		\item \emph{Explicit substitution of rigid context}, \ie $\strongctx = \tmthree \esub{\var}{\ictx}$. 
		So, $\tm = \strongctxp{\tmtwo} = \tmthree\esub{\var}{\ictxp{\tmtwo}} \Rew{\esssym a} 
		\tmthree\esub{\var}{\ictxp{\tmtwo'}} = \strongctxp{\tmtwo'} = \tm'$ with $\tmtwo \Rew{\wsym a} \tmtwo'$ and $a \in 
		\{\msym, \esym\}$.
		Then, necessarily
		\begin{equation*}
		\tderiv = 
		\begin{prooftree}
		\hypo{}
		\ellipsis{$\tderivtwo$}{\typctxtwo, \var \hastype \mtypetwo \vdash \tmthree \hastype \mtype}
		\hypo{}
		\ellipsis{$\tderivthree$}{\typctxthree \vdash \ictxp{\tmtwo} \hastype \mtypetwo}
		\infer2[\footnotesize$\Es$]{\typctxtwo \uplus \typctxthree \vdash \tmthree \esub{\var}{\ictxp{\tmtwo}} \hastype 
			\mtype}
		\end{prooftree}
		\end{equation*}
		with $\typctx = \typctxtwo \uplus \typctxthree$ unitary \leftsh by hypothesis, and then so is 		$\typctxthree$ by \Cref{rmk:merge-split-coshrinking}.
		By \ih applied to $\tderivthree$ (as $\ictxp{\tmtwo}$ is not an \valES), there is a derivation 
		$\concl{\tderivthree'}{\typctxthree}{\ictxp{\tmtwo'}}{\mtypetwo}$~~with: 
		\begin{enumerate}
			\item $\sizem{\tderivthree'} = \sizem{\tderivthree} - 2$ and $\size{\tderivthree'} = \size{\tderivthree} - 1$ if 
			$\tmtwo \tomo \tmtwo'$; 
			\item $\sizem{\tderivthree'} = \sizem{\tderivthree}$ and $\size{\tderivthree'} < \size{\tderivthree}$ if $\tmtwo 
			\toeo \tmtwo'$.
		\end{enumerate}
		We can then build the derivation 
		\begin{equation*}
		\tderiv' = 
		\begin{prooftree}
		\hypo{}
		\ellipsis{$\tderivtwo$}{\typctxtwo, \var \hastype \mtypetwo \vdash \tmthree \hastype \mtype}
		\hypo{}
		\ellipsis{$\tderivthree'$}{\typctxthree \vdash \ictxp{\tmtwo'} \hastype \mtypetwo}
		\infer2[\footnotesize$\Es$]{\typctxtwo \uplus \typctxthree \vdash \tmthree \esub{\var}{\ictxp{\tmtwo'} } \hastype 
			\mtype}
		\end{prooftree}
		\end{equation*}
		where $\typctx = \typctxtwo \uplus \typctxthree$ and
		\begin{enumerate}
			\item $\sizem{\tderiv'} = \sizem{\tderivtwo} + \sizem{\tderivthree'} = \sizem{\tderivtwo} + \sizem{\tderivthree} - 2 = \sizem{\tderiv} - 2$ and $\size{\tderiv'} = \size{\tderivtwo} + \size{\tderivthree'} +1 = \size{\tderivtwo} + \size{\tderivthree} - 1 +1 = \size{\tderiv} - 1$ if $\tmtwo \tomo \tmtwo'$ (\ie if $\tm \toms \tm'$); 
			\item $\size{\tderiv'} = \size{\tderivtwo} + \size{\tderivthree'} = \size{\tderivtwo} + \size{\tderivthree} = \size{\tderiv}$ and $\size{\tderiv'} = \size{\tderivtwo} + \size{\tderivthree'} +1 < \size{\tderivtwo} + \size{\tderivthree} +1 = \size{\tderiv}$ if $\tmtwo \toeo \tmtwo'$ (\ie if $\tm \toes \tm'$).
		\end{enumerate}
		
		\item \emph{Strong context with explicit substitution of \pointed term}, \ie $\strongctx = 
		\strongctxtwo\esub{\var}{\ptm}$. 
		Then, $\tm = \strongctxp{\tmtwo} = \strongctxtwop{\tmtwo} \esub{\var}{\ptm} \Rew{\esssym a} 
		\strongctxtwop{\tmtwo'}\esub{\var}{\ptm} = \strongctxp{\tmtwo'} = \tm'$ with $\tmtwo \Rew{\wsym a} \tmtwo'$ and $a \in 
		\{\msym, \esym\}$.
		The derivation $\tderiv$ is necessarily
		\begin{equation*}
		\tderiv = 
		\begin{prooftree}
		\hypo{}
		\ellipsis{$\tderivtwo$}{\typctxtwo, \var \hastype \mtypetwo \vdash \strongctxtwop{\tmtwo} \hastype \mtype}
		\hypo{}
		\ellipsis{$\tderivthree$}{\typctxthree \vdash \ptm \hastype \mtypetwo}
		\infer2[\footnotesize$\Es$]{\typctxtwo \uplus \typctxthree \vdash \strongctxtwop{\tmtwo} \esub{\var}{\ptm} \hastype 
			\mtype}
		\end{prooftree}
		\end{equation*}
		where $\typctx = \typctxtwo \uplus \typctxthree$ is unitary \leftsh by hypothesis, and then so are 		$\typctxtwo$ and $\typctxthree$ by \Cref{rmk:merge-split-coshrinking}.
		According to spreading of \leftsh shrinkingness (\Cref{l:spread-shrinking}, as $\ptm$ is a \pointed term), $\mtypetwo$ is unitary \leftsh.
		Note that $\strongctxtwop{\tmtwo}\esub{\var}{\ptm}$ is an \valES if and only if $\strongctxtwop{\tmtwo}$ is an \valES.
		So, the \ih can be applied to $\tderivtwo$ (since $\typctxtwo, \var \hastype \mtypetwo$ is a unitary \leftsh	context) and hence there is a derivation $\concl{\tderivtwo'}{\typctxtwo, \var \hastype 
			\mtypetwo}{\strongctxtwop{\tmtwo'}}{\mtype}$ with: 
		\begin{enumerate}
			\item $\sizem{\tderivtwo'} = \sizem{\tderivtwo} - 2$ and $\size{\tderivtwo'} = \size{\tderivtwo} - 1$ if $\tmtwo \tomo \tmtwo'$; 
			\item $\sizem{\tderivtwo'} = \sizem{\tderivtwo}$ and $\size{\tderivtwo'} < \size{\tderivtwo}$ if $\tmtwo \toeo \tmtwo'$.
		\end{enumerate}
		We can then build the derivation 
		\begin{equation*}
		\tderiv' = 
		\begin{prooftree}
		\hypo{}
		\ellipsis{$\tderivtwo'$}{\typctxtwo, \var \hastype \mtypetwo \vdash \strongctxtwop{\tmtwo'} \hastype \mtype}
		\hypo{}
		\ellipsis{$\tderivthree$}{\typctxthree \vdash \ptm \hastype \mtypetwo}
		\infer2[\footnotesize$\Es$]{\typctxtwo \uplus \typctxthree \vdash \strongctxtwop{\tmtwo'} \esub{\var}{\ptm} 
			\hastype \mtype}
		\end{prooftree}
		\end{equation*}
		where $\typctx = \typctxtwo \uplus \typctxthree$ and
		\begin{enumerate}
			\item $\sizem{\tderiv'} = \sizem{\tderivtwo'} + \sizem{\tderivthree} = \sizem{\tderivtwo} + \sizem{\tderivthree} 
			- 2 = \sizem{\tderiv} - 2$ and $\size{\tderiv'} = \size{\tderivtwo'} + \size{\tderivthree} +1 = \size{\tderivtwo} -1 + 
			\size{\tderivthree} +1 = \size{\tderiv} - 1$ if $\tmtwo \tomo \tmtwo'$ (\ie if $\tm \toms \tm'$); 
			\item $\sizem{\tderiv'} = \sizem{\tderivtwo'} + \sizem{\tderivthree} = \sizem{\tderivtwo} + \sizem{\tderivthree} 
			= \sizem{\tderiv}$ and $\size{\tderiv'} = \size{\tderivtwo'} + \size{\tderivthree} +1 < \size{\tderivtwo} + 
			\size{\tderivthree} +1 = \size{\tderiv}$ if $\tmtwo \toeo \tmtwo'$ (\ie if $\tm \toes \tm'$).
		\end{enumerate}
		
		\item \emph{Rigid term applied to strong context}, \ie $\strongctx = \ptm \strongctxtwo$. 
		Then, $\tm = \strongctxp{\tmtwo} = \ptm \strongctxtwop{\tmtwo} \Rew{\esssym a} \ptm \strongctxtwop{\tmtwo'} = 
		\strongctxp{\tmtwo'} = \tm'$ with $\tmtwo \Rew{\wsym a} \tmtwo'$ and $a \in \{\msym, \esym\}$.
		The derivation $\tderiv$ is necessarily
		\begin{equation*}
		\tderiv = 
		\begin{prooftree}
		\hypo{}
		\ellipsis{$\tderivtwo$}{\typctxtwo \vdash \ptm \hastype \mset{\larrow{\mtypetwo}{\mtype}}}
		\hypo{}
		\ellipsis{$\tderivthree$}{\typctxthree \vdash \strongctxtwop{\tmtwo} \hastype \mtypetwo}
		\infer2[\footnotesize$\ruleAp$]{\typctxtwo \uplus \typctxthree \vdash \ptm \strongctxtwop{\tmtwo} \hastype \mtype}
		\end{prooftree}
		\end{equation*}
		where $\typctx = \typctxtwo \mplus \typctxthree$ is unitary \leftsh by hypothesis, and then so are $\typctxtwo$ and $\typctxthree$ by \Cref{rmk:merge-split-coshrinking}.
		According to spreading of \leftsh shrinkingness (\Cref{l:spread-shrinking}, as $\ptm$ is a \pointed term), $\mset{\larrow{\mtypetwo}{\mtype}}$ is a unitary \leftsh multi type and hence $\mtypetwo$ is a unitary \rightsh multi type.
		Thus, we can apply the \ih to $\tderivthree$ and get a derivation 
		$\concl{\tderivthree'}{\typctxthree}{\strongctxtwop{\tmtwo'}}{\mtypetwo}$ such that: 
		\begin{enumerate}
			\item $\sizem{\tderivthree'} = \sizem{\tderivthree} - 2$ and $\size{\tderivthree'} = \size{\tderivthree} - 1$ if 
			$\tmtwo \tomo \tmtwo'$; 
			\item $\sizem{\tderivthree'} = \sizem{\tderivthree}$ and $\size{\tderivthree'} < \size{\tderivthree}$ if $\tmtwo 
			\toeo \tmtwo'$.
		\end{enumerate}
		We can then build the derivation 
		\begin{equation*}
		\tderiv' = 
		\begin{prooftree}
		\hypo{}
		\ellipsis{$\tderivtwo$}{\typctxtwo \vdash \ptm \hastype \mset{\larrow{\mtypetwo}{\mtype}}}
		\hypo{}
		\ellipsis{$\tderivthree'$}{\typctxthree \vdash \strongctxtwop{\tmtwo'} \hastype \mtypetwo}
		\infer2[\footnotesize$\ruleAp$]{\typctxtwo \uplus \typctxthree \vdash \ptm \strongctxtwop{\tmtwo'}  \hastype \mtype}
		\end{prooftree}
		\end{equation*}
		where $\typctx = \typctxtwo \uplus \typctxthree$ and
		\begin{enumerate}
			\item $\sizem{\tderiv'} = \sizem{\tderivtwo} + \sizem{\tderivthree'} +1 = \sizem{\tderivtwo} + 
			\sizem{\tderivthree} - 2 +1 = \sizem{\tderiv} - 2$ and $\size{\tderiv'} = \size{\tderivtwo} + \size{\tderivthree'} +1 = 
			\size{\tderivtwo} + \size{\tderivthree} - 1 +1 = \size{\tderiv} - 1$ if $\tmtwo \tomo \tmtwo'$ (\ie if $\tm \toms \tm'$); 
			\item $\sizem{\tderiv'} = \sizem{\tderivtwo} + \sizem{\tderivthree'} +1 = \sizem{\tderivtwo} + 
			\sizem{\tderivthree} +1 = \sizem{\tderiv}$ and $\size{\tderiv'} = \size{\tderivtwo} + \size{\tderivthree'} +1 < 
			\size{\tderivtwo} + \size{\tderivthree} +1 = \size{\tderiv}$ if $\tmtwo \toeo \tmtwo'$ (\ie if $\tm \toes \tm'$).
		\end{enumerate}
		
		\item \emph{Rigid context applied to term}, \ie $\strongctx = \ictx\tmthree$. 
		Then, $\tm = \strongctxp{\tmtwo} = \ictxp{\tmtwo} \tmthree \Rew{\esssym a} \ictxp{\tmtwo'} \tmthree = \strongctxp{\tmtwo'} = \tm'$ with $\tmtwo \Rew{\wsym a} \tmtwo'$ and $a \in \{\msym, \esym\}$.
		The derivation $\tderiv$ is necessarily
		\begin{equation*}
		\tderiv = 
		\begin{prooftree}
		\hypo{}
		\ellipsis{$\tderivtwo$}{\typctxtwo \vdash \ictxp{\tmtwo} \hastype \mset{\larrow{\mtypetwo}{\mtype}}}
		\hypo{}
		\ellipsis{$\tderivthree$}{\typctxthree \vdash \tmthree \hastype \mtypetwo}
		\infer2[\footnotesize$\ruleAp$]{\typctxtwo \uplus \typctxthree \vdash \ictxp{\tmtwo} \tmthree \hastype \mtype}
		\end{prooftree}
		\end{equation*}
		where $\typctx = \typctxtwo \mplus \typctxthree$ is unitary \leftsh by hypothesis, and then so is $\typctxtwo$ 	by \Cref{rmk:merge-split-coshrinking}.
		By \ih (as $\ictxp{\tmtwo}$ is not an \valES), there is a derivation 
		$\concl{\tderivtwo'}{\typctxtwo}{\ictxp{\tmtwo'}}{\mset{\larrow{\mtypetwo}{\mtype}}}$ with: 
		\begin{enumerate}
			\item $\sizem{\tderivtwo'} = \sizem{\tderivtwo} - 2$ and $\size{\tderivtwo'} = \size{\tderivtwo} - 1$ if $\tmtwo 
			\tomo \tmtwo'$; 
			\item $\sizem{\tderivtwo'} = \sizem{\tderivtwo}$ and $\size{\tderivtwo'} < \size{\tderivtwo}$ if $\tmtwo \toeo 
			\tmtwo'$.
		\end{enumerate}
		We can then build the derivation 
		\begin{equation*}
		\tderiv' = 
		\begin{prooftree}
		\hypo{}
		\ellipsis{$\tderivtwo'$}{\typctxtwo \vdash \ictxp{\tmtwo'} \hastype \mset{\larrow{\mtypetwo}{\mtype}}}
		\hypo{}
		\ellipsis{$\tderivthree$}{\typctxthree \vdash \tmthree \hastype \mtypetwo}
		\infer2[\footnotesize$\ruleAp$]{\typctxtwo \uplus \typctxthree \vdash \ictxp{\tmtwo'} \tmthree \hastype \mtype}
		\end{prooftree}
		\end{equation*}
		where $\typctx = \typctxtwo \uplus \typctxthree$ and
		\begin{enumerate}
			\item $\sizem{\tderiv'} = \sizem{\tderivtwo'} + \sizem{\tderivthree} +1 = \sizem{\tderivtwo} - 2 + 
			\sizem{\tderivthree} +1 = \sizem{\tderiv} - 2$ and $\size{\tderiv'} = \size{\tderivtwo'} + \size{\tderivthree} +1 = 
			\size{\tderivtwo} - 1 + \size{\tderivthree} +1 = \size{\tderiv} - 1$ if $\tmtwo \tomo \tmtwo'$ (\ie if $\tm \toms \tm'$); 
			\item $\sizem{\tderiv'} = \sizem{\tderivtwo'} + \sizem{\tderivthree} +1 = \sizem{\tderivtwo} + 
			\sizem{\tderivthree} +1 = \sizem{\tderiv}$ and $\size{\tderiv'} = \size{\tderivtwo'} + \size{\tderivthree} +1 < 
			\size{\tderivtwo} + \size{\tderivthree} +1 = \size{\tderiv}$ if $\tmtwo \toeo \tmtwo'$ (\ie if $\tm \toes \tm'$).
		\end{enumerate}
		
		\item \emph{Rigid context with explicit substitution of rigid term}, \ie $\strongctx = \ictx\esub{\var}{\ptm}$. 
		Then, $\tm = \strongctxp{\tmtwo} = \ictxp{\tmtwo} \esub{\var}{\ptm} \Rew{\esssym a} 
		\ictxp{\tmtwo'}\esub{\var}{\ptm} = \strongctxp{\tmtwo'} = \tm'$ with $\tmtwo \Rew{\wsym a} \tmtwo'$ and $a \in \{\msym, \esym\}$.
		The derivation $\tderiv$ is necessarily
		\begin{equation*}
		\tderiv = 
		\begin{prooftree}
		\hypo{}
		\ellipsis{$\tderivtwo$}{\typctxtwo, \var \hastype \mtypetwo \vdash \ictxp{\tmtwo} \hastype \mtype}
		\hypo{}
		\ellipsis{$\tderivthree$}{\typctxthree \vdash \ptm \hastype \mtypetwo}
		\infer2[\footnotesize$\Es$]{\typctxtwo \uplus \typctxthree \vdash \ictxp{\tmtwo} \esub{\var}{\ptm} \hastype \mtype}
		\end{prooftree}
		\end{equation*}
		where $\typctx = \typctxtwo \mplus \typctxthree$ is unitary \leftsh by hypothesis, and then so are $\typctxtwo$ and $\typctxthree$ by \Cref{rmk:merge-split-coshrinking}.
		According to spreading of \leftsh shrinkingness (\Cref{l:spread-shrinking}, as $\ptm$ is a \pointed term), $\mtypetwo$ is unitary \leftsh.
		Thus, the \ih can be applied to $\tderivtwo$ (since $\typctxtwo, \var \hastype \mtypetwo$ is unitary \leftsh and $\ictxp{\tmtwo}$ is not an \valES) to obtain a derivation $\concl{\tderivtwo'}{\typctxtwo, \var \hastype \mtypetwo}{\ictxp{\tmtwo'}}{\mtype}$ such that: 
		\begin{enumerate}
			\item $\sizem{\tderivtwo'} = \sizem{\tderivtwo} - 2$ and $\size{\tderivtwo'} = \size{\tderivtwo} - 1$ if $\tmtwo 
			\tomo \tmtwo'$; 
			\item $\sizem{\tderivtwo'} = \sizem{\tderivtwo}$ and $\size{\tderivtwo'} < \size{\tderivtwo}$ if $\tmtwo \toeo 
			\tmtwo'$.
		\end{enumerate}
		We can then build the derivation 
		\begin{equation*}
		\tderiv' = 
		\begin{prooftree}
		\hypo{}
		\ellipsis{$\tderivtwo'$}{\typctxtwo, \var \hastype \mtypetwo \vdash \ictxp{\tmtwo'} \hastype \mtype}
		\hypo{}
		\ellipsis{$\tderivthree$}{\typctxthree \vdash \ptm \hastype \mtypetwo}
		\infer2[\footnotesize$\Es$]{\typctxtwo \uplus \typctxthree \vdash \ictxp{\tmtwo'} \esub{\var}{\ptm} \hastype \mtype}
		\end{prooftree}
		\end{equation*}
		where $\typctx = \typctxtwo \uplus \typctxthree$ and
		\begin{enumerate}
			\item $\sizem{\tderiv'} = \sizem{\tderivtwo'} + \sizem{\tderivthree} = \sizem{\tderivtwo} - 2 + 
			\sizem{\tderivthree} = \sizem{\tderiv} - 2$ and $\size{\tderiv'} = \size{\tderivtwo'} + \size{\tderivthree} = 
			\size{\tderivtwo} - 1 + \size{\tderivthree} = \size{\tderiv} - 1$ if $\tmtwo \tomo \tmtwo'$ (\ie if $\tm \toms \tm'$); 
			\item $\sizem{\tderiv'} = \sizem{\tderivtwo'} + \sizem{\tderivthree} = \sizem{\tderivtwo} + \sizem{\tderivthree} 
			= \sizem{\tderiv}$ and $\size{\tderiv'} = \size{\tderivtwo'} + \size{\tderivthree} < \size{\tderivtwo} + 
			\size{\tderivthree} = \size{\tderiv}$ if $\tmtwo \toeo \tmtwo'$ (\ie if $\tm \toes \tm'$).
		\end{enumerate}
		
		\item \emph{Rigid term with explicit substitution of rigid context}, \ie $\strongctx = \ptm\esub{\var}{\ictx}$. 
		Then, $\tm = \strongctxp{\tmtwo} = \ptm\esub{\var}{\ictxp{\tmtwo}} \Rew{\esssym a} \ptm\esub{\var}{\ictxp{\tmtwo'}} 
		= \strongctxp{\tmtwo'} = \tm'$ with $\tmtwo \Rew{\wsym a} \tmtwo'$ and $a \in \{\msym, \esym\}$.
		The derivation $\tderiv$ is necessarily
		\begin{equation*}
		\tderiv = 
		\begin{prooftree}
		\hypo{}
		\ellipsis{$\tderivtwo$}{\typctxtwo, \var \hastype \mtypetwo \vdash \ptm \hastype \mtype}
		\hypo{}
		\ellipsis{$\tderivthree$}{\typctxthree \vdash \ictxp{\tmtwo} \hastype \mtypetwo}
		\infer2[\footnotesize$\Es$]{\typctxtwo \uplus \typctxthree \vdash \ptm \esub{\var}{\ictxp{\tmtwo}} \hastype \mtype}
		\end{prooftree}
		\end{equation*}
		with $\typctx = \typctxtwo \mplus \typctxthree$ unitary \leftsh by hypothesis, and then so is 	$\typctxthree$ by \Cref{rmk:merge-split-coshrinking}.
		By \ih applied to $\tderivthree$ (as $\ictxp{\tmtwo}$ is not an \valES), there is a derivation 
		$\concl{\tderivthree'}{\typctxthree}{\ictxp{\tmtwo'}}{\mtypetwo}$~with: 
		\begin{enumerate}
			\item $\sizem{\tderivthree'} = \sizem{\tderivthree} - 2$ and $\size{\tderivthree'} = \size{\tderivthree} - 1$ if 
			$\tmtwo \tomo \tmtwo'$; 
			\item $\sizem{\tderivthree'} = \sizem{\tderivthree}$ and $\size{\tderivthree'} < \size{\tderivthree}$ if $\tmtwo 
			\toeo \tmtwo'$.
		\end{enumerate}
		We can then build the derivation 
		\begin{equation*}
		\tderiv' = 
		\begin{prooftree}
		\hypo{}
		\ellipsis{$\tderivtwo$}{\typctxtwo, \var \hastype \mtypetwo \vdash \ptm \hastype \mtype}
		\hypo{}
		\ellipsis{$\tderivthree'$}{\typctxthree \vdash \ictxp{\tmtwo'} \hastype \mtypetwo}
		\infer2[\footnotesize$\Es$]{\typctxtwo \uplus \typctxthree \vdash \ptm \esub{\var}{\ictxp{\tmtwo'} } \hastype 
			\mtype}
		\end{prooftree}
		\end{equation*}
		where $\typctx = \typctxtwo \uplus \typctxthree$ and
		\begin{enumerate}
			\item $\sizem{\tderiv'} = \sizem{\tderivtwo} + \sizem{\tderivthree'} = \sizem{\tderivtwo} + \sizem{\tderivthree} 
			- 2 = \sizem{\tderiv} - 2$ and $\size{\tderiv'} = \size{\tderivtwo} + \size{\tderivthree'} +1 = \size{\tderivtwo} + 
			\size{\tderivthree} - 1 +1 = \size{\tderiv} - 1$ if $\tmtwo \tomo \tmtwo'$; 
			\item $\sizem{\tderiv'} = \sizem{\tderivtwo} + \sizem{\tderivthree'} = \sizem{\tderivtwo} + \sizem{\tderivthree} 
			= \sizem{\tderiv}$ and $\size{\tderiv'} = \size{\tderivtwo} + \size{\tderivthree'} +1 < \size{\tderivtwo} + 
			\size{\tderivthree} +1 = \size{\tderiv}$ if $\tmtwo \toeo \tmtwo'$.
		\end{enumerate}
	\end{itemize}
	
	This completes the proof for the unitary case.
	
	In the non-unitary statement (\ie under the weaker hypothesis that $\typctx$ is a \leftsh type context and if $\tm$ is an \valES then $\mtype$ is \rightsh), the proof is analogous to the unitary statement, except for the \emph{Abstraction} case.
	Indeed, in the base case (\emph{Hole context}) unitarity does not play any role, and the other cases follow from the \ih analogously to the unitary statement.
	Let us see the only substantially different case: 
	\begin{itemize}
		\item \emph{Abstraction}, \ie $\strongctx = \la{\var}{\strongctxtwo}$. 
		So, $\tm = \strongctxp{\tmtwo} = \la{\var}{\strongctxtwop{\tmtwo}} \Rew{\esssym a} 
		\la{\var}{\strongctxtwop{\tmtwo'}} = \strongctxp{\tmtwo'} = \tm'$ with $\tmtwo \Rew{\wsym a} \tmtwo'$ and $a \in 
		\{\msym, \esym\}$.
		Since $\tm$ is an \valES, $\mtype$ is a \rightsh multi type by hypothesis and hence it has the form $\mtype = 
		\mset{\larrow{\mtypethree_1}{\mtypetwo_1}, \dots, \larrow{\mtypethree_n}{\mtypetwo_n}}$ for some $n > 0$, where 
		$\mtypethree_i$ is \leftsh and $\mtypetwo_i$ is \rightsh for all $1 \leq i \leq n$.
		Thus, the derivation $\tderiv$ is necessarily
		\begin{equation*}
		\tderiv = 
		\begin{prooftree}[separation=1em]
		\hypo{}
		\ellipsis{$\tderivtwo_i$}{\typctx_i, \var \hastype \mtypethree_i \vdash \strongctxtwop{\tmtwo} \hastype \mtypetwo_i}
		\infer1[\footnotesize$\lambda$]{\typctx_i \vdash \la{\var}\strongctxtwop{\tmtwo} \hastype 
			\larrow{\mtypethree_i}{\mtypetwo_i}}
		\delims{ \left( }{ \right)_{1 \leq i \leq n} }
		\infer1[\footnotesize$\ruleManyVal$]{ \bigmplus_{i=1}^n \typctx_{i} \vdash \la{\var}\strongctxtwop{\tmtwo} \hastype 
			\bigmplus_{i=1}^n \mset{\larrow{\mtypethree_i}{\mtypetwo_i}}}
		\end{prooftree}
		\end{equation*}
		For all $1 \leq i \leq n$, by \ih (as $\typctx_i, \var \hastype \mtypethree_i$ is a \leftsh type context and $\mtypetwo_i$ is a \rightsh multi type), there is a derivation $\concl{\tderivtwo_i'}{\typctx_i, \var \hastype 
			\mtypethree_i}{\strongctxtwop{\tmtwo'}}{\mtypetwo_i}$ with: 
		\begin{enumerate}
			\item $\sizem{\tderivtwo_i'} \leq \sizem{\tderivtwo_i} - 2$ and $\size{\tderivtwo_i'} \leq \size{\tderivtwo_i}-1$ if $\tmtwo \tomo \tmtwo'$; 
			\item $\sizem{\tderivtwo_i'} = \sizem{\tderivtwo_i}$ and $\size{\tderivtwo_i'} < \size{\tderivtwo_i}$ if $\tmtwo \toeo \tmtwo'$.
		\end{enumerate}
		We can then build the derivation 
		\begin{equation*}
		\tderiv' = 
		\begin{prooftree}[separation=1em]
		\hypo{}
		\ellipsis{$\tderivtwop_i$}{\typctx_i, \var \hastype \mtypethree_i \vdash \strongctxtwop{\tmtwo'} \hastype 
			\mtypetwo_i}
		\infer1[\footnotesize$\lambda$]{\typctx_i \vdash \la{\var}\strongctxtwop{\tmtwo'} \hastype 
			\larrow{\mtypethree_i}{\mtypetwo_i}}
		\delims{ \left( }{ \right)_{1 \leq i \leq n} }
		\infer1[\footnotesize$\ruleManyVal$]{ \bigmplus_{i=1}^n \typctx_{i} \vdash \la{\var}\strongctxtwop{\tmtwop} 
			\hastype  \bigmplus_{i=1}^n \mset{\larrow{\mtypethree_i}{\mtypetwo_i}}}
		\end{prooftree}
		\end{equation*}
		where
		\begin{enumerate}
			\item $\sizem{\tderiv'} = \sum_{i=1}^n(\sizem{\tderivtwo_i'} +1) \leq \sum_{i=1}^n(\sizem{\tderivtwo_i} + 1 - 2) 
			= \sizem{\tderiv} - 2n \leq \sizem{\tderiv} - 2$ (where the last inequality holds because $n >0$) and $\size{\tderiv'} 
			=  \sum_{i=1}^n(\size{\tderivtwo_i'} +1) =  \sum_{i=1}^n(\size{\tderivtwo_i} + 1 - 1) \leq \size{\tderiv} - n < 	\size{\tderiv}$ (where the last inequality holds because $n >0$) if $\tmtwo \tomo \tmtwo'$; 
			\item $\sizem{\tderiv'} = \sum_{i=1}^n(\sizem{\tderivtwo_i'} +1) = \sum_{i=1}^n(\sizem{\tderivtwo_i} + 1) = 
			\sizem{\tderiv}$ and $\size{\tderiv'} = \sum_{i=1}^n(\size{\tderivtwo_i'} +1) < \sum_{i=1}^n(\size{\tderivtwo_i} + 1) = 
			\size{\tderiv}$ if $\tmtwo \toeo \tmtwo'$ (the inequality holds because $n > 0$).
			\qedhere
		\end{enumerate}
	\end{itemize}	
\end{proof}

\begin{theorem}[Shrinking correctness]
	\label{thmappendix:correctness}
	\NoteState{thm:correctness}
	Let $\concl{\tderiv}{\typctx}{\tm}{\mtype}$ be a shrinking (\resp unitary shrinking) derivation.
	Then there are a $\vsub$-normal form $\tm'$ and an evaluation $\deriv \colon \tm \tovsubs^* \tm'$ with $2\sizem{\deriv} + 
	\sizefu{\tm'} \leq \sizem{\tderiv}$ (\resp $2\sizem{\deriv} + \sizefu{\tm'} = \sizem{\tderiv}$).
\end{theorem}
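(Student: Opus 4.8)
The plan is to prove Shrinking correctness by induction on the general size $\size{\tderiv}$ of the shrinking (resp.\ unitary shrinking) derivation $\concl{\tderiv}{\typctx}{\tm}{\mtype}$, following exactly the pattern already used for Open correctness (\Cref{thm:open-correctness}), but now calling on the strong-case ingredients. The base case is when $\tm$ is $\esssym$-normal: by fullness (\Cref{prop:external-properties}.\ref{p:external-properties-fullness}) this means $\tm$ is $\vsub$-normal, i.e.\ a \full fireball, so we set $\tm' \defeq \tm$ and let $\deriv$ be the empty evaluation ($\sizem\deriv = 0$). Since $\tderiv$ is shrinking, $\typctx$ is \leftsh and $\mtype$ is \rightsh; in particular, if $\tm$ is an \valES then $\mtype$ is \rightsh as required, so \Cref{l:size-strong-fireballs} applies and gives $\sizefu{\tm} \leq \sizem{\tderiv}$ (resp.\ $= \sizem{\tderiv}$ in the unitary case, using that unitary shrinking implies $\typctx$ unitary \leftsh and $\mtype$ unitary \rightsh). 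This is exactly $2\sizem\deriv + \sizefu{\tm'} \leq \sizem\tderiv$ (resp.\ $=$).

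For the inductive step, suppose $\tm$ is not $\esssym$-normal, so $\tm \tovsubs \tmtwo$ by some step which is either multiplicative ($\tm \toessm \tmtwo$) or exponential ($\tm \toesse \tmtwo$). Here I would invoke Shrinking quantitative subject reduction (\Cref{prop:shrinking-subject-reduction}): its hypothesis — $\typctx$ \leftsh (resp.\ unitary \leftsh) and, if $\tm$ is an \valES, $\mtype$ \rightsh (resp.\ unitary \rightsh) — is precisely what ``$\tderiv$ shrinking (resp.\ unitary shrinking)'' gives. So there is a derivation $\concl{\tderivtwo}{\typctx}{\tmtwo}{\mtype}$ with $\size{\tderivtwo} < \size{\tderiv}$ in all cases, and moreover $\sizem{\tderivtwo} \leq \sizem{\tderiv} - 2$ (resp.\ $= \sizem{\tderiv} - 2$) if the step is multiplicative, $\sizem{\tderivtwo} = \sizem{\tderiv}$ if it is exponential. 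Note that $\tderivtwo$ is again shrinking (resp.\ unitary shrinking) since it has the \emph{same} final judgment $\typctx \vdash \tmtwo : \mtype$ as $\tderiv$, so the induction hypothesis applies to $\tderivtwo$, yielding a $\vsub$-normal $\tm'$ and $\derivp \colon \tmtwo \tovsubs^* \tm'$ with $2\sizem{\derivp} + \sizefu{\tm'} \leq \sizem{\tderivtwo}$ (resp.\ $=$).

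It remains to concatenate the first step with $\derivp$ to get $\deriv \colon \tm \tovsubs^* \tm'$ and do the arithmetic. If the step is multiplicative, $\sizem\deriv = \sizem{\derivp} + 1$, so $2\sizem\deriv + \sizefu{\tm'} = 2\sizem{\derivp} + 2 + \sizefu{\tm'} \leq \sizem{\tderivtwo} + 2 \leq \sizem{\tderiv}$, and with equalities throughout in the unitary case. If the step is exponential, $\sizem\deriv = \sizem{\derivp}$, so $2\sizem\deriv + \sizefu{\tm'} = 2\sizem{\derivp} + \sizefu{\tm'} \leq \sizem{\tderivtwo} = \sizem{\tderiv}$, again with equality in the unitary case. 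This closes the induction.

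The genuine content of this theorem lies entirely in the two lemmas it rests on — \Cref{l:size-strong-fireballs} (size of \full fireballs) and \Cref{prop:shrinking-subject-reduction} (shrinking quantitative subject reduction) — both already available in the excerpt; the theorem itself is a routine combinatorial assembly. The only point requiring a little care, and thus the ``main obstacle,'' is making sure the hypotheses of \Cref{prop:shrinking-subject-reduction} and of \Cref{l:size-strong-fireballs} are met at each invocation: specifically, that ``shrinking derivation'' unpacks to exactly the mixed hypothesis (\leftsh context, plus \rightsh right-hand type \emph{when the subject is an \valES}) those statements demand, and that this property is inherited by $\tderivtwo$ because subject reduction preserves the final judgment. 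Once this bookkeeping is in place, the induction on $\size\tderiv$ — which decreases strictly at every $\tovsubs$ step by \Cref{prop:shrinking-subject-reduction} — goes through mechanically, exactly mirroring the proof of \Cref{thm:open-correctness}.
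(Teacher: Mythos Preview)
Your proposal is correct and follows essentially the same approach as the paper's proof: induction on $\size{\tderiv}$, base case via fullness and \Cref{l:size-strong-fireballs}, inductive step via \Cref{prop:shrinking-subject-reduction}, then the same arithmetic on the concatenated evaluation. The only minor over-precision is your remark that shrinking ``unpacks to exactly'' the conditional hypothesis of those lemmas; in fact shrinking gives $\mtype$ \rightsh unconditionally, which is strictly stronger and hence trivially sufficient.
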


\begin{proof}
	By induction on the general size $\size{\tderiv}$ of $\tderiv$.
	
	If $\tm$ is $\esssym$-normal, then $\tm$ is $\vsub$-normal by \Cref{prop:external-properties}.\ref{p:external-properties-fullness}.
	Let $\tm' = \tm$ and $\deriv$ be the empty evaluation (so $\sizem{\deriv} = 0$), thus $\sizem{\tderiv} \geq \sizefu{\tm'} = \sizefu{\tm'} + 2\sizem{\deriv}$ 
	(\resp $\sizem{\tderiv} = \sizefu{\tm'} = \sizefu{\tm'} + 2\sizem{\deriv}$) by \Cref{l:size-strong-fireballs}.
	
	Otherwise, $\tm$ is not $\esssym$-normal and so $\tm \tovsubs \tmtwo$.
	According to shrinking subject reduction (\Cref{prop:shrinking-subject-reduction}), there is a derivation 
	$\concl{\tderivtwo}{\typctx}{\tmtwo}{\mtype}$ such that $\size{\tderivtwo} < \size{\tderiv}$  and 
	\begin{itemize}
		\item $\sizem{\tderivtwo} \leq \sizem{\tderiv} - 2$ (\resp $\sizem{\tderivtwo} = \sizem{\tderiv} - 2$) if $\tm \toms \tmtwo$,
		\item $\sizem{\tderivtwo} = \sizem{\tderiv}$ if $\tm \toes \tmtwo$.
	\end{itemize}
	By \ih, there exists a normal form $\tm'$ and a reduction sequence $\deriv' \colon \tmtwo \tovsubs^* \tm'$ with $2\size{\deriv'} + \sizefu{\tm'} \leq \sizem{\tderivtwo}$ (\resp $2\size{\deriv'} + \sizefu{\tm'} = \sizem{\tderivtwo}$).
	Let $\deriv$ be the $\esssym$-evaluation obtained by concatenating the first step $\tm \tovsubs \tmtwo$ and $\deriv'$.
	There are two cases:
	\begin{itemize}
		\item \emph{Multiplicative:} if $\tm \toms \tmtwo$ then $\sizem{\tderiv} \geq \sizem{\tderivtwo} + 2 \geq \sizefu{\tm'} 
		+ 2\sizem{\deriv'} + 2 = \sizefu{\tm'} + 2\sizem{\deriv}$ (\resp $\sizem{\tderiv} = \sizem{\tderivtwo} + 2 = \sizefu{\tm'} + 
		2\sizem{\deriv'} + 2 = \sizefu{\tm'} + 2\sizem{\deriv}$), since $\sizem{\deriv} = \sizem{\deriv'} + 1$.
		\item \emph{Exponential:} if $\tm \toes \tmtwo$ then $\sizem{\tderiv} = \sizem{\tderivtwo} = \sizefu{\tm'} + 		2\sizem{\deriv'} = \sizefu{\tm'} + 2\sizem{\deriv}$,  since $\sizem{\deriv} = \sizem{\deriv'}$.
		\qedhere
	\end{itemize} 
\end{proof}

\subsection{Completeness}

The proof of completeness (\Cref{thm:correctness}) is dual to the proof of (\Cref{thm:correctness}), refined as usual 
with quantitative information: 
if a term noralizes then it is shrinking typable, and the derivation provides bounds for both the number of 
multiplicative steps to normal form and the size of the normal form.

The are two crucial ingredients:
\begin{enumerate}
	\item normal forms are typable with a unitary shrinking type derivations  (\Cref{prop:typability-normal});
	
	\item quantitative subject expansion (\Cref{prop:shrinking-subject-expansion}) says that not only types are preserved 
	going back though a $\tovsubs$ step but also that if the type derivation is shrinking then the size of the derivation 
	strictly increases.
\end{enumerate}        


\begin{lemma}[Shrinking typability of normal forms]\hfill
	\label{propappendix:typability-normal}
	\NoteState{prop:typability-normal}
	\begin{enumerate}
		\item\label{papppendix:typability-normal-inert}
		\emph{Inert:}
		For every \full inert term $\sitm$ and \leftsh multi type $\mtype$,
		there exist a \leftsh type context $\typctx$ 
		and a derivation $\concl{\tderiv}{\typctx}{\sitm}{\mtype}$.
		If moreover $\mtype$ is unitary, then so is $\typctx$.
		\item\label{pappendix:typability-normal-fireball}
		\emph{Fireball:}
		For every \full fireball $\sfire$ 
		there exists a unitary shrinking derivation $\concl{\tderiv}{\typctx}{\sfire}{\mtype}$.
	\end{enumerate}
\end{lemma}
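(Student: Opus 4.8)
The plan is to prove the two statements \emph{simultaneously}, by mutual structural induction on the definitions of \full inert terms $\sitm$ and \full fireballs $\sfire$ (these grammars being themselves mutually inductive). Statement \refpoint{typability-normal-inert} is deliberately phrased for \emph{every} \leftsh multi type $\mtype$, with the extra ``moreover'' clause about unitarity: this stronger form is exactly what makes it usable as induction hypothesis inside the proof of \refpoint{typability-normal-fireball}. Two facts drive the whole argument: the ground type $\mset{\ground}$ is both unitary \leftsh and unitary \rightsh, and merging \leftsh (\resp unitary \leftsh) type contexts yields a \leftsh (\resp unitary \leftsh) type context (\refrmk{merge-split-coshrinking}).

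\textbf{Proof of \refpoint{typability-normal-inert}.} By induction on $\sitm$. If $\sitm = \var$, given $\mtype = \mset{\ltype_1,\dots,\ltype_n}$ \leftsh, stack $n$ instances of $\ruleAx$ under one $\ruleManyVar$: the context $\var \hastype \mtype$ is \leftsh, and unitary whenever $\mtype$ is. If $\sitm = \sitmtwo\,\sfiretwo$, first apply \refpoint{typability-normal-fireball} to $\sfiretwo$, obtaining a unitary shrinking derivation whose type $\mtypetwo$ is (unitary) \rightsh; then $\mset{\larrow{\mtypetwo}{\mtype}}$ is \leftsh (and unitary \leftsh when $\mtype$ is unitary \leftsh, since its domain is then unitary \rightsh and its codomain unitary \leftsh), so the induction hypothesis applies to $\sitmtwo$ with this target, and $\ruleAp$ merges the two (unitary) \leftsh contexts. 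If $\sitm = \sitmtwo\esub{\var}{\sitmthree}$, apply the induction hypothesis to $\sitmtwo$ with target $\mtype$, read off the (unitary) \leftsh multi type $\mtypetwo \defeq \typctx(\var)$ assigned to $\var$, apply the induction hypothesis again to $\sitmthree$ with target $\mtypetwo$, and close with $\ruleES$, merging contexts.

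\textbf{Proof of \refpoint{typability-normal-fireball}.} By induction on $\sfire$; by \reflemma{shape-of-strong-fireballs} (or by following the grammar directly) the cases are: $\sfire$ a \full inert term, $\sfire$ a \full value, or $\sfire$ an explicit substitution on a \full inert term. If $\sfire$ is a \full inert term, invoke \refpoint{typability-normal-inert} with target $\mset{\ground}$: being unitary \rightsh, the resulting derivation is unitary shrinking. If $\sfire = \la{\var}{\sfiretwo}$, the induction hypothesis gives a unitary shrinking $\concl{\tderivtwo}{\typctx}{\sfiretwo}{\mtypetwo}$; with $\mtype_\var \defeq \typctx(\var)$ (unitary \leftsh), apply $\ruleFun$ and then a $\ruleManyVal$ with a single premise to obtain $\la{\var}{\sfiretwo} \hastype \mset{\larrow{\mtype_\var}{\mtypetwo}}$: the context is $\typctx$ with $\var$ removed, still unitary \leftsh, and $\mset{\larrow{\mtype_\var}{\mtypetwo}}$ is unitary \rightsh. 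If $\sfire = \sfiretwo\esub{\var}{\sitm}$, apply the induction hypothesis to $\sfiretwo$, then \refpoint{typability-normal-inert} to $\sitm$ with target the multi type that $\sfiretwo$'s context assigns to $\var$, and close with $\ruleES$.

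\textbf{Main obstacle.} There is no deep combinatorial difficulty here; the real work is the bookkeeping of which flavour of shrinkingness (plain versus unitary, left versus right) sits at each judgement, and the correct interleaving of the two parts: \refpoint{typability-normal-fireball} must hand \refpoint{typability-normal-inert} a genuinely \leftsh target in the application case, while \refpoint{typability-normal-inert} hands \refpoint{typability-normal-fireball} nothing but must receive from it a (unitary) \rightsh type. That these fit together is precisely what the ``moreover'' clause of \refpoint{typability-normal-inert} together with the seed type $\mset{\ground}$ guarantee — which is why the role of $\ground$ is singled out in the statement.
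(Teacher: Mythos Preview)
Your proposal is correct and takes essentially the same approach as the paper's proof: a mutual induction on the grammars of \full inert terms and \full fireballs, using \refpoint{typability-normal-fireball} in the application case of \refpoint{typability-normal-inert} to produce a unitary \rightsh argument type, and using \refpoint{typability-normal-inert} with the seed $\mset{\ground}$ for the inert-term case of \refpoint{typability-normal-fireball}. The only difference is organizational: the paper interleaves the two parts into a single list of cases (treating variable, application, and ES-on-inert as simultaneously proving both points), whereas you present them as two separate inductions with explicit cross-calls; the content and the key observations about $\mset{\ground}$ and \refrmk{merge-split-coshrinking} are identical.
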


\begin{proof}
	Both points are proved by mutual induction on the definition of \full inert terms $\sitm$ and \full fireballs 
	$\sfire$.
	Cases:
	\begin{itemize}
		\item \emph{Variable}, \ie $\sitm = \var = \sfire$. 
		Let $\mtype = \mset{\ltype_1, \dots, \ltype_n}$ be a \leftsh (\resp unitary \leftsh) multi type for some 	$n \in \nat$.
		We can build the following derivation
		\begin{equation*}
		\tderiv = 
		\begin{prooftree}
		\infer0[\footnotesize$\Ax$]{\var \hastype \mset{\ltype_1} \vdash \var \hastype \ltype_1}
		\hypo{\overset{n \in \nat}{\ldots}}
		\infer0[\footnotesize$\Ax$]{\var \hastype \mset{\ltype_n} \vdash \var \hastype \ltype_n}
		\infer3[\footnotesize$\ruleManyVar$]{\var \hastype \mtype \vdash \var \hastype \mtype}
		\end{prooftree}
		\end{equation*}
		where $\typctx = \var \hastype \mtype$ is a \leftsh (\resp~unitary \leftsh) type context.
		Moreover, if $\mtype = \mset{\ground}$ then $\mtype$ is both a unitary \rightsh and a unitary \leftsh multi type, hence $\typctx$ is a unitary \leftsh type context and so also \refpoint{typability-normal-fireball} holds.
		
		\item \emph{Inert application}, \ie $\sitm = \sitmtwo \sfire$.
		Let $\mtype$ be a \leftsh (\resp unitary \leftsh) multi type.
		By \ih there is a derivation $\concl{\tderivtwo}{\typctxtwo}{\sfire}{\mtypetwo}$ where $\typctxtwo$ is unitary \leftsh and $\mtypetwo$ is unitary \rightsh,
		thus $\mset{\larrow{\mtypetwo}{\mtype}}$ is \leftsh (\resp unitary \leftsh).
		By \ih there is a derivation $\concl{\tderivthree}{\typctxthree}{\sitm}{\mset{\larrow{\mtypetwo}{\mtype}}}$ where 
		$\typctxthree$ is \leftsh (\resp unitary \leftsh).
		We have the following derivation:
		\begin{equation*}
		\tderiv = 
		\begin{prooftree}
		\hypo{}
		\ellipsis{$\tderivthree$}{\typctxthree \vdash \sitm \hastype \mset{\larrow{\mtypetwo}{\mtype}}}
		\hypo{}
		\ellipsis{$\tderivtwo$}{\typctxtwo \vdash \sfire \hastype \mtypetwo}
		\infer2[\footnotesize$\ruleAp$]{\typctx \vdash \sitm\sfire \hastype \mtype}
		\end{prooftree}
		\end{equation*}
		where $\typctx = \typctxtwo \uplus \typctxthree$ is \leftsh (\resp unitary \leftsh) by \Cref{rmk:merge-split-coshrinking}.
		Moreover, if $\mtype = \mset{\ground}$ then $\mtype$ is both a unitary \rightsh and a unitary \leftsh multi type, hence $\typctx$ is a unitary \leftsh type context (by \ih) and so also \refpoint{typability-normal-fireball} holds.
		
		\item \emph{Explicit substitution on inert}, \ie $\sitm = \sitmtwo \esub{\var}{\sitmthree}$.
		Let $\mtype$ be a \leftsh (\resp unitary \leftsh) multi type.
		By \ih, there is a derivation $\concl{\tderivtwo}{\typctxtwo, \var \hastype \mtypetwo}{\sitmtwo}{\mtype}$ where $\typctxtwo, \var \hastype \mtypetwo$ is a \leftsh (\resp unitary \leftsh) type context;  
		in particular, $\mtypetwo$ is a \leftsh (\resp unitary \leftsh) multi type.
		By \ih, there is a derivation $\concl{\tderivthree}{\typctxthree}{\sitmthree}{\mtypetwo}$ where $\typctxthree$ is a \leftsh (\resp unitary \leftsh) context and $\mtypetwo$ is a \leftsh (resp.~unitary \leftsh) multi type.
		We have the following derivation:
		\begin{equation*}
		\tderiv = 
		\begin{prooftree}
		\hypo{}
		\ellipsis{$\tderivtwo$}{\typctxtwo, \var \hastype \mtypetwo \vdash \sitmtwo \hastype \mtype}
		\hypo{}
		\ellipsis{$\tderivthree$}{\typctxthree \vdash \sitmthree \hastype \mtypetwo}
		\infer2[\footnotesize$\ruleAp$]{\typctx \vdash \sitmtwo \esub{\var}{\sitmthree} \hastype \mtype}
		\end{prooftree}
		\end{equation*}
		where $\typctx = \typctxtwo \uplus \typctxthree$ is \leftsh (resp.~unitary \leftsh) by 	\Cref{rmk:merge-split-coshrinking}.
		Moreover, if $\mtype = \mset{\ground}$ then $\mtype$ is both a unitary \rightsh and a unitary \leftsh multi type, hence $\typctx$ is a unitary \leftsh type context (by \ih of \refpoint{typability-normal-inert}) and so also \refpoint{typability-normal-fireball}	holds.
		
		\item \emph{Abstraction}, \ie $\sfire = \la{\var}{\sfiretwo}$. 
		By \ih, there exists a derivation $\concl{\tderivtwo}{\typctx, \var \hastype \mtypethree}{\sfire}{\mtypetwo}$ where 
		$\typctx, \var \hastype \mtypethree$ is a unitary \leftsh type context and $\mtypetwo$ is a unitary \rightsh multi 		type.
		Hence, $\mtype = \mset{\larrow{\mtypethree}{\mtypetwo}}$ is a unitary \leftsh multi type.
		We have the following derivation:
		\begin{equation*}
		\tderiv = 
		\begin{prooftree}
		\hypo{}
		\ellipsis{$\tderivtwo$}{\typctx, \var \hastype \mtypethree \vdash \sfire \hastype \mtypetwo}
		\infer1[\footnotesize$\lambda$]{\typctx \vdash \la{\var}\sfire \hastype \larrow{\mtypethree}{\mtypetwo}}
		\infer1[\footnotesize$\ruleMany$]{\typctx \vdash \la{\var}\sfire \hastype \mtype}
		\end{prooftree}
		\end{equation*}
		where $\typctx$ is a unitary \leftsh type context.
		
		\item \emph{Explicit substitution on fireball}, \ie $\sfire = \sfiretwo \esub{\var}{\sitm}$.
		By \ih, there is a derivation $\concl{\tderivtwo}{\typctxtwo, \var \hastype \mtypetwo}{\sfiretwo}{\mtype}$ where 
		$\mtype$ is a unitary \rightsh multi type and $\typctxtwo, \var \hastype \mtypetwo$ is a unitary \leftsh type context;  
		in particular, $\mtypetwo$ is a unitary \leftsh multi type.
		By \ih, there is a derivation $\concl{\tderivthree}{\typctxthree}{\sitm}{\mtypetwo}$ where $\typctxthree$ is a unitary \leftsh type context and $\mtypetwo$ is a unitary \leftsh multi type.
		We have the following derivation:
		\begin{equation*}
		\tderiv = 
		\begin{prooftree}
		\hypo{}
		\ellipsis{$\tderivtwo$}{\typctxtwo, \var \hastype \mtypetwo \vdash \sfiretwo \hastype \mtype}
		\hypo{}
		\ellipsis{$\tderivthree$}{\typctxthree \vdash \sitm \hastype \mtypetwo}
		\infer2[\footnotesize$\ruleAp$]{\typctx \vdash \sfiretwo \esub{\var}{\sitm} \hastype \mtype}
		\end{prooftree}
		\end{equation*}
		where $\typctx = \typctxtwo \uplus \typctxthree$ is a unitary \leftsh by 
		\Cref{rmk:merge-split-coshrinking}.
		\qedhere
	\end{itemize}	
\end{proof}

As for quantitative subject reduction, we split the proof of quantitative subject expansion in two cases:
\begin{enumerate}
	\item \emph{Open:} going back though a $\tovsubo$ step any type derivation strictly increases its size 
	(\Cref{prop:weak-subject-expansion});
	\item \emph{Strong:} going back though a $\tovsubs$ step the derivation strictly increases its size if it is 
	shrinking (\Cref{prop:shrinking-subject-expansion}). 
\end{enumerate}

The open case (\Cref{prop:weak-subject-expansion}) represents the base case for the strong case 
(\Cref{prop:shrinking-subject-expansion}).  
To prove the base case, we need an anti-substitution lemma (\Cref{l:anti-substitution}) to handle the exponential step, which in turn relies on \Cref{l:typing-value-complete}.


\begin{proposition}[Shrinking quantitative subject expansion]
	\label{propappendix:shrinking-subject-expansion}
	\NoteState{prop:shrinking-subject-expansion}
	Let $\typctx$ be a \leftsh (\resp unitary \leftsh) type context 
	and $\mtype$ be a multi type.
	Suppose that $\concl{\tderiv'}{\typctx}{\tm'}{\mtype}$ and if $\tm'$ is an \valES, then $\mtype$ is \rightsh (\resp unitary \rightsh).
	\begin{enumerate}
		\item \emph{Multiplicative step:} If $\tm \toessm \tm'$ then there is a derivation 
		$\concl{\tderiv}{\typctx}{\tm}{\mtype}$ such that
		$\sizem{\tderiv'} \leq \sizem{\tderiv}-2$ and $\size{\tderiv'} < \size{\tderiv}$
		(\resp $\sizem{\tderiv'} = \sizem{\tderiv}-2$ and $\size{\tderiv'} = \size{\tderiv} - 1$);
		\item \emph{Exponential step:} if $\tm \toesse \tm'$ then there is a derivation 
		$\concl{\tderiv}{\typctx}{\tm}{\mtype}$ such that
		$\sizem{\tderiv'} = \sizem{\tderiv}$ and $\size{\tderiv'} < \size{\tderiv}$.
	\end{enumerate}
\end{proposition}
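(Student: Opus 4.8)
The plan is to mirror, arrow by arrow reversed, the proof of shrinking quantitative subject reduction (\Cref{prop:shrinking-subject-reduction}). As there, I would first prove the \emph{unitary} statement (under the hypotheses that $\typctx$ is unitary \leftsh and that $\mtype$ is unitary \rightsh whenever $\tm'$ is an \valES) and then deduce the non-unitary one. The unitary statement is proved by induction on the evaluation strong context $\strongctx$ such that $\tm = \strongctxp{\tmtwo} \toess \strongctxp{\tmtwo'} = \tm'$ with $\tmtwo \tomo \tmtwo'$ or $\tmtwo \toeo \tmtwo'$; the seven inductive cases are $\strongctx \in \set{\la\var\strongctxtwo,\ \tmthree\esub\var\ictx,\ \strongctxtwo\esub\var\ptm,\ \ptm\strongctxtwo,\ \ictx\tmthree,\ \ictx\esub\var\ptm,\ \ptm\esub\var\ictx}$, exactly as in \Cref{prop:shrinking-subject-reduction}.

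For the base case $\strongctx = \ctxhole$, the step $\tm \Rew{\wsym a} \tm'$ ($a\in\set{\msym,\esym}$) is handled directly by open quantitative subject expansion (\Cref{prop:weak-subject-expansion}), which already yields a derivation $\concl{\tderiv}{\typctx}{\tm}{\mtype}$ with $\sizem{\tderiv'} = \sizem{\tderiv}-2$, $\size{\tderiv'} = \size{\tderiv}-1$ for a multiplicative step and $\sizem{\tderiv'} = \sizem{\tderiv}$, $\size{\tderiv'} < \size{\tderiv}$ for an exponential step — in particular the (unitary) statement, with unitarity playing no role here; the anti-substitution lemma (\Cref{l:anti-substitution}) enters only indirectly, as the engine of the exponential sub-case of \Cref{prop:weak-subject-expansion}. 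In each inductive case the root rule of $\tderiv'$ is forced by the shape of $\tm'$; I would apply the IH to the premise corresponding to $\strongctxtwo$ or $\ictx$, obtaining a derivation for the expanded subterm, and re-apply the same rule to reassemble $\tderiv$. Before invoking the IH there are two side conditions to discharge: (i) the typing context of that premise is unitary \leftsh, which follows from \Cref{rmk:merge-split-coshrinking} (it is a sub-context of $\typctx$) together with the spreading of \leftsh shrinkingness (\Cref{l:spread-shrinking}) applied to the sibling premise, which always types a \pointed term; and (ii) when the expanded subterm happens to be an \valES, its type is unitary \rightsh — this holds because that subterm is an \valES precisely when $\tm'$ is, so the proposition's own hypothesis applies, and in the cases where a rigid head is applied (or substituted) it holds unconditionally, since spreading forces $\mset{\larrow{\mtypetwo}{\mtype}}$ to be unitary \leftsh and hence $\mtypetwo$ to be unitary \rightsh. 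The size (in)equalities then propagate verbatim from the IH, each reconstruction adding the same fixed number of rules to $\tderiv'$ and $\tderiv$.

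For the non-unitary version the argument is identical in the base case and in all inductive cases except \emph{abstraction}: there $\tderiv'$ ends with a $\ruleManyVal$ rule having $n > 0$ premises (because $\mtype$ is \rightsh), each a $\ruleFun$ over a body derivation to which the IH applies, the side conditions being that each $\typctx_i,\var\hastype\mtypethree_i$ is \leftsh and each $\mtypetwo_i$ is \rightsh. Reassembling yields $\sizem{\tderiv} \geq \sizem{\tderiv'} + 2n \geq \sizem{\tderiv'} + 2$ and $\size{\tderiv} \geq \size{\tderiv'} + n > \size{\tderiv'}$, the strictness coming from $n > 0$. I expect the only delicate point — exactly as in \Cref{prop:shrinking-subject-reduction} — to be this bookkeeping: correctly identifying in each case which subterm is an \valES so that the right-shrinkingness side condition of the IH is met, and checking that \leftsh shrinkingness of the typing context spreads to the arguments of rigid heads via \Cref{l:spread-shrinking}; everything else is a routine reconstruction of derivations.
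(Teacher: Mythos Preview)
Your proposal is correct and follows essentially the same approach as the paper: induction on the external context $\strongctx$, with the base case delegated to open quantitative subject expansion (\Cref{prop:weak-subject-expansion}), the inductive cases handled by reconstructing the root rule after applying the IH, the side conditions discharged via \Cref{rmk:merge-split-coshrinking} and the spreading lemma (\Cref{l:spread-shrinking}), and the non-unitary variant differing only in the abstraction case where the $n>0$ premises yield the lax inequalities. The delicate points you flag—tracking when the subterm is an \valES and propagating \leftsh shrinkingness through rigid heads—are exactly the ones the paper singles out.
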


\begin{proof}
	First we prove the unitary version of the statement (\ie under the hypothesis that $\typctx$ is \leftsh and if $\tm'$ is a \valES then $\mtype$ is unitary \rightsh).
	The proof is by induction on 
	the evaluation context $\strongctx$ in the step $\tm = \strongctxp{\tmtwo} \toess \strongctxp{\tmtwo'} = \tm'$ with 
	$\tmtwo \tomo \tmtwo'$ or $\tmtwo \toeo \tmtwo'$. 
	Cases for $\strongctx$:
	\begin{itemize}
		\item \emph{Hole context}, \ie{} $\strongctx = \ctxhole$ and $\tm \Rew{\wsym a} \tm'$ with $a \in \{\msym, \esym\}$.
		According to open subject expansion (\Cref{prop:weak-subject-expansion}),
		\begin{itemize}
			\item if $\tm \towm \tm'$ then there is a derivation $\concl{\tderiv}{\typctx}{\tm}{\mtype}$ such that
			$\sizem{\tderiv'} = \sizem{\tderiv}-2$ and $\size{\tderiv'} = \size{\tderiv} - 1$;
			\item if $\tm \towe \tm'$ then there is a derivation $\concl{\tderiv}{\typctx}{\tm}{\mtype}$ such that
			$\sizem{\tderiv'} = \sizem{\tderiv}$ and $\size{\tderiv'} < \size{\tderiv}$.
		\end{itemize}
		Note that in this case the hypotheses that $\typctx$ is unitary \leftsh and $\mtype$ is unitary \rightsh if $\tm'$ is an \valES are not used.
		
		\item \emph{Abstraction}, \ie $\strongctx = \la{\var}{\strongctxtwo}$. 
		Then, $\tm = \strongctxp{\tmtwo} = \la{\var}{\strongctxtwop{\tmtwo}} \Rew{\esssym a} 
		\la{\var}{\strongctxtwop{\tmtwo'}} = \strongctxp{\tmtwo'} = \tm'$ with $\tmtwo \Rew{\wsym a} \tmtwo'$ and $a \in 
		\{\msym, \esym\}$.
		Since $\tm'$ is an \valES, $\mtype$ is a unitary \rightsh multi type by hypothesis and hence it has the form $\mtype = \mset{\larrow{\mtypetwo_1}{\mtypetwo_2}}$ where $\mtypetwo_1$ is unitary \leftsh and $\mtypetwo_2$ is 	unitary \rightsh.
		Therefore, the derivation $\tderiv'$ is necessarily
		\begin{equation*}
		\tderiv' = 
		\begin{prooftree}
		\hypo{}
		\ellipsis{$\tderivtwo'$}{\typctx, \var \hastype \mtypetwo_1 \vdash \strongctxtwop{\tmtwo'} \hastype \mtypetwo_2}
		\infer1[\footnotesize$\lambda$]{\typctx \vdash \la{\var}\strongctxtwop{\tmtwo'} \hastype 
			\larrow{\mtypetwo_1}{\mtypetwo_2}}
		\infer1[\footnotesize$\ruleManyVal$]{\typctx \vdash \la{\var}\strongctxtwop{\tmtwo'} \hastype 
			\mset{\larrow{\mtypetwo_1}{\mtypetwo_2}}}
		\end{prooftree}
		\end{equation*}
		By \ih (as $\typctx, \var \hastype \mtypetwo_1$ is a unitary \leftsh type context and $\mtypetwo_{2}$ is a unitary \rightsh multi type), there is a derivation $\concl{\tderivtwo}{\typctx, \var \hastype 
			\mtypetwo_1}{\strongctxtwop{\tmtwo}}{\mtypetwo_2}$ with: 
		\begin{enumerate}
			\item $\sizem{\tderivtwo'} = \sizem{\tderivtwo} - 2$ and $\size{\tderivtwo'} = \size{\tderivtwo} - 1$ if $\tmtwo \tomo \tmtwo'$; 
			\item $\size{\tderivtwo'} = \size{\tderivtwo}$ and $\size{\tderivtwo'} < \size{\tderivtwo}$ if $\tmtwo \toeo \tmtwo'$.
		\end{enumerate}
		We can then build the derivation 
		\begin{equation*}
		\tderiv = 
		\begin{prooftree}
		\hypo{}
		\ellipsis{$\tderivtwo$}{\typctx, \var \hastype \mtypetwo_1 \vdash \strongctxtwop{\tmtwo} \hastype \mtypetwo_2}
		\infer1[\footnotesize$\lambda$]{\typctx \vdash \la{\var}\strongctxtwop{\tmtwo} \hastype 
			\larrow{\mtypetwo_1}{\mtypetwo_2}}
		\infer1[\footnotesize$\ruleManyVal$]{\typctx \vdash \la{\var}\strongctxtwop{\tmtwo} \hastype 
			\mset{\larrow{\mtypetwo_1}{\mtypetwo_2}}}
		\end{prooftree}
		\end{equation*}
		where
		\begin{enumerate}
			\item $\sizem{\tderiv'} = \sizem{\tderivtwo'} +1 = \sizem{\tderivtwo} - 2 +1 = \sizem{\tderiv} - 2$ and 
			$\size{\tderiv'} = \size{\tderivtwo'} +1 = \size{\tderivtwo} - 1 +1 = \size{\tderiv} - 1$ if $\tmtwo \tomo \tmtwo'$ (\ie if $\tm \toms \tm'$); 
			\item $\sizem{\tderiv'} = \sizem{\tderivtwo'} +1 = \sizem{\tderivtwo} + 1 = \sizem{\tderiv}$ and $\size{\tderiv'} 
			= \size{\tderivtwo'} +1 < \size{\tderivtwo} + 1 = \size{\tderiv}$ if $\tmtwo \toeo \tmtwo'$  (\ie if $\tm \toes \tm'$).
		\end{enumerate}
		
		\item \emph{Explicit substitution of rigid context}, \ie $\strongctx = \tmthree\esub{\var}{\ictx}$. 
		So, $\tm = \strongctxp{\tmtwo} = \tmthree\esub{\var}{\ictxp{\tmtwo}} \Rew{\esssym a} 
		\tmthree\esub{\var}{\ictxp{\tmtwo'}} = \strongctxp{\tmtwo'} = \tm'$ with $\tmtwo \Rew{\wsym a} \tmtwo'$ and $a \in \{\msym, \esym\}$.
		Then, necessarily
		\begin{equation*}
			\tderiv' = 
			\begin{prooftree}
			\hypo{}
			\ellipsis{$\tderivtwo$}{\typctxtwo, \var \hastype \mtypetwo \vdash \tmthree \hastype \mtype}
			\hypo{}
			\ellipsis{$\tderivthree'$}{\typctxthree \vdash \ictxp{\tmtwo'} \hastype \mtypetwo}
			\infer2[\footnotesize$\Es$]{\typctxtwo \uplus \typctxthree \vdash \tmthree \esub{\var}{\ictxp{\tmtwo'} } \hastype 
				\mtype}
			\end{prooftree}
		\end{equation*}
		where $\typctx = \typctxtwo \uplus \typctxthree$, with $\typctxtwo$ and $\typctxthree$ unitary \leftsh by \Cref{rmk:merge-split-coshrinking}.
		By \ih applied to $\tderivthree'$ (as $\ictxp{\tmtwo'}$ is not a \valES), there is a derivation $\concl{\tderivthree}{\typctxthree}{\ictxp{\tmtwo}}{\mtypetwo}$ with: 
		\begin{enumerate}
			\item $\sizem{\tderivthree'} = \sizem{\tderivthree} - 2$ and $\size{\tderivthree'} = \size{\tderivthree} - 1$ if 	$\tmtwo \tomo \tmtwo'$; 
			\item $\size{\tderivthree'} = \size{\tderivthree}$ and $\size{\tderivthree'} < \size{\tderivthree}$ if $\tmtwo 	\toeo \tmtwo'$.
		\end{enumerate}
		We can then build the derivation
		\begin{equation*}
		\tderiv = 
		\begin{prooftree}
		\hypo{}
		\ellipsis{$\tderivtwo$}{\typctxtwo, \var \hastype \mtypetwo \vdash \tmthree \hastype \mtype}
		\hypo{}
		\ellipsis{$\tderivthree$}{\typctxthree \vdash \ictxp{\tmtwo} \hastype \mtypetwo}
		\infer2[\footnotesize$\Es$]{\typctxtwo \uplus \typctxthree \vdash \tmthree \esub{\var}{\ictxp{\tmtwo}} \hastype 
			\mtype}
		\end{prooftree}
		\end{equation*}
		where $\typctx = \typctxtwo \uplus \typctxthree$ and
		\begin{enumerate}
			\item $\sizem{\tderiv'} = \sizem{\tderivtwo} + \sizem{\tderivthree'} = \sizem{\tderivtwo} + \sizem{\tderivthree} 
			- 2 = \sizem{\tderiv} - 2$ and
			$\size{\tderiv'} = \size{\tderivtwo} + \size{\tderivthree'} +1 = \size{\tderivtwo} + \size{\tderivthree} -1 +1 = \size{\tderiv} - 1$ if $\tmtwo \tomo \tmtwo'$ (\ie if $\tm \toms \tm'$); 
			\item $\sizem{\tderiv'} = \sizem{\tderivtwo} + \sizem{\tderivthree'} = \sizem{\tderivtwo} + \sizem{\tderivthree} = \sizem{\tderiv}$ and $\size{\tderiv'} = \size{\tderivtwo} + \size{\tderivthree'} +1 < \size{\tderivtwo} + \size{\tderivthree} +1 = \size{\tderiv}$ if $\tmtwo \toeo \tmtwo'$ (\ie if $\tm \toes \tm'$).
		\end{enumerate}
		
		\item \emph{Explicit substitution on strong}, \ie $\strongctx =  \strongctxtwo\esub{\var}{\ptm}$. 
		So, $\tm = \strongctxp{\tmtwo} = \strongctxtwop{\tmtwo} \esub{\var}{\ptm} \Rew{\esssym a} 
		\strongctxtwop{\tmtwo'}\esub{\var}{\ptm} = \strongctxp{\tmtwo'} = \tm'$ with $\tmtwo \Rew{\wsym a} \tmtwo'$ and $a \in \{\msym, \esym\}$.
		Then, necessarily
		\begin{equation*}
		\tderiv' = 
		\begin{prooftree}
		\hypo{}
		\ellipsis{$\tderivtwo'$}{\typctxtwo, \var \hastype \mtypetwo \vdash \strongctxtwop{\tmtwo'} \hastype \mtype}
		\hypo{}
		\ellipsis{$\tderivthree$}{\typctxthree \vdash \ptm \hastype \mtypetwo}
		\infer2[\footnotesize$\Es$]{\typctxtwo \uplus \typctxthree \vdash \strongctxtwop{\tmtwo'} \esub{\var}{\ptm} 
			\hastype \mtype}
		\end{prooftree}
		\end{equation*}
		with $\typctx = \typctxtwo \uplus \typctxthree$, and $\typctxtwo$ and $\typctxthree$ unitary \leftsh by 	\refrmk{merge-split-coshrinking}.
		By spreading of \leftsh shrinkingness (\Cref{l:spread-shrinking}, as $\ptm$ is a \pointed term), $\mtypetwo$ is unitary \leftsh.
		Note that $\strongctxtwop{\tmtwo'}\esub{\var}{\ptm}$ is an \valES iff $\strongctxtwop{\tmtwo}$ is an \valES.
		So, the \ih can be applied to $\tderivtwo'$ (since $\typctxtwo, \var \hastype \mtypetwo$ is a unitary \leftsh type context) and hence there is a derivation $\concl{\tderivtwo}{\typctxtwo, \var \hastype 
			\mtypetwo}{\strongctxtwop{\tmtwo}}{\mtype}$ with: 
		\begin{enumerate}
			\item $\sizem{\tderivtwo'} = \sizem{\tderivtwo} - 2$ and $\size{\tderivtwo'} = \size{\tderivtwo} - 1$ if $\tmtwo 
			\tomo \tmtwo'$; 
			\item $\sizem{\tderivtwo'} = \sizem{\tderivtwo}$ and $\size{\tderivtwo'} < \size{\tderivtwo}$ if $\tmtwo \toeo 
			\tmtwo'$.
		\end{enumerate}
		We can then build the derivation 
		\begin{equation*}
		\tderiv = 
		\begin{prooftree}
		\hypo{}
		\ellipsis{$\tderivtwo$}{\typctxtwo, \var \hastype \mtypetwo \vdash \strongctxtwop{\tmtwo} \hastype \mtype}
		\hypo{}
		\ellipsis{$\tderivthree$}{\typctxthree \vdash \ptm \hastype \mtypetwo}
		\infer2[\footnotesize$\Es$]{\typctxtwo \uplus \typctxthree \vdash \strongctxtwop{\tmtwo} \esub{\var}{\ptm} \hastype 
			\mtype}
		\end{prooftree}
		\end{equation*}
		where $\typctx = \typctxtwo \uplus \typctxthree$ and
		\begin{enumerate}
			\item $\sizem{\tderiv'} = \sizem{\tderivtwo'} + \sizem{\tderivthree} = \sizem{\tderivtwo} + \sizem{\tderivthree} 
			- 2 = \sizem{\tderiv} - 2$ and $\size{\tderiv'} = \size{\tderivtwo'} + \size{\tderivthree} +1 = \size{\tderivtwo} - 1 + \size{\tderivthree} +1 = \size{\tderiv} - 1$ if $\tmtwo \tomo \tmtwo'$ (\ie if $\tm \toms \tm'$); 
			\item $\sizem{\tderiv'} = \sizem{\tderivtwo'} + \sizem{\tderivthree} = \sizem{\tderivtwo} + \sizem{\tderivthree} 
			= \sizem{\tderiv}$ and $\size{\tderiv'} = \size{\tderivtwo'} + \size{\tderivthree} +1 < \size{\tderivtwo} + 
			\size{\tderivthree} +1 = \size{\tderiv}$ if $\tmtwo \toeo \tmtwo'$ (\ie if $\tm \toes \tm'$).
		\end{enumerate}
		
		\item \emph{Application to strong}, \ie $\strongctx = \ptm \strongctxtwo$. 
		Then, $\tm = \strongctxp{\tmtwo} = \ptm \strongctxtwop{\tmtwo} \Rew{\esssym a} \ptm \strongctxtwop{\tmtwo'} = \strongctxp{\tmtwo'} = \tm'$ with $\tmtwo \Rew{\wsym a} \tmtwo'$ and $a \in \{\msym, \esym\}$.
		The derivation $\tderiv'$ is necessarily
		\begin{equation*}
		\tderiv' = 
		\begin{prooftree}
		\hypo{}
		\ellipsis{$\tderivtwo$}{\typctxtwo \vdash \ptm \hastype \mset{\larrow{\mtypetwo}{\mtype}}}
		\hypo{}
		\ellipsis{$\tderivthree'$}{\typctxthree \vdash \strongctxtwop{\tmtwo'} \hastype \mtypetwo}
		\infer2[\footnotesize$\ruleAp$]{\typctxtwo \uplus \typctxthree \vdash \ptm \strongctxtwop{\tmtwo'} \hastype \mtype}
		\end{prooftree}
		\end{equation*}
		where $\typctx = \typctxtwo \uplus \typctxthree$, with $\typctxtwo$ and $\typctxthree$ unitary \leftsh by \Cref{rmk:merge-split-coshrinking}.
		According to spreading of \leftsh shrinkingness (\Cref{l:spread-shrinking}, as $\ptm$ is a \pointed term), $\mset{\larrow{\mtypetwo}{\mtype}}$ is unitary \leftsh and so $\mtypetwo$ is  unitary \rightsh.
		We can then apply the \ih to $\tderivthree'$, thus there is a derivation 
		$\concl{\tderivthree}{\typctxthree}{\strongctxtwop{\tmtwo}}{\mtypetwo}$ with: 
		\begin{enumerate}
			\item $\sizem{\tderivthree'} = \sizem{\tderivthree} - 2$ and $\size{\tderivthree'} = \size{\tderivthree} - 1$ if $\tmtwo \tomo \tmtwo'$; 
			\item $\sizem{\tderivthree'} = \sizem{\tderivthree}$ and $\size{\tderivthree'} < \size{\tderivthree}$ if $\tmtwo 
			\toeo \tmtwo'$.
		\end{enumerate}
		We can then build the derivation 
		\begin{equation*}
		\tderiv = 
		\begin{prooftree}
		\hypo{}
		\ellipsis{$\tderivtwo$}{\typctxtwo \vdash \ptm \hastype \mset{\larrow{\mtypetwo}{\mtype}}}
		\hypo{}
		\ellipsis{$\tderivthree$}{\typctxthree \vdash \strongctxtwop{\tmtwo} \hastype \mtypetwo}
		\infer2[\footnotesize$\ruleAp$]{\typctxtwo \uplus \typctxthree \vdash \ptm \strongctxtwop{\tmtwo} \hastype \mtype}
		\end{prooftree}
		\end{equation*}
		where $\typctx = \typctxtwo \uplus \typctxthree$ and
		\begin{enumerate}
			\item $\sizem{\tderiv'} = \sizem{\tderivtwo} + \sizem{\tderivthree'} +1 = \sizem{\tderivtwo} + \sizem{\tderivthree} - 2 +1 = \sizem{\tderiv} - 2$ and $\size{\tderiv'} = \size{\tderivtwo} + \size{\tderivthree'} +1 = \size{\tderivtwo} + \size{\tderivthree} -1 +1 = \size{\tderiv} - 1$ if $\tmtwo \tomo \tmtwo'$ (\ie if $\tm \toms \tm'$); 
			\item $\sizem{\tderiv'} = \sizem{\tderivtwo} + \sizem{\tderivthree'} +1 = \sizem{\tderivtwo} + 
			\sizem{\tderivthree} +1 = \sizem{\tderiv}$ and $\size{\tderiv'} = \size{\tderivtwo} + \size{\tderivthree'} +1 < 
			\size{\tderivtwo} + \size{\tderivthree} +1 = \size{\tderiv}$ if $\tmtwo \toeo \tmtwo'$ (\ie if $\tm \toes \tm'$).
		\end{enumerate}
		
		\item \emph{Application of inert}, \ie $\strongctx = \ictx\tmthree$. 
		Then, $\tm = \strongctxp{\tmtwo} = \ictxp{\tmtwo} \tmthree \Rew{\esssym a} \ictxp{\tmtwo'} \tmthree = 	\strongctxp{\tmtwo'} = \tm'$ with $\tmtwo \Rew{\wsym a} \tmtwo'$ and $a \in \{\msym, \esym\}$.
		The derivation $\tderiv'$ is necessarily
		\begin{equation*}
		\tderiv' = 
		\begin{prooftree}
		\hypo{}
		\ellipsis{$\tderivtwo'$}{\typctxtwo \vdash \ictxp{\tmtwo'} \hastype \mset{\larrow{\mtypetwo}{\mtype}}}
		\hypo{}
		\ellipsis{$\tderivthree$}{\typctxthree \vdash \tmthree \hastype \mtypetwo}
		\infer2[\footnotesize$\ruleAp$]{\typctxtwo \uplus \typctxthree \vdash \ictxp{\tmtwo'} \tmthree \hastype \mtype}
		\end{prooftree}
		\end{equation*}
		where $\typctx = \typctxtwo \uplus \typctxthree$, with $\typctxtwo$ and $\typctxthree$ unitary \leftsh by \Cref{rmk:merge-split-coshrinking}.
		By \ih (as $\ictxp{\tmtwo'}$ is not an \valES), there is a derivation 
		$\concl{\tderivtwo}{\typctxtwo}{\ictxp{\tmtwo}}{\mset{\larrow{\mtypetwo}{\mtype}}}$ with: 
		\begin{enumerate}
			\item $\sizem{\tderivtwo'} = \sizem{\tderivtwo} - 2$ and $\size{\tderivtwo'} = \size{\tderivtwo} - 1$ if $\tmtwo \tomo \tmtwo'$; 
			\item $\sizem{\tderivtwo'} = \sizem{\tderivtwo}$ and $\size{\tderivtwo'} < \size{\tderivtwo}$ if $\tmtwo \toeo \tmtwo'$.
		\end{enumerate}
		We can then build the derivation 
		\begin{equation*}
		\tderiv = 
		\begin{prooftree}
		\hypo{}
		\ellipsis{$\tderivtwo$}{\typctxtwo \vdash \ictxp{\tmtwo} \hastype \mset{\larrow{\mtypetwo}{\mtype}}}
		\hypo{}
		\ellipsis{$\tderivthree$}{\typctxthree \vdash \tmthree \hastype \mtypetwo}
		\infer2[\footnotesize$\ruleAp$]{\typctxtwo \uplus \typctxthree \vdash \ictxp{\tmtwo} \tmthree \hastype \mtype}
		\end{prooftree}
		\end{equation*}
		where $\typctx = \typctxtwo \uplus \typctxthree$ and
		\begin{enumerate}
			\item $\sizem{\tderiv'} = \sizem{\tderivtwo'} + \sizem{\tderivthree} +1 = \sizem{\tderivtwo} -2 + 
			\sizem{\tderivthree} +1 = \sizem{\tderiv} - 2$ and $\size{\tderiv'} = \size{\tderivtwo'} + \size{\tderivthree} +1 = 	\size{\tderivtwo} -1 + \size{\tderivthree} +1 = \size{\tderiv} - 1$ if $\tmtwo \tomo \tmtwo'$ (\ie if $\tm \toms \tm'$); 
			\item $\size{\tderiv'} = \size{\tderivtwo'} + \size{\tderivthree} +1 = \size{\tderivtwo} + \size{\tderivthree} +1 
			= \size{\tderiv}$ and $\size{\tderiv'} = \size{\tderivtwo'} + \size{\tderivthree} +1 < \size{\tderivtwo} + 
			\size{\tderivthree} +1 = \size{\tderiv}$ if $\tmtwo \toeo \tmtwo'$ (\ie if $\tm \toes \tm'$).
		\end{enumerate}
		
		\item \emph{Explicit substitution on inert}, \ie $\strongctx = \ictx\esub{\var}{\ptm}$. 
		Then, $\tm = \strongctxp{\tmtwo} = \ictxp{\tmtwo} \esub{\var}{\ptm} \Rew{\esssym a} \ictxp{\tmtwo'}\esub{\var}{\ptm} 
		= \strongctxp{\tmtwo'} = \tm'$ with $\tmtwo \Rew{\wsym a} \tmtwo'$ and $a \in \{\msym, \esym\}$.
		So, necessarily
		\begin{equation*}
		\tderiv' = 
		\begin{prooftree}
		\hypo{}
		\ellipsis{$\tderivtwo'$}{\typctxtwo, \var \hastype \mtypetwo \vdash \ictxp{\tmtwo'} \hastype \mtype}
		\hypo{}
		\ellipsis{$\tderivthree$}{\typctxthree \vdash \ptm \hastype \mtypetwo}
		\infer2[\footnotesize$\Es$]{\typctxtwo \uplus \typctxthree \vdash \ictxp{\tmtwo'} \esub{\var}{\ptm} \hastype \mtype}
		\end{prooftree}
		\end{equation*}
		with $\typctx = \typctxtwo \uplus \typctxthree$, and $\typctxtwo$ and $\typctxthree$ unitary \leftsh by \Cref{rmk:merge-split-coshrinking}.
		By spreading of \leftsh shrinkingness (\reflemma{spread-shrinking}, as $\ptm$ is a \pointed term), $\mtypetwo$ is a unitary \leftsh multi type.
		Thus, the \ih can be applied to $\tderivtwo'$ (since $\typctxtwo, \var \hastype \mtypetwo$ is a unitary \leftsh type	context and $\ictxp{\tmtwo}$ is not a \valES) and so there is a derivation $\concl{\tderivtwo}{\typctxtwo, \var \hastype 
			\mtypetwo}{\ictxp{\tmtwo}}{\mtype}$ with: 
		\begin{enumerate}
			\item $\sizem{\tderivtwo'} = \sizem{\tderivtwo} - 2$ and $\size{\tderivtwo'} = \size{\tderivtwo} - 1$ if $\tmtwo \tomo \tmtwo'$; 
			\item $\sizem{\tderivtwo'} = \sizem{\tderivtwo}$ and $\size{\tderivtwo'} < \size{\tderivtwo}$ if $\tmtwo \toeo \tmtwo'$.
		\end{enumerate}
		We can then build the derivation 
		\begin{equation*}
		\tderiv = 
		\begin{prooftree}
		\hypo{}
		\ellipsis{$\tderivtwo$}{\typctxtwo, \var \hastype \mtypetwo \vdash \ictxp{\tmtwo} \hastype \mtype}
		\hypo{}
		\ellipsis{$\tderivthree$}{\typctxthree \vdash \ptm \hastype \mtypetwo}
		\infer2[\footnotesize$\Es$]{\typctxtwo \uplus \typctxthree \vdash \ictxp{\tmtwo} \esub{\var}{\ptm} \hastype \mtype}
		\end{prooftree}
		\end{equation*}
		where $\typctx = \typctxtwo \uplus \typctxthree$ and
		\begin{enumerate}
			\item $\sizem{\tderiv'} = \sizem{\tderivtwo'} + \sizem{\tderivthree} = \sizem{\tderivtwo} -2 + \sizem{\tderivthree} = \sizem{\tderiv} - 2$ 
			and $\size{\tderiv'} = \size{\tderivtwo'} + \size{\tderivthree} +1 = \size{\tderivtwo} -1 + \size{\tderivthree} +1 = \size{\tderiv} - 1$ if $\tmtwo \tomo \tmtwo'$ (\ie if $\tm \toms \tm'$); 
			\item $\sizem{\tderiv'} = \sizem{\tderivtwo'} + \sizem{\tderivthree} = \sizem{\tderivtwo} + \sizem{\tderivthree} = 
			\sizem{\tderiv}$ and $\size{\tderiv'} = \size{\tderivtwo'} + \size{\tderivthree} +1 = \size{\tderivtwo} + 
			\size{\tderivthree} +1 = \size{\tderiv}$ if $\tmtwo \toeo \tmtwo'$ (\ie if $\tm \toes \tm'$).
		\end{enumerate}
		
		\item \emph{Explicit substitution of inert on rigid}, \ie $\strongctx = \ptm\esub{\var}{\ictx}$. 
		So, $\tm = \strongctxp{\tmtwo} = \ptm\esub{\var}{\ictxp{\tmtwo}} \Rew{\esssym a} \ptm\esub{\var}{\ictxp{\tmtwo'}} = \strongctxp{\tmtwo'} = \tm'$ with $\tmtwo \Rew{\wsym a} \tmtwo'$ and $a \in \{\msym, \esym\}$.
		Thus, necessarily
		\begin{equation*}
		\tderiv' = 
		\begin{prooftree}
		\hypo{}
		\ellipsis{$\tderivtwo$}{\typctxtwo, \var \hastype \mtypetwo \vdash \ptm \hastype \mtype}
		\hypo{}
		\ellipsis{$\tderivthree'$}{\typctxthree \vdash \ictxp{\tmtwo'} \hastype \mtypetwo}
		\infer2[\footnotesize$\Es$]{\typctxtwo \uplus \typctxthree \vdash \ptm \esub{\var}{\ictxp{\tmtwo'} } \hastype 
			\mtype}
		\end{prooftree}
		\end{equation*}
		where $\typctx = \typctxtwo \uplus \typctxthree$, with $\typctxtwo$ and $\typctxthree$ unitary \leftsh type contexts by \Cref{rmk:merge-split-coshrinking}.
		By \ih applied to $\tderivthree'$ (as $\ictxp{\tmtwo}$ is not a \valES), there is a derivation 
		$\concl{\tderivthree}{\typctxthree}{\ictxp{\tmtwo}}{\mtypetwo}$ with: 
		\begin{enumerate}
			\item $\sizem{\tderivthree'} = \sizem{\tderivthree} - 2$ and $\size{\tderivthree'} = \size{\tderivthree} - 1$ if $\tmtwo \tomo \tmtwo'$; 
			\item $\sizem{\tderivthree'} = \sizem{\tderivthree}$ and $\size{\tderivthree'} < \size{\tderivthree}$ if $\tmtwo \toeo \tmtwo'$.
		\end{enumerate}
		We can then build the derivation 
		\begin{equation*}
		\tderiv = 
		\begin{prooftree}
		\hypo{}
		\ellipsis{$\tderivtwo$}{\typctxtwo, \var \hastype \mtypetwo \vdash \ptm \hastype \mtype}
		\hypo{}
		\ellipsis{$\tderivthree$}{\typctxthree \vdash \ictxp{\tmtwo} \hastype \mtypetwo}
		\infer2[\footnotesize$\Es$]{\typctxtwo \uplus \typctxthree \vdash \ptm \esub{\var}{\ictxp{\tmtwo}} \hastype \mtype}
		\end{prooftree}
		\end{equation*}
		where $\typctx = \typctxtwo \uplus \typctxthree$ and
		\begin{enumerate}
			\item $\sizem{\tderiv'} = \sizem{\tderivtwo} + \sizem{\tderivthree'} = \sizem{\tderivtwo} + \sizem{\tderivthree} - 2 = \sizem{\tderiv} - 2$ and $\size{\tderiv'} = 1 +\size{\tderivtwo} + \size{\tderivthree'} = 1+ \size{\tderivtwo} + \size{\tderivthree} - 1 = \size{\tderiv} - 1$ if $\tmtwo \tomo \tmtwo'$ (\ie if $\tm \toms \tm'$); 
			\item $\sizem{\tderiv'} = \sizem{\tderivtwo} + \sizem{\tderivthree'} = \sizem{\tderivtwo} + \sizem{\tderivthree} = 
			\sizem{\tderiv}$ and $\size{\tderiv'} = \size{\tderivtwo} + \size{\tderivthree'} +1 < \size{\tderivtwo} + 
			\size{\tderivthree} +1 = \size{\tderiv}$ if $\tmtwo \toeo \tmtwo'$ (\ie if $\tm \toes \tm'$).
		\end{enumerate}
		
	\end{itemize}

	This completes the proof for the unitary case.
	
	In the non-unitary statement (\ie under the weaker hypothesis that $\typctx$ is a \leftsh type context and if $\tm'$ is an \valES then $\mtype$ is \rightsh), the proof is analogous to the unitary statement, except for the \emph{Abstraction} case.
	Indeed, in the base case (\emph{Hole context}) unitarity does not play any role, and the other cases follow from the \ih analogously to the unitary statement.
	Let us see the only substantially different case: 
	\begin{itemize}
		\item \emph{Abstraction}, \ie $\strongctx = \la{\var}{\strongctxtwo}$. 
		So, $\tm = \strongctxp{\tmtwo} = \la{\var}{\strongctxtwop{\tmtwo}} \Rew{\esssym a} 
		\la{\var}{\strongctxtwop{\tmtwo'}} = \strongctxp{\tmtwo'} = \tm'$ with $\tmtwo \Rew{\wsym a} \tmtwo'$ and $a \in 
		\{\msym, \esym\}$.
		Since $\tm'$ is an \valES, $\mtype$ is a \rightsh multi type by hypothesis and hence it has the form $\mtype = 
		\mset{\larrow{\mtypethree_1}{\mtypetwo_1}, \dots, \larrow{\mtypethree_n}{\mtypetwo_n}}$ for some $n > 0$, where 
		$\mtypethree_i$ is \leftsh and $\mtypetwo_i$ is \rightsh for all $1 \leq i \leq n$.
		Thus, the derivation $\tderiv'$ is necessarily
		\begin{equation*}
			\tderiv' = 
			\begin{prooftree}[separation=1em]
			\hypo{}
			\ellipsis{$\tderivtwo_i'$}{\typctx_i, \var \hastype \mtypethree_i \vdash \strongctxtwop{\tmtwo} \hastype \mtypetwo_i}
			\infer1[\footnotesize$\lambda$]{\typctx_i \vdash \la{\var}\strongctxtwop{\tmtwo} \hastype 
				\larrow{\mtypethree_i}{\mtypetwo_i}}
			\delims{ \left( }{ \right)_{1 \leq i \leq n} }
			\infer1[\footnotesize$\ruleManyVal$]{ \bigmplus_{i=1}^n \typctx_{i} \vdash \la{\var}\strongctxtwop{\tmtwo} \hastype 
				\bigmplus_{i=1}^n \mset{\larrow{\mtypethree_i}{\mtypetwo_i}}}
			\end{prooftree}
		\end{equation*}
		For all $1 \leq i \leq n$, by \ih (as $\typctx_i, \var \hastype \mtypethree_i$ is a \leftsh type context and $\mtypetwo_i$ is a \rightsh multi type), there is a derivation $\concl{\tderivtwo_i}{\typctx_i, \var \hastype \mtypethree_i}{\strongctxtwop{\tmtwo'}}{\mtypetwo_i}$ with: 
		\begin{enumerate}
			\item $\sizem{\tderivtwo_i'} \leq \sizem{\tderivtwo_i} - 2$ and $\size{\tderivtwo_i'} \leq \size{\tderivtwo_i}-1$ if $\tmtwo \tomo \tmtwo'$; 
			\item $\sizem{\tderivtwo_i'} = \sizem{\tderivtwo_i}$ and $\size{\tderivtwo_i'} < \size{\tderivtwo_i}$ if $\tmtwo \toeo \tmtwo'$.
		\end{enumerate}
		We can then build the derivation 
		\begin{equation*}
		\tderiv = 
		\begin{prooftree}[separation=1em]
		\hypo{}
		\ellipsis{$\tderivtwo_i$}{\typctx_i, \var \hastype \mtypethree_i \vdash \strongctxtwop{\tmtwo'} \hastype \mtypetwo_i}
		\infer1[\footnotesize$\lambda$]{\typctx_i \vdash \la{\var}\strongctxtwop{\tmtwo'} \hastype 
			\larrow{\mtypethree_i}{\mtypetwo_i}}
		\delims{ \left( }{ \right)_{1 \leq i \leq n} }
		\infer1[\footnotesize$\ruleManyVal$]{ \bigmplus_{i=1}^n \typctx_{i} \vdash \la{\var}\strongctxtwop{\tmtwop} 
			\hastype  \bigmplus_{i=1}^n \mset{\larrow{\mtypethree_i}{\mtypetwo_i}}}
		\end{prooftree}
		\end{equation*}
		where
		\begin{enumerate}
			\item $\sizem{\tderiv'} = \sum_{i=1}^n(\sizem{\tderivtwo_i'} +1) \leq \sum_{i=1}^n(\sizem{\tderivtwo_i} + 1 - 2) 
			= \sizem{\tderiv} - 2n \leq \sizem{\tderiv} - 2$ (where the last inequality holds because $n >0$) and $\size{\tderiv'} 
			=  \sum_{i=1}^n(\size{\tderivtwo_i'} +1) =  \sum_{i=1}^n(\size{\tderivtwo_i} + 1 - 1) \leq \size{\tderiv} - n < 	\size{\tderiv}$ (where the last inequality holds because $n >0$) if $\tmtwo \tomo \tmtwo'$; 
			\item $\sizem{\tderiv'} = \sum_{i=1}^n(\sizem{\tderivtwo_i'} +1) = \sum_{i=1}^n(\sizem{\tderivtwo_i} + 1) = 
			\sizem{\tderiv}$ and $\size{\tderiv'} = \sum_{i=1}^n(\size{\tderivtwo_i'} +1) < \sum_{i=1}^n(\size{\tderivtwo_i} + 1) = 
			\size{\tderiv}$ if $\tmtwo \toeo \tmtwo'$ (the inequality holds because $n > 0$).
			\qedhere
		\end{enumerate}
	\end{itemize}	
\end{proof}

\begin{theorem}[Shrinking completeness]
	\label{thmappendix:completeness}
	\NoteState{thm:completeness}
	Let $\deriv \colon \tm \tovsubs^* \tm'$ be an evaluation with $\tm'$ normal. 
	Then there is a unitary shrinking derivation $\concl{\tderiv}{\typctx}{\tm}{\mtype}$ 
	such that \mbox{$2\sizem{\deriv} + \sizefu{\tm'} = \sizem{\tderiv}$}.
\end{theorem}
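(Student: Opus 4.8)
The plan is to prove \Cref{thm:completeness} by induction on the length $\size{\deriv}$ of the evaluation $\deriv\colon\tm\tovsubs^*\tm'$, following the same two-stage pattern used for open completeness (\Cref{thm:open-completeness}): a typability statement for normal forms, plus quantitative subject expansion. All the substantive work has already been done in the preceding results, so the argument is essentially bookkeeping.

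\emph{Base case} ($\size{\deriv}=0$). Then $\tm=\tm'$ is $\vsub$-normal, hence a \full fireball. Shrinking typability of normal forms (the fireball case of \Cref{prop:typability-normal}) provides a unitary shrinking derivation $\concl{\tderiv}{\typctx}{\tm}{\mtype}$. Since $\tderiv$ is unitary shrinking — so in particular $\mtype$ is unitary \rightsh whenever $\tm$ is an \valES — the size of \full fireballs lemma (\Cref{l:size-strong-fireballs}) gives $\sizefu{\tm'}=\sizefu{\tm}=\sizem{\tderiv}$; as $\sizem{\deriv}=0$, this reads $2\sizem{\deriv}+\sizefu{\tm'}=\sizem{\tderiv}$.

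\emph{Inductive step} ($\size{\deriv}>0$). I would write $\deriv$ as a first step $\tm\tovsubs\tmtwo$ followed by $\deriv'\colon\tmtwo\tovsubs^*\tm'$, so $\size{\deriv}=1+\size{\deriv'}$. By the induction hypothesis there is a unitary shrinking derivation $\concl{\tderivtwo}{\typctx}{\tmtwo}{\mtype}$ with $2\sizem{\deriv'}+\sizefu{\tm'}=\sizem{\tderivtwo}$. Because $\tderivtwo$ is unitary shrinking, $\typctx$ is unitary \leftsh and, a fortiori, $\mtype$ is unitary \rightsh if $\tmtwo$ is an \valES, so the hypotheses of shrinking quantitative subject expansion (\Cref{prop:shrinking-subject-expansion}) are met for the step $\tm\tovsubs\tmtwo$. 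It yields a derivation $\concl{\tderiv}{\typctx}{\tm}{\mtype}$ with $\sizem{\tderiv}=\sizem{\tderivtwo}+2$ if $\tm\toms\tmtwo$ and $\sizem{\tderiv}=\sizem{\tderivtwo}$ if $\tm\toes\tmtwo$. Since $\tderiv$ has the same final judgment as $\tderivtwo$, and (unitary) shrinkingness is a property of the final judgment only, $\tderiv$ is again unitary shrinking. Finally $\sizem{\deriv}=\sizem{\deriv'}+1$ in the multiplicative case and $\sizem{\deriv}=\sizem{\deriv'}$ in the exponential case, so in both cases $2\sizem{\deriv}+\sizefu{\tm'}=\sizem{\tderiv}$, closing the induction.

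I do not expect a genuine obstacle: \Cref{thm:completeness} is a corollary of \Cref{prop:typability-normal}, \Cref{l:size-strong-fireballs} and \Cref{prop:shrinking-subject-expansion}. The only points requiring care are (i) recognising the $\vsub$-normal endpoint $\tm'$ as a \full fireball so that \Cref{prop:typability-normal} applies, and (ii) checking that the side conditions of \Cref{prop:shrinking-subject-expansion} — unitary \leftsh context, and unitary \rightsh right type when the subject is an \valES — propagate automatically along the induction, which is immediate since they are read off the final judgment that subject expansion preserves. The real difficulties (threading the anti-substitution lemma through the exponential case, and handling abstractions in the non-unitary layer) are confined to \Cref{prop:shrinking-subject-expansion} and \Cref{prop:typability-normal}, already established.
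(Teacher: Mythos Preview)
Your proposal is correct and follows essentially the same approach as the paper: induction on $\size{\deriv}$, base case via \Cref{prop:typability-normal} and \Cref{l:size-strong-fireballs}, inductive step via \Cref{prop:shrinking-subject-expansion}. Your added remarks on why the side conditions of subject expansion hold and why unitary shrinkingness is preserved (being a property of the final judgment) are accurate and make the argument slightly more explicit than the paper's version.
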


\begin{proof}
	By induction on the length $\size{\deriv}$ of $\deriv$.
	
	If $\size{\deriv} = 0$ then $\tm = \tm'$ is $\esssym$-normal and hence a \full fireball (\Cref{prop:external-properties}.\ref{p:external-properties-fullness}).
	According to typability of \full fireballs (\Cref{prop:typability-normal}), there is unitary shrinking derivation 
	$\concl{\tderiv}{\typctx}{\tm}{\mtype}$. 
	Therefore, $\sizem{\tderiv} = \sizefu{\tm'} = \sizefu{\tm'} + 2\sizem{\deriv}$ by \reflemma{size-strong-fireballs}.
	
	Otherwise, $\size{\deriv} > 0$ and $\deriv$ is the concatenation of a first step $\tm \tovsubs \tmtwo$ and an evaluation $\deriv' \colon \tmtwo \tovsubs^* \tm'$, with $\size{\deriv} = 1 + \size{\deriv'}$.
	By \ih, there is a unitary shrinking derivation $\concl{\tderivtwo}{\typctx}{\tmtwo}{\mtype}$ 
	such that $\sizem{\tderivtwo} = \sizefu{\tm'} + 2\sizem{\deriv'}$. 
	By shrinking subject expansion (\Cref{prop:shrinking-subject-expansion}), there is a derivation $\concl{\tderiv}{\typctx}{\tm}{\mtype}$~with 
	\begin{itemize}
		\item $\sizem{\tderiv} = \sizem{\tderivtwo} + 2 = \sizefu{\tm'} + 2\sizem{\deriv'} + 2 = \sizefu{\tm'} + 2\sizem{\deriv}$ 
		if $\tm \toessm \tmtwo$, since $\sizem{\deriv} = \sizem{\deriv'} + 1$;
		\item $\sizem{\tderiv} = \sizem{\tderivtwo} = \sizefu{\tm'} + 2\sizem{\deriv'} = \sizefu{\tm'} + 2\sizem{\deriv}$ if $\tm \toesse \tmtwo$, since $\sizem{\deriv} = \sizem{\deriv'}$.
		\qedhere
	\end{itemize} 
\end{proof}

\section{Proofs of \Cref{sect:normalization}}

\begin{theorem}[Operational and semantic characterizations of normalization]
	\label{thmappendix:normalization}
	\NoteState{thm:normalization}
	Let $\tm$ be a term. The following are equivalent:
	\begin{enumerate}
		\item there is a $\vsub$-normal form $\tmtwo$ such that $\tm \tovsub^* \tmtwo$;
		\item $\tm$ is $\esssym$-normalizing;
		\item $\semfull{\tm}_{\vec{\var}} \neq \emptyset$, where $\vec{\var} = (\var_1, \dots, \var_n)$ is suitable for $\tm$ and 
	\end{enumerate}
	\begin{equation*}
	\begin{split}
	\semfull{\tm}_{\vec{\var}} \defeq \{((\mtypetwo_1&,\dots, \mtypetwo_n),\mtype) \mid \exists 
	\, \concl{\tderiv}{\var_1 \hastype \mtypetwo_1, \dots, \var_n \hastype \mtypetwo_n}{\tm}{\mtype} 
	\\
	&\mbox{ such that $\mtypetwo_1, \dots, \mtypetwo_n$ are \leftsh and $\mtype$ is \rightsh} \} .
	\end{split}
	\end{equation*}
\end{theorem}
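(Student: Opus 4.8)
The plan is to prove the three-way equivalence $(1) \Leftrightarrow (2) \Leftrightarrow (3)$ by showing $(2) \Rightarrow (3)$, $(3) \Rightarrow (2)$, and $(2) \Leftrightarrow (1)$, closing a cycle. The equivalence $(2) \Leftrightarrow (1)$ is essentially for free: since $\tovsubs \subsetneq \tovsub$, any $\esssym$-normalizing evaluation is a $\vsub$-evaluation ending in an $\esssym$-normal form, which by Fullness (\Cref{prop:external-properties}.\ref{p:external-properties-fullness}) is $\vsub$-normal, giving $(2) \Rightarrow (1)$. Conversely, if $\tm \tovsub^* \tmtwo$ with $\tmtwo$ $\vsub$-normal, then $\tmtwo$ is $\esssym$-normal (again Fullness), and I need to deduce that the external strategy also terminates on $\tm$; this is where I would invoke the technique of \cite{DBLP:journals/tcs/CarvalhoPF11,DBLP:journals/pacmpl/MazzaPV18} mentioned in the paper, combining qualitative subject expansion (\Cref{prop:qual-subject}.\ref{p:qual-subject-expansion}) along the $\tovsub^*$ path with the fact that $\vsub$-normal forms are shrinking-typable.

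For $(2) \Rightarrow (3)$, I would apply shrinking completeness (\Cref{thm:completeness}): an $\esssym$-normalizing $\tm$ admits a unitary shrinking derivation $\concl{\tderiv}{\typctx}{\tm}{\mtype}$; since unitary \leftsh implies \leftsh and unitary \rightsh implies \rightsh, this derivation witnesses membership in $\semfull{\tm}_{\vec\var}$, which is therefore nonempty. For $(3) \Rightarrow (2)$, suppose $\semfull{\tm}_{\vec\var} \neq \emptyset$, so there is a derivation $\concl{\tderiv}{\typctx}{\tm}{\mtype}$ with $\typctx$ \leftsh and $\mtype$ \rightsh, i.e.\ a shrinking derivation. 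By shrinking correctness (\Cref{thm:correctness}), there is a $\vsub$-normal form $\tm'$ and an evaluation $\deriv\colon \tm \tovsubs^* \tm'$; by Fullness, $\tm'$ is $\esssym$-normal, so this evaluation is $\esssym$-normalizing and $\tm$ is $\esssym$-normalizing.

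The only slightly delicate point is the direction $(1) \Rightarrow (2)$. Here I cannot invoke correctness/completeness directly, because those are stated for the external strategy, not for arbitrary $\tovsub$ reductions; instead the argument is: from $\tm \tovsub^* \tmtwo$ with $\tmtwo$ $\vsub$-normal, repeatedly apply qualitative subject expansion (\Cref{prop:qual-subject}.\ref{p:qual-subject-expansion}) starting from the shrinking derivation of $\tmtwo$ provided by \Cref{prop:typability-normal} (a \full fireball is unitary-shrinking typable), obtaining a shrinking derivation of $\tm$; then shrinking correctness (\Cref{thm:correctness}) gives $\esssym$-normalization of $\tm$. One subtlety is that qualitative subject expansion in \Cref{prop:qual-subject} does not mention shrinkingness — but since a step $\tm \tovsub \tm'$ preserves the typing context $\typctx$ and the right type $\mtype$ unchanged (both statements of \Cref{prop:qual-subject} keep $\typctx$ and $\mtype$ fixed), shrinkingness of the final judgment is automatically preserved backward along the whole path. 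I expect this bookkeeping to be the main obstacle, though it is routine once the invariance of $\typctx,\mtype$ under expansion is made explicit. I would organize the final proof as the cycle $(2) \Rightarrow (1) \Rightarrow (2)$ together with $(2) \Leftrightarrow (3)$, or equivalently as $(1) \Rightarrow (2) \Rightarrow (3) \Rightarrow (2) \Rightarrow (1)$, whichever reads more cleanly.
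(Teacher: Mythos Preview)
Your proposal is correct and uses the same ingredients as the paper's proof: typability of normal forms (\Cref{prop:typability-normal}), qualitative subject expansion (\Cref{prop:qual-subject}.\ref{p:qual-subject-expansion}) preserving $\typctx$ and $\mtype$ and hence shrinkingness, shrinking correctness (\Cref{thm:correctness}), and Fullness. The paper organizes it as the cycle $(1)\Rightarrow(3)\Rightarrow(2)\Rightarrow(1)$, which is slightly more economical since it avoids invoking shrinking completeness altogether---your $(1)\Rightarrow(2)$ is exactly the paper's $(1)\Rightarrow(3)$ followed by $(3)\Rightarrow(2)$, so your separate $(2)\Rightarrow(3)$ via completeness is redundant (though not wrong).
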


\begin{proof}
	\begin{description}
		\item
		[1)$\Rightarrow$3)] \ By \Cref{prop:properties-full-reduction}, $\tmtwo$ is a \full fireball, thus there is a shrinking derivation $\concl{\tderiv}{\var_1 \hastype \mtypetwo_1, \dots, \var_n \hastype \mtypetwo_n}{\tmtwo}{\mtype}$ by \Cref{prop:typability-normal}.
		Subject expansion (\refpropp{qual-subject}{expansion}) iterated along $\tm \tovsub^* 
		\tmtwo$ gives a shrinking derivation $\concl{\tderivtwo}{\var_1 \hastype \mtypetwo_1, \dots, \var_n \hastype \mtypetwo_n}{\tm}{\mtype}$ and so $\semfull{\tm}_{\vec{\var}} \neq \emptyset$.
		\item
		[3)$\Rightarrow$2)] \ Since $\semfull{\tm}_{\vec{\var}} \neq \emptyset$, there is a shrinking derivation $\derive{\tderiv}{\tm}$, so $\tm \tovsubs^* \tmthree$ with $\tmthree$ $\vsub$-normal (and hence $\esssym$-normal, as $\tovsubs \,\subseteq\, \tovsub$) by shrinking correctness (\Cref{thm:correctness}).
		\item
		[2)$\Rightarrow$1)] \ Trivial, as $\tovsubs \, \subseteq \, \tovsub$ and by \Cref{prop:external-properties}.\ref{p:external-properties-fullness}.
		\qedhere
	\end{description}
\end{proof}

\section{Proofs of \Cref{sect:semantic-bounds}}
\label{sect:size-types}

The following proposition has a stronger and more involved statement with respect to the one in the body of the paper in order to have the right \ih to prove it.

\begin{proposition}[Types bound the size of derivations for normal forms]
	\label{propappendix:types-bound-normal-derivations}
	\NoteState{prop:types-bound-normal-derivations}
	Let $\concl{\tderiv}{\typctx}{\tm}{\mtype}$ be a derivation.
	\begin{enumerate}
		\item\label{pappendix:types-bound-normal-derivations-inert}\emph{Inert:} if $\tm$ is a \full inert term, then $ 
\sizem{\tderiv} \leq \sizectx{\typctx} -\sizetyp{\mtype}$.
		\item\label{pappendix:types-bound-normal-derivations-fireball}\emph{Fireball:} if $\tm$ is a \full fireball then 
$\sizem{\tderiv} \leq \sizectx{\typctx} + \sizetyp{\mtype}$.
	\end{enumerate}
\end{proposition}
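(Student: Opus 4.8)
The plan is to prove the stronger two-part statement (\refpoint{types-bound-normal-derivations-inert} and \refpoint{types-bound-normal-derivations-fireball}) simultaneously by mutual induction on the definition of \full inert terms and \full fireballs, exactly mirroring the structure of the proof of \Cref{l:size-strong-fireballs}. The key insight I expect to need is that for inert terms one gets the \emph{sharper} bound $\sizem\tderiv \leq \sizectx\typctx - \sizetyp\mtype$, with a minus sign: the right-hand type of an inert term is "paid for" by the typing context rather than adding to the budget. This sharper inequality is what makes the induction go through, since in the application case an inert head carries a type $\mset{\larrow\mtypetwo\mtype}$ whose size includes $\sizetyp\mtype$, and that extra $\sizetyp\mtype$ must be cancelled, not accumulated.

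First I would handle the inert cases. For $\sitm = \var$, the derivation ends in $\ruleManyVar$ with $\sizem\tderiv = 0$, and $\typctx = \var\hastype\mtype$, so $\sizectx\typctx - \sizetyp\mtype = 0$, giving equality. For $\sitm = \sitmtwo\sfire$, the last rule is $\ruleAp$ with premises $\concl{\tderivtwo}{\typctxtwo}{\sitmtwo}{\mset{\larrow\mtypetwo\mtype}}$ and $\concl{\tderivthree}{\typctxthree}{\sfire}{\mtypetwo}$, and $\sizem\tderiv = \sizem\tderivtwo + \sizem\tderivthree + 1$. By the inert \ih, $\sizem\tderivtwo \leq \sizectx\typctxtwo - \size{\mset{\larrow\mtypetwo\mtype}} = \sizectx\typctxtwo - 1 - \sizetyp\mtypetwo - \sizetyp\mtype$; by the fireball \ih, $\sizem\tderivthree \leq \sizectx\typctxthree + \sizetyp\mtypetwo$. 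Summing, $\sizem\tderiv \leq \sizectx\typctxtwo + \sizectx\typctxthree - \sizetyp\mtype = \sizectx\typctx - \sizetyp\mtype$, as required; the two occurrences of $\sizetyp\mtypetwo$ and the $+1$ versus $-1$ cancel precisely. The \ES-on-inert case $\sitm = \sitmtwo\esub\var\sitmthree$ is analogous, using $\sizectx{(\typctxtwo,\var\hastype\mtypetwo)} = \sizectx\typctxtwo + \sizetyp\mtypetwo$ and the inert \ih on both premises.

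Then I would handle the fireball cases that are not inert terms: abstraction and \ES-on-fireball. For $\sfire = \la\var\sfiretwo$, the derivation ends in $\ruleManyVal$ over $n \geq 0$ premises, each $\concl{\tderivtwo_i}{\typctx_i,\var\hastype\mtypethree_i}{\sfiretwo}{\mtypetwo_i}$ with $\ruleFun$ on top; $\sizem\tderiv = \sum_{i=1}^n(\sizem{\tderivtwo_i}+1)$ and $\mtype = \biguplus_{i=1}^n\mset{\larrow{\mtypethree_i}{\mtypetwo_i}}$. By the fireball \ih, $\sizem{\tderivtwo_i} \leq \sizectx{(\typctx_i,\var\hastype\mtypethree_i)} + \sizetyp{\mtypetwo_i} = \sizectx{\typctx_i} + \sizetyp{\mtypethree_i} + \sizetyp{\mtypetwo_i}$, so $\sizem{\tderivtwo_i}+1 \leq \sizectx{\typctx_i} + \size{\mset{\larrow{\mtypethree_i}{\mtypetwo_i}}}$; summing over $i$ gives $\sizem\tderiv \leq \sizectx\typctx + \sizetyp\mtype$. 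The \ES-on-fireball case $\sfire = \sfiretwo\esub\var\sitm$ uses the fireball \ih on the left premise and the inert \ih on the right, and works because $\sizetyp\mtypetwo$ appears with opposite signs in the two bounds and cancels. Finally \refpoint{types-bound-normal-derivations-fireball} for an inert term follows from \refpoint{types-bound-normal-derivations-inert} since $-\sizetyp\mtype \leq +\sizetyp\mtype$.

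The main obstacle, if any, is purely bookkeeping: making sure the sign conventions in \refpoint{types-bound-normal-derivations-inert} are stated so that every cancellation is exact and no inequality goes the wrong way — in particular that $\sizetyp{\mset{\larrow\mtypetwo\mtype}} = 1 + \sizetyp\mtypetwo + \sizetyp\mtype$ is used correctly in the application cases, and that the "$-1$" from an inert premise dominates the "$+1$" from the $\ruleAp$ rule. There is no deep difficulty; the whole argument is a size computation parallel to \Cref{l:size-strong-fireballs}, just with the general size of derivations replaced by type-sizes. The statement in the body of the paper (\Cref{prop:types-bound-normal-derivations}) is then the immediate specialization of \refpoint{types-bound-normal-derivations-fireball}.
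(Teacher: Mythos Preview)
Your proposal is correct and follows essentially the same approach as the paper's proof: a mutual induction on the definition of \full inert terms and \full fireballs, exploiting the sharper bound $\sizem\tderiv \leq \sizectx\typctx - \sizetyp\mtype$ for inert terms so that the $\sizetyp\mtypetwo$ contributions and the $\pm 1$ from $\ruleAp$ cancel exactly in the application and \ES cases. The case analysis and arithmetic you outline match the paper's argument.
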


\begin{proof}
	By mutual induction on the definition of \full inert term and \full fireball.
	Note that the statement for the former is stronger than the one for the latter, hence for \full inert terms it is enough to prove Point \ref{pappendix:types-bound-normal-derivations-inert}.
	Cases: 
	\begin{itemize}
		\item \emph{Variable}, \ie~$\tm = \var$ (which is \full inert). So, necessarily, for some finite set of indices $I$,
		\begin{equation*}
		\tderiv = 
		\begin{prooftree}
		\infer0[\footnotesize$\ruleAx$]{\tyjp{}{\var}{\var \hastype \mset{\ltype_i}}{\ltype_i}}
		\delims{\left[}{\right]_{\iI}}
		\infer1[\footnotesize$\ruleManyVar$]{\tyjp{}{\var}{\var \hastype \mtype}{\mtype}}
		\end{prooftree}
		\end{equation*}
		where $\mtype = \mset{\ltype_i}_{\iI}$ and $\typctx = \var \hastype \mtype$.
		Since $\sizem{\tderiv} = 0$ and $\sizetyp{\mtype} = \sizectx{\typctx}$,
		then $\sizem{\tderiv} = 0 = \sizectx{\typctx}- \sizetyp{\mtype}$.

		\item \emph{Application}, \ie~$\tm = \sitm \sfire$ (which is \full inert). 
		Then necessarily
		\begin{equation*}
		\tderiv = 
		\begin{prooftree}
		\hypo{}
		\ellipsis{$\tderiv_{\sitm}$}{\typctxtwo \vdash \sitm \hastype \mult{\ty{\mtypetwo}{\mtype}}}
		\hypo{}
		\ellipsis{$\tderiv_{\sfire}$}{\typctxthree \vdash \sfire \hastype\mtypetwo}
		\infer2[\footnotesize$\ruleApp$]{\tyjp{}{\sitm \sfire}{\typctxtwo \mplus \typctxthree}{\mtype}}
		\end{prooftree}
		\end{equation*}
		where $\typctx = \typctxtwo \mplus \typctxthree$.
		By \ih, $\sizem{\tderiv_{\sitm}} \leq \sizectx{\typctxtwo} - \sizetyp{\mset{\larrow{\mtypetwo}{\mtype}}} = 
\sizectx{\typctxtwo} - \sizetyp{\mtypetwo} - \sizetyp{\mtype} -1$  and $\sizem{\tderiv_{\sfire}} \leq 
\sizectx{\typctxthree} + \sizetyp{\mtypetwo}$. 
		Therefore, 
		\[\begin{array}{rcl}
		\sizem{\tderiv} & = & \sizem{\tderiv_{\sitm}} + \sizem{\tderiv_{\sfire}} + 1 
		\\
		& \leq_{\ih} &\sizectx{\typctxtwo} - \sizetyp{\mtypetwo} - \sizetyp{\mtype} -1+ \sizem{\tderiv_{\sfire}} + 1  
		\\
		& = &\sizectx{\typctxtwo} - \sizetyp{\mtypetwo} - \sizetyp{\mtype}+ \sizem{\tderiv_{\sfire}}   
		\\
		& \leq_{\ih} & \sizectx{\typctxtwo} - \sizetyp{\mtypetwo} - \sizetyp{\mtype}+ \sizectx{\typctxthree} + 
\sizetyp{\mtypetwo}
		\\
		& = & \sizectx{\typctxtwo} + \sizectx{\typctxthree}- \sizetyp{\mtype}
		\\
		& = & \sizectx{\typctx} - \sizetyp{\mtype}.
		\end{array}\]
		
		\item \emph{Explicit substitution on inert}, \ie $\tm = \sitm \esub{\var}{\sitmtwo}$ (which is \full inert).
		Then necessarily
		\begin{equation*}
		\tderiv = 
		\begin{prooftree}
		\hypo{}
		\ellipsis{$\tderiv_{\sitm}$}{\typctxtwo, \var \hastype \mtypetwo \vdash \sitm \hastype \mtype}
		\hypo{}
		\ellipsis{$\tderiv_{\sitmtwo}$}{\typctxthree \vdash \sitmtwo \hastype \mtypetwo}
		\infer2[\footnotesize$\ruleES$]{\tyjp{}{\sitm \esub{\var}{\sitmtwo}}{\typctxtwo \mplus \typctxthree}{\mtype}}
		\end{prooftree}
		\end{equation*}
		where $\typctx = \typctxtwo \mplus \typctxthree$.
		By \ih, $\sizem{\tderiv_{\sitm}} \leq \sizectx{\typctxtwo} + \sizetyp{\mtypetwo} - \sizetyp{\mtype}$ and $ 
\sizem{\tderiv_{\sitmtwo}} \leq \sizectx{\typctxthree} - \sizetyp{\mtypetwo}$. 
		So, 
		\[\begin{array}{rcl}
		\sizem{\tderiv} & = & \sizem{\tderiv_{\sitm}} + \sizem{\tderiv_{\sitmtwo}}
		\\
		& \leq_{\ih} & \sizectx{\typctxtwo} + \sizetyp{\mtypetwo} - \sizetyp{\mtype} + \sizem{\tderiv_{\sitmtwo}}
		\\
		& \leq_{\ih} & \sizectx{\typctxtwo} + \sizetyp{\mtypetwo} - \sizetyp{\mtype} + \sizectx{\typctxthree} - 
\sizetyp{\mtypetwo}
		\\
		& = & \sizectx{\typctxtwo} + \sizectx{\typctxthree} - \sizetyp{\mtype}
		\\
		& = & \sizectx{\typctx} - \sizetyp{\mtype}.
		\end{array}\]
		
		\item \emph{Explicit substitution on fireball}, \ie $\tm = \sfire \esub{\var}{\sitm}$. Then necessarily
		\begin{equation*}
		\tderiv = 
		\begin{prooftree}
		\hypo{}
		\ellipsis{$\tderiv_{\sfire}$}{\typctxtwo, \var \hastype \mtypetwo \vdash \sfire \hastype \mtype}
		\hypo{}
		\ellipsis{$\tderiv_{\sitm}$}{\typctxthree \vdash \sitm \hastype \mtypetwo}
		\infer2[\footnotesize$\ruleES$]{\tyjp{}{\sfire \esub{\var}{\sitm}}{\typctxtwo \mplus \typctxthree}{\mtype}}
		\end{prooftree}
		\end{equation*}
		where $\typctx = \typctxtwo \mplus \typctxthree$.
		By \ih, $\sizem{\tderiv_{\sfire}} \leq \sizectx{\typctxtwo} + \sizetyp{\mtypetwo} + \sizetyp{\mtype}$  and $ 
\sizem{\tderiv_{\sitm}} \leq \sizectx{\typctxthree}-\sizetyp{\mtypetwo}$. 
		Therefore, 
		\[\begin{array}{rcl}
		\size{\tderiv} & = & \sizem{\tderiv_{\sfire}} + \sizem{\tderiv_{\sitm}}
		\\
		& \leq_{\ih} & \sizectx{\typctxtwo} + \sizetyp{\mtypetwo} + \sizetyp{\mtype} + \sizem{\tderiv_{\sitm}}
		\\
		& \leq_{\ih} & 
		\sizectx{\typctxtwo} + \sizetyp{\mtypetwo} + \sizetyp{\mtype} + \sizectx{\typctxthree} - \sizetyp{\mtypetwo}
		\\
		& = & \sizectx{\typctxtwo} + \sizectx{\typctxthree} + \sizetyp{\mtype}.
		\\
		& = & \sizectx{\typctx} + \sizetyp{\mtype}.
		\end{array}\]
		
		\item \emph{Abstraction}, \ie $\tm = \la{\var}{\sfire}$. 
		$\tderiv$ is necessarily of the form (for some set of indices $I$)
		\begin{equation*}
		\begin{prooftree}
		\hypo{}
		\ellipsis{$\tderiv^i$}{\typctx_i, \var \hastype \mtypethree_i \vdash \sfire \hastype \mtypetwo_i}
		
		\infer1[\footnotesize$\lambda$]{\typctx_i \vdash \la{\var}\sfire \hastype \larrow{\mtypethree_i}{\mtypetwo_i}}
		\delims{\left[}{\right]_{\iI}}
		\infer1[\footnotesize$\ruleManyVal$]{\uplus_{\iI}\typctx_i \vdash \la{\var}\sfire \hastype 
\mset{\larrow{\mtypethree_i}{\mtypetwo_i}}_{\iI}}		\end{prooftree}
		\end{equation*}
		where $\mtype = \bigmplus_{\iI}\mult{\ty{\mtypethree_i}{\mtypetwo_i}}$, $\typctx = \bigmplus_{\iI} \typctx_i$.
		By \ih, $\sizem{\tderiv^i} \leq \sizectx{\typctx_i} + \sizetyp{\mtypethree_i} + \sizetyp{\mtypetwo_i}$. 
		Therefore, 
		\[\begin{array}{rcl}
		\sizem{\tderiv} & = & \sum_{\iI} (\sizem{\tderiv^i} + 1)
		\\
		& \leq_{\ih} & \sum_{\iI}(\sizectx{\typctx_i} + \sizetyp{\mtypethree_i} + \sizetyp{\mtypetwo_i} + 1)
		\\
		& = & \sum_{\iI}\sizectx{\typctx_i} + \sum_{\iI}(\sizetyp{\mtypethree_i} + \sizetyp{\mtypetwo_i} +1)
		\\
		& = & \sizectx{\typctx} + \sum_{\iI}\sizetyp{\larrow{\mtypetwo_i}{\mtypethree_i}}
		\\
		& = & \sizectx{\typctx} + \sizetyp{\mtype}.
		\end{array}\]
		\qedhere
	\end{itemize}
\end{proof}

\begin{corollary}[Types bound the size of \full fireballs]
	\label{corappendix:type-bound-size-normal}
	\NoteState{cor:type-bound-size-normal}
	Let $\tm$ be a \full fireball and $\concl{\tderiv}{\typctx}{\tm}{\mtype}$ be a shrinking derivation. Then $\size{\tm} 
	\leq \sizetyp{\mtype} + \sizectx{\typctx}$.
\end{corollary}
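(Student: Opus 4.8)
The statement is an immediate composition of two results already available: the lower bound on the multiplicative size of a derivation of a \full fireball given by its size (\reflemma{size-strong-fireballs}) and the upper bound on the multiplicative size given by the size of the types in its conclusion (\Cref{prop:types-bound-normal-derivations}). So the plan is to chain these two inequalities, with the only thing to check being that a shrinking derivation meets the hypotheses of \reflemma{size-strong-fireballs}.

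\textbf{Steps.} First I would unfold the definition of shrinking derivation: since $\concl{\tderiv}{\typctx}{\tm}{\mtype}$ is shrinking, the type context $\typctx$ is \leftsh and the type $\mtype$ is \rightsh. In particular, $\mtype$ is \rightsh unconditionally, so the side condition of \reflemma{size-strong-fireballs} (``if $\tm$ is an \valES then $\mtype$ is \rightsh'') is trivially satisfied. Applying \reflemma{size-strong-fireballs} to $\tderiv$ then yields $\sizefu{\tm} \leq \sizem{\tderiv}$. Next, applying \Cref{prop:types-bound-normal-derivations} (the fireball case, which requires no shrinkingness assumption) to the same derivation gives $\sizem{\tderiv} \leq \sizectx{\typctx} + \sizetyp{\mtype}$. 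Combining the two, $\sizefu{\tm} \leq \sizem{\tderiv} \leq \sizectx{\typctx} + \sizetyp{\mtype} = \sizetyp{\mtype} + \sizectx{\typctx}$, which is the claim.

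\textbf{Main obstacle.} There is no real obstacle at the level of the corollary: all the combinatorial content sits inside \reflemma{size-strong-fireballs} and \Cref{prop:types-bound-normal-derivations}. The one point worth stating explicitly in the write-up is why $\tderiv$ falls under \reflemma{size-strong-fireballs} even though that lemma is phrased with a conditional on \valES{}s --- namely, shrinkingness makes $\mtype$ \rightsh in all cases, so the conditional is vacuously discharged. The proof is therefore two lines.
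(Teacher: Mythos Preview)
Your proposal is correct and follows exactly the same approach as the paper's proof: chain \reflemma{size-strong-fireballs} with \Cref{prop:types-bound-normal-derivations} via $\sizem{\tderiv}$. The paper's write-up is terser and does not spell out why the conditional hypothesis of \reflemma{size-strong-fireballs} is discharged, but your explicit check that shrinkingness makes $\mtype$ \rightsh{} unconditionally is a welcome clarification rather than a different route.
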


\begin{proof}
	By \Cref{l:size-strong-fireballs}, $\size\tm \leq \sizem\tderiv$, and by 
	\Cref{prop:types-bound-normal-derivations}.\ref{pappendix:types-bound-normal-derivations-fireball} $\sizem\tderiv \leq \sizetyp{\mtype} + \sizectx{\typctx}$.
\end{proof}

\begin{theorem}[Lax bounds of kind 3]
	\label{thmappendix:lax-bounds-3}
	\NoteState{thm:lax-bounds-3}
Let $\tm$ and $\tmtwo$ be closed normalizing terms. Then $\tm\tmtwo$ normalizes if and only if $U(\tm,\tmtwo) \neq \emptyset$. Moreover, if $\tm$ and $\tmtwo$ are normal, $\deriv:\tm\tmtwo \tovsubs^* \tmthree$, and $\tmthree$ is normal then $2\sizem{\deriv} + \size\tmthree \leq \size\mtype+\size\mtypetwo+1$ for every composable pair $(\mtype, \mtypetwo)\in U(\tm,\tmtwo)$.
\end{theorem}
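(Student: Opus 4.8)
The statement has two parts: a qualitative equivalence ($\tm\tmtwo$ normalizes iff $U(\tm,\tmtwo)\neq\emptyset$) and a quantitative bound for each composable pair. The plan is to reduce everything to the machinery already developed: the characterization of normalization via shrinking derivations (\Cref{thm:normalization}), shrinking correctness/completeness (\Cref{thm:correctness,thm:completeness}), and the fact that a derivation for an application is exactly the composition of a derivation for the function and one for the argument via rule $\ruleAp$. Since $\tm,\tmtwo$ are closed, every type context involved is empty, so "shrinking derivation of $\tm$" just means a derivation whose final type is \rightsh, i.e. $\sem\tm$ restricted to \rightsh types is exactly the non-empty denotation appearing in \Cref{thm:normalization}; I would record this observation first.

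\textbf{Qualitative direction.} First I would prove the equivalence. If $U(\tm,\tmtwo)\neq\emptyset$, pick $(\larrow{\mtype}{\mtypetwo},\mtype)\in U(\tm,\tmtwo)$ with $\mtypetwo$ \rightsh; by definition there are derivations $\concl{\tderiv_1}{}{\tm}{\mset{\larrow{\mtype}{\mtypetwo}}}$ and $\concl{\tderiv_2}{}{\tmtwo}{\mtype}$ (using that $\larrow{\mtype}{\mtypetwo}\in\sem\tm$ means $\mset{\larrow{\mtype}{\mtypetwo}}$ — or a multiset containing it — is derivable; here I would be careful to use the singleton, which suffices). Applying rule $\ruleAp$ yields $\concl{\tderiv}{}{\tm\tmtwo}{\mtypetwo}$ with $\mtypetwo$ \rightsh and empty (hence \leftsh) context, so $\semfull{\tm\tmtwo}\neq\emptyset$ and $\tm\tmtwo$ normalizes by \Cref{thm:normalization}. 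Conversely, if $\tm\tmtwo$ normalizes, \Cref{thm:normalization} gives a shrinking derivation $\concl{\tderiv}{}{\tm\tmtwo}{\mtypetwo}$ with $\mtypetwo$ \rightsh; its last rule must be $\ruleAp$ (as $\tm\tmtwo$ is an application and the context is empty, no $\ruleMany$ applies), so it decomposes into $\concl{\tderiv_1}{}{\tm}{\mset{\larrow{\mtype}{\mtypetwo}}}$ and $\concl{\tderiv_2}{}{\tmtwo}{\mtype}$. Then $\larrow{\mtype}{\mtypetwo}\in\sem\tm$ and $\mtype\in\sem\tmtwo$, and $\mtypetwo$ is \rightsh, so $(\larrow{\mtype}{\mtypetwo},\mtype)\in U(\tm,\tmtwo)$.

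\textbf{Quantitative bound.} Now assume $\tm,\tmtwo$ normal, $\deriv:\tm\tmtwo\tovsubs^*\tmthree$ with $\tmthree$ normal, and fix any $(\mtype,\mtypetwo)\in U(\tm,\tmtwo)$, so $\mtype=\larrow{\mtypefour}{\mtypefive}$ with $\mset{\mtype}$ derivable for $\tm$, $\mtypetwo$ derivable for $\tmtwo$, and $\mtypefive$ \rightsh. Composing via $\ruleAp$ gives a derivation $\tderiv$ of $\tm\tmtwo$ with conclusion type $\mtypefive$, empty context, and $\sizem\tderiv = \sizem{\tderiv_1}+\sizem{\tderiv_2}+1$. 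Since $\tm$ is a \full fireball and $\mset{\mtype}$ is derivable, \Cref{prop:types-bound-normal-derivations} gives $\sizem{\tderiv_1}\leq\size{\mset{\mtype}}=\size\mtype$; likewise $\sizem{\tderiv_2}\leq\size\mtypetwo$. Hence $\sizem\tderiv\leq\size\mtype+\size\mtypetwo+1$. On the other hand $\tderiv$ is shrinking (empty context is \leftsh, $\mtypefive$ is \rightsh), so shrinking correctness (\Cref{thm:correctness}) applied to $\tderiv$ gives a $\vsub$-normal $\tm'$ with $\tm\tmtwo\tovsubs^*\tm'$ and $2\sizem{\deriv'}+\sizefu{\tm'}\leq\sizem\tderiv$ for the associated evaluation $\deriv'$; by diamond of $\tovsubs$ (\Cref{prop:external-properties}) and uniqueness of normal forms, $\tm'=\tmthree$ and $\sizem{\deriv'}=\sizem\deriv$. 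Chaining the two inequalities yields $2\sizem\deriv+\sizefu\tmthree\leq\size\mtype+\size\mtypetwo+1$.

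\textbf{Main obstacle.} The only delicate point is bookkeeping around multiset-versus-singleton types: "$\larrow{\mtype}{\mtypetwo}\in\sem\tm$" asserts the linear type appears in \emph{some} derivable final multi type, and I must extract a derivation whose final type is the \emph{singleton} $\mset{\larrow{\mtype}{\mtypetwo}}$ so that $\ruleAp$ applies cleanly and the size bound of \Cref{prop:types-bound-normal-derivations} reads $\size\mtype$ rather than something larger. This is handled by \Cref{l:typing-value-splitting}/\Cref{l:typing-value-complete}: since $\tm$ is normal (a \full fireball), if $\tm$ is an abstraction the last rule is $\ruleManyVal$ and one can split off a single premise; if $\tm$ is a \full inert term one argues via spreading of inertness or a direct splitting at the $\ruleMany$ level. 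I would state this splitting as a small preliminary observation and then the rest is the routine chaining above.
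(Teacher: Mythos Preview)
Your proof is correct and follows essentially the same route as the paper: compose a derivation for $\tm\tmtwo$ via $\ruleAp$ from the composable pair, invoke \Cref{prop:types-bound-normal-derivations} to bound each sub-derivation by the type size, and conclude with shrinking correctness (the paper uses correctness/completeness directly for the qualitative part rather than the packaged \Cref{thm:normalization}, and leaves the diamond step implicit). Your ``main obstacle'' about extracting a singleton derivation is a non-issue under the paper's reading of $\larrow{\mtype}{\mtypetwo}\in\sem\tm$ as the singleton multiset being derivable (and a closed normal term is an abstraction anyway, so splitting via $\ruleManyVal$ is immediate), but your handling of it is harmless.
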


\begin{proof} If $\tm\tmtwo$ normalizes then by shrinking completeness (\refthm{completeness}) there exists a shrinking 
derivation $\tderiv$ for $\tm\tmtwo$. The last rule of $\tderiv$ is necessarily $\ruleAp$, and the premises of that rule 
provide a composable pair in $U(\tm,\tmtwo)$. Vice versa, if $U(\tm,\tmtwo)\neq\emptyset$ then every composable pair 
induces a shrinking derivation for $\tm\tmtwo$, which is then typable. By shrinking correctness (\refthm{correctness}), $\tm\tmtwo$ is $\tovsubs$-normalizing.

Now, the \emph{moreover} part. Let $(\larrow{\mtype}{\mtypetwo}, \mtype) \in U(\tm,\tmtwo)$ be a composable pair. Then there are two derivations $\concl{\tderiv_\tm}{}{\tm}{\larrow{\mtype}{\mtypetwo}}$ and $\concl{\tderiv_\tmtwo}{}{\tmtwo}{\mtype}$. We compose them via a $\ruleAp$ rule into a derivation $\tderiv$ for $\tm\tmtwo$:
\begin{equation*}
		\tderiv \defeq 
		\begin{prooftree}
		\hypo{}
		\ellipsis{$\tderiv_{\tm}$}{\vdash \tm \hastype \mult{\larrow{\mtype}{\mtypetwo}}}
		\hypo{}
		\ellipsis{$\tderiv_{\tmtwo}$}{\vdash \tmtwo \hastype\mtype}
		\infer2[\footnotesize$\ruleApp$]{\tyjp{}{\tm\tmtwo}{}{\mtypetwo}}
		\end{prooftree}
		\end{equation*}
		Note that the definition of composable pair guarantees that $\mtypetwo$ is right shrinking, so that $\tderiv$ is shrinking.
By shrinking correctness (\refthm{correctness}), $2\sizem{\deriv} + \size\tmthree \leq \sizem\tderiv = \sizem{\tderiv_\tm} + \sizem{\tderiv_\tmtwo} + 1$. Now, since for normal terms types bound the size of the derivation (\refprop{types-bound-normal-derivations}),  we obtain $\sizem{\tderiv_\tm} \leq \size{\mult{\larrow{\mtype}{\mtypetwo}}}$ and $\sizem{\tderiv_\tmtwo} \leq \size\mtype$. Therefore,  $2\sizem{\deriv} + \size\tmthree \leq \sizem\tderiv  \leq \size{\mult{\larrow{\mtype}{\mtypetwo}}}+\size\mtype+1$, as required.	
\end{proof}

\begin{lemma}[$\tderiveq$-Invariants]
	\label{lappendix:skel-equiv-properties}
	\NoteProof{l:skel-equiv-properties}
	Let $\concl{\tderiv}{\typctx}{\tm}{\mtype}$ and $\concl{\tderivtwo}{\typctxtwo}{\tm}{\mtypetwo}$ be two derivations such 
	that $\tderiv \tderiveq \tderivtwo$. Then $\size\tderiv = \size\tderivtwo$, $\sizem\tderiv = \sizem\tderivtwo$, 
	$\card\mtype = \card\mtypetwo$, $\dom\typctx = \dom\typctxtwo$, and $\card{(\typctx(\var))} = \card{(\typctxtwo(\var))}$ 
	for every variable $\var$. Moreover, $\tderiv$ is shrinking (resp. unitary shrinking) if and only if $\tderivtwo$ is.
\end{lemma}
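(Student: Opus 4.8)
The statement to prove is \Cref{l:skel-equiv-properties}, the list of $\tderiveq$-invariants. The plan is to proceed by a single induction on the structure of the derivation $\tderiv$ (equivalently, on the size of $\tm$, or on the definition of $\tderiveq$ itself, since skeleton-equivalent derivations have matching tree shapes). The key observation is that $\tderiveq$ is defined exactly so that the two derivations apply the same rule at the root and their corresponding premises are again skeleton-equivalent; hence each invariant is additive (or constant) over the rule structure and can be reassembled from the inductive hypotheses applied to the premises.

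\medskip

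First I would set up the case analysis on the last rule. \emph{Axiom case}: both $\tderiv$ and $\tderivtwo$ are $\ruleAx$ instances on the same variable $\var$, so $\size\tderiv = \size\tderivtwo = 1$, $\sizem\tderiv = \sizem\tderivtwo = 0$, $\card\mtype = 1 = \card\mtypetwo$ (both types being singletons at this point of the judgment — more precisely the right-hand type is a single linear type, so when viewed through the $\ruleManyVar$ that necessarily follows one gets matching cardinalities), and the type contexts are $\var\hastype\mset\ltype$ and $\var\hastype\mset{\ltype'}$, so they have the same domain and the same multiplicities. \emph{Rules $\ruleAp$, $\ruleFun$, $\ruleES$}: the premises pair up one-to-one and are $\tderiveq$-equivalent, so by the \ih all five quantities agree on the premises; since $\size{\cdot}$ and $\sizem{\cdot}$ add the rule occurrence (or not, for $\ruleES$ which does not count multiplicatively), $\card{\cdot}$ of the conclusion type is determined the same way from the premise types in both derivations, $\dom{\cdot}$ and the multiplicities of the conclusion context are obtained by the same $\mplus$ of the premise contexts (which have matching domains and multiplicities by \ih — here I would invoke that $\card{}$ distributes over $\mplus$), so everything propagates. \emph{Rule $\ruleMany$ (i.e.\ $\ruleManyVal$ / $\ruleManyVar$)}: both end with $n$ premises, but matched only up to a permutation $\sigma$; since all the invariants in question are symmetric (sums over premises, unions of domains, componentwise multiplicity sums, and $\ruleMany$ does not contribute to either size), the permutation is harmless and the \ih applied premise-by-premise gives the result.

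\medskip

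For the \emph{moreover} part about shrinkingness: recall (from the definition of shrinking derivation) that whether $\tderiv$ is shrinking depends only on its final judgment, namely on whether the types in $\typctx$ are left shrinking and $\mtype$ is right shrinking. This is a \emph{qualitative} property of types that is \emph{not} preserved by $\tderiveq$ in general — skeleton-equivalent derivations can have genuinely different types (that is the whole point, cf.\ the $\delta\Id$ example). So I cannot conclude shrinkingness is $\tderiveq$-invariant just from the first part. Instead, the main obstacle — and the place where I expect to spend the real work — is to prove, by the same structural induction, a refined statement: skeleton-equivalent derivations agree not merely on cardinalities but on the \emph{shrinkingness status of corresponding types}, i.e.\ for each occurrence of a multi type $\rho$ in $\tderiv$ and the corresponding $\rho'$ in $\tderivtwo$, $\rho$ is left (resp.\ right, resp.\ unitary left, resp.\ unitary right) shrinking iff $\rho'$ is. The intuition is that shrinkingness is a property of the \emph{polarity pattern} of $\multimap$-nestings, which is exactly what $\tderiveq$ fixes: an $\ruleAx$ produces a type whose shrinkingness is governed by where it sits, $\ruleFun$ wraps a pair of corresponding-polarity types into a $\multimap$ and the same polarity bookkeeping applies on both sides, $\ruleMany$ merges $n$ right-shrinking (resp.\ unitary: $n=1$) linear types the same way, and $\mplus$ on contexts preserves left-shrinkingness by \Cref{rmk:merge-split-coshrinking} (and unitary left-shrinkingness likewise). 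Strictly speaking one has to be a little careful at the $\ruleAx$ leaves: there, the linear types $\ltype$ and $\ltype'$ can differ (e.g.\ $\ground$ versus $\larrow{\emptymset}{\mset\ground}$), and one of these is left-shrinking while the other is not. The resolution is that the shrinkingness constraints that matter for the conclusion judgment never actually reach down to constrain the bare axiom types in isolation — they only constrain composite types built by $\ruleFun$ — so an induction carrying "shrinkingness of the conclusion judgment" (rather than of every internal type) goes through cleanly. I would therefore state the inductive invariant as: \emph{if $\tderiv\tderiveq\tderivtwo$ then $\tderiv$ is shrinking iff $\tderivtwo$ is, and unitary shrinking iff $\tderivtwo$ is}, and prove it directly by the structural induction, using \Cref{rmk:merge-split-coshrinking} and \Cref{l:spread-shrinking} (spreading of left shrinkingness on pointed terms) in the $\ruleAp$/$\ruleES$ cases to handle the types of sub-derivations that are not exposed in the final judgment. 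This last point is where the argument is genuinely subtle rather than bookkeeping, and it is worth isolating as its own lemma if the details get long.
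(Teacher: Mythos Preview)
Your structural induction for the five quantitative invariants (sizes, cardinalities, domains) is correct and is exactly what the paper's one-line ``by straightforward induction on $\tderiv$'' is meant to convey; nothing more is needed there.

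The ``moreover'' clause, however, cannot be rescued --- and you already hold the counterexample in your hands. Take $\tm=\var$ and the two skeleton-equivalent derivations each built from a single $\ruleAx$ followed by a single $\ruleManyVar$: $\tderiv$ concludes $\var\hastype\mset{\ground}\vdash\var\hastype\mset{\ground}$, while $\tderivtwo$ concludes $\var\hastype\mset{\larrow{\emptytype}{\mset{\ground}}}\vdash\var\hastype\mset{\larrow{\emptytype}{\mset{\ground}}}$. The first is (unitary) shrinking, since $\mset{\ground}$ is both left and right. The second is not: as the paper itself observes, $\mset{\larrow{\emptytype}{\mset{\ground}}}$ is right but \emph{not} left (a left linear arrow requires a non-empty right multi type as source), so the context fails the left condition. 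Your escape hatch --- that shrinkingness constraints ``never reach down to the bare axiom types'' --- closes precisely here: for a bare variable the conclusion judgment \emph{is} the axiom type wrapped by $\ruleManyVar$. Neither your induction nor the appeal to \Cref{l:spread-shrinking} can bridge this. The paper's one-line proof does not address this either; the ``moreover'' part is false as stated. The single place it is invoked (the corollary that minimal types catch the size of normal forms) would instead need a direct argument that the particular construction in the size-representation proposition outputs a shrinking derivation, rather than an appeal to this lemma.
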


\begin{proof}
	By straightforward induction on the derivation $\tderiv$.
\end{proof}

The following proposition has a stronger and more involved statement with respect to the one in the body of the paper in order to have the right \ih to prove it.
\begin{proposition}[Size representation]
	\label{propappendix:size-representation} 
	\NoteState{prop:size-representation}
	Let $\concl{\tderiv}{\typctx}{\tm}{\mtype}$ be a derivation. 
	\begin{enumerate}
		\item\label{pappendix:size-representation-inert}
		\emph{Inert:} if $\tm$ is a \full inert term, then for every multi type $\mtypetwo$ such that $\card\mtypetwo = 
\card\mtype$ there is a derivation $\concl{\tderivtwo}{\typctxtwo}{\tm}{\mtypetwo}$ such that $\tderivtwo \tderiveq 
\tderiv$ and 
		$\sizem{\tderiv} = \sizectx{\typctxtwo} - \sizetyp{\mtypetwo}$.
		\item\label{p:size-representation-fireball}
		\emph{Fireball:} if $\tm$ is a \full fireball, then there is a derivation 
$\concl{\tderivtwo}{\typctxtwo}{\tm}{\mtypetwo}$ such that $\tderivtwo \tderiveq \tderiv$ and 
		$\sizem{\tderiv} = \sizectx{\typctxtwo} + \sizetyp{\mtypetwo}$.
	\end{enumerate}
\end{proposition}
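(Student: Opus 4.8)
The plan is to prove the two statements simultaneously by mutual induction on the definitions of full inert terms and full fireballs, inverting the typing rule at the last step of $\tderiv$. The guiding intuition is that the bound of \Cref{prop:types-bound-normal-derivations} becomes an equality exactly when the types introduced by the axioms carry no spurious structure; re-typing propagates a freely chosen (for fireballs) or externally forced (for inert subterms) target type all the way down to the axioms. The reason the inert statement is strengthened to range over \emph{every} $\mtypetwo$ with $\card\mtypetwo = \card\mtype$ is precisely that, when an inert subterm sits inside a larger term, its re-typing must accept the type dictated by the already-re-typed context around it.

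For the inert cases: if $\tm = \var$, then $\tderiv$ is a $\ruleManyVar$ with $\card\mtype$ axioms, and given a target $\mtypetwo$ with $\card\mtypetwo = \card\mtype$ we re-type the axioms so that $\typctxtwo = \var \hastype \mtypetwo$, whence $\sizem\tderiv = 0 = \sizectx\typctxtwo - \sizetyp\mtypetwo$. If $\tm = \sitm\,\sfire$, invert the final $\ruleAp$ into a premise $\tderiv_l$ for $\sitm$ of type $\mset{\larrow{\mtypethree}{\mtype}}$ and a premise $\tderiv_r$ for $\sfire$ of type $\mtypethree$: apply the fireball \ih to $\tderiv_r$, obtaining a re-typing of type $\mtypethree'$ with $\sizem{\tderiv_r} = \sizectx{\typctxtwo_r} + \sizetyp{\mtypethree'}$; then apply the inert \ih to $\tderiv_l$ with target $\mset{\larrow{\mtypethree'}{\mtypetwo}}$ (cardinality $1$, as required), and recompose via $\ruleAp$; the arithmetic collapses to $\sizem\tderiv = \sizectx\typctxtwo - \sizetyp\mtypetwo$. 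If $\tm = \sitm\esub\var\sitm'$, invert the final $\ruleES$, apply the inert \ih to the body premise with target $\mtypetwo$, read off the multi type $\mtypethree'$ it now assigns to $\var$ (whose cardinality equals that of the original, by \Cref{l:skel-equiv-properties}), apply the inert \ih to the substituend premise with target $\mtypethree'$, and recompose; again the sums telescope.

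For the fireball cases that are not inert: if $\tm = \la\var\sfire$, invert the final $\ruleManyVal$ into $n \ge 0$ premises, each a $\ruleFun$ over a derivation for $\sfire$; apply the fireball \ih to each, wrap the re-typings in $\ruleFun$ and recombine with $\ruleManyVal$, using that each $\ruleFun$ adds $1$ to $\sizem$ and that $\sizetyp{\larrow{\mtype}{\mtypetwo}} = 1 + \sizetyp\mtype + \sizetyp\mtypetwo$, so the per-premise contributions sum to $\sizectx\typctxtwo + \sizetyp\mtypetwo$ (the case $n = 0$ is trivial with $\mtypetwo = \emptytype$). If $\tm = \sfire\esub\var\sitm$, invert the final $\ruleES$, apply the fireball \ih to the body and the inert \ih to the substituend with target the multi type that the re-typed body assigns to $\var$ (cardinality match again by \Cref{l:skel-equiv-properties}), and recompose. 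Finally, when $\tm$ is a fireball that happens to be a full inert term, apply the already-established inert statement with the size-$0$ target $\card\mtype\cdot\mset\ground$: the conclusion $\sizem\tderiv = \sizectx\typctxtwo - 0$ coincides with $\sizectx\typctxtwo + 0$, which is exactly what the fireball statement asks for. In every case $\tderivtwo \tderiveq \tderiv$ holds by construction, since $\tderivtwo$ keeps the same final rule over skeleton-equivalent premises, and additivity of $\sizectx{\cdot}$ and $\sizetyp{\cdot}$ over multiset sums is what lets the local recomputations close.

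The main obstacle is engineering the mutual-induction interface rather than any single computation: one must phrase the inert part with the universal quantification over targets, and then track cardinalities carefully through \Cref{l:skel-equiv-properties} to certify that the inert \ih is actually applicable in the application and explicit-substitution cases, while ordering the two \ih calls correctly — re-typing the "driver" subterm (argument, or the body of an \ES) first, then feeding its output type as the target for the other premise. Once this scaffolding is in place, each case reduces to a routine inversion followed by a short additive identity.
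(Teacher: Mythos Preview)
Your proposal is correct and follows essentially the same approach as the paper: a mutual induction on full inert terms and full fireballs, with the inert clause strengthened to accept an arbitrary cardinality-matching target type so that the application and explicit-substitution cases can feed the output of one re-typed premise as the target for the other. The case analysis, the ordering of the two \ih calls, the use of \Cref{l:skel-equiv-properties} to certify the cardinality match, and the final reduction of the inert-as-fireball case via the size-$0$ target $\card\mtype\cdot\mset\ground$ all match the paper's development.
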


\begin{proof}
	By mutual induction on the definition of \full inert term and \full fireball. Cases: 
	
	\begin{itemize}
		\item \emph{Variable}, \ie $\tm = \var$. Then $\tderiv$ is a $\ruleManyVar$ rule with $n$ premises $\tderiv_i$ for 
$i\in\set{1,\ldots n}$, which are all axioms, that is, $\sizem{\tderiv_i} = 0$. Therefore $\sizem\tderiv = 0$. If 
$\mtypetwo = \mset{\type_1,\ldots, \type_n}$ then the derivation $\concl{\tderivtwo}{\typctxtwo}{\tm}{\mtypetwo}$ for 
the statement is made out of $n$ axioms of types $\type_1,\ldots, \type_n$ over a $\ruleManyVar$ rule and $\typctxtwo$ 
is simply $\mtypetwo$, so that $\sizectx{\typctxtwo} = \sizetyp{\mtypetwo}$. Then $\tderiv \tderiveq \tderivtwo$ and 
$\sizem{\tderivtwo} = 0 = \sizectx{\typctxtwo} - \sizetyp{\mtypetwo}$.

		\item \emph{Inert application}, \ie $\tm = \sitm \sfire$. Then $\tderiv$ ends with rule $\ruleAp$. Let 
$\tderiv_{\sitm}$ and $\tderiv_{\sfire}$ be its two premises. 		
		By \ih, there is a derivation $\concl{\tderivtwo_{\sfire}}{\typctxtwo_{\sfire}}{\sfire}{\mtypethree}$ such that 
$\tderivtwo_{\sfire} \tderiveq \tderiv_{\sfire}$ and $\sizem{\tderiv_{\sfire}} =  
\sizectx{\typctxtwo_{\sfire}} + \sizetyp{\mtypethree}$.
		Let $\mtypetwo$ be a multi type such that $\card\mtypetwo = \card\mtype$. Note that rule $\ruleAp$ forces the multi 
type assigned by $\tderiv_{\sitm}$ to $\sitm$ to be a singleton. By \ih, there is a derivation 
$\concl{\tderivtwo_{\sitm}}{\typctxtwo_{\sitm}}{\sitm}{\mset{\larrow{\mtypethree}{\mtypetwo}}}$ such that 
$\tderivtwo_{\sitm} \tderiveq \tderiv_{\sitm}$ and $ 
\sizem{\tderiv_{\sitm}} = \sizectx{\typctxtwo_{\sitm}} - \sizetyp{\mset{\larrow{\mtypethree}{\mtypetwo}}} = 
\sizectx{\typctxtwo_{\sitm}} - 
\sizetyp{\mtypethree} - \sizetyp{\mtypetwo} - 1$.
		We have the following derivation $\tderivtwo$:
		\begin{equation*}
		\tderivtwo \defeq 
		\begin{prooftree}
		\hypo{}
		\ellipsis{$\tderivtwo_{\sitm}$}{\typctxtwo_{\sitm} \vdash \sitm \hastype \mset{\larrow{\mtypethree}{\mtypetwo}}}
		\hypo{}
		\ellipsis{$\tderivtwo_{\sfire}$}{\typctxtwo_{\sfire} \vdash \sfire \hastype \mtypethree}
		\infer2[\footnotesize$\ruleAp$]{\typctxtwo_{\sitm}\mplus\typctxtwo_{\sfire} \vdash \sitm\sfire \hastype \mtypetwo}
		\end{prooftree}
		\end{equation*}
		Clearly, $\tderiv \tderiveq \tderivtwo$. Let  $\typctxtwo \defeq \typctxtwo_{\sitm}\mplus\typctxtwo_{\sfire}$. Then 
		\[\small\begin{array}{rcl}
			\sizem\tderiv &= &\sizem{\tderiv_{\sitm}}+ \sizem{\tderiv_{\sfire}} 
+ 1 
			\\
			& =_{\ih} & \sizem{\tderiv_{\sitm}} + \sizectx{\typctxtwo_{\sfire}} + \sizetyp{\mtypethree} +1
			\\
			& =_{\ih} & \sizectx{\typctxtwo_{\sitm}} - 
\sizetyp{\mtypethree} - \sizetyp{\mtypetwo} - 1 + \sizectx{\typctxtwo_{\sfire}} + \sizetyp{\mtypethree} +1
			\\
			& = & \sizectx{\typctxtwo_{\sitm}} + \sizectx{\typctxtwo_{\sfire}} - \sizetyp{\mtypetwo}
			\\
			& = &
			\sizectx\typctxtwo - \sizetyp{\mtypetwo}
		\end{array}\]
		
		For point 2, we use point 1 with $\mtypetwo \defeq \mset{\ground_1, \ldots, \ground_{\card\mtype}}$, for which 
$\sizetyp{\mtypetwo} = 0$, so that equation $\sizem{\tderiv} = \sizectx{\typctxtwo} + \sizetyp{\mtypetwo}$ also 
holds true. 
				
		\item \emph{Explicit substitution on inert}, \ie $\tm = \sitm \esub{\var}{\sitmtwo}$.
		Then $\tderiv$ ends with rule $\ruleES$. Let $\tderiv_{\sitm}$ and $\tderiv_{\sitmtwo}$ be its two premises and Let 
$\mtypetwo$ be a multi type such that $\card\mtypetwo = \card\mtype$.
		By \ih, there is a derivation $\concl{\tderivtwo_{\sitm}}{\typctxtwo_{\sitm}, 
\var\hastype\mtypethree}{\sitm}{\mtypetwo}$ such that $\tderivtwo_{\sitm} \tderiveq \tderiv_{\sitm}$ and 
$\sizem{\tderiv_{\sitm}} = \sizectx{\typctxtwo_{\sitm}, \var\hastype\mtypethree } - \sizetyp{\mtypetwo} = 
\sizectx{\typctxtwo_{\sitm}} + \sizetyp{\mtypethree} - \sizetyp{\mtypetwo}$. By 
\reflemma{skel-equiv-properties}, $\mtypethree$ has the same cardinality of the type $\mtypethree'$ given by 
$\tderiv_{\sitm}$ to $\var$, which is also the same type given by $\tderiv_{\sitmtwo}$ to $\sitmtwo$. We can then apply 
the \ih and obtain a derivation $\concl{\tderivtwo_{\sitmtwo}}{\typctxtwo_{\sitmtwo}}{\sitmtwo}{\mtypethree}$ such that 
$\tderivtwo_{\sitmtwo} \tderiveq \tderiv_{\sitmtwo}$ and $\sizem{\tderiv_{\sitmtwo}} = 
\sizectx{\typctxtwo_{\sitmtwo}} - \sizetyp{\mtypethree}$.
		We have the following derivation $\tderivtwo$:
				\begin{equation*}
		\tderivtwo \defeq 
		\begin{prooftree}
		\hypo{}
		\ellipsis{$\tderivtwo_{\sitm}$}{\typctxtwo_{\sitm}, \var\hastype\mtypethree \vdash \sitm \hastype \mtypetwo}
		\hypo{}
		\ellipsis{$\tderivtwo_{\sitmtwo}$}{\typctxtwo_{\sitmtwo} \vdash \sitmtwo \hastype \mtypethree}
		\infer2[\footnotesize$\ruleES$]{\typctxtwo_{\sitm} \mplus \typctxtwo_{\sitmtwo} \vdash \sitm \esub{\var}{\sitmtwo} 
\hastype \mtypetwo}
		\end{prooftree}
		\end{equation*}
Clearly, $\tderiv \tderiveq \tderivtwo$. Let  $\typctxtwo \defeq \typctxtwo_{\sitmtwo}\mplus\typctxtwo_{\sitm}$. Then 
		\[\small\begin{array}{rcl}
			\sizem\tderiv &=&\sizem{\tderiv_{\sitm}}+ \sizem{\tderiv_{\sitmtwo}} 
			\\
			& =_{\ih} & \sizectx{\typctxtwo_{\sitm}} + \sizetyp{\mtypethree} - \sizetyp{\mtypetwo} + 
\sizem{\tderiv_{\sitmtwo}} 
			\\
			& =_{\ih} & \sizectx{\typctxtwo_{\sitm}} + \sizetyp{\mtypethree} - \sizetyp{\mtypetwo} + 
\sizectx{\typctxtwo_{\sitmtwo}} - \sizetyp{\mtypethree} 
			\\
			& = & \sizectx{\typctxtwo_{\sitmtwo}} + \sizectx{\typctxtwo_{\sitm}} - \sizetyp{\mtypetwo}
			\\
			& = &
			\sizectx\typctxtwo - \sizetyp{\mtypetwo}
		\end{array}\]		
				For point 2, we use point 1 with $\mtypetwo \defeq \mset{\ground_1, \ldots, \ground_{\card\mtype}}$, for which 
$\sizetyp{\mtypetwo} = 0$, so that equation $\sizem{\tderiv} = \sizectx{\typctxtwo} + \sizetyp{\mtypetwo}$ also 
holds true.

		\item \emph{Abstraction}, \ie $\sfire = \la{\var}{\sfiretwo}$. 
		Then $\tderiv$ ends with a rule $\ruleManyVal$ with $n$ premises, each one ending with a $\ruleFun$ rule. Let 
$\tderiv^i$ be the premise of the $\ruleFun$ rule of the $i$-th derivation above the $\ruleManyVal$ rule, for 
$i\in\set{1,\ldots,n} \eqdef I$.
		By \ih, there exists a derivation $\concl{\tderivtwo^i}{\typctxtwo_i, \var \hastype 
\mtypethree_i'}{\sfiretwo}{\mtypethree_i}$ such that $\tderivtwo^i \tderiveq \tderiv^i$ and $\sizem{\tderiv^i} = 
\sizectx{\typctxtwo_i} + \sizetyp{\mtypethree_i} + \sizetyp{\mtypethree_i'}$.
		We have the following derivation:
		\begin{equation*}
		\tderivtwo \defeq 
		\begin{prooftree}
		\hypo{}
		\ellipsis{$\tderivtwo^i$}{\typctxtwo_i, \var \hastype \mtypethree_i' \vdash \sfiretwo \hastype \mtypethree_i}
		
		\infer1[\footnotesize$\lambda$]{\typctxtwo_i \vdash \la{\var}\sfiretwo \hastype 
\larrow{\mtypethree'_i}{\mtypethree_i}}
		\delims{\left[}{\right]_{\iI}}
		\infer1[\footnotesize$\ruleManyVal$]{\uplus_{\iI}\typctxtwo_i \vdash \la{\var}\sfiretwo \hastype 
\mset{\larrow{\mtypethree'_i}{\mtypethree_i}}_{\iI}}		\end{prooftree}
		\end{equation*}
		Clearly, $\tderiv \tderiveq \tderivtwo$. Let $\typctxtwo \defeq \uplus_{\iI}\typctxtwo_i$ and $ \mtypetwo \defeq 
\mset{\larrow{\mtypethree'_i}{\mtypethree_i}}_{\iI}$. We have:
		\[\small\begin{array}{rcl}
			\sizem\tderiv &= &n+\sum_{\iI} \sizem{\tderiv^i}
			\\
			& =_{\ih} & n+\sum_{\iI}(\sizectx{\typctxtwo_i} + \sizetyp{\mtypethree_i} + \sizetyp{\mtypethree_i'}) 
			\\
			& = & \sum_{\iI}\sizectx{\typctxtwo_i} + (n+\sum_{\iI}(\sizetyp{\mtypethree_i} + \sizetyp{\mtypethree_i'}))
			\\
			& = & \sizectx\typctxtwo + \sizetyp{\mset{\larrow{\mtypethree'_i}{\mtypethree_i}}_{\iI}}
			\\
			& = & \sizectx\typctxtwo + \sizetyp\mtypetwo
		\end{array}\]

		\item \emph{Explicit substitution on fireball}, \ie $\tm = \sfire \esub{\var}{\sitm}$.
		Then $\tderiv$ ends with rule $\ruleES$. Let $\tderiv_{\sfire}$ and $\tderiv_{\sitm}$ be its two premises. 
		By \ih, there is a derivation $\concl{\tderivtwo_{\sfire}}{\typctxtwo_{\sfire}, 
\var\hastype\mtypethree}{\sfire}{\mtypetwo}$ such that $\tderivtwo_{\sfire} \tderiveq \tderiv_{\sfire}$ and 
$\sizem{\tderiv_{\sfire}} = \sizectx{\typctxtwo_{\sfire}} + \size{\mtypethree} + \sizetyp{\mtypetwo}$. By 
\reflemma{skel-equiv-properties}, $\mtypethree$ has the same cardinality of the type $\mtypethree'$ given by 
$\tderiv_{\sfire}$ to $\var$, which is also the same type given by $\tderiv_{\sitm}$ to $\sitm$.
		By \ih, there is a derivation $\concl{\tderivtwo_{\sitm}}{\typctxtwo_{\sitm}}{\sitm}{\mtypethree}$ such that 
$\tderivtwo_{\sitm} \tderiveq \tderiv_{\sitm}$ and $\sizem{\tderiv_{\sitm}} = 
\sizectx{\typctxtwo_{\sitm}} - \sizetyp{\mtypethree}$.
		We have the following derivation $\tderivtwo$:
				\begin{equation*}
		\tderivtwo \defeq 
		\begin{prooftree}
		\hypo{}
		\ellipsis{$\tderivtwo_{\sfire}$}{\typctxtwo_{\sfire}, \var\hastype\mtypethree \vdash \sfire \hastype \mtypetwo}
		\hypo{}
		\ellipsis{$\tderivtwo_{\sitm}$}{\typctxtwo_{\sitm} \vdash \sitm \hastype \mtypethree}
		\infer2[\footnotesize$\ruleES$]{\typctxtwo_{\sfire} \mplus \typctxtwo_{\sitm} \vdash \sfire \esub{\var}{\sitm} 
\hastype \mtypetwo}
		\end{prooftree}
		\end{equation*}
	Clearly, $\tderiv \tderiveq \tderivtwo$. Let  $\typctxtwo \defeq \typctxtwo_{\sitm}\mplus\typctxtwo_{\sfire}$. Then 
		\[\begin{array}{rcl}
			\sizem\tderiv &= & \sizem{\tderiv_{\sfire}}+ \sizem{\tderiv_{\sitm}} 
			\\
			& =_{\ih} & \sizectx{\typctxtwo_{\sfire}} + \size{\mtypethree} + \sizetyp{\mtypetwo} + \sizem{\tderiv_{\sitm}} 
			\\
			& =_{\ih} & \sizectx{\typctxtwo_{\sfire}} + \size{\mtypethree} + \sizetyp{\mtypetwo} + 
\sizectx{\typctxtwo_{\sitm}} 
-\sizetyp{\mtypethree}  
			\\
			& = & \sizectx{\typctxtwo_{\sitm}} + \sizectx{\typctxtwo_{\sfire}} + \sizetyp{\mtypetwo}
			\\
			& = &
			\sizectx\typctxtwo + \sizetyp{\mtypetwo}
		\end{array}\]
\qedhere
	\end{itemize}	
\end{proof}


\begin{theorem}[Weak exact bounds of kind 3]
	\label{thmappendix:weak-exact-bounds-3}
	\NoteState{thm:weak-exact-bounds-3}
	Let $\tm$ and $\tmtwo$ be normal. 
	If $\deriv:\tm\tmtwo \tovsubs^* \tmthree$ and $\tmthree$ is normal then there 
	exist $\mtype\in \sem{\tm}$ and $\mtypetwo \in \sem{\tmtwo}$ such that $2\sizem{\deriv} + \sizefu\tmthree = 
	\size\mtype+\size\mtypetwo+1$. 
\end{theorem}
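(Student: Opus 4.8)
The plan is to combine the size representation property (Proposition~\ref{prop:size-representation}) with the exact bounds of kind~1 (shrinking correctness and completeness, Theorems~\ref{thm:correctness} and~\ref{thm:completeness}) and the analysis of the $\ruleAp$ rule at the root of a derivation for $\tm\tmtwo$. The statement is essentially Theorem~\ref{thm:lax-bounds-3} turned into an equality, except that we no longer demand that the two types form a composable pair; this is precisely the slack that the size representation property buys us.

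First I would run shrinking completeness on $\deriv \colon \tm\tmtwo \tovsubs^* \tmthree$: since $\tmthree$ is $\vsub$-normal, we obtain a unitary shrinking derivation $\concl{\tderiv}{}{\tm\tmtwo}{\mtypethree}$ (the type context is empty because $\tm\tmtwo$ is closed) with $2\sizem{\deriv} + \sizefu{\tmthree} = \sizem{\tderiv}$. Since $\tm\tmtwo$ is an application, the last rule of $\tderiv$ is $\ruleAp$, so $\tderiv$ decomposes into sub-derivations $\concl{\tderiv_\tm}{}{\tm}{\mset{\larrow{\mtypefour}{\mtypethree}}}$ and $\concl{\tderiv_\tmtwo}{}{\tmtwo}{\mtypefour}$, with $\sizem{\tderiv} = \sizem{\tderiv_\tm} + \sizem{\tderiv_\tmtwo} + 1$. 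Now comes the key move: both $\tm$ and $\tmtwo$ are normal (hence \full fireballs by Proposition~\ref{prop:properties-full-reduction}), so I apply size representation (Proposition~\ref{prop:size-representation}) to $\tderiv_\tm$ and to $\tderiv_\tmtwo$ separately. This yields derivations $\concl{\tderivtwo_\tm}{}{\tm}{\mtype}$ with $\tderivtwo_\tm \tderiveq \tderiv_\tm$ and $\sizem{\tderiv_\tm} = \sizetyp{\mtype}$ (the type context is empty, so the context-size term vanishes), and similarly $\concl{\tderivtwo_\tmtwo}{}{\tmtwo}{\mtypetwo}$ with $\sizem{\tderiv_\tmtwo} = \sizetyp{\mtypetwo}$. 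By definition of the relational semantics, $\mtype \in \sem{\tm}$ and $\mtypetwo \in \sem{\tmtwo}$. Chaining the equalities gives
\[
2\sizem{\deriv} + \sizefu{\tmthree} = \sizem{\tderiv} = \sizem{\tderiv_\tm} + \sizem{\tderiv_\tmtwo} + 1 = \sizetyp{\mtype} + \sizetyp{\mtypetwo} + 1,
\]
which is exactly the claim.

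The only subtlety — and the point where one must be slightly careful rather than where a genuine obstacle lies — is making sure that size representation can be invoked \emph{with empty type contexts} and that, after re-typing, the resulting types $\mtype,\mtypetwo$ still genuinely lie in $\sem{\tm},\sem{\tmtwo}$ (they do, since $\sem{\cdot}$ is just the set of all derivable judgments, with no shrinking constraint imposed in the statement of this theorem). One also uses that $\tderiveq$-equivalence preserves $\sizem{\cdot}$ (Lemma~\ref{l:skel-equiv-properties}), though here this is already baked into the conclusion of Proposition~\ref{prop:size-representation}. I would also note explicitly that the loss relative to Theorem~\ref{thm:exact-bounds-3} is exactly that $(\mtype,\mtypetwo)$ need not be a composable pair: the re-typed $\mtype$ has shape $\mset{\larrow{\mtypefive}{\mtypesix}}$ but there is no reason for $\mtypesix$ to be the $\mtypethree$ appearing in $\tderiv_\tmtwo$'s type, nor for $\mtype$'s argument to match $\mtypetwo$ — recovering composability is what requires the dissection refinement and the infinity of ground types. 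No step here is hard; the proof is a short assembly of already-established facts.
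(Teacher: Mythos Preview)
Your proof is correct and follows essentially the same approach as the paper: apply shrinking completeness to obtain a derivation for $\tm\tmtwo$ whose multiplicative size equals $2\sizem{\deriv}+\sizefu{\tmthree}$, split it at the root $\ruleAp$ rule into sub-derivations for $\tm$ and $\tmtwo$, and then invoke the size representation property on each sub-derivation (using that $\tm$ and $\tmtwo$ are closed \full fireballs) to replace them with skeleton-equivalent derivations whose final types have the desired sizes. The only small clarification is that the reason the re-typed derivations still have empty type context is the $\dom\typctx = \dom\typctxtwo$ clause of Lemma~\ref{l:skel-equiv-properties} (not the $\sizem{\cdot}$ clause you mention), but this is a minor point and your argument is otherwise complete.
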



\begin{proof}
	
	By shrinking completeness (\Cref{thm:completeness}), there is a derivation $\derive{\tderivtwo}{\tm\tmtwo}$ with $2\sizem{\deriv} + \sizefu{\tmthree} = \sizem{\tderivtwo}$. 
	The last rule of $\tderivtwo$ is an $\ruleAp$ rule between two derivations 
	$\tderivtwo_\tm$ and $\tderivtwo_\tmtwo$. Note that $\sizem{\tderivtwo} = \sizem{\tderivtwo_\tm} + \sizem{\tderivtwo_\tmtwo} + 1$. 
		According to size representation (\Cref{prop:size-representation}.\ref{p:size-representation-fireball}) and  $\tderiveq$-invariants (\Cref{l:skel-equiv-properties}), there are derivations $\concl{\tderivtwo_\tm'}{\,}{\tm}{\mtype}$ and $\concl{\tderivtwo_\tmtwo'}{\,}{\tmtwo}{\mtypetwo}$ such that $\sizem{\tderivtwo_\tm} = \sizetyp{\mtype}$ and $\sizem{\tderivtwo_\tmtwo} = \sizetyp{\mtypetwo}$, that is, $2\sizem{\deriv} + \sizefu{\tmthree} = \sizem{\tderivtwo_\tm} + \sizem{\tderivtwo_\tmtwo} + 1 = \size\mtype+\size\mtypetwo+1$ with $\mtype\in \sem\tm$ and $\mtypetwo\in\sem\tmtwo$.	
\end{proof}

The next lemma shows that every derivation for a \full fireball admits a size dissection. For the induction to go 
through, we need to reinforce the disjoint names requirement and to have a stronger statement for inert terms, including 
a different size requirement. \emph{Terminology}: a substitution $\sigma$ \emph{extends} a substitution $\tau$ if 
$\dom\sigma \supseteq \dom\tau$ and $\sigma(\var) = \tau(\var)$ for every $\var \in \dom\tau$. We also use $\gt\type$, $\gt\mtype$, $\gt\typctx$, and $\gt\tderiv$ for 
the set of ground types occurring in $\type$, $\mtype$, $\typctx$, and $\tderiv$.

\begin{lemma}[Size dissection]
	\label{l:refined-size-representation} 
	\label{lappendix:size-dissection}
	\NoteState{l:size-dissection}
	Let $\conclin{\tderiv}{\typctx}{\tm}{\mtype}$ be a derivation and $\groundset$ be a set of ground types containing 
$\gt\tderiv$. 
	\begin{enumerate}
		\item\label{p:refined-size-representation-inert}
		\emph{Inert:} if $\tm$ is a \full inert term, then for every multi type $\mtype'$ and for every substitution $\tau$ 
such that $\dom\tau = \gt{\mtype'}$ and $\tau(\mtype') = \mtype$ there exist a substitution $\sigma$ extending $\tau$ 
and a derivation $\conclin{\tderivtwo}{\typctx'}{\tm}{\mtype'}$ forming a dissection of $\tderiv$ such that 
		$\sizem{\tderiv} = \sizectxm{\typctx'} -\sizetypm{\mtype'}$ and $\gt{\tderivtwo}\cap \groundset = \emptyset$.
		\item\label{p:refined-size-representation-fireball}
		\emph{Fireball:} if $\tm$ is a \full fireball, then there exist a substitution $\sigma$ and a derivation 
$\conclin{\tderivtwo}{\typctx'}{\tm}{\mtype'}$ forming a size dissection of $\tderiv$ such that $\gt{\tderivtwo}\cap 
\groundset = \emptyset$.
	\end{enumerate}
\end{lemma}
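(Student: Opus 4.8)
The plan is to prove the two points simultaneously by mutual induction on the definition of \full inert terms and \full fireballs, following exactly the case analysis of the proof of the size representation property (\Cref{prop:size-representation}), but now additionally threading the substitution $\sigma$ and the freshness discipline encoded by the auxiliary parameter $\groundset$. As in \Cref{prop:size-representation}, the inert point (\Cref{p:refined-size-representation-inert}) is the load-bearing one: it is more flexible --- the target multi type $\mtype'$ and a partial substitution $\tau$ on its ground types are \emph{given} rather than constructed --- and it carries the sharper size identity $\sizem{\tderiv} = \sizectxm{\typctx'} - \sizetypm{\mtype'}$, which is what makes the telescoping close. The fireball point (\Cref{p:refined-size-representation-fireball}) is then obtained from it by treating the two extra constructors ($\la\var\sfire$ and $\sfire\esub\var\sitm$) and, whenever a \full fireball happens to be a \full inert term, by invoking \Cref{p:refined-size-representation-inert} with a target type freshly renamed so that its ground types avoid $\groundset$. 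The body statement \Cref{l:size-dissection} is exactly \Cref{p:refined-size-representation-fireball} instantiated with $\groundset \defeq \gt\tderiv$, so that the clause $\gt{\tderivtwo}\cap\groundset = \emptyset$ becomes the disjoint-names requirement of a dissection.

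The base case is the variable $\var$, where $\tderiv$ consists of axioms feeding a $\ruleManyVar$ rule. For the inert point the data $\mtype'$, $\tau$ are supplied (with $\dom\tau = \gt{\mtype'}$ and $\tau(\mtype') = \mtype$); I take $\tderivtwo$ to be the $\ruleManyVar$ derivation assigning to $\var$ the linear components of $\mtype'$, with type context $\var \hastype \mtype'$, and set $\sigma \defeq \tau$. Skeleton equivalence $\tderivtwo \tderiveq \tderiv$ is immediate, $\sigma(\tderivtwo) = \tderiv$ follows from $\tau(\mtype') = \mtype$, and the size identity reads $\sizem{\tderiv} = 0 = \sizectxm{\var\hastype\mtype'} - \sizetypm{\mtype'}$ since these two $\multimap$-counts coincide; the clause $\gt{\tderivtwo}\cap\groundset=\emptyset$ holds because $\gt{\tderivtwo} = \gt{\mtype'}$, which the caller is responsible for having chosen disjoint from $\groundset$ (this is the reason the inert point must be applied with a renamed target). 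For the fireball point at a variable I first pick such a fresh $\mtype'$ of the same cardinality as $\mtype$ together with the obvious renaming $\tau$ onto $\mtype$, and then apply \Cref{p:refined-size-representation-inert}.

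In the inductive cases --- $\sitm\,\sfire$, $\sitm\esub\var\sitmtwo$, $\sfire\esub\var\sitm$, $\la\var\sfire$ --- I would peel off the last rule of $\tderiv$ and recurse, in each case first treating the premise carrying the ``argument'' (a fireball, via \Cref{p:refined-size-representation-fireball}, which produces its own fresh ground types and partial substitution) and then the premise carrying the ``head'' (an inert term or fireball, via the appropriate point), enlarging $\groundset$ at each recursive call by all ground types produced so far. For instance, if $\tm = \sitm\,\sfire$ with $\tderiv$ ending in $\ruleAp$ between $\tderiv_1 \colon \typctx_1 \vdash^\infty \sitm \hastype \mset{\larrow{\mtype_1}{\mtype}}$ and $\tderiv_2 \colon \typctx_2 \vdash^\infty \sfire \hastype \mtype_1$, one applies \Cref{p:refined-size-representation-fireball} to $\tderiv_2$ getting $(\tderivtwo_2,\sigma_2)$ with fresh type $\mtype_1'$, then applies \Cref{p:refined-size-representation-inert} to $\tderiv_1$ with target $\mset{\larrow{\mtype_1'}{\mtype'}}$ (where $\mtype'$ is a fresh renaming of $\mtype$) and the partial substitution determined by $\sigma_2$ on $\gt{\mtype_1'}$ and by the renaming on $\gt{\mtype'}$. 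The global $\sigma$ is the union of the partial substitutions returned by the recursive calls, well defined precisely because their domains are pairwise disjoint by the freshness bookkeeping; $\tderivtwo\tderiveq\tderiv$ and $\sigma(\tderivtwo)=\tderiv$ reassemble from the recursive equalities under the reconstructed last rule; and the size identity is the same telescoping as in \Cref{prop:size-representation}, with the $+1$ contributed by the $\ruleAp$ rule to $\sizem{\tderiv}$ exactly absorbed by the $-\sizetypm{\mtype_1'}$ coming from the fresh argument type, using $\sizem{\tderiv_1} = \sizectxm{\typctx_1'} - \sizetypm{\mset{\larrow{\mtype_1'}{\mtype'}}}$.

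The main obstacle I anticipate is not any single case but the global coherence of the freshness discipline: ensuring that the $\groundset$ passed down each branch really contains all ground types occurring below and to the left, so that the hypothesis $\gt{\tderivtwo}\cap\groundset=\emptyset$ is available as an inductive hypothesis exactly when needed; that the resulting $\sigma$ is a genuine function; and that all of this is compatible with the ``$\sigma$ extends $\tau$'' requirement of \Cref{p:refined-size-representation-inert}. This is exactly why the lemma must be stated with the parameter $\groundset$ rather than merely ``fresh with respect to $\tderiv$'': a naive formulation would fail to propagate the disjointness through nested recursive calls.
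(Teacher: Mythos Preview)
Your approach is correct and matches the paper's: mutual induction on the definition of \full inert terms and \full fireballs, mirroring the case analysis of \Cref{propappendix:size-representation} while threading the substitution and the freshness parameter $\groundset$---the paper itself says that ``for the induction to go through, we need to reinforce the disjoint names requirement and to have a stronger statement for inert terms, including a different size requirement,'' which is exactly your plan. Two small sharpenings worth making explicit when you write it out: first, the inert clause tacitly requires $\gt{\mtype'}\cap\groundset = \emptyset$ (otherwise the variable base case cannot produce $\gt{\tderivtwo}\cap\groundset=\emptyset$), and you already handle this correctly by always supplying a freshly renamed target; second, in the application case your phrase ``enlarging $\groundset$ by all ground types produced so far'' must \emph{exclude} $\gt{\mtype_1'}$ when recursing on $\tderiv_1$, since $\mtype_1'$ is part of the target and must occur in $\tderivtwo_1$---the right enlargement is $\groundset\cup(\gt{\tderivtwo_2}\setminus\gt{\mtype_1'})$, which then forces $\gt{\tderivtwo_1}\cap\gt{\tderivtwo_2}\subseteq\gt{\mtype_1'}$, exactly where $\sigma_1$ and $\sigma_2$ agree by construction.
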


\begin{proof}
	By mutual induction on the definition of \full inert term and \full fireball. Cases: 
	
	\begin{itemize}
		\item \emph{Variable}, \ie $\tm = \var$. Then $\tderiv$ has the following shape:
		\begin{equation*}
		\tderiv =
		\begin{prooftree}
		\hypo{\left[\var \hastype \mset{\type_i} \vdash^\infty \var \hastype \type_i\right]_{i\in\set{1,\ldots, n}}}
		\infer1[\footnotesize$\ruleManyVar$]{\var\hastype\mtype \vdash^\infty \var \hastype \mtype}		\end{prooftree}
		\end{equation*}
		with $\mtype = \mset{\type_1,\ldots, \type_n}$ and $\typctx$ being simply $\var\hastype \mtype$. Note that 
$\sizem{\tderiv} = 0$. Take $\mtype' \defeq \mset{\ground_{i_1},\ldots, \ground_{i_n}}$ with $\ground_{i_1},\ldots, 
\ground_{i_n}$ fresh variables with respect to $\groundset$ (thus in particular not appearing in $\type_1,\ldots, 
\type_n$). Then the derivation $\conclin{\tderivtwo}{\typctx'}{\tm}{\mtype'}$ for the statement is:
		\begin{equation*}
		\tderivtwo \defeq 
		\begin{prooftree}
		\hypo{\left[\var \hastype \mset{\ground_i} \vdash^\infty \var \hastype \ground_i\right]_{i\in\set{1,\ldots, n}}}
		\infer1[\footnotesize$\ruleManyVar$]{\var\hastype\mtype' \vdash^\infty \var \hastype \mtype'}		\end{prooftree}
		\end{equation*}
		with $\typctx'$ being simply $\var\hastype\mtype'$. The substitution is $\sigma \defeq 
\isub{\ground_{i_1}}{\type_1}\ldots \isub{\ground_{i_n}}{\type_n}$. 
		\begin{itemize}
			\item \emph{Skeleton}: clearly, $\tderiv \tderiveq \tderivtwo$. 
			\item \emph{Representation}: clearly, $\sigma(\mtypetwo) = \mtype$, that implies $\sigma(\tderivtwo) = \tderiv$.
			\item \emph{Reinforced disjoint names}: by construction the ground types in $\gt{\tderivtwo}= 
\set{\ground_{i_1},\ldots, \ground_{i_n}}$ are not in $\groundset$.
			\item \emph{Types size}: $\sizem{\tderivtwo} = 0 = \sizetyp{\mtype'} - \sizetyp{\mtype'} = \sizectx{\typctx'} - 
\sizetyp{\mtype'}$
		\end{itemize}
		
		For point 2, simply note that $\sizem{\tderivtwo} = 0 = \sizetyp{\mtype'} + \sizetyp{\mtype'} = \sizectx{\typctx'} + 
\sizetyp{\mtype'}$.

				\item \emph{Inert application}, \ie $\tm = \sitm \sfire$. Then $\tderiv$ ends with rule $\ruleAp$:
		\begin{equation*}
		\tderiv =
		\begin{prooftree}
		\hypo{}
		\ellipsis{$\tderiv_{\sitm}$}{\typctx_{\sitm} \vdash^\infty \sitm \hastype \mset{\larrow{\mtypetwo}{\mtype}}}
		\hypo{}
		\ellipsis{$\tderiv_{\sfire}$}{\typctx_{\sfire} \vdash^\infty \sfire \hastype \mtypetwo}
		\infer2[\footnotesize$\ruleAp$]{\typctx_{\sitm}\mplus\typctx_{\sfire} \vdash^\infty \sitm\sfire \hastype \mtype}
		\end{prooftree}
		\end{equation*}
		with $\typctx = \typctx_{\sitm}\mplus\typctx_{\sfire}$.
		By \ih (point 2) with respect to $\groundsettwo \defeq \groundset \cup \dom\tau$, there exist a substitution 
$\sigma_{\sfire}$ and derivation $\conclin{\tderivtwo_{\sfire}}{\typctx'_{\sfire}}{\sfire}{\mtypetwo'}$ such that they 
form a size dissection of $\tderiv_{\sfire}$ such that $\gt{\tderivtwo_{\sfire}}\cap \groundsettwo = \emptyset$.
		
		Consider now the type $\mset{\larrow{\mtypetwo'}{\mtype'}}$. Because of the properties given by the \ih above, we 
have that $\sigma_{\sfire}(\tau(\mset{\larrow{\mtypetwo'}{\mtype'}})) = 
\mset{\larrow{\sigma_{\sfire}(\mtypetwo')}{\tau(\mtype')}} = \mset{\larrow{\mtypetwo}{\mtype}}$. Let $\tau'$ be 
$\sigma_{\sfire}\comp\tau |_{\gt{\mset{\larrow{\mtypetwo'}{\mtype'}}}}$, that is the restriction of 
$\sigma_{\sfire}\comp\tau$ to the ground types in $\mset{\larrow{\mtypetwo'}{\mtype'}}$. By \ih (point 1) with respect 
to $\groundsetthree \defeq \groundset \cup (\dom{\sigma_{\sfire}} \setminus \gt{\mtypetwo'})$, there exist a 
substitution $\sigma_{\sitm}$ extending $\tau'$ and a derivation 
$\conclin{\tderivtwo_{\sitm}}{\typctx'_{\sitm}}{\sitm}{\mset{\larrow{\mtypetwo'}{\mtype'}}}$ forming a dissection of 
$\tderiv_{\sitm}$ such that $\sizem{\tderiv_{\sitm}} = \sizectx{\typctx'_{\sitm}} - 
\sizetyp{\mset{\larrow{\mtypetwo'}{\mtype'}}} = \sizectx{\typctx'_{\sitm}} - \sizetyp{\mtypetwo'} - \sizetyp{\mtype'} 
-1$ and $\gt{\tderivtwo_{\sitm}}\cap \groundsetthree = \emptyset$.
		We have the following derivation $\tderivtwo$:
		\begin{equation*}
		\tderivtwo \defeq 
		\begin{prooftree}
		\hypo{}
		\ellipsis{$\tderivtwo_{\sitm}$}{\typctx'_{\sitm} \vdash^\infty \sitm \hastype \mset{\larrow{\mtypetwo'}{\mtype'}}}
		\hypo{}
		\ellipsis{$\tderivtwo_{\sfire}$}{\typctx'_{\sfire} \vdash^\infty \sfire \hastype \mtypetwo'}
		\infer2[\footnotesize$\ruleAp$]{\typctx'_{\sitm}\mplus\typctx'_{\sfire} \vdash^\infty \sitm\sfire \hastype \mtype'}
		\end{prooftree}
		\end{equation*}
		Let $\sigma \defeq \sigma_{\sfire} \comp \sigma_{\sitm}$. We have to show that $(\tderivtwo, \sigma)$ is a 
dissection satisfying the statement of point 1.
		\begin{itemize}
			\item \emph{Skeleton}: the two \ih give $\tderiv_{\sfire} \tderiveq \tderivtwo_{\sfire}$ and $\tderiv_{\sitm} 
\tderiveq \tderivtwo_{\sitm}$, that imply $\tderiv \tderiveq \tderivtwo$. 

			\item \emph{Representation}: by choice of $\groundsettwo$ and $\groundsetthree$, $\dom{\sigma_{\sfire}} 
\cap\dom{\sigma_{\sitm}} = \gt{\mtypetwo'}$, on which they agree because $\sigma_{\sitm}$ extends $\tau'$, that is, 
$\sigma = \sigma_{\sfire}$ on $\dom{\sigma_{\sfire}}$ and similarly for $\sigma_{\sitm}$. Then 
$\sigma(\tderivtwo_{\sfire}) = \sigma_{\sfire}(\tderivtwo_{\sfire}) =_{\ih} \tderiv_{\sfire}$ and similarly 
$\sigma(\tderivtwo_{\sfire}) = \tderiv_{\sfire}$, that is, $\sigma(\tderivtwo) = \tderiv$.
			\item \emph{Reinforced disjoint names}: we have that $\gt{\tderivtwo_{\sfire}}\cap \groundsettwo = \emptyset$ and 
$\gt{\tderivtwo_{\sitm}}\cap \groundsetthree = \emptyset$. Since both $\groundsettwo$ and $\groundsetthree$ contain 
$\groundset$, we have that the reinforced disjoint names requirement holds: $\gt{\tderivtwo}\cap \groundset = 
(\gt{\tderivtwo_{\sfire}}\cup \gt{\tderivtwo_{\sitm}} )\cap \groundset = (\gt{\tderivtwo_{\sfire}}\cap \groundset)\cup 
(\gt{\tderivtwo_{\sitm}} \cap \groundset) = \emptyset \cup \emptyset = \emptyset$.
			\item \emph{Types size}: let  $\typctx' \defeq \typctx'_{\sitm}\mplus\typctx'_{\sfire}$. Then 
		\[\small\begin{array}{rcl}
			\sizem\tderiv &= & \sizem{\tderiv_{\sitm}}+ \sizem{\tderiv_{\sfire}} + 1
			\\
			& =_{\ih} & \sizem{\tderiv_{\sitm}} + \sizectx{\typctx'_{\sfire}} + \sizetyp{\mtypetwo'} + 1
			\\
			& =_{\ih} & \sizectx{\typctx'_{\sitm}} - \sizetyp{\mtypetwo'} - \sizetyp{\mtype'} -1 + \sizectx{\typctx'_{\sfire}} 
+ \sizetyp{\mtypetwo'} + 1
			\\			
			& = & \sizectx{\typctx'_{\sitm}} + \sizectx{\typctx'_{\sfire}} -\sizetyp{\mtype'}
			\\
			& = & \sizectx\typctx' -\sizetyp{\mtype'}.
		\end{array}\]
		\end{itemize}
		
		For point 2, if $\mtype = \mset{\type_1, \ldots, \type_{\card\mtype}}$ we use point 1 with $\mtype' \defeq 
\mset{\ground_{i_1}, \ldots, \ground_{i_{\card\mtype}}}$ where $\ground_{i_1}, \ldots, \ground_{i_{\card\mtype}}$ are 
not in $\gt\tderiv$, and $\tau \defeq \isub{\ground_{i_1}}{\type_1}\ldots 
\isub{\ground_{i_{\card\mtype}}}{\type_{\card\mtype}}$---clearly $\tau$ is a substitution such that $\tau(\mtype') = 
\mtype$ and $\dom\tau=\gt{\mtype'}$. Note that $\sizetyp{\mtype'} = 0$, so that equation $\sizem{\tderiv} =  
\sizectx{\typctx'} + \sizetyp{\mtype'}$ also holds true. 
				
		\item \emph{Explicit substitution on inert}, \ie $\tm = \sitm \esub{\var}{\sitmtwo}$.
		Then $\tderiv$ ends with rule $\ruleES$:
		\begin{equation*}
		\tderiv =
		\begin{prooftree}
		\hypo{}
		\ellipsis{$\tderiv_{\sitm}$}{\typctx_{\sitm}, \var\hastype\mtypetwo \vdash^\infty \sitm \hastype \mtype}
		\hypo{}
		\ellipsis{$\tderiv_{\sitmtwo}$}{\typctx_{\sitmtwo} \vdash^\infty \sitmtwo \hastype \mtypetwo}
		\infer2[\footnotesize$\ruleES$]{\typctx_{\sitm} \mplus \typctx_{\sitmtwo} \vdash^\infty \sitm \esub{\var}{\sitmtwo} 
\hastype \mtype}
		\end{prooftree}
		\end{equation*}
		Let $\mtype'$ and $\tau$ be such that $\tau(\mtype') = \mtype$ and $\dom\tau = \gt{\mtype'}$.
By \ih (point 1), there exist a substitution $\sigma_{\sitm}$ extending $\tau$ and a derivation 
$\conclin{\tderivtwo_{\sitm}}{\typctx'_{\sitm}, \var\hastype\mtypetwo'}{\sitm}{\mtype'}$ forming a dissection of 
$\tderiv_{\sitm}$ such that $\sizem{\tderiv_{\sitm}} = \sizectx{\typctx'_{\sitm}} +\sizetyp{\mtypetwo'} - 
\sizetyp{\mtype'}$ and $\gt{\tderivtwo_{\sitm}}\cap \groundset = \emptyset$.

		Let $\sigma_{\mtypetwo'}$ be the restriction of $\sigma_{\sitm}$ to $\gt{\mtypetwo'}$. By \ih (point 1) with respect 
to $\groundsettwo \defeq \groundset \cup (\dom{\sigma_{\sitm}} \setminus \gt{\mtypetwo'})$, there exist a substitution 
$\sigma_{\sitmtwo}$ extending $\sigma_{\mtypetwo'}$ and a derivation 
$\conclin{\tderivtwo_{\sitmtwo}}{\typctx'_{\sitmtwo}}{\sitmtwo}{\mtypetwo'}$ forming a dissection of 
$\tderiv_{\sitmtwo}$ such that $\sizem{\tderiv_{\sitmtwo}} = \sizectx{\typctx'_{\sitmtwo}} - \sizetyp{\mtypetwo'}$.
		We have the following derivation $\tderivtwo$:
		\begin{equation*}
		\tderivtwo \defeq 
		\begin{prooftree}
		\hypo{}
		\ellipsis{$\tderivtwo_{\sitm}$}{\typctx'_{\sitm}, \var\hastype\mtypetwo' \vdash^\infty \sitm \hastype \mtype'}
		\hypo{}
		\ellipsis{$\tderivtwo_{\sitmtwo}$}{\typctx'_{\sitmtwo} \vdash^\infty \sitmtwo \hastype \mtypetwo'}
		\infer2[\footnotesize$\ruleES$]{\typctx'_{\sitm} \mplus \typctx'_{\sitmtwo} \vdash^\infty \sitm 
\esub{\var}{\sitmtwo} \hastype \mtype'}
		\end{prooftree}
		\end{equation*}
		\begin{itemize}
			\item \emph{Skeleton}: the two \ih give $\tderiv_{\sitm} \tderiveq \tderivtwo_{\sitm}$ and $\tderiv_{\sitmtwo} 
\tderiveq \tderivtwo_{\sitmtwo}$, that imply $\tderiv \tderiveq \tderivtwo$. 
			\item \emph{Representation}: by choice of $\groundsettwo$, $\dom{\sigma_{\sitm}} \cap\dom{\sigma_{\sitmtwo}} = 
\gt{\mtypetwo'}$, on which they agree because $\sigma_{\sitm}$ extends $\sigma_{\mtypetwo'}$, that is, $\sigma = 
\sigma_{\sitm}$ on $\dom{\sigma_{\sitm}}$ and similarly for $\sigma_{\sitmtwo}$. Then $\sigma(\tderivtwo_{\sitm}) = 
\sigma_{\sitm}(\tderivtwo_{\sitm}) =_{\ih} \tderiv_{\sitm}$ and similarly $\sigma(\tderivtwo_{\sitmtwo}) = 
\tderiv_{\sitmtwo}$, that is, $\sigma(\tderivtwo) = \tderiv$.
			\item \emph{Reinforced disjoint names}: we have that $\gt{\tderivtwo_{\sitm}}\cap \groundset = \emptyset$ and 
$\gt{\tderivtwo_{\sitmtwo}}\cap \groundsettwo = \emptyset$. Since $\groundsettwo$ contains $\groundset$, we have that 
the reinforced disjoint names requirement holds: $\gt{\tderivtwo}\cap \groundset = (\gt{\tderivtwo_{\sitm}}\cup 
\gt{\tderivtwo_{\sitmtwo}} )\cap \groundset = (\gt{\tderivtwo_{\sitm}}\cap \groundset)\cup (\gt{\tderivtwo_{\sitmtwo}} 
\cap \groundset) = \emptyset \cup \emptyset = \emptyset$.
			\item \emph{Types size}: let  $\typctx' \defeq \typctx'_{\sitm}\mplus\typctx'_{\sitmtwo}$. Then 
		\[\small\begin{array}{rcl}
			\sizem\tderiv &= &\sizem{\tderiv_{\sitm}}+ \sizem{\tderiv_{\sitmtwo}} 
			\\
			& =_{\ih} &\sizem{\tderiv_{\sitm}} + \sizectx{\typctx'_{\sitmtwo}} - \sizetyp{\mtypetwo'}
			\\
			& =_{\ih} & \sizectx{\typctx'_{\sitm}} +\sizetyp{\mtypetwo'} - \sizetyp{\mtype'} + \sizectx{\typctx'_{\sitmtwo}} - 
\sizetyp{\mtypetwo'}
			\\
			& = & \sizectx{\typctx'_{\sitmtwo}} + \sizectx{\typctx'_{\sitm}} - \sizetyp{\mtype'}
			\\
			& = & \sizectx{\typctx'}- \sizetyp{\mtype'}
		\end{array}\]		
		\end{itemize}
		
		For point 2, if $\mtype = \mset{\type_1, \ldots, \type_{\card\mtype}}$ we use point 1 with $\mtype' \defeq 
\mset{\ground_{i_1}, \ldots, \ground_{i_{\card\mtype}}}$ where $\ground_{i_1}, \ldots, \ground_{i_{\card\mtype}}$ are 
not in $\gt\tderiv$, and $\tau \defeq \isub{\ground_{i_1}}{\type_1}\ldots 
\isub{\ground_{i_{\card\mtype}}}{\type_{\card\mtype}}$---clearly $\tau$ is a substitution such that $\tau(\mtype') = 
\mtype$ and $\dom\tau=\gt{\mtype'}$. Note that $\sizetyp{\mtype'} = 0$, so that equation $\sizem{\tderiv} =  
\sizectx{\typctx'} + \sizetyp{\mtype'}$ also holds true.

		\item \emph{Abstraction}, \ie $\sfire = \la{\var}{\sfiretwo}$. 
		Then $\tderiv$ has the following shape, for some $n\in\nat$. 
		\begin{equation*}
		\tderiv =
		\begin{prooftree}
		\hypo{}
		\ellipsis{$\tderiv^i$}{\typctx_i, \var \hastype \mtypethree_i \vdash^\infty \sfiretwo \hastype \mtypetwo_i}
		
		\infer1[\footnotesize$\lambda$]{\typctx_i \vdash^\infty \la{\var}\sfiretwo \hastype 
\larrow{\mtypethree_i}{\mtypetwo_i}}
		\delims{\left[}{\right]_\iN}
		\infer1[\footnotesize$\ruleManyVal$]{\uplus_\iN\typctx_i \vdash^\infty \la{\var}\sfiretwo \hastype 
\mset{\larrow{\mtypethree_i}{\mtypetwo_i}}_\iN}		
		\end{prooftree}
		\end{equation*}
		with $\typctx \defeq \uplus_\iN\typctx_i$ and $ \mtype \defeq \mset{\larrow{\mtypethree_i}{\mtypetwo_i}}_\iN$. 
		
		By \ih (point 2) with respect to $\groundset_1 \defeq \groundset$, there exists a substitution 
$\sigma_{\tderivtwo^1}$ and a derivation $\conclin{\tderivtwo^1}{\typctx'_1, \var \hastype 
\mtypethree_1'}{\sfiretwo}{\mtypetwo_1'}$ forming a size dissection for $\tderiv^1$ such that $\sizem{\tderivtwo^1} = 
\sizectx{\typctx'_1,\var \hastype \mtypethree_1'} + \sizetyp{\mtypetwo_1'} = \sizectx{\typctx'_1} + 
\sizetyp{\mtypethree_1'} + \sizetyp{\mtypetwo_1'}$. Applying again the \ih (point 2) with respect to $\groundset_2 
\defeq \groundset \cup \dom{\sigma_{\tderivtwo^1}}$ we obtain a similar size dissection for $\tderiv^2$. Iterating with 
respect to $\groundset_i \defeq \groundset \cup \dom{\sigma_{\tderivtwo^1}}\cup\mydots \cup 
\dom{\sigma_{\tderivtwo^{i-1}}}$, we obtain size dissections for all $\tderiv^i$.
		We then have the following derivation:
		\begin{equation*}
		\tderivtwo \defeq 
		\begin{prooftree}
		\hypo{}
		\ellipsis{$\tderivtwo^i$}{\typctx'_i, \var \hastype \mtypethree'_i \vdash^\infty \sfiretwo \hastype \mtypetwo'_i}
		
		\infer1[\footnotesize$\lambda$]{\typctx'_i \vdash^\infty \la{\var}\sfiretwo \hastype 
\larrow{\mtypethree'_i}{\mtypetwo'_i}}
		\delims{\left[}{\right]_\iN}
		\infer1[\footnotesize$\ruleManyVal$]{\uplus_\iN\typctx'_i \vdash^\infty \la{\var}\sfiretwo \hastype 
\mset{\larrow{\mtypethree'_i}{\mtypetwo'_i}}_\iN}		
		\end{prooftree}
		\end{equation*}
Let $\typctx' \defeq \uplus_\iN\typctx'_i$ and $ \mtype' \defeq \mset{\larrow{\mtypethree'_i}{\mtypetwo'_i}}_\iN$. 		
\begin{itemize}
			\item \emph{Skeleton}: by \ih $\tderiv^i \tderiveq \tderivtwo^i$. Then, $\tderiv \tderiveq \tderivtwo$. 
			\item \emph{Representation}: by choice of $\groundset_i$, the substitutions $\sigma_{\tderiv^i}$ have pairwise 
disjoint domains. Then $\sigma = \sigma_{\tderiv^i}$ on $\dom{\sigma_{\tderiv^i}}$. Then $\sigma(\tderivtwo^i) = 
\sigma_{\tderivtwo^i}(\tderivtwo^i) =_{\ih} \tderiv^i$, that is, $\sigma(\tderivtwo) = \tderiv$.
			\item \emph{Reinforced disjoint names}: by choice of $\groundset_i$, we have that $\gt{\tderivtwo^i}\cap 
\groundset = \emptyset$ for all $i$. Then, $\gt{\tderivtwo}\cap \groundset = (\cup_\iN\gt{\tderivtwo^i})\cap \groundset 
= \cup_\iN(\gt{\tderivtwo^i}\cap \groundset) = \cup_\iN\emptyset = \emptyset$.
			\item \emph{Types size}: we have:
		\[\small\begin{array}{rcl}
			\sizem\tderiv &= &n+\sum_\iN \sizem{\tderiv^i}
			\\
			& =_{\ih} & n+\sum_\iN(\sizectx{\typctx'_i} + \sizetyp{\mtypethree_i'} + \sizetyp{\mtypetwo'_i}) 
			\\
			& = & \sum_\iN\sizectx{\typctx'_i} + (\sum_\iN(\sizetyp{\mtypethree_i'} + \sizetyp{\mtypetwo'_i} +1))
			\\
			& = & \sizectx\typctx' + \sizetyp{\mset{\larrow{\mtypethree'_i}{\mtypetwo'_i}}_\iN}
			\\
			& = & \sizectx\typctx' + \sizetyp{\mtype'}
		\end{array}\]
		\end{itemize}

		\item \emph{Explicit substitution on fireball}, \ie $\tm = \sfire \esub{\var}{\sitm}$.
		Then $\tderiv$ ends with rule $\ruleES$:
		\begin{equation*}
		\tderiv = 
		\begin{prooftree}
		\hypo{}
		\ellipsis{$\tderiv_{\sfire}$}{\typctx_{\sfire}, \var\hastype\mtypetwo \vdash^\infty \sfire \hastype \mtype}
		\hypo{}
		\ellipsis{$\tderiv_{\sitm}$}{\typctx_{\sitm} \vdash^\infty \sitm \hastype \mtypetwo}
		\infer2[\footnotesize$\ruleES$]{\typctx_{\sfire} \mplus \typctx_{\sitm} \vdash^\infty \sfire \esub{\var}{\sitm} 
\hastype \mtype}
		\end{prooftree}
		\end{equation*}
		with $\typctx \defeq \typctx_{\sfire}\mplus\typctx_{\sitm}$.
		By \ih (point 2), there is a substitution $\sigma_{\sfire}$ and a derivation 
$\conclin{\tderivtwo_{\sfire}}{\typctx'_{\sfire}, \var\hastype\mtypetwo'}{\sfire}{\mtype'}$ forming a size dissection of 
$\tderiv_{\sfire}$ such that $\gt{\tderivtwo_{\sfire}} \cap \groundset = \emptyset$. 
		Let $\sigma_{\mtypetwo'}$ the restriction of $\sigma_{\sfire}$ to $\gt{\mtypetwo'}$.
		By \ih (point 1) with respect to $\groundsettwo \defeq \groundset \cup (\dom{\sigma_{\sfire}} \setminus 
\gt{\mtypetwo'})$, there is a substitution $\sigma_{\sitm}$ extending $\sigma_{\mtypetwo'}$ and a derivation 
$\conclin{\tderivtwo_{\sitm}}{\typctx'_{\sitm}}{\sitm}{\mtypetwo'}$ forming a dissection of $\tderiv_{\sitm}$ such that 
$\sizem{\tderiv_{\sitm}} = \sizectx{\typctx'_{\sitm}}-\sizetyp{\mtypetwo'}$ and $\gt{\tderivtwo_{\sitm}} \cap 
\groundsettwo = \emptyset$.
		We have the following derivation $\tderivtwo$:
		\begin{equation*}
		\tderivtwo \defeq 
		\begin{prooftree}
		\hypo{}
		\ellipsis{$\tderivtwo_{\sfire}$}{\typctx'_{\sfire}, \var\hastype\mtypetwo' \vdash^\infty \sfire \hastype \mtype'}
		\hypo{}
		\ellipsis{$\tderivtwo_{\sitm}$}{\typctx'_{\sitm} \vdash^\infty \sitm \hastype \mtypetwo'}
		\infer2[\footnotesize$\ruleES$]{\typctx'_{\sfire} \mplus \typctx'_{\sitm} \vdash^\infty \sfire \esub{\var}{\sitm} 
\hastype \mtype'}
		\end{prooftree}
		\end{equation*}
		\begin{itemize}
			\item \emph{Skeleton}: the two \ih give $\tderiv_{\sfire} \tderiveq \tderivtwo_{\sfire}$ and $\tderiv_{\sitm} 
\tderiveq \tderivtwo_{\sitm}$, that imply $\tderiv \tderiveq \tderivtwo$. 
			\item \emph{Representation}: by choice of $\groundsettwo$, $\dom{\sigma_{\sfire}} \cap\dom{\sigma_{\sitm}} = 
\gt{\mtypetwo'}$, on which they agree because $\sigma_{\sitm}$ extends $\sigma_{\mtypetwo'}$, that is, $\sigma = 
\sigma_{\sfire}$ on $\dom{\sigma_{\sfire}}$ and similarly for $\sigma_{\sitm}$. Then $\sigma(\tderivtwo_{\sfire}) = 
\sigma_{\sfire}(\tderivtwo_{\sfire}) =_{\ih} \tderiv_{\sfire}$ and similarly $\sigma(\tderivtwo_{\sitm}) = 
\tderiv_{\sitm}$, that is, $\sigma(\tderivtwo) = \tderiv$.

			\item \emph{Reinforced disjoint names}: we have that $\gt{\tderivtwo_{\sfire}}\cap \groundset = \emptyset$ and 
$\gt{\tderivtwo_{\sitm}}\cap \groundsettwo = \emptyset$. Since $\groundsettwo$ contains $\groundset$, we obtain: 
$\gt{\tderivtwo}\cap \groundset = (\gt{\tderivtwo_{\sfire}}\cup \gt{\tderivtwo_{\sitm}} )\cap \groundset = 
(\gt{\tderivtwo_{\sfire}}\cap \groundset)\cup (\gt{\tderivtwo_{\sitm}} \cap \groundset) = \emptyset \cup \emptyset = 
\emptyset$.

			\item \emph{Types size}: let  $\typctx' \defeq \typctx'_{\sfire}\mplus\typctx'_{\sitm}$. Then 
		\[\begin{array}{rcl}
			\sizem\tderiv &= & \sizem{\tderiv_{\sfire}}+ \sizem{\tderiv_{\sitm}} 
			\\
			& =_{\ih} & \sizectx{\typctx'_{\sfire}} + \size{\mtypetwo'} + \sizetyp{\mtype'} + \sizem{\tderiv_{\sitm}} 
			\\
			& =_{\ih} & \sizectx{\typctx'_{\sfire}} + \size{\mtypetwo'} + \sizetyp{\mtype'} + \sizectx{\typctx'_{\sitm}} 
-\sizetyp{\mtypetwo'}  
			\\
			& = & \sizectx{\typctx'_{\sitm}} + \sizectx{\typctx'_{\sfire}} + \sizetyp{\mtype'}
			\\
			& = &
			\sizectx{\typctx'} + \sizetyp{\mtype'}
		\end{array}\]
		\end{itemize}
		
\qedhere
	\end{itemize}	
\end{proof}

\begin{theorem}[Exact bounds of kind 3]
	\label{thmappendix:exact-bounds-3}
	\NoteState{thm:exact-bounds-3}
Let $\tm$ and $\tmtwo$ be closed \full fireballs. If $\deriv:\tm\tmtwo \tovsubs^* \tmthree$ and $\tmthree$ is normal 
then $2\sizem{\deriv} + \sizefu{\tmthree} =  \inf\set{\size{\larrow{\mtypetwo}{\mtype}}+\size\mtypethree+1\ |\ \exists 
\sigma\mbox{ such that }(\larrow{\sigma(\mtypetwo)}{\sigma(\mtype)}, \sigma(\mtypethree))\in U(\tm,\tmtwo)}$.
\end{theorem}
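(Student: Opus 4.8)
The goal is to identify the quantity $2\sizem{\deriv} + \sizefu{\tmthree}$ with the infimum over substitutions $\sigma$ of $\size{\larrow{\mtypetwo}{\mtype}}+\size\mtypethree+1$, where $(\larrow{\sigma(\mtypetwo)}{\sigma(\mtype)}, \sigma(\mtypethree))$ ranges over composable pairs in $U(\tm,\tmtwo)$. I would prove the two inequalities separately. For $\geq$, I would invoke the weak exact bounds of kind 3 (\Cref{thm:weak-exact-bounds-3}), which already gives types $\mtype_1\in\sem\tm$ and $\mtype_2\in\sem\tmtwo$ with $2\sizem{\deriv} + \sizefu\tmthree = \size{\mtype_1}+\size{\mtype_2}+1$. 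These need not form a composable pair, but I would then use \Cref{l:size-dissection} (size dissection) on the shrinking derivation of $\tm\tmtwo$ underlying these types: dissecting the sub-derivations for $\tm$ and $\tmtwo$ yields derivations $\tderivtwo_\tm$, $\tderivtwo_\tmtwo$ of types $\mtypetwo\multimap\mtype$ and $\mtypethree$, together with a substitution $\sigma$ such that $\sigma(\mtypetwo\multimap\mtype)$ and $\sigma(\mtypethree)$ are exactly $\mtype_1,\mtype_2$ and hence form a composable pair, while $\size{\mtypetwo\multimap\mtype}+\size\mtypethree+1 \leq \size{\mtype_1}+\size{\mtype_2}+1$ by the size-dissection size bound. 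This shows the infimum is $\leq 2\sizem{\deriv}+\sizefu\tmthree$.

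For the reverse inequality $\leq$, I would take an arbitrary $\sigma$ such that $(\larrow{\sigma(\mtypetwo)}{\sigma(\mtype)}, \sigma(\mtypethree))\in U(\tm,\tmtwo)$ and show $2\sizem{\deriv} + \sizefu{\tmthree} \leq \size{\larrow{\mtypetwo}{\mtype}}+\size\mtypethree+1$. Since the pair is composable, there are derivations $\concl{\tderiv_\tm}{\,}{\tm}{\mset{\larrow{\sigma(\mtypetwo)}{\sigma(\mtype)}}}$ and $\concl{\tderiv_\tmtwo}{\,}{\tmtwo}{\sigma(\mtypethree)}$, composed via $\ruleAp$ into a shrinking derivation $\tderivthree$ for $\tm\tmtwo$ whose right-hand type $\sigma(\mtype)$ is right (this is exactly the lax-bounds-of-kind-3 construction in \Cref{thm:lax-bounds-3}). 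Shrinking correctness (\Cref{thm:correctness}, applied with the diamond/random-descent property so that $\sizem{\deriv}$ is well-defined regardless of the chosen $\tovsubs$-path) gives $2\sizem{\deriv} + \sizefu{\tmthree} \leq \sizem{\tderivthree} = \sizem{\tderiv_\tm}+\sizem{\tderiv_\tmtwo}+1$. Finally, since $\tm$ and $\tmtwo$ are \full fireballs, \Cref{prop:types-bound-normal-derivations} bounds $\sizem{\tderiv_\tm}\leq\size{\mset{\larrow{\sigma(\mtypetwo)}{\sigma(\mtype)}}}$ and $\sizem{\tderiv_\tmtwo}\leq\size{\sigma(\mtypethree)}$. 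The last point needed is that applying a substitution never increases the size of a type, so $\size{\sigma(\mtype)}\leq\size\mtype$ and likewise for the others; this holds because $\sigma$ replaces ground types (of size $0$) with linear types and $\size{\larrow{}{}}$ counts only arrow occurrences, which substitution can only add. Chaining these bounds yields $2\sizem{\deriv} + \sizefu{\tmthree} \leq \size{\larrow{\mtypetwo}{\mtype}}+\size\mtypethree+1$, and taking the infimum over $\sigma$ completes this direction.

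Combining the two inequalities, the infimum is attained and equals $2\sizem{\deriv} + \sizefu{\tmthree}$, which is exactly the statement. I would also remark that the $\inf$ is actually a $\min$: the $\geq$ direction exhibits a witnessing $\sigma$, so the set is nonempty and the bound is realized.

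\textbf{Main obstacle.} The delicate point is the $\geq$ direction, specifically gluing \Cref{thm:weak-exact-bounds-3} with \Cref{l:size-dissection}. One must be careful that dissecting the two sub-derivations $\tderivtwo_\tm$ and $\tderivtwo_\tmtwo$ independently still produces a \emph{single coherent} substitution $\sigma$ whose action on both dissected types reconstructs the original composable instance — this requires choosing disjoint ground-type names for the two dissections (which \Cref{l:size-dissection} provides via its $\gt{\tderivtwo}\cap\groundset=\emptyset$ clause, instantiating $\groundset$ appropriately for the second dissection to include the ground types introduced by the first) and then taking the union of the two substitutions. The fact that the resulting $\sigma(\mtype)$ is right (needed for membership in $U(\tm,\tmtwo)$) follows because the original right-hand type of the shrinking derivation was right and dissection preserves types up to $\sigma$; combined with $\tderiveq$-invariance of shrinkingness (\Cref{l:skel-equiv-properties}), the dissected derivations are themselves shrinking, so all the side conditions line up. The rest is routine bookkeeping with type sizes.
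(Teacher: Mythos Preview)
Your ``$\geq$'' direction (showing $\inf S \leq 2\sizem{\deriv}+\sizefu{\tmthree}$) is essentially the paper's argument: dissect the two sub-derivations of a unitary shrinking derivation for $\tm\tmtwo$, using the disjoint-names clause of \Cref{l:size-dissection} so that the two partial substitutions can be merged into a single $\sigma$. The detour through \Cref{thm:weak-exact-bounds-3} is unnecessary and slightly confused: the types $\mtype_1,\mtype_2$ produced there come from size \emph{representation}, which alters the types, so the $\sigma$ returned by size dissection does \emph{not} recover $\mtype_1,\mtype_2$ --- it recovers the original sub-derivation types of the unitary shrinking derivation (and those are the ones that form the composable pair). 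The paper simply dissects those sub-derivations directly, without passing through \Cref{thm:weak-exact-bounds-3}.

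Your ``$\leq$'' direction (showing every element of the set is $\geq 2\sizem{\deriv}+\sizefu{\tmthree}$) contains a genuine error. You write that ``applying a substitution never increases the size of a type, so $\size{\sigma(\mtype)}\leq\size\mtype$'', but your own justification says the opposite: substitution replaces size-$0$ ground types by linear types and ``can only add'' arrow occurrences, which gives $\size{\sigma(\mtype)}\geq\size\mtype$. With the correct inequality your chain
\[
2\sizem{\deriv}+\sizefu{\tmthree}\ \leq\ \sizem{\tderiv_\tm}+\sizem{\tderiv_\tmtwo}+1\ \leq\ \size{\larrow{\sigma(\mtypetwo)}{\sigma(\mtype)}}+\size{\sigma(\mtypethree)}+1
\]
no longer bounds $\size{\larrow{\mtypetwo}{\mtype}}+\size{\mtypethree}+1$, since the latter is \emph{smaller} than the substituted version. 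Worse, this direction cannot be repaired along these lines: nothing in the set's definition prevents choosing $\mtypetwo,\mtype,\mtypethree$ to be almost entirely ground while pushing all structure into $\sigma$. For example, with $\tm=\tmtwo=\Id$ one can take $\mtypetwo=\mset{\ground_1}$, $\mtype=\mset{\ground_2}$, $\mtypethree=\mset{\ground_3}$ and a $\sigma$ identifying all three ground types with $\larrow{\mset{\ground_0}}{\mset{\ground_0}}$; the substituted pair lands in $U(\Id,\Id)$, yet $\size{\larrow{\mtypetwo}{\mtype}}+\size{\mtypethree}+1=2<3=2\sizem{\deriv}+\sizefu{\Id}$. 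The paper's own proof does not treat this lower-bound direction explicitly either --- it exhibits the witness coming from the unitary shrinking derivation and asserts it realises the infimum --- so the difficulty you encountered here reflects a real subtlety in the statement rather than a gap peculiar to your approach.
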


\begin{proof}
We show a construction that from any shrinking derivation for $\tmthree$ gives types $\mtype$, $\mtypetwo$, and 
$\mtypethree$ as in the set of the statement and such that $\size{\larrow{\mtypetwo}{\mtype}}+\size\mtypethree+1 \geq 
2\sizem{\deriv} + \sizefu{\tmthree}$. The \emph{inf} of that set is then obtained by taking a minimal shrinking derivation, 
that is a unitary shrinking one, for which then the inequality is an equality.

By shrinking completeness (\Cref{thm:completeness}), there exists a derivation $\concl{\tderiv}{}{\tm\tmtwo}{\mtype'}$ such that $2\sizem{\deriv} + \sizefu{\tmthree} \leq \sizem\tderiv$. The 
last rule of $\tderiv$ is an $\ruleAp$ rule between two derivations 
$\conclin{\tderiv_\tm}{}{\tm}{\mset{\larrow{\mtypetwo'}{\mtype'}}}$ and 
$\conclin{\tderiv_\tmtwo}{}{\tmtwo}{\mtypetwo'}$, for some $\mtypetwo'$. Note that $\size\tderiv = \size{\tderiv_\tm} + \size{\tderiv_\tmtwo} + 1$. By size 
dissection (\reflemma{size-dissection}) applied to $\tderiv_\tm$ (with respect to $\groundset 
\defeq \gt{\tderiv}$), there exist a substitution $\sigma_\tm$ and a derivation 
$\conclin{\tderivthree_\tm}{}{\tm}{\mset{\larrow\mtypetwo\mtype}}$ forming a size dissection of $\tderiv_\tm$. In 
particular, we have $\sigma_\tm(\mset{\larrow\mtypetwo\mtype}) = \mset{\larrow{\mtypetwo'}{\mtype'}}$ and 
$\sizem{\tderiv_\tm} = \sizetyp{\mset{\larrow\mtypetwo\mtype}}$.  By size dissection 
(\Cref{l:size-dissection}) applied to $\tderiv_\tmtwo$ (with respect to $\groundsettwo \defeq 
\gt{\tderiv} \cup \dom{\sigma_\tm}$) there exist a substitution $\sigma_\tmtwo$ and a derivation 
$\conclin{\tderivthree_\tmtwo}{}{\tmtwo}{\mtypethree}$ forming a size dissection of $\tderiv_\tmtwo$. In particular, 
we have $\sigma_\tmtwo(\mtypethree) = \mtypetwo'$ and $\sizem{\tderiv_\tmtwo} = \sizetyp{\mtypethree}$. Note that 
$\sigma_\tm(\mset{\larrow\mtypetwo\mtype})$ and $\sigma_\tmtwo(\mtypethree)$ form a composable pair because $\mtype'$ is right (because $\tderiv$ is shrinking). By construction, 
the domains of $\sigma_\tm$ and $\sigma_\tmtwo$ are disjoint, so that we can define $\sigma$ as $\sigma_\tm \cup 
\sigma_\tmtwo$. Also, 
 $\size\tderiv = \size{\tderiv_\tm} + \size{\tderiv_\tmtwo} + 1 = \sizetyp{\mset{\larrow\mtypetwo\mtype}} + 
\sizetyp{\mtypethree} + 1$, that is, $2\sizem{\deriv} + \sizefu{\tmthree} \leq \size\tderiv = 
\sizetyp{\mset{\larrow\mtypetwo\mtype}} + \sizetyp{\mtypethree} + 1$.	Now, by starting with a unitary shrinking 
derivation $\tderiv$ for $\tm\tmtwo$, by correctness, 
$2\sizem{\deriv} + \sizefu{\tmthree} = \size\tderiv$, that is, $2\sizem{\deriv} + \sizefu{\tmthree} 
=\sizetyp{\mset{\larrow\mtypetwo\mtype}} + \sizetyp{\mtypethree} + 1$.
\end{proof}

We conclude with the proof of a corollary of the size representation property  (\Cref{prop:size-representation}.\ref{p:size-representation-fireball}), and not of the previous theorem.

\begin{corollary}[Minimal types catch the size of normal forms]
	\label{corappendix:minimal-type-normal}
	\NoteState{cor:minimal-type-normal}
	Let $\tm$ be a \full fireball. There exists a unitary shrinking derivation $\concl{\tderiv}{\typctx}{\tm}{\mtype}$ such 
	that $\sizefu{\tm} = \sizetyp{\mtype} + \sizectx{\typctx}$.
\end{corollary}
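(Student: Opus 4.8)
Looking at Corollary~\ref{cor:minimal-type-normal}, I need to show that for a full fireball $\tm$, there is a unitary shrinking derivation whose types have size exactly matching $\sizefu{\tm}$.

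The plan is to combine two facts already established in the excerpt. First, by shrinking typability of normal forms (\Cref{prop:typability-normal}, the fireball case), since $\tm$ is a full fireball there exists a unitary shrinking derivation $\concl{\tderivtwo}{\typctxtwo}{\tm}{\mtypetwo}$. Second, this derivation has $\sizefu{\tm} = \sizem{\tderivtwo}$ by \Cref{l:size-strong-fireballs} (the size of full fireballs lemma), since unitary shrinking derivations satisfy the hypotheses there (the type context is unitary left shrinking, and if $\tm$ is a value-ES then $\mtypetwo$ is unitary right shrinking, giving the equality $\sizefu{\tm} = \sizem{\tderivtwo}$).

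Now I would apply the size representation property (\Cref{prop:size-representation}, fireball case) to $\tderivtwo$: there is a derivation $\concl{\tderiv}{\typctx}{\tm}{\mtype}$ with $\tderiv \tderiveq \tderivtwo$ and $\sizem{\tderivtwo} = \sizectx{\typctx} + \sizetyp{\mtype}$. It remains to check that $\tderiv$ is still unitary shrinking: this is exactly one of the $\tderiveq$-invariants in \Cref{l:skel-equiv-properties}, which states that $\tderiv$ is (unitary) shrinking iff $\tderivtwo$ is. Also $\sizem{\tderiv} = \sizem{\tderivtwo}$ by the same lemma (though this is not strictly needed). Chaining the equalities: $\sizefu{\tm} = \sizem{\tderivtwo} = \sizectx{\typctx} + \sizetyp{\mtype}$, which is the claim.

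The argument is essentially a bookkeeping composition of three prior results, so there is no real obstacle; the only point requiring a little care is making sure the hypotheses of \Cref{l:size-strong-fireballs} are met so that we get the \emph{equality} $\sizefu{\tm} = \sizem{\tderivtwo}$ rather than just the inequality --- this is guaranteed precisely because $\tderivtwo$ is unitary shrinking, so its type context is unitary left shrinking and (when $\tm$ is a value-ES) its right-hand type is unitary right shrinking. Here is the proof:

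\begin{proof}
	Since $\tm$ is a \full fireball, by shrinking typability of normal forms (\Cref{prop:typability-normal}, fireball case) there is a unitary shrinking derivation $\concl{\tderivtwo}{\typctxtwo}{\tm}{\mtypetwo}$. As $\tderivtwo$ is unitary shrinking, $\typctxtwo$ is unitary \leftsh and, if $\tm$ is an \valES, then $\mtypetwo$ is unitary \rightsh; therefore \Cref{l:size-strong-fireballs} gives $\sizefu{\tm} = \sizem{\tderivtwo}$.

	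By the size representation property (\Cref{prop:size-representation}, fireball case) applied to $\tderivtwo$, there is a derivation $\concl{\tderiv}{\typctx}{\tm}{\mtype}$ such that $\tderiv \tderiveq \tderivtwo$ and $\sizem{\tderivtwo} = \sizectx{\typctx} + \sizetyp{\mtype}$. By the $\tderiveq$-invariants (\Cref{l:skel-equiv-properties}), since $\tderivtwo$ is unitary shrinking, so is $\tderiv$. Combining, $\sizefu{\tm} = \sizem{\tderivtwo} = \sizectx{\typctx} + \sizetyp{\mtype}$.
\end{proof}
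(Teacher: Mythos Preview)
Your proof is correct and follows essentially the same approach as the paper's own proof: obtain a unitary shrinking derivation via \Cref{prop:typability-normal}, use \Cref{l:size-strong-fireballs} to equate $\sizefu{\tm}$ with $\sizem{\tderivtwo}$, apply size representation (\Cref{prop:size-representation}) to get a skeleton-equivalent derivation whose types have the right size, and conclude unitary shrinkingness via the $\tderiveq$-invariants (\Cref{l:skel-equiv-properties}). Your additional justification for why the hypotheses of \Cref{l:size-strong-fireballs} are met is a welcome clarification but otherwise the proofs coincide.
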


\begin{proof}
	By \refpropp{typability-normal}{fireball}, there exists a unitary shrinking derivation $\derive{\tderivtwo}{\tm}$. 
	By \Cref{l:size-strong-fireballs}, $\sizefu{\tm} = \sizem\tderivtwo$. 
	By size representation (\Cref{prop:size-representation}.\ref{p:size-representation-fireball}), there 
	exists $\concl{\tderiv}{\typctx}{\tm}{\mtype}$ such that $\tderiv \tderiveq \tderivtwo$ and $\sizem\tderivtwo = \sizetyp{\mtype} + \sizectx{\typctx}$ (thus $\sizefu{\tm} = \sizetyp{\mtype} + \sizectx{\typctx}$).
	According to $\tderiveq$-invariants (\Cref{l:skel-equiv-properties}), $\tderiv$ is unitary shrinking.
\end{proof}





\end{document}